\documentclass[12pt]{article}
\usepackage[utf8]{inputenc}
\usepackage[nosort]{cite}
\usepackage[usenames, dvipsnames]{xcolor}
\usepackage{graphicx}
\usepackage{multicol}
\usepackage{amsfonts}
\usepackage{amssymb}
\usepackage{amsmath}
\usepackage{heck}
\usepackage{afterpage}
\usepackage{setspace}
\usepackage{verbatim}
\usepackage{color}
\usepackage{longtable}
\usepackage{float}
\usepackage{subcaption}
\usepackage{epsfig}
\usepackage{enumerate}
\usepackage{epstopdf}
\usepackage[enableskew, vcentermath]{youngtab}
\usepackage{adjustbox}
\usepackage{multirow}
\usepackage{tikz}
\usepackage[margin=1in]{geometry}
\usepackage{titletoc}
\usepackage{hyperref}
\usepackage[percent]{overpic}
\usepackage{gensymb}
\usepackage{mathtools}%
\usepackage{tikz-cd}
\usepackage{quiver}
\usepackage{booktabs}
\setcounter{MaxMatrixCols}{30}
\providecommand{\U}[1]{\protect\rule{.1in}{.1in}}

\pdfoutput=1
\newsavebox{\mysavebox}

\hypersetup{colorlinks,citecolor=black,filecolor=black,linkcolor=black,urlcolor=black}
\usetikzlibrary{decorations.markings}

\numberwithin{equation}{section}

\newcommand{\R}{\mathbb R}

\newcommand{\RP}{\mathbb{RP}}
\newcommand{\CP}{\mathbb{CP}}

\hyphenation{re-pa-ra-me-tri-za-tion}
\hyphenation{trans-for-ma-tions}
\usetikzlibrary{chains}
\allowdisplaybreaks

\newcommand{\ba}{\begin{eqnarray}}
\newcommand{\ea}{\end{eqnarray}}
\newcommand{\cA}{\mathcal{A}}

\newcommand{\cE}{\mathcal{E}}

\newcommand{\be}{\begin{equation}}
\newcommand{\ee}{\end{equation}}

\tikzstyle{startstop} = [rectangle, rounded corners, minimum width=3cm, minimum height=1cm,text centered, draw=black, fill=blue!10]
\tikzstyle{startstop} = [rectangle, rounded corners, minimum width=3cm, minimum height=1cm,text centered, draw=black, fill=blue!10]
\tikzstyle{io} = [trapezium, trapezium left angle=70, trapezium right angle=110, minimum width=3cm, minimum height=1cm, text centered, draw=black, fill=blue!30]
\tikzstyle{process} = [rectangle, minimum width=3cm, minimum height=1cm, text centered, draw=black, fill=orange!30]
\tikzstyle{decision} = [diamond, minimum width=3cm, minimum height=1cm, text centered, draw=black, fill=green!30]
\tikzstyle{arrow} = [thick,->,>=stealth]
\tikzset{->-/.style={decoration={
  markings,
  mark=at position #1 with {\arrow[scale=2.4]{>}}},postaction={decorate}}}
\makeatletter \@addtoreset{equation}{section} \makeatother

\usepackage{amsthm}
\usepackage[capitalize, noabbrev]{cleveref}
\newcommand{\Spin}{\mathrm{Spin}}

\newcommand{\SSO}{\mathrm{SO}}
\newcommand{\SL}{\mathrm{SL}}
\newcommand{\Mp}{\mathrm{Mp}}
\newcommand{\GL}{\mathrm{GL}}

\newcommand{\F}{\mathbb F}
\newcommand{\Z}{\mathbb Z}

\newcommand{\MTSpin}{\mathit{MTSpin}}
\newcommand{\inj}{\hookrightarrow}

\newcommand{\pt}{\mathrm{pt}}
\newcommand{\term}{\emph}
\newcommand{\BP}{\mathit{BP}}
\newcommand{\ko}{\mathit{ko}}

\newcommand{\Sq}{\mathrm{Sq}}
\newcommand{\abs}[1]{\left\lvert #1 \right\rvert}
\newcommand{\ang}[1]{\left\langle #1 \right\rangle}
\newcommand{\set}[1]{\left\{ #1 \right\}}
\newcommand{\bl}{\text{--}}

\newcommand{\C}{\mathbb C}

\newcommand{\Ext}{\mathrm{Ext}}
\newcommand{\Hom}{\mathrm{Hom}}

\newcommand{\paren}[1]{\left( #1\right)}

\newcommand{\id}{\mathrm{id}}
\newcommand{\ExZ}{\mathbb E}

% make cleveref use the Oxford comma

% names for some generators
 % 1/2 Q_7^4
 % 1/2 Bott manifold
 % quotient of L^4 x S^5 by an involution
 % the other 7d generator

\usepackage{xspace}

\usepackage{xparse}
\DeclareDocumentCommand{\shortexact}{s O{} O{} mmmm}{
\IfBooleanTF{#1}{
\begin{tikzcd}[ampersand replacement=\&]
	1 \& {#4}
	\&  {#5}
	\& {#6}
	\& 1#7
	\arrow[from=1-1, to=1-2]
	\arrow["#2", from=1-2, to=1-3]
	\arrow["#3", from=1-3, to=1-4]
	\arrow[from=1-4, to=1-5]
\end{tikzcd}
% \xymatrix{
%  1\ar[r] & #4\ar[r]^-{#2} & #5\ar[r]^-{#3} & #6\ar[r] & 1#7
% }
}{ % no star
\begin{tikzcd}[ampersand replacement=\&]
	0 \& {#4}
	\&  {#5}
	\& {#6}
	\& 0#7
	\arrow[from=1-1, to=1-2]
	\arrow["#2", from=1-2, to=1-3]
	\arrow["#3", from=1-3, to=1-4]
	\arrow[from=1-4, to=1-5]
\end{tikzcd}
%\xymatrix{
%  0\ar[r] & #4\ar[r]^-{#2} & #5\ar[r]^-{#3} & #6\ar[r] & 0#7
% }
}}

% https://tex.stackexchange.com/a/108655/55879
\usepackage[newparttoc]{titlesec}
\usepackage{titletoc}
\titleclass{\subpart}{straight}[\part]

\newcounter{subpart}

\renewcommand{\thesubpart}{\Roman{part}.\Alph{subpart}}
\newcommand{\subpartname}{Subpart}

\titleformat{\subpart}{\normalfont\Large\bfseries}%
{\subpartname~\thesubpart}{1pc}{\Large\bfseries}

\titlespacing{\subpart}{0pt}{0pt}{0pt}

\titlecontents{subpart}[0pt]{\addvspace{1pc}\normalfont\bfseries}%
{\thecontentslabel\enspace \enspace\large}%
{\normalfont\large\bfseries}{\hspace{2em plus 1fill}\large\contentspage}

\setcounter{secnumdepth}{5}

\usepackage{subcaption}
\usepackage{spectralsequences}
\usepackage{adamsmacros}
\newcommand{\AdamsTower}[1]{\DoUntilOutOfBounds{
	\class[#1](\lastx, \lasty+1)
	\structline[#1]
}}
\newcommand{\CetaExt}[1]{
    \foreach \y in {0, ..., 10} {
        \class(2*\y+#1, \y)\AdamsTower{}
    }
}
\newcommand{\TwoCEtaExt}[3]{
\begin{scope}[#1]
    \begin{scope} % TODO: macro
        \CetaExt{#2}
        \classoptions["\kappa_{#3}"](#2, 0, -1)
        \classoptions["\lambda_{#3}"](2+#2, 1, -1)
        \classoptions["\mu_{#3}"](4+#2, 2, -1)
        \classoptions["\xi_{#3}"](6+#2, 3, -1)
        \classoptions["w\kappa_{#3}"](8+#2, 4, -1)
    \end{scope}
    \begin{scope}[fill]
        \CetaExt{#2+1}
        \classoptions["\theta\kappa_{#3}"](1+#2, 0, -1)
        \classoptions["\theta\lambda_{#3}"](3+#2, 1, -1)
        \classoptions["\theta\mu_{#3}"](5+#2, 2, -1)
        \classoptions["\theta\xi_{#3}"](7+#2, 3, -1)
    \end{scope}
\end{scope}
}

% e.g. if there are already theorem environments with different names, we can combine them
\newtheorem{thm}[equation]{Theorem}
\newtheorem{lem}[equation]{Lemma}
\newtheorem{prop}[equation]{Proposition}
\newtheorem{cor}[equation]{Corollary}
\theoremstyle{definition}
\newtheorem{defn}[equation]{Definition}
\newtheorem{exm}[equation]{Example}
\theoremstyle{remark}
\newtheorem{rem}[equation]{Remark}

\crefname{thm}{Theorem}{Theorems}
\crefname{lem}{Lemma}{Lemmas}
\crefname{prop}{Proposition}{Propositions}
\crefname{defn}{Definition}{Definitions}
\crefname{exm}{Example}{Examples}
\crefname{rem}{Remark}{Remarks}
\crefname{cor}{Corollary}{Corollaries}

\title{Cobordism Utopia: \\ U-Dualities, Bordisms, and the Swampland}
\author{Noah Braeger}
\author{Arun Debray}
\author{Markus Dierigl}
\author{Jonathan J.\ Heckman}
\author{Miguel Montero}

\date{May 2025}

\begin{document}

\preprint{CERN-TH-2025-103 \\ IFT-025-51}

\institution{PENN}{\centerline{$^1$Department of Physics and Astronomy, University of Pennsylvania, Philadelphia, PA 19104, USA}}
\institution{KENTUCKY}{\centerline{$^2$Department of Mathematics, University of Kentucky, 719 Patterson Office Tower, Lexington, KY
40506-0027}}
\institution{MUNICH}{\centerline{$^3$Theoretical Physics Department, CERN, 1211 Geneva 23, Switzerland}}
\institution{PENNMATH}{\centerline{$^4$Department of Mathematics, University of Pennsylvania, Philadelphia, PA 19104, USA}}
\institution{MADRID}{\centerline{$^5$Instituto de Física Teórica IFT-UAM/CSIC, C/ Nicolás Cabrera 13-15, 28049 Madrid, Spain}}

\authors{Noah Braeger\worksat{\PENN}\footnote{e-mail: {\tt braeger@sas.upenn.edu}},
 Arun Debray\worksat{\KENTUCKY}\footnote{e-mail: {\tt a.debray@uky.edu}},
Markus Dierigl\worksat{\MUNICH}\footnote{e-mail: {\tt markus.dierigl@cern.ch}},\\[4mm]
Jonathan J.\ Heckman\worksat{\PENN,\PENNMATH}\footnote{e-mail: {\tt jheckman@sas.upenn.edu}}, and
Miguel Montero\worksat{\MADRID}\footnote{e-mail: {\tt miguel.montero@uam.es}}}

\longabstract{\noindent The U-dualities of maximally supersymmetric supergravity theories lead to celebrated non-perturbative constraints on the structure of quantum gravity. They can also lead to the presence of global symmetries since manifolds equipped with non-trivial duality bundles can carry topological charges captured by non-trivial elements of bordism groups. The recently proposed Swampland Cobordism Conjecture thus predicts the existence of new singular objects absent in the low-energy supergravity theory, which break these global symmetries.

We investigate this expectation in two directions, involving the different choices of U-duality groups $G_U$, as well as $k$, the dimension of the closed manifold carrying the topological charge. First, we compute for all supergravity theories in dimension $3 \leq D \leq 11$ the bordism groups $\Omega_1^{\text{Spin}}(BG_U)$. Second, we treat in detail the case of $D = 8$, computing all relevant bordism groups $\Omega_k^{\text{Spin}}(BG_U)$ for $1 \leq k \leq 7$. In all cases, we identify corresponding string, M-, or F-theory backgrounds which implement the required U-duality defects. In particular, we find that in some cases there is no purely geometric background available which implements the required symmetry-breaking defect. This includes non-geometric twists as well as non-geometric strings and instantons.

This computation involves several novel computations of the bordism groups for $G_U = \SL(2,\mathbb{Z}) \times \SL(3,\mathbb{Z})$, which localizes at primes $p=2,3$. Whereas an amalgamated product structure greatly simplifies the calculation of purely $\SL(2,\mathbb{Z})$ bundles, this does not extend to $\SL(3,\mathbb{Z})$. Rather, we leverage the appearance of product / ring structures induced from cyclic subgroups of $G_U$ which naturally act on the relevant bordism groups.}

\maketitle

\newpage

\thispagestyle{empty}
\begin{figure}[H]
\vspace*{-2cm}
\makebox[\linewidth]{\includegraphics[width = 1.15 \linewidth]{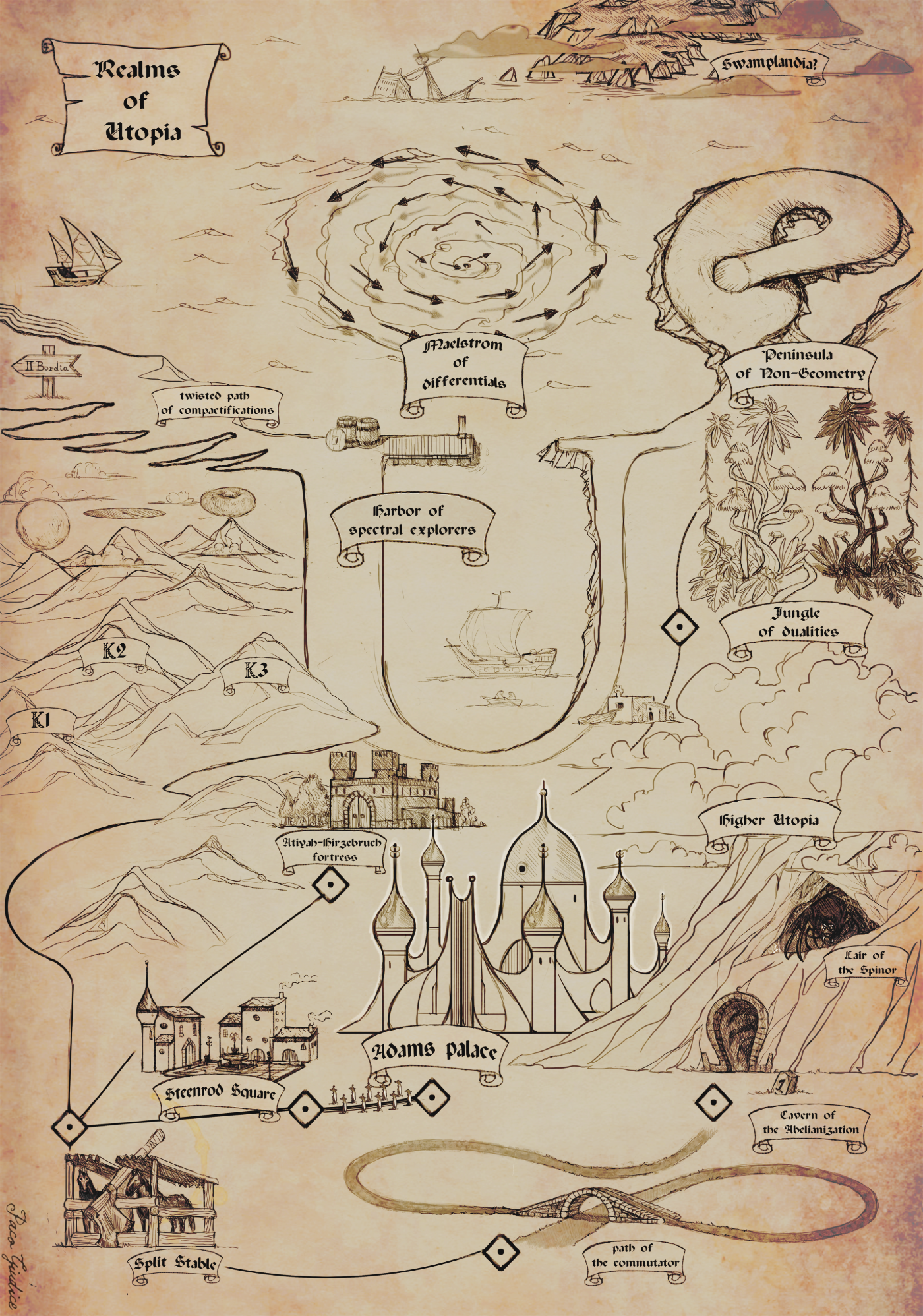}}
\end{figure}

\tableofcontents

\newpage

\section{Introduction and Conclusions}
\label{sec:intro}

There is a general expectation that quantum gravity has no global symmetries.\footnote{See, e.g., \cite{Banks:1988yz, Susskind:1995da} as well as more recent progress such as \cite{Banks:2010zn, Harlow:2018tng, Hsin:2020mfa, Harlow:2020bee, Daus:2020vtf, Bah:2022uyz, Cvetic:2023pgm, Heckman:2024oot, Bah:2024ucp}.} Much of the evidence for this comes from interrogating the fate of global symmetries in specific spacetime backgrounds involving black holes, wormholes, AdS/CFT, and specific top down string constructions. On the other hand, it is natural to expect that the very nature of having a quantum mechanically fluctuating spacetime topology ought to play a role in ruling out global symmetries. Presumably, then, such putative symmetries are broken or gauged.

Along these lines, it is natural to consider field profiles in the effective field theory which can be interpreted as an extended operator.\footnote{Here we use the terminology of quantum field theory (QFT) to refer to these heavy non-dynamical objects as extended operators.} One way to build these is with the spacetime geometry itself. Introducing a manifold $X_k$ of dimension $k$, we can in principle also decorate $X_k$ with additional smooth structures such as non-trivial field profiles / gauge bundles over $X_k$. Gluing this into the rest of the spacetime results in a geometry extended over $k+1$ dimensions with boundary given by our $X_k$ (equipped with suitable asymptotic field profiles). In particular, this can be interpreted as an extended object which fills $D-(k+1)$ spacetime dimensions. If we cannot deform away this extended operator in the low energy effective field theory then it carries a charge under a global symmetry. The Swampland Cobordism Conjecture of \cite{McNamara:2019rup} asserts that in a full theory of quantum gravity, all such $X_k$'s are in fact bordant, i.e., they can be viewed as the boundary of a bulk $(k+1)$-dimensional space. This is often stated as the condition that in quantum gravity, the bordism group is trivial:
\begin{equation}\label{eq:OMEGAQG}
\Omega_{k}^{\mathrm{QG}} = 0
\end{equation}
It is important to note that while there is no precise formulation of $\Omega_{k}^{\mathrm{QG}}$, this conjecture has many direct practical consequences. Indeed, what makes this conjecture so powerful is that this in turn requires supplementing the low energy effective field theory by additional dynamical degrees of freedom which would be \textit{singular} in the original low energy effective field theory.\footnote{If they were smooth field configurations in the original low energy effective field theory they could be deformed away.}

Said differently, the appearance of an extended operator which cannot be smoothed away implies it carries a charge under a generalized global symmetry (in the sense of \cite{Gaiotto:2014kfa}).\footnote{For a bordism group $\Omega_k$, this amounts to a $p$-form symmetry $\mathrm{Hom}(\Omega_k, \mathrm{U}(1))$ with $p = D - (k+1)$.} From a bottom up point of view, one can attempt to either gauge or break this symmetry. For higher-form symmetries, breaking involves supplementing the spectrum by additional degrees of freedom which can terminate at the end of the extended operator.\footnote{Recall that in the framework of \cite{Gaiotto:2014kfa} one considers topological symmetry operators which link with the extended operator in question. This linking is destroyed if the extended operator can now terminate.} Alternatively, one might contemplate gauging the generalized symmetry. Even if this can be carried out (assuming the gauging is not obstructed by anomalies) it turns out that this results in another, magnetic dual generalized global symmetry, and the spectrum of lightest objects presumably also shuffles around. One is thus still faced with the issue of how to get rid of this new global symmetry. So, enriching the spectrum by additional degrees of freedom is inevitable in either case.

Assuming the Swampland Cobordism Conjecture is true, this suggests a natural strategy for testing the consistency of a given low energy effective field theory, and in particular whether the spectrum of dynamical objects in the theory is complete. First, given a theory with a known class of admissible bundle configurations, we can construct mathematical equivalence relations between smooth manifolds equipped with these structures. These are the bordism groups and we schematically label these as $\Omega_{k}^{\mathcal{G}}$ (we shall be more precise later). A priori, there is no reason for these bordism groups to vanish and as such, the appearance of a non-trivial element of $\Omega_{k}^{\mathcal{G}}$ predicts the existence of a dynamical object of codimension $k+1$ in the effective theory, namely it fills $D - (k+1)$ dimensions. Tests of the Swampland Cobordism Conjecture in a variety of different contexts have now been performed, e.g., in \cite{McNamara:2019rup, Montero:2020icj, Dierigl:2020lai, McNamara:2021cuo, Blumenhagen:2021nmi, Buratti:2021yia, Debray:2021vob, Andriot:2022mri, Dierigl:2022reg, Blumenhagen:2022bvh, Velazquez:2022eco, Angius:2022aeq, Blumenhagen:2022mqw, Angius:2022mgh, Blumenhagen:2023abk, Heckman, Dierigl:2023jdp, Kaidi:2023tqo, Huertas:2023syg, Angius:2023uqk, Kaidi:2024cbx, Angius:2024pqk, Fukuda:2024pvu}. In many cases, this sheds new light on previously known string backgrounds, and in some cases, predicts the existence of new objects \cite{Dierigl:2022reg, Heckman, Kaidi:2023tqo, Kaidi:2024cbx, Fukuda:2024pvu}. One of the overarching themes in many of these examples is the interplay between a defect predicted by the Cobordism Conjecture and the realization of an explicit object / background in a UV completion of the effective field theory.

On the other hand, a given effective field theory can in principle have many different UV completions. A classic example is maximally supersymmetric supergravity in $D$ spacetime dimensions; it arises both from placing M-theory on $T^\ell$ as well as type IIB on $T^{\ell - 1}$ with $\ell = 11 - D$. These different characterizations are of course connected through dualities, providing a single coherent picture of
string / M- / F-theory and the spectrum of objects under various dualities. Indeed, the supergravity theory itself enjoys an important U-duality symmetry which acts on the solitonic objects of the spectrum \cite{Hull:1994ys, Witten:1995ex}. A celebrated feature of U-dualities is that they involve strong / weak dualities as well as T-dualities. As such, the full set of symmetries do not have an obvious geometric characterization in terms of a single higher-dimensional parent theory. Historically such dualities played a pivotal role in many aspects of the duality revolution of the mid 1990's.

Our aim in this paper will be to construct the bordism groups and generators associated with such U-dualities. More precisely, we shall be interested in the mathematical structures known as spin-bordism groups for such duality bundles, i.e., $\Omega^{\mathrm{Spin}}_{k}(BG_U)$, where $G_U$ denotes the U-duality group of a supergravity theory, $BG_U$ denotes its classifying space, and the superscript ``$\mathrm{Spin}$'' indicates that we require a spin structure for all of our manifolds. We shall refer to this as exploring the world of ``Utopia,'' namely we view it as a natural continuation of investigations already performed in the study of ``IIBordia,'' i.e., the bordism groups associated with type IIB dualities \cite{Dierigl:2020lai, Debray:2021vob, Dierigl:2022reg, Heckman, Dierigl:2023jdp}.

To set the stage, let us briefly recall some recent explorations of IIBordia. As is well-known, type IIB string theory and its non-perturbative lift to F-theory enjoys a duality symmetry $\mathrm{SL}(2,\mathbb{Z})$. Taking into account reflections on the F-theory torus associated with worldsheet parity and left-moving fermion parity, this extends to the $\mathrm{Pin}^{+}$ cover of $\mathrm{GL(2,\mathbb{Z})}$.\footnote{See \cite{Pantev:2016nze} for the $\mathrm{Spin}$-lift and \cite{Tachikawa:2018njr} for the $\mathrm{Pin}^{+}$ lift.} Moreover, the interplay between the $\mathrm{Spin} / \mathrm{Pin}^{+}$ structures means that the relevant bordism groups involve a non-trivial twisting with the accompanying duality bundles. A very helpful feature of these duality groups is that they admit an amalgamated product structure $G_\mathrm{IIB} = H_1 \ast_{K} H_2$, where each of the constituents involve either cyclic groups or dihedral groups. Treating the bordism groups as a generalized (co)homology theory, there is a corresponding Mayer-Vietoris long exact sequence which breaks up the contributions into far simpler constituent building blocks. These in turn can be constrained / computed using the Adams spectral sequence \cite{Ada58}. In physical terms, these simpler building blocks correspond to quite simple string backgrounds, such as 7-branes with a constant axio-dilaton profile. As found in \cite{Dierigl:2022reg, Heckman}, this leads to the discovery of essentially one new non-supersymmetric object, the reflection 7-brane (R7-brane), as well as a number of supersymmetric and non-supersymmetric string backgrounds. Geometrically, the topology of a corresponding torus fiber decorating the ten-dimensional spacetime produces a beautiful geometric characterization of all of these duality defects.

But from the perspective of U-dualities, $\mathrm{SL}(2,\mathbb{Z})$ is merely a special case of a more general structure. This strongly motivates the exploration of Utopia with its broader class of non-perturbative and stringy dualities.

It is here that we encounter a number of surprises compared with the relatively ``tame'' world of IIBordia. First of all, the analogous $\mathrm{Spin}$ and $\mathrm{Pin}$ lifts of the duality group action are (as yet) unknown. As such, in this work we shall primarily concentrate on just spin-bordism groups associated with duality bundles. At a technical level, some of the key simplifications used in the exploration of IIBordia no longer work. In particular, whereas the IIBordia duality groups admit an amalgamated product structure, this is unavailable for the U-dualities of the maximally supersymmetric $D \leq 8$ supergravity theories. This in turn means that the identification of ``simple physical building blocks'' is also somewhat obscure. The flip side of this is that some of the resulting bordism groups and generators are not fully geometrizable in terms of compactifications descending from a higher-dimensional theory. As such, the bordism group generators we find in these cases directly probe the non-geometric regime of quantum gravity!

The core of our analysis will center on the computation of the bordism groups associated with U-dualities of maximally supersymmetric supergravity theories in $D$ spacetime dimensions:
\begin{equation}
\Omega_{k}^{\mathrm{Spin}}(BG_U^{D}),
\end{equation}
where we work in the range of $3 \leq D \leq 10$ and we restrict to $1 \leq k \leq D-1$, i.e., we focus on codimension-two objects up to instanton-like objects.\footnote{It is also of interest to consider $k = 0$ which tells us about domain walls as well as $k = D$ which tells us about discrete theta-angles. The case of $k = D+1$ is also relevant in the study of duality anomalies of Dai-Freed type \cite{Dai:1994kq}. For $k = 0$, the answer turns out to be either trivial or $\mathbb{Z}$, so we neglect it in what follows. For $k \geq D$, we encounter some technical complications and so defer this to future work.} So, we can scan over both $D$, the total spacetime dimension of the effective field theory, as well as $k$, the dimension of bounding spacetime manifolds equipped with duality bundles.

In this paper we specialize along two different routes, depending on whether we scan over $D$ or $k$. Along these lines, we determine for all $D$, the values of $\Omega_{1}^{\mathrm{Spin}}(BG_U^{D})$, i.e., the spectrum of codimension-two defects. An application of the Atiyah-Hirzebruch spectral sequence reveals that these are controlled by the Abelianization of $G_{U}^{D}$, and as such, can be directly accessed by purely group theoretic techniques. In particular, we find that for $D \leq 7$, the Abelianization of $G^{D}_U$ is trivial, namely the group is ``perfect.'' Any contribution to the bordism group in these cases thus originates from $\Omega_{1}^{\mathrm{Spin}}(\mathrm{pt}) \simeq \mathbb{Z}_2$. One can also use some general techniques to find analogous results for any $D$ and $k = 2$ \cite{BraegerThesis}. Beyond this, however, requires specific details on the structure of the bordism groups, something we defer to future work.

We also determine $\Omega_{k}^{\mathrm{Spin}}(BG_U^{8\mathrm{d}})$ for all $1 \leq k \leq 7$ i.e., the spectrum of defects in 8d maximally supersymmetric supergravity. While there is no amalagamated product structure available for all of:
\begin{equation}
G^{8\mathrm{d}}_U = \mathrm{SL}(2,\mathbb{Z}) \times \mathrm{SL}(3,\mathbb{Z}),
\end{equation}
there is some additional structure available in this case which makes the computation tractable compared with the $D \leq 7$ case (which we defer to future work). The first key idea is (much as in IIBordia) to localize the computation of bordism groups one prime at a time. This is expected to work since the torsional pieces decompose into summands of the form $\mathbb{Z}_{p^{\ell}}$ for suitable prime powers $p^{\ell}$. In particular, we find that in the case at hand, localization occurs at primes $p = 2,3$. In this case, the relevant pieces of the bordism groups often simplify to simpler bundle structures involving the classifying spaces of cyclic groups, dihedral and symmetric groups, and their smash products.\footnote{We give the precise definition of smash products later.} In principle, this boils the computation down to more tractable questions in group cohomology which can in some cases be located in the literature. To proceed further, however, we also leverage the fact that these bordism groups come with a module structure, admitting a group action by suitable cyclic groups. In physical terms, this suggests an organization of some of our bordism defects into symmetry multiplets, but we defer a full treatment to future work.

The end result of these computations is that at least for $D = 8$ maximally supersymmetric supergravity, we have a full list of bordism groups, generators, as well as candidate string / M- / F-theory backgrounds which implement these defects. In particular, a few striking lessons emerge from this general picture. The first statement is that for dualities involving just the $\mathrm{SL}(3,\mathbb{Z})$ factor, there is a geometric realization available involving M-theory backgrounds with a suitable $T^3$ fibration. Likewise, dualities involving just the $\mathrm{SL}(2,\mathbb{Z})$ factor can be interpreted in IIB vacua as coming from fibration of a spacetime 2-torus. On the other hand, there are also backgrounds where \textit{all} of the duality bundle structure group participates. In these cases it appears necessary to appeal to more non-geometric ingredients such as exceptional field theory (a non-perturbative generalization of double field theory) to give a target space ``geometry'' interpretation.\footnote{See e.g., references \cite{Hull:2004in, Dabholkar:2005ve, Hull:2006va, Hull:2009mi, Berman:2011jh, Coimbra:2011ky, Hohm:2013jma, Hohm:2013pua} for recent work in this direction.} As one might suspect, identifying the mathematical computation of bordism groups and generators in these cases is also more challenging, but also more rewarding since it provides access to the non-geometric corners of quantum gravity.

The resulting bordism groups for $D = 8$ supergravity with maximal supersymmetry are given in Table \ref{tab:8dbordgroups}. We see that they are far from trivial, suggesting the existence of many conserved topological charges which have to be taken care of.
\begin{table}
\centering
\renewcommand{\arraystretch}{1.2}
    \begin{tabular}{ c  c }
    \toprule
    $k$ & $\Omega^{\text{Spin}}_k \big( B \text{SL}(2,\mathbb{Z}) \times B\text{SL}(3,\mathbb{Z})\big)$ \\ \midrule
    $1$ & $\mathbb{Z}_2  \oplus \mathbb{Z}_3 \oplus \mathbb{Z}_4$ \\
    $2$ &  $\mathbb{Z}_2^{\oplus 4}$ \\
    $3$ & $\mathbb{Z}_3^{\oplus 3} \oplus \mathbb{Z}_2^{\oplus 3} \oplus \mathbb{Z}_8^{\oplus 3}$ \\
    $4$ & $\mathbb{Z} \oplus \mathbb{Z}_3^{\oplus 2} \oplus \mathbb{Z}_2^{\oplus 3} \oplus \mathbb{Z}_4^{\oplus 2}$ \\
    $5$ & $\mathbb{Z}_3^{\oplus 2} \oplus \mathbb{Z}_9 \oplus \mathbb{Z}_2^{\oplus 5} \oplus \mathbb{Z}_4$ \\
    $6$ & $\mathbb{Z}_3^{\oplus 2} \oplus \mathbb{Z}_2^{\oplus 3} \oplus \mathbb{Z}_4^{\oplus 2}$ \\
    $7$ & $\mathbb{Z}_3^{\oplus 2} \oplus \mathbb{Z}_9^{\oplus 3} \oplus \mathbb{Z}_2^{\oplus 6} \oplus \mathbb{Z}_8^{\oplus 2} \oplus \mathbb{Z}_{16}^{\oplus 2} \oplus \mathbb{Z}_{32}$ \\ \bottomrule
    \end{tabular}
\renewcommand{\arraystretch}{1.0}
\caption{Bordism groups for 8d supergravity theories with 32 supercharges (the Spin bordism $\Omega^{\text{Spin}}_k (\text{pt})$ contributes a $\mathbb{Z}_2$ for $k=1\,,2$ and a $\mathbb{Z}$ summand for $k =4$).}
\label{tab:8dbordgroups}
\end{table}

Part of this work is based on the recently completed PhD thesis of N. Braeger \cite{BraegerThesis}.

The rest of this paper is organized as follows: In the remainder of this section we provide a summary of the bordism generators and symmetry-breaking defects, as well as a collection of general lessons and surprises. We also present some potential areas of future investigation. In Section~\ref{sec:Udual} we review U-duality groups of maximal supergravity theories as well as its full geometrization within the context of exceptional field theory. We then focus on the particular U-duality group SL$(2,\mathbb{Z}) \times \text{SL}(3,\mathbb{Z})$ for 8d supergravity and its interpretation in torus compactifications of M-theory and type IIB in Section~\ref{sec:Udual8d}. Part \ref{part:physics} contains a detailed description of the singular fibers in the various UV descriptions in Section~\ref{sec:monodrom}, a general analysis of codimension-two defects in $3\leq D \leq 7$ in Section~\ref{sec:gencodim2}, as well as the in-depth investigation of the bordism generators as well as a physics interpretation of their associated defects for $D=8$ in Section~\ref{sec:defects}.

The derivation of the bordism groups is given in Part~\ref{part:math}. We begin by giving a general treatment of the bordism groups $\Omega_{k}^{\mathrm{Spin}}(BG_U^{D})$ for maximally supersymmetric supergravity theories in dimensions $3 \leq D \leq 10$. Specializing further to 8d supergravity, we begin by laying out our general plan of attack, namely we show how the calculation localizes at primes $2$ and $3$, and in part \ref{Part:p=3} we carry out the relevant bordism group calculations at prime $3$ and in part \ref{Part:p=2} we perform the same analysis at prime $2$. Some additional technical details are deferred to the Appendices; in Appendix \ref{app:cryssingfiber} we analyze the shape of singular torus fibrations in some prominent examples, and in Appendix \ref{App:Knit} we review the knit product which enters in the structure of the U-duality groups.

\subsection{Summary of the bordism groups and generators}
\label{subsec:summarydef}

\begin{table}
\centering
\renewcommand{\arraystretch}{1.3}
\resizebox{\textwidth}{!}{
\begin{tabular}{| c | c | c | c |}
    \hline
    $k$ & $\Omega^{\text{Spin}}_k \big( B \text{SL}(2,\mathbb{Z}) \times B \text{SL}(3,\mathbb{Z})\big)$ & Generators & Defect \\ \hline \hline   
    $1$ & $\mathbb{Z}_2$ & $S^1_+$ & spin defect \\
    & $\mathbb{Z}_3$ & $S^1_{\gamma_3}$ & $\mathcal{N}=(2,0)$ 6d SCFTs \\ 
    & $\mathbb{Z}_4$ & $S^1_{\gamma_4}$ & $\mathcal{N}=(2,0)$ 6d SCFTs \\ \hline 
    $2$ & $\mathbb{Z}_2$ & $S^1_+ \times S^1_+$ & spin defect on $S^1$ \\
    & $\mathbb{Z}_2$ & $S^1_+ \times S^1_{\gamma_4}$ & codimension-two defect on $S^1_+$ \\ 
    & $\mathbb{Z}_2 \oplus \mathbb{Z}_2$ & $S^1_{M_1^{(i)}} \times S^1_{M_2^{(i)}}$ & twisted compactification of additional \\ 
    &  &  & codimension-two objects (hosting SCFTs) \\ \hline
    $3$ & $\mathbb{Z}_3$ & $L^3_{3,\gamma_3}$ & type IIB on singular local Calabi-Yau \\
    & $\mathbb{Z}_3$ & $L^3_{3,\Gamma_3^{(1)}}$ & non-Higgsable cluster on $T^2$ \\ 
    & $\mathbb{Z}_3$ & $L^3_{3,\Gamma_3^{(2)}}$ & twisted compactification of 5d SCFTs \\
    & $\mathbb{Z}_2$ & $S^1_+ \times S^1_+ \times S^1_{\gamma_4}$ & codimension-two defect on $S^1_+ \times S^1_+$ \\ 
    & $\mathbb{Z}_2 \oplus \mathbb{Z}_2$ & $S^1_{\gamma_4} \times S^1_{M_1^{(i)}} \times S^1_{M_2^{(i)}}$ & (non-geometrically) twisted compactification \\
    & & & of codimension-two defect \\
    & $\mathbb{Z}_8$ & $L^3_{4,\gamma_4}$ & type IIB on singular geometry \\
    & $\mathbb{Z}_8$ & $L^3_{4,\Gamma_4^{(1)}}$ & F-theory on singular geometry \\
    & $\mathbb{Z}_8$ & $L^3_{4,\Gamma_4^{(2)}}$ & M-theory on singular geometry \\ \hline
    $4$ & $\mathbb{Z}$ & K3 & codimension-five spin defect \\ 
    & $\mathbb{Z}_3 \oplus \mathbb{Z}_3$ & $S^1_{\gamma_3} \times L^3_{3,\Gamma_3^{(i)}}$ & (non-geom.) twisted compactification of defects \\
    & $\mathbb{Z}_4 \oplus \mathbb{Z}_4$ & $S^1_{\gamma_4} \times L^3_{4,\Gamma_4^{(i)}}$ & (non-geom.) twisted compactification of defects \\
    & $\mathbb{Z}_2$ & $W_4$ & (topolog.) twisted compactification of defect \\
    & $\mathbb{Z}_2 \oplus \mathbb{Z}_2$ & $A \,, A'$ & (non-geom.) twisted compactification of defects \\ \hline
\end{tabular}}
\renewcommand{\arraystretch}{1.0}
    \caption{Bordism groups, their generators, and defects for $1 \leq k \leq 4$.}
    \label{tab:gen1to4}
\end{table}

In this section we summarize the individual generators of the bordism groups in Table \ref{tab:8dbordgroups} as well as their associated defects. Whenever possible, we describe the the configurations in a duality frame in which a geometric description is possible. The groups and generators for dimension $1 \leq k \leq 4$ are provided in Table \ref{tab:gen1to4} and the generators in dimension $5 \leq k \leq 7$ are given in Table \ref{tab:gen5to7}. We also include a brief description of the properties of the necessary defects and their relation to other (known) string and M-theory backgrounds.

In more detail, the various generators are given by:
\begin{itemize}
    \item{$S^1_+$: A circle, $S^1$, with periodic boundary conditions for fermions.}
    \item{$S^1_M$: A circle with non trivial duality bundle specified by the monodromy $M \in \text{SL}(2,\mathbb{Z}) \times \text{SL}(3,\mathbb{Z})$. For a detailed description of the monodromies see Section \ref{sec:monodrom}.}
    \item{$L^n_{k,M}$: An $n$-dimensional lens space obtained via the quotient $S^n/\mathbb{Z}_k$ with the monodromy $M$ when traversing the torsion 1-cycle. In case the lens space admits more than one spin structure we distinguish the two choices by $L^n_k$ and $\widetilde{L}^n_k$.}
    \item{$\mathbb{RP}^n_{M}$: Real projective space $S^n/\mathbb{Z}_2$ with the monodromy $M$ around the torsion 1-cycle.}
    \item{K3: The K3 manifold.}
    \item{$E$: The Enriques 4-manifold. It appears as a base for the generators of the form $(L^3_4 \times \text{K3})/\mathbb{Z}_2$ above, where the $\mathbb{Z}_2$ acts as complex conjugation on $\mathbb{C}^2$ which we use to describe the lens space and an involution on K3. The resulting geometry is a fibration of $L^3_4$ over the Enriques surface $E$.}
    \item{$W_4$: There are two equivalent ways to construct $W_4$; either one starts with $L^3_4 \times S^1$ and mods out a $\mathbb{Z}_2$ acting as complex conjugation on the lens space ($L^3_4 \sim \partial(\mathbb{C}^2 /\mathbb{Z}_4)$) and the antipodal map on $S^1$, or one starts with $S^3 \times S^1$ and mods out a $D_8$, whose $\mathbb{Z}_4$ subgroup acts to generate the lens space $S^3/\mathbb{Z}_4$ and the reflection $\mathbb{Z}_2$ as described above. The duality bundle is defined via the embedding $D_8 \hookrightarrow S_4 \hookrightarrow \text{SL}(3,\mathbb{Z})$ together with the fibration $D_8 \hookrightarrow (S^3 \times S^1) \rightarrow W_4$.}
    \item{$A$ and $A'$: The 4-manifold $\mathbb{RP}^3 \times S^1$ with duality bundle specified by $M_2^{(i)}$ around the circle factor and $\big(\gamma_4^2,M_1^{(i)}\big)$ when traversing the torsion 1-cycle of the $\mathbb{RP}^3$ factor. We choose, $i=1,2$ for $A$ and $A'$, respectively.} 
    \item{$Q^5_4$: This space is given, as in \cite{Heckman}, by the total space of the lens space bundle $L^3_4$ over $\mathbb{CP}^1$. The fibration is generated by regarding the covering $S^3$ of the lens space as the sphere of constant radius in the sum of complex line bundles $\mathcal{L}_1 \oplus \mathcal{L}_2$ over $\mathbb{CP}^1$, where in our case we always have
    \begin{equation}
        \mathcal{L}_1 \oplus \mathcal{L}_2 = H^{\pm2} \oplus \underline{\mathbb{C}} \,,
    \end{equation}
    with hyperplane bundle $H \cong \mathcal{O}(1)$ and trivial line bundle $\underline{\mathbb{C}} \cong \mathcal{O}(0)$. As opposed to the lens space $L^5_4$ these spaces admit a spin structure.}
    \item{$W_6$: The generator $W_6$ is geometrically the product $\mathbb{RP}^3 \times \mathbb{RP}^3$ with duality bundle specified by $(\gamma_4^2, M_2^{(1)})$ on the torsion 1-cycle of the first $\mathbb{RP}^3$ and $\widetilde{R} \in \text{SL}(3,\mathbb{Z})$ on the torsion 1-cycle of the second $\mathbb{RP}^3$ factor.}
    \item{$(T^4 \times \mathbb{RP}^3)_{\gamma_4, M_1^{(i)},M_2^{(i)}}$: The duality bundle on this manifold can be understood from different embeddings of $\mathbb{Z}_4 \times S_4$ into the U-duality group, using the indicated monodromies. The monodromies around each circle composing the $T^4$ factor is given by $\big(\gamma_4 \,, M_1^{(i)} \,, M_2^{(i)} \,, M_2^{(i)}\big)$. The monodromy on the torsion 1-cycle of $\mathbb{RP}^3$ is given by the duality element $\big(\gamma_4^2, M_1^{(i)}\big)$. We see that this manifold contains $A$/$A'$ as a factor.}
\end{itemize}
\begin{table}
\centering
\renewcommand{\arraystretch}{1.3}
\resizebox{\textwidth}{!}{
\begin{tabular}{| c | c | c | c |}
    \hline
    $k$ & $\Omega^{\text{Spin}}_k \big( B \text{SL}(2,\mathbb{Z}) \times B \text{SL}(3,\mathbb{Z})\big)$ & Generators & Defect \\ \hline \hline   
    $5$ & $\mathbb{Z}_9$ & $L^5_{3,\gamma_3}$ & type IIB on singular local Calabi-Yau (CY)\\ 
    & $\mathbb{Z}_3$ & $L^5_{3,(\gamma_3,\Gamma_3^{(1)})}$ & composite S-string (see \cite{Heckman}) \\
    & $\mathbb{Z}_3$ & $L^5_{3,(\gamma_3,\Gamma_3^{(2)})}$ & non-geometric string \\
    & $\mathbb{Z}_2 \oplus \mathbb{Z}_2$ & $S^1_{\gamma_4} \times A \,, S^1_{\gamma_4} \times A'$ & (non-geom.) twisted compactification of defects \\
    & $\mathbb{Z}_2$ & $S^1_{\gamma_4} \times W_4$ & (non-geometrically and topological) \\
    & & & twisted compactification of defect \\
    & $\mathbb{Z}_4$ & $Q^5_{4,\gamma_4}$ & (topolog.) twisted compactification of defect \\
    & $\mathbb{Z}_2$ & $Q^5_{4,(\gamma_4,\Gamma_4^{(1)})}$ & composite S-fold compactification with \\
    & & & topological twist \\ 
    & $\mathbb{Z}_2$ & $Q^5_{(\gamma_4,\Gamma_4^{(2)})}$ & non-geometric 3-branes compactified \\
    & & & with topological twist \\ \hline
    $6$ & $\mathbb{Z}_3 \oplus \mathbb{Z}_3$ & $L^3_{3,\gamma_3} \times L^3_{3, \Gamma_3^{(i)}}$ & (non-geom.) twisted compactification of defects \\
    & $\mathbb{Z}_4 \oplus \mathbb{Z}_4$ & $L^3_{4,\gamma_4} \times L^3_{4,\Gamma_4^{(i)}}$ & (non-geom.) twisted compactification of defects \\
    & $\mathbb{Z}_2 \oplus \mathbb{Z}_2$ & $\mathbb{RP}^3_{M_1^{(i)}} \times \mathbb{RP}^3_{M_2^{(i)}}$ & (non-geom.) twisted compactification of defects \\
    & $\mathbb{Z}_2$ & $W_6$ & (non-geom.) twisted compactification of \\
    & & & additional codimension-four S-fold \\ \hline
    $7$ & $\mathbb{Z}_9$ & $L^7_{3,\gamma_3}$ & type IIB on singular local CY \\
    & $\mathbb{Z}_9$ & $L^7_{3,\Gamma_3^{(1)}}$ & composite S-string on $T^2$ (see \cite{Heckman}) \\
    & $\mathbb{Z}_9$ & $L^7_{3,\Gamma_3^{(2)}}$ & twisted M-theory compact. on singular local CY \\
    & $\mathbb{Z}_3$ & $L^7_{3,(\gamma_3, \Gamma_3^{(1)})}$ & composite S-instanton (see \cite{Heckman}) \\
    & $\mathbb{Z}_3$ & $L^7_{3,(\gamma_3, \Gamma_3^{(2)})}$ & non-geometric instanton \\
    & $\mathbb{Z}_2 \oplus \mathbb{Z}_2$ & $(L^3_{4,\Gamma_4^{(i)}} \times \text{K3}) /\mathbb{Z}_2$ & (topolog.) twisted compactification of defects \\
    & $\mathbb{Z}_2 \oplus \mathbb{Z}_2$ & $S^1_{\gamma_4} \times \mathbb{RP}^3_{M_1^{(i)}} \times \mathbb{RP}^3_{M_2^{(i)}}$ & (non-geom.) twisted compactification of defects \\
    & $\mathbb{Z}_2$ & $S^1_{\gamma_4} \times W_6$ & (non-geom.) twisted compactification of defect \\
    & $\mathbb{Z}_2 \oplus \mathbb{Z}_{32}$ & $L^7_{4,\gamma_4} \,, \widetilde{L}^7_{4,\gamma_4}$ & type IIB on singular geometry \\
    & $\mathbb{Z}_8 \oplus \mathbb{Z}_8$ & $(T^4 \times \mathbb{RP}^3)_{(\gamma_4,M_1^{(i)},M_2^{(i)})}$ & (non-geom.) twisted compactification of defects \\
    & $\mathbb{Z}_{16} \oplus \mathbb{Z}_{16}$ & $L^7_{4,\Gamma_4^{(i)}}$ & F-/M-theory on singular geometry \\ \hline
\end{tabular}}
\renewcommand{\arraystretch}{1.0}
    \caption{Bordism groups, their generators, and defects for $5 \leq k \leq 7$.}
    \label{tab:gen5to7}
\end{table}

\subsection{Highlights and surprises}
\label{subsec:highlights}

During the (long) investigations leading to this work, we were able to learn some general lessons about the nature of bordism defects and their interpretation. We will highlight these, and some surprises in the following.

\subsubsection{The completeness of string theory}
\label{subsec:compl}

The fact that supergravity theories do not know about all the objects of the UV-complete theory certainly does not come as a surprise. All the D-branes, perfectly well-defined dynamical objects in string theory, appear as non-perturbative singular object in the supergravity description. What is surprising, however, is that our knowledge of string and M-theory is complete enough to find an interpretation of all the symmetry-breaking defects required by the many non-trivial bordism groups in Table \ref{tab:8dbordgroups}. In fact, as opposed to the ten dimensional analysis which required the R7-brane \cite{Dierigl:2022reg, Heckman, Dierigl:2023jdp}, we do not need to introduce any fundamentally new object. 

This does not mean that the backgrounds we encounter are uninteresting. In several situations we are led to studying configurations that host highly interesting worldvolume theories, including:
\begin{itemize}
    \item{Interacting superconformal field theories (with various amounts of supersymmetry)}
    \item{Twisted compactifications (with geometric as well as non-geometric twists)}
    \item{Lower-dimensional T- and U-duality defects}
\end{itemize}
Often these backgrounds go beyond what is studied in the literature, see, e.g., \cite{Kumar:1996zx, Liu:1997mb, Curio:1998bv}, and motivate further investigation of these rather exotic backgrounds. But they are not beyond the ingredients that are known to be part of string theory compactification.

This completeness and parsimony reinforces the status of string theory as the natural candidate of a UV-completion to supergravity theories and might even be used as an argument in favor of string universality (see, e.g., \cite{Montero:2020icj, Kumar:2009us, Adams:2010zy, Kim:2019vuc, Kim:2019ths, Cvetic:2020kuw, Tarazi:2021duw, Hamada:2021bbz, Bedroya:2021fbu, Bedroya:2023tch}), i.e., that all consistent theories of supergravity have a string, M-, or F-theory completion.

\subsubsection{Twisted compactifications}
\label{subsec:twist}

In the analysis below we often encounter bordism generators whose geometry decomposes as a direct product or fibration of two lower-dimensional pieces. These pieces typically appear on their own as generators of the lower-dimensional bordism group. As in \cite{Heckman}, we do not introduce new defects for this type of generators but describe them as the compactification of the already included symmetry-breaking objects. In cases where the bordism group generator is a fibration it will be a twisted compactification, in the sense that the normal coordinates of the defect transform as non-trivial vector bundles (see Figure \ref{fig:twisted_comp}). 
\begin{figure}
\centering
\includegraphics[width = .9 \textwidth]{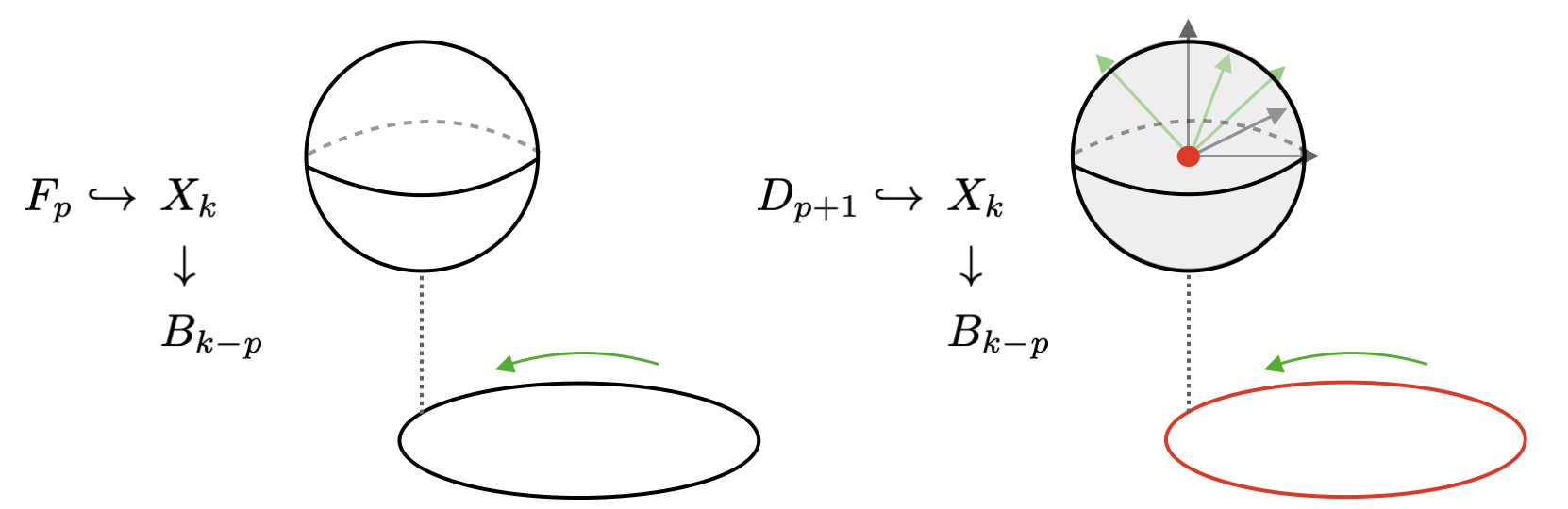}
\caption{Bordism generator in dimension $k$ as fibration of fiber $F_p$ over the base $B_{k-p}$ (left). Associated bordism defect interpreted as the twisted compactification of the defect associated to $F_p$ compactified on $B_{k-p}$ with non-trivial action on normal directions (right).}
\label{fig:twisted_comp}
\end{figure}
However, there also exist other twists of the defect theories that we want to distinguish:
\begin{itemize}
    \item{Topological twists: In these cases the fiber geometry undergoes rotations when moving around the base. As in \cite{Witten:1990bs} this can be understood as mixing the rotations of spacetime with rotations of an internal space that can be interpreted as R-symmetry transformation.}
    \item{Monodromy twists: In certain cases the defects naturally split into individual building blocks, such as SCFT sectors, which are compactified on spaces with a non-trivial monodromy action. These twists already appear at the level of the singular M-theory fibers as will be described in Section \ref{sec:monodrom}.}
    \item{Geometric duality twists: These configurations demand the presence of a non-trivial duality bundle on the base manifold over which the defect is wrapped. Here, the full duality bundle allows for a geometric description in one of the UV completions.}
    \item{Non-geometric duality twists: As above, the base manifold, over which the defect filling the fiber is wrapped, requires the presence of a non-trivial duality bundle. This time, however, there is no duality frame in which both the defect as well as the twist can be lifted to a geometric configuration in string and M-theory.}
\end{itemize}
The first two twists are generally well-defined for wrapped defects, while the last two involve the structure of the duality bundle in a more intricate way. In practice, checking that a wrapped object serves as a viable defect requires embedding the total space in a consistent\footnote{We defer issues of supersymmetry, stability and worldvolume dynamics to future work.} string background.

\subsubsection{Non-geometric defects}
\label{subsec:nongeom}

We already encountered one form of non-geometric backgrounds above, i.e., defects that are compactified on manifolds with duality bundle that does not have a geometric interpretation (at least not in the duality frame where the defect does). Beyond that, we encounter defects that themselves are non-geometric.

The prototypical example of such non-geometric backgrounds are T-folds \cite{Hull:2004in}, which involve transition functions acting as T-duality on the internal space. These transition functions can end on a codimension-two object which can be identified with an exotic brane \cite{deBoer:2010ud, deBoer:2012ma, Lust:2015yia}, depicted in Figure \ref{fig:Tfold}.
\begin{figure}
\centering
\includegraphics[width = .4 \textwidth]{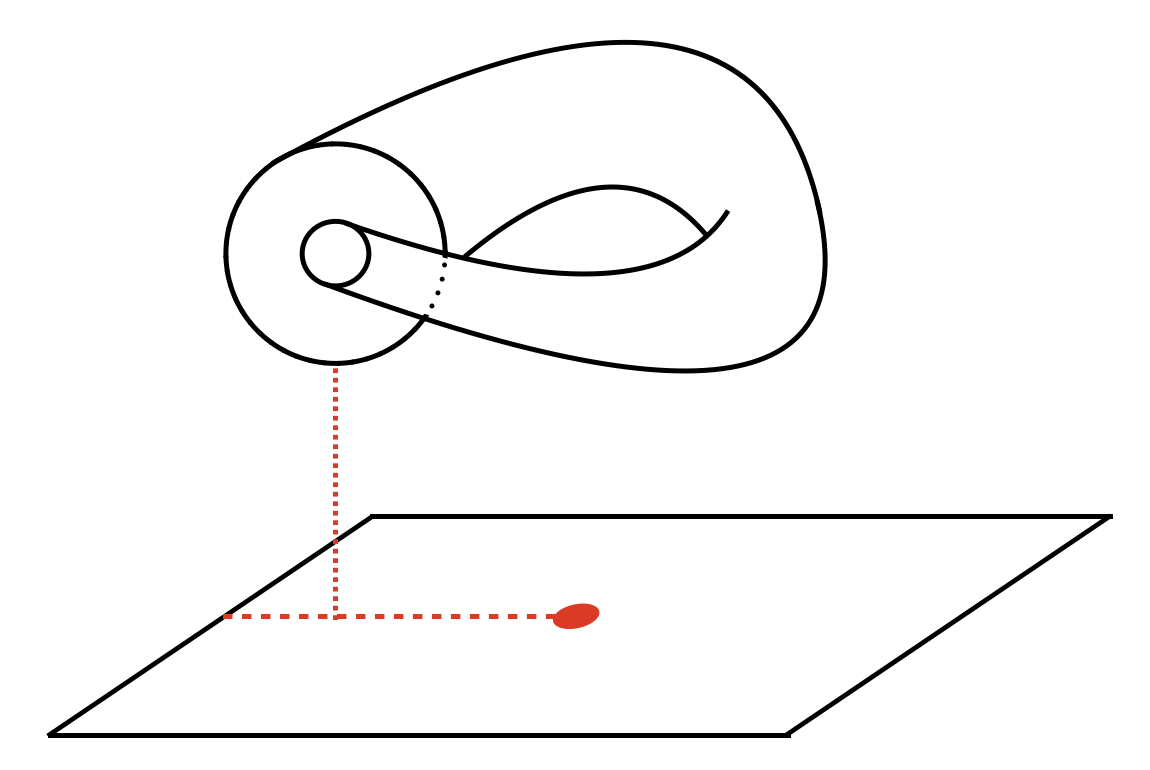}
    \caption{Non-geometric defect (red dot) as endpoint of a transition function that acts as T-duality on an internal circle.}
    \label{fig:Tfold}
\end{figure}
While field theories on such a background would be ill-defined, the exchange of Kaluza-Klein momentum and winding modes allows these configurations in string theory. One can extend the analysis to include more general U-duality transformations leading to whole orbits of exotic brane states \cite{deBoer:2010ud, deBoer:2012ma} in codimension-two.

In our work we encounter non-geometric defects in higher-codimension. At their location the moduli fields degenerate, as for example the torus fiber in F-theory. In general, however, we find that there is no F-theory or M-theory background in which the duality bundle is fully geometrized.

Interestingly, such non-geometric defects seem to appear only in high codimension. In particular, for eight dimensional supergravity theories we find non-geometric defects associated to $\Omega^{\text{Spin}}_k (BG_U^{8\text{d}})$ for $k \geq 5$. One of these defects is of codimension-six, while another appears as a twisted compactification of a codimension-four defect. For lower $k$, all defects have a geometric understanding in at least one duality frame, but might be subject to non-geometric duality twists.

\subsubsection{Dimensionality of the symmetry-breaking defects}
\label{subsec:dimdef}

Having a non-trivial deformation class $X_k$ in $\Omega^{\text{Spin}}_k (BG_U^{D})$ is associated to a conserved topological charge carried by $X_k$. To make sure that this charge is not conserved we include defects that allow the description of $X_k$ as its asymptotic boundary. In this way $X_k$ becomes the boundary of a $(k+1)$-dimensional manifold $Y_{k+1}$ with the defect included. In case the defect is a localized object in $Y_{k+1}$ it is naturally of (real) codimension $(k+1)$. The extra coordinate can be understood as radial direction with $Y_{k+1}$ being conical (see Figure \ref{fig:conicaldef}).
\begin{figure}
\centering
\includegraphics[width = .4 \textwidth]{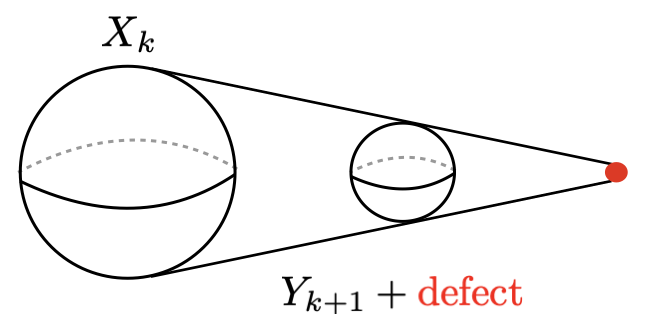}
    \caption{Sketch of symmetry-breaking defect in codimension $(k+1)$ and the conical structure of the bounding space $Y_{k+1}$.}
    \label{fig:conicaldef}
\end{figure}

We have already seen that this relation between the dimension of the bordism generator and the dimensionality of the defect can be modified in cases the generator splits into fiber and base, see Section \ref{subsec:twist}. If the fiber is given by a generator of the bordism group in lower dimension, the fibration can be bounded by wrapping the defect on the base, see Figure \ref{fig:twisted_comp}. 

One surprise in this work is that sometimes we also encounter new defects of lower codimension. This happens if the generator of the bordism group splits as a direct product, but neither of the two factors appeared as generator for bordism groups in lower dimensions. In these situations it is natural to include new (singular) symmetry-breaking objects with codimension smaller than $(k+1)$ and wrap these on the other factor.

In our case this happens for example for $k=2$ with the generator given by a torus with duality bundle associated to two commuting transition functions in the SL$(3,\mathbb{Z})$ factor of $G_U^{8 \text{d}}$, i.e., the generator splits into two circles with SL$(3,\mathbb{Z})$ duality bundle. These are not generators of $\Omega^{\text{Spin}}_1 (BG_U^{8 \text{d}})$ since they can be bounded by smooth 2-manifolds equipped with a duality bundle and do not require any singular defect. This raises the immediate question, why the same configurations cannot be used to bound the generator in $k = 2$. The reason is that the duality bundle of the gravitational configuration is incompatible with the SL$(3,\mathbb{Z})$ transition function on the other circle and therefore does not lead to a well-defined smooth 3-manifold with the torus as its boundary, see Section \ref{subsec:codim3} for more details. Similar behavior also appears in other dimensions.

We want to mention another possibility, which is that despite the fact that the bordism generator splits into two factors the symmetry-breaking object might still be of codimension $(k+1)$, see Figure \ref{fig:dif_codim}. However, in this case it is more complicated to think about the bounding geometry $Y_{k+1}$ as a cone (as depicted in Figure \ref{fig:conicaldef}) and it most likely has a more intricate internal geometry which might involve topology changes, see also \cite{Ruiz:2024gzv}.
\begin{figure}
\centering
\includegraphics[width = .5 \textwidth]{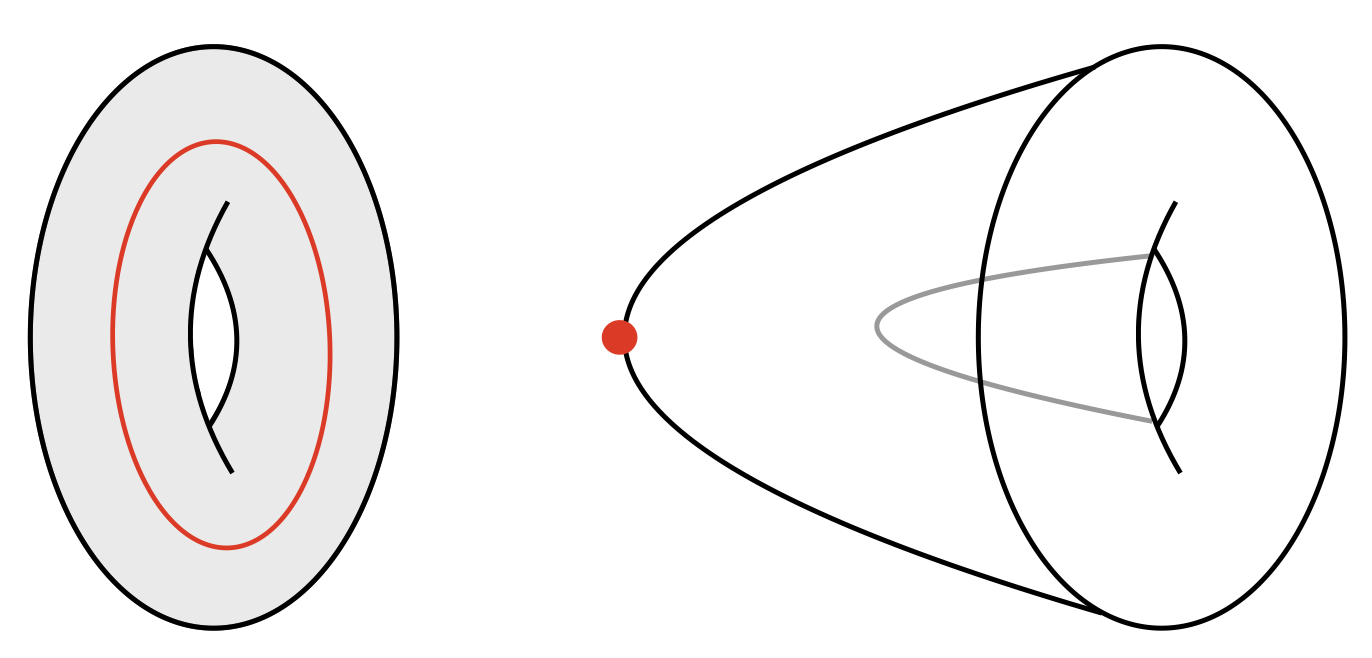}
\caption{Bounding one factor with lower codimension object (left). Bounding the entire product with higher codimension defect (right).}
\label{fig:dif_codim}
\end{figure}

\subsubsection{Deformations in supergravity}
\label{subsec:deform}

The relation between non-trivial bordism classes and symmetry-breaking defects also guarantees that there are no deformations within the low-energy supergravity that trivialize the deformation class. Thus, all deformations that preserve the structure of the low-energy theory, which in our case includes the duality bundle as well as the spin structure of spacetime, are not able to desingularize the symmetry-breaking object. Deformations might however change the specific realization and interpretation of the object. To illustrate this we want to discuss two examples:

The first example is the geometric configuration which appears as one of the generators in $\Omega^{\text{Spin}}_3 (BG_U^{8\text{d}})$
\begin{equation}
(T^2_F \times \mathbb{C}^2)/\mathbb{Z}_3 \times T^2 \,,
\end{equation}
where $T^2_F$ refers to the F-theory fiber torus. 
The resolved geometry is given by F-theory on a $(-3)$ curve decorated by a type IV fiber. This supports the celebrated 6d $\mathfrak{su}(3)$ non-Higgsable cluster of reference \cite{Morrison:2012np}. The quotient singularity obtained by blowdown leads to a 6d SCFT \cite{Witten:1996qb, Heckman:2013pva}.\footnote{See \cite{Heckman:2018jxk, Argyres:2022mnu} for recent reviews of 6D SCFTs.} Observe that while there is a blowup of the $(-3)$ curve available, this curve supports an $\mathfrak{su}(3)$ gauge theory sector, namely a strongly coupled bound state of $[p,q]$ 7-branes. As such, it includes degrees of freedom which are not present in the 8d supergravity theory.
As in the discussion in Section \ref{subsec:dimdef}, the deformation changes the dimensionality of the symmetry breaking defect which instead of codimension-four is now give by a codimension-two object wrapped over a compact two-dimensional submanifold (see Figure~\ref{fig:defdeform}).
\begin{figure}
\centering
\includegraphics[width = .8 \textwidth]{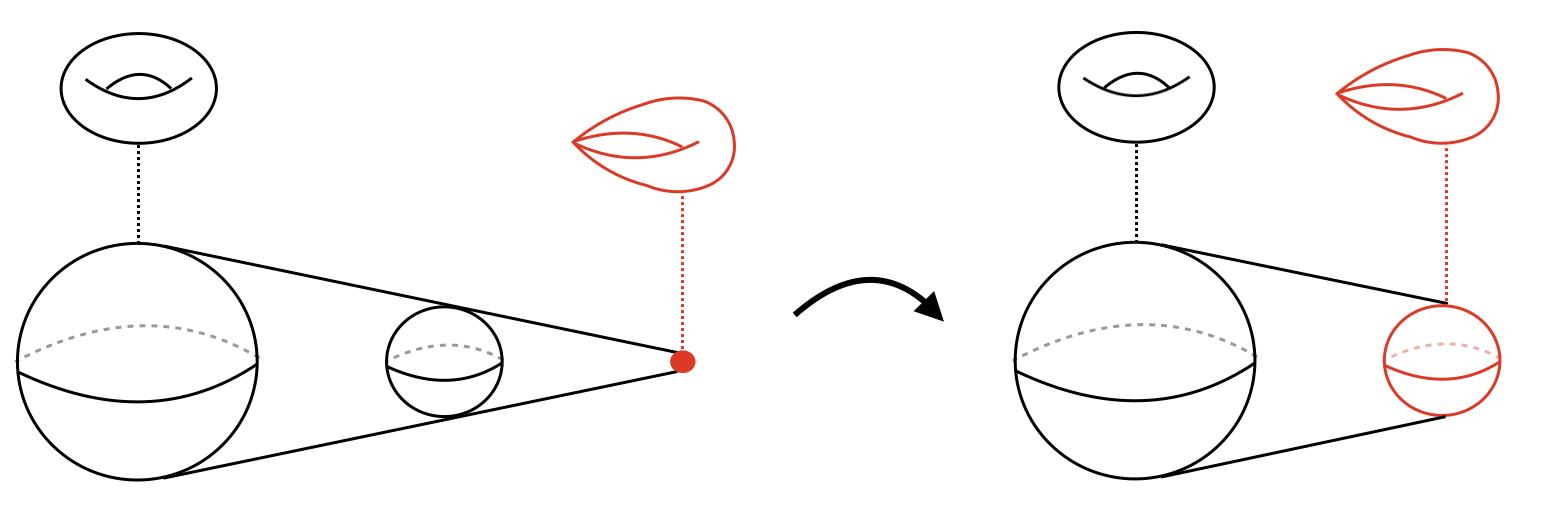}
\caption{Deformation of the bounding manifold in the low-energy theory can change the specific realization of the defect but cannot fully desingularize it.}
\label{fig:defdeform}
\end{figure}

The second example is that of a seven-dimensional lens space $L^7_4$ with an SL$(2,\mathbb{Z})$ bundle, which can be described by type IIB on the background
\begin{equation}
    (T^2 \times \mathbb{C}^4)/\mathbb{Z}_4 \,.
\end{equation}
Since the $\mathbb{Z}_4$ action does not preserve a holomorphic $(5,0)$-form in the type IIB geometry, this does not correspond to a local singular Calabi-Yau geometry and supersymmetry appears to be broken completely, see also the next Section \ref{subsec:liftsusy}. However, even if it was a local Calabi-Yau geometry, the central $\mathbb{C}^4/\mathbb{Z}_4$ singularity does not allow for a deformation that preserves supersymmetry since it is a terminal singularity \cite{a6207765-8e51-37bb-af3c-43a72b254a77}. This, however, does not affect the topological bordism discussion and we can, for example, describe the lens space as the asymptotic boundary of the complex line bundle $\mathcal{O}(-4)$ over $\mathbb{CP}^3$ following the discussion in \cite{Hsieh:2020jpj}. Once more the resolved singularity, the blown-up $\mathbb{CP}^3$, remains a singular locus, over which the $T^2$ degenerates into $T^2/\mathbb{Z}_4$ and the background requires the inclusion of defects.

In the UV description, where there is an internal space, there are more possibilities that include resolutions involving the fiber coordinates. These deformations are not contained in the supergravity framework, but can `trivialize' the bordism defect in the UV lift. Indeed, we will find in various circumstances that symmetry-breaking defects of maximal supergravity are related to string and M-theory on singular manifolds which do allow for a desingularization (that might break supersymmetry). In these cases we do not find analogs of this class of defects in the higher-dimensional theory, e.g., in the classification of \cite{Heckman}.

It is in this sense that the symmetry-breaking defects are necessarily singular in the low-energy description. Also note that any allowed deformation should not change the bordism class of the asymptotic boundary, there cannot be any objects that extend all the way to infinity. It would be a very interesting question to analyze what generalization of bordism groups might allow for such singularities.

In general it becomes a dynamical question which of the bounding manifolds with defect is energetically preferred, e.g., whether the $\mathbb{CP}^3$ in the example above will dynamically contract to zero size. The only situation in which we can be sure to find a stable configuration is whenever the singular backgrounds preserve part of supersymmetry, and the symmetry-breaking defects are BPS. This is why in our analysis of the bordism defects we focus on configurations that are most likely to preserve some supersymmetry.

\subsubsection{Spin-lift of defects and supersymmetry}
\label{subsec:liftsusy}

Unfortunately the question whether the defects we introduce preserve part of the supersymmetry is a difficult one in our analysis. The reason is that we focus on the U-duality group acting on the bosons of the supergravity theory. 

That this cannot be the full answer, can for example be deduced from the description of maximal supergravity from M-theory on $T^k$ for which an SL$(k,\mathbb{Z})$ subgroup of the full U-duality is given by the large diffeomorphisms of $T^k$. The $D$-dimensional supercharges lift to the eleven-dimensional supercharges that transforms as a $32$ component Majorana fermion. At certain points in moduli space there will be a $T^2$ square sub-torus for which one can act with the usual $S$ generator of the associated SL$(2,\mathbb{Z})$ subgroup:
\begin{equation}
    S = \begin{pmatrix*}[r] 0 & -1 \\ 1 & 0 \end{pmatrix*} \,.
\end{equation}
At these particular points $S$ can be understood as a rotation by $\frac{\pi}{2}$ in the internal space, which makes $S^4$ a rotation by $2 \pi$. Of course this acts as the identity on the bosons in the eleven-dimensional theory and their $D$-dimensional descendants. On fermions including the supercharges, however, it acts as multiplication by $(-1)$. Moving away from the loci of moduli space where one has a square sub-torus the transformation become more complicated and are in general moduli dependent. Nevertheless, this reasoning shows that the full duality group should include a non-trivial $\mathbb{Z}_2$ extension of the bosonic duality group, a Spin lift, see also \cite{Pantev:2016nze, Tachikawa:2018njr}.

On top of this, M-theory does not require spacetime to be oriented \cite{FH19b} and one can further include non-orientable internal spaces leading to a further extension of the duality group. This M-theory data then suggest a Pin$^+$ lift of the duality group for fermions. While this lift is known for type IIB in ten dimensions \cite{Tachikawa:2018njr}, and was used in \cite{Debray:2021vob, Heckman} in the determination of bordism groups 
\begin{equation}
    \Omega^{\text{Spin-Mp}(2,\mathbb{Z})}_k (\text{pt}) \,, \quad \Omega^{\text{Spin-GL}^+(2,\mathbb{Z})}_k (\text{pt}) \,,
\end{equation}
as opposed to $\Omega^{\text{Spin}}_k (B\text{SL}\big(2,\mathbb{Z})\big)$ we do not know of a full generalization to U-duality groups of interest here, see \cite{Pantev:2016nze} for partial results. This has an important effect on the defects, which we want to illustrate in a simple example.

For this we focus on the Spin lift from SL$(2,\mathbb{Z})$ to the metaplectic group Mp$(2,\mathbb{Z})$, which describes a non-trivial $\mathbb{Z}_2$ extension. Since everything happens at prime $2$ the essential features are captured by
\begin{equation}
    \Omega^{\text{Spin}}_k (B \mathbb{Z}_4) \quad \text{vs.} \quad \Omega^{\text{Spin-}\mathbb{Z}_8}_k (\text{pt}) \,,
\end{equation}
whose bordism groups can be found in \cite{Heckman}. The $\mathbb{Z}_4$ is generated by the $S$ generator with the Spin-$\mathbb{Z}_8$ structure accounting for its non-trivial lift. For $k=3$ one has
\begin{equation}
    \Omega^{\text{Spin}}_3 (B\mathbb{Z}_4) = \mathbb{Z}_2 \oplus \mathbb{Z}_8 \,, \quad \Omega^{\text{Spin-}\mathbb{Z}_8}_3 (\text{pt}) = \mathbb{Z}_2 \,,
\end{equation}
with both the $\mathbb{Z}_8$ summand and the $\mathbb{Z}_2$ for Spin-$\mathbb{Z}_8$ structure being generated by the lens space $L^3_4$. However, the bundles are different. While for the Spin generator one can choose a $\mathbb{Z}_4$ action
\begin{equation}
    \psi \rightarrow e^{2 \pi i/4} \, \psi \,,
\end{equation}
when traversing the torsion 1-cycle of $L^3_4$, the same is not possible for the Spin-$\mathbb{Z}_8$ structure since the geometric action is linked with the `gauge' action and only
\begin{equation}
    \psi \rightarrow - \psi \,,
\end{equation}
is allowed. This difference is reflected in the bordism groups, which can be determined using Dirac $\eta$-invariants. Lifting back to the full duality group we see that the bosonic monodromy around the torsion 1-cycle is given by $S$ for Spin and $S^2$ for Spin-$\mathbb{Z}_8$ structure.\footnote{In the geometry of F-theory, where the $S$ action is implemented on the auxiliary torus fiber, this statement becomes that while the total space for the torus fibration constructed from the $\Omega^{\text{Spin}}_3 (B\mathbb{Z}_4)$ generator is not Spin, the total space of the $\Omega^{\text{Spin-}\mathbb{Z}_8}_3 (\text{pt})$ generator is.} 

In $d=5$ the difference becomes even more drastic since the non-Spin manifold $L^5_4$ is a generator of $\Omega^{\text{Spin-}\mathbb{Z}_8}_5 (\text{pt})$ it cannot be a generator of $\Omega^{\text{Spin}}_5 (B\mathbb{Z}_4)$ which would have to be Spin, and is rather given by a $L^3_4$ bundle over $\mathbb{CP}^1$ denoted as $Q^5_4$ above. 

We see that the Spin lift of the duality group has an important influence on the bordism group generators. While the Spin lift only affects the prime $2$ part of the bordism group, the reflections in general lead to modification for other primes as well. Since the BPS nature of defects associated to non-trivial duality bundles clearly depends on the transformation properties of the fermions in particular the supercharges, we defer the discussion of the supersymmetric properties of the bordism defects to a case-by-case basis until the appropriate lifts of the U-duality groups are determined. 

\subsubsection{Torsion in duality}
\label{subsec:QvsZ}

There is a quite curious feature of our calculations for $\text{SL}(3,\mathbb{Z})$ that turns out to have a deeper physical meaning, which may extend beyond the borders of Utopia. In Section~\ref{subsec:SL3central}, we will see how the calculation for the bordism groups of $B\text{SL}(3,\mathbb{Z})$ proceeds by finding different embeddings of easier groups (like the symmetric group, or cyclic groups) into SL$(3,\mathbb{Z})$. To get all bordism classes for $B$SL$(3,\mathbb{Z})$, it turns out that we need to consider some cyclic groups $\mathbb{Z}_n$ \emph{twice}; in particular, for $n=4$, we will have to consider defects coming from two different embeddings of $\mathbb{Z}_4$ into SL$(3,\mathbb{Z})$. In each of these, the generator of $\mathbb{Z}_4$ is represented by a different matrix, which we label as $\Gamma_{4}^{(i)}$ for $i=1,2$. A similar thing happens with $\mathbb{Z}_3$, and the corresponding pair of inequivalent matrices of order three is denoted by $\Gamma_{3}^{(i)}$. Explicit expressions for these matrices can be found in Section~\ref{subsec:SL3central}.

Here, we comment on the nature of the difference between these two embeddings. More concretely, consider a bordism class represented by a manifold $X$ with $\mathbb{Z}_4$ bundle. The $\mathbb{Z}_4$ can be taken to be generated by either $\Gamma_{4}^{(1)}$ or $\Gamma_{4}^{(2)}$. How are these two related?

The two bordism classes so constructed are not physically equivalent in many cases, and the reason is that the matrices $\Gamma_{4}^{(1)}$ and $\Gamma_{4}^{(2)}$ are \emph{not} conjugate in SL$(3,\mathbb{Z})$. More concretely, the SL$(3,\mathbb{Z})$ duality bundle over $X$ can be characterized in terms of transition functions, which are in turn described by matrices $M_i\in \text{SL}(3,\mathbb{Z})$ specifying the holonomies of the discrete gauge field. This can be equivalently described by the conjugated matrices
\begin{equation}
M_i\,\rightarrow U\, M_i \, U^{-1} \,,
\label{gtr}
\end{equation}
for any $U \in \text{SL}(3,\mathbb{Z})$; this is just a gauge transformation. Conversely, we would expect two sets $\{M_i\}$, $\{M_i'\}$ not related as in \eqref{gtr} to be physically inequivalent, in general. 

The above explains in which way the classes constructed out of $\Gamma_{4}^{(1)}$ are different from those constructed out of $\Gamma_{4}^{(2)}$, but it is more interesting to ask in which sense they are similar. Part of the duality group (or rather, a bosonic approximation to it) in the full quantum gravity theory is SL$(3,\mathbb{Z})$. However, as familiar from similar discussions about SL$(2,\mathbb{Z})$ in the IIB context, this is because the theory has (due to completeness principle \cite{Polchinski:2003bq}) a complete spectrum of extended charged states satisfying the Dirac quantization condition. By contrast, a low-energy observer, having only access to the supergravity low-energy degrees of freedom, will not be able to access this information, at least in generic points of moduli space, since the massive objects are too heavy for them. The duality group still constrains the low-energy effective field theory action, but the selection rules are indistinguishable from those of the groups 
\begin{equation}
\mathrm{SL}(3,\mathbb{R}) \,, \quad \text{or} \quad \mathrm{SL}(3,\mathbb{Q})\,.
\label{sugra}
\end{equation}
Both of these provide valid selection rules at low energies. A Swampland-minded effective field theorist, who knows that at the end of the day there will be extended objects satisfying a Dirac quantization condition, might more reasonably choose SL$(3,\mathbb{Q})$ as the duality group; this more-or-less corresponds to allowing for $\mathbb{Q}$-valued charges, which is closer to the properly integer quantized charges, but this is really a matter of taste.

Irrespectively of the choice in \eqref{sugra}, the key point we wish to emphasize is that the low-energy physics of any given bordism class only depends on the SL$(3,\mathbb{Q})$ class of the associated bundle, not on their SL$(3,\mathbb{Z})$ class. In other words, the low-energy physics will remain unaltered when $U$ in \eqref{gtr} is taken in the wider class of matrices
\begin{equation}
U \in \mathrm{SL}(3,\mathbb{Q}) \,.
\end{equation}
Although this is not a symmetry of the full theory, it is a symmetry of the low-energy physics. As a result, \emph{any two bordism classes where the U-duality bundles are SL$(3,\mathbb{Q})$ conjugate describe the same low-energy physics}. This is interesting precisely because there are  SL$(3,\mathbb{Q})$-conjugate matrices that are not  $SL(3,\mathbb{Z})$-conjugate; the pair $\Gamma_4^{(i)}$ for $i=1,2$ (and also $\Gamma_3^{(i)}$) is precisely one such example. This means that the smooth bordism classes constructed via these two different $\mathbb{Z}_4$ embeddings will yield identical low-energy physics, but will differ at high energies, e.g. in their spectrum of extended states. At energies much below the scale of spacetime topology fluctuations or massive states, the partition function $\mathcal{Z}(X)$ of the theory is a well-defined quantity. If we have two theories differing only in massive states as above, the quotient
\begin{equation}
\mathcal{T}(X)\equiv\frac{\mathcal{Z}_1(X)}{\mathcal{Z}_2(X)} \,,
\end{equation}
will be insensitive to small deformations of the metric or background fields, since these are deformations of massless fields and hence dependence in $\mathcal{Z}_1$ and $\mathcal{Z}_2$ will cancel out. We conclude that $\mathcal{T}(X)$ is topological, and since it satisfies all the standard axioms of gluing and pasting, it defines a TQFT. Thus, very generally, two $\mathbb{Z}$-inequivalent theories that are $\mathbb{Q}$-equivalent differ by topological couplings. One way to think about this is to observe that integrating out the massive states in either theory can produce a TQFT at low energies; since the massive states are different in each theory, they will give rise to different TQFTs, and $\mathcal{T}(X)$ precisely accounts for the difference between these topological sectors.

The above discussion can be used to understand some of the results in \cite{Montero:2022vva} in a new light. In that reference, two discrete $\theta$ angles were constructed for compactifications of IIB on quotients of $T^3$, equipped with different SL$(2,\mathbb{Z})$ duality bundles. These compactifications are mapping tori $T^2\rightarrow S^1$, so they correspond to circle compactifications with non-trivial U-duality holonomy. They correspond precisely to holonomies $\Gamma_4^{(i)}$; the discrete $\theta$ angles described in \cite{Montero:2022vva} for seven-dimensional theories are therefore an example of non-trivial $\mathcal{T}(X)$.

Although there is no non-trivial bordism class associated to these seven-dimensional backgrounds (meaning a smooth geometry can interpolate from one to the other), non-trivial bordism classes involving these different choices exist in dimension three, since the bordism going from one to the other in 7d is now obstructed by the additional geometry of lens spaces. Therefore, the bordism classes in degree three obtained from the different $\Gamma_{3,4}^{(i)}$ have the same low-energy physics, but differ at high energies. The physics in their bordism defects will also be different in general, as inflow from $\mathcal{T}(X)$ can require the presence of chiral degrees of freedom in the defect.

We expect the idea of looking at $\mathbb{Q}$-conjugate elements in a properly quantized duality group may be used to uncover many more examples of discrete $\theta$ angles and similar topological couplings, in string compactifications with 32 supercharges and below.

\subsection{Future Directions}

Having dealt with the main contours of what we do in this work, let us now turn to some natural future directions.

Perhaps the most direct continuation of the present work would be to perform similar computations for the other U-duality groups.
A general complication is that the relevant stable splittings and group cohomology computations do not appear to be in the literature.
That being said, it is plausible that the knit product structure can be leveraged to extract many aspects of the physical problem.

Below $D < 3$, gravity behaves rather differently, and so do the U-duality groups of maximal supergravity, passing instead to the affine symmetries $\mathrm E_{9(9)}(\mathbb{Z}), \mathrm E_{10(10)}(\mathbb{Z}), $ and $\mathrm E_{11(11)}(\mathbb{Z})$ (see e.g., \cite{Julia:1981wc, Nicolai:1987kz}). It would nevertheless be interesting to compute the relevant bordism groups in these cases as well.

As already mentioned, one item of immediate interest would be to understand appropriate $\mathrm{Spin}$ and $\mathrm{Pin}$-lifts of the U-duality groups. We expect these extra symmetries to emerge at special points of the moduli space, where they can act on the fermionic degrees of freedom of our supergravity theory. Computation of the corresponding Spin / Pin-twisted bordism groups would provide an important refinement of the considerations presented here, especially with regards to finding possibly new, non-supersymmetric backgrounds similar to the R7-branes found in IIBordia \cite{Dierigl:2022reg, Heckman}.

Another curiosity we find in our investigations is an apparent nested structure of defects, which is highly suggestive of ``branes within branes,'' similar to what is found for the D-branes of superstring theory \cite{Douglas:1995bn}. This suggests that we may have access to even further topological structures present on these defects, an exciting prospect for future investigations.

On general grounds, the defects we discover involve objects which must be added to a given low energy effective field theory. In particular, this means that they ought to have a suitable non-zero tension / mass, namely we can integrate out their effects below some threshold scale.
Giving a proper notion of tension / mass, especially for orbifold type backgrounds would be quite natural to develop, and it is tempting to speculate that it involves the stress tensor correlators and / or free energy of the corresponding worldvolume theory.

While we have primarily focused on the case of $\Omega^{\mathrm{Spin}}_{k}(BG_{U})$ for $k$ ``small,'' it is natural to ask whether our methods extend to large, and possibly arbitrarily big values of $k$. Aside from its importance in pure mathematics, there are potential applications of this in the study of defects in supercritical string backgrounds, where the effective target space can also be much bigger than the critical dimension.

\newpage

\paragraph*{Acknowledgements} Once more we are indebted to Paco Giudice for his time and creativity to create the map of Utopia.

We thank B.S.~Acharya, D.S.~Berman, J.~Block, G.~Bossard, V.~Chakrabhavi, J.~de Boer, C.~Hull, M.~Cveti\v{c}, C.~Lawrie, D.~Lüst, R.~Minasian, J.~McNamara, F.~Riccioni, E.~Torres, C.~Vafa, and E.~Witten for helpful discussions.  MD thanks the ESI in Vienna, in particular the program ``The Landscape vs. the Swampland,'' for hosting him during part of the time in which this work was completed. MD and JJH thank the organizers of the 2024 conference ``Geometry, Strings and the Swampland,'' held at Schloss Ringberg for hospitality during part of this work. MD, JJH, and MM thank the Harvard Swampland Initiative for hospitality during part of this work. JJH and MM thank the Harvard CMSA for hospitality during part of this work. The work of NB is supported by an NSF Graduate Research Fellowship. The work of JJH is supported by DOE (HEP) Award DE-SC0013528 as well as by BSF grant 2022100. The work of JJH is also supported in part by a University Research Foundation grant at the University of Pennsylvania.  MM is currently supported by the RyC grant RYC2022-037545-I from the AEI and was supported by an Atraccion del Talento Fellowship
2022- T1/TIC-23956 from Comunidad de Madrid in the early stages of this project. The
authors thank the Spanish Research Agency (Agencia Estatal de Investigacion) through the
grants IFT Centro de Excelencia Severo Ochoa CEX2020-001007-S and PID2021-123017NBI00, funded by MCIN/AEI/10.13039/501100011033 and by ERDF A way of making Europe.

\newpage

\section{U-duality}
\label{sec:Udual}

In this section, we briefly review the U-duality of supergravity theories with $32$ real supercharges in ten to three spacetime dimensions. These dualities leave the supergravity action invariant and act non-trivially on the states in the theory. Once charge quantization is included, the duality groups are discrete. For a more detailed description see \cite{Obers:1998fb} (which we mainly base our discussion on), as well as references therein.

\subsection{U-duality groups}
\label{subsec:Ugroups}

Supergravity theories with $32$ real supercharges in $D$ dimensions can be obtained by compactifying eleven-dimensional supergravity, the low-energy limit of M-theory, on a $(11-D)$-dimensional torus. The fields in the supergravity multiplet arise as zero modes of this compactification. This further has an interpretation as type IIA on a $(10-D)$-dimensional torus, where the radius of the remaining circle parametrizes the string coupling in type IIA.

On the level of the supergravity action, the full U-duality is generated by the Lorentz transformations on the M-theory compactification torus as well as the T-dualities of the type IIA description. The resulting group can be denoted as
\begin{equation}
    G_{U, \mathbb{R}}^{D} = \text{SL}(\ell,\mathbb{R}) \bowtie \text{SO}(\ell-1,\ell-1,\mathbb{R}) \,,
\end{equation}
where we introduced $\ell = 11-D$. Here, as in \cite{Obers:1998fb} the symbol $\bowtie$ describes the knit product, with the resulting group being generated by the non-commuting subgroups (see also Appendix \ref{App:Knit}). In the range $\ell \in \{1\,,2 \,, \dots, 8\}$ these are given by the Cremmer-Julia symmetry groups for non-chiral supergravity theories, see \cite{Cremmer:1980gs, Julia:1980gr},
\begin{equation}
    \begin{array}{c  c  c }
    \toprule 
    D & G_{U, \mathbb{R}}^{D} & H^{D}_{U, \mathbb{R}} \\ \midrule
    10 & \mathbb{R}^+ & 1 \\ 
    9 & \text{SL}(2,\mathbb{R}) \times \mathbb{R}^+ & \text{U}(1) \\ 
    8 & \text{SL}(3,\mathbb{R}) \times \text{SL}(2,\mathbb{R}) & \text{SO}(3) \times \text{U}(1) \\
    7 & \text{SL}(5,\mathbb{R}) & \text{SO}(5) \\
    6 & \text{SO}(5,5,\mathbb{R}) & \text{SO}(5) \times \text{SO}(5) \\ 
    5 & \text{E}_{6(6)} & \text{USp}(8) \\ 
    4 & \text{E}_{7(7)} & \text{SU}(8) \\
    3 & \text{E}_{8(8)} & \text{SO}(16) \\ \bottomrule
    \end{array}
\end{equation}
which we reproduce from \cite{Obers:1998fb}.\footnote{$\mathbb{R}^+$ denotes the real half-line.} The $H^{D}_{U, \mathbb{R}}$ describes the R-symmetry group in $D$ dimensions, which is the maximal compact subgroup of $G^{D}_{U,\mathbb{R}}$. The moduli spaces $\mathcal{M}_{D}$ of the associated maximal supergravity theories are parametrized by the coset space
\begin{equation}
    \mathcal{M}_{D} = \frac{G^{D}_{U,\mathbb{R}}}{H^{D}_{U,\mathbb{R}}} \,,
    \label{eq:8dmoduli}
\end{equation}
which further has to be modded out by the discrete duality action below. In the M-theory frame these scalar fields originate from the metric components of $T^{\ell}$ as well as internal components of the M-theory 3-form $C_{MNR}$ and its dual 6-form. For example, for $\ell = 3$ one has a single scalar field from the 3-form and six scalar fields from the metric parametrized by the seven-dimensional coset space
\begin{equation}
    \mathcal{M}_{8\text{d}} = \frac{\text{SL}(3,\mathbb{R}) \times \text{SL}(2,\mathbb{R})}{\text{SO}(3) \times \text{U}(1)} \,,
\end{equation}
and similarly for smaller dimensions.

Once one includes the flux quantization conditions, these continuous groups of the supergravity action are reduced to discrete subgroups that leave the charge lattices invariant obtained from the knit product
\begin{equation}
    G^{D}_U = \text{SL}(\ell,\mathbb{Z}) \bowtie \text{SO}(\ell-1,\ell-1,\mathbb{Z}) \,,
    \label{eq:Ugroup}
\end{equation}
with $\ell = 11-D$. These U-duality groups $G^{D}_U$ are given by:
\begin{equation}
    \begin{array}{c  c }
    \toprule
    D & G^{D}_U \\ \midrule
    10& 1 \\
    9 & \text{SL}(2,\mathbb{Z}) \\ 
    8 & \text{SL}(3,\mathbb{Z}) \times \text{SL}(2,\mathbb{Z}) \\
    7 & \text{SL}(5,\mathbb{Z}) \\ 
    6 & \text{SO}(5,5,\mathbb{Z}) \\
    5 & \text{E}_{6(6)}(\mathbb{Z}) \\
    4 & \text{E}_{7(7)}(\mathbb{Z}) \\
    3 & \text{E}_{8(8)}(\mathbb{Z}) \\ \bottomrule
    \end{array}
\end{equation}
In the following, we will focus on these discrete bosonic U-duality groups.\footnote{See \cite{Debray:2022wcd} for a discussion of anomalies for the continuous group E$_{7(7)}(\mathbb{R})$.} The fields and charges of the supergravity theory will transform under U-duality according to certain representations, which can be deduced from the transformation properties of the central charges in the supersymmetry algebra, as discussed in \cite{Obers:1998fb}.

\subsection{A geometrization of U-duality: Exceptional field theory}
\label{subsec:geomEFT}

Not all duality transformation have a geometric interpretation in terms of the internal $T^\ell$ or $T^{\ell-1}$ of the M-theory or type IIA/B lift, respectively. However, similar to the approach of F-theory \cite{Vafa:1996xn} (see  \cite{Weigand:2018rez} for a review) for SL$(2,\mathbb{Z})$ duality in type IIB, there are approaches to introduce auxiliary internal spaces to geometrize the duality action.

The geometrization of the T-duality subgroup $\text{SO}(\ell - 1, \ell - 1,\mathbb{Z})$ doubles the number of internal dimensions \cite{Siegel:1993th, Hull:2004in, Dabholkar:2005ve, Hull:2006va, Hull:2009mi} and is referred to as double field theory. It can be understood as a geometrization of the winding modes in the underlying string theory. The physical spacetime is then described by an $(\ell-1)$-dimensional subspace of this enhanced internal geometry, which depends on the choice of the duality frame. The additional inclusion of dualities which correspond to the other part in \eqref{eq:Ugroup} requires an extension of this setup known as exceptional field theory (see, e.g., \cite{Berman:2011jh, Coimbra:2011ky, Hohm:2013jma, Hohm:2013pua}).\footnote{See also \cite{Lazaroiu:2021vmb} for a mathematical description of U-duality bundles.}

These $D$-dimensional exceptional field theories present a covariant formulation of eleven-dimensional supergravity under the full U-duality group $G^{D}_U$. Instead of a doubling of the coordinates one includes even further auxiliary directions, for example, for $D=8$ the internal space coordinates are given by the in term of the six-dimensional $(\mathbf{2},\mathbf{3})$ representation of the $\text{SL}(2,\mathbb{Z}) \times \text{SL}(3,\mathbb{Z})$ duality group.\footnote{In lower dimensions the number of auxiliary dimensions is even bigger. For example, for $D=5$ one has 27 coordinates that transform in the 27-dimensional representation of E$_{6(6)}$ only six of which are part of the physical M-theory spacetime.} The physical spacetime in M-theory is formed by a three-dimensional subspace, the choice of which corresponds to the chosen duality frame. The chosen duality frame is also manifest in the solution to the so-called section constraint, which ensures a closure of the algebra of symmetries in exceptional field theory and restricts the dependence of fields in the theory to a subset of coordinates. It can be written as
\begin{equation}
Y^{MN}_{\phantom{MN}PQ} \, \partial_{M} \otimes \partial_{N} = 0 \,,
\end{equation}
which holds when applied to the fields in the theory, and $M,N,P,Q$ label the internal coordinates. The tensor $Y^{MN}_{\phantom{MN}PQ}$ transforms covariantly under the U-duality group. It can be understood as a projection to some particular representation of the duality group.

The bosonic fields of the supergravity theory naturally appear in representations of the U-duality group that can be phrased in terms of the index structure of the internal coordinates. Again, in eight dimensions this involves the duality invariant spacetime metric $g_{\mu \nu}$ and 3-form $\mathcal{C}_{\mu \nu \rho}$, the six vectors $\mathcal{A}_{\mu}$ and three 2-forms $\mathcal{B}_{\mu \nu}$ transforming in the (anti-)fundamental, and the scalar fields which parameterize the moduli space in \eqref{eq:8dmoduli} whose dependence on the internal directions is restricted via the section constraint, as we discuss in more detail below.

\vspace{3mm}

We now focus on the realization of maximal eight-dimensional supergravity, which will be the main subject of our investigation.

\section{U-duality in eight dimensions}
\label{sec:Udual8d}

The (bosonic) U-duality group in eight dimensions is given by
\begin{equation}
G_U^{\text{8d}} \equiv G_U = \text{SL}(2,\mathbb{Z}) \times \text{SL}(3,\mathbb{Z}) \,,
\end{equation}
that acts on the fields in the supergravity theory. The bosons are given by
\begin{equation}
\begin{split}
\text{graviton:}& \quad g_{\mu \nu} \,, \\
\text{vectors:}& \quad A_{\mu}^{\alpha,a} \,, \\
\text{2-forms:}& \quad B^{\alpha}_{\mu \nu} \,, \\
\text{3-form:}& \quad C_{\mu \nu \rho} \,, \\
\text{scalars:}& \quad \varphi^I \,, \enspace I \in \{1 \,, \dots \,, 7 \} \,.
\end{split}
\label{eq:sugrafields}
\end{equation}
The index $\alpha$ takes values in $\{1 \,, 2 \,, 3 \}$, and the index $a$ in $\{1 \,, 2\}$, respectively, while $\mu \in \{ 0 \,, \dots \,, 7\}$ is the Lorentz index of the eight-dimensional theory. All of these fields are contained in the supergravity multiplet, which further contains two gravitinos $\Psi^i_{\mu}$ of opposite chiralities as well as six fermions $\chi^n$. The index structure above demonstrates that the fields transform in certain representations under the U-duality group, where $\alpha \,, \beta$ can be regarded as SL$(3,\mathbb{Z})$ and $a \,, b$ as SL$(2,\mathbb{Z})$ indices, respectively. In particular, one has
\begin{equation}
A_{\mu}^{\alpha, a}: \enspace (\mathbf{3}, \mathbf{2}) \,, \quad B^{\alpha}_{\mu \nu}: \enspace (\mathbf{3}, \mathbf{1}) \,,
\end{equation}
while the scalars transform in a more complicated way, which we will explore further below.

Since configurations related by duality transformations are physically equivalent, the U-duality group appears as a gauge group in supergravity.\footnote{The field backgrounds can break these discrete gauge symmetries spontaneously.} As mentioned above supergravity theories with maximal supersymmetry are closely related to superstring theory and M-theory compactified on tori and hence some subgroups of the full U-duality can be realized as symmetries of the compactification spaces. We will discuss this in detail for M-theory on $T^3$ and type IIB on $T^2$.

\subsection{U-duality from M-theory}
\label{subsec:UM}

Let us start with eleven-dimensional supergravity, the low-energy description of M-theory. The bosonic field content consists of the metric $G_{MN}$ and a (pseudo)three-form potential $C_{MNR}$. The maximal eight-dimensional supergravity theory is obtained via compactification on a 3-torus $T^3$, whose coordinates we label by $\alpha \,,  \beta \,, \gamma \in \{1 \,, 2 \,, 3 \}$. The lower-dimensional fields arise as follows:
\begin{equation}
\begin{split}
\text{graviton}:& \quad G_{\mu \nu} \,, \\
\text{vectors}:& \quad G_{\alpha \mu} \,, C_{\alpha \beta \mu} \,, \\
\text{2-forms}:& \quad C_{\alpha \mu \nu} \,, \\
\text{3-form}:& \quad C_{\mu \nu \rho} \,, \\
\text{scalars}:& \quad G_{\alpha \beta} \,, C_{\alpha \beta \gamma} \,.
\end{split}
\label{eq:Mfields}
\end{equation}
Up to field redefinitions, we find the exact same spectrum as in \eqref{eq:sugrafields}. The SL$(3,\mathbb{Z})$ factor of $G_U$ is realized as large diffeomorphisms of $T^3$ and acts on the internal coordinates labeled by $\alpha \,, \beta \,, \gamma$. This demonstrates that the vector fields transform as two three-dimensional representations under SL$(3,\mathbb{Z})$ and similarly the 2-forms transform in the three-dimensional representation. The scalars split into a singlet under SL$(3,\mathbb{Z})$ originating from the internal components of the 3-form and the symmetric representation associated to the internal metric components that parameterize both the shape and size of the 3-torus.

We can therefore understand pure SL$(3,\mathbb{Z})$ duality bundles geometrically as $T^3$-fibrations in an M-theory framework (see Figure \ref{fig:Mgeom}).
\begin{figure}
\centering
\includegraphics[width = 0.4 \textwidth]{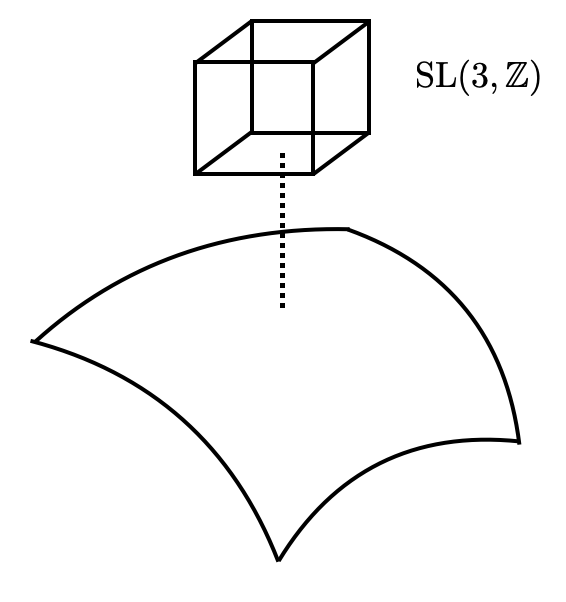}
\caption{M-theory capturing SL$(3,\mathbb{Z}) \subset G_U$ of the U-duality group geometrically.}
\label{fig:Mgeom}
\end{figure}
If one of the circles of $T^3$ is trivially fibered we can reduce to a type IIA framework, i.e., M-theory on a circle (see Figure \ref{fig:IIAgeom}). 
\begin{figure}
\centering
\includegraphics[width = 0.7 \textwidth]{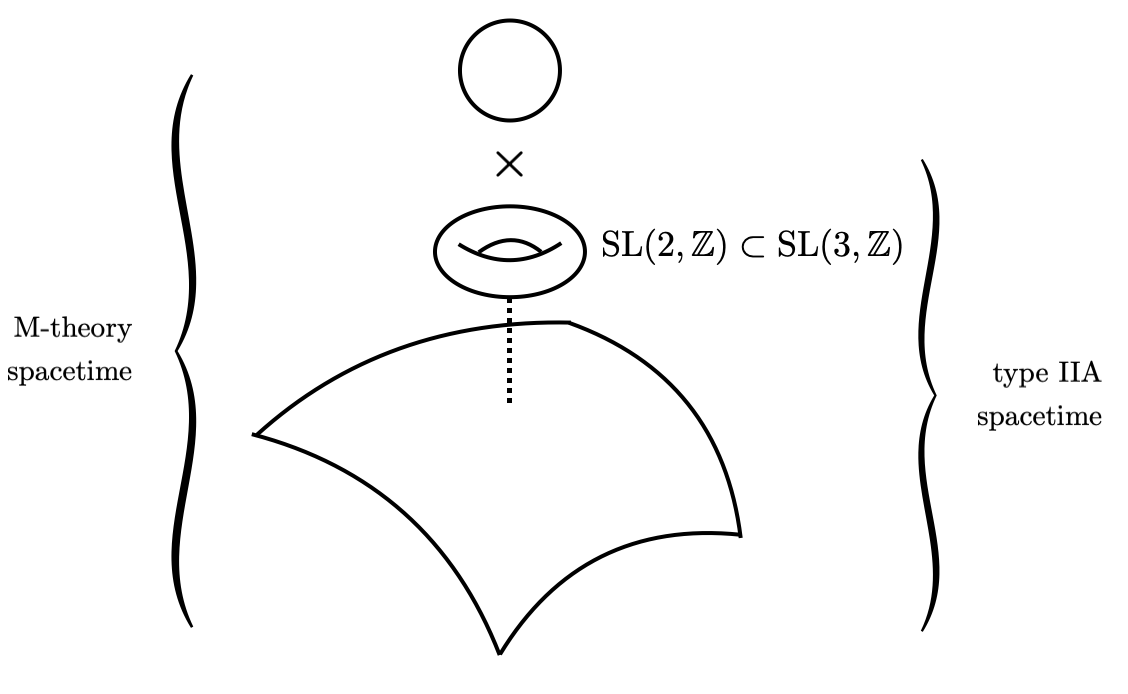}
\caption{Type IIA capturing SL$(2,\mathbb{Z}) \subset \text{SL}(3,\mathbb{Z}) \subset G_U$ of the U-duality group geometrically.}
\label{fig:IIAgeom}
\end{figure}
Type IIA on a 2-torus has a T-duality group 
\begin{equation}
\text{SO}(2,2,\mathbb{Z}) = \big( \text{SL}(2,\mathbb{Z}) \times \text{SL}(2,\mathbb{Z}) \big) / \mathbb{Z}_2 \,,
\end{equation}
the first factor being the remnant of the SL$(3,\mathbb{Z})$ subgroup\footnote{This is not the same SL$(2,\mathbb{Z})$ subgroup as the strong-weak coupling duality SL$(2,\mathbb{Z})_S$ in type IIB.} and the second factor giving rise to the SL$(2,\mathbb{Z})$ factor of $G_U$. This second factor acts on the K\"ahler sector of type IIA, i.e.,
\begin{equation}
\omega_{\text{IIA}} = B + i \, \text{vol} (T^2) \,.
\end{equation}
One can lift that to a complexified volume in M-theory by defining
\begin{equation}
\omega_{\text{M}} = C + i \, \text{vol}(T^3) \,,
\label{eq:MKahler}
\end{equation}
which also transforms under Moebius transformation with respect to the SL$(2,\mathbb{Z})$ factor. Transition functions in the SL$(2,\mathbb{Z})$ factor of $G_U$ thus connect M-theory torus compactifications of different volume and C-field and in general acts non-geometrically.

\subsection{U-duality from type IIB}
\label{subsec:UIIB}

Similarly, starting with type IIB supergravity (i.e., $\mathcal{N} = (2,0) $) in ten dimensions, one obtains maximal supergravity in eight dimensions after compactification on a 2-torus $T^2$. We will denote the 10d bosonic fields by $g_{AB}$ for the metric, $B_{AB}$ and $C_{AB}$ for the RR and NS 2-forms, $C^+_{ABCD}$ for the chiral 4-form with self-dual field strength, as well as $\tau$ for the axio-dilaton, which as usual is given by
\begin{equation}
\tau = C_0 + i e^{- \phi} \,,
\end{equation}
with RR axion $C_0$ and dilaton $\phi$. The bosonic fields in \eqref{eq:sugrafields} arise from the type IIB fields as follows:\footnote{Here we only keep track of the Lorentz structure. The connection to the actual 8d fields \eqref{eq:sugrafields} requires further field redefinitions.}
\begin{equation}
\begin{split}
\text{graviton:}& \quad g_{\mu \nu} \,, \\
\text{vectors:}& \quad g_{a \mu} \,, B_{a \mu} \,, C_{a \mu} \,, \\
\text{2-forms:}& \quad C^+_{a b \mu \nu} \,, B_{\mu \nu} \,, C_{\mu \nu} \,, \\
\text{3-form:}& \quad C^+_{a \mu \nu \rho} \,, \\
\text{scalars:}& \quad \tau \,, g_{ab} \,, B_{ab} \,, C_{ab} \,, 
\end{split}
\label{eq:IIBfields}
\end{equation}
where we label the internal coordinates with $a,b \in \{1 \,, 2\}$. Note that since the 4-form field is chiral (i.e., its field strength is self-dual) it only gives rise to a single 3-form field in eight dimensions. The internal metric components $g_{a b}$ transform in the symmetric representation of SL$(2,\mathbb{Z})$, while the scalars originating from $B_{AB}$ and $C_{AB}$ transform in the antisymmetric, i.e., singlet representation. This fixes some of the transformation properties of the scalar fields.

We see that the internal SL$(2,\mathbb{Z})$ index is given by the coordinates of the $T^2$ compactification torus which acts for example on the two vector fields originating from the same IIB field in ten dimensions. This is to be expected since the SL$(2,\mathbb{Z})$ subgroup of the full duality group $G_U$ is identified with the group of large diffeomorphisms of the compactification torus.
\begin{figure}
\centering
\includegraphics[width = 0.6 \textwidth]{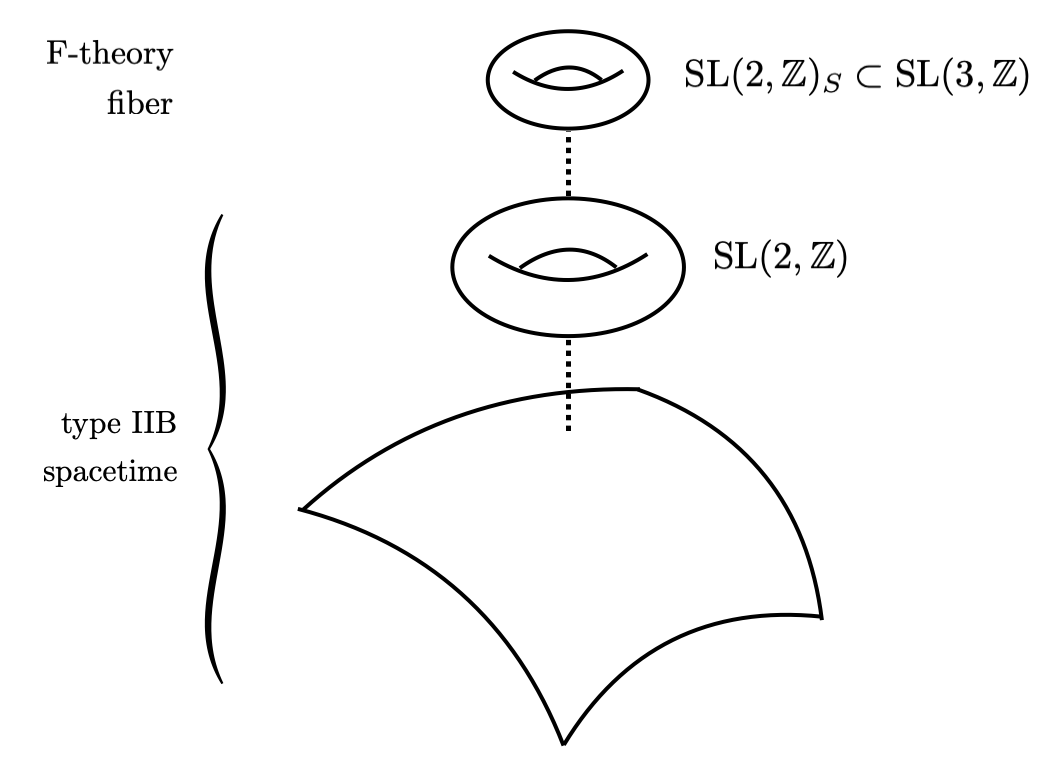}
\caption{F-theory capturing SL$(2,\mathbb{Z}) \times \text{SL} (2,\mathbb{Z})_S \subset G_U$ of the U-duality group geometrically.}
\label{fig:Fgeom}
\end{figure}
U-duality backgrounds that only have a non-trivial SL$(2,\mathbb{Z})$ bundle can therefore be understood geometrically as type IIB geometries described as torus fibrations (see Figure \ref{fig:Fgeom}).

However, one can do even better by going to the non-perturbative description of type IIB provided by F-theory \cite{Vafa:1996xn, Weigand:2018rez}. This geometrizes the strong-weak coupling duality, which appears as an SL$(2,\mathbb{Z})_S$ subgroup of the SL$(3,\mathbb{Z})$ factor in $G_U$. Thus, F-theory allows for a geometric realization of
\begin{equation}
\text{SL}(2,\mathbb{Z}) \times \text{SL}(2,\mathbb{Z})_S \subset G_U = \text{SL}(2,\mathbb{Z}) \times \text{SL}(3,\mathbb{Z}) \,.
\end{equation}
Bundles contained in this subgroup can be understood as F-theory backgrounds with its $T^2_F$ torus fibration encoding the SL$(2,\mathbb{Z})_S$ duality, whose base is given by another $T^2$ torus fibration encoding the SL$(2,\mathbb{Z})$ (see Figure \ref{fig:Fgeom}). Note that only the torus $T^2$ within the F-theory base is part of spacetime and its volume is physical, whereas the volume of the F-theory fiber torus $T^2_F$ is not physical and typically set to zero. 

A non-trivial SL$(2,\mathbb{Z})_S$ bundle indicates the variation of the axio-dilaton in spacetime, typically leading to regions of strong string coupling and therefore going beyond perturbative string backgrounds. This action is captured by the Moebius transformations
\begin{equation}
\text{SL}(2,\mathbb{Z})_S: \quad \tau \mapsto \frac{a \tau + b}{c \tau + d} \,, \quad \begin{pmatrix} a & b \\ c & d \end{pmatrix} \in \text{SL}(2,\mathbb{Z}) \,.
\end{equation}
Moreover, we know that $B_{AB}$ and $C_{AB}$ transform as a doublet under strong weak coupling duality, which shows that their descendants in \eqref{eq:IIBfields} transform accordingly. The full SL$(3,\mathbb{Z})$ includes the K\"ahler sector of type IIB and acts, for example, on the complexified volume of the compactification torus
\begin{equation}
\omega_{\text{IIB}} = B + i \, \text{vol}(T^2) \,,
\end{equation}
which combines the integral of the $B$-field with metric components, e.g., \cite{Basu:2007ck}. In fact this complex scalar is acted upon by Moebius transformations under a different SL$(2,\mathbb{Z})$ subgroup of SL$(3,\mathbb{Z})$. This shows that one can generate transition functions that connect different volumes of the compactification torus and demonstrates the non-geometric nature of these type of backgrounds in the type IIB duality frame. The full SL$(3,\mathbb{Z})$ in the type IIB frame can then recovered by dualities.

\subsection{M-/F-theory duality}
\label{subsec:MF}

Of course these descriptions are connected via M-/F-theory duality which we now describe briefly. 

As we already discussed above we can single out a circle of the M-theory compactification in order to reduce to a well-defined type IIA description at small circle volumes, i.e., small string coupling. The type IIA theory is compactified on the remaining 2-torus. Performing T-duality along one of the circle factors of this IIA compactification leads to type IIB on $T^2$, sending the IIA circle size to zero lets the IIB circle grow and decompactifies to a type IIB configuration on a circle. Indeed, following the transformation behavior of the fields, see, e.g., \cite{Denef:2008wq, Weigand:2018rez}, shows that the 2-torus composed of the M-theory type IIA circle as well as the T-duality circle combine into the F-theory fiber torus. This identifies the SL$(2,\mathbb{Z})_S$ subgroup of SL$(3,\mathbb{Z})$. In the following we will choose it to be given by elements of the form
\begin{equation}
\text{SL}(2,\mathbb{Z})_S \subset \text{SL}(3,\mathbb{Z}): \quad \begin{pmatrix} 1 & 0 & 0 \\ 0 & a & b \\ 0 & c & d \end{pmatrix} \,, \quad \begin{pmatrix} a & b \\ c & d \end{pmatrix} \in \text{SL}(2,\mathbb{Z}) \,,
\end{equation}
which is always possible using conjugations.

The remaining circle direction of the M-theory 3-torus is part of the type IIA compactification torus and is a spectator in the T-duality to type IIB. It therefore corresponds to the type IIB compactification circle in the limit of zero fiber volume. This means that general SL$(3,\mathbb{Z})$ transformations mix spacetime directions in type IIB with the F-theory fiber directions, underlining once more the non-geometric nature of these transformations in the type IIB duality frame. This can be understood as the familiar fact that T-duality acts on the (axio-)dilaton field \cite{polchinski1998string}. Vice versa, the SL$(2,\mathbb{Z})$ factor in $G_U$ acts on the compactification torus $T^2$ in type IIB and therefore is not realized as a geometric SL$(3,\mathbb{Z})$ element in M-theory which only has access to one of the compactification circles.

Of course one can also follow the various supergravity fields and string and M-theory objects, such as branes, under the duality chains, which is conveniently captured by the representations of the eight-dimensional supergravity fields under the U-duality group, see \cite{Obers:1998fb}. In situations where the compactification tori are fibered over the eight-dimensional spacetime, one needs to perform the dualities above fiber-wise, which also allows one to track the effect of U-duality monodromies in the various duality frames.

\subsection{Exceptional field theory}
\label{subsec:exft}

We now want to discuss the geometrization of the 8d U-duality group within exceptional field theory as described in the more general context in Section \ref{subsec:geomEFT}.

The 8d supergravity theory has six vector fields which as shown above transform in the $(\mathbf{3}, \mathbf{2})$ representation of the duality group. This motivates the introduction of six extra internal coordinates $Y^M$, with $M \in \{ 1 \,, 2 \,, \dots \,, 6 \}$, which describe the action of the duality bundle. We can also introduce a two index notation in analogy to the vector fields $Y^{\alpha a}$, with $\alpha \in \{1 \,, 2 \,, 3\}$ and $a \in \{ 1 \,, 2 \}$, which are acted upon by the SL$(3,\mathbb{R})$ and SL$(2,\mathbb{R})$, respectively. Including the eight coordinates of spacetime $x^{\mu}$ one finds a 14-dimensional spacetime parametrized by $(x^{\mu},Y^M)$. To reduce to the physical spacetime one needs to introduce the section constraint, which in eight dimensions takes the form, see, e.g., \cite{Hohm:2015xna},
\begin{equation}
\epsilon^{\alpha \beta \gamma} \epsilon^{ab} \partial_{\alpha a} \otimes \partial_{\beta b} = 0 \,,
\end{equation}
where $\epsilon^{\alpha \beta \gamma}$ is the totally antisymmetric tensor acting on SL$(3,\mathbb{R})$ indices and $\epsilon^{a b}$ its counterpart acting on SL$(2,\mathbb{R})$ indices. We see that this isolates the $(\mathbf{1},\mathbf{3})$ component.

There are several ways to satisfy the section constraint. For example, we can demand that fields only depend on $Y^{\alpha 1}$ singling out the $a = 1$ direction. In this way we find 
\begin{equation}
\epsilon^{\alpha \beta \gamma} \partial_{\alpha 1} \otimes \partial_{\beta 2} = 0 \,,
\end{equation}
since nothing depends on $\partial_{\beta 2}$. This suggests the physical realization of three of the six coordinates, which we expect to correspond to an M-theory framework. Similarly, we can choose that fields only depend on $Y^{1 a}$, finding that
\begin{equation}
\epsilon^{a b} \epsilon^{1 \beta \gamma} \partial_{1 a} \otimes \partial_{\beta b} = 0 \,,
\end{equation}
since $\partial_{\beta b}$ for $\beta \neq 1$, which is enforced by $\epsilon^{1 j k}$, acts trivially. In this frame there are only two physical coordinates and we expect to find the type IIB interpretation. Note that here one has a remaining SL$(2,\mathbb{R})$ action on the coordinates $\beta$ and $\gamma$, which encodes the S-duality and can be thought of as the F-theory torus directions.\footnote{One can also demand the fields to only depend on $Y^{\alpha 1}$ with $i \in \{ 1 \,, 2\}$ which solves the section constraint and reduces to a type IIA framework.}

With this one can reconstruct the field content depending on the solution of the section constraint in the following way. One starts with the exceptional field theory fields $(g_{\mu \nu} \,, \mathcal{M}_{MN} \,, \mathcal{A}_\mu \,, \mathcal{B}_{\mu \nu} \,, \dots)$, describing the spacetime metric, internal metric, gauge fields, and tensor fields, respectively. All of these fields transform in certain U-duality representations. As we have seen from the particular duality frames in eight-dimensional supergravity one has
\begin{equation}
\begin{split}
\mathcal{A}_{\mu} &\sim (\mathbf{2}, \mathbf{3}) \,, \\ 
\mathcal{B}_{\mu \nu} &\sim (\mathbf{1}, \mathbf{3}) \,, \\ 
\mathcal{C}_{\mu \nu \rho} &\sim (\mathbf{1}, \mathbf{1}) \,.
\end{split}
\end{equation}
The choice of solution of the section constraint then breaks the full U-duality group to a subgroup, which is geometrically realized in the particular duality frame. This leads to a decomposition of the full $G_U$ representations that can be matched to the original fields. For example, picking the coordinate dependence to be in $Y^{\alpha 1}$ preserves SL$(3,\mathbb{R})$ and we find
\begin{equation}
\begin{split}
(\mathbf{2},\mathbf{3}) &\rightarrow \mathbf{3} \oplus \mathbf{3} \sim \{ G_{\alpha \mu} \,, C_{\alpha \beta \mu} \} \,, \\
(\mathbf{1}, \mathbf{3}) &\rightarrow \mathbf{3} \sim \{ C_{\alpha \mu \nu} \} \,, \\
(\mathbf{1}, \mathbf{1}) &\rightarrow \mathbf{1} \sim \{ C_{\mu \nu \rho}\} \,.
\end{split}
\end{equation}
As it should, this precisely matches our M-theory discussion \eqref{eq:Mfields} above. Similarly, we can pick the physical coordinates to be $Y^{1 a}$ which leads to
\begin{equation}
\begin{split}
(\mathbf{2},\mathbf{3}) &\rightarrow \mathbf{2} \oplus \mathbf{2} \oplus \mathbf{2} \sim \{ g_{a \mu} \,, B_{a \mu} \,, C_{a \mu} \}\,, \\
(\mathbf{1}, \mathbf{3}) &\rightarrow \mathbf{1} \oplus \mathbf{1} \oplus \mathbf{1} \sim \{ C^{+}_{a b \mu \nu} \,, B_{\mu \nu} \,, C_{\mu \nu} \} \,, \\
(\mathbf{1}, \mathbf{1}) &\rightarrow \mathbf{1} \sim \{ C^{+}_{a \mu \nu \rho} \} \,,
\end{split}
\end{equation}
i.e., the type IIB interpretation \eqref{eq:IIBfields}. Here, one could also resolve the unbroken SL$(2, \mathbb{R})_S$  $\subset$ SL$(3,\mathbb{R})$ which describes the transformation under S-duality, acting for example on the two 2-forms $B_{\mu \nu}$ and $C_{\mu \nu}$.

By using the framework of exceptional field theory we can geometrize all duality bundles. One can then reconstruct the non-geometric backgrounds by different solutions of the section constraint. The same should work for the associated duality defects, which one can formulate in the 14-dimensional theory and then take appropriate subspaces for the M-theory or type IIB/F-theory realization. We hope to come back to a detailed analysis of duality defects in this framework.\footnote{Note that the geometrization within exceptional field theory is different from S-theory \cite{Kumar:1996zx} in which one encodes the $\text{SL}(2,\mathbb{Z}) \times \text{SL}(3,\mathbb{Z})$ bundle in terms of a $T^2 \times T^3$ fibration leading to a 13-dimensional theory.}

\vspace{1cm}

With the geometrical interpretation of the U-duality group in various duality frames at hand we can now move to analyze the bordism defects and their string theory origin.

%%%%%%%%%%%%%%%%%%%%%%%%%%%%%%%%%%%%%%%%%%%%%%
%%%%%%%%%%%%%%%%%%%%%%%%%%%%%%%%%%%%%%%%%%%%%%

\newpage

\part{Supergravity backgrounds and defects}
\label{part:physics}

\section{Monodromies and fiber degenerations}
\label{sec:monodrom}

Decomposing the U-duality bundles using various subgroups of $\text{SL}(2,\mathbb{Z}) \times \text{SL}(3,\mathbb{Z})$ according to its stable splitting (see Section \ref{subsec:stablesplit}), we find a distinguished set of duality transformations which appear as monodromies in the generators of the bordism group. In many cases these monodromies have a geometric interpretation in terms of the $T^2$ compactification of type IIB, the $T_F^2$ of an F-theory fiber, or the $T^3$ in M-theory compactifications, as discussed in the UV realization of the U-duality group in Section~\ref{sec:Udual}. The duality defects that break the associated global symmetry are singular objects in the low-energy effective description. These objects are located where the fiber geometry becomes singular. Since we are interested in relating the singular duality defects of the eight-dimensional supergravity theory with stringy objects, it will be important to have a good understanding of these fiber degenerations.

\subsection{\texorpdfstring{$\SL(2,\mathbb{Z})$}{SL(2,Z)}}
\label{subsec:SL2central}

The two relevant matrices for SL$(2,\mathbb{Z})$ are given by
\begin{equation}
\gamma_3 = (ST)^2 = \begin{pmatrix*}[r] -1 & -1 \\ 1 & 0 \end{pmatrix*} \,, \quad \gamma_4 = S = \begin{pmatrix*}[r] 0 & -1 \\ 1 & 0 \end{pmatrix*} \,,
\end{equation}
where we used the usual convention for the generators of SL$(2,\mathbb{Z})$. These monodromy matrices act on a 2-torus whose shape is parameterized by a single complex parameter $\tau$ in the upper half-plane. One can understand that as
\begin{equation}
T^2 = \mathbb{R}^2 / \Lambda_2 \,,
\end{equation}
with the two-dimensional lattice $\Lambda_2$ spanned by $\langle 1 \,, \tau \rangle$. For general SL$(2,\mathbb{Z})$ transformations this parameter transforms via Moebius transformations
\begin{equation}
\tau \mapsto \frac{a \tau + b}{c \tau + d} \,, \quad \text{with} \enspace \begin{pmatrix} a & b \\ c & d \end{pmatrix} \in \text{SL}(2,\mathbb{Z}) \,.
\end{equation}
For the degenerate fiber on the duality defect the monodromy should leave the parameter $\tau$ invariant. Thus, we find that $\tau$ is fixed to special values
\begin{equation}
\begin{alignedat}{3}
\gamma_3:  \quad \tau &= - \frac{\tau + 1}{\tau} \quad &\Rightarrow \quad \tau &=  e^{2 \pi i/3} \,, \\
\gamma_4:  \quad \tau &= - \frac{1}{\tau} \quad &\Rightarrow \quad \tau &=  i \,.
\end{alignedat}
\end{equation}
For the 2-tori at these special values of $\tau$, $\gamma_k$ acts via rotation by $2 \pi / k$ in the complex plane forming the group $\mathbb{Z}_k$. The degenerate fiber then takes the form of a torus orbifold $T^2 / \mathbb{Z}_k$. For $k = 3$ and $k = 4$ these are depicted in Figure \ref{fig:2dorbi}. The $T^2/\mathbb{Z}_3$ orbifold has three conical singularities of the form $\mathbb{C}/\mathbb{Z}_3$ located at the orbifold fixed points $\{0 \,, \tfrac{1}{3} + \tfrac{2 \tau}{3}, \tfrac{2}{3} + \tfrac{\tau}{3}  \}$. The $T^2/\mathbb{Z}_4$ orbifold has two singularities of the form $\mathbb{C}/\mathbb{Z}_4$ at $\{ \tfrac{1}{2} \,, \tfrac{\tau}{2} \}$ and one of the form $\mathbb{C}/\mathbb{Z}_2$ at the origin.
\begin{figure}
\includegraphics[width = \textwidth]{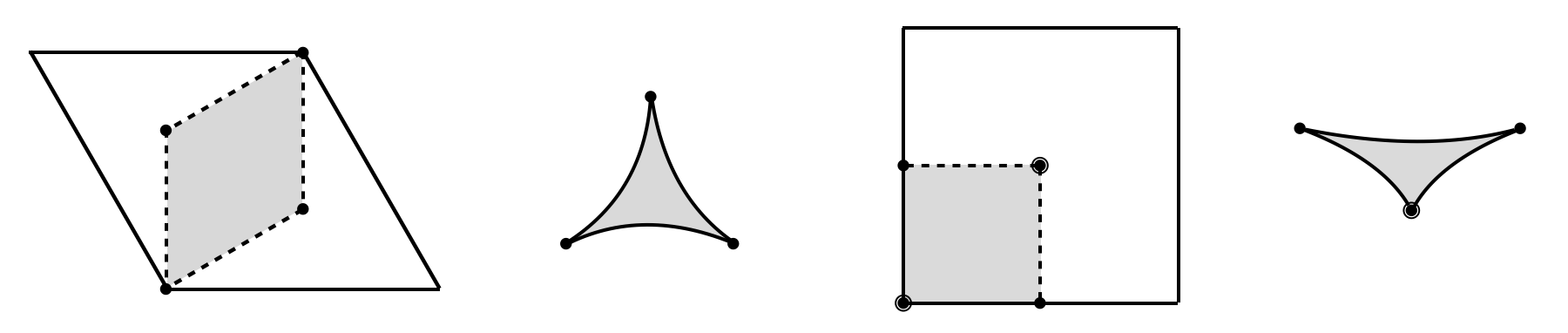}
\caption{The two relevant orbifolds $T^2/\mathbb{Z}_3$ (left), $T^2/\mathbb{Z}_4$ (right), with fundamental domain given by the shaded region, and fixed points marked with black circles (the $\mathbb{C}/\mathbb{Z}_2$ is indicated by a further circle).}
\label{fig:2dorbi}
\end{figure}

\subsection{\texorpdfstring{$\SL(3,\mathbb{Z})$}{SL(3,Z)}}
\label{subsec:SL3central}

For SL$(3,\mathbb{Z})$ we need to consider the two order-three transformations
\begin{equation}
\Gamma_3^{(1)} = \begin{pmatrix*}[r] 1 & 0 & 0 \\ 0 & -1 & -1 \\ 0 & 1 & 0 \end{pmatrix*} \,, \quad \Gamma_3^{(2)} = \begin{pmatrix*}[r] 0 & 1 & 0 \\ 0 & 0 & 1 \\ 1 & 0 & 0 \end{pmatrix*} \,.
\end{equation}
We see that $\Gamma_3^{(1)}$ is block diagonal with $\gamma_3$ in the lower right corner. Therefore, it acts on a two-dimensional sub-torus of $T^3$, which we choose to be the one corresponding to the F-theory fiber $T^2_F$ (see Section \ref{subsec:MF}) and fix its complex structure to $e^{2 \pi i /3}$ as above. Since one of the circles in $T^3$ does not transform at all, the singular fiber is given by
\begin{equation}
T^3 / (\mathbb{Z}_3)_{\Gamma_3^{(1)}} = (T_F^2 / \mathbb{Z}_3) \times S^1 \,.
\end{equation}
Things become more interesting for $\Gamma_3^{(2)}$, which does not leave any individual circle in $T^3$ invariant. Since one has
\begin{equation}
\big( \Gamma^{(2)}_3 \big)^T \Gamma^{(2)}_3 = \mathbf{1} \,,
\end{equation}
it is an element of SO$(3)$ and can be understood as rotation by $2 \pi/3$ in $\mathbb{R}^3$ along the axis $v = (1,1,1)$. Taking $T^3 = \mathbb{R}^3 / \mathbb{Z}^3$ and using translations by lattice vectors we find the action on $T^3$ (as shown in Figure~\ref{fig:T3Z3_rot}) given by a rotation along one of the diagonals.
\begin{figure}
\centering
\includegraphics[width = 0.8 \textwidth]{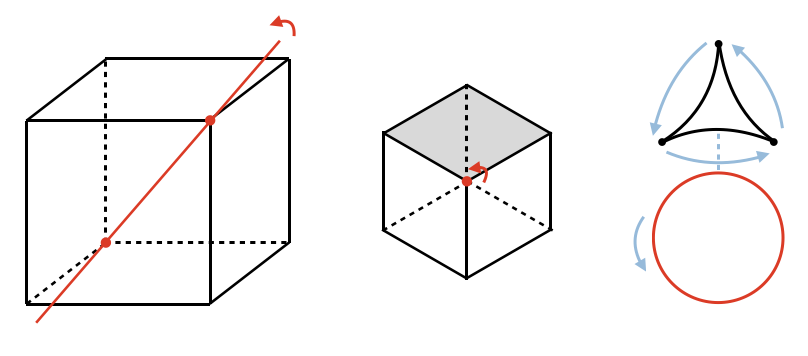}
\caption{The singular fiber $T^3 / (\mathbb{Z}_3)_{\Gamma^{(2)}_3}$ as fibration of $T^2 / \mathbb{Z}_3$ over $S^1$ (right), obtained by modding out a $\mathbb{Z}_3$ rotation indicated in two different perspectives on the left and in the middle.} 
\label{fig:T3Z3_rot}
\end{figure}
From this we expect a non-isolated singularity of the form $(\mathbb{C}/\mathbb{Z}_3) \times S^1$ given by the rotation axis. This is indeed the case and the singular fiber can be described as
\begin{equation}
T^3 / (\mathbb{Z}_3)_{\Gamma_3^{(2)}} = \big( (T^2/\mathbb{Z}_3) \times S^1 \big) / \mathbb{Z}^s_3 \,.
\label{eq:twistedcircleZ3}
\end{equation}
The quotient by $\mathbb{Z}_3^s$ acts as a translation around the base circle and an action on the torus orbifold that exchanges the three fixed points. For a technique to construct this singular fiber we refer to Appendix \ref{app:cryssingfiber}.

Other relevant $\SL(3,\mathbb{Z})$ transformations are of order four and given by
\begin{equation}
\Gamma_4^{(1)} = \begin{pmatrix*}[r] 1 & 0 & 0 \\ 0 & 0 & -1 \\ 0 & 1 & 0 \end{pmatrix*} \,, \quad \Gamma_4^{(2)} = \begin{pmatrix*}[r] 1 & 1 & 1 \\ -1 & 0 & 0 \\ 0 & -1 & 0 \end{pmatrix*} \,.
\label{eq:Gamma4}
\end{equation}
As above $\Gamma_4^{(1)}$ is of block-diagonal form which only acts on a two-dimensional torus, which we choose to be the 2-torus associated to the F-theory fiber under dualities. We find that the singular fiber is simply given by
\begin{equation}
T^3/(\mathbb{Z}_4)_{\Gamma^{(1)}_4} = (T_F^2 / \mathbb{Z}_4) \times S^1 \,.
\end{equation}
For $\Gamma^{(2)}_4$ things are more complicated, especially since it is not an element of SO$(3)$ and we need to find a good basis for the torus lattice $\Lambda_3$. This is done in Appendix \ref{app:cryssingfiber}. Similar to the situation above, we find that the resulting singular fiber is given by a fibration of $T^2/\mathbb{Z}_4$ over $S^1$ (see Figure~\ref{fig:T3Z4_orbi}).
\begin{figure}
\centering
\includegraphics[width = 0.2 \textwidth]{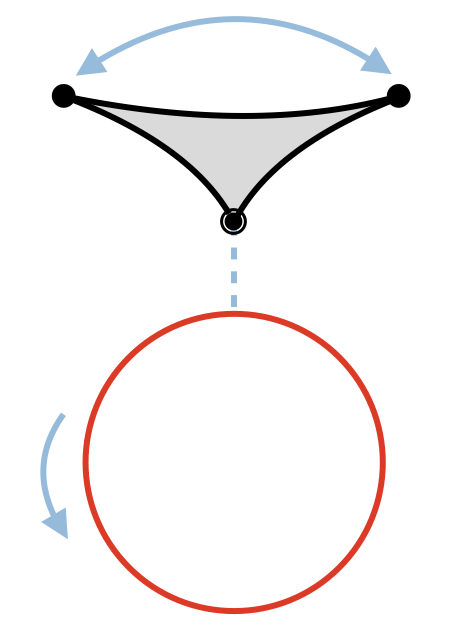}
\caption{The singular fiber $T^3 / (\mathbb{Z}_4)_{\Gamma^{(2)}_4}$ as fibration of $T^2 / \mathbb{Z}_4$ over $S^1$.} 
\label{fig:T3Z4_orbi}
\end{figure}
The fibration structure is given by
\begin{equation}
T^3/(\mathbb{Z}_4)_{\Gamma^{(2)}_4} = \big((T^2 / \mathbb{Z}_4) \times S^1 \big) / \mathbb{Z}_2^s \,,
\end{equation}
where the translational symmetry $\mathbb{Z}_2^s$ shifts halfway around the base circle and acts on the fiber. Going around the base circle this action exchanges the two local singularities of form $\mathbb{C}/\mathbb{Z}_4$, while the local $\mathbb{C}/\mathbb{Z}_2$ singularity is mapped to itself.

Finally, we will also encounter the SL$(3,\mathbb{Z})$ monodromies given by the order-two elements
\begin{equation}\label{eq:M1_matrices}
    M_1^{(1)} = \begin{pmatrix*}[r] -1 & 0 & 0 \\ 0 & -1 & 0 \\ 0 & 0 & 1 \end{pmatrix*} \,, \quad M_2^{(1)} = \begin{pmatrix*}[r] 1 & 0 & 0 \\ 0 & -1 & 0 \\ 0 & 0 & -1 \end{pmatrix*} \,,
\end{equation}
and 
\begin{equation}\label{eq:M2_matrices}
    M_1^{(2)} = \begin{pmatrix*}[r]
        -1&0&0\\
        1&1&1\\
        0&0&-1
    \end{pmatrix*}\,, \quad
    M_2^{(2)} = \begin{pmatrix*}[r]
        0&0&1\\
        -1&-1&-1\\
        1&0&0
    \end{pmatrix*} \,.
 %   0 & 1 & 0 \\ 1 & 0 & 0 \\ -1 & -1 & -1 \end{pmatrix} \,, \quad M_2^{(2)} = \begin{pmatrix} 0 & 0 & 1 \\ -1 & -1 & -1 \\ 1 & 0 & 0 \end{pmatrix} \,.
\end{equation}
It will be enough to discuss the singular fiber of one of each pair of monodromies which we will choose to be $M_2^{(1)}$ and $M_2^{(2)}$. 

For $M_2^{(1)}$ we see the same block diagonal form as for $\Gamma_4^{(1)}$, which acts non-trivially only on the F-theory fiber $T^2_F \subset T^3$. The singular fiber, does not require a special value for the complex structure and takes the product form
\begin{equation}
    T^3/(\mathbb{Z}_2)_{M_2^{(1)}} = (T^2_F / \mathbb{Z}_2) \times S^1 \,,
\end{equation}
with the usual 2-torus orbifold as the F-theory fiber.

For $M_2^{(2)}$ there is no invariant sub-torus and we employ the same techniques as above (see Appendix \ref{app:cryssingfiber}) to obtain the singular fiber geometry:
\begin{equation}
    T^3 / (\mathbb{Z}_2)_{M_1^{(2)}} = \big( (T^2/\mathbb{Z}_2) \times S^1 \big) / \mathbb{Z}_2^s \,,
\end{equation}
with $\mathbb{Z}_2^s$ given by a half-shift around a circle in combination with an exchange of the diagonal fixed points on $T^2/\mathbb{Z}_2$.

\subsection{\texorpdfstring{$\SL(2,\mathbb{Z}) \times\SL(3,\mathbb{Z})$}{SL(2,Z) x SL(3,Z)}}
\label{subsec:nongeomsingfiber}

Finally, we want to describe singular fibers for simultaneous monodromies in both duality subgroups. While pairs of the form $\big(\gamma_3, \Gamma_3^{(1)}\big)$ and $\big( \gamma_4, \Gamma_4^{(1)}\big)$ have a geometric interpretation in F-theory we will focus on the pairs $\big(\gamma_3, \Gamma^{(2)}_3\big)$ and $\big(\gamma_4, \Gamma^{(2)}_4\big)$, which will appear for certain bordism generators below, and do not act geometrically in any duality frame.

For that we first determine the form of the monodromy when acting on the six coordinates $Y^M$ of the exceptional field theory description, which is simply the tensor product of the individual monodromy matrices, for example
\begin{equation}
\big( \gamma_3, \Gamma^{(2)}_3 \big) = \begin{pmatrix*}[r] -1 & -1 \\ 1 & 0 \end{pmatrix*} \otimes \begin{pmatrix} 0 & 1 & 0 \\ 0 & 0 & 1 \\ 1 & 0 & 0  \end{pmatrix} = \begin{pmatrix*}[r] 0 & -1 & 0 & 0 & -1 & 0 \\ 0 & 0 & -1 & 0 & 0 & -1 \\ -1 & 0 & 0 & -1 & 0 & 0 \\ 0 & 1 & 0 & 0 & 0 & 0 \\ 0 & 0 & 1 & 0 & 0 & 0 \\ 1 & 0 & 0 & 0 & 0 & 0 \end{pmatrix*} \,.
\end{equation}
Of course this is also an element of order 3. Taking the six internal coordinates to be periodic, i.e., described by a $T^6$, the duality matrix acts on this $T^6$ as an element of SL$(6,\mathbb{Z})$. We can then proceed to analyze the singular central fiber by using the same techniques as above, which we do explicitly in Appendix \eqref{subapp:singnongeo}. There, we find a 9-fold cover $\widetilde{T}^6$ of the original torus which decomposes as
\begin{equation}
    \widetilde{T}^6 / (\mathbb{Z}_3)_{(\gamma_3,\Gamma_3^{(2)})} = T^2 \times (T^2/\mathbb{Z}_3) \times (T^2/\mathbb{Z}_3) \,,
\end{equation}
with the usual 2-torus orbifolds $T^2/\mathbb{Z}_3$. This geometry has nine orbifold points that locally are described by $\mathbb{C}^2/\mathbb{Z}_3$. These get exchanged by the identification generated by shifts associated to interior points of $\widetilde{T}^6$, discussed in Appendix \ref{subapp:singnongeo}. A single fundamental domain of the singular fiber is described by
\begin{equation}
    T^6 / (\mathbb{Z}_3)_{(\gamma_3,\Gamma_3^{(2)})} = \big( T^2 \times (T^2/\mathbb{Z}_3) \times (T^2/\mathbb{Z}_3) \big) / (\mathbb{Z}_3^s \times \widetilde{\mathbb{Z}}_3^s) \,,
    \label{eq:nongeoorbZ3}
\end{equation}
with $\mathbb{Z}_3^s$ and $\widetilde{\mathbb{Z}}_3^s$ acting as shifts in the first $T^2$ factor and identifying the orbifold fixed points in the other two, which hence are non-trivially fibered. A similar discussion can be performed for the $\mathbb{Z}_4$ case.

The fact that the structure of the singular fiber involves all of the 6-torus coordinates underlines the non-geometric nature of the configuration which also involves other moduli fields. It would be interesting to apply the section constraint to the singular fiber. This would determine certain singular subspaces of the orbifold \eqref{eq:nongeoorbZ3}, that define the singular internal spacetime in a certain duality frame. The non-trivial behavior of the other auxiliary coordinates on the other would capture the monodromies of the remaining moduli fields, involving the volumes of the physical internal space.

For example, imposing the solution to the section constraint such that only the $Y^{1a}$ behavior is non-trivial, one obtains a two-dimensional subspace of \eqref{eq:nongeoorbZ3}, with a complicated singularity structure and monodromies for the other moduli fields, from which one could read off the non-geometrical realization within type IIB (see Figure \ref{fig:orbisection} and Appendix \ref{subapp:singnongeo}).
\begin{figure}
\centering
\includegraphics[width = 0.6 \textwidth]{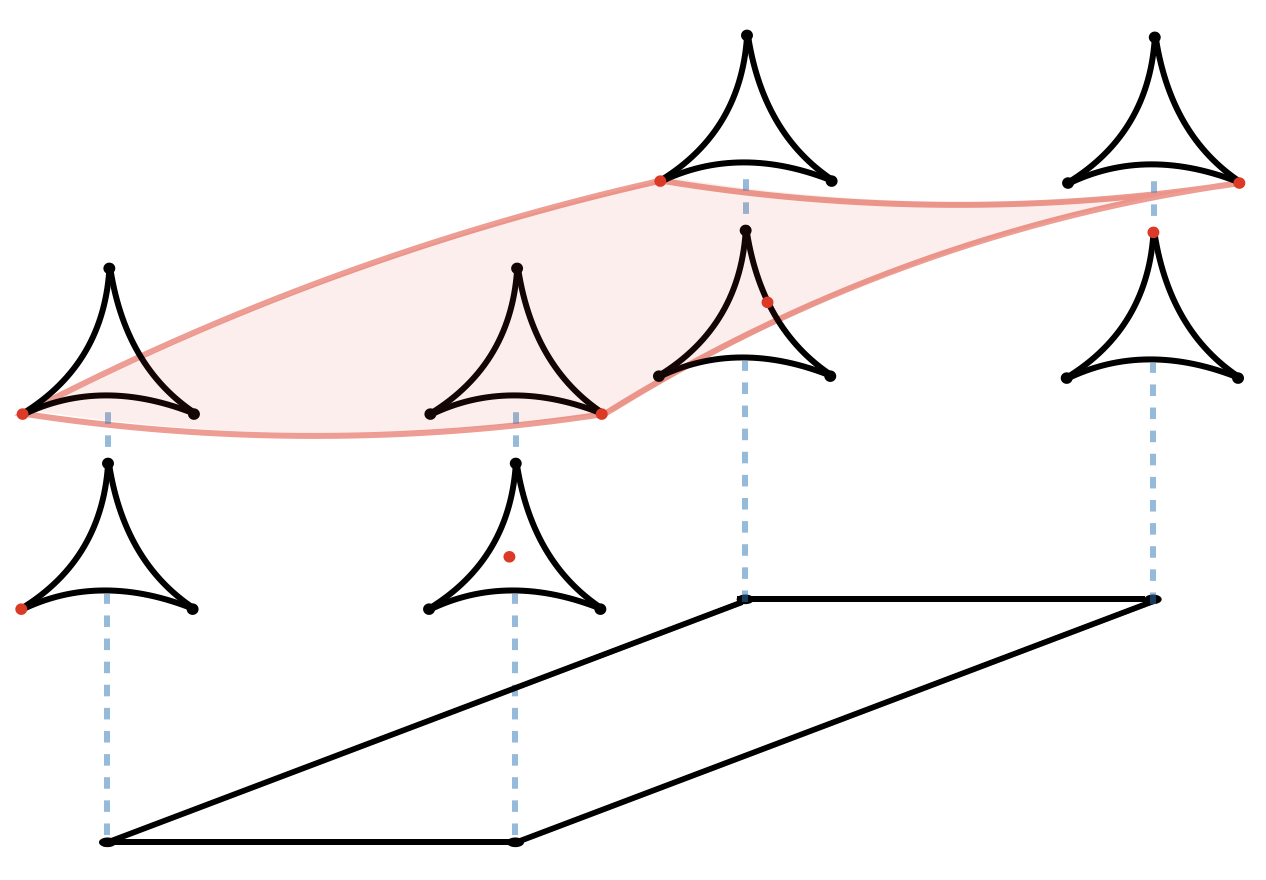}
\caption{Sketch of the solution of the section constraint cutting out a singular subspace of the non-geometric fiber.} 
\label{fig:orbisection}
\end{figure}

With the structure of the singular fibers determined, we can now move to the discussion of the U-duality defects which are located at the point of fiber degeneration.

\section{General U-duality defects in codimension-two}
\label{sec:gencodim2}

Before we discuss the symmetry-breaking defects in maximal supergravity in eight dimensions of any dimensionality we explore the codimension-two defects in spacetime dimension seven and below. This is possible since (as we find in Section \ref{sec:Omega1}), the reduced Spin bordism groups are:
\begin{equation}
\widetilde{\Omega}^{\text{Spin}}_1 (BG^{D}_U) = 0 \,, \quad 3\leq D \leq 7 \,.
\end{equation}
This implies that there are no defects of real codimension-two necessary to break global symmetries induced by non-trivial duality bundles. The reason for this is that the U-duality groups are perfect, which means that their Abelianization $\text{Ab}(G_U^{D})$ is trivial and all elements can be phrased in terms of a product of commutators of the form
\begin{equation}
g = g_1 g_2 g_1^{-1} g_2^{-1} \,,
\end{equation}
for $g \,, g_1 \,, g_2 \in G^{D}_U$. The smooth configuration that bounds the potential generator given by a circle with transition function $g$ is given in analogy to several codimension-two defects in type IIB as discussed in Section 7.1 of \cite{Heckman}, see also the discussion in \cite{McNamara:2021cuo}. The bounding manifold needs to have a non-trivial topology\footnote{For a discussion of topological properties of smooth bounding manifolds utilizing Morse theory see \cite{Ruiz:2024gzv}.} and in the case at hand is a 2-torus with a disk cut out, see Figure~\ref{fig:codim2_bound}.
\begin{figure}
\centering
\includegraphics[width = 0.8 \textwidth]{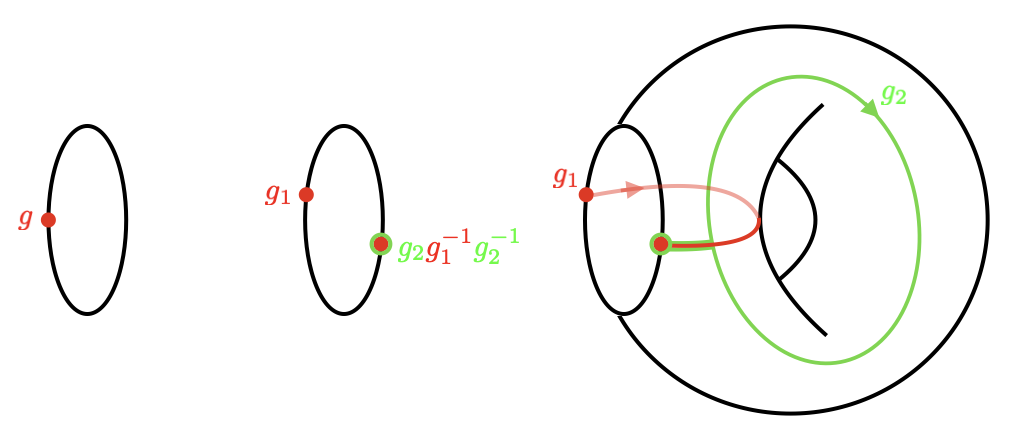}
\caption{Bounding manifold of $S^1$ with transition function given by $g = g_1 g_2 g_1^{-1} g_2^{-1}$.} 
\label{fig:codim2_bound}
\end{figure}

This bounding manifold can in a sense be understood as a `wormhole-like' gravitational soliton \cite{McNamara:2021cuo} which changes the codimension-one topological operator implementing the action of the duality group. These are typically what one would expect from the breaking of a global symmetry in quantum gravity. In particular, the conservation of topological charge violation is already accounted for by allowing for topology changes of the underlying spacetime manifold and does not require extra elementary objects in the theory. The necessity of the non-trivial topology of the gravitational solution further demonstrates the non-perturbative nature of these symmetry-breaking effects (even in gravity).\footnote{A perturbative breaking could, for example, be generated by including terms in the Lagrangian that are suppressed with an inverse power of the Planck mass.} 

Note also that this does not diminish the importance of studying the corresponding codimension-two objects, see, e.g., \cite{deBoer:2010ud, deBoer:2012ma, Lust:2015yia, Bergshoeff:2006jj, Bergshoeff:2011se, Bergshoeff:2013sxa, Achmed-Zade:2018rfc} for a sample of interesting results. In particular, it can very well be that the singular objects might be energetically preferred with respect to the smooth gravitational configurations, for which one has to go beyond the topological properties of the background and instead one has to analyze the dynamics induced by the energy density of the different backgrounds.

Having discussed some general aspects of U-dualities, we now specialize to 8d supergravity with 32 supercharges, where the U-duality group is:
\begin{equation}
G^{8\text{d}}_U \equiv G_U = \text{SL}(2,\mathbb{Z}) \times \text{SL}(3,\mathbb{Z}).
\end{equation}

\section{Duality defects in eight dimensions}
\label{sec:defects}

In this section, we analyze the non-trivial bordism generators in dimension $k \in \{ 1 \,, \dots \,, 7\}$ from a physics viewpoint, and explore what kind of defects we need to introduce in maximal eight-dimensional supergravity in order to describe them as the boundary of a $(k+1)$-dimensional space. The inclusion of the defects as dynamical objects of the theory breaks the associated global symmetry. We then give an interpretation of the necessary configurations within string / F- / M-theory and explore whether we need to include new fundamental objects or allow for more exotic backgrounds. Whenever possible, we explore the duality frame in which the U-duality bundle in $G_U$ has a geometric realization, which simplifies its string theory interpretation. 

\subsection{Codimension-two defects}
\label{subsec:codim2gen}

Besides the generator $S^1_+$ of $\Omega^{\text{Spin}}_1 (\text{pt})$,\footnote{Namely, the circle with periodic boundary conditions for fermions.} which requires the introduction of a codimension-two Spin defect that we will not further explore (see \cite{McNamara:2019rup, Hamada:2025duq}), there are two generators of the reduced bordism group:
\begin{equation}
\widetilde{\Omega}^{\text{Spin}}_1 (B G_U) = \mathbb{Z}_3 \oplus \mathbb{Z}_4 \,.
\end{equation}
They are given by an $S^1$ with the duality bundle specified by the monodromies $\gamma_3$ and $\gamma_4$, i.e., only the SL$(2,\mathbb{Z})$ bundle is non-trivial. This is due to the fact that for $\Omega_1$ only the Abelianization of the gauge group is relevant, and this is trivial for SL$(3,\mathbb{Z})$ since it is a perfect group (as are the higher U-duality groups $G_U^{D}$, see Section \ref{sec:gencodim2}). Note that one can pick any spin structure on the $S^1$ since one can relate the two choices by a disconnected sum with $S^1_+$.

In the type IIB lift this translates into a non-trivial fibration of the compactification torus over the $S^1$. Filling the circle induces a central fiber of the type discussed in Section \ref{subsec:SL2central}, i.e., the fiber torus becomes singular and degenerates to the torus orbifolds $T^2/\mathbb{Z}_3$ and $T^2/\mathbb{Z}_4$ for $\gamma_3$ and $\gamma_4$, respectively. Accordingly, the complex structure of the compactification torus at the singular fiber is fixed to $\tau = e^{2 \pi i/3}$ and $\tau = i$. While this configuration might be reminiscent of the  F-theory geometry with type IV$^*$ and type III$^*$ fiber, (see \cite{Heckman}), here, the torus is part of spacetime and in particular has a finite volume\footnote{In F-theory the fiber volume is unphysical.} The geometry is conveniently described by
\begin{equation}
\big(T^2 \times \mathbb{C} \big) / \mathbb{Z}_k \,, \quad \text{with} \enspace k \in \{3 \,,4 \} \,,
\label{eq:codim2geom}
\end{equation}
which has the chosen generators of the bordism group as boundary (see Figure \ref{fig:codim2}).
\begin{figure}
\centering
\includegraphics[width = 0.35 \textwidth]{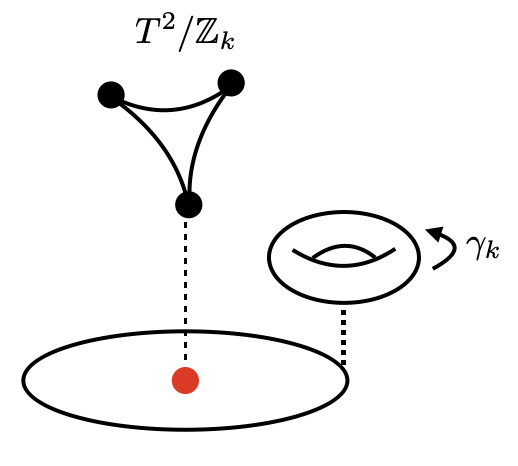}
\caption{Type IIB realization of the U-duality defects in codimension-two ($k \in \{ 3,4\}$).}
\label{fig:codim2}
\end{figure}

Given the singularity structure of the central fiber (see Section \ref{subsec:SL2central}), we find that the geometry \eqref{eq:codim2geom} has several local singularities of the form
\begin{equation}
\mathbb{C}^2 / \mathbb{Z}_n \,,
\end{equation}
where $n$ is inherited from the type of orbifold singularities in the fiber. Since we can describe the total space as a patch of an elliptically-fibered, singular K3, the defects preserve part of the supersymmetry.

\vspace{0.2cm}

The $\mathbb{Z}_3$ case leads to three singularities of the form $\mathbb{C}^2 / \mathbb{Z}_3$, i.e., three A$_2$ singularities. In general an A$_n$ singularity is given by $\mathbb{C}^2 / \mathbb{Z}_{n+1}$ with the action on the local coordinates $(z_1 \,, z_2)$ given by
\begin{equation}
(z_1 \,, z_2) \mapsto (\omega z_1 \,, \omega^{-1} z_2) \,, \quad \omega = e^{2 \pi i \frac{1}{n+1}} \,, \quad \begin{pmatrix} \omega & 0 \\ 0 & \omega^{-1} \end{pmatrix} \in \mathbb{Z}_{n+1} \subset \text{SU}(2) \,.
\end{equation}
Type IIB string theory compactified on such a singular space gives rise to an $\mathcal{N} = (2,0)$ superconformal field theory in six-dimensions \cite{Witten:1995zh, Heckman:2018jxk}. In the present defect they are assembled at the three orbifold fixed points of $T^2/\mathbb{Z}_3$ and lead to a codimension-two defect in the eight-dimensional supergravity.

\vspace{0.2cm}

For $\mathbb{Z}_4$ one finds two singularities of type A$_3$ and one of type A$_1$, giving rise to the associated $\mathcal{N} = (2,0)$ SCFT sectors in the type IIB description. These are located on the orbifold fixed points of $T^2 / \mathbb{Z}_4$. In the 8d supergravity these are located on the same codimension-two defect.

The fact that part of the string theory background is compact, given by $T^2 / \mathbb{Z}_k$, has interesting consequences. In particular it is known that the $\mathcal{N} = (2,0)$ theories possess states charged under 2-form global symmetries corresponding to D3-brane wrapped on relative 2-cycles, see, e.g., \cite{DelZotto:2015isa, Apruzzi:2020zot, Bhardwaj:2020phs, Braun:2021sex}. In our setup the global symmetries of the individual sectors are correlated due to D3-branes stretching between the individual A$_n$ singularities, which should be regarded as excitations on the defect. These states break the 2-form global symmetries at each of the orbifold points to a subgroup, which is $\mathbb{Z}_3$ for the $\gamma_3$ defect and $\mathbb{Z}_2$ for the $\gamma_4$ defect. These remnant symmetries are related to the torsion cycles of the asymptotic geometry $(T^2 \times \mathbb{C}) /\mathbb{Z}_k$, see, e.g., \cite{Cvetic:2021sxm, Cvetic:2022uuu, Baume:2023kkf, Cvetic:2023pgm, Gould:2023wgl}.

We see that the UV lift of the eight-dimensional configuration leads to an interpretation of the symmetry-breaking defect in terms of a well-known string theory background, namely 6d $\mathcal{N} = (2,0)$ SCFT sectors. It further points toward a non-trivial world-volume theory of the defect which, here is related to the string states coming from wrapped D3-branes. This phenomenon of the singular supergravity defects lifting to known string backgrounds seems to be a general feature of the investigation of symmetry-breaking defects which we shall see repeatedly.

\subsection{Codimension-three defects}
\label{subsec:codim3}

The codimension-three defects break the global symmetries associated to 
\begin{equation}
\widetilde{\Omega}^{\text{Spin}}_2 (B G_U) = \mathbb{Z}_2^{\oplus 3} \,,
\end{equation}
where once more we excluded the generator of Spin bordism $S^1_+ \times S^1_+$, which is trivialized by the spin defect in codimension-two, which is unrelated to the existence of a non-trivial duality bundle.

\vspace{0.2cm}

One of the generators can be chosen as a product of $S^1_+$ and the generator in dimension one with $\gamma_4$ monodromy above. Since we already trivialized this, no more defects are necessary for this factor. The resulting string theory background is simply given by the supersymmetry preserving circle compactification of the partially-compactified combination of SCFT sectors discussed in Section \ref{subsec:codim2gen}.

\vspace{0.2cm}

The two remaining generators are given by a 2-torus $T^2$ with non-trivial monodromies around the two 1-cycles. These monodromies sit in two different embeddings of the form (see Section \ref{subsec:stablesplit})
\begin{equation}
\mathbb{Z}_2 \times \mathbb{Z}_2 \rightarrow S_4 \rightarrow \text{SL}(3,\mathbb{Z}) \,.
\label{eq:Membedding}
\end{equation}
We can choose the two $\mathbb{Z}_2$ factors for the different $S_4$ embeddings in the same way, given by
\begin{align}
\begin{split}
M_1^{(i)} =& \big( \Gamma_4^{(i)} \big)^2 R^{(i)}  \big( \Gamma_4^{(i)} \big)^2 R^{(i)} \,, \\
M_2^{(i)} =& \big( \Gamma_4^{(i)} \big)^2 \,,
\end{split}
\label{eq:Monodromies}
\end{align}
with $\Gamma_4^{(i)}$ in \eqref{eq:Gamma4} and
\begin{equation}
R^{(1)} = \begin{pmatrix*}[r] 0 & -1 & 0 \\ -1 & 0 & 0 \\ 0 & 0 & -1 \end{pmatrix*} \,, \quad R^{(2)} = \begin{pmatrix*}[r] -1 & 0 & 0 \\ 0 & -1 & -1 \\ 0 & 0 & 1 \end{pmatrix*} \,.
\end{equation}
Indeed the two monodromy elements correspond to the even permutations of $S_4$, namely
\begin{equation}
M_1^{(i)} = (1\ 2) (3\ 4) \,, \quad M_2^{(i)} = (1\ 3) (2\ 4) \,,
\end{equation}
where we identified the order four element $\Gamma^{(i)}_4$ with the cyclic permutation $(1\ 2\ 3\ 4)$ and the element $R^{(i)}$ with $(1\ 2)$.

\begin{figure}
\centering
\includegraphics[width = 0.25 \textwidth]{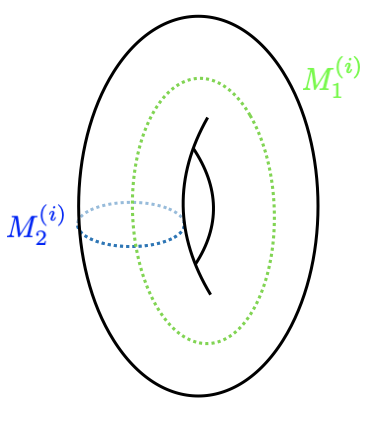}
\caption{Two-dimensional generator given by $T^2$ with monodromies $M_1^{(i)}$ and $M_2^{(i)}$.} 
\label{fig:gendim2}
\end{figure}
Since the topology of the bordism manifold is given by a $T^2$, with two distinct monodromies around the two 1-cycles (see Figure \ref{fig:gendim2}) it is natural to describe it as a boundary by filling one of the circles with the inclusion of a defect. However, this defect is a codimension-two defect rather than codimension-three, and it raises the immediate question why we have not seen them in our discussion of $\Omega^{\text{Spin}}_1 (BG_U)$ (see also Section \ref{subsec:dimdef}). Indeed the monodromies $M^{(i)}$ are part of the commutator subgroup and hence are trivialized by gravitational solitons of the form depicted in Figure \ref{fig:codim2_bound}. So why can we not bound one of the circles in $T^2$ using this smooth configuration? 

The reason the singular defect works but the smooth gravitational soliton described in Figure \ref{fig:codim2_bound} does not is as follows. Filling one of the torus cycles, let us choose the one with transition function $M_1^{(i)}$. One then has an action of $M_2^{(i)}$ implemented when going around the non-trivial 1-cycle of the solid torus. Since $M_1^{(i)}$ and $M_2^{(i)}$ commute, this does not affect the singular codimension-two object implementing the $M_1^{(i)}$ monodromy, see the right-hand side of Figure \ref{fig:boundingdim2}.
\begin{figure}
\centering
\includegraphics[width = 0.6 \textwidth]{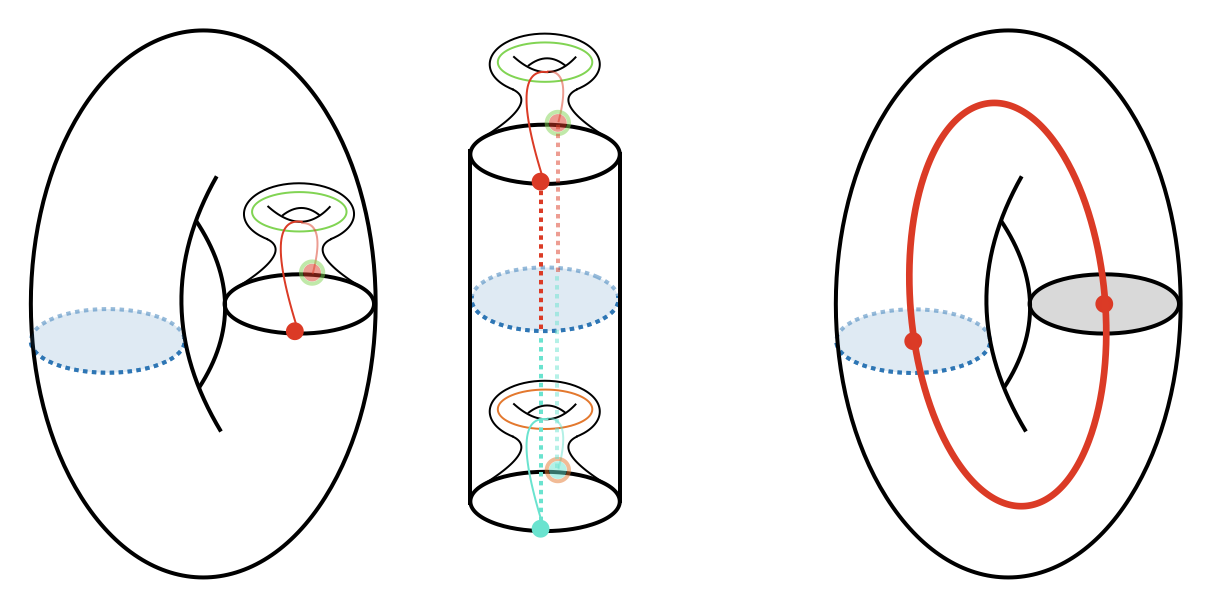}
\caption{Bounding manifold of the generators in dimension-two using a gravitational soliton (left), does not work since the remaining monodromy acts non-trivially on it (middle). Bounding one of the torus cycles with a singular object (right) does work, since the remaining transition function leaves it invariant.} 
\label{fig:boundingdim2}
\end{figure}
For the gravitational soliton, however, one has to express $M_1^{(i)}$ as a (product of) commutators of two SL$(3,\mathbb{Z})$ elements. In order to generate a valid, smooth configuration both these elements have to commute with $M_2^{(i)}$, which we argue cannot be the case.

As a first hint we find that $M_1^{(i)}$ already has the form of an element in the commutator subgroup of $S_4$
\begin{equation}
M_1^{(i)} = g_1 g_2 g_1^{-1} g_2^{-1} \,, \quad \text{with} \enspace g_1 = \big( \Gamma^{(i)}_4 \big)^2 \,, \enspace  g_2 = R^{(i)} \,.
\label{eq:comdecomp}
\end{equation}
While $g_1$ commutes with $M_2^{(i)}$, $g_2$ does not and thus this is not a valid choice for a smooth background of a gravitational soliton. Next, we need to show that this is the case for all possible decompositions of $M_1^{(i)}$ into a product of commutators involving $g_i \in \text{SL}(3,\mathbb{Z})$.

For the particular realizations of monodromies indicated in \eqref{eq:Monodromies} above we find
\begin{equation}
M_1^{(1)} = \begin{pmatrix*}[r] -1 & 0 & 0 \\ 0 & -1 & 0 \\ 0 & 0 & 1 \end{pmatrix*} \,, \quad M_2^{(1)} = \begin{pmatrix*}[r] 1 & 0 & 0 \\ 0 & -1 & 0 \\ 0 & 0 & -1 \end{pmatrix*} \,.
\end{equation}
Commutation with $M_2^{(1)}$, i.e., $g_i \, M_2^{(1)} = M_2^{(1)} \, g_i$, implies that the elements $g_i \in \text{SL}(3,\mathbb{Z})$ used to describe $M_1^{(1)}$ in \eqref{eq:comdecomp} are of the form
\begin{equation}
g_i = \begin{pmatrix} a_i & 0 & 0 \\ 0 & b_i & c_i \\ 0 & d_i & e_i \end{pmatrix} \,.
\label{eq:commutatorform}
\end{equation}
Since $g_i \in \text{SL}(3,\mathbb{Z})$ one has $a_i = \pm 1$, since the determinant is $\text{det}(g_i)= a_i (b_ie_i-d_ic_i) = 1$. The inverse of such a matrix is also of block-diagonal form with the same $a_i$ in the upper left. Consequently, every commutator of such block-diagonal matrices has a $1$ in the upper left
\begin{equation}
    g_1 g_2 g_1^{-1} g_2^{-1} = \begin{pmatrix} a_1^2 a_2^2 & 0 & 0 \\ 0 & * & * \\ 0 & * & * \end{pmatrix} \,,
\end{equation}
and therefore cannot give rise to $M_1^{(1)}$. The same argument can be applied for the product of commutators, all of elements of the form \eqref{eq:commutatorform} in order to commute with $M_2^{(1)}$. For the other embedding of $S_4$ the monodromies read;
\begin{equation}
M_1^{(2)} = \begin{pmatrix*}[r] -1 & 0 & 0 \\ 1 & 1 & 1 \\ 0 & 0 & -1 \end{pmatrix*} \,, \quad M_2^{(2)} = \begin{pmatrix*}[r] 0 & 0 & 1 \\ -1 & -1 & -1 \\ 1 & 0 & 0\end{pmatrix*} \,,
\end{equation}
and commutation with $M_2^{(2)}$ fixes all but four elements in $\tilde{g}_i \in \text{SL}(3,\mathbb{Z})$. This further suggests that the commutator of two such elements does not allow for an expression of $M_1^{(2)}$, since one needs to solve nine equations with eight variables. One can check numerically that this is indeed the case. 

Thus, we see that the singular defect given by additional codimension-two objects cannot be resolved into smooth gravitational configurations, which we discuss below. It would be interesting to explore more generally when such additional defects of lower codimension have to be added for different gauge groups. One path might be to analyze the various characteristic classes of subgroups of the full duality group and their fate under embeddings of the form \eqref{eq:Membedding}.

\vspace{0.2cm}

One might also ask whether there is a genuine codimension-three objects that can trivialize the elements above. For that it would be useful to have a bordism between $T^2$ with the two monodromies and a quotient of $S^2$ which can naturally be identified as enclosing a codimension-three defect (see, e.g., Figure \ref{fig:dif_codim}). This seems hard to achieve since the two non-trivial monodromies require that one takes a free quotient of the sphere with two factors, but the only finite group with free action on $S^2$ is $\mathbb{Z}_2$, via the anti-podal map, which is not sufficient.

Thus, surprisingly our investigation of non-trivial generators of $\Omega^{\text{Spin}}_2(BG_U)$ suggests the existence new objects in codimension-two. Since the monodromy matrices are elements of SL$(3,\mathbb{Z})$ they have a geometric interpretation in the M-theory description. 

For monodromy $M_1^{(1)}$ the M-theory geometry is described by the local elliptically-fibered K3 times a circle, preserving half of supersymmetry,
\begin{equation}
(T_F^2 \times \mathbb{C}) / (\mathbb{Z}_2)_{M_1^{(1)}} \times S^1 \,,
\end{equation}
with the central fiber giving $T^2/\mathbb{Z}_2$ with four local $\mathbb{C}^2/\mathbb{Z}_2$, i.e., A$_1$ singularities. M-theory on such a background gives rise to seven-dimensional super-Yang-Mills theory with gauge algebra $\mathfrak{su}(2)$, which suggests that the defect hosts a six-dimensional gauge theory with gauge algebra $\mathfrak{su}(2)^{\oplus 4}$. Since all of them appear on the same central fiber the generalized symmetries are modified by the presence of M2-branes states stretching from one to another orbifold singularity (see the discussion in \cite{Cvetic:2021sxm, Cvetic:2022uuu, Baume:2023kkf, Gould:2023wgl}, and in particular \cite{Cvetic:2023pgm}, making use of the Mayer-Vietoris sequence). The resulting defect is compactified on the second $S^1$ with monodromy $M_2^{(1)}$.

Alternatively, one can fill the circle with $M_2^{(1)}$ monodromy which also has an F-theory interpretation, where the fiber volume is set to zero. The corresponding defect is the torus compactification of an $\mathfrak{so}(8)$ brane stack (consisting of four D7-branes on top of an O7$^-$ orientifold plane). This type IIB object is then further compactified on a circle with $M_1^{(1)}$ monodromy, which involves the third torus direction and therefore will act as a non-geometric twist. This is potentially dangerous since the SL$(2,\mathbb{Z})_S$ subgroup of SL$(3,\mathbb{Z})$ acts on the brane charges.  For this to be well-defined it is necessary that the monodromies $M_1^{(1)}$ and $M_2^{(1)}$ commute which is indeed the case. This F-theory configuration can be understood as the limit of vanishing fiber volume of the M-theory background above, see Section \ref{subsec:MF}. In this limit the shrinking of the central fiber leads to a gauge enhancement to $\mathfrak{so}(8)$.

For monodromy $M_2^{(2)}$ (an equivalent discussion holds for $M_1^{(2)}$) all three directions are involved and we describe the topology of the singular fiber in Appendix \ref{app:cryssingfiber}. This shows that the central fiber in this case is given by 
\begin{equation}
T^3/(\mathbb{Z}_2)_{M_1^{(2)}} \simeq \big( T^2 / \mathbb{Z}_2 \times S^1 \big) / \mathbb{Z}_2^s \,,
\end{equation} 
where the additional quotient by $\mathbb{Z}_2^s$ encodes the a fibration structure of $T^2/\mathbb{Z}_2$ that, going around the base circle, exchanges two pairs of fixed points. This means that the resulting worldvolume theory in the M-theory lift is given by a twisted circle compactification of the $\mathfrak{su}(2)^{\oplus 4}$ theory, where the twist relates two pairs of $\mathfrak{su}(2)$ sectors, leaving a linear combination of two $\mathfrak{su}(2)$ gauge fields massless. Again, this configuration can be smoothed by resolving the orbifold singularities, but now the volume of some of the resolution curves are related via the $\mathbb{Z}_2^s$ action. This is another instance of the effect discussed in Section \ref{subsec:deform}, namely that a full resolution is possible in M-theory but not in the lower-dimensional supergravity. Indeed in a discussion of defects in M-theory, the backgrounds above do not correspond to non-trivial bordism generators after blowing up the fiber singularities.

\subsection{Codimension-four defects}
\label{subsec:codim4}

The relevant bordism group for codimension-four defects is given by
\begin{equation}
\Omega^{\text{Spin}}_3 (BG_U) = \widetilde{\Omega}^{\text{Spin}}_3 (BG_U) = \mathbb{Z}_3^{\oplus 3} \oplus \mathbb{Z}_2^{\oplus 3} \oplus \mathbb{Z}_8^{\oplus 3} \,.
\end{equation}
The $\mathbb{Z}_3$ and $\mathbb{Z}_8$ summands are generated by lens spaces, which we can interpret as the asymptotic boundary of $\mathbb{C}^2 / \mathbb{Z}_k$, and carry a non-trivial duality bundle, fixed by the monodromy around the torsion 1-cycle of the lens space.

\vspace{0.2cm}

For the $\mathbb{Z}_3$ summands the three different monodromies on $L^3_3 = \partial \big(\mathbb{C}^2 / \mathbb{Z}_3 \big)$ are given by $\gamma_3$, $\Gamma_3^{(1)}$, and $\Gamma_3^{(2)}$. All of these backgrounds admit a geometrical interpretation within string theory or M-theory and so do the necessary symmetry-breaking defects.

For $\gamma_3$ we can use the type IIB description and find that the defect is described by the singular background
\begin{equation}
\big( T^2 \times \mathbb{C}^2 \big) / (\mathbb{Z}_3)_{\gamma_3} \,,
\end{equation}
which is of the form of a local Calabi-Yau 3-fold, and hence preserves one quarter of the supersymmetry. At the location of the defect the $T^2$ degenerates and we find three singularities of the form $\mathbb{C}^3 / \mathbb{Z}_3$. This space allows for a crepant resolution by blowing up a $\mathbb{CP}^2$ at the singular points, see, e.g., \cite{Witten:1996qb,lho2019crepant}. The defect can therefore be understood as a partial compactification of three such sectors in type IIB string theory.\footnote{It might be tempting to try to describe this defect as a compactification of the codimension-two object related to $S^1$ with $\gamma_3$ monodromy. For that one would resolve the singular $\mathbb{C}^2/\mathbb{Z}_3$ to the total space of $\mathcal{O}(-3)$, a complex line bundle over $\mathbb{CP}^1$, and wrap the codimension-two object around the $\mathbb{CP}^1$. However, this curve has self-intersection $(-3)$ and therefore, at least in the presence of supersymmetry, demands a non-trivial axio-dilaton profile dictated by the discussion of non-Higgsable clusters in \cite{Morrison:2012np}. This would change the duality bundle and the generator, and we are therefore guided towards this new genuinely codimension-four object.}

For $\Gamma_3^{(1)}$ the interpretation is also geometrical in non-perturbative type IIB, i.e., F-theory, since the SL$(3,\mathbb{Z})$ monodromy only acts on a 2-torus which can be identified with the F-theory fiber $T^2_F$. This leads to a F-theory background given by
\begin{equation}
\big( T^2_F \times \mathbb{C}^2 \big) / (\mathbb{Z}_3)_{\Gamma_3^{(1)}} \times T^2 \,.
\end{equation}
The first factor now precisely has the geometry of a non-Higgsable cluster,  \cite{Morrison:2012np}. It can be resolved into a curve with self-intersection $(-3)$ over which the fiber develops a type IV singularity, indicating the presence of 7-branes. This is also one of the defects found in the pure type IIB discussion in \cite{Heckman}. Since the additional spacetime $T^2$ is unaffected, the defect is given by the torus compactification of the $(-3)$ non-Higgsable cluster. This also confirms the expectation that at least some of the bordism defects of the parent theory (here: type IIB in ten dimensions) persist after compactification. The resulting 4d theory has a strongly coupled point in the moduli space where a 4d SCFT emerges. For the isolated $(-3)$ non-Higgsable cluster this theory is given by an Argyres-Douglas theory of the form $D_3\big( \text{SU}(2) \big)$, see \cite{DelZotto:2015rca, Carta:2023bqn}.\footnote{We thank C. Lawrie for explaining which 4d SCFT emerges in this case.} Here one might be tempted to use the topological soliton in Section~\ref{sec:gencodim2} to fill the $S^1$ fiber of the lens space description as circle bundle over $\mathbb{CP}^1 \simeq S^2$. However, transition functions of the circle bundle are incompatible with the duality bundle on the smooth gravitational soliton, similar to the discussion in Section \ref{subsec:codim3}. 

The most interesting defect for the last $\mathbb{Z}_3$ summand is given by $L^3_3$ with monodromy  $\Gamma^{(2)}_3$, which requires a discussion within M-theory for a geometric interpretation. Once more one finds that the defect contains singularities of the form $\mathbb{C}^3/\mathbb{Z}_3$, which combine the 8d geometry of the supergravity theory with two of the three fiber coordinates in the M-theory lift. As mentioned above these can be resolved by blowing up a $\mathbb{CP}^2$ at the singular point. This signals the appearance of a non-trivial interacting SCFT in M-theory.\footnote{This is the celebrated $E_0$ 5d SCFT of \cite{Seiberg:1996bd} (see also \cite{Morrison:1996xf, Douglas:1996xp, Intriligator:1997pq}. In the blownup phase, M5-branes wrapped on the $\mathbb{CP}^2$ produce effective strings. Near the singular point, the tension of these effective strings drops to zero, signalling the presence of a conformal fixed point.} As discussed in Section \ref{sec:monodrom}, there locally are three such singularities that get exchanged when traversing the invariant fiber circle of the M-theory $T^3$. We see that the duality defect, with four-dimensional worldvolume, in the M-theory description has interesting dynamics which consists of a twisted circle compactification of three 5d SCFT sectors. From the central fiber geometry in \eqref{eq:twistedcircleZ3}, we find the defect geometry
\begin{equation}
(T^3 \times \mathbb{C}^2) / (\mathbb{Z}_3)_{\Gamma_3^{(2)}} \simeq \big( (T^2 \times \mathbb{C}^2) / \mathbb{Z}_3 \times S^1\big) / \mathbb{Z}_3^s \,,
\end{equation}
where the action of $\mathbb{Z}_3^s$ exchanges the three local SCFTs sectors when going around the $S^1$. In the resolved geometry, this twist exchanges the three $\mathbb{CP}^2$ resolutions of the blow-up, acting accordingly on the string states, which only leaves a linear combination invariant.

\vspace{0.2cm}

The generators of the $\mathbb{Z}_8$ summands are very similar to the ones above with $\gamma_3$ replaced by $\gamma_4$ and $\Gamma_3^{(i)}$ by $\Gamma_4^{(i)}$, respectively.

The first $\mathbb{Z}_8$ is given by type IIB on a spacetime of the form
\begin{equation}
(T^2 \times \mathbb{C}^2) / (\mathbb{Z}_4)_{\gamma_4} \,.
\label{eq:nonSUSYZ4}
\end{equation}
As opposed to the discussion of the $\mathbb{Z}_3$ quotients, however, there is no supersymmetric way of orbifolding when the full $\mathbb{Z}_4$ acts on the torus fiber. This can be seen by considering the transformation of the local $(3,0)$-form composed out of the $\mathbb{C}^2$ coordinates $(z_1,z_2)$ and the $T^2$ coordinate $y$
\begin{equation}
\Omega = dy \wedge dz_1 \wedge dz_2 \rightarrow e^{\frac{2 \pi i}{4} (\pm 1 \pm 1 \pm 1)} dy \wedge dz_1 \wedge dz_2 \neq \Omega \,.
\end{equation}
The $\pm 1$ in the exponent corresponds to the allowed choices for the $\mathbb{Z}_4$ action on the individual coordinates. Thus, it seems that the associated defect breaks supersymmetry. However, we believe this to be an artifact of considering the duality group SL$(2,\mathbb{Z})$ instead of its Spin lift Mp$(2,\mathbb{Z})$. In fact for the metaplectic group the $\mathbb{Z}_4$ element has a geometrical action of
\begin{equation}
y \rightarrow - y \,,
\end{equation}
on the torus direction and there is a perfectly supersymmetric orbifold of the type \eqref{eq:nonSUSYZ4}. This can also be seen in the results of \cite{Heckman}, where the defects were supersymmetric, see also the discussion in Section \ref{subsec:liftsusy}. Again, the same caveats of trying to relate the above backgrounds to a compactification of the $S^1_{\gamma_4}$ defect apply.

Similarly, the second $\mathbb{Z}_8$ summand is given by
\begin{equation}
\big( T^2_F \times \mathbb{C}^2 \big) / (\mathbb{Z}_4)_{\Gamma_4^{(1)}} \times T^2 \,, 
\end{equation}
with the same issues concerning supersymmetry as above. Since, comparing to the analysis for $\mathbb{Z}_3$, we expect this defect to be the $T^2$ compactification of a $(-4)$ non-Higgsable cluster theory this suggests again that the non-supersymmetric nature seems indeed to be an artifact of working with the bosonic duality group. For the supersymmetric version, where $\mathbb{Z}_4$ acts on $T^2_F$ by multiplication with $-1$, there is again a 4d $\mathcal{N} = 2$ SCFT at certain points of the moduli space, given by SU$(2)$ SQCD with four flavors \cite{DelZotto:2015rca, Carta:2023bqn}.

The last $\mathbb{Z}_8$ summand needs an M-theoretic description to be geometric. With the description of the central fiber as given in Appendix \ref{app:cryssingfiber}, we find a geometry of the type
\begin{equation}
\big( (T^2 \times \mathbb{C}^2) / \mathbb{Z}_4 \times S^1 \big) / \mathbb{Z}_2^s \,,
\end{equation}
where the $\mathbb{Z}_2^s$ action exchanges the two $\mathbb{C}^3 / \mathbb{Z}_4$ orbifold points. Again, we find that the quotient is not compatible with supersymmetry, which should be resolved once one takes into account the spin-cover of the duality group. Due to the singularities the associated defect likely carries some SCFT sectors, some of which are exchanged under the $\mathbb{Z}_2^s$ action.

\vspace{0.2cm}

This leaves three $\mathbb{Z}_2$ summands, given by the generators
\begin{equation}
S^1_+ \times S^1_+ \times S^1_{\gamma_4} \,, \quad S^1_{\gamma_4} \times S^1_{M_1^{(1)}} \times S^1_{M_2^{(1)}} \,, \quad S^1_{\gamma_4} \times  S^1_{M_1^{(2)}} \times S^1_{M_2^{(2)}} \,.
\end{equation}
Each of them contains a direct factor of $S^1_{\gamma_4}$ and therefore the associated global symmetry is broken by compactifying the associated codimension-two defect on $S^1_{M_1^{(i)}} \times S^1_{M_2^{(i)}}$. Since the monodromies $M^{(i)}$ involve all coordinates of the $T^3$ these compactifications will in general be non-geometric twists (see Section \ref{subsec:nongeom}), where the moduli fields of the string background will undergo non-trivial monodromies as well. It is also reassuring that the defect we compactify does not contain any type IIB 7-branes which might not be invariant under the monodromies $M^{(i)}$ and therefore might not be allowed in the compactifications. Thus, there are no actually new defects needed for this subset of generators.

\vspace{0.2cm}

We see that in codimension four we obtain genuinely new defects that are associated to singular type IIB geometries but we also find situations in which the lower-codimension defects found above are compactified. Once more this picture will be very common in the higher-codimension cases discussed below. It is also the first time that we are confronted with non-geometry. However, the defects themselves are not non-geometric. Rather, they are compactified on backgrounds with a generally non-geometric action on the type IIB moduli fields.

\subsection{Codimension-five defects}
\label{subsec:codim5}

The $\mathbb{Z}$ summand in the $d = 4$ bordism group is associated to the Spin manifold K3, and we will not discuss it further. Instead we will focus on the reduced bordism group given by
\begin{equation}
\widetilde{\Omega}^{\text{Spin}}_4 (BG_U) = \mathbb{Z}_3^{\oplus 2} \oplus \mathbb{Z}_2^{\oplus 3} \oplus \mathbb{Z}_4^{\oplus 2} \,,
\end{equation}
which relies on the presence of non-trivial duality bundles.

\vspace{0.2cm}

The two $\mathbb{Z}_3$ summands are given by
\begin{equation}
S^1_{\gamma_3} \times L^3_{3,\Gamma_3^{(i)}} \,,
\end{equation}
where the $L^3_3$ carries an SL$(3, \mathbb{Z})$ duality bundle specified by monodromy $\Gamma_3^{(i)}$. We see that these are the direct product of two generators that were already known. Therefore, we can bound it either by compactifying the codimension-two defects on $L^3_3$ or the codimension-four defects on $S^1_{\gamma_3}$. 

As before we can ask whether this further compactification is allowed. Note that the defect associated to $\gamma_3$ is a geometric background in type IIB without the presence of branes. Due to this we expect that the compactification on the three-dimensional lens spaces are possible, with $\Gamma_3^{(1)}$ allowing for an F-theory description with varying axio-dilaton and $\Gamma_3^{(2)}$ leading to a compactification with non-geometric twist, respectively. 

Alternatively, one can compactify the codimension-four defects on $S^1_{\gamma_3}$. For $\Gamma_3^{(1)}$ this is the compactification of the non-Higgsable cluster on a 3-manifold given by the torus fibration over $S^1$, which presumably can be related to twisted compactification of the $D_3\big( \text{SU}(2) \big)$ theory discussed above (at least at scales where the volume of the $T^2$ becomes small). For $\Gamma_3^{(2)}$ it is a circle compactification of the theory discussed above, i.e., a twisted circle reduction of the three copies of 5d SCFTs associated to M-theory on $\mathbb{C}^3/\mathbb{Z}_3$, with a non-geometric twist.\footnote{These twists associated to the $S^1$ compactification are reminiscent of automorphism twists which are often employed to obtain modified SCFTs in lower dimensions, see, e.g., \cite{Tachikawa:2010vg, Zafrir:2016wkk, Bhardwaj:2019fzv}.} It would be interesting to study the effect of this twist on the worldvolume fields at low energy. 

Since the compactifications of the higher-dimensional defects does not seem to lead to inconsistencies, we do not need to include additional objects for these bordism generators.

\vspace{0.2cm}

An analogous story holds for the two $\mathbb{Z}_4$ summands being generated by
\begin{equation}
S^1_{\gamma_4} \times L^3_{4,\Gamma_4^{(i)}} \,,
\end{equation}
with SL$(3,\mathbb{Z})$ bundle on $L^3_4$ determined by $\Gamma_4^{(i)}$. Again we have the choice of compactifying the codimension-two defects on $L^3_4$ or the codimension-four defects on $S^1_{\gamma_4}$. Of course the same caveats concerning supersymmetry as discussed in Section \ref{subsec:codim4} and Section \ref{subsec:liftsusy} apply.

\vspace{0.2cm}

Next, one of the $\mathbb{Z}_2$ summands is generated by $W_4$ which can be described as
\begin{equation}
    W_4 = ( L^3_{4,\Gamma_4^{(1)}} \times S^1)/\mathbb{Z}_2 \,,
\end{equation}
with $\mathbb{Z}_2$ acting as complex conjugation on the $\mathbb{C}^2$ used to describe the $L^3_4$ and the anti-podal map on $S^1$. It therefore can be understood as the lens space fibered over $S^1$. Following the discussion in Section \ref{subsec:twist} one proceeds by filling in the fiber by the inclusion of a codimension-four defect described in \ref{subsec:codim3}, see also the discussion in \cite{Heckman}. The associated defect is naturally understood as a twisted circle compactification of the higher-dimensional defect. Since the complex conjugation flips the sign of two of the normal coordinates of the codimension-four defect it can be understood as a topological twist involving a particular U$(1)$ subgroup of the full R-symmetry group. This twist will likely project out some of the massless localized degrees of freedom, but a detailed discussion goes beyond the scope of our investigation.

\vspace{0.2cm}

The remaining two $\mathbb{Z}_2$ are generated by disconnected sums of $S^1_{\gamma_4} \times L^3_4$ described above and a space, called $A$ in Section \ref{sec:Z4S4}. This $A$ is given by the geometric configuration
\begin{equation}
    S^1 \times \mathbb{RP}^3 \,.
\end{equation}
The duality bundle is specified by the monodromy elements when traversing the two non-trivial 1-cycles, i.e., the $S^1$ and the torsion 1-cycle of $\mathbb{RP}^3$, respectively, 
\begin{equation}
\begin{split}
    \text{monodromy around } S^1:& \quad M_2^{(i)} \in \text{SL}(3,\mathbb{Z}) \,, \\
    \text{monodromy around } \mathbb{RP}^3:& \quad \bigg( \begin{pmatrix*}[r] -1 & 0 \\ 0 & -1 \end{pmatrix*}, M_1^{(i)}\bigg) \in \text{SL}(2,\mathbb{Z}) \times \text{SL}(3,\mathbb{Z}) \,.
\end{split}
\label{eq:genA}
\end{equation}
We will denote the space with $i=2$ by $A'$. Again, we can use the product manifold structure leading to two different interpretations. We can compactify the codimension-two defect that  bounds the $S^1$ factor, which was already introduced in Section \ref{subsec:codim3}, on the $\mathbb{RP}^3$ factor with its non-trivial duality bundle specified by the monodromy indicated in \eqref{eq:genA}. Alternatively, we can define a new codimension-four object bounding $\mathbb{RP}^3$ with its bundle, which did not yet appear from lower-dimensional bordism groups and compactify it on the $S^1$ factor. Since we already argued for the inclusion of a codimension-two defect, the first possibility is more natural and we do not need to include further defects.

\subsection{Codimension-six defects}
\label{subsec:codim6}

The relevant bordism group for codimension-six defect is given by
\begin{equation}
    \widetilde{\Omega}^{\text{Spin}}_5 (BG_U) = \mathbb{Z}_3^{\oplus 2} \oplus \mathbb{Z}_9 \oplus \mathbb{Z}_2^{\oplus2} \oplus \mathbb{Z}_2^{\oplus 2} \oplus \mathbb{Z}_2 \oplus \mathbb{Z}_4 \,,
\end{equation}
where we already paired summands that are related by the different embedding choices.

\vspace{0.2cm}

The $\mathbb{Z}_9$ summand has a geometric interpretation in type IIB string theory and is generated by $L^5_3$ with $\gamma_3$ monodromy. This can be understood as the asymptotic boundary of 
\begin{equation}
    ( T^2 \times \mathbb{C}^3)/ (\mathbb{Z}_3)_{\gamma_3} \,,
\end{equation}
Moreover, by using the $\mathbb{Z}_3$ action of the form $(\omega, \omega^{-1},\omega,\omega^{-1})$ with $\omega = e^{2 \pi i/3}$, we can interpret this as a local patch of a singular elliptically-fibered Calabi-Yau 4-fold,  containing three singularities of the type $\mathbb{C}^4/\mathbb{Z}_3$ at the singular point. The singularities with $\mathbb{Z}_3$ action defined above are Gorenstein, canonical, and terminal, so allow no crepant resolution \cite{a6207765-8e51-37bb-af3c-43a72b254a77}. This lead to a string-like singular object in eight-dimensional supergravity, with dynamics specified by type IIB on $\mathbb{C}^4/\mathbb{Z}_3$. Since it consists of three such singularities one can think of it as a composite object.

\vspace{0.2cm}

The first $\mathbb{Z}_3$ summand is given by the lens space $L^5_3$ with monodromy $\big(\gamma_3,\Gamma_3^{(1)}\big)$ acting as the tensor product $\gamma_3 \otimes \Gamma_3^{(1)}$ and therefore requires a lift to F-theory. It can be interpreted as
\begin{equation}
    (T^2_F \times T^2 \times \mathbb{C}^3) / (\mathbb{Z}_3)_{\gamma_3 \otimes \Gamma_3^{(1)}} \,,
\end{equation}
which again seems to have a supersymmetric realization by choosing the $\mathbb{Z}_3$ action appropriately, e.g., 
\begin{equation}
    \mathbb{Z}_3: \quad (\omega,\omega, \omega,\omega^{-1}, \omega) \,,
\end{equation}
with $\omega$ as above. These local $\mathbb{C}^5 / \mathbb{Z}_3$ singularities cannot be resolved while preserving supersymmetry. These objects can be understood as being composite and consisting of three copies of the S-strings discussed in \cite{Heckman}. The fact that there are three of them, due to the three singularities of the spacetime torus, also gives an intuition why they only generate a $\mathbb{Z}_3$ summand and not a full $\mathbb{Z}_9$ as in \cite{Heckman}.

The second $\mathbb{Z}_3$ summand is the first time that the tensor product of two monodromies appears on the same 1-cycle which cannot be interpreted geometrically in any string or M-theory frame. Thus, this is the first example of a genuinely non-geometric defect, i.e., a defect which cannot be described by the geometry of an internal space of the UV description alone, but involves U-duality twists that involve other moduli. They can be understood as a non-perturbative generalization of T- or U-duality defect (see \cite{deBoer:2010ud, deBoer:2012ma, Lust:2015yia}) in higher codimension. 

Let us analyze the geometry a little bit more in the M-theory frame, where at least the $T^3$ factor has a geometric interpretation. Focusing on the shape moduli of the $T^3$ we find the M-theory background
\begin{equation}
    (T^3 \times \mathbb{C}^3)/(\mathbb{Z}_3)_{\Gamma_3^{(2)}} \simeq \big((T^2 \times \mathbb{C}^3)/\mathbb{Z}_3 \times S^1 \big)/ \mathbb{Z}_3^s \,,
\end{equation}
where the singular orbifold points undergo monodromy when going around the $S^1$ as described in Appendix  \ref{app:cryssingfiber}. However, at the same time one has the $\gamma_3$ monodromy which is non-geometric and instead acts, for example, on the volume of the $T^3$ and the vacuum expectation value of $C_3$ integrated over the torus, see Section \ref{subsec:UM}. This implies that the torus fiber is glued to itself potentially at a different volume and value of $C_3$ (see Figure \ref{fig:Ufect}).
\begin{figure}
\centering
\includegraphics[width = 0.35 \textwidth]{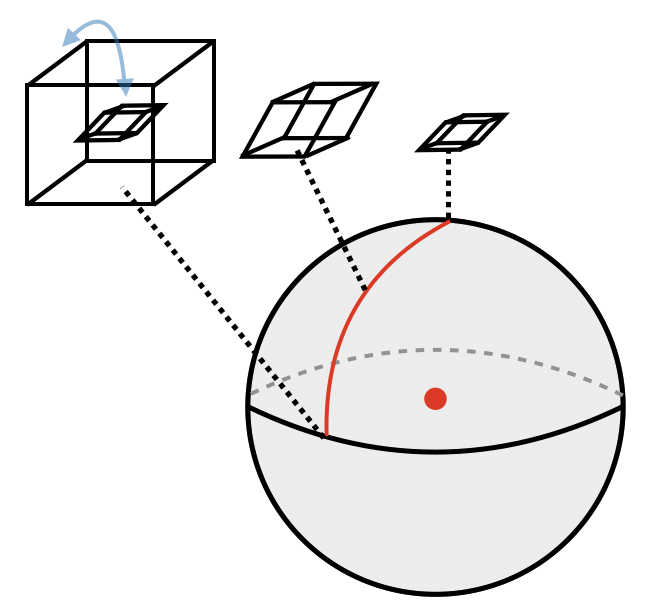}
\caption{Schematic depiction of a non-geometric U-duality defect in M-theory as the central singularity of a generator, where not only the shape of the internal 3-torus varies, but also its volume (and 3-form).} 
\label{fig:Ufect}
\end{figure}
In fact, since we know that $\gamma_3$ acts as Moebius transformations on $\omega_{\text{M}}$ in \eqref{eq:MKahler} we find that
\begin{equation}
    \omega_{\text{M}} \mapsto - \frac{\omega_{\text{M}} +1}{\omega_{\text{M}}} \,,
\end{equation}
which has a fixed point at $\omega_{\text{M}} = e^{2 \pi i/3}$. This suggests that the volume of the central singularity is necessarily of Planck size and the integral of $C_3$ is non-zero. The fact that the geometry involves cycles of the size of the Planck length suggests that quantum gravity corrections become important and modify the classical background described above. Beyond the action on the geometry, there will be further monodromies on the other fields in the eight-dimensional supergravity. The full duality action should also be captured by the singular exceptional geometry fiber, discussed in Section \ref{subsec:nongeomsingfiber}. We will leave a more detailed discussion of these non-geometric defects, their properties, and worldvolume theory for future work. It would also be interesting to study more general flux backgrounds on these non-geometrical string objects and explore what other strings one can find in their duality orbit (see also \cite{Ferrara:1997uz, Ferrara:1997ci, Lu:1997bg, Kleinschmidt:2011vu, Bergshoeff:2012ex}).

\vspace{0.2cm}

Two of the $\mathbb{Z}_2$ summands are generated by $S^1_{\gamma_4} \times A$ and $S^1_{\gamma_4} \times A'$, with $A$ and $A'$, being $S^1 \times \mathbb{RP}^3$ with duality bundle inside the two different embeddings of $\mathbb{Z}_4 \times S_4$ in $G_U$, see Section \ref{subsec:codim5} above. Using the product structure, we can simply compactify the codimension-two defects, associated to $S^1_{\gamma_4}$, on the space $A$. Since this carries duality bundles in both factors of $G_U$ it is a compactification with non-geometric twist, but it does not require the introduction of further elementary objects.

\vspace{0.2cm}

We can proceed along the same lines for another $\mathbb{Z}_2$ summand generated by $S^1_{\gamma_4} \times W_4$, whose defect is give by the codimension-two defect wrapped on $W_4$. With $W_4$ itself a fibration, this leads to a double twisted configuration with both topological and non-geometric duality twists. 

\vspace{0.2cm}

The $\mathbb{Z}_4$ summand is given by a five-dimensional Spin manifold $Q^5_4$, which is described as a $L^3_4$ lens space bundle fibered over $\mathbb{CP}^1$. As discussed in \cite{Heckman}, this can be understood as a compactification of the defect associated to the fiber $L^3_4$ on $\mathbb{CP}^1$ with a topological twist associated to the non-trivial fibration. The duality bundle on $L^3_4$ is provided by $\gamma_4$ and so one has the twisted compactification of the codimension-four defect. Since the codimension-four defect might not lift to the correct defect after the Spin lift of the duality group, the same is true for this configuration.\footnote{Indeed, from the discussion in \cite{Heckman}, one might expect that after the Spin lift the generator is given by $L^5_4$ which itself is not Spin, but it lifts to a Spin manifold when including the type IIB compactification torus, i.e., it is Spin-Mp$(2,\mathbb{Z})$. If this is the case the defect would be type IIB on a local singular Calabi-Yau 4-fold of the form $(T^2 \times \mathbb{C}^3)/\mathbb{Z}_4$ with constant axio-dilaton.}

\vspace{0.2cm}

The remaining two $\mathbb{Z}_2$ summands have the same spacetime geometry $Q^5_4$ given by the lens space $L^3_4$ fibered over $\mathbb{CP}^1$ as above, but the duality bundle over the $L^3_4$ fiber differs and is given by two different embeddings of $\mathbb{Z}_4$ into $G_U$.

The first embedding corresponds to the tensor product $\gamma_4 \otimes \Gamma_4^{(1)}$; it represents an F-theory geometry given by 
\begin{equation}
    (T^2_F \times T^2 \times \mathbb{C}^2)/ (\mathbb{Z}_4)_{\gamma_4 \otimes \Gamma_4^{(1)}} \rightarrow \mathbb{CP}^1 \,,
\end{equation}
non-trivially fibered over $\mathbb{CP}^1$. The symmetry-breaking defect is the defect of the fiber, given by a codimension-four object in the eight-dimensional spacetime wrapped on $\mathbb{CP}^1$ with a topological twist. Typically we would expect that such a defect does not survive the Spin lift to the duality group, because it does not lead to a well-defined spacetime structure for the type IIB lift. Here, however, the F-theory duality bundle is non-trivial as well, which can allow type IIB to make sense on non-spin manifolds. Indeed there is a local $\mathbb{Z}_4$ action on $\mathbb{C}^4$, given by $(\omega, \omega, \omega, \omega)$ with $\omega= e^{2\pi i/4}$ which has the potential to survive the lift, leading to a supersymmetric background described by a local singular Calabi-Yau. The corresponding defects then have an interpretation as composite objects built out of the codimension-six objects, S-folds, of \cite{Heckman}, compactified on $\mathbb{CP}^1$ with twist. The fact that a multiplicity of these objects is needed in the U-duality defect also explains why the associated bordism group summands are reduced. This is another situation in which the codimension of the defect does not match the naive expectation from the dimension of the bordism group generator.

The second embedding is given by the tensor product $\gamma_4 \otimes \Gamma_4^{(2)}$, which is non-geometric. Pending the lift to the fermionic bordism group, this suggests that one should include non-geometric defects already in codimension four, with asymptotic geometry $L_4^3$ and monodromy above. The defect for $Q^5_4$ then corresponds to their twisted compactification on $\mathbb{CP}^1$.

Let us briefly discuss the M-theory defect associated to the filling of the fiber, whose geometry is given by
\begin{equation}
    (T^3 \times \mathbb{C}^2)/(\mathbb{Z}_4)_{\gamma_4 \otimes \Gamma_4^{(2)}} \,.
\end{equation}
Since the action of $\Gamma_4^{(2)}$ is reflected by the variation of geometric moduli one obtains
\begin{equation}
    (T^3 \times \mathbb{C}^2)/(\mathbb{Z}_4)_{\Gamma_4^{(2)}} \simeq \big( (T^2 \times \mathbb{C}^2)/\mathbb{Z}_4 \times S^1\big)/\mathbb{Z}_2^s \,,
\end{equation}
combined with the non-geometric monodromy imposed by $\gamma_4$. As for the non-geometric defect for $\mathbb{Z}_3$ above, this $\gamma_4$ acts on $\omega_{\text{M}}$ via Moebius transformations
\begin{equation}
    \omega_{\text{M}} \mapsto - \frac{1}{\omega_{\text{M}}} \,,
\end{equation}
underlining the non-geometric nature of the background. At the defect itself this further suggests that the $\omega_{\text{M}}$ needs to fixed at the symmetric value $i$, and therefore the 3-form is switched off and the volume is fixed to the Planck scale, suggesting the importance of quantum corrections. We hope to come back to a more detailed study of these non-geometric objects including the correct spin lift of the duality group.

\subsection{Codimension-seven defects}
\label{subsec:codim7}

The defects in codimension-seven associated to 
\begin{equation}
    \widetilde{\Omega}^{\text{Spin}}_6 (BG_U) = \mathbb{Z}_3^{\oplus 2} \oplus \mathbb{Z}_2^{\oplus 2} \oplus \mathbb{Z}_2 \oplus \mathbb{Z}_4^{\oplus 2} \,,
\end{equation}
with one exception, given by the single $\mathbb{Z}_2$ summand, are surprisingly simple. The reason is that all of them are described by product manifolds.

\vspace{0.2cm}

The two $\mathbb{Z}_3$ summands are given by $L^3_3 \times L^3_3$ with $\gamma_3$ monodromy on the first and $\Gamma_3^{(i)}$ monodromy on the second factor. This means that it is simply one of the corresponding codimension-four defect described in Section \ref{subsec:codim4} compactified on the other lens space. No additional defects are necessary.

\vspace{0.2cm}

Very similarly, the two $\mathbb{Z}_4$ summands are given by $L^3_4 \times L^3_4$ with $\gamma_4$ monodromy on the first and $\Gamma_4^{(i)}$ monodromy on the second factor. We can bound these spaces by the compactification of the codimension-four defects on the second lens space factor.

\vspace{0.2cm}

Two of the $\mathbb{Z}_2$ summands are generated by $\mathbb{RP}^3 \times \mathbb{RP}^3$ with monodromy $M_1^{(i)}$ and $M_2^{(i)}$ on the individual factors, respectively. Here, one has two possibilities to introduce symmetry breaking defects. Either one introduces new codimension-four objects, described by the singular geometry M-theory geometry
\begin{equation}
(T^3 \times \mathbb{C}^2 )/(\mathbb{Z}_2)_{M^{(i)}_k} \,,
\end{equation}
with the $\mathbb{Z}_2$ acting as multiplication by $(-1)$ on $\mathbb{C}^2$, which bound one of the $\mathbb{RP}^3$ factors. These would then be compactified (including the non-geometric twist) on the other $\mathbb{RP}^3$ factor. Alternatively, one can resolve the $\mathbb{C}^2 / \mathbb{Z}_2$ to the total space of the complex line bundle $\mathcal{O}(-2)$ over $\mathbb{CP}^1$, in which case one can wrap the codimension-two defects found in Section \ref{subsec:codim3} over the $\mathbb{CP}^1$ and then compactify this configuration on the other $\mathbb{RP}^3$ factor. Since we already needed to include the codimension-two objects the second possibility seems to be the minimal option. 

\vspace{0.2cm}

This leaves a single $\mathbb{Z}_2$ summand with generator $W_6$ described as follows: The spacetime is given by $\mathbb{RP}^3 \times \mathbb{RP}^3$ with duality bundle given by monodromy $(\gamma^2_4, M_2^{(1)})$ around the torsion 1-cycle of the first $\mathbb{RP}^3$ factor and
\begin{equation}
    \widetilde{R} = R^{(1)} \Gamma_4^{(1)} R^{(1)} \big( \Gamma_4^{(1)} \big)^3 R^{(1)} = \begin{pmatrix*}[r] -1 & 0 & 0 \\ 0 & 0 & -1 \\ 0 &  -1 & 0 \end{pmatrix*} \,,
\end{equation}
which corresponds to the permutation $(13)$ in the $S_4$ subgroup generated by $\Gamma_4^{(1)}$ and $R^{(1)}$, around the torsion 1-cycle of the second $\mathbb{RP}^3$. 

The first factor can be described as the boundary of the singular F-theory geometry
\begin{equation}
    (T^2_F \times T^2 \times \mathbb{C}^2)/(\mathbb{Z}_2)_{\gamma_4^2 \otimes M_2^{(1)}} \,,
\end{equation}
which acts as a minus sign on both 2-tori, as well as the $\mathbb{C}^2$. This geometry admits an invariant $(4,0)$-form and thus can be described as a singular, local Calabi-Yau 4-fold with singularities of the type $(T^2_F \times \mathbb{C}^3)/\mathbb{Z}_2$, which signal the presence of gauge degrees of freedom on a collapsed cycle as in various examples above. The monodromy structure further suggests a relation to the defect filling the fiber for $Q^5_{4,(\gamma_4, \Gamma_4^{(1)})}$ in Section~\ref{subsec:codim6}, as well as a composite object built of the S-fold geometries in codimension-six in \cite{Heckman}. This defect then needs to compactified on the second $\mathbb{RP}^3$ with duality bundle induced by monodromy $\widetilde{R}$ inducing a non-geometric twist. This is another instance were we are led to include additional defects in lower codimension due to higher-dimensional generators.\footnote{The same conclusion holds for filling the second $\mathbb{RP}^3$ factor which would correspond to an additional defect in codimension four, which is geometric in the M-theory duality frame.}

\subsection{Codimension-eight defects}
\label{subsec:codim8}

The codimension-eight defects associated to the brodism\footnote{Recall that a brodism is the equivalence relation between dudes/dudettes where one is equivalent to another if they are bromantically connected. This has primarily been developed by brofessional mathematicians and applied to brojective spaces, though it is widely expected that the same methods extend to bro-spectra. To date, the primary method of proof used in the brodism literature is ``just trust me bro.''} group:
\begin{equation}
\Omega^{\text{Spin}}_7 (BG_U) = \mathbb{Z}_3^{\oplus 2} \oplus \mathbb{Z}_9^{\oplus 3} \oplus \mathbb{Z}_2^{\oplus 6} \oplus \mathbb{Z}_8^{\oplus 2} \oplus \mathbb{Z}_{16}^{\oplus 2} \oplus \mathbb{Z}_{32} \,,
\end{equation}
are given by both genuine objects of codimension eight as well as the compactification of the defects introduced above.

\vspace{0.2cm}

The $\mathbb{Z}_9$ factors are given by $L^7_3$ with monodromy $\gamma_3$ or $\Gamma_3^{(i)}$. The first defect is geometric in type IIB and described by the geometry
\begin{equation}
(T^2 \times \mathbb{C}^4) / (\mathbb{Z}_3)_{\gamma_3} \,,
\end{equation}
which can be made into a local singular Calabi-Yau 5-fold by choosing the correct $\mathbb{Z}_3$ action, e.g.,
\begin{equation}
\mathbb{Z}_3: \quad (z_1 \,, z_2 \,, z_3 \,, z_4 \,, y) \rightarrow (\omega z_1 \,, \omega z_2 \,, \omega z_3 \,, \omega z_4 \,, \omega^{-1} y) \,, \quad \omega = e^{2 \pi i / 3} \,.
\end{equation}
This geometry describes three local singularities of the form $\mathbb{C}^5 / \mathbb{Z}_3$ with the action derived from the one above. These singularities are Gorenstein, canonical and terminal \cite{a6207765-8e51-37bb-af3c-43a72b254a77}. They therefore do not possess a crepant resolution. Demanding supersymmetry to be unbroken one therefore obtains a fundamental defect, which is localized in the eight-dimensional spacetime, i.e., an instanton. From the type IIB configuration one can actually see that it is a composite object formed of three connected $\mathbb{C}^5 / \mathbb{Z}_3$ sectors.

Similarly, we can define the defect with monodromy $\Gamma_3^{(1)}$ with the same action as above, but now with the torus interpreted as the F-theory fiber. The full F-theory geometry has the form
\begin{equation}
(T^2_F \times \mathbb{C}^4) / (\mathbb{Z}_3)_{\Gamma_3^{(1)}} \times T^2\,,
\end{equation}
and we find the torus compactification of codimension-eight defect of \cite{Heckman}. This string-like object potentially has an exotic $(8,2)$ supersymmetry in two dimensions which would be retained after further compactification on a (Euclidean) torus.

The last defect of this family is geometric only in the M-theory duality frame and given by the bulk geometry
\begin{equation}
(T^3 \times \mathbb{C}^4) / (\mathbb{Z}_3)_{\Gamma_3^{(2)}} \simeq \big( (T^2 \times \mathbb{C}^4) / \mathbb{Z}_3 \times S^1 \big) / \mathbb{Z}_3^s \,, 
\end{equation}
with the first factor having the same singularity structure as above. Thus, one obtains the twisted circle compactification of three sectors associated to M-theory on $\mathbb{C}^5 / \mathbb{Z}_3$.

\vspace{0.2cm}

The remaining generators at prime $3$ only generate $\mathbb{Z}_3$ summands and are given again by the lens space $L^7_3$ but with monodromy given by the tensor product $\gamma_3 \otimes \Gamma_3^{(i)}$. For $\Gamma_3^{(1)}$ one has the F-theory geometry
\begin{equation}
(T^2_F \times T^2 \times \mathbb{C}^4)/(\mathbb{Z}_3)_{\gamma_3 \otimes \Gamma_3^{(i)}} \,,
\end{equation}
which can be made into a local, singular Calabi-Yau 6-fold. It is associated to an instanton-like object which can also be understood as the composite of nine instanton-like objects discussed in the context of type IIB string theory in \cite{Heckman}, one for each of the orbifold fixed point of $(T_F^2 \times T^2)/\mathbb{Z}_3$. As in type IIB it is tempting to speculate about preserved supersymmetry for this zero-dimensional system. Indeed, the naive count for Calabi-Yau 6-folds would suggest that $\tfrac{1}{32}$ supercharges are conserved, however, since the geometry only is a local patch there can be enhancements.

\vspace{0.2cm}

The last generator at prime $3$ is non-geometric. The geometric action in the M-theory frame is given by the generator of $L^7_3$ with $\Gamma_3^{(2)}$ monodromy above. However, now there is also an action on the moduli fields when traversing the torsion 1-cycle. Thus, it can be called a non-geometric instanton. It would be interesting to investigate the zero modes of this background, include further fluxes, or use the techniques of exceptional field theory to extract more details.

\vspace{0.2cm}

Two of the $\mathbb{Z}_2$ summands are described geometrically by
\begin{equation}
(L^7_{3,\Gamma_4^{(i)}} \times \text{K3}) / \mathbb{Z}_2 \,,
\end{equation}
where $\mathbb{Z}_2$ acts by the Enriques involution on K3 and complex conjugation on $L^3_4$. The resulting space can also be understood as a fiber bundle over the Enriques surface $E$ with fiber $L^3_4$. The duality bundle is given by monodromy $\Gamma_4^{(i)}$ when traversing the torsion 1-cycle of the fiber. Since the $\mathbb{Z}_2$ acts on this monodromy this space has a $D_8 = \mathbb{Z}_4 \rtimes \mathbb{Z}_2$ bundle. The associated defects are obtained by filling the fiber with the codimension-four defect associated to $L^3_4$ with monodromy $\Gamma_4^{(i)}$ and compactifying on $E$. However, this in itself would likely not be consistent, since $E$ is not a Spin manifold. However, the fibration adds a topological twist in the compactification which ensures the consistency of fermions living on the defect with appropriate R-symmetry charge.

\vspace{0.2cm}

Two $\mathbb{Z}_2$ summands are generated by $S^1_{\gamma_4} \times \mathbb{RP}^3 \times \mathbb{RP}^3$ with duality bundle on the $\mathbb{RP}^3$ factors as above, i.e., given by $M_1^{(i)}$ and $M_2^{(i)}$ around the non-trivial 1-cycles. The associated defects are hence the compactification of the $S^1_{\gamma_4}$ defect on the two copies of real projective space. Since the monodromies $M^{(i)}$ involve all three components of SL$(3,\mathbb{Z})$ this compactification contains a non-geometric twist, with some of the type IIB moduli undergoing a non-trivial transformation.

\vspace{0.2cm}

Another $\mathbb{Z}_2$ summand is given by $S^1_{\gamma_4} \times W_6$ and the defect corresponds to the compactification of the $S^1_{\gamma_4}$ defect on the spin 6-manifold with duality bundle described by $W_6$. 

\vspace{0.2cm}

The last $\mathbb{Z}_2$ summand combines with the $\mathbb{Z}_{32}$ summand to generate a copy of $\Omega^{\text{Spin}}_7 (B\mathbb{Z}_4)$ coming from the SL$(2,\mathbb{Z})$ factor of the full duality group. The generators are given by the seven-dimensional lens spaces $L^7_4$ and $\widetilde{L}^7_4$, which differ by the choice of Spin structure and duality bundle determined by monodromy $\gamma_4$ around the torsion 1-cycle. The defects have a geometric interpretation on type IIB given by the geometry 
\begin{equation}
(T^2 \times \mathbb{C}^4) / (\mathbb{Z}_{4})_{\gamma_4} \,,
\end{equation}
with constant axio-dilaton. This geometry is affected by the same spin-lift issues as for $L^3_4$ in Section \ref{subsec:codim4}. We expect this to be resolved by refining with the spin-lift of the duality group, for which the correct generator likely becomes a $L^5_4$ bundle over $\mathbb{CP}^1$, similar to the $Q^5_4$ in dimension five, see also \cite{Heckman}.

\vspace{0.2cm}

The two $\mathbb{Z}_8$ summands are constructed using the geometry
\begin{equation}
T^4 \times \mathbb{RP}^3 \,,
\end{equation}
with non-trivial duality bundle. The monodromies around the four different 1-cycles of the torus are given by $\gamma_4$, $M_1^{(i)}$, $M_2^{(i)}$, and $M_2^{(i)}$, respectively. The monodromy around the torsion cycle of $\mathbb{RP}^3$ is given by the tensor product $\gamma^2_4 \otimes M_1^{(i)}$. Due to the appearance of various circles with monodromies encountered before there are several possibilities for the symmetry-breaking defect. For example, we can fill the first circle of the $T^4$ factor by including the defect associated to $S^1_{\gamma_4}$, which then is compactified on the remaining 6-manifold. Since the monodromies on this 6-manifold involve the full duality group this is a compactification with a non-geometric twist. Equivalently, one could fill the circle with $M_2^{(i)}$ monodromy using the codimension-two defect discussed in Section \ref{subsec:codim3}, compactified on the remaining factors involving the non-geometric twist.

\vspace{0.2cm}

The two $\mathbb{Z}_{16}$ summands are give by $L^7_4$ with monodromy given by $\Gamma_4^{(i)}$. For the F-theory generator with monodromy $\Gamma_4^{(1)}$ one encounters a geometry of the form
\begin{equation}
(T^2_F \times \mathbb{C}^4) /(\mathbb{Z}_4)_{\Gamma_4^{(1)}} \times T^2 \,,
\end{equation}
for which one finds the same complications involving the spin-lift and supersymmetry as discussed above. It is therefore described by a torus compactification of a string-like object associated to F-theory on the singular geometry above, producing a point-like object in the supergravity theory.

The other generator requires a lift to M-theory and corresponds to the singular geometry
\begin{equation}
(T^3 \times \mathbb{C}^4)/(\mathbb{Z}_4)_{\Gamma_4^{(2)}} \simeq \big(  (T^2 \times \mathbb{C}^4)/\mathbb{Z}_4 \times S^1 \big) / \mathbb{Z}^s_2 \,.
\end{equation}
This configuration has the same issues concerning the spin-lift. It contains several orbifold singularities which describe particle-like configurations of M-theory which are compactified on an additional circle with monodromy action on these particles, producing a point-like object in supergravity.

\vspace{0.5cm}

We see that the breaking of global symmetries associated to non-trivial bordism classes for the U-duality groups requires the introduction of many interesting and exotic string / M- / F-theory backgrounds. However, none of the required objects is fundamentally new, as opposed to the R7-brane described in \cite{Dierigl:2022reg, Heckman, Dierigl:2023jdp}.\footnote{see also earlier hints of it in \cite{Distler:2009ri}.} Thus, we see that string / M- / F-theory already come equipped with all the ingredients necessary to break the global symmetry induced by the topological charges associated with U-duality bundles. Moreover, we want to stress once more that, in particular at prime $2$, the fate of some of the objects needs to be reexamined after an understanding of the fermionic version of the U-duality groups is established.

\newpage

\part{Bordism calculations}
\label{part:math}

Having laid out the physical interpretation of the various bordism generators of $\Omega_{k}(B G_U^{8\mathrm{d}})$, we now turn to the computation of these structures. There are various stages of analysis we need to go through. First of all, we need to identify the isomorphism class for the various groups. All of our bordism groups have the general form of a direct sum of the form:\footnote{In this work we refer to cyclic groups as $\mathbb{Z}_k$ as opposed to $\mathbb{Z}/k$ or $\mathbb{Z}/k\mathbb{Z}$ which are also common in the math literature.}
\begin{equation}
\Omega_{k} \simeq \mathbb{Z}^{\oplus m} \oplus \underset{i}{\bigoplus} \mathbb{Z}_{p_i^{\ell_i}},
\end{equation}
and in the cases of interest, the reduced Spin-bordism groups (i.e., those obtained by quotienting out by $\Omega^{\mathrm{Spin}}_{k}(\mathrm{pt})$) are pure torsion. Simply identifying the different torsion factors does not suffice for physics applications, we also need to seek out explicit manifolds $X_k$ equipped with specific duality bundles that generator them. While there are typically
algorithmic approaches to extracting the isomorphism class of the groups, finding explicit generators (as far as we are aware) is somewhat of an art form and involves making motivated mathematical (and physical!) guesses. Indeed, we find that a large number of the generators take the form of generalized lens spaces equipped with a duality bundle, and can in turn be viewed as the boundary of a general quotient of the form $(\mathbb{C}^{m} \times T^{n}) / \Gamma$, where the torus fiber geometrizes the duality bundle structure.

It is also worthwhile contrasting the specific strategy adopted here with the case of IIBordia studied in \cite{Heckman}. In the study of IIBordia, the relevant duality groups admit an amalgamated product structure $G_{\mathrm{IIB}} = H_1 \ast_{K} H_2$, where the individual factors $H_i$ and $K$ are cyclic and / or dihedral groups. Since there is a Mayer-Vietoris long exact sequence for amalgamated products, the whole computation (while still very non-trivial!) boils down to simpler constituent building blocks. Moreover, these individual building blocks have natural physical interpretations in terms of specific F-theory backgrounds.

It turns out that this approach does not really extend to the duality groups of Utopia. The main issue is that all of the $\mathrm{SL}(n,\mathbb{Z})$ groups are ``perfect'' for $n \geq 3$, and in particular, they do not decompose into amalgamated products. As such, identifying natural building blocks is somewhat more challenging.

That being said, we are still able to greatly simplify the relevant calculations. In particular, much as in other bordism calculations, the relevant calculations localize at different prime factors. In particular, the relevant bordism groups localize at primes $p=2,3$. In this reduction, the most complicated group cohomology we need to contend with is that of the symmetric group on four letters $\mathcal{S}_4$ which has thankfully already been computed in the extant literature. As such, many of the relevant bordism group calculations can be boiled down to simpler statements.

The other simplification which greatly facilitates the computation of bordism groups is that of a ring / module structure in which a cyclic group acts on the relevant groups. This has a natural physical interpretation as organizing our bordism defects into symmetry multiplets of an appropriate automorphism symmetry.

Our plan in this part will therefore be to first explain the general strategy, and to then proceed to the computation at prime $p = 3$ and then prime $p = 2$.

\subpart{First calculations, and the plan}

\section{First spin bordism of U-duality groups}
\label{sec:Omega1}

In this section, we determine $\Omega_1^{\Spin}(BG)$ for the U-duality groups in dimensions 3 through 10. As in the rest of this paper, we do not consider how these groups mix with fermion parity. We collect the results of this section in \cref{Table:FirstBordism}.

These calculations amount to
determining the Abelianization of $G$, denoted $\mathrm{Ab}(G)$, for each of the U-duality groups $G$. Indeed, an easy application of the Atiyah-Hirzebruch spectral sequence reveals that
\begin{equation}\label{FirstBordism}
    \Omega^{\Spin}_1(BG) \cong H_1(BG;\Z) \oplus \Omega^{\Spin}_1(\mathrm{pt}) \cong \mathrm{Ab}(G) \oplus \Omega_1^{\Spin}(\mathrm{pt}) \,.
\end{equation}

\begin{lem}[Abelianization of 10d to 7d U-duality groups]\label{ab1}
Consider the U-duality groups of 10, 9, 8, and 7 dimensional supergravity:
\begin{equation}
G^{\mathrm{10d}}_U = 1 \,, \;\; G^{\mathrm{9d}}_U =\SL(2,\Z) \,, \;\; G^{\mathrm{8d}}_U = \SL(3,\Z)\times \SL(2,\Z) \,, \;\; G^{\mathrm{7d}}_U =\SL(5,\Z) \,.
\end{equation}
Then,
\begin{equation}
\mathrm{Ab}(G^{\mathrm{10d}}_U) = 1 \,, \;\; \mathrm{Ab}(G^{\mathrm{9d}}_U)= \Z_{12} \,, \;\; \mathrm{Ab}(G^{\mathrm{8d}}_U) = \Z_{12}, \;\; \mathrm{Ab}(G^{\mathrm{7d}}_U) = 0
\end{equation}
\end{lem}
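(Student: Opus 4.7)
The plan is to reduce everything to computing the abelianizations of the cyclic and special linear factors separately, and then to assemble the answers using multiplicativity of abelianization under direct products, $\mathrm{Ab}(G \times H) \cong \mathrm{Ab}(G) \oplus \mathrm{Ab}(H)$.

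First, I would handle $\SL(2,\Z)$ using its standard presentation
\begin{equation}
\SL(2,\Z) = \langle S, T \mid S^4 = 1,\ (ST)^3 = S^2 \rangle
\end{equation}
(equivalently the amalgamated product $\Z_4 *_{\Z_2} \Z_6$ familiar from IIBordia). Abelianizing turns this into $\langle s, t \mid 4s = 0,\ 3s + 3t = 2s \rangle$, which simplifies to the pair of relations $4s = 0$ and $s + 3t = 0$. Substituting $s = -3t$ into $4s = 0$ yields $12 t = 0$, so the abelianization is cyclic of order $12$ generated by the image of $T$. This alone gives $\mathrm{Ab}(G^{\mathrm{10d}}_U) = \Z_{12}$.

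Second, I would invoke the classical fact that $\SL(n,\Z)$ is perfect for $n \geq 3$. The mechanism is that $\SL(n,\Z)$ is generated by the elementary matrices $E_{ij}(1) = I + e_{ij}$ with $i \neq j$, and for any three distinct indices $i,j,k$ one has the commutator identity $[E_{ik}(1), E_{kj}(1)] = E_{ij}(1)$, as a direct calculation with $e_{ab} e_{cd} = \delta_{bc}\, e_{ad}$ confirms. This identity requires the existence of a third index, and it immediately places every generator in the commutator subgroup. Hence $\mathrm{Ab}(\SL(3,\Z)) = \mathrm{Ab}(\SL(5,\Z)) = 0$.

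Finally, combining these inputs with $\mathrm{Ab}(\Z_2) = \Z_2$ gives the remaining three entries: $\mathrm{Ab}(G^{\mathrm{9d}}_U) = \Z_{12} \oplus \Z_2$, $\mathrm{Ab}(G^{\mathrm{8d}}_U) = 0 \oplus \Z_{12} = \Z_{12}$, and $\mathrm{Ab}(G^{\mathrm{7d}}_U) = 0$. The computation is essentially mechanical once the two inputs above are in hand; there is no real obstacle. The only point worth emphasizing is that the commutator identity driving perfectness breaks down precisely at $n = 2$, which is exactly why the $\Z_{12}$ factor survives in dimensions $8$, $9$, and $10$ and disappears from $\SL(n,\Z)$ at $n \geq 3$.
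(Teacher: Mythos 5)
Your proof is correct and follows the same overall structure as the paper's: decompose using $\mathrm{Ab}(G\times H)\cong\mathrm{Ab}(G)\oplus\mathrm{Ab}(H)$ and then feed in the abelianizations of the individual factors. The only difference is that the paper simply cites $\mathrm{Ab}(\SL(2,\Z))\cong\Z_{12}$ and the perfectness of $\SL(n,\Z)$ for $n\ge 3$ as well-known, whereas you supply the underlying derivations — the presentation-based computation for $\SL(2,\Z)$ and the elementary-matrix commutator identity $[E_{ik}(1),E_{kj}(1)]=E_{ij}(1)$ (valid only once a third index is available, which is the right structural reason for the $n=2$ exception). Both derivations are correct; your write-up is simply more self-contained than the one in the paper.
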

\begin{proof}
The Abelianizations of $\SL(n,\Z)$ are well-known: 
    \begin{equation}
    \mathrm{Ab} \big( \SL(1,\Z) \big) \cong 1, \;\; \mathrm{Ab} \big( \SL(2,\Z) \big) \cong \Z_{12} \;\; \text{and}\;\; \mathrm{Ab}\big(\SL(n,\Z) \big) \cong 0 \;\; \text{for}\;\;n\geq 3
\end{equation}
The claim then follows from the fact that Abelianization respects direct products.
\end{proof}

The remaining U-duality groups require more machinery. In part, they require an identification of the groups $\SSO(5,5,\Z)$, $\SSO(6,6,\Z)$, and $\SSO(7,7,\Z)$ with a certain class of Chevalley-Demazure group schemes; see \cite[\S 2-7]{Vavilov} for an introduction. 

\begin{lem}
\label{lem:perfectgroups}
    The groups $\SSO(5,5,\Z)$, $\SSO(6,6,\Z)$, and $\SSO(7,7,\Z)$ are perfect. 
\end{lem}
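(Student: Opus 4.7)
The plan is to identify each $\SSO(k,k,\Z)$ for $k \in \{5,6,7\}$ as the elementary subgroup of a split Chevalley-Demazure group scheme of type $D_k$ over $\Z$, and then use Chevalley's commutator formula to express every generator as a single commutator.

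First, I would invoke the Chevalley-Demazure construction from \cite[\S 2--7]{Vavilov} to realize $\SSO(k,k,\Z)$ as the $\Z$-points of a split group scheme of type $D_k$. A classical theorem (Matsumoto, Suslin, Vaserstein), applicable over the Euclidean ring $\Z$ for rank $\geq 2$, then identifies this group (up to finite-index subtleties addressed below) with the elementary subgroup $E(D_k,\Z)$ generated by root unipotents $x_\alpha(t)$ for $\alpha$ in the root system $\Phi(D_k)$ and $t \in \Z$. With this in hand, Chevalley's commutator formula, which in the simply laced case reduces to
\begin{equation}
    [x_\alpha(1),\, x_\beta(t)] \;=\; x_{\alpha+\beta}(\pm t)\qquad \text{whenever } \alpha+\beta \in \Phi(D_k),
\end{equation}
expresses each generator $x_\gamma(n)$ as a single explicit commutator whenever $\gamma$ decomposes as $\gamma = \alpha + \beta$ for non-proportional roots $\alpha, \beta$. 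Since $D_k$ has rank $k \geq 3$ (e.g., in the standard coordinate description one has $e_i + e_j = (e_i + e_l) + (e_j - e_l)$ for any $l \notin \{i,j\}$, and analogously for the other root types), every root admits such a decomposition, so every generator lies in $[G,G]$ and $G := \SSO(k,k,\Z)$ is therefore perfect.

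The main obstacle is the identification step. The split algebraic group of type $D_k$ admits several integral forms ($\Spin$, $\mathrm{HSpin}$, $\SSO$, $\mathrm{PSO}$), and different conventions for $\SSO(k,k,\Z)$ may differ from the elementary Chevalley group by finite central quotients or by the Dickson-invariant $\Z_2$. One must pin down the convention used in the U-duality literature so the identification with $E(D_k,\Z)$ holds on the nose; this is the content of the portion of \cite{Vavilov} invoked above, and amounts to standard bookkeeping in the rank range $k \in \{5,6,7\}$ of interest. Once this is sorted out, the commutator-formula argument applies uniformly to all three cases.
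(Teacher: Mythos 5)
Your proposal tracks the paper's proof step for step: both realize $\SSO(k,k,\Z)$ as the $\Z$-points of a Chevalley--Demazure group scheme of type $D_k$ (citing \cite{Vavilov}), both use that $\Z$ is a Euclidean ring to identify this group with its elementary Chevalley subgroup, and both conclude perfectness from the perfectness of the elementary group. The only difference is that where the paper simply cites Stein~\cite[Corollary 4.4]{Stein} for perfectness of the elementary group, you unpack that citation into the underlying Chevalley commutator-formula argument (every root of $D_k$, $k\ge 3$, decomposes as a sum of two non-proportional roots, so each root unipotent is an explicit commutator). That is a correct and slightly more self-contained rendering of the same reasoning; your caution about pinning down which integral form of $D_k$ one means is sensible but is resolved by the same reference the paper cites.
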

\begin{proof}
    Consider $\SSO(5,5,\Z)$, where $\mathrm O(5,5,\Z)$ consists of the matrices $A$ satisfying 
\begin{equation}
A^T\begin{pmatrix}
    0 & 1_n\\
    1_n & 0
    \end{pmatrix}A =  \begin{pmatrix}
        0 & 1_n \\
        1_n & 0
    \end{pmatrix}
\end{equation}
$\SSO(5,5,\Z)$ is the group of $\Z$-points of a Chevalley-Demazure group scheme of type $D_5$; see \cite[\S 8]{Vavilov}. Let $\mathrm{EO}(5,5,\Z)\leq \SSO(5,5,\Z)$ be the associated elementary Chevalley-Demazure group. Since $\Z$ is Euclidean, $\mathrm{EO}(5,5,\Z) = \SSO(5,5,\Z)$. The claim that $\SSO(5,5,\Z)$ is perfect then follows from the fact that $\mathrm{EO}(5,5,\Z)$ is perfect \cite[Corollary 4.4]{Stein}. The same reasoning can be used to show that $\SSO(6,6,\Z)$ and $\SSO(7,7,\Z)$ are perfect.
\end{proof}

We also make use of the following lemma due to Obers and Pioline. 

\begin{lem}[{Obers and Pioline, \cite{Obers}}]\label{lem:UdualityKnit}
    The D-dimensional U-duality group $E_{\ell(\ell)}$, with $\ell = 11 -D$, can be written as 
    \begin{equation}
    \mathrm{E}_{\ell(\ell)}(\Z)=\SL(\ell,\Z)\bowtie \SSO(\ell-1,\ell-1,\Z)
    \end{equation}
    where $\bowtie$ denotes the knit product.  
\end{lem}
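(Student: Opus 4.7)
The plan is to verify the three defining properties of a knit (Zappa--Szép) product as laid out in Appendix~\ref{App:Knit}: (i) both $H := \SL(\ell,\Z)$ and $K := \SSO(\ell-1,\ell-1,\Z)$ embed as subgroups of $G := \mathrm{E}_{\ell(\ell)}(\Z)$; (ii) every element of $G$ factors as $g=hk$ with $h\in H$, $k\in K$; and (iii) the intersection $H\cap K$ inside $G$ is trivial. Given conditions (i)--(iii), the matched-pair cocycle data that define $\bowtie$ are extracted from the commutation relations $kh = h'k'$ inside $G$.

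First I would identify the two subgroups via their UV origin as reviewed in Section~\ref{subsec:Ugroups}. Inside $\mathrm{E}_{\ell(\ell)}(\Z)$, the $\SL(\ell,\Z)$ factor is realized as the group of large diffeomorphisms of the M-theory compactification torus $T^\ell$, while $\SSO(\ell-1,\ell-1,\Z)$ is realized as the T-duality group of the type IIA reduction on $T^{\ell-1}$. Both act faithfully on the integer charge lattice of BPS states and therefore land in the arithmetic (as opposed to only real) form of the U-duality group, settling (i).

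For (ii) I would use the Chevalley presentation of the split simply-laced form $\mathrm{E}_{\ell(\ell)}$. Deleting from the $E_\ell$ Dynkin diagram the node corresponding to the M-theory three-form leaves a subdiagram of type $A_{\ell-1}$, whose Chevalley subgroup is $\SL(\ell,\Z)$; deleting instead the node corresponding to the eleven-dimensional vielbein leaves a subdiagram of type $D_{\ell-1}$, whose Chevalley subgroup is $\SSO(\ell-1,\ell-1,\Z)$ (with the appropriate quotient or cover for small $\ell$). The two subdiagrams jointly cover every node, so the union of Chevalley generators generates $G$. To upgrade ``$G$ is generated by $H$ and $K$'' to ``$G = HK$,'' I would argue by induction on word length that any alternating word can be rewritten with all $H$-letters to the left of all $K$-letters, using the elementary move $kh = h'k'$; this reduces to a finite set of commutation relations among simple root subgroups of $E_\ell$, each of which is determined by the Cartan integers of the Dynkin diagram.

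The main obstacle is (iii), the trivial-intersection condition $H\cap K=\{1\}$. Rank counting is not enough; at the level of integer points one must exclude subtle arithmetic coincidences. My plan is to branch a convenient fundamental representation of $\mathrm{E}_{\ell(\ell)}$ (the $\mathbf{10}$, $\mathbf{16}$, $\mathbf{27}$, $\mathbf{56}$, or $\mathbf{248}$ depending on $\ell$) under each of $H$ and $K$ and observe that an element lying in $H\cap K$ must preserve both decompositions simultaneously; a weight-space analysis then forces it to act as the identity. This step is also what makes the product \emph{knit} rather than direct or semidirect: the same commutation relations used for (ii) show that each factor acts nontrivially on the other by conjugation, so neither $H$ nor $K$ is normal in $G$, and the cocycle data cannot be trivialized. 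I expect verifying (iii) uniformly in $\ell$ to be the place where one either invokes the structure theory of Chevalley groups over $\Z$ or, following \cite{Obers}, carries out the branching case by case.
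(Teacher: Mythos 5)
The paper does not actually prove this lemma: it is a cited result from Obers and Pioline, and Appendix~\ref{App:Knit} only records the abstract definition of a knit product without verifying its hypotheses for $\SL(\ell,\Z)$ and $\SSO(\ell-1,\ell-1,\Z)$ inside $\mathrm E_{\ell(\ell)}(\Z)$. So your attempt at a self-contained Chevalley-theoretic argument is a genuinely different route, but it has a gap in step (ii). From the observation that the two subdiagrams of type $A_{\ell-1}$ and $D_{\ell-1}$ jointly cover the nodes of the $E_\ell$ diagram, you get that $H$ and $K$ \emph{generate} $G$, and you want to upgrade this to the factorization $G = HK$ by sorting alternating words. That sorting requires the exchange relation $KH\subseteq HK$: for every $k\in K$ and $h\in H$, one must be able to write $kh = h'k'$. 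This is precisely the substantive matched-pair condition and does \emph{not} follow from generation plus trivial intersection (in $S_3$, $H=\langle(1\,2)\rangle$ and $K=\langle(1\,3)\rangle$ generate and meet trivially, yet $HK$ has only four elements). Your claim that the exchange relation ``reduces to a finite set of commutation relations determined by the Cartan integers'' is too optimistic: $H$ and $K$ are not single root subgroups but Levi-type subgroups built from many root subgroups, and there is no Chevalley commutator formula governing an arbitrary product $kh$. Without an independent structural input — say a double-coset or parabolic decomposition specialized to $\Z$-points — this step is unsupported, and it is exactly the step a careful reader would want to see.

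That said, the gap is not fatal for the place the lemma is actually used. \Cref{ab2} only needs that $\mathrm E_{\ell(\ell)}(\Z)$ is generated by two perfect subgroups, since then $[G,G]\supseteq\langle H,K\rangle = G$ and $G$ is perfect; the full factorization $G = HK$ is not required. Your Chevalley-generation observation does supply this weaker statement, so your proposal, while not a proof of \cref{lem:UdualityKnit} as stated, would suffice for the downstream application in the paper.
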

A review of the knit product and why we should expect the U-duality group to be given as above is given in Appendix \ref{App:Knit}.

\cref{lem:UdualityKnit} and \cref{lem:perfectgroups} allow us to determine the remaining Abelianizations. 

\begin{lem}[Abelianization of 6d to 3d U-duality groups]\label{ab2}
    Consider the U-duality groups of maximally supersymmetric 6, 5, 4, and 3 dimensional supergravity:
    \begin{equation}
        G^{\mathrm{6d}}_U = \SSO(5,5,\Z), \quad G^{\mathrm{5d}}_U = \mathrm{E}_{6(6)}(\Z), \quad G^{\mathrm{4d}}_U = \mathrm{E}_{7(7)}(\Z), \quad G^{\mathrm{3d}}_U = \mathrm{E}_{8(8)}(\Z)
    \end{equation}
    Then, 
    \begin{equation}
    \mathrm{Ab} (G^{\mathrm{6d}}_U)\cong 0, \quad \mathrm{Ab} (G^{\mathrm{5d}}_U)\cong 0, \quad \mathrm{Ab} (G^{\mathrm{4d}}_U)\cong 0, \quad \mathrm{Ab} (G^{\mathrm{3d}}_U) \cong 0
    \end{equation}
\end{lem}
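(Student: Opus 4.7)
The plan is to leverage the knit product decomposition from \cref{lem:UdualityKnit} together with a standard observation: a knit product of two perfect subgroups is itself perfect. Since a group $G$ is perfect precisely when $\mathrm{Ab}(G) = 0$, establishing perfection directly yields the claim.

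The case $D = 6$ is immediate, as $G^{6\mathrm{d}}_U = \SSO(5,5,\Z)$ is already shown to be perfect in \cref{lem:perfectgroups}. For $D = 5, 4, 3$, applying \cref{lem:UdualityKnit} with $\ell = 11 - D$ yields
\begin{align*}
    \mathrm{E}_{6(6)}(\Z) &= \SL(6,\Z) \bowtie \SSO(5,5,\Z), \\
    \mathrm{E}_{7(7)}(\Z) &= \SL(7,\Z) \bowtie \SSO(6,6,\Z), \\
    \mathrm{E}_{8(8)}(\Z) &= \SL(8,\Z) \bowtie \SSO(7,7,\Z).
\end{align*}
In every decomposition, both factors are perfect: the $\SL(n,\Z)$ factor (with $n \ge 3$) by \cref{ab1}, and the $\SSO(n,n,\Z)$ factor by \cref{lem:perfectgroups}.

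The remaining step is the general principle. For a knit product $G = H \bowtie K$, it follows from the definition recalled in \cref{App:Knit} that every element factors as $hk$ with $h \in H$ and $k \in K$; in particular $G = HK$ as a set. If both $H$ and $K$ are perfect, then $H = [H,H] \subseteq [G,G]$ and likewise $K \subseteq [G,G]$, so $G = HK \subseteq [G,G]$, which forces $G = [G,G]$. Combined with \eqref{FirstBordism}, this also gives $\Omega^{\Spin}_1(BG^{D}_U) \cong \Omega^{\Spin}_1(\mathrm{pt}) \cong \Z_2$ for $D \in \{3,4,5,6\}$.

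I do not anticipate any real obstacle here: the heavy lifting is already packaged into the two preparatory lemmas. The only mild subtlety is that the argument relies on the set-theoretic factorization $G = HK$ supplied by the knit product, rather than merely on the fact that $H$ and $K$ generate $G$; but the former property is built into the very definition of $\bowtie$, so this is essentially a bookkeeping observation rather than a genuine difficulty.
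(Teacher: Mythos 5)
Your proof is correct and takes essentially the same route as the paper: combine \cref{lem:UdualityKnit} and \cref{lem:perfectgroups} with the fact that the knit product of perfect groups is perfect. The paper states this closure fact without proof, whereas you supply the short justification via the set-theoretic factorization $G = HK$, which is a fine (and reasonable) expansion of the same argument.
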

\begin{proof}
    The claim follows from \cref{lem:UdualityKnit}, \cref{lem:perfectgroups}, and that the knit product of perfect groups is perfect. 
\end{proof}
We now have everything we need to determine the first spin bordism groups of the various U-duality groups using \eqref{FirstBordism}. We summarize the results in Table \ref{Table:FirstBordism}.

\begin{table}[h!]
\centering
\begin{tabular}{c c c} 
\toprule
\text{Dimension $D$} & 
\text{U-duality group $G_U^D$} & $\Omega_1^{\Spin}(BG_U^{D})$ \\
\midrule
$10$ & 1 & $\Z_2$\\
$9$ & $\SL(2, \Z)$ &  $\Z_2\times\Z_{12}$\\
$8$ & $\SL(2, \Z)\times\SL(3, \Z)$ & $\Z_2\times\Z_{12}$\\
$7$ & $\SL(5, \Z)$ & $\Z_2$\\
$6$ & $\SSO(5, 5, \Z)$ & $\Z_2$\\
$5$ & $\mathrm{E}_{6(6)}(\Z)$ & $\Z_2$\\
$4$ & $\mathrm{E}_{7(7)}(\Z)$ & $\Z_2$\\
$3$ & $\mathrm{E}_{8(8)}(\Z)$ & $\Z_2$\\
\bottomrule
\end{tabular}
\caption{\label{Table:FirstBordism}The first spin bordism group of the U-duality groups for the maximally supersymmetric supergravity theories in various dimensions. These computations are a combination of~\eqref{FirstBordism} and \cref{ab1,ab2}.}
\end{table}

The generators of these first spin bordism groups are straightforward to determine. Recall that for a generalized homology theory $E$ and space $X$,
 \begin{equation}
 E_*(X) \cong E_*(\mathrm{pt}) \oplus \widetilde{E}_*(X)
 \end{equation}
 
Furthermore, recall that $\Omega^{\Spin}_1(\mathrm{pt}) \cong \Z_2$ and is generated by $S_+^1$, where $S_+^1$ is the circle with Spin structure induced from the Lie group framing, i.e., periodic boundary conditions for fermions. Since $\Omega^{\Spin}_1(BG) \cong \Z_2$ for the 7d, 6d, 5d, 4d, and 3d U-duality groups, we conclude that the generator for these groups is $S_+^1$. The first spin bordism groups for the remaining U-duality groups, those for 10d, 9d, and 8d supergravity, have additional generators. Each group contains a factor of $\Z_{12}\cong \Z_{3}\oplus \Z_4$; the generators for these groups were determined in \cite{Heckman}.  Finally, the group corresponding  9d supergravity has an additional $\Z_2$ summand coming from $\widetilde\Omega^{\Spin}_1(B\Z_2)$, which is generated by the circle with either spin structure and a nontrivial $\Z_2$-bundle (see for example~\cite[\S 14.3.2]{Heckman}). 

For some further results on the case of $k=2$ see \cite{BraegerThesis}.

\section{\texorpdfstring{Reducing from $\SL(2, \Z)$ and $\SL(3, \Z)$ to finite groups}{Reducing from SL(2,Z) and SL(3,Z) to finite groups}}
\label{sec:stablesplit}

The most important step in our computations, and the reason we are able to go so much farther for 8d than for lower-dimensional theories, is a tool called a \term{stable splitting}. This is an idea coming from homotopy theory which expresses generalized (co)homology of a space as a direct sum of that of simpler spaces. Stable splittings are known for $B\SL(2, \Z)$, $B\SL(3, \Z)$, and $B\SL(2, \Z)\times B\SL(3, \Z)$; the existence of stable splittings for other U-duality groups would allow an extension of our methods to lower-dimensional supergravity theories.

We begin in \S\ref{ss:stabsplit_what} with an introduction to stable splittings and a few variants. This is a standard concept in homotopy theory; we aimed our exposition at a theoretical physics audience.
Then, in \S\ref{subsec:stablesplit}, we give the $2$- and $3$-local stable splittings of $B\SL(2, \Z)$, $B\SL(3, \Z)$, and $B\SL(2, \Z)\times B\SL(3, \Z)$ that we will need in order to compute their spin bordism groups. These results are due to Minami~\cite{Minami}, building on Soul\'{e}'s calculation~\cite{Soule} of $H^*(B\SL(3, \Z); A)$ for various coefficient rings $A$.

In \S\ref{ss:assembly}, we summarize the implications of these results for the bordism computations we want to make. This is the bridge between the bordism computations later in this paper, which are in terms of easier groups than $\SL(3, \Z)$, and the generators in the physics section of this paper.

As will be clear from this section, these stable splittings are absolutely crucial for our computations. We would be interested in learning of results analogous to Soul\'{e}'s or Minami's~\cite{Soule,Minami} for other U-duality groups.

\subsection{What is a stable splitting?}
\label{ss:stabsplit_what}
Recall that a \term{pointed space} is a space $X$ equipped with a distinguished point $x\in X$. A map of pointed spaces is a continuous map that sends the basepoint to the basepoint. The homotopy theory of unpointed spaces embeds into the homotopy theory of pointed spaces by sending a space $X$ to $X_+\coloneqq X\amalg \set{0}$, with $0$ as the basepoint. Thus the unreduced (generalized) (co)homology of $X$ is the reduced (generalized) (co)homology of $X_+$, and indeed this is often taken as a definition. For $E$ a generalized homology theory, we use $\widetilde E_*(X)$ to denote the reduced $E$-homology of $X$.
\begin{defn}[Spanier~\cite{Spa56}]
Let $f\colon X\to Y$ be a map of pointed spaces. We say $f$ is an \term{S-equivalence} if for every generalized homology theory $E_*$, the map $f_*\colon \widetilde E_*(X)\to \widetilde E_*(Y)$ is an isomorphism.
\end{defn}
One may just as well use generalized cohomology theories. Spanier's original definition looks different, but is equivalent to this one for all spaces with the homotopy type of CW complexes.

The \term{wedge sum} of pointed spaces $X$ and $Y$, denoted $X\vee Y$, is the quotient of $X\amalg Y$ identifying the two basepoints into a single point $b$. $X\vee Y$ is pointed, with basepoint $b$. For any generalized homology theory $E_*$, the inclusion maps $i_1\colon X\to X\vee Y$ and $i_2\colon Y\to X\vee Y$ induce an isomorphism
\begin{equation}
    (i_1,i_2)\colon \widetilde E_*(X)\oplus \widetilde E_*(Y)\overset\cong\longrightarrow \widetilde E_*(X\vee Y);
\end{equation}
indeed, this is one of the Eilenberg-Steenrod axioms of a generalized homology theory. An analogous fact is true for generalized cohomology theories.
\begin{defn}\label{stabsplit_defn}
A \term{stable splitting} of a pointed space $X$ is data of maps $f\colon Y\to X$ and $g\colon Z\to X$ such that $(f\vee g)\colon Y\vee Z\to X$ is an S-equivalence. A stable splitting of an unpointed space $X$ is defined to be a stable splitting of $X_+$.

We define stable splittings of spaces into three or more wedge summands analogously. Moreover, in view of the suspension isomorphism on generalized (co)homology, we allow $f$ and $g$ to only be defined after suspending $X$, $Y$, and $Z$ $k$ times for some $k\ge 0$.
\end{defn}
Thus a stable splitting of $X$ realizes the generalized homology groups of $X$ as a direct sum of those of $Y$ and $Z$. We will be interested in this for $E_* = \Omega^\Spin_*$.
\begin{exm}\label{reduced_unreduced}
For any nonempty space $X$, a choice of basepoint in $X$ induces a stable splitting of $X_+$ into $S^0\vee X$. This realizes the splitting of unreduced (generalized) (co)homology into reduced (generalized) (co)homology and the (generalized) (co)homology of a point:
\begin{equation}
    E_*(X) \cong \widetilde E_*(X)\oplus E_*(\pt),
\end{equation}
since $E_*(\pt)\cong\widetilde E_*(S^0)$.
\end{exm}
\begin{exm}\label{distributivity}
For nonempty spaces $X$ and $Y$ with basepoints $x_0\in X$ and $y_0\in Y$, \cref{reduced_unreduced} can be iterated to produce a stable splitting of $X\times Y$ into the four pieces $S^0\vee X\vee Y\vee (X\wedge Y)$, and analogously with products of more than two nonempty spaces. Here $X\wedge Y$ is the \term{smash product}, defined to be the quotient of $X\times Y$ in which we identify all points of the form $(x, y_0)$ and $(x_0, y)$ with the basepoint $(x_0, y_0)$.
%\textcolor{red}{MD:Should we include definition of smash?}.
\end{exm}
\Cref{distributivity} expresses a kind of distributivity of wedge sums/stable splittings over products.

For most spaces of interest in applications to physics, $H_n(X)$ is a finitely generated Abelian group. Thus we may study $H_*(X)$, as well as many generalized (co)homology theories on $X$, by ``working one prime at a time:'' tensoring with the ring $\Z_{(p)}$ of rational numbers whose denominators are coprime to a chosen prime number $p$. For finitely generated Abelian groups, this has the effect of preserving free summands and $\Z_{p^\ell}$ summands, and throwing out $r$-torsion for primes $r \ne p$. This is a standard technique in homotopy theory; see~\cite[\S 10.2]{Heckman} for more information.

In particular, we may apply this philosophy to stable splittings.
\begin{defn}
A \term{$p$-local S-equivalence} is a map $f\colon X\to Y$ such that for all generalized homology theories $E_*$, the map $f_*\colon E_*(X)\otimes\Z_{(p)}\to E_*(Y)\otimes\Z_{(p)}$ is an isomorphism.

A \term{$p$-local stable splitting} is defined identically to \cref{stabsplit_defn}, but with $p$-local S-equivalence in place of S-equivalence.
\end{defn}
Thus, if $H_*(X)$ is finitely generated in each degree, if we have a $p$-local stable splitting of $X$ for every prime $p$ into pieces whose $E$-homology we can calculate, we can recover the $E$-homology of $X$.
\begin{rem}
The standard definition of stable splitting (as in, for example~\cite{Priddy, MP83}) works with spectra: a stable splitting of $X$ into $Y$ and $Z$ is a stable homotopy equivalence $\Sigma^\infty(Y\vee Z)\overset\simeq\to \Sigma^\infty X$, and likewise with the $p$-localizations of these spectra and $p$-local stable splittings. One may also allow $Y$ and $Z$ to be spectra. We mostly do not need this generalization and so chose to give the simpler definition.
\end{rem}

\subsection{\texorpdfstring{Stably splitting $B\SL(2, \Z)\times B\SL(3, \Z)$}{Stably splitting BSL(2,Z) x BSL(3,Z)}}
\label{subsec:stablesplit}

The following result is well-known; see~\cite[Lemmas 12.2 and 12.7]{Heckman}, for example, for a proof.
\begin{prop}\label{SL2_splitting_at_2}\hfill
\begin{enumerate}
    \item There is a $2$-local S-equivalence $B\Z_4\to B\SL(2, \Z)$ induced from the group homomorphism $\rho_{2,2}\colon \Z_4\hookrightarrow \SL(2, \Z)$ defined by sending $1\in\Z_4$ to the matrix $S = \begin{pmatrix*}[r]
        0 & -1\\1 & 0
    \end{pmatrix*}$.
    \item There is a $3$-local S-equivalence $B\Z_3\to B\SL(2, \Z)$ induced from the group homomorphism $\rho_{2,3}\colon \Z_3\hookrightarrow \SL(2, \Z)$ defined by sending $1\in\Z_3$ to the matrix $(ST)^2 = \begin{pmatrix*}[r]
        -1 & -1\\1 & 0
    \end{pmatrix*}$.
    \item If $p\ge 5$ is a prime number, $B\SL(2, \Z)\to\pt$ is a $p$-local S-equivalence.
\end{enumerate}
\end{prop}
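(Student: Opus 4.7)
The guiding principle is the classical amalgamated-product decomposition
$$\SL(2,\Z) \;\cong\; \Z_4 *_{\Z_2} \Z_6,$$
where the $\Z_4$ is generated by $S$, the $\Z_6$ is generated by $ST$, and the common $\Z_2$ is generated by $S^2 = (ST)^3 = -I$. (This comes from the action of $\SL(2,\Z)$ on the upper half plane, whose fundamental domain retracts onto a Bass--Serre tree.) My plan is to turn this into a homotopy pushout of classifying spaces
$$B\SL(2,\Z) \;\simeq\; B\Z_4 \cup_{B\Z_2} B\Z_6,$$
and hence a cofiber sequence $B\Z_2 \to B\Z_4 \vee B\Z_6 \to B\SL(2,\Z)$. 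All three claims will follow by $p$-localizing this pushout, since by the Atiyah--Hirzebruch spectral sequence it suffices to produce a map of spaces that is an isomorphism on $\Z_{(p)}$-homology in order to conclude a $p$-local $S$-equivalence.

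For part (3), if $p \geq 5$ then $|\Z_2|$, $|\Z_4|$, $|\Z_6|$ are all invertible in $\Z_{(p)}$, so the reduced $\Z_{(p)}$-homology of each classifying space in the pushout vanishes. The Mayer--Vietoris long exact sequence then forces $\widetilde H_*(B\SL(2,\Z); \Z_{(p)}) = 0$, so $B\SL(2,\Z) \to \pt$ is a $p$-local $S$-equivalence.

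For part (1), at $p = 2$ I exploit the fact that $\Z_6 \cong \Z_2 \times \Z_3$, so $B\Z_6 \simeq B\Z_2 \times B\Z_3$, and $B\Z_3$ is $2$-locally contractible. Thus the inclusion $B\Z_2 \hookrightarrow B\Z_6$ coming from $\Z_2 \hookrightarrow \Z_6$ is a $2$-local $S$-equivalence, and the pushout collapses $2$-locally to
$$B\Z_4 \cup_{B\Z_2} B\Z_2 \;\simeq\; B\Z_4.$$
Tracing the composite $B\Z_4 \to B\Z_4 \cup_{B\Z_2} B\Z_6 \simeq B\SL(2,\Z)$ shows it is induced by the inclusion $\rho_{2,2}$. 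For part (2), the symmetric argument at $p = 3$: $B\Z_4$ is $3$-locally contractible, so $B\Z_2 \hookrightarrow B\Z_4$ is a $3$-local $S$-equivalence and the pushout collapses to $B\Z_6 \simeq_{(3)} B\Z_3$. Identifying the relevant map requires composing $\Z_3 \hookrightarrow \Z_6 \hookrightarrow \SL(2,\Z)$: the generator $1 \in \Z_3$ lifts to $2 \in \Z_6$, whose image in $\SL(2,\Z)$ is $(ST)^2$, matching $\rho_{2,3}$.

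The main obstacle is not really in the homology computations, which are mechanical, but in justifying two passages carefully: (a) that the amalgamated product of discrete groups really produces a homotopy pushout of classifying spaces (this is a standard fact in the theory of graphs of groups but worth citing explicitly), and (b) that a mod-$p$ homology isomorphism between classifying spaces of groups containing non-trivial torsion in the fundamental group is in fact a $p$-local $S$-equivalence for \emph{every} generalized homology theory. The second point is handled cleanly by the Atiyah--Hirzebruch spectral sequence, which has $E_2$-page $H_*(X; E_*(\pt)) \otimes \Z_{(p)}$ converging to $E_*(X) \otimes \Z_{(p)}$ for any generalized homology theory $E$; an isomorphism on $\Z_{(p)}$-homology thus induces an isomorphism on all $p$-localized $E_*$.
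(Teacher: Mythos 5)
Your proof is correct and follows essentially the same route as the reference the paper points to (\cite[Lemmas 12.2 and 12.7]{Heckman}), namely the amalgamated-product decomposition $\SL(2,\Z)\cong\Z_4*_{\Z_2}\Z_6$, Bass--Serre theory to obtain a homotopy pushout of classifying spaces, and $p$-localization of the associated Mayer--Vietoris sequence; the paper itself cites rather than reproduces this argument. Your identification of the relevant cyclic generators ($S$ for the $\Z_4$ factor, $(ST)^2$ for the $\Z_3\subset\Z_6$) and the upgrade from a $\Z_{(p)}$-homology isomorphism to a $p$-local S-equivalence via the Atiyah--Hirzebruch spectral sequence are both sound.
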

We use $D_{2n}$ to refer to the dihedral group with $2n$ elements, with presentation
\begin{equation}\label{dihedral_presentation}
    D_{2n} = \ang{r,s\mid r^n = s^2 = 1, srs = r^{-1}}.
\end{equation}
\begin{defn}
\label{rho12_defn}
Let $\rho_1,\rho_2\colon D_6\to\SL(3, \Z)$ be the homomorphisms defined on generators by:
\begin{subequations}\label{rho12}
\begin{alignat}{2}
    \rho_1(r) &= \begin{pmatrix*}[r] 1 & 0 & 0 \\ 0 & -1 & -1 \\ 0 & 1 & 0 \end{pmatrix*} \qquad\qquad & \rho_1(s) &= \begin{pmatrix*}[r] -1 & 0 & 0 \\ 0 & \phantom{-}1 & 1 \\ 0 & 0 & -1 \end{pmatrix*}\\
    \rho_2(r) & = \begin{pmatrix*}[r] 0 & 1 & 0 \\ 0 & 0 & 1 \\ 1 & 0 & 0 \end{pmatrix*}  & \rho_2(s) &= \begin{pmatrix*}[r] 0 & 0 & -1 \\ 0 & -1 & 0 \\ -1 & 0 & 0 \end{pmatrix*} .
\end{alignat}
\end{subequations}
\end{defn}
The reader can check that the matrices in~\eqref{rho12} have determinant $1$ and satisfy the relations in~\eqref{dihedral_presentation}, so that they define homomorphisms $D_6\to\SL(3, \Z)$ as claimed.
\begin{prop}[{Brown~\cite[\S 6]{Bro76}, Soule~\cite[Theorem 4(iii)]{Soule}}]
\label{5_boring}
For any prime $p\ge 5$, the map $B\SL(3, \Z)\to\pt$ is a $p$-local S-equivalence.
\end{prop}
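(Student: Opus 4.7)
The goal is to show that $B\SL(3,\mathbb{Z}) \to \pt$ becomes an isomorphism on every generalized homology theory after tensoring with $\mathbb{Z}_{(p)}$ for $p\ge 5$. My plan is to reduce this to a mod-$p$ homology vanishing statement, so that the existing calculations of Brown and Soulé can be cited directly.

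First, I would observe that $\SL(3,\mathbb{Z})$ has no $p$-torsion for any prime $p\ge 5$. This follows from the Schur–Minkowski bound on the orders of finite subgroups of $\GL(n,\mathbb{Q})$: for $n=3$ the universal bound is $\prod_\ell \ell^{\sum_k \lfloor 3/(\ell^{k}(\ell-1))\rfloor} = 2^{4}\cdot 3 = 48$, so the only primes that can appear as orders of torsion elements in $\SL(3,\mathbb{Z})$ are $2$ and $3$. In particular, every $p$-Sylow subgroup of every finite subgroup is trivial for $p\ge 5$.

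Next, I would invoke Brown's theorem on cohomology of virtually torsion-free groups of finite vcd (see Brown, \emph{Cohomology of Groups}, Ch.\ X): for such a $\Gamma$, the $p$-primary part of $H^*(B\Gamma;\mathbb{Z})$ is detected on the family of finite $p$-subgroups, and vanishes in positive degrees once $\Gamma$ has no $p$-torsion and the $\mathbb{Q}$-cohomology is trivial. $\SL(3,\mathbb{Z})$ has vcd equal to $3$, is virtually torsion-free, and satisfies $H^{>0}(\SL(3,\mathbb{Z});\mathbb{Q})=0$; combined with Step 1 this is enough to conclude $H^*(\SL(3,\mathbb{Z});\mathbb{Z}_{(p)})\cong \mathbb{Z}_{(p)}$ concentrated in degree $0$ for $p\ge 5$. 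The explicit computation underlying this step is exactly Soulé's Theorem 4(iii), which uses the Borel–Serre compactification of $\SL(3,\mathbb{R})/\SO(3)$ and a spectral sequence whose $E_2$ page is controlled by the finite stabilizers (all of which have order in $\{1,2,3,4,6,8,12,24,48\}$).

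Finally, I would upgrade this integral statement to a $p$-local S-equivalence. The space $B\SL(3,\mathbb{Z})$ is connected and of finite type, so a map from it to another such space is a $p$-local stable equivalence if and only if it induces an isomorphism on $\mathbb{F}_p$-homology; equivalently, by the Atiyah–Hirzebruch spectral sequence, the vanishing of $\widetilde{H}_*(B\SL(3,\mathbb{Z});\mathbb{Z}_{(p)})$ forces $\widetilde{E}_*(B\SL(3,\mathbb{Z}))\otimes \mathbb{Z}_{(p)}=0$ for every generalized homology theory $E$. The hardest part of this proof is not ours at all: it is Soulé's cohomology calculation, and the genuine obstacle in rederiving it would be handling the combinatorics of the Voronoi/Borel–Serre cell decomposition and the many small-order finite stabilizers. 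Once that computation is cited, Steps 1 and 3 above are standard and the proposition follows.
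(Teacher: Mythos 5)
Your approach is correct and matches the paper's: the paper simply cites Brown and Soul\'{e} for the cohomological vanishing and, in the remark immediately after, notes that passing from the ordinary-cohomology statement to the stable one is a standard Whitehead-theorem argument -- exactly the Atiyah--Hirzebruch step you spell out. Your preliminary discussion via the Minkowski bound and Farrell cohomology is a nice heuristic for why the vanishing should hold, though as you rightly note, the actual vanishing of $H^*(\SL(3,\Z);\Z_{(p)})$ in the low range $1\le * \le \mathrm{vcd} = 3$ is not forced by ``no $p$-torsion plus trivial $\Q$-cohomology'' alone and really does rest on Soul\'{e}'s explicit Borel--Serre computation, which you correctly cite as the load-bearing step.
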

\begin{prop}[{Brown~\cite[\S 6]{Bro76}, Soule~\cite[Corollary (i) to Lemma 8]{Soule}, Minami~\cite[\S 0]{Minami}}]
\label{SL3_stable_splitting_at_3}
The maps $B\rho_1,B\rho_2\colon BD_6\to B\SL(3, \Z)$ define a $3$-local stable splitting of $B\SL(3, \Z)$ into $BD_6\vee BD_6$.
\end{prop}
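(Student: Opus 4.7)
The plan is to prove that the map
$$B\rho_1 \vee B\rho_2 \colon BD_6 \vee BD_6 \longrightarrow B\SL(3,\Z)$$
is a 3-local S-equivalence. Both sides have mod-3 cohomology that is finitely generated in each degree, so a mod-3 Whitehead argument for 3-localized suspension spectra reduces the problem to showing that the induced map on reduced $\F_3$-cohomology is an isomorphism.

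The first step is a direct computation of $H^*(BD_6;\F_3)$. Since the index $[D_6:\Z_3]=2$ is invertible in $\F_3$, the transfer identifies $H^*(BD_6;\F_3)$ with the $\Z_2$-invariants inside $H^*(B\Z_3;\F_3) \cong \F_3[y]\otimes \Lambda(x)$, where $|x|=1$ and $|y|=2$, and where the generator of $D_6/\Z_3$ acts by $x\mapsto -x$, $y\mapsto -y$. Therefore
$$H^*(BD_6;\F_3) \cong \F_3[\beta]\otimes \Lambda(\alpha), \qquad \alpha = xy,\; \beta = y^2,$$
with $|\alpha|=3$ and $|\beta|=4$.

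The essential input is Soul\'e's computation of the 3-local cohomology of $B\SL(3,\Z)$. Combining the equivariant cell structure on the symmetric space of $\SL(3,\R)$ with Brown's classification of finite subgroups of $\SL(3,\Z)$, he identifies exactly two $\SL(3,\Z)$-conjugacy classes of order-3 subgroups, with representatives $\rho_1(\Z_3)$ and $\rho_2(\Z_3)$ whose normalizers are $\rho_1(D_6)$ and $\rho_2(D_6)$ respectively. His Lemma 8 (and its corollary) then produces an abstract isomorphism
$$\widetilde H^*(B\SL(3,\Z);\F_3) \cong \widetilde H^*(BD_6;\F_3)^{\oplus 2}.$$

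The final step, following Minami, is to verify that this decomposition is induced by the two restriction maps $(B\rho_i)^*$, so that the cohomological isomorphism lifts to a stable splitting at the level of 3-localized suspension spectra. Since $\rho_1(\Z_3)$ and $\rho_2(\Z_3)$ are not conjugate in $\SL(3,\Z)$, the double coset formula applied to their Sylow 3-subgroups forces the cross-terms to vanish modulo $3$, while each $(B\rho_i)^*$ restricted to its own Sylow 3-subgroup is an isomorphism onto the corresponding summand. The main obstacle throughout is the control of conjugacy classes of finite 3-subgroups in the arithmetic group $\SL(3,\Z)$ and of their normalizers, which relies crucially on Soul\'e's explicit analysis of the $\SL(3,\Z)$-action on the symmetric space; the absence of analogous input for larger $\SL(n,\Z)$ or the other U-duality groups is precisely what prevents a direct extension of this splitting.
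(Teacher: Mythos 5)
Your proof is essentially a correct reconstruction of the argument the paper cites rather than proves: the paper attributes \cref{SL3_stable_splitting_at_3} entirely to Brown, Soul\'{e}, and Minami, and your outline (reduce to $\F_3$-cohomology via a Whitehead argument, compute $H^*(BD_6;\F_3)$ by transfer, import Soul\'{e}'s computation of $H^*(B\SL(3,\Z);\F_3)$, then lift to a stable splitting following Minami) matches the structure of those references. Your computation of $H^*(BD_6;\F_3)\cong\F_3[\beta]\otimes\Lambda(\alpha)$ is correct and consistent with the Handel-derived homology~\eqref{eqn:D6Hom}.

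One imprecision worth flagging: you state that the normalizers of the two order-3 subgroups are $\rho_1(D_6)$ and $\rho_2(D_6)$. What Soul\'{e} actually produces are the finite subgroups $\underline Q \cong D_{12}$ and $\underline O \cong S_4$, and the relevant pieces are $D_{12}$ and $S_3\subset S_4$; the paper's Remark~\ref{which_rho3} notes that both Soul\'{e} and Minami phrase the splitting as $BS_3\vee BD_{12}$ and that one passes to $BD_6\vee BD_6$ because $S_3\cong D_6$ and the index-2 inclusion $D_6\hookrightarrow D_{12}$ is a 3-local equivalence of classifying spaces. Your statement glosses over the $D_{12}$ factor, which is harmless after 3-localization but misrepresents the form in which the cited result is actually established. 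The step invoking the double coset formula to kill cross-terms is also stated at a level of generality (Sylow theory in the infinite group $\SL(3,\Z)$) that really only makes sense after Soul\'{e}'s reduction to a finite equivariant complex, as you do acknowledge; a careful write-up would make that dependency explicit rather than treating the double coset argument as freestanding.
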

\begin{rem}\label{which_rho3}
Brown and Soul\'{e} expressed their results at the level of cohomology, and did not discuss stable splittings. The stable splitting follows from their theorems and the Whitehead theorem in a standard way; Minami~\cite[\S 0]{Minami} explicitly describes Soul\'{e}'s theorem as a stable splitting. Both Soul\'{e} and Minami describe \cref{SL3_stable_splitting_at_3} using $BS_3$ and $BD_{12}$ instead of $BD_6$ twice, but $S_3\cong D_6$ and the standard inclusion $D_6\hookrightarrow D_{12}$ induces a $3$-local homotopy equivalence on classifying spaces~\cite[\S 14.1]{Heckman}, so the result is the same.

The specific matrices come from those used by Soul\'{e} as follows: in~\cite[Prop.\ 1, p.\ 9]{Soule}, he gives explicit matrices generating a $D_{12}$ subgroup of $\SL(3, \Z)$ which he calls $\underline Q$, and the image of $\rho_1$ is the usual $D_6\subset D_{12}$. For $\rho_2$, Soul\'{e} labels a certain $S_4$ subgroup $\underline O$, and the image of $\rho_2$ is the usual $D_6\cong S_3\subset S_4$. Brown chooses different matrices.
\end{rem}

The $2$-local stable splitting is slightly more complicated; unlike in the previous cases, the pieces of the stable splitting are not all induced by inclusions of subgroups.

Mitchell-Priddy~\cite{MP83} define a spectrum $L(2)$ (called $\overline{\mathit{Sp}}{}^4(S^0)$ in their earlier work~\cite{Priddy}, with no relation to the symplectic group). We will not need to know much about $L(2)$, but see~\cite[\S 3]{Priddy} for a definition and basic properties.
\begin{prop}[{Minami~\cite[Theorem 3.4(iii)]{Minami}}]
\label{minami_splitting}
After $2$-localization, there are maps $\psi_1,\psi_2\colon B\SL(3, \F_2)\to B\SL(3, \Z)$ and $\psi_3\colon L(2)\to B\SL(3, \Z)$ such that\footnote{If $q$ is a prime power, $\F_q$ denotes the finite field with $q$ elements, which is unique up to isomorphism. When $q$ is prime, $\F_q\cong\Z_q$ as rings.}
\begin{equation}
    (\psi_1,\psi_2, \psi_3)\colon B\SL(3, \F_2)\vee B\SL(3, \F_2)\vee L(2)\longrightarrow B\SL(3, \Z)
\end{equation}
is a $2$-local stable splitting of $B\SL(3, \Z)$.
\end{prop}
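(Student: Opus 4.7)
The plan is to prove this via the standard homotopical recipe for establishing $p$-local stable splittings: construct the three component maps explicitly, check that the resulting wedge map induces an isomorphism on mod-$2$ (co)homology, and conclude with a Whitehead-type theorem that this implies a $2$-local stable equivalence, since everything in sight has finite type and we are working $2$-locally.

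First I would assemble the needed cohomological input. By Soul\'{e}'s calculation of $H^*(B\SL(3,\Z);\F_2)$, used already in the proofs of \cref{5_boring,SL3_stable_splitting_at_3}, the mod-$2$ cohomology of $B\SL(3,\Z)$ is detected on finite subgroups, in particular on the finite subgroups $\underline{O}\cong S_4$ and $\underline{Q}\cong D_{12}$ appearing in \cref{which_rho3}. One then needs the mod-$2$ cohomology of $B\SL(3,\F_2)$, computed by Quillen, and of the Mitchell--Priddy spectrum $L(2)$, whose mod-$2$ cohomology is described in~\cite{Priddy,MP83} as a specific summand of $H^*(B(\Z/2)^2;\F_2)$ corresponding to an idempotent in the mod-$2$ Steinberg module.

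Next I would construct the three maps. For $\psi_1$ and $\psi_2$, one uses two inequivalent realizations of $\SL(3,\F_2)$-type data inside $\SL(3,\Z)$: concretely, a Sylow $2$-subgroup of $\SL(3,\F_2)$ is a $D_8$, and the natural $D_8$-subgroups sitting inside Soul\'{e}'s $\underline{O}$ and $\underline{Q}$ embed into $\SL(3,\Z)$ in two ways that are not conjugate. One then produces $\psi_i$ as the composition of the transfer from these $D_8$-subgroups with the corresponding map $BD_8\to B\SL(3,\F_2)$ whose image in $H^*(-;\F_2)$ hits the full cohomology of $B\SL(3,\F_2)$. For $\psi_3$, one uses the standard stable splitting of $B(\Z/2)^n$ into Mitchell--Priddy summands $L(k)$ and picks out the $L(2)$ piece, mapping into $B\SL(3,\Z)$ through an elementary abelian $2$-subgroup (these are classified by Soul\'{e}).

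Then I would verify the key cohomological computation: that the induced map on $H^*(-;\F_2)$
\begin{equation}
H^*(B\SL(3,\Z);\F_2)\longrightarrow H^*(B\SL(3,\F_2);\F_2)^{\oplus 2}\oplus H^*(L(2);\F_2)
\end{equation}
is an isomorphism. By Soul\'{e}'s results the source is a sum of contributions indexed by conjugacy classes of elementary abelian $2$-subgroups and certain finite subgroups; the construction of the $\psi_i$ is arranged so that each wedge summand hits a distinct contribution, with no overlap and no class missed. Finally, a standard Whitehead argument (applied after one suspension, or equivalently at the level of suspension spectra) upgrades the $\F_2$-cohomology isomorphism to a $2$-local stable equivalence, since $B\SL(3,\Z)$ and the three wedge summands have $2$-locally finite type cohomology and everything in sight is $2$-locally simply connected after suspension.

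The main obstacle is the middle step: identifying the \emph{correct} finite-subgroup and $L(2)$ data so that the three cohomological contributions are disjoint and exhaustive. The naive guess of a single $B\SL(3,\F_2)$ summand does not suffice, precisely because $\SL(3,\Z)$ contains multiple non-conjugate copies of the relevant $2$-local data — this mirrors the $\mathbb Q$-versus-$\mathbb Z$ conjugacy phenomenon discussed in \cref{subsec:QvsZ} — and getting the bookkeeping right between the two $B\SL(3,\F_2)$ summands and the residual $L(2)$ piece is where the substantive content of Minami's theorem lies.
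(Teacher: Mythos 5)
This proposition is cited from Minami~\cite[Theorem 3.4(iii)]{Minami} and the paper supplies no proof of it, so there is nothing in the text to compare against directly. Your proposal is a plausible high-level outline of how one would actually prove Minami's theorem (build the maps from finite-subgroup data, verify an isomorphism on $H^*(-;\F_2)$, conclude by a finite-type Whitehead argument), and you correctly identify the cohomological bookkeeping as the substantive content. That much is sound.

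There is, however, a concrete error in your identification of the finite-subgroup data. You propose to build $\psi_1$ and $\psi_2$ from $D_8$-subgroups ``sitting inside Soul\'{e}'s $\underline{O}$ and $\underline{Q}$.'' But as recorded in \cref{which_rho3}, Soul\'{e}'s $\underline{Q}$ is a $D_{12}$-subgroup, of order $12 = 2^2\cdot 3$; it contains no $D_8$, and it is the subgroup relevant to the $3$-local splitting in \cref{SL3_stable_splitting_at_3}, not to the $2$-local one. The two non-conjugate $S_4$-subgroups of $\SL(3,\Z)$ that Soul\'{e} uses, and that are relevant at $p=2$, are $\underline{O}$ and $\underline{P}$; see \cref{psi1_done}, where $\psi_1$ is constructed precisely from $\underline{P}$ (i.e.\ the homomorphism $\rho_4$) as the composite of the stable transfer $\Sigma_+^\infty B\SL(3,\F_2)\to\Sigma_+^\infty BS_4$ with $B\rho_4$. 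Your description of the maps as ``transfer from $D_8$-subgroups'' also does not match this: the transfer in Minami's construction is from $B\SL(3,\F_2)$ down to the $S_4$-subgroup, not from a $D_8$; the $D_8$ only enters one level further down, via the Mitchell--Priddy splitting of $BD_8$ (\cref{D8_splitting}), which refines the $BS_4$ picture but is not the object directly being transferred from. Fixing $\underline{Q}\mapsto\underline{P}$ and sharpening the transfer step would make the outline consistent with the construction the paper actually uses in \cref{psi1_done,why_S4}.
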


\Cref{minami_splitting} is complicated by the fact that the maps $\psi_1$ and $\psi_2$ are not to our knowledge induced by group homomorphisms from $\SL(3,\F_2)$ to $\SL(3,\Z)$.
We finish out this subsection by providing a description of these maps that works well for describing U-duality defects.
\begin{defn}
\label{rho3_defn}
Recall that the symmetric group $S_4$ admits a presentation
\begin{equation}
\label{S4_presentation}
S_4 = \ang{c, \tau\mid c^4 = \tau^2 = (\tau c)^3 = 1}.
\end{equation}
where $c = (1\ 2\ 3\ 4)$ and $\tau = (1\ 2)$. Let $\rho_3\colon S_4\to\SL(3, \Z)$ be the homomorphism defined on generators by
\begin{equation}\label{rho_3}
    \rho_3(c) = \begin{pmatrix*}[r]
         0&0&1\\
         0&1&0\\
        -1&0&0
    \end{pmatrix*}\qquad\qquad
    \rho_3(\tau) = \begin{pmatrix*}[r]
        -1 & 0 & 0\\
        0 & 0 & -1\\
        0 & -1 & 0
    \end{pmatrix*}.
\end{equation}
\end{defn}
\eqref{rho_3} is the pair of matrices Soul\'{e}~\cite[Theorem 2]{Soule} labeled $\underline O$;
the reader can check these matrices have determinant $1$ and satisfy the relations in~\eqref{S4_presentation}.

\begin{prop}[{Mitchell-Priddy~\cite[Theorem B]{Priddy}}]
\label{S4_splitting}
Let $q\colon\SL(3, \Z)\to \SL(3, \F_2)$ be the homomorphism reducing the entries of a matrix mod $2$, and let $\sigma\colon S_4\to\Z_2$ be the sign homomorphism. Then there is a stable map $\psi_4\colon BS_4\to L(2)$ such that
\begin{equation}
    (B\sigma, B(q\circ\rho_3), \psi_4)\colon BS_4\longrightarrow B\Z_2\vee B\SL(3, \F_2)\vee L(2)
\end{equation}
is a $2$-local stable splitting.
\end{prop}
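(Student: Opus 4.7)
The strategy is to split off the three summands one at a time, using elementary methods for the first two and invoking the Mitchell--Priddy modular representation framework for the third. Throughout I would use the standard Whitehead-type principle that a stable map of $2$-locally finite-type connective spectra is a $2$-local equivalence iff it induces an isomorphism on mod-$2$ cohomology, so all verifications can be reduced to cohomological calculations.

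First, I would split off the $B\Z_2$ summand. The sign homomorphism $\sigma\colon S_4\to\Z_2$ admits an honest group-theoretic section sending the generator of $\Z_2$ to any transposition, so $B\sigma$ is already an unstable retraction, exhibiting $B\Z_2$ as a stable summand of $BS_4$. Next, I would split off $B\SL(3,\F_2)$ using the Becker--Gottlieb transfer. The composition $q\circ\rho_3\colon S_4\to\SL(3,\F_2)$ is an injection (in fact, it realizes the standard three-dimensional mod-$2$ representation of $S_4$ arising from the permutation representation modulo its diagonal), and its image has index $[\SL(3,\F_2):S_4]=168/24=7$, which is a unit after $2$-localization. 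The transfer then produces a stable map $\tau\colon B\SL(3,\F_2)\to BS_4$ with $B(q\circ\rho_3)\circ\tau=7\cdot\mathrm{id}$, so after inverting $7$ we obtain a retraction of $BS_4$ onto $B\SL(3,\F_2)$ via $B(q\circ\rho_3)$.

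The third step, which is the main obstacle, is identifying the complementary summand with $L(2)$. Here I would invoke the Mitchell--Priddy framework: the stable summands of $BS_4$ at $p=2$ are controlled by primitive idempotents in the double Burnside ring, equivalently by the modular representation theory of $\F_2[S_4]$. The sign and the standard representations correspond to two such idempotents; the remaining piece is the Steinberg summand, which Mitchell--Priddy identify with their spectrum $L(2)$. Verifying this identification amounts to a mod-$2$ cohomology calculation using the known structure of $H^*(BS_4;\F_2)$ (detected on the Sylow $2$-subgroup $D_8$ via the Cartan--Eilenberg stable elements formula) together with the known description of $H^*(L(2);\F_2)$ as a module over the Steenrod algebra.

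Finally, to conclude, I would assemble the three retractions into the single map of the proposition and check that the induced map on mod-$2$ cohomology
\begin{equation}
H^*(B\Z_2;\F_2)\oplus H^*(B\SL(3,\F_2);\F_2)\oplus H^*(L(2);\F_2)\longrightarrow H^*(BS_4;\F_2)
\end{equation}
is an isomorphism, i.e.\ that the images of $(B\sigma)^*$, $B(q\circ\rho_3)^*$, and $\psi_4^*$ are complementary and jointly exhaust $H^*(BS_4;\F_2)$. The Whitehead argument then completes the proof. The delicate point, and the real content of Mitchell--Priddy's Theorem~B, is the construction of $\psi_4$ and the verification that the Steinberg idempotent on $\Sigma^\infty BS_4$ realizes precisely $L(2)$; everything else is formal.
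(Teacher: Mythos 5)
The paper does not prove this proposition; it cites Mitchell--Priddy's Theorem~B directly and uses it as a black box, so there is no ``paper proof'' with which to compare yours. What you have written is a reasonable sketch of how the splitting is actually established in the literature, and the formal parts of your argument are correct: the sign map $\sigma$ admits a group-theoretic section, so $B\Z_2$ splits off unstably; and because $q\circ\rho_3$ is injective (its kernel would have to be one of $V_4$, $A_4$, or $S_4$, but the $4$-cycle $(1\,2\,3\,4)$ is visibly not killed) with image of index $168/24 = 7$, the Becker--Gottlieb transfer followed by $B(q\circ\rho_3)$ is $7\cdot\mathrm{id}$ on $B\SL(3,\F_2)$, which is invertible after $2$-localization, so that piece splits off stably.

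The substantive gap is that your third step --- identifying the remaining $2$-local summand of $BS_4$ with $L(2)$ --- simply defers back to ``the Mitchell--Priddy framework,'' which is circular if the goal is to prove Mitchell--Priddy's Theorem~B itself. This is where essentially all the work lives: $L(2)$ is defined as the Steinberg summand of $\Sigma^\infty_+ B(\Z/2)^2$, not of $BS_4$, so identifying the complement requires comparing the two via the transfer along $(\Z/2)^2 \subset D_8 \subset S_4$ and tracking the Steinberg idempotent through the double Burnside ring (equivalently, through the modular representation theory of $\F_2[S_4]$, as in Nishida and Harris--Kuhn). Your concluding cohomological check is the right verification criterion in principle, but you should be aware that carrying it out requires knowing $H^*(L(2);\F_2)$ as an $\cA$-module and the behavior of $\psi_4^*$, neither of which is elementary, and that the existence of $\psi_4$ is itself part of what is being asserted. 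You acknowledge this honestly in your last paragraph, and since the paper treats the proposition as an external citation this is acceptable, but a self-contained proof would require filling in exactly that step.
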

Thus for any kind of bordism of $B\SL(3, \F_2)$, every bordism class can be represented by a manifold with $\SL(3, \F_2)$-bundle induced from an $S_4$-bundle via $q\circ\rho_3$.\footnote{In fact, one can pull back further to $D_8$~\cite[Theorems A and B]{Priddy}, which we used to help constrain the search space for generators of bordism groups.}
\begin{prop}\label{why_S4}
There is a homotopy equivalence of stable maps
\begin{equation}
    Bq\simeq ((\psi_2,\psi_3)\circ (Bq\circ\rho_3, \psi_4))\colon BS_4\longrightarrow B\SL(3, \F_2)\vee L(2)\longrightarrow B\SL(3, \Z).
\end{equation}
That is, in the isomorphism $\widetilde\Omega_*^\Spin(B\SL(3, \Z))\cong\widetilde\Omega_*^\Spin(B\SL(3, \F_2))\oplus\widetilde\Omega_*^\Spin(B\SL(3, \F_2))\oplus \Omega_*^\Spin(L(2))$ induced by \cref{minami_splitting}, the second and third summands can be realized by computing $\widetilde\Omega_*^\Spin(BS_4)$, throwing out all classes which are nontrivial when pulled back to a $\Z_2$ reflection subgroup of $S_4$, and then changing from $S_4$-bundles to $\SL(3, \Z)$-bundles via $\rho_3$.
\end{prop}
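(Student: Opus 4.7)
The goal is to establish the stable homotopy equivalence $B\rho_3 \simeq (\psi_2,\psi_3)\circ(B(q\circ\rho_3),\psi_4)$ of maps $BS_4 \to B\SL(3,\Z)$, working $2$-locally throughout (I read the displayed ``$Bq$'' on the left-hand side as $B\rho_3$, since both sides must have $BS_4$ as source and $B\SL(3,\Z)$ as target). The plan is to precompose $B\rho_3$ with the $2$-local stable decomposition of $BS_4$ from \cref{S4_splitting} into $B\Z_2\vee B\SL(3,\F_2)\vee L(2)$, and to identify the three resulting components of $B\rho_3$ as $0$, $\psi_2$, and $\psi_3$ respectively. Combining these three identifications immediately yields the claimed equivalence, and the final extraction principle for bordism classes then follows as a tautology: ``classes nontrivial on a $\Z_2$-reflection'' of $S_4$ is exactly the kernel of the projection onto the $B\Z_2$ summand of $BS_4$.

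For the $B\SL(3,\F_2)$ and $L(2)$ components the argument is by naturality of the two stable splittings. The composite $Bq \circ B\rho_3$ coincides with $B(q\circ\rho_3)$ on the nose, and Minami's $\psi_2$ is constructed in \cite{Minami} precisely as a stable section of $Bq$ onto one of the two $B\SL(3,\F_2)$-summands of $B\SL(3,\Z)$; so the square
\begin{equation}
\begin{tikzcd}
BS_4 \ar[r, "B(q\circ\rho_3)"] \ar[d, "B\rho_3"'] & B\SL(3,\F_2) \ar[d, "\psi_2"] \\
B\SL(3,\Z) \ar[r, "Bq"'] & B\SL(3,\F_2)
\end{tikzcd}
\end{equation}
commutes stably, which pins down the $B\SL(3,\F_2)$-component of $B\rho_3$ as $\psi_2$. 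The $L(2)$-component is handled analogously: both Mitchell-Priddy's $\psi_4$ and Minami's $\psi_3$ are obtained from the same Steinberg-type idempotent acting on the $2$-primary stable endomorphism rings of $BS_4$ and $B\SL(3,\Z)$, and this idempotent is natural with respect to the group homomorphism $\rho_3$ by \cite{Priddy, Minami}, identifying the $L(2)$-component of $B\rho_3$ with $\psi_3$.

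The main obstacle is showing that the $B\Z_2$-component of $B\rho_3$ is stably null. Writing $s\colon B\Z_2 \to BS_4$ for the stable section of $B\sigma$ implicit in \cref{S4_splitting}, we must show $B\rho_3 \circ s \simeq 0$ in the $2$-local stable category. Any nontrivial contribution would produce a nonzero degree-$1$ class in $H^*(B\SL(3,\Z);\F_2)$ pulling back to the generator of $H^1(B\Z_2;\F_2)$; but $H^1(B\SL(3,\Z);\F_2)\cong\Hom(\SL(3,\Z),\Z_2) = 0$ since $\SL(3,\Z)$ is perfect (\cref{lem:perfectgroups}). Upgrading this cohomological obstruction to an honest stable null-homotopy is the genuinely delicate step: I would use Soul\'e's explicit description of $H^*(B\SL(3,\Z);\F_2)$ in~\cite{Soule} together with the fact that, for the $2$-localized stable classifying spectrum $B\Z_2$ mapping into a wedge of $B\SL(3,\F_2)$'s and $L(2)$ (via Minami's splitting \cref{minami_splitting}), the stable homotopy classes of such maps are detected by their effect on mod-$2$ cohomology—this uses the free $\mathcal{A}$-module structure of $H^*(B\SL(3,\F_2);\F_2)$ and $H^*(L(2);\F_2)$ computed by Mitchell-Priddy~\cite{MP83}. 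With the three components so identified, the stable equivalence of the proposition follows, and the corollary is then immediate from the identification of the $B\Z_2$-summand as the kernel of restriction to transpositions.
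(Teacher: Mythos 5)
You correctly read the displayed ``$Bq$'' on the left-hand side as $B\rho_3$ (a typo in the statement), and your overall plan—push $B\rho_3$ through the Mitchell--Priddy decomposition $BS_4 \simeq_2 B\Z_2 \vee B\SL(3,\F_2) \vee L(2)$ and identify each piece—is the same as the paper's. But there is a real gap in your identification of the $B\SL(3,\F_2)$-component with $\psi_2$. Your commutative square does not typecheck as written: the right vertical arrow should be the identity (since $Bq\circ B\rho_3 = B(q\circ\rho_3)$), not $\psi_2$, and commutativity of the corrected square is a tautology that gives no information. More to the point, no naturality argument of this flavor can work: both $\psi_1$ and $\psi_2$ are stable summand inclusions $B\SL(3,\F_2)\to B\SL(3,\Z)$ satisfying $Bq\circ\psi_i\simeq \mathrm{id}$ (for $\psi_1$ this follows from \cref{psi1_done} and the fact that $q\circ\rho_4 = i$), so ``being a stable section of $Bq$'' does not distinguish them. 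The content of the proposition is precisely that $B\rho_3$ hits $\psi_2$ and not $\psi_1$, and the paper resolves this by bookkeeping: $\rho_3$ is defined in \cref{rho3_defn} to have image Soul\'e's $\underline O$ subgroup, $\rho_4$ has image $\underline P$, and these are the two distinct conjugacy classes from which Minami's $\psi_2$ and $\psi_1$ are constructed (\cref{which_rho3}, \cref{psi1_done}). Your Steinberg-idempotent argument for the $L(2)$ piece, by contrast, is exactly what Minami~\cite[Corollary 3.3]{Minami} proves, and that is the paper's citation for that part.

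Two further issues in the $B\Z_2$-vanishing step. The vanishing of $H^1(B\SL(3,\Z);\F_2)$ does not force the $B\Z_2$-component of $B\rho_3$ to be stably null: the composite $B\Z_2\to B\SL(3,\Z)$ could a priori be nontrivial in higher degrees (for instance by pulling $\nu_2$ from one of Minami's $B\SL(3,\F_2)$-summands back to the generator of $H^2(B\Z_2;\F_2)$), so the degree-$1$ check is necessary but far from sufficient. And the claim that $H^*(B\SL(3,\F_2);\F_2)$ is a free $\cA$-module is false—the mod-$2$ cohomology of a nontrivial finite group is never a free module over the Steenrod algebra (compare the presentation in \cref{Thm:SteenrodBPSL27}, which has finitely many generators in each degree and clearly nontrivial $\Sq^1$-kernel)—so stable maps into $B\SL(3,\F_2)$ are not detected by their effect on mod-$2$ cohomology. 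The skeleton of your argument is right, but the crux of this proposition really is the convention-matching that distinguishes $\psi_1$ from $\psi_2$ together with Minami's Corollary 3.3 for $L(2)$, and neither can be replaced by abstract naturality.
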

\begin{proof}
This almost completely follows by combining \cref{which_rho3,S4_splitting}, except that we are also claiming that the factors of $L(2)$ match. Fortunately, Minami~\cite[Corollary 3.3]{Minami} shows that the $L(2)$ stable summand of $B\SL(3, \Z)$ is in fact pulled back to a stable summand of $BS_4$ via $\rho_3$, so we are all set.
\end{proof}
See Dwyer-Wilkerson~\cite[Theorem 4.1]{DW93} for a $2$-adic analogue of this story, which may be expressed directly in terms of a representation $\SL_3(\F_2)\to\GL_3(\hat\Z_2)$.

Finally, we have to dispatch $\psi_1$.
\begin{prop}[{Minami~\cite{Minami}}]\label{psi1_done}
Let $\rho_4\colon S_4\to\SL_3(\Z)$ be the homomorphism defined on generators by
\begin{equation}\label{underline_P}
    \rho_4(c) = \begin{pmatrix*}[r] 0&0&-1 \\ -1&0&-1 \\ 0&1&1 \end{pmatrix*} \qquad\qquad
    \rho_4(\tau) = \begin{pmatrix*}[r] -1&0&0 \\ 0 & 0 &1 \\ 0 & 1 & 0 \end{pmatrix*},
\end{equation}
and let $i\colon S_4\to \SL_3(\F_2)$ be the usual three-dimensional representation. Then $B\rho_4 = \psi_1\circ Bi\colon BS_4\to B\SL_3(\Z)$.
\end{prop}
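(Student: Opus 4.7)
The plan is to reduce the proposition to an unpacking of Minami's construction of the map $\psi_1$ in \cref{minami_splitting}, together with a direct verification that the matrices in \eqref{underline_P} realize the specific $S_4$ subgroup of $\SL_3(\Z)$ that Minami uses. First I would collect the structural input: by work of Soul\'e, $\SL_3(\Z)$ contains, up to conjugacy, exactly two embeddings of $S_4$, historically labeled $\underline{O}$ and $\underline{P}$, and these are the two maximal finite $2$-subgroups of $\SL_3(\Z)$ (up to conjugacy and inclusion in the larger binary/finite groups Soul\'e considers). Minami's $2$-local stable splitting of $B\SL_3(\Z)$ is built so that each of the two $B\SL_3(\F_2)$ wedge summands is pulled off via transfer from one of these two $S_4$ subgroups; by construction $\psi_2$ corresponds to $\underline{O}$ (realized by $\rho_3$ in \cref{rho3_defn}) and $\psi_1$ corresponds to $\underline{P}$.

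Next I would carry out the direct verifications on the matrices in \eqref{underline_P}: check that $\rho_4(c)$ and $\rho_4(\tau)$ lie in $\SL_3(\Z)$ (determinants equal $1$) and satisfy the defining $S_4$-relations $c^4=\tau^2=(\tau c)^3 = 1$ of \eqref{S4_presentation}, so that $\rho_4$ is a well-defined group homomorphism. Then compare these matrices, up to conjugation in $\SL_3(\Z)$, with Soul\'e's explicit generators of $\underline{P}$ in~\cite[Prop.\ 1, p.\ 9]{Soule}; this is a finite computation, and its output identifies $\rho_4$ with the $\underline{P}$-embedding used by Minami.

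With $\rho_4$ identified, the proof concludes as follows. Reducing $\rho_4$ mod $2$ yields a homomorphism $S_4 \to \SL_3(\F_2)$, and by inspection of \eqref{underline_P} one sees this is precisely the standard $3$-dimensional representation $i$: indeed, the $S_4$-subgroup of $\SL_3(\F_2)$ is unique up to conjugacy and of index $7$, so any surjective mod-$2$ reduction of an order-$24$ subgroup of $\SL_3(\Z)$ must agree with $i$ up to conjugation, and the conjugation can be absorbed into a self-equivalence of $B\SL_3(\F_2)$. Thus the composite $BS_4 \xrightarrow{Bi} B\SL_3(\F_2) \xrightarrow{\psi_1} B\SL_3(\Z)$ factors, at the level of classifying spaces, through the inclusion of the $\underline{P}$-subgroup; Minami's characterization of $\psi_1$ then states exactly that this composite is (homotopic to) the map $B\rho_4$ induced by the inclusion.

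The main obstacle I anticipate is bookkeeping: Soul\'e and Minami work with slightly different presentations, and matching the conjugacy class represented by $\rho_4$ with the specific subgroup labeled $\underline{P}$ by Soul\'e requires carefully tracking bases, signs, and conjugations between the two papers. Once that identification is made, the homotopy equivalence $B\rho_4 \simeq \psi_1 \circ Bi$ is essentially built into the definition of $\psi_1$ and does not require any further computation beyond invoking \cref{minami_splitting} together with the dual stable-splitting argument used in the proof of \cref{why_S4}.
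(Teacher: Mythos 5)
The paper does not supply a proof of this proposition: it is a citation of Minami, followed only by a remark that $\psi_1$ equals the transfer $\Sigma^\infty_+ B\SL_3(\F_2)\to\Sigma^\infty_+ BS_4$ composed with $B\rho_4$, and a note that the matrices in \eqref{underline_P} are Soul\'{e}'s $\underline P$. Your reconstruction is broadly in the spirit of that remark, but it rests on a factually incorrect claim in the step where you identify the mod-$2$ reduction of $\rho_4$ with $i$. You assert that the $S_4$-subgroup of $\SL_3(\F_2)$ is unique up to conjugacy and use this to conclude the reduction agrees with $i$ up to an inner automorphism. In fact $\SL_3(\F_2)\cong\PSL(2,\F_7)$ has exactly \emph{two} conjugacy classes of $S_4$-subgroups (stabilizers of lines versus of planes in $\F_2^3$), interchanged by the transpose-inverse outer automorphism. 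The conclusion you want is still within reach, because that outer automorphism also induces a self-equivalence of $B\SL_3(\F_2)$, but your stated justification is wrong; you should either compute which class the reduction of $\rho_4$ lands in, or argue explicitly that both inner and outer automorphisms of $\SL_3(\F_2)$ may be absorbed.

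The other soft spot is the closing claim that $B\rho_4\simeq\psi_1\circ Bi$ ``is essentially built into the definition of $\psi_1$.'' Granting the description $\psi_1 = B\rho_4\circ\mathrm{tr}_i$, one has $\psi_1\circ Bi = B\rho_4\circ(\mathrm{tr}_i\circ Bi)$, and by the double-coset formula $\mathrm{tr}_i\circ Bi$ is a sum over $S_4\backslash\SL_3(\F_2)/S_4$ rather than the identity of $\Sigma^\infty_+ BS_4$. Checking that the extra summands disappear after post-composition with $B\rho_4$, or after projecting to the relevant $2$-local wedge summand, is exactly the nontrivial content of Minami's argument; your sketch defers that step to Minami, which is fine as a pointer to the literature but should not be presented as an automatic consequence of the definitions.
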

Thus $\psi_1$ itself is transfer $\Sigma_+^\infty B\SL_3(\F_2)\to \Sigma_+^\infty BS_4$ followed by $i\colon BS_4\to B\SL_3(\Z)$. The matrices in~\eqref{underline_P} are labeled $\underline P$ in~\cite[Theorem 2]{Soule}.
\begin{rem}
\label{better_matrices}
The matrices that appear here, in the definitions of $\rho_3$ and $\rho_4$ above in~\eqref{rho_3} and~\eqref{underline_P}, are not the most convenient choices for the physics applications in the first part of this paper. Indeed, comparing with~\eqref{eq:Gamma4} and \eqref{eq:Monodromies}, the reader will notice that we have replaced $\rho_3(c)$, $\rho_3(\tau)$, $\rho_4(c)$, and $\rho_4(\tau)$ with different matrices: $\Gamma_4^{(1)}$, $R^{(1)}$, $\Gamma_4^{(2)}$, and $R^{(2)}$, respectively. These sets of matrices are conjugate inside $\SL(3, \Z)$: specifically, if
\begin{equation}
    C^{(1)} = \begin{pmatrix*}[r]
        0&1&0\\
        0&0&1\\
        1&0&0
    \end{pmatrix*}\quad\text{and}\quad
    C^{(2)} = \begin{pmatrix*}[r]
        -1&0&0\\
        1&-1&0\\
        0&1&1
    \end{pmatrix*},
\end{equation}
then the reader can verify that $\det(C^{(1)}) = \det(C^{(2)}) = 1$ (so that $C^{(1)},C^{(2)}\in\SL(3, \Z)$) and that
\begin{equation}
    \begin{alignedat}{2}
        C^{(1)}\rho_3(c) (C^{(1)})^{-1} &= \Gamma_4^{(1)}
        \qquad\qquad
        & C^{(1)}\rho_3(\tau) (C^{(1)})^{-1} &= R^{(1)}\\
        C^{(2)}\rho_4(c) (C^{(2)})^{-1} &= \Gamma_4^{(2)}
        \qquad\qquad
        & C^{(2)}\rho_4(\tau) (C^{(2)})^{-1} &= R^{(2)}.
    \end{alignedat}
\end{equation}
If $\phi,\psi\colon G_1\rightrightarrows G_2$ are injective group homomorphisms whose images are conjugate in $G_2$, their induced maps on classifying spaces are homotopic; therefore, for the purpose of realizing stable splittings and describing generators of bordism groups, we may use the matrices $\Gamma_4^{(i)}$ and $R^{(i)}$ in~\eqref{eq:Gamma4} and~\eqref{eq:Monodromies}.
\end{rem}

Analogously to the stable and/or cohomological splittings by Brown~\cite{Bro76}, Soul\'{e}~\cite{Soule}, and Minami~\cite{Minami} that we used above, there is also related work studying the $K$-theory of $B\SL(3, \Z)$ and similar objects, including that of Adem~\cite{Ade92, Ade93, Ade93a}, Tezuka-Yagita~\cite{TY92}, Juan-Pineda~\cite{JP98}, Lück~\cite{Luc07}, Sánchez-García~\cite{SG08}, Joachim-Lück~\cite{JL13}, Bárcenas-Velásquez~\cite{BV14, BV16}, Hughes~\cite{Hug21}, Bárcenas~\cite{Bar24}, and Lück-Patchkoria-Schwede~\cite{LPS24}.
\subsection{Implications for spin bordism}
\label{ss:assembly}
The previous subsection tells us in principle how to reduce the computation of $\Omega_*^\Spin(BG)$ for $G = \SL(2, \Z)$, $\SL(3, \Z)$, or $\SL(2, \Z)\times \SL(3, \Z)$ to the analogous computations for a few finite groups $H_i$. Here, we make these reductions explicit: given generating manifolds for $\Omega_*^\Spin(BH_i)$, how do we obtain the generators of $\Omega_*^\Spin(BG)$?

\begin{defn}
Given a group homomorphism $\rho\colon G\to H$ and a principal $G$-bundle $P\to M$, let $\rho(P)\to M$ denote the principal $H$-bundle given by the reduction of the structure group of $P$ along $\rho$; explicitly, $\rho(P)$ is $P\times_G H\to M$.
\end{defn}
For $\SL(2, \Z)$, this is implicit in~\cite[\S 12.2]{Heckman}; we review it to set the stage.
\begin{prop}
Let $\rho_{2,2}\colon\Z_4\to\SL(2, \Z)$ and $\rho_{2,3}\colon\Z_3\to\SL(2, \Z)$ be the homomorphisms defined in \cref{SL2_splitting_at_2}. If $S_2 = \set{(M_1, P_1), \dotsc, (M_k,P_k)}$ is a linearly independent generating set for $\widetilde\Omega_n^\Spin(B\Z_4)$ and $S_3 = \set{(N_1, Q_1), \dotsc, (N_\ell,Q_\ell)}$ is a linearly independent generating set for $\widetilde\Omega_n^\Spin(B\Z_4)$, then
\begin{equation}
    \set{
        (M_1, \rho_{2,2}(P_1)),\dotsc, (M_k, \rho_{2,2}(P_k)),
        (N_1, \rho_{2,3}(Q_1)), \dotsc, (N_\ell, \rho_{2,3}(Q_\ell))
    }
\end{equation}
is a linearly independent generating set for $\widetilde\Omega_n^\Spin(B\SL(2, \Z))$.
\end{prop}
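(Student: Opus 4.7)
The plan is to exploit the three $p$-local S-equivalences from \cref{SL2_splitting_at_2} to assemble a direct sum decomposition of $\widetilde\Omega_n^\Spin(B\SL(2,\Z))$. The key observation is that all of the groups in question are finitely generated torsion Abelian groups, so we can check any morphism between them one prime at a time.

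First, I would verify that $\widetilde\Omega_n^\Spin(B\SL(2,\Z))$, $\widetilde\Omega_n^\Spin(B\Z_4)$, and $\widetilde\Omega_n^\Spin(B\Z_3)$ are pure torsion. For the first group this follows from the fact that $\SL(2,\Z)$ is virtually free with torsion Abelianization, so $H^*(B\SL(2,\Z);\Q)$ vanishes in positive degrees, and via the rational Atiyah-Hirzebruch spectral sequence the reduced spin bordism must be torsion; the same argument works for $B\Z_3$ and $B\Z_4$. I would additionally note that the torsion in $\widetilde\Omega_n^\Spin(B\Z_k)$ is concentrated at primes dividing $k$, since $H^*(B\Z_k;\Z)$ is.

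Next, I would introduce the combined map
\begin{equation}
    \Phi \colon \widetilde\Omega_n^\Spin(B\Z_4) \oplus \widetilde\Omega_n^\Spin(B\Z_3) \longrightarrow \widetilde\Omega_n^\Spin(B\SL(2,\Z)),
    \qquad (x,y) \longmapsto (\rho_{2,2})_*(x) + (\rho_{2,3})_*(y),
\end{equation}
and show it is an isomorphism by localizing at every prime. At $p=2$, the summand $\widetilde\Omega_n^\Spin(B\Z_3)$ vanishes $2$-locally and $(\rho_{2,2})_*$ is a $2$-local isomorphism by part (1) of \cref{SL2_splitting_at_2}; symmetrically at $p=3$ only the $B\Z_3$ summand contributes and $(\rho_{2,3})_*$ is a $3$-local isomorphism by part (2). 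For primes $p\ge 5$, both sides vanish $p$-locally by part (3) combined with the concentration statement above. Since a morphism of finitely generated Abelian groups that is a $p$-local isomorphism for every prime $p$ is itself an isomorphism, $\Phi$ is an isomorphism.

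Finally, since $\Phi$ is an isomorphism, the images under $(\rho_{2,2})_*$ and $(\rho_{2,3})_*$ of the linearly independent generating sets $S_2$ and $S_3$ together form a linearly independent generating set for $\widetilde\Omega_n^\Spin(B\SL(2,\Z))$, which is precisely the set asserted in the proposition. There is no real obstacle here beyond bookkeeping about $p$-localization; the substantive content is already packaged in \cref{SL2_splitting_at_2}.
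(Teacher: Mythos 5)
Your proof is correct, and it is essentially the paper's argument made explicit: the paper's proof consists of a single sentence, ``This is a corollary of \cref{SL2_splitting_at_2}, together with the definition of $p$-local stable splitting,'' which is exactly what you unpack by checking the combined map $\Phi$ is a $p$-local isomorphism at $p=2$, at $p=3$, and at $p\ge 5$, then noting that a morphism of finitely generated Abelian groups that is a $p$-local isomorphism at every prime is an isomorphism. One small note: the statement of the proposition as printed has a typo (it says $S_3$ generates $\widetilde\Omega_n^\Spin(B\Z_4)$ where it should say $B\Z_3$), which you have silently and correctly read past.
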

In other words, take the generators for the spin bordism of $B\Z_4$ and use $\rho_{2,2}$ to get to $B\SL(2, \Z)$, and do the same for $B\Z_3$ and $\rho_{2,3}$.
\begin{proof}
This is a corollary of \cref{SL2_splitting_at_2}, together with the definition of $p$-local stable splitting.
\end{proof}
For $\SL(2, \Z)\times\SL(3, \Z)$, the story is a little more complicated, but just as explicit.

The following definition asks for a set of generators of $\widetilde\Omega_k^\Spin(BS_4)$ to be compatible with \cref{S4_splitting}.
\begin{defn}
\label{splitting_compatible}
Let $S$ be a linearly independent generating set of $\widetilde\Omega_k^\Spin(BS_4)$. We say $S$ is a \term{splitting-compatible generating set} if $S = S_1\amalg S_2\amalg S_3$, where $S_1$ is in the image of the inclusion $\widetilde\Omega_k^\Spin(B\Z_2)\to \widetilde\Omega_k^\Spin(BS_4)$, $S_2$ is in the image of the analogous inclusion for $B\SL(3, \F_2)$, and $S_3$ is in the image of the analogous inclusion for $L(2)$. We will refer to $S_1$, $S_2$, and $S_3$ as the \term{first, second, and third components}, respectively, of $S$.
\end{defn}
We make the analogous definition for $B\Z_4\wedge BS_4$.

Though the components are defined in terms of other things than $BS_4$, we have defined \cref{splitting_compatible} so that all elements of a splitting-compatible generating set are manifolds with principal $S_4$-bundles, not for some other group.
\begin{rem}
In \cref{S4_splitting}, we did not specify the inclusions we used in \cref{splitting_compatible}, only the projection maps in the other direction. This is fine: \cref{S4_splitting} implies these inclusions exist, and that, for any choice of these inclusions compatible with the projections in \cref{S4_splitting}, there is a splitting-compatible set of generators for $\widetilde\Omega_k^\Spin(BS_4)$ for all $k$.
Below, we only need that a generating set is splitting-compatible for \emph{some} choice of inclusions, so this is OK.
\end{rem}
We have been representing elements of $\widetilde\Omega_k^\Spin(BG)$ by the data $(M, P)$, where $M$ is a closed spin $k$-manifold and $P\to M$ is a principal $G$-bundle. Below, we will also represent elements of $\widetilde\Omega_k^\Spin(BG\wedge BH)$ and $\widetilde\Omega_k^\Spin(BG\times BH)$ as the data $(M, P, P')$, where $P\to M$ is a principal $G$-bundle and $P'\to M$ is a principal $H$-bundle.\renewcommand{\descriptionlabel}[1]{\hspace{\labelsep}{#1}}
\begin{prop}
\label{some_assembly_required}
Suppose we are given the following linearly independent generating sets:
\begin{enumerate}
    \item $\set{(M_{1,0}, P_{1,0}),\dotsc, (M_{k_0,0}, P_{k_0,0})}$ for $\widetilde\Omega_k^\Spin(B\Z_4)$.
    \item A splitting-compatible generating set for $\widetilde\Omega_k^\Spin(BS_4)$, with the following components:
    \begin{description}
        \item[First component] $\set{(M_{1,1}, P_{1,1}),\dotsc, (M_{k_1,1}, P_{k_1, 1})}$
        \item[Second component] $\set{(M_{1,2}, P_{1,2}),\dotsc, (M_{k_2,2}, P_{k_2, 2})}$, and
        \item[Third component] $\set{(M_{1,3}, P_{1,3}),\dotsc, (M_{k_3,3}, P_{k_3, 3})}$.
    \end{description}
    \item A splitting-compatible generating set for $\widetilde\Omega_k^\Spin(B\Z_4\wedge BS_4)$, with the following components:
    \begin{description}
        \item[First component] $\set{(M_{1,4}, P_{1,4}, P_{1,4}'),\dotsc, (M_{k_4,4}, P_{k_4, 4}, P_{k_4,4}')}$
        \item[Second component] $\set{(M_{1,5}, P_{1,5}, P_{1,5}'),\dotsc, (M_{k_5,5}, P_{k_5, 5}, P_{k_5, 5}')}$, and
        \item[Third component] $\set{(M_{1,6}, P_{1,6}, P_{1,6}'),\dotsc, (M_{k_6,6}, P_{k_6, 6}, P_{k_6,6}')}$.
    \end{description}
    \item $\set{(N_{1,0}, Q_{1,0}),\dotsc, (N_{\ell_0,0}, Q_{\ell_0,0})}$ for $\widetilde\Omega_k^\Spin(B\Z_3)$,
    \item $\set{(N_{1,1}, Q_{1,1}),\dotsc, (N_{\ell_1,1}, Q_{\ell_1,1})}$ for $\widetilde\Omega_k^\Spin(BD_6)$,
    \item $\set{(N_{1,2}, Q_{1,2}, Q_{1,2}'),\dotsc, (N_{\ell_2,2}, Q_{\ell_2,2}, Q_{\ell_2,2}')}$ for $\widetilde\Omega_k^\Spin(B\Z_3\wedge BD_6)$.
\end{enumerate}
Let $\tau_0$ denote a trivial bundle and $\rho_1$, $\rho_2$, $\rho_3$, and $\rho_4$ be as in \cref{rho12_defn,rho3_defn,psi1_done}. Then, the union of the following sets is a linearly independent generating set of $\widetilde\Omega_k^\Spin(B\SL(2, \Z)\times B\SL(3, \Z))$:
\begin{equation}
\begin{gathered}
    \set{(M_{i,0}, \rho_{2,2}(P_{i,0}), \tau_0)\mid 1\le i\le k_0}\\
    \set{(M_{i,2}, \tau_0, \rho_3(P_{i,2}))\text{\rm{} and } (M_{i,2}, \tau_0, \rho_4(P_{i,2}))\mid 1\le i\le k_2}\\
    \set{(M_{i,3}, \tau_0, \rho_3(P_{i,3}))\mid 1\le i\le k_3}\\
     \set{(M_{i,5}, \rho_{2,2}(P_{i,5}), \rho_3(P_{i,5}'))\text{\rm{} and } (M_{i,5}, \rho_{2,2}(P_{i,5}), \rho_4(P_{i,5}'))\mid 1\le i\le k_5}\\
    \set{(M_{i,6}, \rho_{2,2}(P_{i,6}), \rho_3(P_{i,6}'))\mid 1\le i\le k_6}\\
    \set{(N_{j,0}, \rho_{2,3}(Q_{j,0}), \tau_0)\mid 1\le j\le \ell_0}\\
    \set{(N_{j,1}, \tau_0, \rho_1(Q_{j,1}))\text{\rm{} and } (N_{j,1}, \tau_0, \rho_2(Q_{j,1}))\mid 1\le j\le \ell_1}\\
    \set{(N_{j,2}, \rho_{2,3}(Q_{j,2}), \rho_1(Q_{j,2}'))\text{\rm{} and } (N_{j,2}, \rho_{2,3}(Q_{j,2}), \rho_2(Q_{j,2}'))\mid 1\le j\le \ell_2}.
\end{gathered}
\end{equation}
\end{prop}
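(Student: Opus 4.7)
The plan is to assemble a stable splitting of $B\SL(2,\Z)\times B\SL(3,\Z)$ one prime at a time from the already-established splittings of the individual factors, then translate this splitting into generators of spin bordism. Since all reduced bordism groups in sight are finitely generated in each degree, it suffices to exhibit $p$-local linearly independent generating sets for each prime $p$; primes $p\ge 5$ contribute nothing by \cref{SL2_splitting_at_2}(3) and \cref{5_boring}, so we need only handle $p=2$ and $p=3$.

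The key input is the distributivity of smash products over wedge sums recorded in \cref{distributivity}: for any stable splittings $X\simeq \bigvee_i X_i$ and $Y\simeq \bigvee_j Y_j$, the unreduced product stably splits as
\begin{equation}
    (X\times Y)_+\;\simeq\; S^0\,\vee\,\bigvee_i X_i\,\vee\,\bigvee_j Y_j\,\vee\,\bigvee_{i,j} X_i\wedge Y_j.
\end{equation}
Applying this at the prime $3$ with $X=B\SL(2,\Z)\simeq_{(3)} B\Z_3$ and $Y=B\SL(3,\Z)\simeq_{(3)} BD_6\vee BD_6$ (\cref{SL2_splitting_at_2,SL3_stable_splitting_at_3}) gives a $3$-local splitting whose wedge summands are $B\Z_3$, two copies of $BD_6$, and two copies of $B\Z_3\wedge BD_6$. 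Applying it at the prime $2$ with $X\simeq_{(2)} B\Z_4$ and $Y\simeq_{(2)} B\SL(3,\F_2)\vee B\SL(3,\F_2)\vee L(2)$ (\cref{minami_splitting}) gives wedge summands $B\Z_4$, two copies of $B\SL(3,\F_2)$, $L(2)$, two copies of $B\Z_4\wedge B\SL(3,\F_2)$, and $B\Z_4\wedge L(2)$. In each case we can then read off generators of $\widetilde\Omega_k^\Spin$ of the product by taking generators of $\widetilde\Omega_k^\Spin$ of each wedge summand and pushing them forward to $\widetilde\Omega_k^\Spin(B\SL(2,\Z)\times B\SL(3,\Z))$.

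For the ``honest subgroup'' summands the translation is immediate: the pushforwards are given by $\rho_{2,3}$ (for $B\Z_3$), $\rho_1$ and $\rho_2$ (for the two $BD_6$ summands), and $\rho_{2,2}$ (for $B\Z_4$), exactly matching the relevant lines in the statement. For the $B\SL(3,\F_2)$ and $L(2)$ summands at the prime $2$, we invoke \cref{why_S4,psi1_done}: the map $B\rho_3\colon BS_4\to B\SL(3,\Z)$ realizes (the second copy of) $B\SL(3,\F_2)$ together with $L(2)$, while $B\rho_4$ realizes the other copy of $B\SL(3,\F_2)$. Thus, the second component of a splitting-compatible generating set for $\widetilde\Omega_k^\Spin(BS_4)$ must be used twice (once via $\rho_3$ and once via $\rho_4$), the third component once (via $\rho_3$), and the first component (coming from $B\Z_2$) not at all, since $B\Z_2$ does not appear as a wedge summand of $B\SL(3,\Z)_{(2)}$. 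This precisely produces the $M_{i,2}$ and $M_{i,3}$ lines in the statement.

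The only remaining pieces are the smash summands $B\Z_4\wedge B\SL(3,\F_2)$ and $B\Z_4\wedge L(2)$ at prime $2$, and $B\Z_3\wedge BD_6$ (twice) at prime $3$. For these we smash \cref{why_S4,psi1_done,SL3_stable_splitting_at_3} with $B\Z_4$ and $B\Z_3$ respectively; smash products distribute over wedge sums, so this yields maps $B\Z_4\wedge BS_4\to B\Z_4\wedge B\SL(3,\Z)$ and $B\Z_3\wedge BD_6\to B\Z_3\wedge B\SL(3,\Z)$ whose components realize the corresponding wedge summands. Applying these to splitting-compatible generating sets of $\widetilde\Omega_k^\Spin(B\Z_4\wedge BS_4)$ and $\widetilde\Omega_k^\Spin(B\Z_3\wedge BD_6)$, and then using the diagonal inclusion $B\SL(2,\Z)\wedge B\SL(3,\Z)\hookrightarrow (B\SL(2,\Z)\times B\SL(3,\Z))_+$, produces exactly the $M_{i,5}$, $M_{i,6}$, and $N_{j,2}$ lines in the statement; the first components (associated to $B\Z_2$ inside $BS_4$) are again discarded because $B\Sigma\wedge B\Z_2$ does not appear as a wedge summand, where $\Sigma \in \{\Z_4, \Z_3\}$.

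The main bookkeeping obstacle is keeping track of which components of the splitting-compatible generating sets get doubled (those detected only in $B\SL(3,\F_2)$, which must be pushed forward via both $\rho_3$ and $\rho_4$), which are used only once (those detected in $L(2)$, using $\rho_3$), and which are discarded (those detected in $B\Z_2\subset BS_4$, since no matching summand survives in $B\SL(3,\Z)$). Once this combinatorics is set up correctly, linear independence of the full generating set is automatic from linear independence of each piece together with the direct sum decomposition of $\widetilde\Omega_k^\Spin(B\SL(2,\Z)\times B\SL(3,\Z))$ induced by the stable splitting; spanning then follows from the fact that the listed maps collectively hit every wedge summand at every prime.
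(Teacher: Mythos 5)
Your proposal is correct and takes essentially the same approach as the paper: the paper's own proof consists of the single sentence that the proposition is ``an explicit restatement of'' the stable splittings in \cref{distributivity,SL2_splitting_at_2,5_boring,SL3_stable_splitting_at_3,why_S4,psi1_done}, and your plan correctly unpacks that restatement --- distributing wedges over products prime by prime, discarding the $B\Z_2$ (first) components, pushing the $B\SL(3,\F_2)$-detected (second) components through both $\rho_3$ and $\rho_4$ via \cref{why_S4,psi1_done}, and pushing the $L(2)$-detected (third) component through $\rho_3$ only. The only thing I'd flag is minor: you should be careful to say why the third component dies under $\rho_4$ (it factors through $Bi\colon BS_4\to B\SL(3,\F_2)$, and the $L(2)$-detected classes vanish already on $\widetilde\Omega_*^\Spin(B\SL(3,\F_2))$), but this is exactly the bookkeeping you identify as the main obstacle and it is correct.
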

\begin{proof}
This proposition is essentially an explicit restatement of the stable splittings in \cref{distributivity,SL2_splitting_at_2,5_boring,SL3_stable_splitting_at_3,why_S4,psi1_done}.
\end{proof}
We use this to get the results in \cref{tab:gen1to4,tab:gen5to7}, the tables of generators of $\Omega_*^\Spin(B\SL(2, \Z)\times B\SL(3, \Z))$ in dimensions $7$ and below, from the lists of generators we found for the spin bordism of $B\Z_3$, $BD_6$, $B\Z_3\wedge BD_6$, $B\Z_4$, $BS_4$, and $B\Z_4\wedge BS_4$. The generators and relations for these spin bordism groups are given in \cref{BP_Z3_alg,prop:Z3gens,thm:spin_D6_module,3_wedge_mod,3_wedge_gens_1,3_wedge_gens_2,ringstr_Z4,Z4_mult_gens,algebra_koPSL,complete_L2_decomp,the_Z4_module_str,detected_A,torus_gen,Q5_gen,found_W6}, as well as \S\ref{Sec:D6gens}, \S\ref{ss:Z4S4gens}, \cref{tab:S4gens}, and~\eqref{L2_Z4}.
Here we also use \cref{better_matrices} to improve the embeddings $\rho_3$ and $\rho_4$ to ones which have easier string-theoretic interpretations.

\section{Our strategy: the K\"{u}nneth map}

At the only primes that matter for our computation, $2$ and $3$, we have similar-looking structure: a finite cyclic group $C$ and a finite non-Abelian group $G$ containing $C$, and we need to determine $\Omega_*^\Spin(BC)$, $\Omega_*^\Spin(BG)$, and $\widetilde\Omega_*^\Spin(BC\wedge BG)$. For $p =3$, $C = \Z_3$ and $G = D_6$ (\ref{Part:p=3}), and for $p = 2$, $C = \Z_4$ and $G = S_4$ (\ref{Part:p=2}).

In the degrees we are interested in, $\Omega_*^\Spin(BC)$ and $\Omega_*^\Spin(BG)$ are in the literature, and $\widetilde \Omega_*^\Spin(BC\wedge BG)$ is not: see~\cite[Example 7.3.2]{Bruner} for $\Omega_*^\Spin(B\Z_3)$, \cite[Example 7.3.3]{Bruner} for $\Omega_*^\Spin(B\Z_4)$, \cite[\S 14.1]{Heckman} for $\Omega_*^\Spin(BD_6)$ at $p = 3$, and Bayen~\cite[Chapter 3]{Bayen} for $\Omega_*^\Spin(BS_4)$ at $p = 2$.\footnote{Bayen states the result for $BD_8$, but also explains how it applies for $BS_4$.} We thus could have directly attacked $\widetilde\Omega_*^\Spin(BC\wedge BG)$, but the computation becomes a bit messy, especially when it comes time to find manifold representatives for a generating set of these bordism groups. We will instead do something different: there is extra structure present in this problem, and we will use it.
\begin{thm}
\label{Thm:Kunneth}
Let $C$ and $G$ be as above and $E_*$ be a multiplicative generalized homology theory.\footnote{Precisely, we want $E_*$ to be the generalized homology theory associated to a homotopy-commutative ring spectrum $E$. That is, the product on $E$-cohomology can be lifted to the point-set level, where it may or may not be commutative, but on cohomology it is commutative.}
\begin{enumerate}
    \item\label{KTpart1} The multiplication map $m\colon C\times C\to C$ induces the structure of a graded-commutative $E_*(\pt)$-algebra on $E_*(BC)$, whose multiplication is called the \term{Pontrjagin product}.
    \item The map $m$ promotes the K\"{u}nneth map $E_*(BC)\otimes E_*(BG)$ into the action map of an $E_*(BC)$-module structure on $E_*(BC\times BG)$.
    \item The map $m$ induces the structure of a spectral sequence of algebras on both the Adams and Atiyah-Hirzebruch spectral sequences computing $E_*(BC)$: each $E_r$-page has the structure of a $\Z^2$-graded-commutative algebra over the respective spectral sequence computing $E_*(\pt)$, and differentials satisfy the Leibniz rule
    \begin{equation}\label{leibniz}
        d_r(xy) = d_r(x)y + (-1)^{\abs x}x d_r(y)
    \end{equation}
    for $x$ and $y$ in the spectral sequence for $E_*(C)$ or $E_*(\pt)$.
    \item Likewise, $m$ refines the K\"{u}nneth map on the $E_r$-page of the Atiyah-Hirzebruch and Adams spectral sequences for $E_*(BC\times BG)$ to the structure of a $\Z^2$-graded module over the respective spectral sequence for $E_*(BC)$, with differentials again satisfying the Leibniz rule~\eqref{leibniz}.
    \item All of this is natural in $E_*$ with respect to morphisms of homotopy-commutative ring spectra.
\end{enumerate}
\end{thm}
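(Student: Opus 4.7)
The plan is to recognize that all the asserted structure descends from a single fact: since $C$ is Abelian, the multiplication $m\colon C\times C\to C$ is a group homomorphism, so under the natural identification $B(C\times C)\simeq BC\times BC$ it induces a map $Bm\colon BC\times BC\to BC$ endowing $\Sigma^\infty_+BC$ with the structure of a homotopy-commutative ring spectrum. The map $Bm\times\mathrm{id}_{BG}$ then endows $\Sigma^\infty_+(BC\times BG)\simeq\Sigma^\infty_+BC\wedge\Sigma^\infty_+BG$ with the structure of a module spectrum over it. Applying $E_*(\bl)=\pi_*(E\wedge\bl)$ immediately yields parts~\ref{KTpart1} and~2: the Pontrjagin product is graded-commutative because the composite of $Bm$ with the swap of factors is homotopic to $Bm$ (reflecting $m\circ\sigma=m$ at the level of groups) and the swap carries the Koszul sign on $E_*$-external products; and on pure tensors $x\otimes z\in E_*(BC)\otimes E_*(BC\times BG)$ the $E_*(BC)$-action is by definition $(Bm\times\mathrm{id}_{BG})_*$ applied to the exterior product, which identifies the Kunneth map $E_*(BC)\otimes E_*(BG)\to E_*(BC\times BG)$ with the restriction of this action along the unit inclusion $E_*(BG)\hookrightarrow E_*(BC\times BG)$.

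For parts~3 and~4, the strategy is to invoke the standard multiplicative theory of the Atiyah-Hirzebruch and Adams spectral sequences: both are functorial assignments that send a ring (respectively, module) spectrum over $E$ to a spectral sequence of graded-commutative algebras (respectively, modules) over the corresponding spectral sequence computing $E_*(\pt)$, with differentials obeying the Leibniz rule~\eqref{leibniz}. For the AHSS this rests on the multiplicativity of the skeletal filtration --- namely $(BC)^{(p)}\times(BC)^{(q)}\subseteq(BC\times BC)^{(p+q)}$ --- together with a cellular approximation of $Bm$ and the classical pairing of exact couples, which produces the required algebra pairing while simultaneously forcing the Leibniz rule on each page; the module structure on the AHSS for $BC\times BG$ is obtained identically by pairing the skeletal filtration of $BC$ against that of $BC\times BG$. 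For the Adams spectral sequence we use that the $E_2$-page $\Ext_{E_*E}^{s,t}(E_*,E_*(BC))$ inherits an algebra structure from the coalgebra coproduct on $E_*(BC)$ dual to $Bm$, and likewise the Kunneth comodule structure on $E_*(BC\times BG)$ furnishes a module structure; these $E_2$-pairings lift through pairings of Adams resolutions to all higher $E_r$-pages and converge to the associated graded of the Pontrjagin product and its module action.

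The main obstacle is in the Adams-spectral-sequence compatibility: while the $E_2$-pairing is immediate from the $\Ext$ description, promoting it to a pairing of spectral sequences that converges correctly at $E_\infty$ requires constructing and carefully chasing compatible pairings of Adams resolutions, verifying that the pairings on successive pages do not develop anomalies and that the Leibniz rule is preserved by the differentials. We would not rederive this in full but instead invoke the now-standard account of multiplicative structures on the Adams spectral sequence, for instance that of~\cite{Bruner}, in which the case of spin bordism is worked out in detail. Naturality in $E_*$ is then automatic: every construction above is functorial in $E$ as a homotopy-commutative ring spectrum and uses only general properties of ring and module spectra, not any specific feature of the chosen $E$.
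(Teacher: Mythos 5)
Your proposal is correct and takes essentially the same route as the paper: both proofs reduce everything to the single observation that $C$ abelian makes $m$ a homomorphism, so that $E\wedge(BC)_+$ is a homotopy-commutative ring spectrum acting on $E\wedge(BC\times BG)_+$, and then outsource the multiplicative structure of the Atiyah-Hirzebruch and Adams spectral sequences to a standard reference (the paper cites Kochman, you cite Bruner, but these supply the same input).
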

\Cref{Thm:Kunneth} is a combination of some well-known theorems on the multiplicative structure of the Atiyah-Hirzebruch and Adams spectral sequences for a ring spectrum; here we use the fact that if $A$ is an abelian group, there is a model of $BA$ which is a topological abelian group, so that if $R$ is a commutative ring spectrum, $R\wedge (BA)_+$ is again a commutative ring spectrum, which is direct generalization of the fact that if $S$ is a commutative ring, $S[A]$ is again a commutative ring, with the multiplication extending the addition on $A$. For the multiplicative structure on these two spectral sequences, see Kochman~\cite[Theorem 3.6.8 and \S 4.2]{Koc96}.

$\Omega_*^\Spin$ is a multiplicative generalized cohomology theory, with multiplication induced from the direct product of manifolds. This implies the $\Omega_*^\Spin$-algebra structure on $\Omega_*^\Spin(BC)$ helps us find generators: if $M$ is a $k$-manifold representing a class $x\in\Omega_k^\Spin(BC)$ and $N$ is an $\ell$-manifold representing $y\in\Omega_\ell^\Spin(BC)$, then $M\times N$ represents $xy\in\Omega_{k+\ell}^\Spin(BC)$. The analogous fact is true if $y\in\Omega_\ell^\Spin(BC\times BG)$. This informs our plan of attack:
\begin{enumerate}
    \item Compute the ring structure on the Adams (if $p = 2$) or Atiyah-Hirzebruch (if $p = 3$) spectral sequence for $\Omega_*^\Spin(BC)$, and all differentials, in low degrees. Then use this to determine the ring structure on $\Omega_*^\Spin(BC)$ itself in low degrees.
    \item Run the same spectral sequence computing $\Omega_*^\Spin(BG)$.
    \item Combine these to determine the module structure on the spectral sequence for $\Omega_*^\Spin(BC\times BG)$ in low degrees, and use this spectral sequence to compute all differentials in low degrees.\footnote{We actually modify this strategy slightly: $\Omega_*^\Spin(BC\times BG)$ and the spectral sequence computing it split into the respective data for $BC$ and for $(BC_+)\wedge BG$. At this point in the calculation we already know $\Omega_*^\Spin(BC)$ in the degrees we need, so we focus on $(BC)_+\wedge BG$. We will elaborate on this in \cref{one_sided_smash}.}
    \item Use this to compute the $\Omega_*^\Spin(BC)$-module structure on $\Omega_*^\Spin(BC\times BG)$ in low degrees.
    \item Determine manifold representatives for the low-degree generators of $\Omega_*^\Spin(BC)$ as a ring and $\Omega_*^\Spin(BC\times BG)$ as an $\Omega_*^\Spin(BC)$-module; then take direct products of these generators to obtain additive generators for these bordism groups (i.e.\ what we need for applications to string theory).
\end{enumerate}

\subpart{\texorpdfstring{The calculation at $p = 3$}{The calculation at p = 3}}
\label{Part:p=3}

Now we compute $\Omega_*^\Spin(X)\otimes\Z_{(3)}$ for $X = B\Z_3$, $BD_6$, and $B\Z_3\wedge BD_6$ in low degrees and determine a list of generating manifolds. This plugs into \cref{some_assembly_required} to determine the $3$-torsion in the spin bordism of $B\SL(2, \Z)\times B\SL(3, \Z)$, which is part of~\cref{tab:gen1to4,tab:gen5to7}.

In \S\ref{s:Z3_D6}, we use the Atiyah-Hirzebruch spectral sequence to compute (the $3$-localizations of) $\Omega_*^\Spin(B\Z_3)$ and $\Omega_*^\Spin(BD_6)$ in the degrees relevant for string theory, then determine manifold representatives for generating sets of these bordism groups. A priori, this is not a new computation: the bordism groups were worked out in~\cite[\S 12.2, \S 14.1]{Heckman}, and the generators are known due to Rosenberg~\cite[Proof of Theorem 2.12]{Ros86}. However, we repeat this computation here because we need the $\Omega_*^\Spin(B\Z_3)$-module structure on $\Omega_*^\Spin(BD_6)$, which is new. We then use this module structure in \S\ref{s:smash_3} to compute the Atiyah-Hirzebruch spectral sequence for $\widetilde\Omega_*^\Spin(B\Z_3\wedge BD_6)$ and determine generating manifolds.

\textbf{Throughout this sub-part, we implicitly localize $\Omega_*^\Spin$ at 3.} This does not affect the results that we care about, which are $3$-torsion, but simplifies arguments: for example, it allows us to ignore the $2$-torsion in $\Omega_i^\Spin$ for $i = 1,2$.

\section{\texorpdfstring{Calculations and generators for $B\Z_3$ and $BD_6$}{Calculations and generators for BZ3 and BD6}}\label{s:Z3_D6}

\subsection{\texorpdfstring{Atiyah-Hirzebruch spectral sequence for $\Omega^{\Spin}_*(B\Z_3)$}{Atiyah-Hirzebruch spectral sequence for Spin bordism of BZ3}}
Recall that $H_*(B\Z_3;\Z)$ consists of a $\Z$ in degree $0$, $\Z_3$ in each odd positive degree, and $0$ in all other degrees. By the universal coefficient theorem, the same is true for $\Z_{(3)}$-homology, except with $\Z$ replaced with $\Z_{(3)}$.
\begin{lem}[Cartan~\cite{Car55}]
\label{trivial_pontrjagin}
The Pontrjagin product on $H_*(B\Z_3;\Z_{(3)})$ is trivial: there is a $\Z_{(3)}$-algebra isomorphism
\begin{equation}
    H_*(B\Z_3;\Z_{(3)})\cong \Z_{(3)}[y_1,y_3,y_5,\dotsc]/(3y_i, y_iy_j\text{ for all } i,j\ge 1).
\end{equation}
\end{lem}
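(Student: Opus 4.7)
The plan is to combine an additive computation of $H_*(B\Z_3;\Z_{(3)})$ with a parity argument for the Pontrjagin product.

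First I would recall the standard cellular model $B\Z_3\simeq S^\infty/\Z_3$, the infinite lens space. It admits a CW structure with exactly one cell in each non-negative dimension whose cellular chain complex over $\Z_{(3)}$ is a copy of $\Z_{(3)}$ in every degree with differentials alternating between the zero map and multiplication by $3$. Taking homology yields $H_0(B\Z_3;\Z_{(3)})\cong \Z_{(3)}$, $H_{2k-1}(B\Z_3;\Z_{(3)})\cong \Z_3$ for every $k\geq 1$ with a distinguished generator $y_{2k-1}$, and $H_n(B\Z_3;\Z_{(3)})=0$ otherwise. In particular the positive-degree part of the $3$-local homology is concentrated in odd degrees, and each generator is $3$-torsion, giving the relations $3y_i=0$.

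Second, I would exploit the Pontrjagin product, a graded-commutative product on $H_*(B\Z_3;\Z_{(3)})$ induced by the multiplication $m\colon B\Z_3\times B\Z_3\to B\Z_3$ coming from the abelian group structure on $\Z_3$ (as set up in \cref{Thm:Kunneth}). Gradedness sends a pair of classes of degrees $i$ and $j$ to a class of degree $i+j$. For any $i,j\geq 1$, the classes $y_i$ and $y_j$ live in odd degrees, so their product $y_iy_j$ lies in $H_{i+j}(B\Z_3;\Z_{(3)})$ with $i+j$ a positive even integer. By the first paragraph this group vanishes, and hence $y_iy_j=0$. Combined with the relations $3y_i=0$ and the fact that the $y_i$ generate the positive-degree part additively, this gives the presentation claimed, with the degree-zero $\Z_{(3)}$ serving as the unit.

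There is essentially no obstacle here: the entire argument is a degree-parity observation layered on the textbook integral homology computation for lens spaces. The only mild check is that the presented quotient of a polynomial ring on the right-hand side matches $H_*$ as a graded $\Z_{(3)}$-algebra, which is immediate once each $y_i$ is identified in its odd degree of origin and the two sides agree in each rank.
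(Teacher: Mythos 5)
Your proposal is correct and matches the paper's reasoning: the paper's justification for the vanishing of all products is precisely the observation that positive-degree $3$-local homology of $B\Z_3$ is concentrated in odd degrees, so any product of two such classes lands in an even positive degree where the group vanishes. The additive input (infinite lens space, $\Z_3$ in odd positive degrees, $\Z_{(3)}$ in degree zero) and the degree-parity argument are exactly what the paper uses.
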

Indeed, since all positive-degree classes have odd degree, the product of any two positive-degree classes has even degree and therefore vanishes.
\begin{cor}\label{Z3_AHSS}\hfill
\begin{enumerate}
    \item\label{tensor_ring} The $E^2$-page of the Atiyah-Hirzebruch spectral sequence for $\Omega^{\Spin}_*(B\Z_3)$ is isomorphic as algebras to
    \begin{equation}
        \Z_{(3)}[p_1,p_2,\dotsc, y_1,y_3, y_5,\dotsc]/(3y_i, y_iy_j\text{ for all } i,j\ge 1),
    \end{equation}
    with $p_i\in E^2_{0,4i}$ and $y_i\in E^2_{i,0}$.
    \item\label{Z3_collapse} This spectral sequence collapses on the $E^2$-page.
\end{enumerate}
\end{cor}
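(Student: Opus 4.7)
The plan is to establish the two parts in sequence, with the multiplicative refinement of the AHSS from Theorem \ref{Thm:Kunneth} as the main organizing tool.

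For part (\ref{tensor_ring}), I would start from the usual AHSS, $E^2_{p,q}=H_p(B\Z_3;\Omega^{\Spin}_q(\pt))$, and then 3-localize. After this localization, $\Omega^{\Spin}_*(\pt)\otimes\Z_{(3)}$ becomes the polynomial ring $\Z_{(3)}[p_1,p_2,\ldots]$ with $p_i$ in degree $4i$: the $\Z_2$ summands in dimensions $1$ and $2$ die, and what remains is torsion-free. Since this coefficient ring is $\Z_{(3)}$-free in each degree, the universal coefficient theorem identifies $E^2_{p,q}$ with $H_p(B\Z_3;\Z_{(3)})\otimes_{\Z_{(3)}}(\Omega^{\Spin}_q(\pt)\otimes\Z_{(3)})$ as a bigraded $\Z_{(3)}$-module. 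The multiplicative structure of Theorem \ref{Thm:Kunneth} applied to the topological abelian group model of $B\Z_3$, combined with Lemma \ref{trivial_pontrjagin}, promotes this to the claimed algebra presentation.

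For part (\ref{Z3_collapse}), the Leibniz rule from Theorem \ref{Thm:Kunneth} reduces the collapse statement to checking that each algebra generator is a permanent cycle. The generators $p_i\in E^2_{0,4i}$ survive by naturality with respect to the collapse map $B\Z_3\to\pt$, since the AHSS for a point degenerates at $E^2$. For the generators $y_i\in E^2_{i,0}$ (with $i$ odd), the strategy is to exhibit an explicit spin manifold representative: the lens space $L^i_3 = S^i/\Z_3$ is spin (the $\Z_3$-action on $S^i$ always lifts to the spin structure, since the obstruction lies in $H^2(\Z_3;\Z_2)=0$ by coprimality of orders), and its universal cover gives a tautological principal $\Z_3$-bundle. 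The resulting class $[L^i_3]\in\Omega^{\Spin}_i(B\Z_3)$ has edge-map image a generator of $E^\infty_{i,0}\subseteq E^2_{i,0}=\Z_3$, forcing $y_i$ to survive all differentials. Since there are no possible incoming differentials to $E^r_{i,0}$ for $r\ge 2$ (the sources would have negative $q$-coordinate), $y_i$ is a permanent cycle, and Leibniz then implies $d_r\equiv 0$ for all $r\ge 2$.

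The most delicate point is verifying that the pushforward of $[L^i_3]$ along its classifying map generates $H_i(B\Z_3;\Z_{(3)})\cong\Z_3$; this is a standard computation, using that the classifying map factors through the $i$-skeleton of $B\Z_3$ and realizes a generator of $\pi_i$ of that skeleton. Once this is in hand, the entire collapse argument is purely formal, with no further spectral sequence manipulation required.
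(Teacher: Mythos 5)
Your proof of part (\ref{tensor_ring}) follows the paper's argument: universal coefficients plus the multiplicative refinement of the AHSS from \cref{Thm:Kunneth}, using the Pontrjagin product from \cref{trivial_pontrjagin}.

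For part (\ref{Z3_collapse}), your argument is correct but takes a genuinely different route. The paper's proof is purely algebraic and uses no Leibniz rule and no geometry: it observes that, by the relations $y_iy_j = 0$, every nonzero homogeneous element of the $E^2$-page is either a polynomial in the $p_i$ (which lies in the column $p=0$) or $y_j$ times such a polynomial (which has odd total degree). Since each $d_r$ decreases total degree by exactly one, any nonzero differential must have one end in even total degree and therefore in the $p=0$ column; but all differentials into and out of that column vanish by naturality along $\pt\to B\Z_3\to\pt$. Your proof instead shows each algebra generator is a permanent cycle and then invokes Leibniz. For the $p_i$ this is automatic since outgoing differentials from $p=0$ land in negative filtration. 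For the $y_i$ you exhibit the lens spaces $L^i_3$ as explicit spin representatives whose classifying maps hit generators of $H_i(B\Z_3;\Z_{(3)})$, so the edge map forces $E^\infty_{i,0} = E^2_{i,0}$ and hence $d_r(y_i)=0$ for all $r$. Both arguments are valid. Yours front-loads a geometric input (lens space generators) which the paper postpones to \cref{prop:Z3gens}, at the cost of needing the spin-structure and classifying-map computations at this earlier stage; the paper's degree-parity argument is more economical because it derives collapse entirely from the algebraic structure already established in part (\ref{tensor_ring}), without reference to any manifold model. If you wanted your argument to be self-contained you would have to fully prove that $[L_3^i]\mapsto$ a generator of $H_i(B\Z_3;\Z_{(3)})$, which you correctly flag as the delicate step but only sketch.
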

\begin{proof}
Part~\eqref{tensor_ring} is straightforward from the definition of the Atiyah-Hirzebruch spectral sequence, except for the ring structure. This follows from the description of the ring structure in \cref{trivial_pontrjagin} via the map $\mu\colon B\Z_3\times B\Z_3\to B\Z_3$ induced by the group operation, then using the map of Atiyah-Hirzebruch spectral sequences induced by $\mu$.

Part~\eqref{Z3_collapse} amounts to showing all differentials vanish. From the relation $y_iy_j = 0$, one learns that every homogeneous element on the $E^2$-page (with respect to the bigrading) is either a polynomial in the classes $p_i$ or $y_j$ times such a polynomial, and the former case occurs only in $E^2_{0,*}$. All differentials to or from $E^2_{0,*}$ vanish, because they can be transferred along the maps $\pt\to B\Z_3\to\pt$ to the Atiyah-Hirzebruch spectral sequence of a point, where they must vanish. The remaining classes, those of the form $y_j$ times a polynomial in the classes $p_i$, all have odd total degree; since differentials decrease total degree by $1$, all differentials must have domain or codomain a group in even total degree, and the only nonzero even-degree elements are those on the line $p = 0$, for which differentials vanish.
\end{proof}
Nevertheless, there are nontrivial extensions in this spectral sequence.
\begin{thm}\label{BP_Z3_alg}
\hfill
\begin{enumerate}
    \item There is an isomorphism of $\Omega^{\Spin}_*$-algebras
    \begin{equation}\label{BP_ring_Z3}
        \Omega^{\Spin}_*(B\Z_3)\cong \Omega^{\Spin}_*[\ell_1, \ell_3, \ell_5, \ell_7, \dots]/(\ell_i\ell_j, 3\ell_1, 3\ell_3, 3\ell_5 - v_1\ell_1, 3\ell_7 - v_1\ell_3, \dots)
    \end{equation}
    where $\abs{\ell_i} = i$ and all generators and relations not listed are in degrees greater than $7$.
    \item The isomorphism in~\eqref{BP_ring_Z3} may be chosen so that the image of $\ell_i$ in the $E_\infty$-page of the Atiyah-Hirzebruch spectral sequence is $y_i$.
\end{enumerate}
\end{thm}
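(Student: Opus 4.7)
The Atiyah-Hirzebruch spectral sequence has already collapsed on $E^2$ by \cref{Z3_AHSS}, so the additive structure of $\Omega^{\Spin}_*(B\Z_3)$ is determined in each degree. My plan is to (i) lift each polynomial generator $y_i\in E^\infty_{i,0}$ to some $\ell_i\in\Omega^{\Spin}_i(B\Z_3)$, which is possible because the spectral sequence collapses; (ii) show that the products $\ell_i\ell_j$ vanish; and (iii) identify the additive extensions governing $3\ell_i$ in total degree at most $7$. Part~(2) of the theorem is automatic once the $\ell_i$ are chosen as lifts of the $y_i$ in step~(i).

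For (ii), each $\ell_i$ is a reduced class, so the product $\ell_i\ell_j$ lies in $\widetilde\Omega^{\Spin}_{i+j}(B\Z_3)$ because the augmentation $B\Z_3\to\mathrm{pt}$ is a map of ring spectra. At $p=3$ the $E^\infty$-page shows that this reduced group vanishes in even degrees in our range: any nonzero position $E^\infty_{p,q}$ with $p>0$ needs $p$ odd (since $\widetilde H_{\mathrm{even}}(B\Z_3;\Z_{(3)})=0$) and, for $q\le 7$, $q$ divisible by $4$ (since $\Omega^{\Spin}_*(\mathrm{pt})_{(3)}$ is concentrated in such degrees there), forcing the total degree $p+q$ to be odd. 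Hence $\ell_i\ell_j=0$ whenever $i+j\le 7$.

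For (iii), since $3y_i=0$ on the $E^\infty$-page, each $3\ell_i$ drops to the filtration subgroup $F_{i-1}\Omega^{\Spin}_i(B\Z_3)$. Inspecting the filtration in total degrees $1,3,5,7$ shows that $F_0\Omega^{\Spin}_1$ and $F_2\Omega^{\Spin}_3$ both vanish at $p=3$, while $F_4\Omega^{\Spin}_5(B\Z_3)$ is the $\Z_3$ spanned by the class of $v_1\ell_1$ and $F_6\Omega^{\Spin}_7(B\Z_3)$ is the $\Z_3$ spanned by the class of $v_1\ell_3$. So $3\ell_5=\alpha\,v_1\ell_1$ and $3\ell_7=\beta\,v_1\ell_3$ for some $\alpha,\beta\in\Z_3$, and the task reduces to showing $\alpha,\beta$ can be arranged to be $1$ by rescaling $\ell_5$ and $\ell_7$. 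The cleanest way to pin these coefficients down is via the Anderson-Brown-Peterson splitting: at the odd prime $3$, $\mathit{MSpin}_{(3)}$ is a wedge of suspensions of $\BP$, and in total degrees $\le 7$ only the bottom $\BP$-summand contributes, giving $\Omega^{\Spin}_k(B\Z_3)_{(3)}\cong\BP_k(B\Z_3)$ for $k\le 7$. The ring $\widetilde{\BP}{}^*(B\Z_3)\cong\BP^*[\![x]\!]/[3]_F(x)$ has $3$-series $[3]_F(x)=3x+v_1x^3+O(v_2)$, and dualizing produces the relations $3\ell_5=v_1\ell_1$ and $3\ell_7=v_1\ell_3$ on the generators of $\BP_*(B\Z_3)$ dual to the monomials~$x^i$, which transport back through the splitting to the stated relations in $\Omega^{\Spin}_*(B\Z_3)$.

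The main obstacle is not any single hard computation but the compatibility bookkeeping: one must verify that the chosen lifts $\ell_i$ correspond, under the Anderson-Brown-Peterson splitting, to the generators of $\BP_*(B\Z_3)$ dual to $x^i$, and that multiplication by $v_1\in\Omega^{\Spin}_4\otimes\Z_{(3)}$ agrees with the coefficient $v_1\in\BP_4$. Once these normalizations are fixed, all the claimed relations follow directly from the explicit form of the $\BP$ three-series. A hands-on alternative would replace $\BP$ by the complex connective $K$-theory spectrum $\ku$ via the Atiyah-Bott-Shapiro map $\mathit{MSpin}\to\ku$, using the simpler formula $[3]_{\ku}(x)=(1+x)^3-1=3x+3x^2+x^3$; this avoids $\BP$ but is slightly less clean for detecting the extension in $3\ell_7$.
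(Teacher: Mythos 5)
Your proof is correct and follows essentially the same route as the paper's: both use the collapse of the Atiyah--Hirzebruch spectral sequence from \cref{Z3_AHSS}, reduce $3$-local spin bordism to Brown--Peterson homology via the splitting of $\mathit{MSpin}_{(3)}$, and resolve the $3$-torsion extensions using $\BP_*(B\Z_3)$. The only meaningful difference is in how you pin down the extensions in degrees $5$ and $7$: the paper simply cites Bahri--Bendersky--Davis--Gilkey for the abelian group isomorphisms $\widetilde{\BP}_5(B\Z_3)\cong\Z_9$ and $\widetilde{\BP}_7(B\Z_3)\cong\Z_9$, which forces $3\ell_5\ne 0$ and $3\ell_7\ne 0$ and hence (after rescaling) the stated relations, whereas you derive the same relations directly from the form $[3]_F(x)=3x+v_1x^3+\dotsb$ of the $\BP$ $3$-series. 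Your route is a bit more structural and explains \emph{why} the additive answer looks as it does; the paper's is shorter because it outsources that to the literature. Two minor remarks: (a) the vanishing $\ell_i\ell_j=0$ in fact holds for all $i,j\ge 1$ (not just $i+j\le 7$) since $\widetilde\Omega^{\Spin}_{\text{even}}(B\Z_3)_{(3)}=0$ in every degree where $E^\infty_{p,q}$ with $p>0$ would have to contribute, so your restriction to the stated range is a convenience rather than a necessity; (b) the formula $[3]_F(x)=3x+v_1x^3+O(v_2)$ is only correct modulo terms such as $3v_1$ and $9$ in the lower-degree coefficients, but as you note the relation $3\ell_{2k+1}=v_1\ell_{2k-3}$ in the dualized form survives this imprecision, so the conclusion stands.
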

Beware that the pattern suggested by~\eqref{BP_ring_Z3} does not continue in the way one might guess: $\ell_9$ has order $27$, for example.
\begin{proof} 
First we need to resolve extensions by $3$. To do so, we reduce from $3$-localized spin bordism to Brown-Peterson homology $\BP_*$ in the manner  described in~\cite[\S 10.5]{Heckman}; this determines $3$-local spin bordism as a graded Abelian group but does not determine the ring structure.

The graded Abelian group $\BP_*(B\Z_3)$ was computed by Bahri-Bendersky-Davis-Gilkey~\cite[Theorem 1.2(a)]{BBDG89}; see~\cite[\S 12.2]{Heckman} for an explicit description of these groups in low degrees. In particular, $\widetilde{\BP}_1(B\Z_3)\cong\Z_3$, $\widetilde{\BP}_3(B\Z_3)\cong\Z_3$, $\widetilde{\BP}_5(B\Z_3)\cong\Z_9$, and $\widetilde{\BP}_7(B\Z_3)\cong\Z_9$, and all other reduced $\BP$-homology groups of $\BP$ in degrees $7$ and below vanish.

The proof of the theorem then amounts to the observation that~\eqref{BP_ring_Z3} is the only ring structure compatible with both the ring structure on the $E^\infty$-page of the Atiyah-Hirzebruch spectral sequence coming from \cref{Z3_AHSS} and with the additive structure in the previous paragraph.
\end{proof}

\subsubsection{Generators}
\label{sec:Z3gens}
We now move to the generators of $\Omega^{\Spin}_*(B\Z_3)$. We begin by recalling lens spaces and some of their important properties. 

Given an integer $k>1$ and integers $j_1,\dots, j_n$ relatively prime to $k$, we define the lens space 
\begin{equation}
    L^{2n-1}_k(j_1,\dots, j_n) = S^{2n-1}/\Z_k
\end{equation}
to be the orbit space of the unit sphere $S^{2n-1} \subset \C^n$ with the action of $\Z_k$ generated by the rotation 
\begin{equation}
    (z_1,\dots, z_n)=(e^{2\pi i j_i/k}z_1, \dots, e^{2\pi ij_n/k}z_n).
\end{equation}
The lens spaces which will be of primary interest to us are  
\begin{equation}
    L^{2n-1}_k := L^{2n-1}_k(1,\dots, 1).
\end{equation}
Indeed, many of the generators for the spin bordism groups that we will encounter in this section, as well as those in proceeding sections, will be lens spaces. It is a standard exercise to determine which lens spaces admit spin structures (and that all of them are orientable).
\begin{lem}
The lens space $L^{2n-1}_k$ is orientable for all $k,n$. Furthermore, if $k$ is odd, then $L^{2n-1}_k$ admits a unique spin structure. If $k$ is even, then $L^{2n-1}_k$ admits a spin structure if and only if $n$ is even.
\end{lem}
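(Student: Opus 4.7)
The plan is to read off everything from the explicit description of the (stable) tangent bundle of $L^{2n-1}_k$. Since the generator of $\mathbb{Z}_k$ acts on $S^{2n-1}\subset\mathbb{C}^n$ by the diagonal unitary matrix $\operatorname{diag}(\omega,\ldots,\omega)$ with $\omega=e^{2\pi i/k}$, it is orientation-preserving; hence the quotient is orientable, settling the first claim.

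For the spin structure, I would first identify the stable tangent bundle. The radial trivialization gives
\begin{equation}
TS^{2n-1}\oplus\underline{\mathbb{R}}\;\cong\; S^{2n-1}\times\mathbb{R}^{2n},
\end{equation}
and the $\mathbb{Z}_k$-action is equivariant for the diagonal action on the right-hand $\mathbb{C}^n=\mathbb{R}^{2n}$. Passing to the quotient identifies
\begin{equation}
TL^{2n-1}_k\oplus\underline{\mathbb{R}}\;\cong\; L^{\oplus n},
\end{equation}
where $L$ is the complex line bundle on $L^{2n-1}_k$ associated to the defining character of $\mathbb{Z}_k$. Write $c=c_1(L)\in H^2(L^{2n-1}_k;\mathbb{Z})\cong\mathbb{Z}_k$ for the generator, and let $\bar c$ denote its mod-$2$ reduction. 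The Whitney sum formula applied to $L^{\oplus n}$, together with $w(L)=1+\bar c$ as a real rank-two bundle, yields
\begin{equation}
w(TL^{2n-1}_k)\;=\;(1+\bar c)^n,
\end{equation}
and in particular $w_2(TL^{2n-1}_k)=n\bar c\in H^2(L^{2n-1}_k;\mathbb{Z}_2)$.

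The rest is bookkeeping. If $k$ is odd, then $H^2(L^{2n-1}_k;\mathbb{Z}_2)=0$, so $w_2$ automatically vanishes; and $H^1(L^{2n-1}_k;\mathbb{Z}_2)=0$ by the universal coefficient theorem, so the spin structure (when it exists) is unique. If $k$ is even, then $\bar c$ generates $H^2(L^{2n-1}_k;\mathbb{Z}_2)\cong\mathbb{Z}_2$, and $w_2=n\bar c$ is zero precisely when $n$ is even, proving the last two claims. I do not anticipate a genuine obstacle; the only thing one must be careful about is verifying that $\bar c$ really is nonzero in mod-$2$ cohomology when $k$ is even, which is immediate from the standard computation $H^*(L^{2n-1}_k;\mathbb{Z})$ and the universal coefficient sequence.
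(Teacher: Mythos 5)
Your proof is correct and is the standard argument; the paper does not actually supply a proof (it labels the result a ``standard exercise''), so there is nothing to compare against beyond confirming your computation. The identification $TL^{2n-1}_k\oplus\underline{\mathbb R}\cong L^{\oplus n}$ via the equivariant trivialization $TS^{2n-1}\oplus\underline{\mathbb R}\cong \underline{\mathbb C^n}$, the Whitney-sum computation $w_2=n\,\bar c$, and the cohomology bookkeeping $H^1(L;\mathbb Z_2)\cong\operatorname{Hom}(\mathbb Z_k,\mathbb Z_2)$, $H^2(L;\mathbb Z_2)\cong\operatorname{Ext}(\mathbb Z_k,\mathbb Z_2)$ with surjectivity of the mod-$2$ reduction on $H^2$ are all right. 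The one minor gap worth noting is the degenerate case $n=1$: there $L^1_k\cong S^1$ (not a space with $\pi_1\cong\mathbb Z_k$), $H^1(L^1_k;\mathbb Z_2)\cong\mathbb Z_2$ regardless of the parity of $k$, and $S^1$ carries two spin structures, so both the uniqueness claim in the lemma and the cohomology groups you invoke fail there. This is also a defect of the lemma as stated, so your proof is not to blame; but it would be cleaner to record explicitly that the argument (and the lemma) are meant for $n\ge 2$.
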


With the above exposition in place, we are ready to state the generators of $\Omega^{\Spin}_*(B\Z_3)\otimes \Z_{(3)}$, which have already been considered in the literature.  

\begin{prop}[{\!\!\!\cite[\S 12.3]{Heckman}}]
\label{prop:Z3gens}
The isomorphism in \cref{BP_Z3_alg} may be chosen so that the lens space $L_3^{2n-1}$ with its canonical orientation and unique spin structure refining that orientation, together with the principal $\Z_3$-bundle $S^{2n-1}\to L_3^{2n-1}$, represents the class $\ell_{2n-1}\in\Omega^\Spin_{2n-1}(B\Z_3)$ for $n = 1,2,3,4$. The K3 surface with trivial $\Z_3$-bundle represents $v_1\in\Omega^\Spin_4(B\Z_3)$.
\end{prop}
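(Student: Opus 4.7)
The approach is to identify each generator via the edge homomorphism of the Atiyah-Hirzebruch spectral sequence computed in \cref{Z3_AHSS}, exploiting the fact that \cref{BP_Z3_alg} only pins the generators $\ell_i$ down by their images $y_i$ in $E^\infty_{i,0}$, leaving us freedom to rechoose them on nose. First I would unpack the edge map $\Omega_{2n-1}^{\Spin}(B\Z_3) \twoheadrightarrow E^\infty_{2n-1,0} \hookrightarrow H_{2n-1}(B\Z_3;\Z_{(3)}) \cong \Z_3$, which sends $[M,f]$ to $f_*[M]$. For the class represented by $L^{2n-1}_3$ equipped with the tautological principal $\Z_3$-bundle $S^{2n-1}\to L^{2n-1}_3$, the classifying map is the composite of the quotient $L^{2n-1}_3 = S^{2n-1}/\Z_3 \hookrightarrow S^\infty/\Z_3 = B\Z_3$ induced by the standard equatorial inclusion. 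Using the standard CW structure on $B\Z_3$ with a single cell in each dimension, this is precisely the $(2n-1)$-skeleton inclusion, so it carries the fundamental class of $L^{2n-1}_3$ to the generator of the cellular $(2n-1)$-chains, hence to a generator of $H_{2n-1}(B\Z_3;\Z_{(3)}) \cong \Z_3$.

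Now for each $n \in \{1,2,3,4\}$, the image of $[L^{2n-1}_3]$ in $E^\infty_{2n-1,0}$ is therefore a unit multiple $u \cdot y_{2n-1}$ with $u \in \Z_3^\times$. Consulting the reduced $\BP$-homology groups quoted in the proof of \cref{BP_Z3_alg}, the filtration quotient $E^\infty_{2n-1,0}$ sits at the top of the Atiyah-Hirzebruch tower for $\Omega^{\Spin}_{2n-1}(B\Z_3)$, so any element mapping to a generator of $E^\infty_{2n-1,0}$ is itself a generator of the corresponding direct summand. The class of the lens space is thus a valid candidate for $\ell_{2n-1}$, and by absorbing the scalar $u$ into the chosen isomorphism~\eqref{BP_ring_Z3} (which is only determined modulo automorphisms of the $\Z_{(3)}$-algebra that act by units on each $\ell_i$) we may take $\ell_{2n-1} = [L^{2n-1}_3]$ outright. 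Spin structures do not cause trouble here: for odd $k$ the lens space $L^{2n-1}_k$ admits a unique spin structure refining its canonical orientation, so there is nothing to specify.

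For the K3 statement, I would invoke the well-known fact that $\Omega^{\Spin}_4(\mathrm{pt}) \cong \Z$ is detected by the $\widehat A$-genus, and that $\widehat A(\mathrm{K3}) = 2$, which is a unit in $\Z_{(3)}$. Hence K3 generates $\Omega_4^{\Spin}(\mathrm{pt}) \otimes \Z_{(3)}$. The class $v_1$ in \cref{BP_Z3_alg} corresponds (via the reduction from $\Omega^{\Spin}_*$ to $\BP_*$ reviewed there) to the coefficient $p_1 \in E^2_{0,4}$ of \cref{Z3_AHSS}, i.e.\ to the image of the generator of $\Omega^{\Spin}_4(\mathrm{pt})$ under $\Omega^{\Spin}_4(\mathrm{pt}) \to \Omega^{\Spin}_4(B\Z_3)$ (the map classifying the trivial bundle). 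So K3 with trivial $\Z_3$-bundle represents $v_1$, up to adjusting the isomorphism by a unit as before.

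The main subtlety is keeping track of what freedom we have in \cref{BP_Z3_alg}: the theorem only determines the isomorphism class of the algebra, not a preferred isomorphism, and my argument identifies $[L^{2n-1}_3]$ only up to units modulo lower-filtration corrections. For $n=3$ and $n=4$ there is genuine lower filtration (the $\Z_9$ summands in $\widetilde{\BP}_5(B\Z_3)$ and $\widetilde{\BP}_7(B\Z_3)$), so $\ell_5$ and $\ell_7$ are only defined modulo the $\Z_3$-submodule $v_1\ell_1 \Omega^\Spin_*$ and its analogue. The verification I expect to be most delicate is checking that the \emph{given} lens space lifts $y_{2n-1}$ in a way compatible with the multiplicative relation $3\ell_5 = v_1\ell_1$; this is really a statement about the nontrivial extension in the Atiyah-Hirzebruch spectral sequence, and at prime $3$ it follows from comparison with the (already-computed) BP tower since there is no room for any alternative lift to spoil the relation.
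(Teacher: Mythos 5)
The paper does not actually prove this proposition in-text; it only cites Rosenberg~\cite{Ros86} and~\cite[\S 12.3]{Heckman}. Your argument is therefore a genuinely different, self-contained route. The core idea --- the classifying map $L_3^{2n-1}\to B\Z_3$ is a skeletal inclusion, so $f_*[L_3^{2n-1}]$ generates $H_{2n-1}(B\Z_3;\Z_{(3)})\cong\Z_3$, so the lens space class hits the bottom-row generator $y_{2n-1}$ of the collapsed Atiyah--Hirzebruch spectral sequence of \cref{Z3_AHSS} --- is correct, and is essentially the standard way these representatives get established. One cosmetic point: for the K3 claim it is cleaner to say $[\mathrm{K3}]$ generates $\Omega_4^\Spin\cong\Z$ on the nose (Rokhlin gives $\widehat A\colon\Omega_4^\Spin\xrightarrow{\;\cong\;}2\Z$ and $\widehat A(\mathrm{K3})=2$), so no appeal to $2$ being a $3$-local unit is needed.

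The one place where your argument does not quite close is the step you yourself flag as delicate, and I don't think the sentence ``there is no room for any alternative lift to spoil the relation'' resolves it. For $n=3,4$ your filtration argument shows $\widetilde\Omega^\Spin_{2n-1}(B\Z_3)\cong\Z_9$ is cyclic, any lift of $y_{2n-1}$ has order $9$, and $3$ times any lift lands in the bottom $\Z_3$ alongside $v_1\ell_{2n-5}$ (which is nonzero since it detects $p_1 y_{2n-5}$ on $E^\infty$). But that only gives $3[L_3^{2n-1}] = u\cdot v_1\,[L_3^{2n-5}]$ for an a priori undetermined unit $u\in\Z_3^\times$. The freedom in \cref{BP_Z3_alg} to rescale generators cannot absorb $u=2$ without sacrificing either one of the lens space normalizations or the identification $v_1 = [\mathrm{K3}]$, so the specific value of $u$ has to be computed, not just constrained. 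That requires an extra input --- for instance the $\eta$-invariant computations in~\cite[\S 12.3]{Heckman}, or an explicit bounding manifold for $3[L_3^5]\sqcup(-\mathrm{K3})\times L_3^1$. Apart from pinning down this unit (and its degree-$7$ counterpart), the proof is sound.
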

Rosenberg~\cite[Proof of Theorem 2.12]{Ros86} showed that lens spaces generate $\Omega_*^\Spin(B\Z_3)$ as an $\Omega_*^\Spin$-module; these specific lens spaces were worked out in~\cite[\S 12.3]{Heckman} building on Rosenberg's result.

\subsection{\texorpdfstring{Atiyah-Hirzebruch spectral sequence for $\Omega^{\Spin}_*(BD_6)$}{Atiyah-Hirzebruch spectral sequence for Spin bordism of BD6}}

The cohomology of $H^*(BD_{2n};\Z)$ was computed by Handel in \cite{Han93}, and his result together with the universal coefficient theorem tells us $H_*(BD_6;\Z_{(3)})$. To summarize: 
\begin{equation}
\label{eqn:D6Hom}
    H_k(BD_6;\Z_{(3)})\cong \begin{cases} 
      \Z_{(3)}, & k=0 \\
      \Z_3, & k\equiv 3 \mod 4 \\
      0, & \text{otherwise.} 
   \end{cases} 
\end{equation}
We are then enabled to compute $\Omega^{\Spin}_*(BD_6)$ using the Atiyah-Hirzebruch spectral sequence. This  has been previously considered in \cite{Heckman}.

\begin{lem}[\!\!{\cite[Theorem 14.3]{Heckman}}]
\label{lem:AhssD6}
\hfill
\begin{enumerate}
      \item The Atiyah-Hirzebruch  spectral sequence computing $\Omega^{\Spin}_*(BD_6)$ collapses on the $E^2$-page.
    \item In total degree less than 8, the $E^2$-page is generated by the classes $v_1\in E^2_{0,4}$, $z_3\in E^2_{3,0}$, $z_7 \in E^2_{7,0}$, and $p\in E^2_{3,4}$.  
\end{enumerate}
\end{lem}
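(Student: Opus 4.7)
The plan is to read off the $E^2$-page directly from~\eqref{eqn:D6Hom} and the $3$-local spin bordism of a point, and then rule out all differentials on sparsity grounds.

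First, by the Atiyah-Hirzebruch spectral sequence, we have $E^2_{p,q} = H_p(BD_6; \Omega_q^{\Spin}(\mathrm{pt}))$, and since we have implicitly localized at $3$, the universal coefficient theorem lets us replace this with $H_p(BD_6;\Z_{(3)})\otimes \Omega_q^{\Spin}(\mathrm{pt})$ plus a Tor term. Because $H_*(BD_6;\Z_{(3)})$ is concentrated in degrees $p=0$ and $p\equiv 3\pmod 4$ by~\eqref{eqn:D6Hom}, and because $\Omega_q^{\Spin}(\mathrm{pt})\otimes\Z_{(3)}$ is zero for $0<q<8$ except in $q=4$ (where it is $\Z_{(3)}\cdot v_1$) thanks to the vanishing of the $\Z_2$ summands in $\Omega_1^{\Spin}$ and $\Omega_2^{\Spin}$ after $3$-localization, the support of the $E^2$-page sits entirely in the locus where $p\in\{0\}\cup\{3,7,11,\dotsc\}$ and $q\in\{0,4,8,\dotsc\}$. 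Restricting to total degree $<8$, the surviving bidegrees are $(0,0)$, $(0,4)$, $(3,0)$, $(3,4)$, and $(7,0)$, with generators $1$, $v_1$, $z_3$, $p$, and $z_7$ respectively; this proves part~(2).

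For part~(1), I would show that every differential out of a potentially nonzero class lands in a trivial group, so no nontrivial $d_r$ can exist for any $r\ge 2$. Recall $d_r$ has bidegree $(-r, r-1)$. If the source has $p=0$, then $d_r$ maps into $p=-r<0$ and so vanishes trivially. Otherwise the source sits at $p=4k+3$ for some $k\ge 0$, so the target has $p-r$ which, to be in the support, must be either $0$ or $\equiv 3\pmod 4$. In the first case $r=4k+3$, giving $q+r-1\equiv q+2\pmod 4$, and since $q\equiv 0\pmod 4$ the target $q$-coordinate is $\equiv 2\pmod 4$, outside the support. In the second case $r\equiv 0\pmod 4$, giving $q+r-1\equiv 3\pmod 4$, again outside the support. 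Thus every differential has either a zero source or a zero target, and the spectral sequence collapses at $E^2$.

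I do not expect any serious obstacle in this argument: the key inputs are Handel's calculation of $H^*(BD_{2n};\Z)$ (recorded in~\eqref{eqn:D6Hom}), the standard $3$-local structure of $\Omega_*^{\Spin}(\mathrm{pt})$, and a purely numerical mod-$4$ check for the differentials. The main thing to be careful about is that the collapsing argument also needs to control higher-degree phenomena: one should note that the same parity/mod-$4$ count works in every total degree, so the conclusion that the spectral sequence collapses is global, not restricted to total degree $<8$. Once collapse is established, the enumeration of generators in part~(2) is then read off from~\eqref{eqn:D6Hom} and the structure of $\Omega^{\Spin}_*(\mathrm{pt})\otimes\Z_{(3)}$, with $p$ naturally identified as the product $v_1\cdot z_3$ once the module structure is taken into account.
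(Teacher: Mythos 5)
Your proof is correct, and it is the natural sparsity argument for the statement; the paper itself does not re-prove this lemma but simply cites it from \cite[Theorem 14.3]{Heckman}. The key points you got right are (i) using the free-over-$\Z_{(3)}$ structure of $\Omega_q^{\Spin}\otimes\Z_{(3)}$ to kill the Tor term in the universal coefficient theorem, (ii) the observation that $3$-locally the support of the $E^2$-page is confined to $p\in\{0\}\cup(3+4\Z_{\ge 0})$ and $q\in 4\Z_{\ge 0}$, and (iii) the clean mod-$4$ check that every $d_r$ with source in this support must land outside it. Your identification $p = v_1\cdot z_3$ is also consistent with the module structure recorded in the paper's \cref{thm:spin_D6_module}, where the corresponding lift $k_3$ satisfies $v_1k_3$ detecting the class in bidegree $(3,4)$. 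One small stylistic note: in part~(2) of the lemma the class $1\in E^2_{0,0}$ is tacitly included among the generators; your enumeration of the nonzero bidegrees in total degree $<8$ as $(0,0),(0,4),(3,0),(3,4),(7,0)$ makes this explicit, which is fine.
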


\begin{thm}
\label{thm:spin_D6_module}
\hfill
\begin{enumerate}
\item There is an isomorphism of $\Omega^{\Spin}_*$-modules 
\begin{equation}
    \Omega^{\Spin}_*(BD_6) \cong \Omega^{\Spin}_*\{k_3,k_7,
    \dots\}/(3k_3, 3k_7-v_1k_3,\dots),
\end{equation}
where $|k_i| = i$ and all generators and relations not listed are in degrees greater than 7. 
\item The isomorphism may be chosen so that the image of $k_i$ in the $E^\infty$-page of the Atiyah-Hirzebruch spectral sequence is $z_i$. 
\end{enumerate}
\end{thm}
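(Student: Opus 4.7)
The plan is to reduce the computation to the already-established \cref{BP_Z3_alg} using the transfer/restriction adjunction for the normal Sylow $3$-subgroup $\Z_3 \triangleleft D_6$. At the prime $3$ the index $[D_6 : \Z_3] = 2$ is a unit, so the inclusion $i \colon \Z_3 \hookrightarrow D_6$ gives rise to maps
\begin{equation}
i_* \colon \Omega^{\Spin}_*(B\Z_3) \longrightarrow \Omega^{\Spin}_*(BD_6), \qquad t \colon \Omega^{\Spin}_*(BD_6) \longrightarrow \Omega^{\Spin}_*(B\Z_3)
\end{equation}
satisfying $i_* \circ t = 2 \cdot \mathrm{id}$ and $t \circ i_* = 1 + \sigma$, where $\sigma$ denotes the conjugation action of the nontrivial element of $D_6/\Z_3 = \Z_2$. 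Consequently $\tfrac{1}{2}t$ identifies $\Omega^{\Spin}_*(BD_6)$ with the $\Z_2$-invariants in $\Omega^{\Spin}_*(B\Z_3)$ as $\Omega^{\Spin}_*$-modules, and its inverse is $i_*$.

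Next I would pin down the $\Z_2$-action on the generators of \cref{BP_Z3_alg}. Conjugation sends $\Z_3 \subset U(1)$ to its inverse, which on the lens space representative $L^{2n-1}_3 = S^{2n-1}/\Z_3$ of $\ell_{2n-1}$ (\cref{prop:Z3gens}) is realized by complex conjugation on $S^{2n-1}\subset \C^n$. This self-map has degree $(-1)^n$, so $\sigma(\ell_{2n-1}) = (-1)^n\ell_{2n-1}$. In particular $\ell_3$ and $\ell_7$ are invariant while $\ell_1$ and $\ell_5$ are anti-invariant; since $v_1 \in \Omega^{\Spin}_4$ is pulled back from a point it is automatically fixed.

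With these ingredients the module structure is essentially forced. In degrees $\le 7$ the invariant submodule is generated over $\Omega^{\Spin}_*$ by $\ell_3$ and $\ell_7$, subject to the inherited relations $3\ell_3 = 0$ and $3\ell_7 = v_1\ell_3$ (both sides of which are invariant). Setting $k_3 \coloneqq i_*(\ell_3)$ and $k_7 \coloneqq i_*(\ell_7)$ transports these relations across the isomorphism, proving part (1). For part (2), naturality of the Atiyah-Hirzebruch spectral sequence combined with \eqref{eqn:D6Hom} shows that the map $Bi$ sends the generator $y_{2n-1} \in H_{2n-1}(B\Z_3;\Z_{(3)})$ to a generator of $H_{2n-1}(BD_6;\Z_{(3)}) \cong \Z_3$ for $2n-1 \in \{3,7\}$; after renaming this generator $z_{2n-1}$, the class $k_i$ reduces to $z_i$ on the $E^{\infty}$-page.

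The main obstacle---really just a bookkeeping point---is ensuring the transfer argument respects the $\Omega^{\Spin}_*$-module structure and that the action of $\sigma$ on $v_1$-multiples of lens space classes is as claimed; happily, since all products $\ell_i\ell_j$ vanish and $v_1$ is acted on trivially, no compatibility issue ever becomes subtle. As a consistency check, the relation $3k_7 = v_1 k_3 \neq 0$ forces the filtration extension in total degree $7$ (between $E^{\infty}_{3,4}$ and $E^{\infty}_{7,0}$) to be non-split, giving $\Omega^{\Spin}_7(BD_6) \cong \Z_9$, exactly as predicted by passing the known answer $\Omega^{\Spin}_7(B\Z_3)_{(3)} \cong \Z_9$ through the invariants.
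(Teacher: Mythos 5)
Your proof is correct, and it takes a genuinely different route from the one in the paper. The paper proves \cref{thm:spin_D6_module} by running the Atiyah--Hirzebruch spectral sequence for $BD_6$ directly (\cref{lem:AhssD6}) and then quoting the additive answers $\widetilde\Omega^\Spin_3(BD_6)\cong\Z_3$ and $\widetilde\Omega^\Spin_7(BD_6)\cong\Z_9$ from \cite[Theorem 14.3]{Heckman} to resolve the extension $3k_7 = v_1 k_3$. You instead invoke the transfer--restriction pair for the normal Sylow $3$-subgroup $\Z_3\triangleleft D_6$, which after $3$-localization identifies $\Omega^\Spin_*(BD_6)$ with the $\Z_2$-invariants of $\Omega^\Spin_*(B\Z_3)$, and then pin down the $\Z_2$-action geometrically: conjugation corresponds to inversion on $\Z_3$, realized on the lens-space generator $L^{2n-1}_3$ by complex conjugation, which has degree $(-1)^n$ and therefore sends $\ell_{2n-1}\mapsto(-1)^n\ell_{2n-1}$. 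Since $2$ is a unit mod $3$, the anti-invariants $\ell_1,\ell_5$ vanish in the invariants and $\ell_3,\ell_7$ survive, and $i_*$ transports \cref{BP_Z3_alg}'s relations across. Your approach has the benefit of being self-propelled from \cref{BP_Z3_alg,prop:Z3gens} (no need to independently re-derive the $\Z_9$ in degree $7$) and it simultaneously produces the generators of \S\ref{Sec:D6gens} with no extra work; the paper's is terser because it can cite the additive answer and does not need the geometric input about the degree of conjugation. One small presentational note: your statement about the invariant submodule implicitly concerns the reduced bordism; if you wish to match the unreduced formulation in the theorem you should also include the degree-$0$ generator coming from $\Omega^\Spin_*(\pt)$, but this does not affect the substance of the argument.
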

\begin{proof}
    The theorem follows from \cref{lem:AhssD6} and the fact that $\widetilde{\Omega}^{\Spin}_3(BD_6)\cong \Z_3$ and $\widetilde{\Omega}^{\Spin}_7(BD_6)\cong \Z_9$, see \cite[Theorem 14.3]{Heckman} for reference. 
\end{proof}

\subsubsection{Generators}
\label{Sec:D6gens}
The generators of $\Omega^{\Spin}_*(BD_6)$ are the same as those for \cref{prop:Z3gens}. Indeed, surjectivity of the map $\Omega^{\Spin}_*(B\Z_3)\rightarrow \Omega^{\Spin}_*(BD_6)$ in the range we are interested in implies that the generators for $\Omega^{\Spin}_k(BD_6)$, $k=3,7$,  can be chosen to be the generators we found for $\Omega^{\Spin}_{k}(B\Z_3) $ in \cref{prop:Z3gens}. 

\section{\texorpdfstring{Calculations and generators for $B\Z_3\times BD_6$}{Calculations and generators for BZ3 x BD6}}\label{s:smash_3}

\subsection{\texorpdfstring{$\Omega^\Spin_*(B\Z_3\times BD_6)$ as a  $\Omega^{\Spin}_*(B\Z_3)$-module}{Spin bordism of BZ3 x BD6 as a (Spin bordism of BZ3) module}}
\label{ss:Z3D3}

In this subsection, we determine the $\Omega^\Spin_*(B\Z_3)$-module structure on $\widetilde{\Omega}^{\Spin}_*(B\Z_3\times BD_6)$ up to degree seven.
\begin{rem}\label{one_sided_smash}
As an $\Omega_*^\Spin(B\Z_3)$-module, $\Omega_*^\Spin(B\Z_3\times BD_6)$ splits as a sum of $\Omega_*^\Spin(B\Z_3)$ and $\Omega_*^\Spin((B\Z_3)_+\wedge BD_6)$; the module structure on Atiyah-Hirzebruch spectral sequences also splits in this way. Therefore in this section we will compute the latter summand and then add back in $\Omega_*^\Spin(B\Z_3)$ at the end.

In the stable splitting of the product that we discussed in \cref{distributivity}, $B\Z_3\times BD_6$ splits into $\pt$, $B\Z_3$, $BD_6$, and $B\Z_3\wedge BD_6$; we keep the latter two pieces.
\end{rem}

We begin with the Atiyah-Hirzebruch spectral sequence computing $\widetilde{\Omega}^\Spin_*((B\Z_3)_+\wedge BD_6)$. Recall the homology of $H_*(B\Z_3;\Z_{(3)})$ and $H_*(BD_6;\Z_{(3)})$ given in \cref{trivial_pontrjagin} and \eqref{eqn:D6Hom}. The $E^2$-page of the Atiyah-Hirzebruch spectral sequence computing $\widetilde{\Omega}^\Spin_*((B\Z_3)_+\wedge BD_6)$ can then be computed using the K\"{u}nneth theorem; see \cref{Z3D6AHss} for reference. For the total degree we are interested in, less than 8, all differentials vanish by degree reasons. Indeed, in the homological Atiyah-Hirzebruch spectral sequence, differentials go up and to the left, and decrease total degree by 1. Furthermore, the Atiyah-Hirzebruch spectral sequence computing $\widetilde{\Omega}^\Spin_*(BD_6)$ collapses on the $E^2$-page. 

\begin{figure}[H]
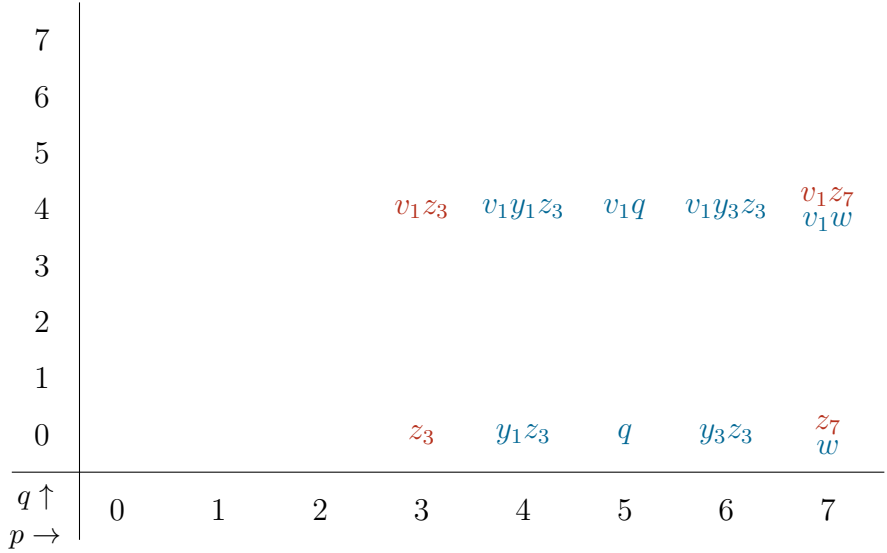

\centering
\begin{sseqdata}[name=Z3D6AHSS, page=2, homological Serre grading, xrange={0}{7}, yrange={0}{7},
classes={draw=none}, xscale=0.75, yscale=0.75,
x label = {$\displaystyle{q\uparrow \atop p\rightarrow}$}, xscale=1.8,
x axis extend end=0.75cm,
x label style = {font = \small, xshift = -32ex, yshift= 3 ex}
]
\begin{scope}[MidnightBlue]
        \class["y_1z_3"](4, 0)
        \class["q"](5, 0)
        \class["y_3z_3"](6, 0)
  %      \class["w"](7, 0)

         \class["v_1y_1z_3"](4, 4)
        \class["v_1q"](5, 4)
        \class["v_1y_3z_3"](6, 4)
  %      \class["v_1w"](7, 4)
\end{scope}
\begin{scope}[BrickRed]
    \class["z_3"](3, 0)
  %  \class["z_7"](7, 0)
    \class["v_1z_3"](3, 4)
 %   \class["v_1z_7"](7, 4)
\end{scope}
\class["{\displaystyle\textcolor{BrickRed}{z_7}\atop\displaystyle\textcolor{MidnightBlue}{w}}"](7, 0)
%\class["\textcolor{BrickRed}{v_1z_7}{,}\textcolor{MidnightBlue}{v_1w}"](7, 4)
\class["{\displaystyle\textcolor{BrickRed}{v_1z_7}\atop\displaystyle\textcolor{MidnightBlue}{v_1w}}"](7, 4)
\end{sseqdata}
\printpage[name=Z3D6AHSS, page=2]
\caption{\label{Z3D6AHss} The $E^2$-page of the Atiyah-Hirzebruch spectral sequence computing $\widetilde\Omega_*^{\Spin}((B\Z_3)_+\wedge BD_6)$. Each named class generates a $\Z_3$.}
\label{AHSS_D6}
\end{figure}

\begin{thm}
\label{Thm:Z3D6E_2}
    In the range $p+q \leq 7$, the Atiyah-Hirzebruch spectral sequence computing $\widetilde{\Omega}^\Spin_*((B\Z_3)_+\wedge BD_6)$ collapses on the $E^2$-page. As a module over $\mathrm{AHSS}(B\Z_3)$,
    \begin{equation}
        \mathrm{AHSS}((B\Z_3)_+\wedge BD_6)\cong \Z_3\{z_3, q, z_7, w, \dots\}/(y_1q = \lambda y_3z_3, \dotsc)
    \end{equation}
    for some $\lambda\in\Z_3$, with all unwritten generators and relations in total degree $8$ and above.
\end{thm}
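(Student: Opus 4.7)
The plan is to compute the $E^2$-page via Künneth, establish collapse in the stated range by combining sparsity with the Leibniz rule, and then read off the module structure from $E^\infty=E^2$.

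First, applying the Künneth theorem to $\widetilde H_*((B\Z_3)_+\wedge BD_6;\Z_{(3)})$ and using \cref{trivial_pontrjagin} together with~\eqref{eqn:D6Hom}, I would identify every nonzero class in total degree $\leq 7$. The tensor-product summand contributes $z_3 = 1\otimes z_3$, $y_1 z_3$, $y_3 z_3$, $z_7=1\otimes z_7$, and $v_1 z_3$; the Tor summand contributes $q = \mathrm{Tor}(y_1,z_3)$ in total degree $5$ and $w=\mathrm{Tor}(y_3,z_3)$ in total degree $7$. This reproduces the $E^2$-page shown in \cref{Z3D6AHss}.

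For collapse, the sparsity of $H_*(B\Z_3;\Z_{(3)})$ (nonzero only in degree $0$ and odd degrees), of $\widetilde H_*(BD_6;\Z_{(3)})$ (nonzero only in degrees $\equiv 3\pmod 4$), and of $\Omega^\Spin_*\otimes\Z_{(3)}$ (nonzero only in degrees $\equiv 0\pmod 4$) forces every potential target bidegree of a $d_r$ out of a generator in total degree $\leq 7$ to vanish on the $E^2$-page. For non-generator classes of the form $y_i\cdot x$ with $x$ a generator, I would invoke the Leibniz rule of \cref{Thm:Kunneth} together with the collapse of the AHSS for $B\Z_3$ (\cref{Z3_AHSS}) to obtain $d_r(y_i\cdot x)=y_i\cdot d_r(x)=0$. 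The only incoming differential into the range that is not automatically zero by sparsity is $d_5\colon E^5_{8,0}\to E^5_{3,4}$, whose source is generated by $y_5\cdot z_3$, so the same Leibniz argument kills it.

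Reading off the module structure from $E^\infty=E^2$, the generators over $\mathrm{AHSS}(B\Z_3)$ in the range are $z_3$, $q$, $z_7$, and $w$ (with the $v_1$-multiples absorbed via $v_1\in\Omega^\Spin_4$). A bidegree count gives $E^\infty_{6,0}\cong\Z_3$, so both $y_1 q$ and $y_3 z_3$ lie in this one-dimensional $\Z_3$-vector space and are therefore proportional, yielding $y_1 q = \lambda y_3 z_3$ for some $\lambda\in\Z_3$. The hardest step will be ruling out $d_5$ from $(8,0)$ to $(3,4)$, since sparsity alone fails and it essentially requires the multiplicativity statement of \cref{Thm:Kunneth}; determining the exact value of $\lambda$ would require a Tor-level calculation, but the theorem's statement is agnostic about this and so it poses no obstacle.
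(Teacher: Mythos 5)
Your proof is correct and follows the paper's approach exactly: Künneth on $\Z_{(3)}$-homology to obtain the $E^2$-page, sparsity plus the Leibniz rule of \cref{Thm:Kunneth} for collapse, and the one-dimensionality of $E^\infty_{6,0}$ to deduce the relation $y_1 q = \lambda y_3 z_3$. You are in fact more explicit than the paper about the one potentially dangerous differential $d_5\colon E^5_{8,0}\to E^5_{3,4}$, which the paper's ``degree reasons'' phrase by itself does not rule out; the Leibniz argument you give is the right fix. One small omission: $E^2_{8,0}\cong\Z_3\{y_5 z_3, y_1 z_7\}$ has a second generator $y_1 z_7$ that you did not list, but the same argument kills $d_5$ there too (since $z_7$ is a permanent cycle pulled back from the $BD_6$ wedge summand), so your conclusion stands.
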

We will never need to know the precise value of $\lambda$, so do not compute it.
\begin{proof}
A straightforward application of the K\"{u}nneth map (\cref{Thm:Kunneth}) reveals that the classes $y_1 z_3$ and $y_3z_3$ generate $E^2_{4,0}$ and $E^2_{6,0}$, respectively. However, the K\"{u}nneth map misses the generator in degrees five and seven, leaving us with `new' classes $q$ and $w$.     
\end{proof}
We draw this Atiyah-Hirzebruch spectral sequence in \cref{AHSS_D6}.

\cref{BP_Z3_alg,thm:spin_D6_module,Thm:Z3D6E_2} allow us to conclude the $\Omega^\Spin_*(B\Z_3)$-module structure of $\Omega^\Spin_*((B\Z_3)_+ \wedge BD_6)$. 

\begin{thm} \label{3_wedge_mod}
\hfill
\begin{enumerate}
\item 
There is an isomorphism of $\Omega^\Spin_*(B\Z_3)$-modules:
\begin{equation}
    \Omega^{\Spin}_*((B\Z_3)_+ \wedge BD_6)\cong \Omega^{\Spin}_*\{k_3,k_7,n,m\dots\}/%\mathcal{R},
%\end{equation}
%where
%\begin{equation}
%    \mathcal{R} = %(l_il_j, 3l_1, 3l_3, 3l_5-v_1l_1,3l_7-%v_1l_3,
    (3k_3,3k_7-v_1k_3, 3n,3m, \ell_1n - \lambda'\ell_3 k_3, \dots),
\end{equation}
for some $\lambda'\in\Z_3$, and all generators and relations not listed have topological degree greater than 7. 
\item 
The isomorphism can be chosen so that the images of the generators on the $E^\infty$ page of the $E^\infty$-page of the Atiyah-Hirzebruch spectral sequence are: $k_i \mapsto z_i$, $n\mapsto q$ and $m \mapsto w$.
\end{enumerate}
\end{thm}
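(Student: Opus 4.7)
The plan is to lift the $\mathrm{AHSS}(B\Z_3)$-module generators of the $E^\infty$-page identified in \cref{Thm:Z3D6E_2} to genuine module generators of $\Omega_*^\Spin((B\Z_3)_+\wedge BD_6)$, and then resolve hidden extensions one degree at a time through total degree $7$. Since the AHSS has already collapsed in this range, only the additive and multiplicative extension data remains to be pinned down.

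First, I would obtain $k_3$ and $k_7$ by naturality. The basepoint inclusion $S^0\hookrightarrow(B\Z_3)_+$ induces a wedge-summand inclusion $BD_6\hookrightarrow (B\Z_3)_+\wedge BD_6$; pushing forward the classes of \cref{thm:spin_D6_module} along this map yields $k_3$ and $k_7$ satisfying $3k_3=0$ and $3k_7-v_1k_3=0$ by naturality of the $\Omega_*^\Spin(B\Z_3)$-module structure on the AHSS.

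Second, I would pick arbitrary lifts $n\in\Omega_5^\Spin((B\Z_3)_+\wedge BD_6)$ and $m\in\Omega_7^\Spin((B\Z_3)_+\wedge BD_6)$ of the $E^\infty$-classes $q$ and $w$, and then pin down the three remaining pieces of data: the orders of $n$ and $m$ and the hidden module product $\ell_1\cdot n$. In each case the argument is by direct inspection of the $E^\infty$-page in \cref{AHSS_D6}. The element $3n$ lies in filtration strictly less than $5$ in total degree $5$, but every $E^\infty_{p,q}$ with $p+q=5$ and $p<5$ vanishes, since $\widetilde H_*(B\Z_3;\Z_{(3)})$ is concentrated in odd positive degrees, $\widetilde H_*(BD_6;\Z_{(3)})$ in degrees $\equiv 3\pmod 4$, and $\Omega_*^\Spin\otimes\Z_{(3)}$ in degrees divisible by $4$; hence $3n=0$. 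The same filtration check in total degree $6$ forces $\ell_1 n = \lambda'\ell_3 k_3$ exactly, with $\lambda'\in\Z_3$ determined by the constant $\lambda$ appearing in \cref{Thm:Z3D6E_2}. For $3m$ in total degree $7$, the only candidate $E^\infty$-class of strictly smaller filtration is $v_1 z_3\in E^\infty_{3,4}$, so $3m=\mu v_1 k_3$ for some $\mu\in\Z_3$; using the already-established relation $3k_7=v_1 k_3$, I would replace $m$ by $m-\mu k_7$ to arrange $3m=0$ without loss of generality.

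Finally, I would confirm that the module presented by $\{k_3,k_7,n,m\}$ modulo the listed relations has precisely the right order in each degree $k\le 7$, namely $\Z_3$ in degrees $3,4,5,6$ and $\Z_9\oplus\Z_3$ in degree $7$ (the $\Z_9$ arising because $3k_7=v_1 k_3\ne 0$). Combined with the explicit lifts above, this forces the presentation map to be an isomorphism in the stated range. The main obstacle is the hidden extension $3m=\mu v_1 k_3$, which a priori would contribute an extra invariant to the presentation; it is resolved by using $3k_7=v_1 k_3$ to absorb $\mu$ into a redefinition of $m$. No analogous escape mechanism is available for $\lambda'$, which is why that single parameter genuinely survives in the final presentation, though its exact value plays no role in the physics applications.
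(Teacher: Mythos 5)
Your proof is correct and spells out the details that the paper leaves implicit (the paper's ``proof'' of this theorem is a one-line appeal to \cref{BP_Z3_alg,thm:spin_D6_module,Thm:Z3D6E_2} with no filtration bookkeeping). Pushing $k_3$, $k_7$ in from the stable wedge summand $BD_6$, and then resolving the remaining extensions degree by degree by noting that the off-diagonal $E^\infty$-groups vanish in total degree $5$ and $6$ (since $\Omega^{\Spin}_1\otimes\Z_{(3)}=\Omega^{\Spin}_2\otimes\Z_{(3)}=0$ and $\widetilde H_*((B\Z_3)_+\wedge BD_6;\Z_{(3)})$ vanishes in degrees $1$ and $2$) and that $E^\infty_{3,4}=\Z_3\cdot v_1z_3$ is the only lower-filtration contribution in total degree $7$, is exactly the intended argument.

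One small point worth making explicit: after you replace $m$ by $m-\mu k_7$ to arrange $3m=0$, its image on the $E^\infty$-page becomes $w-\mu z_7$, not $w$. This is harmless, but it deserves a sentence: in \cref{Thm:Z3D6E_2} the class $w$ is only well-defined modulo the image of the Kunneth map in $E^\infty_{7,0}$, i.e.\ modulo $\Z_3\cdot z_7$, so ``$w$'' should be read as the particular representative making $m\mapsto w$ hold — the theorem's part (2) is an existence statement about a consistent simultaneous choice, not a constraint on an already-fixed $w$. With that caveat noted, the ``no analogous escape mechanism for $\lambda'$'' remark is correct: there is no degree-$4$ module generator to absorb into $n$, so $\lambda'$ is genuinely pinned down (up to sign) by the $E^\infty$-relation $y_1q=\lambda y_3z_3$.
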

\subsection{Generators}
The stable splitting in \cref{distributivity}, as well as \cref{prop:Z3gens} and the discussion in \cref{Sec:D6gens}, reveal that the only generators for $\Omega^\Spin_*(B\Z_3\times BD_6)$ we have yet to consider are those for $\Omega^\Spin_*(B\Z_3 \wedge BD_6)$. Fortunately, there are only four such generators and two are in the image of the K\"{u}nneth map. 
\begin{prop}\label{3_wedge_gens_1}
\hfill
\begin{enumerate}
\item
    $L^1_3 \times L^3_3$ with its canonical orientation and unique spin structure, where $L^1_3$ carries the principal $\Z_3$-bundle $S^1 \rightarrow L^1_3$ and $L_3^3$ carries the principal $D_6$-bundle induced from $\Z_3\hookrightarrow D_6$,  represents the class $\ell_1k_3 \in \Omega^\Spin_{4}(B\Z_3 \wedge BD_6)$.
    \item $L^3_3 \times L^3_3$ with its canonical orientation and unique spin structure, where the first $L^3_3$ carries the principal $\Z_3$-bundle $S^3\rightarrow L_3^3$ and the right $L_3^3$ carries the principal $D_6$-bundle induced from $\Z_3 \hookrightarrow D_6$, represents the class $\ell_3k_3 \in \Omega^\Spin_6(B\Z_3 \wedge BD_6)$.  
\end{enumerate}
\end{prop}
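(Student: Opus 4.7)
The plan is to invoke the multiplicative/module structure from the Kunneth theorem (Theorem~\ref{Thm:Kunneth}), which promotes the external product of manifolds (with their bundles and the product spin structure) to the $\Omega_*^{\Spin}(B\Z_3)$-module action on $\Omega_*^{\Spin}((B\Z_3)_+\wedge BD_6)$. Concretely, if $M$ is a closed spin manifold with principal $\Z_3$-bundle $P\to M$ representing $\alpha\in\Omega_*^{\Spin}(B\Z_3)$ and $N$ is a closed spin manifold with principal $D_6$-bundle $Q\to N$ representing $\beta\in\Omega_*^{\Spin}(BD_6)$, then $M\times N$, equipped with the product spin structure and external product bundle, represents $\alpha\cdot\beta$ in $\Omega_*^{\Spin}(B\Z_3\times BD_6)$; under the stable splitting of \cref{distributivity}, the component in the smash-product summand is precisely the module action class.

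The next step is to feed in the generators already identified. By \cref{prop:Z3gens}, the lens space $L_3^1$ with its canonical orientation, unique spin structure, and principal $\Z_3$-bundle $S^1\to L_3^1$ represents $\ell_1\in\Omega_1^{\Spin}(B\Z_3)$; similarly, $L_3^3$ represents $\ell_3$. By the discussion in \S\ref{Sec:D6gens}, the lens space $L_3^3$ with its canonical spin structure and principal $D_6$-bundle induced from $\Z_3\hookrightarrow D_6$ represents $k_3\in\Omega_3^{\Spin}(BD_6)$. Forming the Cartesian products $L_3^1\times L_3^3$ and $L_3^3\times L_3^3$, and applying the previous paragraph, yields manifold representatives for $\ell_1\cdot k_3$ and $\ell_3\cdot k_3$ respectively.

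The remaining task is to confirm that these representatives lie in the smash-product summand and genuinely give the elements labeled $\ell_1 k_3$ and $\ell_3 k_3$ in \cref{3_wedge_mod}, rather than the ``new'' module generators $n$ or $m$. This is done at the level of the Atiyah-Hirzebruch spectral sequence: the Kunneth map on $E^2$-pages sends $y_1\otimes z_3\mapsto y_1z_3$ and $y_3\otimes z_3\mapsto y_3z_3$, which by \cref{Thm:Z3D6E_2} occupy bidegrees $(4,0)$ and $(6,0)$ distinct from the bidegrees of $q$ and $w$ that detect $n$ and $m$. Since the spectral sequence collapses on $E^2$ in this range, the $E^\infty$-image of $\ell_1\cdot k_3$ is $y_1z_3$ and that of $\ell_3\cdot k_3$ is $y_3z_3$, matching the identifications in \cref{thm:spin_D6_module,3_wedge_mod}.

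The main obstacle is essentially bookkeeping rather than a substantive computation: one must verify that the product spin structure on $L_3^r\times L_3^3$ is the one consistent with the conventions used to detect generators, and that the naturality squares for the Kunneth map commute with the identifications at the $E^\infty$-level. Since no new bordism computation is required and no differentials can obstruct the identification in this range, the argument reduces to tracing through the module structure established in \cref{3_wedge_mod}.
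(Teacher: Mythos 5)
The paper gives no explicit proof of this proposition; it leans on the sentence immediately preceding it ("two are in the image of the Kunneth map"), which is exactly your argument in condensed form. Your elaboration via \cref{Thm:Kunneth}, the generators from \cref{prop:Z3gens} and \S\ref{Sec:D6gens}, and the $E^\infty$-page positions from \cref{Thm:Z3D6E_2} correctly spells out the reasoning the paper leaves implicit, so your proposal matches the paper's approach.
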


The classes $n\in \Omega^\Spin_5(B\Z_3\wedge BD_6)$ and $m \in \Omega^{\Spin}_7(B\Z_3 \wedge BD_6)$ are missed by the K\"{u}nneth map and likewise are not represented as products of generators for $\Omega^\Spin_*(B\Z_3)$ and $\Omega^\Spin_*(BD_6)$.

\begin{prop}\label{3_wedge_gens_2}
Let $\Phi\colon \Z_3 \rightarrow \Z_3 \times \Z_3$ be the diagonal map. $L_3^5$ and $L_3^7$ with their canonical orientation and unique spin structure, together with the $\Z_3 \times D_6$ bundle given by 
    \begin{equation}
    \label{eqn:Z3D6bundle}
        \Z_3 \xrightarrow{\Phi}\Z_3 \times \Z_3 \hookrightarrow \Z_3 \times D_6,
    \end{equation}
    represent the classes $n \in \Omega^\Spin_5(B\Z_3 \times BD_6)$ and $m\in \Omega^\Spin_7(B\Z_3 \times BD_6)$, respectively.
\end{prop}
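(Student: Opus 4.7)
The plan is to detect the classes via a pairing with $\F_3$-cohomology that isolates the smash summand. By \cref{Thm:Z3D6E_2,3_wedge_mod}, the group $\widetilde\Omega_{2i+1}^{\Spin}(B\Z_3\wedge BD_6)\cong\Z_3$ is generated by the class ($n$ or $m$) whose image on the $E^\infty$-page of the Atiyah--Hirzebruch spectral sequence generates $\widetilde H_{2i+1}(B\Z_3\wedge BD_6;\Z_{(3)})\cong\Z_3$, i.e.\ the Tor summand missed by the Kunneth map. Since the AHSS collapses in this range and the target group is cyclic of order three, the proof reduces to showing that the image of $\Psi_*[L^{2i+1}_3]$ in $\widetilde H_{2i+1}(B\Z_3\wedge BD_6;\Z_{(3)})$ is nonzero after projecting off the contributions from the $B\Z_3$ and $BD_6$ summands of the stable splitting of \cref{distributivity}.

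To perform the detection, I work with $\F_3$-cohomology, where $H^*(B\Z_3;\F_3)=\F_3[\beta]\otimes\Lambda[\alpha]$ and $H^*(BD_6;\F_3)=\F_3[u]\otimes\Lambda[v]$ with $|\alpha|=1$, $|\beta|=2$, $|v|=3$, $|u|=4$. Realizing $D_6=\Z_3\rtimes\Z_2$ with the $\Z_2$ acting by inversion identifies $H^*(BD_6;\F_3)$ with the $\Z_2$-invariants of $H^*(B\Z_3;\F_3)$, giving $\iota^*(u)=\beta^2$ and $\iota^*(v)=\alpha\beta$. The detecting classes will be $\alpha\otimes u\in H^5(B\Z_3\times BD_6;\F_3)$ and $\alpha\beta\otimes u\in H^7(B\Z_3\times BD_6;\F_3)$; both restrict to zero under the inclusions $\iota_j\colon BG\hookrightarrow B\Z_3\times BD_6$ for $G=\Z_3$ or $D_6$, because the compositions $\pi_2\circ\iota_1$ and $\pi_1\circ\iota_2$ factor through a point. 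Hence these classes kill the $B\Z_3$ and $BD_6$ summand contributions and detect only the smash part.

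Naturality then reduces the argument to computing
\begin{equation*}
\langle \Psi_*[L^5_3],\alpha\otimes u\rangle = \langle [L^5_3],\alpha\cdot\iota^*(u)\rangle = \langle [L^5_3],\alpha\beta^2\rangle,
\end{equation*}
and analogously $\langle\Psi_*[L^7_3],\alpha\beta\otimes u\rangle = \langle[L^7_3],\alpha\beta^3\rangle$. Since $\alpha\beta^j$ generates $H^{2j+1}(B\Z_3;\F_3)\cong\F_3$ and the classifying map sends the fundamental class $[L^{2j+1}_3]$ to the generator of $H_{2j+1}(B\Z_3;\F_3)\cong\F_3$, both pairings are the nondegenerate $\F_3$-pairing, hence nonzero. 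Since the mod-$3$ reduction $\widetilde H_{2i+1}(B\Z_3\wedge BD_6;\Z_{(3)})\to\widetilde H_{2i+1}(B\Z_3\wedge BD_6;\F_3)$ is injective on the cyclic group $\Z_3$, the image is nonzero integrally, so each lens space with its $\Psi$-bundle represents a generator of the cyclic target, which we take as $n$ (resp.\ $m$) up to a unit in $\Z_3^\times$.

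The main obstacle is the bookkeeping needed to ensure that the cohomology classes chosen genuinely isolate the smash summand rather than mixing with contributions pulled back from either factor. This is resolved by the factor-through-a-point argument above, which is a standard consequence of the stable splitting of a product of pointed spaces; once the detectors are in place, the rest is a direct pairing computation using the explicit restriction maps $\iota^*$ coming from the invariant-theoretic description of $H^*(BD_6;\F_3)$.
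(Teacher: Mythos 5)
Your proof is correct, but takes a genuinely different route from the paper. The paper appeals to the Botvinnik--Gilkey structure theorem for $\Omega^\Spin_*(B\Z_p\times B\Z_p)$ (that every class is represented either by a product of lens spaces or by a single lens space pushed along a homomorphism $\Z_p\to\Z_p\times\Z_p$), and then uses the transfer $\Sigma^\infty_+ BD_6 \to \Sigma^\infty_+ B\Z_3$ to realize $\widetilde\Omega_*^\Spin(B\Z_3\wedge BD_6)$ as a $3$-local retract of $\widetilde\Omega_*^\Spin(B\Z_3\wedge B\Z_3)$; degree considerations then single out the lens spaces with homomorphism bundles in degrees $5$ and $7$. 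Your argument instead proceeds by direct detection: after observing that the AHSS places $n$ and $m$ in filtration zero, so the edge homomorphism to $\widetilde H_*(B\Z_3\wedge BD_6;\Z_{(3)})$ is an isomorphism onto the relevant $\Z_3$, you exhibit explicit mod-$3$ cohomology classes $\alpha\otimes u$ and $\alpha\beta\otimes u$ (using the invariant-theoretic description $H^*(BD_6;\F_3)\cong H^*(B\Z_3;\F_3)^{\Z_2}$ with restriction $u\mapsto\beta^2$) that vanish on the $B\Z_3$ and $BD_6$ summands of the stable splitting and pair nontrivially against $L^5_3$ and $L^7_3$ with the diagonal bundle. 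The two buys are different: the paper's argument is shorter and structural, but leaves implicit the final step of checking that the diagonal homomorphism (as opposed to one of the other nondegenerate $\phi\colon\Z_3\to\Z_3\times\Z_3$) lands on a generator; your cohomological pairing makes that verification explicit and self-contained, at the cost of having to set up the detecting classes and check that they kill the other wedge summands. Both are valid proofs, and the arguments are complementary.
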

\begin{proof}
    Botvinnik and Gilkey proved in \cite{Botvinnik} that the bordism classes of $\Omega^{\Spin}_*(B\Z_p \times B\Z_p)$ are represented by classifying maps $L^{2n_1+1}_p\times L^{2n_2+1}_p\rightarrow B(\Z_p)^2$ or by the compositions $L^{2m+1}_p\rightarrow B\Z_p \xrightarrow{B\phi} B(\Z_p)^2$ for the various group homomorphisms $\phi:\Z_p\rightarrow \Z_p\times \Z_p$. The theorem then follows from the fact that the inclusion 
    \begin{equation}
    B\Z_3 \wedge B\Z_3 \hookrightarrow B\Z_3 \wedge BD_6    
    \end{equation}
    admits a section $3$-locally using the transfer $\Sigma^\infty_+ BD_6\rightarrow \Sigma^\infty_+ B\Z_3$, so that $\Omega^\Spin_*(B\Z_3\wedge BD_6)$ is ($3$-locally) a summand of $\Omega^\Spin_*(B\Z_3 \wedge B\Z_3)$.
\end{proof}
Passing these generators through \cref{some_assembly_required}, we obtain the remaining $3$-torsion in \cref{tab:gen1to4,tab:gen5to7}.

\subpart{\texorpdfstring{The calculation at $p = 2$}{The calculation at p = 2}}
\label{Part:p=2}

\section{\texorpdfstring{Generalities for the $p = 2$ computation}{Generalities for the p = 2 computation}}

The computation at $p = 2$ has a similar overall structure to the computation at $p = 3$, but the details are more complicated.
We will first find the ring structure on $\Omega_*^\Spin(B\Z_4)$ in the degrees relevant for us, then use it to determine $\Omega_*^\Spin(B\SL(3, \F_2))$ and $\Omega_*^\Spin(B\Z_4\times B\SL(3, \F_2))$, so that, as we went over in \S\ref{ss:assembly}, we can obtain a complete generating set of $\Omega_*^\Spin(B(\SL(2, \Z)\times\SL(3, \Z))$ in dimensions $7$ and below. The action of $\Omega_*^\Spin(B\Z_4)$ on other bordism groups will play a significant role in systematically discovering many generators.

The first simplification we make is to replace spin bordism with another generalized cohomology theory called \term{connective real $K$-theory}, denoted $\ko$. This is because of a result of Anderson-Brown-Peterson~\cite{ABP67} that implies for any space or connective spectrum $X$, the Atiyah-Bott-Shapiro~\cite{ABS64} map
\begin{equation}
    \widehat A\colon \Omega_k^\Spin(X)\longrightarrow \ko_k(X)
\end{equation}
is an isomorphism for $k\le 7$.\footnote{In higher degrees, there is a more complicated decomposition due to Anderson-Brown-Peterson~\cite{ABP67}.} The theory $\ko$ has better algebraic properties, so we will focus on computing $\ko$-homology, then return to spin bordism as we search for generators.

Another difference between the $p= 3$ and $p = 2$ cases is that because $\Omega_*^\Spin$ and $\ko_*$ both have $2$-torsion, the Atiyah-Hirzebruch spectral sequence is harder to solve compared to a version of the Adams spectral sequence, so we will use the latter.
\begin{thm}\label{its_Adams}
Let $X$ be a space or spectrum homotopy equivalent to a CW complex with with finitely many cells in each dimension. Then there is a spectral sequence
\begin{equation}\label{Adamssig}
    E_2^{s,t} = \Ext_{\cA(1)}^{s,t}(H^*(X;\Z_2), \Z_2) \Longrightarrow \ko_*(X)\otimes\hat\Z_2.
\end{equation}
\end{thm}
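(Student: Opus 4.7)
The plan is to derive this as the mod-$2$ Adams spectral sequence for the spectrum $\ko \wedge X$ (interpreting a space as $\Sigma^\infty_+ X$ when needed), then simplify its $E_2$-page using a change-of-rings isomorphism. None of the ingredients are new; the task is to assemble standard machinery under the finiteness hypothesis on $X$.

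First I would invoke the classical mod-$2$ Adams spectral sequence in the form due to Adams: for any connective spectrum $Y$ whose mod-$2$ cohomology $H^*(Y;\Z_2)$ is of finite type as an $\cA$-module (where $\cA$ is the mod-$2$ Steenrod algebra), there is a convergent spectral sequence
\begin{equation}
    E_2^{s,t} = \Ext_{\cA}^{s,t}(H^*(Y;\Z_2), \Z_2) \Longrightarrow \pi_{t-s}(Y) \otimes \hat\Z_2.
\end{equation}
I would apply this to $Y = \ko \wedge X$, noting that the finite-type hypothesis on the CW structure of $X$ ensures $H^*(\ko \wedge X; \Z_2)$ is of finite type over $\cA$, and $\pi_*(\ko \wedge X) = \ko_*(X)$ by definition.

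Next, I would compute the $E_2$-page. By the Kunneth isomorphism for mod-$2$ cohomology (automatic since $\Z_2$ is a field),
\begin{equation}
    H^*(\ko \wedge X; \Z_2) \cong H^*(\ko; \Z_2) \otimes_{\Z_2} H^*(X; \Z_2)
\end{equation}
as $\cA$-modules. The key input is Stong's computation of $H^*(\ko; \Z_2)$: there is an isomorphism of $\cA$-modules $H^*(\ko; \Z_2) \cong \cA//\cA(1) := \cA \otimes_{\cA(1)} \Z_2$, where $\cA(1)$ is the sub-Hopf-algebra of $\cA$ generated by $\Sq^1$ and $\Sq^2$. Substituting, the $E_2$-page becomes $\Ext_\cA^{s,t}(\cA \otimes_{\cA(1)} H^*(X;\Z_2), \Z_2)$.

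Finally I would apply the Milnor-Moore change-of-rings theorem: for $\cA(1) \subset \cA$ a sub-Hopf-algebra over which $\cA$ is free (this is true by Milnor-Moore since $\cA$ is graded-connected and $\cA(1)$ is a sub-Hopf-algebra), and for any $\cA(1)$-module $M$,
\begin{equation}
    \Ext_{\cA}^{s,t}(\cA \otimes_{\cA(1)} M, \Z_2) \cong \Ext_{\cA(1)}^{s,t}(M, \Z_2).
\end{equation}
Taking $M = H^*(X; \Z_2)$ yields exactly~\eqref{Adamssig}. The main points requiring care are convergence (for which one verifies that $\ko$ is a connective spectrum of finite type so the Adams tower assembles correctly and convergence to the $2$-completion holds for finite-type $X$) and the identification of $H^*(\ko; \Z_2)$ with $\cA//\cA(1)$, which is the classical Stong-Milnor result. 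The change-of-rings step is then purely algebraic and routine.
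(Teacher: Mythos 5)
The paper states this theorem without proof, citing Adams for the general Adams spectral sequence and Anderson--Brown--Peterson and Giambalvo for the reduction to $\cA(1)$; your argument is precisely the standard derivation those references carry out, via the Adams spectral sequence for $\ko \wedge X$, Stong's identification $H^*(\ko;\Z_2) \cong \cA \otimes_{\cA(1)}\Z_2$, and Milnor--Moore change-of-rings. The one elision is that the step you call ``substituting'' actually requires the twisting isomorphism $(\cA \otimes_{\cA(1)} \Z_2) \otimes_{\Z_2} M \cong \cA \otimes_{\cA(1)} M$ of $\cA$-modules: the Kunneth isomorphism gives the tensor product with the \emph{diagonal} $\cA$-action, whereas Shapiro's lemma needs the \emph{induced-module} form; this is a routine consequence of the Hopf algebra structure on $\cA$, but it is not a literal substitution and is worth naming.
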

Here $\cA(1)$ is a subalgebra of the Steenrod algebra $\cA$ of cohomology operations: $\cA(1)$ is generated by the Steenrod squares $\Sq^1$ and $\Sq^2$. Tensoring with $\hat\Z_2$ is the process of $2$-completion.

We will be brief -- there are plenty of papers in the mathematical physics literature using this technique, and the reader wishing to dig into the definitions and examples of these ingredients is encouraged to read~\cite{Beaudry,Heckman}, both written with physics audiences in mind.
\begin{itemize}
    \item $\cA$ refers to the \term{mod $2$ Steenrod algebra}, the $\Z$-graded $\Z_2$-algebra of all natural transformations $H^*(\bl;\Z_2)\to H^{*+k}(\bl;\Z_2)$ which commute with the suspension isomorphism. This is generated by \term{Steenrod squares} $\Sq^n$ of degree $n$, modulo some relations. Then $\cA(1)$ is the subalgebra generated by $\Sq^1$ and $\Sq^2$.
    \item $\Ext$ is a functor which can be defined in terms of certain equivalence classes of extensions of modules.
    \item $\hat\Z_2$ denotes the $2$-adic numbers. For a finitely generated abelian group, tensoring with $\hat\Z_2$ retains the same information as tensoring with $\Z_{(2)}$, so we will typically be implicit about the appearance of the $2$-adics.
\end{itemize}
\begin{rem}
The Adams spectral sequence was introduced by Adams~\cite{Ada58}, in a considerably more general form than we gave in \cref{its_Adams}; we presented only what we need in this paper. The application to $\ko$-homology by working over $\cA(1)$ first appears in work of Anderson-Brown-Peterson~\cite{ABP69} and and Giambalvo~\cite{Gia73a, Gia73b, Gia76}. The appearance of the Adams spectral sequence in physically motivated computations is a more recent phenomenon: though there were some earlier works such as Rose~\cite{Ros88}, Hill~\cite{Hil08}, and Francis~\cite{Fra11}, the technique began in earnest following its use by Freed-Hopkins~\cite[\S 10]{FH21} in applications to invertible field theories.
\end{rem}
For any space $X$, the $\ko_*(\pt)$-action\footnote{Passing back to spin bordism, this action represents taking the product with a spin manifold with trivial map to $X$.} on $\ko_*(X)$ lifts to an action of the Adams spectral sequence for $\ko_*(\pt)$ on the Adams spectral sequence for $\ko_*(X)$:
\begin{thm}\label{E_converge}
Let $\ExZ\coloneqq \Ext_{\cA(1)}(\Z_2, \Z_2)$, which by \cref{its_Adams} is the $E_2$-page of the Adams spectral sequence computing $\ko_*(\pt)$.
\begin{enumerate}
    \item There is a $\Z^2$-graded commutative $\Z_2$-algebra structure on $\ExZ$ converging to the algebra structure on $\ko_*(\pt)$.
    \item There is a natural $\ExZ$-action on the $E_2$-page of the Adams spectral sequence for $\ko_*(X)$ for any space $X$ which converges to the $\ko_*(\pt)$-action on $\ko_*(X)$. All differentials commute with this $\ExZ$-action.\footnote{Most of this theorem works with $\ko$ replaced with an arbitrary commutative ring spectrum, but in general differentials only obey a Leibniz rule --- the stronger result here is because the Adams spectral sequence for $\ko_*(\pt)$ collapses. This distinction matters if $\ko$ is replaced with $\mathit{tmf}$: see~\cite{BR21}.}
\end{enumerate}
\end{thm}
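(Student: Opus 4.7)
The plan is to deduce this theorem by combining three standard inputs: (i) the fact that $\ko$ is a homotopy commutative ring spectrum, in fact an $E_\infty$-ring spectrum, so that for any space $X$ the smash product $\ko \wedge X_+$ is a $\ko$-module spectrum; (ii) the general theory of multiplicative Adams spectral sequences, due to Adams and explained in detail in, e.g., Ravenel or Kochman~\cite{Koc96}; and (iii) the collapse of the Adams spectral sequence computing $\ko_*(\pt)$ on its $E_2$-page, which is classical going back to Anderson-Brown-Peterson~\cite{ABP69}.

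First I would set up the algebra structure on $\ExZ$. The mod $2$ cohomology $H^*(\ko;\Z_2)$ is isomorphic to $\cA/\!/\cA(1) \coloneqq \cA \otimes_{\cA(1)} \Z_2$ as an $\cA$-module, a result of Stong. Combined with the change-of-rings isomorphism $\Ext_\cA^{*,*}(\cA/\!/\cA(1) \otimes H^*(X;\Z_2),\Z_2) \cong \Ext_{\cA(1)}^{*,*}(H^*(X;\Z_2),\Z_2)$, this identifies the $E_2$-page displayed in~\eqref{Adamssig} with the usual Adams $E_2$-page for the spectrum $\ko \wedge X_+$ over the full Steenrod algebra. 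The multiplication $\ko \wedge \ko \to \ko$, being a map of ring spectra, induces a pairing on Adams $E_2$-pages which agrees with the Yoneda (equivalently cup) product in Ext, making $\ExZ$ a bigraded $\Z_2$-algebra. Graded-commutativity on the $E_2$-page follows from the homotopy commutativity of $\ko$, and convergence to the algebra structure on $\ko_*(\pt)$ is standard for the multiplicative Adams spectral sequence of a ring spectrum.

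Next, for a space $X$ the $\ko$-module structure map $\ko \wedge (\ko \wedge X_+) \to \ko \wedge X_+$ induces, by the same functoriality of Ext and the Kunneth comparison map, an action of $\ExZ$ on the $E_2$-page which is precisely the Yoneda action of $\Ext_{\cA(1)}(\Z_2,\Z_2)$ on $\Ext_{\cA(1)}(H^*(X;\Z_2),\Z_2)$. Naturality in $X$ is clear from the definition. Convergence of this action to the $\ko_*(\pt)$-module structure on $\ko_*(X)$ is a standard consequence of the multiplicative convergence theorem for the Adams spectral sequence: the filtrations on $\ko_*(\pt)$ and $\ko_*(X)$ are compatible with the pairing at the level of spectra.

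Finally, the differentials $d_r$ in any Adams $E_r$-page obey the Leibniz rule $d_r(a \cdot x) = d_r(a) \cdot x + (-1)^{|a|} a \cdot d_r(x)$ with respect to the $\ExZ$-action. Here is where the collapse of the Adams spectral sequence for $\ko_*(\pt)$ enters decisively: because every class $a \in \ExZ$ is a permanent cycle, $d_r(a) = 0$ at every page, and the Leibniz rule degenerates to $d_r(a \cdot x) = a \cdot d_r(x)$, i.e.\ differentials literally commute with the $\ExZ$-action rather than merely satisfying Leibniz. The main obstacle I anticipate is not any single step but rather the bookkeeping required to make the collapse argument precise across all pages simultaneously, since one must verify that the induced action on each successive $E_r$-page is well-defined and still agrees with the Yoneda action via the spectral sequence's identification of $E_{r+1}$ with the homology of $E_r$.
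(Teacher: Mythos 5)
Your proof is correct and follows essentially the same route as the paper, which simply cites the general multiplicative structure of the Adams spectral sequence from Kochman~\cite[\S 3.6]{Koc96}. You supply exactly the details behind that citation — the change-of-rings identification via $H^*(\ko;\Z_2)\cong\cA/\!/\cA(1)$, the pairing induced by the ring/module structure of $\ko\wedge X_+$, and the key observation that collapse of the spectral sequence for $\ko_*(\pt)$ upgrades the Leibniz rule to genuine commutation of differentials with the $\ExZ$-action — so there is nothing to add.
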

This is a consequence of the general multiplicative structure of the Adams spectral sequence as discussed in~\cite[\S 3.6]{Koc96}.

Now we need to know what $\ExZ$ is.
\begin{thm}[{Liulevicius~\cite[Theorem 3]{Liulevicius}}]
The $\Z_2$-algebra structure on $\ExZ$ described in \cref{E_converge} is isomorphic to
\begin{equation}
    \Ext_{\cA(1)}(\Z_2,\Z_2)\cong \Z_2[h_0,h_1,v,w]/(h_0h_1, h_1^3, vh_1, h_0^2w-v^2),
\end{equation}
where $\deg(h_0) = (1,1)$, $\deg(h_1) =(1,2)$, $\deg(v) = (3,7)$, and $\deg(w) = (4,12)$.
\end{thm}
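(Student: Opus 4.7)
The plan is to compute $\Ext^{s,t}_{\cA(1)}(\Z_2,\Z_2)$ by constructing a minimal free resolution of $\Z_2$ as an $\cA(1)$-module, reading off the four generators and their relations directly from this resolution, and then invoking $w$-periodicity to promote the low-degree calculation to the full bigraded algebra. Recall that $\cA(1)$ is the $8$-dimensional Hopf subalgebra of the Steenrod algebra generated by $\Sq^1$ and $\Sq^2$ subject to the Adem relations $(\Sq^1)^2=0$ and $\Sq^2\Sq^2 = \Sq^1\Sq^2\Sq^1$; it is self-injective, with top class in internal degree $6$, a fact that makes its $\Ext$-algebra especially well-behaved.

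First I would construct the minimal projective resolution $\cdots \to P_2 \to P_1 \to P_0 \twoheadrightarrow \Z_2$ in which each $P_s$ is a direct sum of suspensions $\Sigma^{t_i}\cA(1)$. Minimality (all differentials land in the augmentation ideal of $\cA(1)$) gives $\dim_{\Z_2}\Ext^{s,t}_{\cA(1)}(\Z_2,\Z_2)$ as the number of summands of $P_s$ of internal degree $t$, so generators of the resolution read off generators of $\Ext$. Setting $P_0 = \cA(1)$, the augmentation ideal is minimally generated by $\Sq^1$ and $\Sq^2$, yielding $h_0\in \Ext^{1,1}$ and $h_1\in \Ext^{1,2}$. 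Carrying the resolution through $s = 4$ is a mechanical finite-dimensional calculation since each $\cA(1)$-summand has only $8$ basis elements; the upshot is that exactly two new indecomposable generators appear beyond those forced by products of $h_0$ and $h_1$, namely $v$ in bidegree $(3,7)$ and $w$ in bidegree $(4,12)$.

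Next, I would verify the four relations by computing Yoneda products directly from this resolution. The relation $h_0h_1 = 0$ in $\Ext^{2,3}$ follows because the Adem identity $\Sq^1\Sq^2 = \Sq^3$ forces the candidate composite to be a boundary; analogous boundary arguments at bidegrees $(3,6)$ and $(4,9)$ establish $h_1^3 = 0$ and $v h_1 = 0$. The quadratic relation $h_0^2 w = v^2$ is a comparison of two classes in the one-dimensional group $\Ext^{6,14}$, verified by identifying the two Yoneda composites $P_6 \to P_0$ as equal.

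Finally, to leverage this low-degree information into the full algebra, I would establish $w$-periodicity: multiplication by $w$ induces an isomorphism $\Ext^{s,t}_{\cA(1)}(\Z_2,\Z_2) \xrightarrow{\cong} \Ext^{s+4,t+12}_{\cA(1)}(\Z_2,\Z_2)$ for every $s\geq 0$, which is the algebraic reflection of the $8$-fold Bott periodicity of $\ko$; at the level of the resolution it corresponds to the fact that the syzygies of $\Z_2$ become strictly periodic of period $4$ in $s$ (with internal-degree shift $12$) after the first few stages. The hardest step is precisely the identification of $w$ and the proof that it is a periodicity generator, since its existence is not forced by Adem relations alone: one route is an explicit chain-level construction of a periodicity operator, a second is comparison of Poincar\'e series of the candidate presentation $\Z_2[h_0,h_1,v,w]/(h_0h_1,\,h_1^3,\,vh_1,\,h_0^2w-v^2)$ against the Poincar\'e series of $\Ext_{\cA(1)}(\Z_2,\Z_2)$ (computable from the self-injectivity of $\cA(1)$). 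Once $w$ is in hand and the four relations have been verified on the fundamental domain $0\leq s \leq 3$, $0 \leq t-s \leq 8$, periodicity determines the bigraded algebra uniquely, completing the proof.
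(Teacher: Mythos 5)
The paper does not prove this result: it attributes it to Liulevicius~\cite[Theorem 3]{Liulevicius} and moves on, so your resolution-based argument is a genuine alternative to a bare citation, and it is the standard direct route. Building a minimal free $\cA(1)$-resolution of $\Z_2$ out to homological degree four, reading off $h_0$, $h_1$ in $\Ext^1$ from the minimal generators $\Sq^1,\Sq^2$ of the augmentation ideal, identifying the new indecomposables $v\in\Ext^{3,7}$ and $w\in\Ext^{4,12}$, and checking the four relations by exhibiting explicit nullhomotopies at the chain level is a finite, mechanical computation over the $8$-dimensional algebra $\cA(1)$, and your outline of it is sound. The appeal to the self-injectivity of $\cA(1)$ (it is a Poincar\'e-duality algebra with socle in degree $6$) as a consistency check via Poincar\'e series is also a legitimate and commonly used device.

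The one place your argument, as written, would fail is the periodicity step. You assert that multiplication by $w$ gives an isomorphism $\Ext^{s,t}_{\cA(1)}(\Z_2,\Z_2)\xrightarrow{\ \cong\ }\Ext^{s+4,t+12}_{\cA(1)}(\Z_2,\Z_2)$ for \emph{every} $s\ge 0$, and then use the strip $0\le s\le 3$, $0\le t-s\le 8$ as a fundamental domain. Both claims are too strong. Take $s=0$, $t=-8$: the source $\Ext^{0,-8}$ is zero, while the target $\Ext^{4,4}$ is $\Z_2\langle h_0^4\rangle$. More generally the entire infinite $h_0$-tower $\{h_0^s\}_{s\ge 0}$ sitting on the line $t=s$ is not in the image of $w$-multiplication (any $w$-multiple has $t-s\ge 8$), so it is not captured by the strip you propose and is not reproduced by periodicity. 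The correct statement is Adams' periodicity theorem: $\cdot\,w$ is an isomorphism only above a vanishing line of slope $1/5$ (equivalently, in the range where $\Ext$ is already forced to be zero or $h_0$-torsion-free in the right pattern), and the region below that line — which contains the $h_0$-tower — must be handled separately by the direct computation through $s=4$ together with the fact that everything in $\Ext^{s,s}$ for $s\ge 1$ is $h_0^s$. Once you replace your uniform periodicity claim with the genuine Adams periodicity range and note that the $h_0$-tower is determined by the $\Ext^1$ computation and $h_0$-linearity, the argument closes.
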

The four generators lift to represent classes in the $\ko$-homology and spin bordism of a point.
\begin{itemize}
    \item $h_0$ lifts to $2\in\ko_0(\pt)$ and to the class of $\pt_+\amalg\pt_+$ in $\Omega_0^\Spin$. Thus keeping track of $h_0$-actions in an Adams spectral sequence is very useful --- they provide information about multiplication by $2$ in $\ko$-homology. This is one of the competitive advantages of the Adams spectral sequence over the Atiyah-Hirzebruch spectral sequence; extension problems are usually harder for the latter technique.
    \item $h_1$ lifts to $\eta\in\ko_1(\pt)$ and $S_+^1\in\Omega_1^\Spin$.
    \item $v$ lifts to the K3 surface in $\Omega_4^\Spin\cong\Z$.
    \item $w$ lifts to the Bott class in $\ko_8(\pt)$ and the Bott manifold in $\Omega_8^\Spin$.
\end{itemize}

The bottom row of the Adams spectral sequence is nicely characterized  using mod $2$ cohomology, which we will repeatedly use to find generators. There is a natural isomorphism  
$\varrho\colon \Ext_{\cA(1)}^{0,t}(M, N)\overset\cong\to \Hom_{\cA(1)}(\Sigma^t M, N)$; using this, one can define an ``edge homomorphism''
\begin{equation}\label{edge_hom}
    \Upsilon\colon E_\infty^{0,t}\hookrightarrow E_2^{0,t} = \Ext_{\cA(1)}^{0,t}(H^*(X; \Z_2), \Z_2) \overset\varrho\longrightarrow \Hom_{\cA(1)}(H^{*+t}(X;\Z_2), \Z_2).
\end{equation}
\begin{prop}\label{detector}
Let $c\in H^t(X;\Z_2)$ and $M$ be a closed, $t$-dimensional spin manifold with a map $f\colon M\to X$. Let $m\in E_\infty^{0,t}$ denote the image of $[M, f]\in\Omega_t^\Spin(X)$ in the $E_\infty$-page of the Adams spectral sequence. If $\Upsilon(m)\colon H^t(X;\Z_2)\to\Z_2$ evaluated on $c$ is nonzero, then $\int_M f^*(c)\ne 0$.
\end{prop}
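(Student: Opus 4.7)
The plan is to identify the edge homomorphism $\Upsilon$ with a geometric map that literally computes the mod-$2$ Kronecker pairing of $f^*(c)$ against the fundamental class of $M$. The key conceptual input is that the Adams zero-line always sees only ``as much as mod-$2$ cohomology can see,'' so the natural candidate for $\Upsilon(m)(c)$ is exactly $\int_M f^*c \pmod 2$. Once this identification is in place, the hypothesis that $\Upsilon(m)(c)\ne 0$ is literally the conclusion.

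First I would replace spin bordism by $\ko$-homology via the Atiyah--Bott--Shapiro map $\widehat A\colon \Omega_t^\Spin(X)\to\ko_t(X)$, which is fine because we are only probing a mod-$2$ quantity and because the map is defined regardless of degree. Next I would introduce the unit map of commutative ring spectra $u\colon \ko\to H\Z_2$. This induces a morphism of Adams spectral sequences converging to $u_*\colon \ko_*(X)\to H_*(X;\Z_2)$. On $E_2$-pages, $u$ acts as the augmentation $\Ext^{s,t}_{\cA(1)}(H^*(X;\Z_2),\Z_2)\to \Ext^{s,t}_{\Z_2}(H^*(X;\Z_2),\Z_2)$, which is an isomorphism on the zero-line $s=0$ and zero for $s>0$. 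Combined with $\varrho$, this identifies $\Upsilon$ with the composite
\begin{equation}
    \ko_t(X)\xrightarrow{\;u_*\;} H_t(X;\Z_2)\xrightarrow{\;\cong\;}\Hom(H^t(X;\Z_2),\Z_2),
\end{equation}
where the last arrow is Kronecker duality. This is the crux of the argument, and also the main obstacle: it requires carefully matching the Adams convergence, the isomorphism $\varrho$, and the unit-induced map so that the diagram commutes on the nose. The cleanest way to do this is to note that $\varrho$ is by construction the algebraic counterpart of Spanier--Whitehead / Kronecker duality between $\Sigma^t$-homomorphisms out of $H^*(X;\Z_2)$ and homological classes of degree $t$.

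Second I would compute the composite $u_*\circ \widehat A\colon\Omega_t^\Spin(X)\to H_t(X;\Z_2)$ on the cycle $[M,f]$. Because the composition of Thom spectra maps $M\Spin\to \ko\xrightarrow{u} H\Z_2$ is (after smashing with $X_+$) the mod-$2$ Thom class, the image of $[M,f]$ under $u_*\circ\widehat A$ is exactly $f_*[M]_2$, where $[M]_2\in H_t(M;\Z_2)$ is the mod-$2$ fundamental class. This is a completely standard fact, but I would cite or sketch it for clarity, pointing out that it only uses that $u$ sends the unit of $\ko$ to the unit of $H\Z_2$.

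Putting the two steps together, $\Upsilon(m)\in\Hom(H^t(X;\Z_2),\Z_2)$ is the functional $c\mapsto \langle f_*[M]_2,\,c\rangle$, and by naturality of the pairing this equals $\langle [M]_2,\,f^*c\rangle=\int_M f^*c\pmod 2$. The hypothesis $\Upsilon(m)(c)\ne 0$ therefore says $\int_M f^*c\ne 0\in\Z_2$, which is the conclusion. The only subtle point to double-check is that ``the image of $[M,f]$ in $E_\infty^{0,t}$'' -- which a priori is only defined up to elements of higher Adams filtration -- is still well-defined enough that evaluating $\Upsilon$ on it is unambiguous; this is automatic because the filtration subgroup $F^{\ge 1}\subset \ko_t(X)$ is precisely $\ker(u_*)$, so $\Upsilon$ descends from $\ko_t(X)$ through $F^0/F^1=E_\infty^{0,t}$ to a well-defined map, and in particular its value on $m$ is computed by any lift, including $\widehat A[M,f]$.
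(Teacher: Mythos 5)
Your argument is correct and gives a complete proof of a statement the paper itself only asserts by citing external references (Freed–Hopkins~\cite[\S 8.4]{FH19b}, Debray~\cite[\S 3.3]{Debray}). The strategy --- identify $\Upsilon$ with the composite of the $\ko$-Hurewicz map $u_*\colon \ko_t(X)\to H_t(X;\Z_2)$ with Kronecker duality, using the natural map of Adams spectral sequences induced by the unit $u\colon \ko\to H\Z_2$ --- is exactly the standard argument behind the cited results, and your bookkeeping of the Adams filtration ($F^1 = \ker u_*$, so the evaluation descends from $\ko_t(X)$ to $E_\infty^{0,t}$) correctly addresses the one point where convergence could have caused trouble.

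One small slip to fix: the map $\Ext_{\cA(1)}^{0,t}(H^*(X;\Z_2),\Z_2)\to\Ext_{\Z_2}^{0,t}(H^*(X;\Z_2),\Z_2)$ induced on $E_2$-pages by $u$ is not an isomorphism on the zero-line as you assert; it is the inclusion $\Hom_{\cA(1)}(\Sigma^t H^*(X;\Z_2),\Z_2)\hookrightarrow\Hom_{\Z_2}(\Sigma^t H^*(X;\Z_2),\Z_2)$, which is a proper inclusion whenever $\Sq^1$ or $\Sq^2$ act nontrivially on $H^*(X;\Z_2)$. This does not affect the argument --- injectivity is what you actually use, since $\Upsilon$ is defined with target $\Hom_{\cA(1)}$ and you only need to know its value under the further inclusion into $\Hom_{\Z_2}$ --- but the wording should be corrected. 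Similarly, it would be cleaner to state that $\Upsilon$ followed by this inclusion (rather than $\Upsilon$ itself) is identified with $u_*$ composed with Kronecker duality.
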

In the case $\Upsilon(m)(c)\ne 0$, we say that $[M, f]$ is \term{detected by the mod $2$ cohomology class $c$}. See~\cite[\S 8.4]{FH19b} and~\cite[\S 3.3]{Debray} for a more detailed discussion of \cref{detector} and Bunke~\cite{Bun17} for an analogue of this result for $s = 1$.

\section{\texorpdfstring{Calculations and generators for $B\Z_4$}{Calculations and generators for BZ4}}

\subsection{\texorpdfstring{Ring structure on the Adams spectral sequence for $\ko_*(B\Z_4)$}{Ring structure on the Adams spectral sequence for ko(BZ4)}}
\label{sec:BZ4}

In this subsection, we describe all classes, products (\cref{Z4_Adams_ring}), and differentials (\cref{ko_Z4_diffs}) in the Adams spectral sequence for $\ko_*(B\Z_{2^n})$. We only need the case $n = 2$ in degrees $8$ and below, but we found that the argument and general structure are cleaner in this generality.

To calculate the ring structure on the Adams spectral sequence computing $\ko_*(B\Z_{2^n})$, we compare it with two simpler Adams spectral sequences computing ordinary homology of $B\Z_{2^n}$. There the ring structures converge to the Pontrjagin product (the product on $\ko_*(B\Z_{2^n})$ induced by the group operation in $\Z_{2^n}$; see \cref{Thm:Kunneth}, part~\ref{KTpart1}) on the homology groups. Thus we can work backwards: the Pontrjagin product is known, so we can work out the ring structure on the $E_2$-pages of the two simpler spectral sequences, and then compare with the spectral sequence we are actually interested in.

For $n\ge -1$, let $\cA(0)$ denote the subalgebra of $\cA(1)$ generated by $\Sq^i$ for $i\le 2^n$; thus $\cA(-1)\cong\Z_2$ and $\cA(0) = \ang{\Sq^1} \cong \Z_2[\Sq^1]/(\Sq^1)^2$. These algebras govern Adams spectral sequences for ordinary cohomology that are analogues of \cref{its_Adams}.
\begin{prop}\label{A0_Adams}
Let $X$ be as in \cref{its_Adams}. Then there are spectral sequences
\begin{subequations}
\begin{align}\label{A0_Adams_sig}
    E_2^{s,t} &= \Ext_{\cA(0)}^{s,t}(H^*(X;\Z_2), \Z_2) \Longrightarrow H_*(X;\Z)\otimes\hat\Z_2\\
    E_2^{s,t} &= \Ext_{\cA(-1)}^{s,t}(H^*(X;\Z_2), \Z_2) \Longrightarrow H_*(X;\Z_2).
    \label{A_minus_one}
\end{align}
\end{subequations}
The inclusions $\cA(-1)\to\cA(0)\to\cA(1)$ induce maps of spectral sequences in the other direction converging to the ($2$-completions of the) usual maps $\ko_*(X)\to H_*(X;\Z)\to H_*(X;\Z_2)$.

In addition, if $X$ is the classifying space of an abelian Lie group, there are ring structures on the pages of these spectral sequences which converge to the Pontrjagin product on the homology of $X$, and the change-of-Ext maps from $\cA(-1)$ to $\cA(0)$ to $\cA(1)$ are ring homomorphisms.
\end{prop}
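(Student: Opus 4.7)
My plan is to construct each of the two spectral sequences as an $E$-based Adams spectral sequence for a suitable Eilenberg-MacLane ring spectrum: namely, \eqref{A0_Adams_sig} is the $H\Z$-Adams spectral sequence, and \eqref{A_minus_one} is the $H\Z_2$-Adams spectral sequence. In general, the $E$-Adams spectral sequence for $E_*(X)$ has $E_2$-page expressible as an $\Ext$ group over the Hopf algebroid of $E$-cooperations, but when $E$ is a Postnikov section of $\ko$ the standard change-of-rings argument collapses this to $\Ext$ over a subalgebra of $\cA$. Concretely, I would use the well-known identifications $H^*(H\Z;\Z_2)\cong \cA/\cA\Sq^1\cong \cA\otimes_{\cA(0)}\Z_2$ and $H^*(H\Z_2;\Z_2)\cong \cA\cong\cA\otimes_{\cA(-1)}\Z_2$ to rewrite the $E_2$-pages as $\Ext_{\cA(0)}^{s,t}(H^*(X;\Z_2),\Z_2)$ and $\Ext_{\cA(-1)}^{s,t}(H^*(X;\Z_2),\Z_2)$ respectively, exactly parallel to the derivation of \cref{its_Adams} using $H^*(\ko;\Z_2)\cong\cA\otimes_{\cA(1)}\Z_2$. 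Convergence of each spectral sequence to the claimed abutment (after $2$-completion) is standard under the finite-type hypothesis on $X$, which we already assumed in \cref{its_Adams}.

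Next, the maps of spectral sequences. The canonical unit maps of ring spectra $H\Z_2\leftarrow H\Z\leftarrow\ko$ (the bottom two Postnikov sections of $\ko$) are compatible with the resolutions used to build the Adams spectral sequences, and so induce maps of spectral sequences in the opposite direction. On $E_2$, these maps are precisely the change-of-$\Ext$ maps coming from the inclusions $\cA(-1)\hookrightarrow\cA(0)\hookrightarrow\cA(1)$, and on $E_\infty$ they converge to the induced maps $\ko_*(X)\otimes\hat\Z_2\to H_*(X;\Z)\otimes\hat\Z_2\to H_*(X;\Z_2)$. Functoriality of the Adams construction in the ring spectrum $E$ takes care of this step uniformly.

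For the multiplicative structure, when $X=BG$ for $G$ an abelian Lie group I would use that $BG$ admits the structure of a topological abelian group, hence $\Sigma^\infty_+ BG$ is a (homotopy) commutative ring spectrum. Smashing with $H\Z_2$, $H\Z$, or $\ko$ preserves this, so each of the three Adams spectral sequences in question is the Adams spectral sequence of a commutative ring spectrum, and the general machinery (Kochman~\cite[\S 3.6]{Koc96}) produces a $\Z^2$-graded-commutative algebra structure on every $E_r$-page, with differentials obeying the Leibniz rule, converging to the Pontrjagin product on the abutment. Because the Postnikov maps $\ko\to H\Z\to H\Z_2$ are maps of ring spectra, the induced maps of spectral sequences are ring homomorphisms on every page.

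The conceptual content is essentially standard, so I do not expect a serious obstacle; the one place requiring a small amount of care is checking that the change-of-rings isomorphism identifying $\Ext$ over the relevant subalgebra of $\cA$ with the $E_2$-term of the $H\Z$- and $H\Z_2$-Adams spectral sequences is compatible with the multiplicative structures. This follows from the fact that the change-of-rings isomorphism is realized geometrically by a map of ring spectra (the Postnikov tower maps), so the compatibility is automatic once one sets up the multiplicative Adams machinery in the form provided by Kochman.
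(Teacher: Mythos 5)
Your proposal is correct and follows the standard route: construct each spectral sequence as the $H\mathbb{Z}_2$-based Adams spectral sequence for $E\wedge X$ with $E = H\mathbb{Z}$ or $H\mathbb{Z}_2$, then invoke the change-of-rings isomorphism $\Ext_{\cA}(\cA\otimes_B \Z_2\otimes H^*(X),\Z_2)\cong\Ext_B(H^*(X),\Z_2)$ using the identifications $H^*(H\Z;\Z_2)\cong\cA\otimes_{\cA(0)}\Z_2$ and $H^*(H\Z_2;\Z_2)\cong\cA\otimes_{\cA(-1)}\Z_2$, in exact parallel with the derivation of the $\ko$-Adams spectral sequence in \cref{its_Adams}. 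The paper states this proposition without a written proof, treating it as standard, and your argument supplies the details the authors elide: naturality in the ring spectrum $E$ along the chain of ring maps $\ko\to H\Z\to H\Z_2$ to get the comparison maps of spectral sequences, and Kochman's multiplicative machinery applied to the commutative ring spectrum $\Sigma^\infty_+ BG$ (for $BG$ a topological abelian group) to get the ring structure converging to the Pontrjagin product. The one small imprecision is the parenthetical claim that $H\Z_2$ is a Postnikov section of $\ko$ — it is not, though $H\Z$ is; but this is harmless because all you actually need is that $\ko\to H\Z$ (Postnikov truncation) and $H\Z\to H\Z_2$ (reduction) are maps of commutative ring spectra, and both are.
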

As $\cA(-1) = \Z_2$ is a field, $\Ext_{\cA(-1)}^{s,*}$ is $\Hom_{\Z_2}$ for $s = 0$ and vanishes for $s > 0$. Therefore the spectral sequence~\eqref{A_minus_one} is trivial: it always collapses without differentials or extension problems, and its $E_2$-page is exactly what it purports to compute. Oddly enough, this makes it useful for us!

The next step over $\cA(-1)$ is to recall the Pontrjagin product on $H_*(B\Z_{2^n};\Z_2)$. While we're here, we also recall a few important facts about the cohomology of $B\Z_{2^n}$. As always in this section, $n >1$.
\begin{thm}[{\!\!\cite[Proposition 4.5.1]{Carlson}}]\label{Thm:CohoZ4}
\label{Z4_mod2_coh}
    \begin{equation}
    \begin{gathered}
        H^*(B\Z_{2^n};\Z_2) \cong \Z_2[x,y]/(x^2), \\
        \abs x = 1,\quad \abs y= 2
    \end{gathered}
    \end{equation}
    The Steenrod squares of the generators are 
    \begin{equation}
        \Sq(x) = x + x^2,\quad \Sq(y) = y + y^2.
    \end{equation}
\end{thm}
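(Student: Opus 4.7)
The approach is standard: the plan is to compute the cohomology ring using a Serre spectral sequence and then pin down the Steenrod operations by instability and naturality. I will assume $n\ge 2$, which is the range of interest in this section.

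First I would use the principal fibration
\begin{equation*}
S^1 \longrightarrow B\Z_{2^n} \longrightarrow BS^1
\end{equation*}
coming from the short exact sequence $1\to \Z_{2^n}\to S^1\xrightarrow{z\mapsto z^{2^n}} S^1\to 1$. The mod $2$ Serre spectral sequence has
\begin{equation*}
E_2^{p,q} \;=\; H^p(BS^1;\Z_2)\otimes H^q(S^1;\Z_2) \;\cong\; \Z_2[y]\otimes \Z_2[x]/(x^2),\qquad \abs y=2,\ \abs x=1,
\end{equation*}
and the only possibly nonzero differential is the transgression $d_2(x)\in H^2(BS^1;\Z_2)$. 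With $\Z$-coefficients this transgression equals $2^n y$ (this is what forces $H^2(B\Z_{2^n};\Z)\cong\Z_{2^n}$), so mod $2$ it vanishes for all $n\ge 1$. Hence the spectral sequence collapses on $E_2$. Since $y$ is pulled back from $H^2(BS^1;\Z_2)$, the inclusion $\Z_2[y]\hookrightarrow H^*(B\Z_{2^n};\Z_2)$ is a ring homomorphism, which rules out multiplicative extensions and identifies the ring as $\Z_2[x,y]/(x^2)$.

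Next I would determine the Steenrod operations. Instability immediately gives $\Sq^i x=0$ for $i\ge 2$, $\Sq^i y=0$ for $i\ge 3$, and $\Sq^2 y=y^2$. The class $y$ is the pullback of the mod $2$ reduction of $c_1\in H^2(BS^1;\Z)$, which being the reduction of an integer class has vanishing Bockstein; by naturality $\Sq^1 y=0$, matching $\Sq(y)=y+y^2$. For $\Sq^1 x$, I would identify $x\in H^1(B\Z_{2^n};\Z_2)=\Hom(\Z_{2^n},\Z_2)$ with the mod $2$ reduction homomorphism. Then $\Sq^1 x$ is the obstruction to lifting $x$ along $\Z_4\twoheadrightarrow \Z_2$; for $n\ge 2$ we have $4\mid 2^n$, so the mod $4$ reduction homomorphism $\Z_{2^n}\to \Z_4$ provides such a lift, forcing $\Sq^1 x=0$. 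Since $x^2=0$ in the ring, this is compatible with the formula $\Sq(x)=x+x^2$.

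There is no serious obstacle — this is a classical computation recorded in Carlson. The only subtlety is distinguishing the excluded case $n=1$ (where $\Sq^1 x=x^2\ne 0$) from the case $n\ge 2$ (where $\Sq^1 x=0$), which is controlled precisely by whether the reduction $\Z_{2^n}\to\Z_2$ factors through $\Z_4\to\Z_2$. One could equivalently bypass the Serre spectral sequence by using the periodic free resolution of $\Z$ as a $\Z[\Z_{2^n}]$-module, or by running the mod $2$ Bockstein spectral sequence from $H^*(B\Z_{2^n};\Z)\cong\Z[y]/(2^n y)$; both routes give the same ring and Steenrod action.
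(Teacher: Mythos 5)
Your Serre spectral sequence setup and transgression computation are correct, and since the paper only cites Carlson here, there is no in-paper proof to compare against. The gap is in resolving the multiplicative extension: after collapse, the $E_\infty$-page is $\Z_2[y]\otimes\Z_2[x]/(x^2)$, and the extension question is precisely whether $x^2=0$ or $x^2=y$ in $H^*(B\Z_{2^n};\Z_2)$. The observation that $\Z_2[y]$ injects as a polynomial subring does not distinguish these two possibilities — for $n=1$ the identical collapsed spectral sequence occurs, yet $x^2=y$ and $H^*(B\Z_2;\Z_2)\cong\Z_2[x]$. What actually resolves the extension is the unstable axiom $\Sq^{|a|}a=a^2$ (so $\Sq^1 x=x^2$; instability only forces $\Sq^i x=0$ for $i\ge 2$), combined with your lifting argument that $\Sq^1 x=0$ when $n\ge 2$ because the reduction $\Z_{2^n}\to\Z_2$ factors through $\Z_4\to\Z_2$. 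That lifting argument is independent of the ring structure, so there is no circularity — you just need to reorder: first show $\Sq^1 x=0$, hence $x^2=0$, and only then conclude the ring is $\Z_2[x,y]/(x^2)$. As written, the $\Sq^1 x$ computation appears only as a post-hoc consistency check on a ring structure that has not actually been pinned down.
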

\begin{thm}[Cartan~\cite{Car55}]
As a ring with the Pontrjagin product,
\begin{equation}
    H_*(B\Z_{2^n};\Z_2) \cong \Z_2[\overline x, \overline y_i: i\ge 1]/\paren{\overline x^2, \overline y_i \overline y_j = \binom{i+j}{i} \overline y_{i+j}}
\end{equation}
with $\abs{\overline x} = 1$ and $\abs{\overline y_i} = 2i$. The classes $\overline x$, resp.\ $\overline y_i$ are dual to the mod $2$ cohomology classes $x$, resp.\ $y^i$.
\end{thm}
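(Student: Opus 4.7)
The statement is a standard Hopf-algebra dualization argument, so the approach is to compute the coproduct on $H^*(B\Z_{2^n};\Z_2)$ induced by the group operation and then dualize to obtain the Pontrjagin product on homology. Since $B\Z_{2^n}$ is the classifying space of an abelian group (hence a topological abelian group), the multiplication map $\mu\colon B\Z_{2^n}\times B\Z_{2^n}\to B\Z_{2^n}$ is homotopy-commutative and induces a cocommutative coalgebra structure $\mu^*$ on mod $2$ cohomology which is dual to the Pontrjagin product we want.

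First I would establish that both cohomology generators are primitive. The class $x\in H^1(B\Z_{2^n};\Z_2) \cong \Hom(\Z_{2^n},\Z_2)$ is a group homomorphism, so $\mu^*(x) = x\otimes 1 + 1\otimes x$ from $x(a+b) = x(a)+x(b)$. The class $y\in H^2(B\Z_{2^n};\Z_2)$ is the mod $2$ reduction of the Chern class of the tautological line bundle, so primitivity follows from $c_1(L_1\otimes L_2) = c_1(L_1) + c_1(L_2)$ after pulling back along $\mu$. Combined with the Cartan formula, this fixes $\mu^*$ on all of $H^*(B\Z_{2^n};\Z_2)$.

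Next, using the monomial basis $\{y^i, xy^i : i\ge 0\}$ provided by \cref{Thm:CohoZ4}, I would take the dual basis and label them $\overline y_i$ (dual to $y^i$, of degree $2i$) and $\beta_i$ (dual to $xy^i$, of degree $2i+1$), with $\overline x \coloneqq \beta_0$. The binomial theorem gives $\mu^*(y^{i+j}) = \sum_k \binom{i+j}{k}\, y^k\otimes y^{i+j-k}$, and pairing $\overline y_i\otimes\overline y_j$ against this coproduct picks out exactly the coefficient $\binom{i+j}{i}$, yielding the relation $\overline y_i\,\overline y_j = \binom{i+j}{i}\overline y_{i+j}$. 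An analogous computation with $\mu^*(xy^i) = \sum_k \binom{i}{k}(xy^k\otimes y^{i-k} + y^k\otimes xy^{i-k})$ shows $\overline x\cdot\overline y_i = \beta_i$, so $\overline x$ and the $\overline y_i$ generate $H_*$ as a ring. Finally, the relation $\overline x^2 = 0$ follows by pairing $\overline x\otimes\overline x$ against $\mu^*(y) = y\otimes 1+1\otimes y$, which vanishes for degree reasons.

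There is no real obstacle here; the slight subtlety is justifying primitivity of $x$ (handled by its interpretation as a group homomorphism) and verifying that the dual basis really does generate the ring rather than merely sitting inside it as an additive spanning set. The latter is automatic once one observes that the products $\overline x^{\epsilon}\prod_k \overline y_{i_k}$ account for every basis element, which one can check inductively in degree using the identified products.
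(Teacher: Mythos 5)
The paper does not prove this theorem; it cites it to Cartan~\cite{Car55} and Brown~\cite[Theorem V.6.6]{Bro82}, so there is no internal argument to compare against. Your proof is correct and is the standard Hopf-algebra dualization found in Brown: since $B\Z_{2^n}$ is a topological abelian group, $\mu^*$ makes mod-$2$ cohomology a (graded, cocommutative) Hopf algebra, and the Pontrjagin product is its linear dual. You correctly verify primitivity of $x$ (as a homomorphism $\Z_{2^n}\to\Z_2$) and of $y$ (as $c_1\bmod 2$ of a character, noting $\mu^*L\cong L\boxtimes L$), propagate the coproduct to the monomial basis via the Cartan formula, and then read off $\overline y_i\overline y_j = \binom{i+j}{i}\overline y_{i+j}$, $\overline x\cdot\overline y_i = \beta_i$, and $\overline x^2 = 0$ by the degree mismatch with $\mu^*(y)$. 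Note that all of this implicitly uses $n>1$ (so that $y$ is an independent polynomial generator rather than $x^2$), which matches the paper's standing hypothesis in that subsection.
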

Cartan does not write this ring structure explicitly; see Brown~\cite[Theorem V.6.6]{Bro82} for that.
\begin{cor}\label{minus_one_cor}
The ring structure on the $E_2$-page of the $\cA(-1)$-based Adams spectral sequence~\eqref{A_minus_one} computing $H_*(B\Z_{2^n};\Z_2)$ is
\begin{equation}
    E_2^{*,*} \cong \Z_2[\overline x, \overline y_i: i\ge 1]/\paren{\overline x^2, \overline y_i \overline y_j = \binom{i+j}{i} \overline y_{i+j}}
\end{equation}
with $\overline x\in\Ext^{0,1}$ and $\overline y_i\in\Ext^{0,2i}$.
\end{cor}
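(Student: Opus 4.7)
The proof will be essentially immediate from the preceding material, but it is worth spelling out how the bigrading matches. The plan is to combine three inputs: (i) the triviality of the $\cA(-1)$-based Adams spectral sequence, (ii) the multiplicativity statement at the end of \cref{A0_Adams}, and (iii) Cartan's computation of the Pontrjagin product on $H_*(B\Z_{2^n};\Z_2)$ quoted just above.

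First I would unwind the $\Ext$ over $\cA(-1)$. Since $\cA(-1)\cong \Z_2$ is a field, any $\cA(-1)$-module is free, so $\Ext_{\cA(-1)}^{s,*}$ vanishes for $s\ge 1$ and $\Ext_{\cA(-1)}^{0,t}(M,\Z_2) = \Hom_{\Z_2}(M,\Z_2)_{-t}$ (the degree-$t$ piece of the $\Z_2$-linear dual). Applied to $M=H^*(B\Z_{2^n};\Z_2)$, this identifies $E_2^{0,t}$ with the $\Z_2$-linear dual of $H^t(B\Z_{2^n};\Z_2)$, i.e.\ with $H_t(B\Z_{2^n};\Z_2)$ placed in bidegree $(0,t)$, and $E_2^{s,t}=0$ for $s>0$. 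In particular, the spectral sequence collapses with no extension problems, and the abutment isomorphism is the tautological identification $E_2^{0,t}\xrightarrow{\ \cong\ } H_t(B\Z_{2^n};\Z_2)$.

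Next I would invoke the final clause of \cref{A0_Adams}: because $B\Z_{2^n}$ is the classifying space of an abelian Lie group, the spectral sequence carries a multiplicative structure that converges to the Pontrjagin product on $H_*(B\Z_{2^n};\Z_2)$. Since the spectral sequence is concentrated in the row $s=0$ and there are no nontrivial filtration issues, the ring structure on $E_2^{*,*}$ is literally the Pontrjagin product on $H_*(B\Z_{2^n};\Z_2)$, transported along the bidegree isomorphism above. Cartan's theorem then gives the claimed presentation
\begin{equation}
E_2^{*,*}\;\cong\;\Z_2\bigl[\overline x,\overline y_i:i\ge 1\bigr]\big/\bigl(\overline x^2,\;\overline y_i\overline y_j-\tbinom{i+j}{i}\overline y_{i+j}\bigr),
\end{equation}
and the bigradings $\overline x\in\Ext^{0,1}$, $\overline y_i\in\Ext^{0,2i}$ follow from the fact that $\overline x$ and $\overline y_i$ are, by definition, dual to $x\in H^1$ and $y^i\in H^{2i}$ respectively under the pairing described above.

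I do not foresee any real obstacle here: the only thing one could even call a ``step'' beyond quoting Cartan is verifying that multiplicativity of \cref{A0_Adams} in the $\cA(-1)$ case is compatible with the tautological identification $E_2\cong H_*(\bl;\Z_2)$, and this is forced by naturality of the Adams construction with respect to the inclusion of the ground field into $\cA(-1)$. The same proof template will be reused in the $\cA(0)$ case to get ring-level information with $\Z$-coefficients, which is where the extension data needed later will actually have to be extracted.
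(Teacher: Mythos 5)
Your proof is correct and takes essentially the same route as the paper: both observe that $\cA(-1)=\Z_2$ is a field, so the spectral sequence is concentrated on $s=0$ and collapses with $E_2^{0,*}$ tautologically identified with $H_*(B\Z_{2^n};\Z_2)$, and then read off the ring structure from Cartan's theorem via the multiplicativity of \cref{A0_Adams}. The paper states this more tersely, but your elaboration of the bigrading and the compatibility of the multiplicative structure with the tautological identification is exactly the content being used.
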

This is because, as noted above, the $s = 0$ part of the $E_2$-page and the mod $2$ homology it computes are isomorphic and the spectral sequence collapses.

Now we lift to $\cA(0)$. 
We are not sure who first made the following computation; references include~\cite[\S 2]{Liulevicius} and~\cite[Example
4.5.5]{Beaudry}.
\begin{lem}
\label{Z2_A1_ext}
There is a unique isomorphism of $\Z_2$-algebras
\begin{equation}
        \Ext_{\cA(0)}^{*,*}(\Z_2, \Z_2) \overset\cong\longrightarrow \Z_2[h_0],
\end{equation}
with $h_0\in\Ext^{1,1}$.
\end{lem}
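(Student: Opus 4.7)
The plan is to compute $\Ext_{\cA(0)}^{*,*}(\Z_2, \Z_2)$ directly from a minimal free resolution, then identify the ring structure. The key observation is that $\cA(0) = \Z_2\langle 1, \Sq^1\rangle$ with $(\Sq^1)^2 = 0$, so $\cA(0)$ is isomorphic (as a graded $\Z_2$-algebra) to the exterior algebra $\Lambda_{\Z_2}[\Sq^1]$ on a single generator in degree $1$. This makes the homological algebra very simple.

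First I would write down the minimal free resolution of $\Z_2$ as an $\cA(0)$-module:
\begin{equation}
    \cdots \longrightarrow \Sigma^3\cA(0) \overset{\cdot\Sq^1}{\longrightarrow} \Sigma^2\cA(0) \overset{\cdot\Sq^1}{\longrightarrow} \Sigma\cA(0) \overset{\cdot\Sq^1}{\longrightarrow} \cA(0) \longrightarrow \Z_2 \longrightarrow 0,
\end{equation}
where each map is right multiplication by $\Sq^1$ and $\Sigma$ denotes shifting internal degree by $1$. Exactness is immediate since $\ker(\cdot\Sq^1) = \Sq^1\cdot\cA(0) = \mathrm{im}(\cdot\Sq^1)$ in every degree. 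Applying $\Hom_{\cA(0)}(-,\Z_2)$ kills all the multiplication-by-$\Sq^1$ maps (since $\Sq^1$ acts trivially on the target), giving a complex with zero differentials. Hence $\Ext_{\cA(0)}^{s,t}(\Z_2,\Z_2) = \Z_2$ when $t = s$ and vanishes otherwise, which as an $\Z^2$-graded $\Z_2$-vector space is exactly $\Z_2[h_0]$ with $h_0\in\Ext^{1,1}$.

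Next I would verify the multiplicative structure. One approach is to compute the Yoneda product directly: the generator $h_0\in\Ext^{1,1}$ is represented by the short exact sequence $0\to\Z_2\to\cA(0)\to\Sigma\Z_2\to 0$, and splicing $s$ copies gives the $s$-fold Yoneda product, which agrees with the class in $\Ext^{s,s}$ coming from the resolution above. Alternatively, one can work with the dual picture and the cobar complex. Either way one concludes $h_0^s$ generates $\Ext^{s,s}$ for every $s\ge 0$, so the $\Z_2$-algebra map $\Z_2[h_0]\to\Ext_{\cA(0)}^{*,*}(\Z_2,\Z_2)$ sending $h_0\mapsto h_0$ is an isomorphism. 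Uniqueness of the isomorphism is forced because $\Ext^{1,1}$ is one-dimensional over $\Z_2$, so there is no freedom in the choice of image of $h_0$.

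There is no real obstacle here; the argument is a direct resolution computation enabled by $\cA(0)$ being an exterior algebra on a single generator. The mildly subtle point is just bookkeeping the internal grading $t$ correctly, since $\Sq^1$ raises internal degree by $1$ and so $h_0$ lands in bidegree $(1,1)$ rather than $(1,0)$.
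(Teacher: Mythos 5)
Your proof is correct and is the standard minimal-resolution argument for $\Ext$ over an exterior algebra on a single generator; the paper does not prove this lemma itself but cites it to Liulevicius and Beaudry--Campbell, where essentially this computation appears. The bookkeeping of the internal grading (so that $h_0$ sits in bidegree $(1,1)$) and the uniqueness remark are both handled correctly.
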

As for $\cA(1)$, $h_0$ detects multiplication by $2$. We will describe the $E_2$-page of~\eqref{A0_Adams_sig}, the Adams spectral sequence over $\cA(0)$, for $X = B\Z_{2^n}$ first as a $\Z_2[h_0]$-module, then as an $\Z_2[h_0]$-algebra.

\begin{lem}\label{A0_E2}
There is an isomorphism of $\Z_2[h_0]$-modules
\begin{equation}
    \Ext_{\cA(0)}(H^*(B\Z_{2^n}; \Z_2), \Z_2) \overset\cong\longrightarrow \Z_2[h_0]\set{\widetilde y^j, \widetilde x\widetilde y^j: j\ge 0}
\end{equation}
where $\widetilde y_j\in\Ext^{0,2j}$ and $\widetilde x\widetilde y^j\in\Ext^{0,2j+1}$. Under the homomorphism
\begin{equation}\label{zero_to_minus_one_redux}
    \Ext_{\cA(0)}^{0,t}(H^*(B\Z_{2^n};\Z_2), \Z_2) = \Hom_{\cA(0)}(H^t(B\Z_{2^n}; \Z_2), \Z_2)\hookrightarrow \Hom_{\Z_2}(H^t(B\Z_{2^n}; \Z_2), \Z_2),
\end{equation}
each class $\widetilde c$ is identified with the dual of $c\in H^t(B\Z_{2^n};\Z_2)$.
\end{lem}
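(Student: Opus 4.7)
The plan is to exploit the observation that $H^*(B\Z_{2^n};\Z_2)$ is a trivial $\cA(0)$-module once $n\ge 2$, which reduces the Ext computation to a direct sum of copies of the already-known $\Ext_{\cA(0)}(\Z_2,\Z_2)$. First I would unpack the Steenrod action of Theorem~\ref{Thm:CohoZ4}: the total square $\Sq(x) = x + x^2$ gives $\Sq^1(x) = x^2$, which is zero in $\Z_2[x,y]/(x^2)$; and $\Sq(y) = y + y^2$ has no component in degree three, so $\Sq^1(y) = 0$. By the Cartan formula, $\Sq^1$ then vanishes on every monomial $y^j$ and $xy^j$, and hence on all of $H^*(B\Z_{2^n};\Z_2)$.

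Next I would decompose $H^*(B\Z_{2^n};\Z_2)$ as an $\cA(0)$-module: because $\Sq^1$ acts as zero, the underlying $\Z_2$-basis $\{y^j : j\ge 0\}\cup\{xy^j : j\ge 0\}$ splits it as the direct sum
\begin{equation}
H^*(B\Z_{2^n};\Z_2) \;\cong\; \bigoplus_{j\ge 0}\Sigma^{2j}\Z_2 \;\oplus\; \bigoplus_{j\ge 0}\Sigma^{2j+1}\Z_2,
\end{equation}
where $\Sigma^d\Z_2$ denotes the trivial $\cA(0)$-module concentrated in degree $d$. Then I would apply additivity of $\Ext$ in its first argument together with Lemma~\ref{Z2_A1_ext}: since $\Ext_{\cA(0)}^{s,t}(\Sigma^d\Z_2,\Z_2)\cong\Ext_{\cA(0)}^{s,t-d}(\Z_2,\Z_2)\cong\Z_2[h_0]$ with the shift placing the bottom generator at $(s,t)=(0,d)$, summing contributions yields exactly the free $\Z_2[h_0]$-module on classes $\widetilde y^j$ in bidegree $(0,2j)$ and $\widetilde x\widetilde y^j$ in bidegree $(0,2j+1)$, as claimed.

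Finally, I would verify the identification with duals at filtration $s=0$. By definition the edge homomorphism~\eqref{edge_hom} sends a class in $\Ext^{0,t}_{\cA(0)}$ to the $\cA(0)$-linear map $H^t(B\Z_{2^n};\Z_2)\to\Z_2$ it represents, and this factors through $\Hom_{\Z_2}(H^t(B\Z_{2^n};\Z_2),\Z_2)$. Under our direct-sum decomposition, the generator $\widetilde c$ of the $\Z_2[h_0]$-tower associated to $c\in\{y^j,xy^j\}$ corresponds to the unique nonzero $\cA(0)$-map on the summand containing $c$, which is the functional dual to $c$ in the chosen monomial basis.

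The only subtlety is bookkeeping: making sure the degree shifts in the decomposition give precisely the bigradings $(0,2j)$ and $(0,2j+1)$ of the stated lemma, and confirming that no nontrivial $\cA(0)$-extensions between the summands arise (which is automatic once $\Sq^1$ acts as zero). There is no substantive obstacle, since $\cA(0)$ is so small and the module structure so simple; the real work, namely computing $\Ext_{\cA(0)}(\Z_2,\Z_2)=\Z_2[h_0]$, has already been done in Lemma~\ref{Z2_A1_ext}.
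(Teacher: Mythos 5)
Your proof is correct and follows essentially the same route as the paper: you observe from Theorem~\ref{Thm:CohoZ4} that $\Sq^1$ (hence all of $\cA(0)$) acts trivially on $H^*(B\Z_{2^n};\Z_2)$, split the cohomology as a direct sum of trivial one-dimensional $\cA(0)$-modules, and invoke additivity of Ext together with Lemma~\ref{Z2_A1_ext} to assemble the free $\Z_2[h_0]$-module, with the dual identification in filtration zero following immediately from the trivial action.
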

\begin{proof}
\Cref{Z4_mod2_coh} tells us $\Sq^1$ acts trivially on $H^*(B\Z_{2^n};\Z_2)$, and therefore the entire algebra
$\cA(0)$ acts trivially. Therefore $\Ext_{\cA(0)}(H^*(B\Z_{2^n};\Z_2), \Z_2)$ is, as a $\Z_2[h_0]$-module, a direct
sum of copies of $\Ext_{\cA(0)}(\Z_2, \Z_2)$ indexed by a basis of $H^*(B\Z_{2^n};\Z_2)$. This module is generated by
its classes on the line $s = 0$, which are
\begin{equation}
        \Ext^0_{\cA(0)}(H^*(B\Z_{2^n};\Z_2), \Z_2) \overset\cong\longrightarrow \Hom_{\cA(0)}(H^*(B\Z_{2^n};\Z_2), \Z_2)
        \overset\cong\rightarrow \Hom_{\Z_2}(H^*(B\Z_{2^n};\Z_2),\Z_2),
\end{equation}
because $\cA(0)$ acts trivially.
\end{proof}
\begin{cor}\label{A0_module_A_minus_one}
Under the isomorphisms in \cref{minus_one_cor,A0_E2}, the map
\begin{equation}\label{zero_to_minus_one}
    \Ext_{\cA(0)}^{*,*}(H^*(B\Z_{2^n};\Z_2), \Z_2) \longrightarrow
    \Ext_{\cA(-1)}^{*,*}(H^*(B\Z_{2^n};\Z_2), \Z_2)
\end{equation}
induced by the inclusion $\cA(-1)\to\cA(0)$ sends $\widetilde x\mapsto \overline x$, $\widetilde y_j\mapsto\overline y_j$, and $\widetilde x\widetilde y_j\mapsto \overline x\cdot\overline y_j$.
\end{cor}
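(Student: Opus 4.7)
The plan is essentially a bookkeeping exercise tracing basis elements through the isomorphisms of \cref{minus_one_cor,A0_E2}. First I would observe that because $\Sq^1$ acts trivially on $H^*(B\Z_{2^n};\Z_2)$ by \cref{Thm:CohoZ4}, the inclusion $\Hom_{\cA(0)}(H^*(B\Z_{2^n};\Z_2),\Z_2)\hookrightarrow \Hom_{\Z_2}(H^*(B\Z_{2^n};\Z_2),\Z_2)$ from~\eqref{zero_to_minus_one_redux} is in fact an equality. Since $\cA(-1)=\Z_2$ is a field, $\Ext^{s,*}_{\cA(-1)}$ vanishes for $s>0$ and equals $\Hom_{\Z_2}$ when $s=0$. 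Consequently the change-of-rings map~\eqref{zero_to_minus_one} is zero in positive $s$-degree and, on the $s=0$ line, is the identity map between two identified copies of the $\Z_2$-dual of $H^*(B\Z_{2^n};\Z_2)$.

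Next I would unpack the notation. By construction in \cref{A0_E2}, the symbols $\widetilde x$, $\widetilde y^j$, and $\widetilde x\widetilde y^j$ denote the $\Z_2$-duals of the cohomology basis elements $x$, $y^j$, and $xy^j$, respectively. On the $\cA(-1)$ side, $\overline x$ and $\overline y_j$ are by Cartan's definition the duals of $x$ and $y^j$. Thus the identifications $\widetilde x\mapsto\overline x$ and $\widetilde y^j\mapsto\overline y_j$ are immediate from the fact that the $s=0$ part of~\eqref{zero_to_minus_one} is the identity under the dual-basis identification.

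The only genuinely nontrivial step is to verify that the Pontrjagin product $\overline x\cdot\overline y_j$, viewed as an element of the ring in \cref{minus_one_cor}, coincides with the dual basis element of $xy^j\in H^{2j+1}(B\Z_{2^n};\Z_2)$. The main obstacle will be this basis identification: one must appeal to Cartan's and Brown's descriptions of the Pontrjagin product and check that, after imposing the relations $\overline x^2=0$ and $\overline y_i\overline y_j=\binom{i+j}{i}\overline y_{i+j}$, the set $\set{\overline x^\epsilon \overline y_j : \epsilon\in\set{0,1},\, j\ge 0}$ (with the convention $\overline y_0 = 1$) forms a $\Z_2$-basis of $H_*(B\Z_{2^n};\Z_2)$ that is $\Z_2$-dual to the cohomology basis $\set{x^\epsilon y^j}$ inherited from \cref{Thm:CohoZ4}. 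Once this basis identification is in hand, $\widetilde x\widetilde y^j\mapsto\overline x\cdot\overline y_j$ follows immediately from the $s=0$ description of~\eqref{zero_to_minus_one}, and there is nothing further to check about compatibility with the $h_0$-action because $h_0$ acts as zero on the target.
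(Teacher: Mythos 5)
Your proof is correct and matches the paper's argument: the paper likewise identifies the change-of-rings map on the $s=0$ line with the map~\eqref{zero_to_minus_one_redux}, where each $\widetilde c$ was already defined as the dual of $c$, and then reads off the correspondence with Cartan's dual-basis description of $\overline x$ and $\overline y_j$. Your extra step, explicitly verifying from the Pontrjagin ring presentation that $\overline x\cdot\overline y_j$ is the (unique nonzero, hence dual) class in degree $2j+1$, spells out what the paper leaves implicit.
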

This is because, under the identification $\Ext_{\cA(-1)} = \Hom_{\Z_2}$, the map~\eqref{zero_to_minus_one} is identified with the map~\eqref{zero_to_minus_one_redux}.
\begin{prop}\label{A0_ring_str}
There is an isomorphism of $\Z_2[h_0]$-algebras
\begin{equation}
    \Ext_{\cA(0)}^{*,*}(H^*(B\Z_{2^n};\Z_2), \Z_2) \overset\cong\longrightarrow
    \Z_2[h_0, \widetilde x, \widetilde y_j: j\ge 1]/\paren{
        \widetilde x^2, \widetilde y_i \widetilde y_j = \binom{i+j}{i}\widetilde y_{i+j}
    }.
\end{equation}
\end{prop}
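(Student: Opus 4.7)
The plan is to transport the Pontrjagin product computed over $\cA(-1)$ (Corollary~\ref{minus_one_cor}) up through the ring homomorphism induced by the inclusion $\cA(-1) \hookrightarrow \cA(0)$. By Proposition~\ref{A0_Adams}, the change-of-Ext map
\[\Phi\colon \Ext_{\cA(0)}^{*,*}(H^*(B\Z_{2^n};\Z_2), \Z_2) \longrightarrow \Ext_{\cA(-1)}^{*,*}(H^*(B\Z_{2^n};\Z_2), \Z_2)\]
is a homomorphism of bigraded $\Z_2$-algebras; by Corollary~\ref{A0_module_A_minus_one}, its restriction to the $s=0$ row is a degree-preserving isomorphism onto the target (which is concentrated in $s=0$), sending $\widetilde x \mapsto \overline x$ and $\widetilde y_j \mapsto \overline y_j$.

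Since $\widetilde x, \widetilde y_j \in \Ext^{0,*}$, all products $\widetilde x^2$, $\widetilde x \widetilde y_j$, and $\widetilde y_i \widetilde y_j$ lie on the $s=0$ line, where $\Phi$ is injective. Applying $\Phi$ and invoking Corollary~\ref{minus_one_cor}, the identities $\overline x^2 = 0$ and $\overline y_i \overline y_j = \binom{i+j}{i}\overline y_{i+j}$ lift uniquely to $\widetilde x^2 = 0$ and $\widetilde y_i \widetilde y_j = \binom{i+j}{i}\widetilde y_{i+j}$ on the $\cA(0)$-page. This also retroactively justifies the notation $\widetilde x \widetilde y^j$ appearing in Lemma~\ref{A0_E2}: the product of $\widetilde x$ and $\widetilde y_j$ is precisely the basis element in $\Ext^{0,2j+1}$ dual to $xy^j$.

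It remains to verify that the presentation $\Z_2[h_0, \widetilde x, \widetilde y_j]/(\widetilde x^2,\; \widetilde y_i \widetilde y_j - \binom{i+j}{i} \widetilde y_{i+j})$ recovers the $\Z_2[h_0]$-module structure of Lemma~\ref{A0_E2}. Iteratively applying the second relation collapses any product $\widetilde y_{i_1} \cdots \widetilde y_{i_\ell}$ to a multinomial-coefficient multiple of $\widetilde y_{i_1+\cdots+i_\ell}$, while the first limits the exponent of $\widetilde x$ to be at most one. Hence the quotient has a $\Z_2[h_0]$-basis $\{\widetilde y_j\}_{j\ge 0} \cup \{\widetilde x \widetilde y_j\}_{j\ge 0}$ (with the convention $\widetilde y_0 = 1$), which matches Lemma~\ref{A0_E2} bidegree by bidegree. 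Combined with the multiplicative information in the previous paragraph, this yields the desired isomorphism of $\Z_2[h_0]$-algebras.

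The main obstacle is essentially organizational rather than conceptual: one must be careful that the map $\Phi$ is honestly multiplicative (which rests on the abelian Lie group hypothesis in Proposition~\ref{A0_Adams}, satisfied here since $\Z_{2^n}$ is abelian) and that no additional relations lurk on higher Adams filtrations. The latter is excluded by the freeness of the $\Z_2[h_0]$-module structure of Lemma~\ref{A0_E2}, which rests on the triviality of the $\cA(0)$-action on $H^*(B\Z_{2^n};\Z_2)$ recorded in Theorem~\ref{Z4_mod2_coh}.
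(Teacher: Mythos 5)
Your argument is correct and follows essentially the same route as the paper: you use that the change-of-Ext map to $\cA(-1)$ is a ring homomorphism (Proposition~\ref{A0_Adams}), that it restricts to an isomorphism on the $s=0$ line (Corollary~\ref{A0_module_A_minus_one}), and that the $h_0$-multiples are dictated by the $\Z_2[h_0]$-module structure of Lemma~\ref{A0_E2}. Your extra verification that the proposed presentation reproduces the module structure of Lemma~\ref{A0_E2} bidegree by bidegree is a reasonable sanity check the paper leaves implicit, but it is the same proof.
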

\begin{proof}
\Cref{A0_Adams} asserts~\eqref{zero_to_minus_one} is a ring homomorphism, and \cref{A0_module_A_minus_one} implies that~\eqref{zero_to_minus_one} is a vector space isomorphism restricted to the line $s = 0$. This uniquely forces the products of all classes on the line $s = 0$, i.e.\ of all $\Z_2[h_0]$-module generators except for $h_0$. The products of $h_0$ with the other generators were already told to us as part of the $\Z_2[h_0]$-module structure in \cref{A0_E2}.
\end{proof}

Now that we have the ring structures over $\cA(-1)$ and $\cA(0)$, we return to $\cA(1)$. First we want to know Ext as an $\ExZ$-module, which means understanding $H^*(B\Z_{2^n};\Z_2)$ as an $\cA(1)$-module.

For an $\cA(1)$-module $M$, $\Sigma^k M$ denotes the same ungraded module but with the gradings of all homogeneous elements increased by $k$; we write $\Sigma M$ for $\Sigma^1 M$. Using the Steenrod squares in \cref{Z4_mod2_coh}, one can make the following calculation.
\begin{prop}
\label{A1_mod_BZ4}
Let $C\eta$ denote the $\cA(1)$-module $\Sigma^{-2}\widetilde H^*(\CP^2;\Z_2)$.\footnote{This module is called the \term{bow} in~\cite[Definition 3.3]{BGHR24}.} Then, there is an isomorphism of $\cA(1)$-modules
\begin{equation}
        H^*(B\Z_{2^n};\Z_2)\overset\cong\longrightarrow \Z_2\oplus \Sigma \Z_2 \oplus \bigoplus_{n\ge 0} \Sigma^{4n+2}
        C\eta \oplus \Sigma^{4n+3}C\eta.
\end{equation}
The $\Sigma\Z_2$ summand is generated by $x$; the $\Sigma^{4n+2}C\eta$ summand is spanned by $y^{2n+1}$ and
$y^{2n+2}$, and the $\Sigma^{4n+3}C\eta$ summand is spanned by $xy^{2n+1}$ and $xy^{2n+2}$.
\end{prop}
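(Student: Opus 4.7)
The plan is to compute the $\cA(1)$-action on the monomial basis of $H^*(B\Z_{2^n};\Z_2)$ explicitly using the Cartan formula, and then match the resulting pattern to the right-hand side summand by summand.

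First I would compute $\Sq^i$ on every basis element. Since $x^2 = 0$ in the ring, $\Sq(x) = x + x^2 = x$, so $\Sq^i(x) = 0$ for all $i \ge 1$. Using $\Sq(y) = y + y^2$ and the fact that $\Sq$ is a ring map,
\begin{equation}
\Sq(y^k) = (y + y^2)^k = \sum_j \binom{k}{j}\, y^{k+j},
\end{equation}
so $\Sq^{2j}(y^k) = \binom{k}{j}\, y^{k+j}$ and $\Sq^{2j+1}(y^k) = 0$. The Cartan formula then gives $\Sq^{2j}(xy^k) = \binom{k}{j}\, xy^{k+j}$ and $\Sq^{2j+1}(xy^k) = 0$. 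The key observation is that every odd Steenrod square vanishes on every basis element; in particular $\Sq^1 \equiv 0$. Thus the $\cA(1)$-module structure is carried entirely by $\Sq^2$, and the Adem relation $\Sq^2\Sq^2 = \Sq^3\Sq^1$ in $\cA(1)$ is automatically satisfied since both sides act as zero (the left side is $\Sq^2(k\,y^{k+1}) = k(k+1)\, y^{k+2} \equiv 0 \pmod 2$).

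Next I would read off the pairing produced by $\Sq^2$. Since $\binom{k}{1} = k$, we have $\Sq^2(y^{2n+1}) = y^{2n+2}$ and $\Sq^2(y^{2n+2}) = 0$, so the basis elements $\{y^{2n+1},\, y^{2n+2}\}$ form an isolated two-element sub-$\cA(1)$-module for each $n \ge 0$. The same computation gives $\Sq^2(xy^{2n+1}) = xy^{2n+2}$ and $\Sq^2(xy^{2n+2}) = 0$, grouping $\{xy^{2n+1},\, xy^{2n+2}\}$ into analogous pairs. The classes $1$ and $x$ are each annihilated by $\Sq^1$ and $\Sq^2$ and so generate sub-$\cA(1)$-modules isomorphic to $\Z_2$ and $\Sigma\Z_2$, respectively.

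Finally I would assemble the isomorphism: the module $C\eta = \Sigma^{-2}\widetilde H^*(\CP^2;\Z_2)$ has two generators in degrees $0$ and $2$ linked by a single $\Sq^2$, matching precisely the structure of each pair identified above. Sending the generator of $\Z_2$ to $1$, the generator of $\Sigma\Z_2$ to $x$, the bottom/top generators of $\Sigma^{4n+2}C\eta$ to $y^{2n+1}$ and $y^{2n+2}$, and the bottom/top generators of $\Sigma^{4n+3}C\eta$ to $xy^{2n+1}$ and $xy^{2n+2}$, defines a map of $\cA(1)$-modules that is a bijection on $\Z_2$-bases in every degree, hence an isomorphism. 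The only thing requiring real attention is the parity computation for $\binom{k}{1}$, which is what forces the splitting into consecutive odd/even powers of $y$; beyond that the argument is routine bookkeeping with the Cartan formula.
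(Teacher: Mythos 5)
Your proof is correct and follows the same route as the paper, which simply appeals to the Steenrod squares from the cohomology ring of $B\Z_{2^n}$ and lets the reader carry out the calculation. Your explicit verification — odd squares vanish because $x^2=0$, and the parity of $\binom{k}{1}$ forces $\Sq^2$ to pair consecutive odd/even powers of $y$ into bows $C\eta$ — is exactly the "following calculation" the paper leaves implicit, and the observation that $\Sq^1$ acts trivially so that closure under $\Sq^2$ alone yields sub-$\cA(1)$-modules is the right way to see the decomposition.
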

For $n = 2$ this result also appears in~\cite[Lemma 3.8]{BGHR24}.
%To set up the Adams spectral sequence, we also need $\Ext_{\cA(1)}(C\eta)$. This is also known.
\begin{prop}[{Bahri-Bendersky~\cite[\S 4]{BB00}}]
\label{ext_Ceta}
There is an isomorphism of $\ExZ$-modules
\begin{equation}
        \Ext_{\cA(1)}(C\eta, \Z_2) \overset\cong\longrightarrow (\ExZ/h_1)\set{\kappa, \lambda, \mu, \xi}/(v\kappa = h_0\mu,
        v\lambda = h_0\xi, v\mu = h_0w\kappa, v\xi = h_0w\lambda),
\end{equation}
where $\abs{\kappa} = (0, 0)$, $\abs{\lambda} = (1, 3)$, $\abs{\mu} = (2, 6)$, and $\abs{\xi} = (3, 9)$.
\end{prop}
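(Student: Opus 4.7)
The module $C\eta$ sits in an $\cA(1)$-module short exact sequence
\begin{equation*}
0 \to \Sigma^2 \Z_2 \to C\eta \to \Z_2 \to 0,
\end{equation*}
classified by the operation $\Sq^2$, whose Yoneda class is $h_1 \in \Ext^{1,2}_{\cA(1)}(\Z_2,\Z_2) = \ExZ^{1,2}$. I would apply $\Ext_{\cA(1)}(-, \Z_2)$ to extract, for each $(s,t)$, the short exact sequence of $\Z_2$-vector spaces
\begin{equation*}
0 \to (\ExZ/h_1)^{s,t} \to \Ext^{s,t}_{\cA(1)}(C\eta, \Z_2) \to \mathrm{ann}(h_1)^{s,t-2} \to 0,
\end{equation*}
using that the connecting map in the long exact sequence is multiplication by $h_1$ (by naturality of the Yoneda product). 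From the presentation $\ExZ = \Z_2[h_0,h_1,v,w]/(h_0h_1, h_1^3, vh_1, h_0^2 w - v^2)$, one reads off immediately $\ExZ/h_1 \cong \Z_2[h_0,v,w]/(v^2 - h_0^2 w)$ and identifies $\mathrm{ann}(h_1) \subset \ExZ$ as the ideal generated by $h_0$, $h_1^2$, and $v$.

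Next I would identify the four claimed generators: $\kappa \in \Ext^{0,0}$ is the image of $1 \in (\ExZ/h_1)^{0,0}$, and $\lambda, \mu, \xi$ are chosen as lifts of the annihilator generators $h_0, h_1^2, v$, placing them (after the $\Sigma^2$ reindexing) in bidegrees $(1,3), (2,6), (3,9)$ respectively. Additively, these four classes together with the $(\ExZ/h_1)$-action exhaust $\Ext_{\cA(1)}(C\eta, \Z_2)$, since $\mathrm{ann}(h_1)$ is generated as an $\ExZ$-module by its three listed elements. The relations $h_1\kappa = h_1\lambda = h_1\mu = h_1\xi = 0$ hold automatically, because one checks directly that $\ExZ^{s,t}$ vanishes in the bidegrees $(1,2), (2,5), (3,8), (4,11)$ where these products would live, forcing $\Ext(C\eta, \Z_2)$ to vanish there as well.

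The principal work lies in resolving the extension problem, i.e.\ establishing the four crossed relations. For example, $v\kappa \in \Ext^{3,7}$ maps to $v \in (\ExZ/h_1)^{3,7} \cong \Z_2$ under the left-hand injection, while $h_0\mu$ projects to $h_0 h_1^2 = 0 \in \mathrm{ann}(h_1)^{3,5}$, so $h_0\mu$ must also lie in the bottom piece; a bidegree count shows both represent the unique nonzero class of $(\ExZ/h_1)^{3,7}$, forcing $v\kappa = h_0 \mu$. The relations $v\lambda = h_0 \xi$, $v\mu = h_0 w \kappa$, and $v\xi = h_0 w \lambda$ follow by identical bookkeeping, with the $\ExZ$-relation $v^2 = h_0^2 w$ feeding into the last two iterations. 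The main obstacle is verifying that no further hidden relations or generators emerge in higher degrees; the cleanest way to close the argument is to construct a minimal free $\cA(1)$-resolution of $C\eta$ directly and observe that it is $w$-periodic with period $(4,12)$, as in Bahri--Bendersky~\cite{BB00}, from which the complete presentation is immediate.
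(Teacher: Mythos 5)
Your framework — the short exact sequence $0 \to \Sigma^2\Z_2 \to C\eta \to \Z_2 \to 0$ with connecting map $h_1$, yielding for each $(s,t)$ the short exact sequence $0 \to (\ExZ/h_1)^{s,t} \to \Ext^{s,t}_{\cA(1)}(C\eta) \to \mathrm{ann}(h_1)^{s,t-2} \to 0$ — is the right setup, and your identifications of $\ExZ/h_1$, $\mathrm{ann}(h_1)$, and the bidegrees of $\kappa,\lambda,\mu,\xi$ are all correct. (The paper cites this result to Bahri--Bendersky rather than proving it, so there is no in-paper proof to compare against.) However, two points need repair.

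First, a small misstatement: you claim "$\ExZ^{s,t}$ vanishes in the bidegrees $(1,2), (2,5), (3,8), (4,11)$," but $\ExZ^{1,2} = \Z_2\cdot h_1 \ne 0$. The group that vanishes is $\Ext^{1,2}_{\cA(1)}(C\eta)$, and this follows from your own short exact sequence because $(\ExZ/h_1)^{1,2} = 0$ (since $\ExZ^{1,2}$ is contained in the ideal $(h_1)$) and $\mathrm{ann}(h_1)^{1,0} = 0$. In general the vanishing of $\Ext^{s,t}(C\eta)$ is read off from the two end terms of the SES, not from $\ExZ^{s,t}$ alone.

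Second, and more seriously, the step "a bidegree count shows both represent the unique nonzero class of $(\ExZ/h_1)^{3,7}$, forcing $v\kappa = h_0\mu$" does not follow. You have correctly shown that $h_0\mu$ lies in the one-dimensional image of $(\ExZ/h_1)^{3,7}$, but that image contains two elements: $0$ and $v\kappa$. Nothing in the bidegree count rules out $h_0\mu = 0$, which is exactly what would happen if the short exact sequence of $\ExZ$-modules $0 \to \ExZ/h_1 \to \Ext(C\eta) \to \Sigma^{(0,2)}\mathrm{ann}(h_1) \to 0$ were split. So the crossed relations are precisely the content of a nontrivial hidden $h_0$-extension, and establishing them is the real work of the proposition — it cannot be dispatched by bookkeeping. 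You correctly gesture toward building a minimal free resolution to close the argument, but you frame the residual issue as "no further generators in higher degrees," whereas the undone step is already the proof of the four displayed relations. A cleaner route, which the paper mentions in \cref{E1_trick} but does not pursue, is the change-of-rings isomorphism $\Ext_{\cA(1)}(C\eta, \Z_2) \cong \Ext_{\cE(1)}(\Z_2,\Z_2) \cong \Z_2[a_0, v_1]$ with $\abs{a_0} = (1,1)$, $\abs{v_1} = (1,3)$; under this identification $\kappa,\lambda,\mu,\xi$ become $1, v_1, v_1^2, v_1^3$, the $\ExZ$-action factors through the restriction map $h_0\mapsto a_0$, $h_1\mapsto 0$, $v\mapsto a_0 v_1^2$, $w\mapsto v_1^4$, and all four relations reduce to the single identity $a_0 v_1^{2}\cdot v_1^{k} = a_0\cdot v_1^{k+2}$ in a polynomial ring.
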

We draw this $\ExZ$-module in \cref{Ext_eta_fig}.

\begin{figure}[h!]
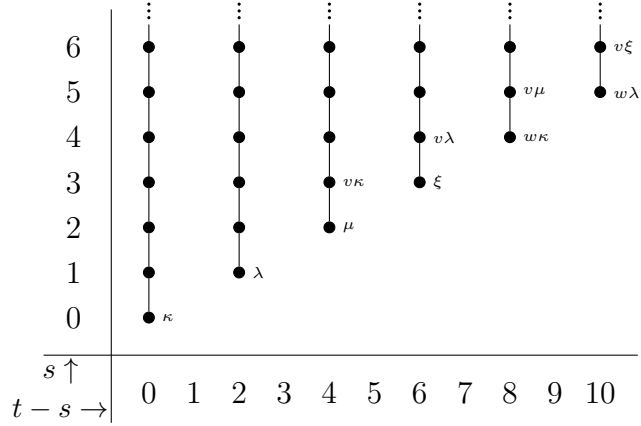

\centering
\begin{sseqdata}[name=ExtCeta, classes=fill, xrange={0}{10}, yrange={0}{6}, scale=0.6, Adams grading, >=stealth,
class labels = {right = 0.05em, font=\tiny},
x label = {$\displaystyle{s\uparrow \atop t-s\rightarrow}$},
x label style = {font = \small, xshift = -23ex, yshift=2.5ex}]
\foreach \y in {0, ..., 5} {
        \class(2*\y, \y)\AdamsTower{}
}
\classoptions["\kappa"](0, 0)
\classoptions["\lambda"](2, 1)
\classoptions["\mu"](4, 2)
\classoptions["\xi"](6, 3)
\classoptions["w\kappa"](8, 4)
\classoptions["w\lambda"](10, 5)
\classoptions["v\kappa"](4, 3);
\classoptions["v\lambda"](6, 4)
\classoptions["v\mu"](8, 5)
\classoptions["v\xi"](10, 6)
\end{sseqdata}
\printpage[name=ExtCeta, page=2]
\caption{The $\ExZ$-module $\Ext_{\cA(1)}(C\eta, \Z_2)$. Vertical lines represent $h_0$-multiplication, so in this
picture $h_0\mu = v\kappa$, $h_0\xi = v\lambda$, etc. This is a picture of \cref{ext_Ceta}.}
\label{Ext_eta_fig}
\end{figure}
Combining \cref{A1_mod_BZ4,ext_Ceta} we obtain a complete description of the $E_2$-page of the Adams spectral sequence computing $\ko_*(B\Z_{2^n})$ as an $\ExZ$-module.
\begin{cor}
\label{Ext_as_module}
There is an isomorphism of $\Z^2$-graded $\ExZ$-modules
\begin{equation}
        \Ext_{\cA(1)}(H^*(B\Z_{2^n};\Z_2), \Z_2) \overset\cong\longrightarrow \ExZ\set{1, \theta, \kappa_i, \kappa_i', \lambda_i,
        \lambda_i', \mu_i, \mu_i', \xi_i, \xi_i': i\ge 1}/\mathcal R_M,
\end{equation}
where $\theta\in\Ext^{0,1}$, $\kappa_i\in\Ext^{0, 4i-2}$, $\kappa_i'\in\Ext^{0, 4i-1}$, $\lambda_i\in\Ext^{1,
4i+1}$, $\lambda_i'\in\Ext^{1, 4i+2}$, $\mu_i\in\Ext^{2, 4i+4}$, $\mu_i'\in\Ext^{2, 4i+5}$, $\xi_i\in\Ext^{3, 4i+7}$, and $\xi_i'\in\Ext^{3, 4i+8}$, and the submodule $\mathcal R_M$ of relations is
\begin{equation}
\begin{aligned}
        \mathcal R_M &= (
                h_1\kappa_i, h_1\kappa_i', h_1\lambda_i, h_1\lambda_i', h_1\mu_i, h_1\mu_i', h_1\xi_i, h_1\xi_i',\\
        &\phantom{= (}
                h_0\mu_i = v\kappa_i, h_0\mu_i' = v\kappa_i', h_0\xi_i = v\lambda_i, h_0\xi_i' = v\lambda_i',\\
        &\phantom{= (}
                h_0w\kappa_i = v\mu_i, h_0w\kappa_i' = v\mu_i',
                h_0w\lambda_i = v\xi_i, h_0w\lambda_i' = v\xi_i'
        ).
\end{aligned}
\end{equation}
\end{cor}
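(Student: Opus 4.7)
The plan is to assemble the corollary from two inputs: the $\cA(1)$-module decomposition of $H^*(B\Z_{2^n};\Z_2)$ in \cref{A1_mod_BZ4}, and the known computation of $\Ext_{\cA(1)}(C\eta,\Z_2)$ in \cref{ext_Ceta}. The key categorical fact is that $\Ext_{\cA(1)}(-,\Z_2)$ is additive in the first variable (converting direct sums of modules into direct sums of bigraded $\ExZ$-modules, since each bidegree is finite-dimensional) and commutes with suspension, shifting the $(s,t)$-bigrading by $\Sigma^k M \mapsto \Sigma^k \Ext_{\cA(1)}(M,\Z_2)$ in the internal degree $t$. Combined with the standard fact that $\Ext_{\cA(1)}(\Z_2,\Z_2) = \ExZ$, these tools reduce the calculation to pure bookkeeping.

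First I would apply $\Ext_{\cA(1)}(-,\Z_2)$ to the splitting in \cref{A1_mod_BZ4}. The summand $\Z_2$ contributes a free $\ExZ$-module on a generator $1 \in \Ext^{0,0}$; the summand $\Sigma\Z_2$ contributes a free $\ExZ$-module on a generator $\theta \in \Ext^{0,1}$. Each summand $\Sigma^{4n+2}C\eta$ contributes a copy of $\Ext_{\cA(1)}(C\eta,\Z_2)$ shifted up by $4n+2$ in internal degree; reindexing with $i = n+1$, the four $\ExZ/h_1$-generators $\kappa, \lambda, \mu, \xi$ of \cref{ext_Ceta} become $\kappa_i, \lambda_i, \mu_i, \xi_i$, and a quick degree count confirms they live in bidegrees $(0, 4i-2)$, $(1, 4i+1)$, $(2, 4i+4)$, $(3, 4i+7)$ as claimed. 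Similarly, each summand $\Sigma^{4n+3}C\eta$, with $i = n+1$, contributes $\kappa_i', \lambda_i', \mu_i', \xi_i'$ in bidegrees $(0, 4i-1)$, $(1, 4i+2)$, $(2, 4i+5)$, $(3, 4i+8)$.

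Next I would transcribe the relations. The $1$ and $\theta$ generators inherit no relations beyond those already in $\ExZ$, reflecting their origin as free $\ExZ$-module summands. For the unprimed and primed families, the relations in $\Ext_{\cA(1)}(C\eta,\Z_2)$ from \cref{ext_Ceta} transport directly through the suspension: $h_1$-annihilation of each of the eight generators, together with the four Massey-type product relations $h_0\mu = v\kappa$, $h_0\xi = v\lambda$, $h_0 w\kappa = v\mu$, $h_0 w\lambda = v\xi$ (and their primed analogues). These are precisely the relations appearing in $\mathcal{R}_M$.

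There is no real obstacle here beyond careful indexing: the heavy lifting was already done in establishing the $\cA(1)$-module splitting (\cref{A1_mod_BZ4}) and in Bahri-Bendersky's calculation of $\Ext_{\cA(1)}(C\eta, \Z_2)$ (\cref{ext_Ceta}). The mildly subtle point worth double-checking is that the indexing starts at $i=1$ rather than $i=0$ because we shift by $4n+2$ and $4n+3$ for $n \geq 0$, and that no cross-summand relations are introduced by the application of $\Ext$ (which is automatic from additivity). Once the degrees match up as above, the stated isomorphism follows.
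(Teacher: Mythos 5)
Your proposal is correct and matches the paper's own (very terse) derivation: the paper simply states that the corollary follows by "combining" \cref{A1_mod_BZ4,ext_Ceta}, and your bookkeeping — applying $\Ext_{\cA(1)}(-,\Z_2)$ to the direct-sum decomposition, using additivity and compatibility with suspension, reindexing via $i = n+1$, and transcribing the $\ExZ$-module relations from each shifted $C\eta$ summand — is exactly the intended argument, with the degree checks all landing correctly.
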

The action of $h_0$ on $\Ext(C\eta)$ is injective, which means that the action of $h_0$ on $\Ext(H^*(B\Z_{2^n};\Z_2))$
is close to injective.
\begin{cor}
\label{injectivity}
$h_0$-action, as a map $E_2^{s,t}\to E_2^{s+1, t+1}$ for the Adams spectral sequence over $\cA(1)$, is injective except in the cases $(s, t) =
(4m+1, 12m+2)$, $(4m+2, 12m+4)$, $(4m+1, 12m+3)$, and $(4m+2, 12m+5)$, $m\ge 0$.
\end{cor}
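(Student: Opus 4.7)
The plan is to exploit the $\ExZ$-module decomposition of $\Ext_{\cA(1)}(H^*(B\Z_{2^n};\Z_2),\Z_2)$ furnished by Corollary~\ref{Ext_as_module}. Since $h_0$-multiplication respects the direct-sum decomposition, the total $h_0$-kernel is the direct sum of the kernels in each summand, and it suffices to analyze the two types of summand appearing: the ``free'' summands $\ExZ\{1\}$ and $\ExZ\{\theta\}$, and the bidegree-shifted copies of $\Ext_{\cA(1)}(C\eta,\Z_2)$ generated by $(\kappa_i,\lambda_i,\mu_i,\xi_i)$ and $(\kappa_i',\lambda_i',\mu_i',\xi_i')$.

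First I would analyze the free summands. Working inside $\ExZ = \Z_2[h_0,h_1,v,w]/(h_0h_1,\,h_1^3,\,vh_1,\,h_0^2w - v^2)$, the relations $h_0h_1=0$, $vh_1=0$, and $h_1^3=0$ imply that a $\Z_2$-basis of $\ExZ$ is
\begin{equation}
\{h_0^a v^c w^d : a\ge 0,\,c\in\{0,1\},\,d\ge 0\}\;\cup\;\{h_1^b w^d : b\in\{1,2\},\,d\ge 0\}.
\end{equation}
Multiplication by $h_0$ sends the first family injectively into itself and annihilates the second family. Hence $\ker(h_0)\cap \ExZ\{1\}$ is spanned by $\{h_1 w^m, h_1^2 w^m : m\ge 0\}$, which sit in bidegrees $(4m+1,12m+2)$ and $(4m+2,12m+4)$ respectively. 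Since $\theta\in\Ext^{0,1}$, the corresponding kernel in $\ExZ\{\theta\}$ sits in the shifted bidegrees $(4m+1,12m+3)$ and $(4m+2,12m+5)$. Together these are precisely the four exceptional families listed in the statement.

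Next I would show that $h_0$ acts injectively on each $\Ext(C\eta)$-summand. Bidegree shifts preserve injectivity of $h_0$, so it suffices to treat $\Ext_{\cA(1)}(C\eta,\Z_2)$ itself. By Proposition~\ref{ext_Ceta}, $h_1$ acts as zero here, so the module structure factors through $\ExZ/(h_1) \cong \Z_2[h_0,v,w]/(v^2-h_0^2w)$. The four relations $v\kappa=h_0\mu$, $v\mu=h_0w\kappa$, $v\lambda=h_0\xi$, $v\xi=h_0w\lambda$ allow any $v$-multiple of a generator to be replaced by a $\Z_2[h_0,w]$-multiple of a generator, so $\{\kappa,\lambda,\mu,\xi\}$ spans the module over $\Z_2[h_0,w]$. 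A bidegree check shows this spanning set is free: $h_0^i w^j$ has bidegree $(i+4j,\,i+12j)$, and the four generator offsets $(0,0),(1,3),(2,6),(3,9)$ lie in pairwise distinct cosets modulo the subgroup of $\Z^2$ generated by $(1,1)$ and $(4,12)$, so the bidegrees occupied by $\Z_2[h_0,w]\cdot\kappa$, $\Z_2[h_0,w]\cdot\lambda$, $\Z_2[h_0,w]\cdot\mu$, $\Z_2[h_0,w]\cdot\xi$ are pairwise disjoint. Since $\Z_2[h_0,w]$ is a polynomial ring, $h_0$-multiplication on any free module over it is injective.

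The main (minor) subtlety is confirming that the relations in Proposition~\ref{ext_Ceta} do not introduce hidden $h_0$-torsion, i.e.\ the freeness claim just made. This I would verify by reducing every double-$v$-application through the relations and checking that only the identity $v^2=h_0^2w$ is recovered: for instance $v\cdot v\kappa = vh_0\mu = h_0v\mu = h_0^2 w\kappa$, which is $v^2=h_0^2w$ applied to $\kappa$ and nothing new (and analogously for $\lambda,\mu,\xi$). Assembling the analyses of the two summand types then yields exactly the four exceptional families of bidegrees in the statement, with injectivity of $h_0$ everywhere else.
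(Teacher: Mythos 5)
Your proof is correct and takes essentially the same approach as the paper: decompose $E_2$ into $\ExZ\{1\}\oplus\ExZ\{\theta\}\oplus(\text{shifted copies of }\Ext(C\eta))$, observe that $h_0$ acts injectively on each $\Ext(C\eta)$-summand, and read the exceptional bidegrees off the $h_0$-torsion $\{h_1w^m, h_1^2w^m\}$ in the two $\ExZ$-summands. The paper leaves the $\Ext(C\eta)$-injectivity implicit (visible from \cref{Ext_eta_fig}) and omits the bidegree bookkeeping; your $\Z_2[h_0,w]$-freeness argument and coset computation supply the details it skips.
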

Now we can compare to $\cA(0)$. First, just for $C\eta$:
\begin{lem}[{\!\!\cite[Appendix D]{Heckman}}]
\label{A1_Ceta_A0}
There is an isomorphism of $\cA(0)$-modules $j\colon C\eta\overset\cong\longleftarrow \Z_2\oplus\Sigma^2 \Z_2$.
The map
\begin{equation}
\begin{aligned}
        f_2\colon \Ext^{s,t}_{\cA(1)}(C\eta) &\longrightarrow \Ext^{s,t}_{\cA(0)}(C\eta)\\
                &\overset{j^*}{\underset{\cong}{\longrightarrow}}
        \Ext_{\cA(0)}^{s,t}(\Z_2) \oplus
        \Ext_{\cA(0)}^{s,t+2}(\Z_2)\\
                &\overset{\eqref{Z2_A1_ext}}{\underset{\cong}{\longrightarrow}} \cong \Z_2[h_0] \oplus
                \Sigma^{0,2}\Z_2[h_0]
\end{aligned}
\end{equation}
sends $\kappa\mapsto (1, 0)$, $\lambda\mapsto (0, h_0)$, $\mu\mapsto 0$, and $\xi\mapsto 0$.
\end{lem}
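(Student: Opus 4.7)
The argument splits into two essentially independent parts: constructing the $\cA(0)$-module splitting $j$, and then tracking the four generators through the composite $f_2$.

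For the splitting, I would note that by definition $C\eta = \Sigma^{-2}\widetilde H^*(\CP^2;\Z_2)$ has a $\Z_2$-basis consisting of two classes in degrees $0$ and $2$, linked by $\Sq^2$ but with $\Sq^1$ acting trivially on both (this is visible from the Steenrod algebra action on $\widetilde H^*(\CP^2;\Z_2)$, or as a special case of \cref{A1_mod_BZ4}). Since $\cA(0) = \Z_2[\Sq^1]/(\Sq^1)^2$ involves only $\Sq^1$, the restriction $C\eta|_{\cA(0)}$ decomposes as the direct sum of the two trivial $\cA(0)$-submodules generated by these classes, yielding the isomorphism $j\colon \Z_2 \oplus \Sigma^2\Z_2 \overset{\cong}{\to} C\eta$.

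For the values of $f_2$ on the generators, the key tool is the short exact sequence of $\cA(1)$-modules
\begin{equation}
0 \longrightarrow \Sigma^2\Z_2 \longrightarrow C\eta \longrightarrow \Z_2 \longrightarrow 0,
\end{equation}
whose extension class is $h_1 \in \Ext^{1,2}_{\cA(1)}(\Z_2,\Z_2)$ (this is what ``cone on $\eta$'' means). Applying $\Ext_{\cA(1)}(-, \Z_2)$ gives a long exact sequence whose connecting map is multiplication by $h_1$. A direct check of low-bidegree groups in $\ExZ$ shows that the relevant pieces of this sequence force $\Ext^{0,0}_{\cA(1)}(C\eta)\cong\Z_2\{\kappa\}$ to map isomorphically to $\Ext^{0,0}_{\cA(1)}(\Z_2)$ under the quotient $C\eta\to\Z_2$, and force $\Ext^{1,3}_{\cA(1)}(C\eta)\cong\Z_2\{\lambda\}$ to map isomorphically to $\Ext^{1,3}_{\cA(1)}(\Sigma^2\Z_2) = \Ext^{1,1}_{\cA(1)}(\Z_2) = \Z_2\{h_0\}$ under the inclusion $\Sigma^2\Z_2\hookrightarrow C\eta$. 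By naturality of Ext under the change of rings $\cA(0)\hookrightarrow\cA(1)$, together with naturality of $h_0$ (it lives already in $\Ext_{\cA(0)}$), this transports to $\kappa\mapsto (1,0)$ and $\lambda\mapsto (0,h_0)$ under the identifications from \cref{Z2_A1_ext}.

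Finally, $\mu$ and $\xi$ lie in bidegrees $(2,6)$ and $(3,9)$. Under the splitting established in the first step, $\Ext^{s,t}_{\cA(0)}(C\eta,\Z_2) \cong \Z_2[h_0]_{(s,t)} \oplus \Z_2[h_0]_{(s, t-2)}$, and both summands are zero at the bidegrees $(2,6)$ and $(3,9)$ because $\Z_2[h_0]$ is concentrated on the line $s=t$. Hence $\mu$ and $\xi$ must map to $0$ for degree reasons alone. The only delicate point in the whole argument is verifying that $\lambda\mapsto (0,h_0)$ and not $0$; that nontriviality is exactly what the short exact sequence calculation above controls, and I expect this to be the main content of the proof while the other three values are immediate.
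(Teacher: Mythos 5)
The paper does not actually prove this lemma; it is stated with a citation to an external reference, so there is no in-text argument for you to match. Evaluating your proof on its own terms: it is correct. The $\cA(0)$-module splitting is immediate since $\Sq^1$ vanishes on $C\eta$; the degree argument disposing of $\mu$ and $\xi$ is right because $\Ext_{\cA(0)}^{s,t}(C\eta)\cong\Z_2[h_0]_{(s,t)}\oplus\Z_2[h_0]_{(s,t-2)}$ is supported on the lines $t=s$ and $t=s+2$, which miss $(2,6)$ and $(3,9)$; and the long-exact-sequence argument for $\lambda$ is the key step and is executed correctly: $\Ext_{\cA(1)}^{1,3}(\Z_2)=0$ and $\Ext_{\cA(1)}^{2,3}(\Z_2)=0$ (the only candidate monomial is $h_0h_1$, which is zero in $\ExZ$), forcing $\Ext_{\cA(1)}^{1,3}(C\eta)\to\Ext_{\cA(1)}^{1,3}(\Sigma^2\Z_2)$ to be an isomorphism sending $\lambda\mapsto h_0$, and naturality of the change of rings along $\cA(0)\hookrightarrow\cA(1)$ then transports this to the stated value of $f_2$.

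One small spot to tighten. For $\kappa$ you argue through the quotient $q\colon C\eta\to\Z_2$, which induces $q^*\colon\Ext^{0,0}(\Z_2)\to\Ext^{0,0}(C\eta)$ with $1\mapsto\kappa$. But the first coordinate of $f_2$ is computed by $i^*$, where $i\colon\Z_2\hookrightarrow C\eta$ is the $\cA(0)$-module inclusion of the bottom class from the splitting $j$, not by $q^*$. The step you want is that $i$ is a section of $q$ as $\cA(0)$-modules, so $i^*q^*=\id$ and therefore $i^*(\kappa)=i^*q^*(1)=1$ — or simply observe that $\Ext^0=\Hom$ and compute $i^*(\kappa)$ directly as the restriction of the functional $\kappa$ to the degree-$0$ class. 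This is not a genuine gap, just a sentence worth including so that the reader sees why the LES (which is naturally about $q^*$) controls the map $i^*$ appearing in $f_2$.
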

The notation $\Sigma^{p,q}$ means to increase the $s$-grading by $p$ and the $t$-grading by $q$.
\begin{cor}\label{A1_module_A0}
Under the isomorphisms in \cref{A0_E2,Ext_as_module}, the map
\begin{equation}\label{one_to_zero}
    \Ext_{\cA(1)}^{*,*}(H^*(B\Z_{2^n};\Z_2), \Z_2) \longrightarrow
    \Ext_{\cA(0)}^{*,*}(H^*(B\Z_{2^n};\Z_2), \Z_2)
\end{equation}
induced by the inclusion $\cA(0)\to\cA(1)$ takes on the following values:
\begin{equation}\label{1_to_0_gens}
\begin{alignedat}{2}
    1 &\mapsto 1 \qquad\qquad & \theta &\mapsto \widetilde x\\
    \kappa_i &\mapsto \widetilde y_{2i-1} & \kappa_i' &\mapsto \widetilde x\widetilde y_{2i-1}\\
    \lambda_i &\mapsto h_0\widetilde y_{2i} & \lambda_i' &\mapsto h_0\widetilde x\widetilde y_{2i}\\
    \mu_i &\mapsto 0 & \mu_i' &\mapsto 0\\
    \xi_i &\mapsto 0 & \xi_i' &\mapsto 0.
\end{alignedat}
\end{equation}
\end{cor}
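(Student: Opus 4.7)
The plan is to reduce the computation to a single copy of $C\eta$ by using naturality of change-of-Ext along the inclusion $\cA(0)\hookrightarrow\cA(1)$ together with the $\cA(1)$-module decomposition of \cref{A1_mod_BZ4}. Since $\Ext_{\cA(1)}^{s,t}(M\oplus N,\Z_2)\cong\Ext_{\cA(1)}^{s,t}(M,\Z_2)\oplus\Ext_{\cA(1)}^{s,t}(N,\Z_2)$ (and similarly over $\cA(0)$), and since change-of-Ext respects direct sums, I can compute the map~\eqref{one_to_zero} summand-by-summand. The summands of $H^*(B\Z_{2^n};\Z_2)$ are $\Z_2$, $\Sigma\Z_2$, $\Sigma^{4n+2}C\eta$ and $\Sigma^{4n+3}C\eta$ for $n\ge 0$; for each of these four types I will identify where the relevant $\cA(1)$-Ext generators go under $f_2$.

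First I would handle the trivial summands. By \cref{Z4_mod2_coh}, $\Sq^1$ acts trivially on $H^*(B\Z_{2^n};\Z_2)$, hence so does all of $\cA(0)$; applying $\Ext_{\cA(0)}(\bl,\Z_2)$ to $\Z_2$ and $\Sigma\Z_2$ and using \cref{Z2_A1_ext} gives $\Z_2[h_0]$ in bidegree $(0,0)$ and $(0,1)$ respectively, and the $\cA(1)$-Ext of these same modules produces the classes $1$ and $\theta$ on $s=0$. Because the change-of-Ext map is an isomorphism on the line $s=0$ when the source is $\cA(0)$-free (equivalently here, $\Sq^1$-trivial) in the relevant degree, one finds $1\mapsto 1$ and $\theta\mapsto\widetilde{x}$ by matching bidegrees, using the identification in \cref{A0_E2}.

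Next, I would do the $C\eta$ summands using \cref{A1_Ceta_A0}. On a single unsuspended $C\eta$, that lemma says $f_2(\kappa)=(1,0)$, $f_2(\lambda)=(0,h_0)$, $f_2(\mu)=f_2(\xi)=0$ under the $\cA(0)$-splitting $C\eta\cong\Z_2\oplus\Sigma^2\Z_2$. Applying $\Sigma^{4i-2}$ for the unprimed family (so $n=i-1$), the two $\cA(0)$-summands sit in cohomological degrees $4i-2$ and $4i$, which by \cref{A0_E2} get identified with $\widetilde{y}_{2i-1}$ and $\widetilde{y}_{2i}$; hence $\kappa_i\mapsto\widetilde{y}_{2i-1}$, $\lambda_i\mapsto h_0\widetilde{y}_{2i}$, $\mu_i,\xi_i\mapsto 0$. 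Applying instead $\Sigma^{4i-1}$ for the primed family shifts the two summand-degrees to $4i-1$ and $4i+1$; these correspond to the dual basis elements of $xy^{2i-1}$ and $xy^{2i}$, which by \cref{A0_module_A_minus_one} are exactly $\widetilde{x}\widetilde{y}_{2i-1}$ and $\widetilde{x}\widetilde{y}_{2i}$ (the Pontrjagin-product naming is justified by comparison with $\cA(-1)$). This yields the primed rows of~\eqref{1_to_0_gens}.

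The only subtlety I anticipate is cleanly matching the internal bookkeeping: verifying that the generator chosen for each $\Sigma^{4n+j}C\eta$ summand in \cref{Ext_as_module} agrees, after shifting, with the one used in \cref{A1_Ceta_A0}, and that the two image classes produced by $f_2(\kappa)$ and $f_2(\lambda)$ really correspond to the duals of $y^{2i-1}, y^{2i}$ (respectively $xy^{2i-1}, xy^{2i}$) rather than some linear combination. This will follow from the explicit spanning data in \cref{A1_mod_BZ4}, together with the fact that the bottom-cell generator of each $C\eta$ is exactly the cohomology class listed there, but it is the one place where signs/bases need to be chased carefully rather than merely invoked. Everything else is a direct unraveling of \cref{A0_E2,Ext_as_module,A1_Ceta_A0}.
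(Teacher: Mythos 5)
Your proof is correct and takes essentially the same route the paper intends: decompose $H^*(B\Z_{2^n};\Z_2)$ into $\cA(1)$-summands via \cref{A1_mod_BZ4}, apply \cref{A1_Ceta_A0} to each $C\eta$ summand, and handle the two trivial summands separately. The degree bookkeeping (matching $\Sigma^{4i-2}C\eta$ with $y^{2i-1},y^{2i}$ and $\Sigma^{4i-1}C\eta$ with $xy^{2i-1},xy^{2i}$, hence with $\widetilde y_{2i-1}, \widetilde y_{2i}$ and $\widetilde x\widetilde y_{2i-1}, \widetilde x\widetilde y_{2i}$) is exactly right, and the primed and unprimed rows of~\eqref{1_to_0_gens} drop out as you describe.

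One small correction: when you justify $1\mapsto 1$ and $\theta\mapsto\widetilde x$, you write ``$\cA(0)$-free (equivalently here, $\Sq^1$-trivial)'' — these are not equivalent, and $\Z_2$, $\Sigma\Z_2$ are $\cA(0)$-\emph{trivial}, not $\cA(0)$-free. What you actually need is that for a module $M$ on which both $\Sq^1$ and $\Sq^2$ act trivially, $\Ext^{0,t}_{\cA(1)}(M,\Z_2)=\Hom_{\cA(1)}(\Sigma^tM,\Z_2)$ and $\Ext^{0,t}_{\cA(0)}(M,\Z_2)=\Hom_{\cA(0)}(\Sigma^tM,\Z_2)$ agree, both being $\Hom_{\Z_2}(M_t,\Z_2)$, so the forgetful map is the identity on $s=0$. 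The conclusion stands, but state it that way rather than invoking freeness.
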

Finally we can compute products in the Adams spectral sequence over $\cA(1)$ by comparing with the ring structure over $\cA(0)$ using that the map in \cref{A1_module_A0} is a ring homomorphism. This time things are a little more exciting because~\eqref{one_to_zero} is not injective.
\begin{prop}\label{some_products}
In $\Ext_{\cA(1)}(H^*(B\Z_{2^n}; \Z_2), \Z_2)$,
\begin{enumerate}
    \item $\theta^2 = 0$,
    \item $\theta\kappa_i = \kappa_i'$,
    \item $\theta\lambda_i = \lambda_i'$,
    \item $\kappa_i\kappa_j = 0$,
    \item $\kappa_i\lambda_j = \binom{2i+2j-1}{2i} h_0\kappa_{i+j}$, and
    \item $\lambda_i\lambda_j = \binom{2i+2j}{2i} h_0\lambda_{i+j}$.
\end{enumerate}
\end{prop}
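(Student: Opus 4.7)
The plan is to exploit the ring homomorphism $f_2$ from $\Ext_{\cA(1)}(H^*(B\Z_{2^n};\Z_2),\Z_2)$ to $\Ext_{\cA(0)}(H^*(B\Z_{2^n};\Z_2),\Z_2)$ supplied by \cref{A1_module_A0}, together with the explicit presentation of the target algebra given in \cref{A0_ring_str}. Since the $\cA(0)$-side lifts the Pontrjagin product on mod-$2$ homology, Cartan's formula $\widetilde y_i\widetilde y_j = \binom{i+j}{i}\widetilde y_{i+j}$ lets us evaluate $f_2$ of every product in the statement essentially by inspection. The remaining task is to show that, in each bidegree of interest, $f_2$ is injective, so that matching images forces the desired equality.

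The uniform strategy for each of the six identities will be: (i) use \cref{A1_module_A0} and \cref{A0_ring_str} to compute $f_2$ of the left-hand side; (ii) use \cref{Ext_as_module} to enumerate every $\ExZ$-module generator of $\Ext_{\cA(1)}(H^*(B\Z_{2^n};\Z_2),\Z_2)$ in the relevant bidegree $(s,t)$; (iii) check that the result of step (i) matches $f_2$ of the candidate right-hand side, appealing to $\binom{n}{k}=\binom{n}{n-k}$ to reconcile indexing conventions. For instance, $\theta^2$ lies in $\Ext^{0,2}$, whose only generator is $\kappa_1$; as $f_2(\theta^2)=\widetilde x^2=0$ but $f_2(\kappa_1)=\widetilde y_1\neq 0$, we get $\theta^2=0$. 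The cases $\theta\kappa_i=\kappa_i'$ and $\theta\lambda_i=\lambda_i'$ are the same pattern, because each target bidegree, $\Ext^{0,4i-1}$ and $\Ext^{1,4i+2}$ respectively, contains a unique generator whose $f_2$-image agrees with the product on the $\cA(0)$-side. For $\kappa_i\kappa_j$, a direct enumeration shows $\Ext^{0,4i+4j-4}$ contains no $\ExZ$-generator at all, so the product vanishes for dimensional reasons. The two multiplicative formulas reduce, via Cartan's binomial identity, to showing that $\Ext^{1,4i+4j-1}$ and $\Ext^{2,4i+4j+2}$ are each one-dimensional over $\Z_2$, generated by $h_0\kappa_{i+j}$ and $h_0\lambda_{i+j}$, respectively.

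The main obstacle is controlling the kernel of $f_2$: from \cref{A1_module_A0}, $f_2$ vanishes on the $\mu_i,\mu_i',\xi_i,\xi_i'$ summands of $\Ext$, together with all $h_1$-multiples (the latter being automatically trivial by the relations in $\cR_M$). To rule these out of the relevant bidegrees, I will note that each of the products above lands in $s \leq 2$ with $t-s$ constrained modulo $4$; meanwhile $\mu_i,\mu_i',\xi_i,\xi_i'$ live in $s\in\{2,3\}$ with $t-s\in\{2,3\}\bmod 4$, and $v$- and $w$-multiples sit in $s\geq 3$ or $s\geq 4$. A short bidegree census eliminates every potential contaminating class, after which the one-generator computations above conclude the proof. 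With injectivity of $f_2$ in hand in each relevant bidegree, the multiplicative structure is determined by its image in the $\cA(0)$-based Adams spectral sequence, where the ring structure is the known Pontrjagin ring of $H_*(B\Z_{2^n};\Z_2)$.
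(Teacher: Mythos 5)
Your strategy — evaluate products via the forgetful ring map $f_2$ to $\Ext_{\cA(0)}$, use Cartan's Pontrjagin-product formula there, and then argue injectivity of $f_2$ in the relevant bidegrees — is exactly the paper's approach (the paper additionally handles $\kappa_i\kappa_j=0$ by observing $\Ext^{0,t}=0$ when $t\equiv 0\bmod 4$ and $t>0$, matching your count). However, you should correct one misstatement in your kernel analysis. You assert that all $h_1$-multiples "are automatically trivial by the relations in $\cR_M$", but $\cR_M$ only kills $h_1\kappa_i$, $h_1\lambda_i$, etc., and \emph{not} $h_1\cdot 1$, $h_1\theta$, $h_1^2$, $h_1^2\theta$: these come from the $\Z_2\oplus\Sigma\Z_2$ summand of \cref{A1_mod_BZ4}, whose $\Ext$ is a free $\ExZ$-module, so they are nonzero and lie in $\ker f_2$ at filtrations $1$ and $2$ — well inside the filtration range of your products. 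They happen not to occupy the bidegrees $(1,4i+2)$, $(1,4i+4j-1)$, $(2,4i+4j+2)$ for $i,j\ge 1$ (e.g.\ $h_1\theta\in\Ext^{1,3}$ would require $i+j=1$), so the bidegree census you invoke still works, but they must be checked rather than waved away. (Similarly, $\xi_i\in\Ext^{3,4i+7}$ and $\xi_i'\in\Ext^{3,4i+8}$ have $t-s\equiv 0,1\bmod 4$, not $\{2,3\}$; this is harmless since they lie in filtration $3$.) Once those two corrections are made, your proof goes through.
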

\begin{proof}
The product $\kappa_i\kappa_j$ vanishes for degree reasons.

The map~\eqref{one_to_zero} is a ring homomorphism, so we can use the ring structure of $\Ext_{\cA(0)}(\dotsb)$ to compute the ring structure of $\Ext_{\cA(1)}(\dotsb)$ \emph{modulo the kernel of~\eqref{one_to_zero}}. Use~\eqref{1_to_0_gens} to compare the remaining products in the theorem statement with the ring structure over $\cA(0)$ as we proved in \cref{A0_ring_str}; one sees that they are compatible, so the products in the theorem statement hold modulo the kernel of~\eqref{one_to_zero}.

To finish the proof and deduce these products without having to quotient, observe that all products in the theorem statement take place in filtration at most $2$, but using~\eqref{1_to_0_gens}, we see that the kernel of the map~\eqref{one_to_zero} consists solely of elements in filtration $3$ and higher. Thus, restricted to the sub-vector space consisting of elements in filtration $2$ and below, \eqref{one_to_zero} is injective, so the products in the theorem statement hold unconditionally.
\end{proof}
\begin{rem}
We can make a slight simplification: $\lambda_i\lambda_j = \binom{i+j}{i} h_0\lambda_{i+j}$ because of the identity $\binom{2(i+j)}{2i}\equiv \binom{i}{j}\bmod 2$ for $i,j\ge 1$.
\end{rem}
\begin{prop}\label{more_products}
In $\Ext_{\cA(1)}(H^*(B\Z_{2^n};\Z_2), \Z_2)$,
\begin{enumerate}
    \item $\theta\mu_i = \mu_i'$ and $\theta\xi_i = \xi_i'$.
    \item $\kappa_i\mu_j = 0$ and $\mu_i\mu_j = 0$.
    \item $\kappa_i\xi_j = \binom{2i+2j-1}{2i} h_0\mu_{i+j}$, $\mu_i\lambda_j = \binom{2i+2j-1}{2i} h_0\mu_{i+j}$, and
    $\mu_i\xi_j = \binom{2i+2j-1}{2i}h_0w\kappa_{i+j}$.
    \item $\lambda_i\xi_j = \binom{i+j}{i} h_0\xi_{i+j}$ and $\xi_i\xi_j = \binom{i+j}{i}h_0 w\lambda_{i+j}$.
\end{enumerate}
\end{prop}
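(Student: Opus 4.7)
The plan is to leverage the $\ExZ$-module structure on $\Ext_{\cA(1)}(H^*(B\Z_{2^n};\Z_2),\Z_2)$ described in \cref{Ext_as_module}, in particular the relations
\begin{equation}
h_0\mu_i = v\kappa_i,\qquad h_0\xi_i = v\lambda_i,\qquad v\mu_i = h_0w\kappa_i,\qquad v\xi_i = h_0w\lambda_i,
\end{equation}
together with the products in filtration $\le 2$ already computed in \cref{some_products}. Unlike in that proposition, the map to $\Ext_{\cA(0)}$ kills every $\mu_i$ and $\xi_i$, so comparison with $\cA(0)$ is useless here; instead, the idea is to multiply each asserted product by $h_0$ and thereby shift it into a filtration where both sides are already known, then invoke injectivity of $h_0$-multiplication (\cref{injectivity}) to cancel.

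For example, to prove $\theta\mu_i = \mu_i'$, I would compute
\begin{equation}
h_0(\theta\mu_i) = \theta(h_0\mu_i) = \theta(v\kappa_i) = v(\theta\kappa_i) = v\kappa_i' = h_0\mu_i',
\end{equation}
where the third equality uses commutativity of the $\ExZ$-action and the fourth uses $\theta\kappa_i=\kappa_i'$ from \cref{some_products}. Since the bidegree of $\mu_i'$ avoids the exceptional cases in \cref{injectivity}, $h_0$-multiplication is injective there and we conclude $\theta\mu_i=\mu_i'$. Identical telescoping gives the remaining identities: for instance
\begin{equation}
h_0(\lambda_i\xi_j)=\lambda_i(v\lambda_j)=v(\lambda_i\lambda_j)=\tbinom{i+j}{i}v\cdot h_0\lambda_{i+j}=\tbinom{i+j}{i}h_0\cdot h_0\xi_{i+j},
\end{equation}
and
\begin{equation}
h_0(\xi_i\xi_j)=v(\lambda_i\xi_j)=\tbinom{i+j}{i}v\cdot h_0\xi_{i+j}=\tbinom{i+j}{i}h_0\cdot h_0w\lambda_{i+j}.
\end{equation}
The vanishing statements $\kappa_i\mu_j=0$ and $\mu_i\mu_j=0$ follow the same way, using $\kappa_i\kappa_j=0$ as the input. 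Each of the products $\kappa_i\xi_j$, $\mu_i\lambda_j$, $\mu_i\xi_j$ reduces analogously to $\kappa_i\lambda_j=\binom{2i+2j-1}{2i}h_0\kappa_{i+j}$ from \cref{some_products}, then back-substitutes $v\kappa_{i+j}=h_0\mu_{i+j}$ or $v\mu_{i+j}=h_0w\kappa_{i+j}$ to reach the claimed form.

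The main obstacle is purely bookkeeping: one must verify, case by case, that the bidegree of each target class does not lie in the four exceptional families of \cref{injectivity} where $h_0$ fails to be injective. This is where I would spend the most care — checking that for the generators $\mu_i',\xi_i',h_0\mu_{i+j},h_0w\kappa_{i+j},h_0\xi_{i+j},h_0w\lambda_{i+j}$ appearing on the right-hand sides, the indices $(s,t-s)$ of their $h_0$-preimages do not match $(4m+1,12m+2)$, $(4m+2,12m+4)$, $(4m+1,12m+3)$, or $(4m+2,12m+5)$. Because all the $\mu$- and $\xi$-classes sit in filtrations $s=2,3$ and their internal degrees $t$ grow as $4i+{}$const, a short arithmetic check rules out the exceptional families and finishes the proof.
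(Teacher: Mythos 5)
Your proposal is correct and takes essentially the same approach as the paper: multiply by $h_0$ (or equivalently, multiply a known relation from \cref{some_products} by $v$), convert between $v$ and $h_0$ using the module relations of \cref{Ext_as_module}, and cancel the factor of $h_0$ via the injectivity statement of \cref{injectivity}. The degree bookkeeping you flag at the end is exactly the check the paper performs when it notes, for example, that $h_0\colon\Ext^{2,4i+5}\to\Ext^{3,4i+6}$ is injective.
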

\begin{proof}
Recall from \cref{some_products} that $\theta\kappa_i = \kappa_i'$, so $\theta v\kappa_i = v\kappa_i'$. The relations $v\kappa_i = h_0\mu_i$ and $v\kappa_i' = h_0\mu_i'$ in \cref{Ext_as_module} thus imply $h_0 \theta\mu_i = h_0\mu_i'$. By \cref{injectivity}, multiplication by $h_0$ is an injective map $\Ext^{2,4i+5}\to\Ext^{3,4i+6}$, so $h_0\theta\mu_i = h_0\mu_i'$ implies $\theta\mu_i = \mu_i'$.

The remaining calculations in the theorem statement are completely analogous: take an equation in \cref{some_products}, multiply both sides by $v$ or $v^2$, apply a relation in \cref{Ext_as_module}, then appeal to \cref{injectivity} to cancel factors of $h_0$ off of both sides of the equation and obtain the desired result.
\end{proof}
Combining \cref{Ext_as_module,some_products,more_products}, we get the main theorem in this section.
\begin{thm}
\label{Z4_Adams_ring}
\hfill
\begin{enumerate}
\item
There is an isomorphism of $\Z^2$-graded $\ExZ$-algebras
\begin{subequations}
\begin{equation}\label{Z4_Ext_ring_eqn}
    \Ext_{\cA(1)}(H^*(B\Z_{2^n};\Z_2),\Z_2) \cong
        \ExZ[\theta, \kappa_i, \lambda_i, \mu_i, \xi_i: i\ge 1]/\mathcal R_1
\end{equation}
with generators in the following degrees: $\theta\in\Ext^{0,1}$, $\kappa_i\in\Ext^{0,4i-2}$, $\lambda_i\in\Ext^{1,4i+1}$, $\mu_i\in\Ext^{2,4i+4}$, and $\xi_i\in\Ext^{3,4i+7}$. The ideal $\mathcal R_1$ of relations is
\begin{equation}\label{Z4_rels}
\begin{aligned}
    \mathcal R_1 &= (
        h_1\kappa_i, h_1\lambda_i, h_1\mu_i, h_1\xi_i,
        v\kappa_i = h_0\mu_i, v\lambda_i = h_0\xi_i,
        v\mu_i = h_0w\kappa_i, v\xi_i = h_0w\lambda_i,\\
        &\phantom{= (}
        \theta^2 = 0,
        \kappa_i\kappa_j = 0, \kappa_i\lambda_j = A(i,j) h_0\kappa_{i+j}, \kappa_i\mu_j = 0, \kappa_i\xi_j = A(i,j) h_0\mu_{i+j},\\
                &\phantom{= (}
        \lambda_i\lambda_j = B(i,j) h_0\lambda_{i+j}, \lambda_i\mu_j = A(j,i)h_0\mu_{i+j}, \lambda_i\xi_j = B(i,j)h_0\xi_{i+j},\\
                &\phantom{= (}
        \mu_i\mu_j = 0, \mu_i\xi_j = A(i,j)h_0 w\kappa_{i+j},
        \xi_i\xi_j = B(i,j)h_0w\lambda_{i+j}
        %h_1\kappa_1, h_1\kappa_2, h_1\lambda_1, h_1\lambda_2, h_1\mu_1, h_1\xi_1,
        %v\kappa_1 + \mu_1, v\lambda_1 + \xi_1,
        %\theta^2, \kappa_1^2, \lambda_1^2,
        %\kappa_1\lambda_1 + h_0\lambda_2, \kappa_1\mu_1, \kappa_1\kappa_2, \dots
    ),
\end{aligned}
\end{equation}
where $A(i,j)\coloneqq \binom{2i+2j-1}{2i}\bmod 2$ and $B(i,j)\coloneqq\binom{i+j}{i}\bmod 2$.
\end{subequations}
\item The class $\theta$ is detected by $x\in H^1(B\Z_{2^n};\Z_2)$, $\kappa_i$ is detected by $y^{2i-1}$, and $\theta\kappa_i$ by $xy^{2i-1}$. All other generators listed in~\eqref{Z4_Ext_ring_eqn} are not detected by mod $2$ cohomology.
\end{enumerate}
\end{thm}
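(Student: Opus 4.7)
The plan is to assemble this theorem from the propositions already in hand, since essentially all the work has been done. The module structure and the $\ExZ$-linear relations in $\mathcal{R}_1$ involving $h_0,h_1,v,w$ are exactly the content of \cref{Ext_as_module}. What remains is to verify the products among the new generators $\theta,\kappa_i,\lambda_i,\mu_i,\xi_i$, and to confirm the detection statement in part (2).

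For the ring structure, I will simply concatenate \cref{some_products} and \cref{more_products}. The products $\theta^2$, $\theta\kappa_i$, $\theta\lambda_i$, $\kappa_i\kappa_j$, $\kappa_i\lambda_j$, and $\lambda_i\lambda_j$ are given directly by \cref{some_products}; the remaining products involving $\mu_i$ or $\xi_i$ come from \cref{more_products}. Combined with the $\ExZ$-module presentation of \cref{Ext_as_module}, this yields the presentation~\eqref{Z4_Ext_ring_eqn}–\eqref{Z4_rels}. The one thing worth double-checking is that no additional hidden products are possible: any product of two generators lands in a bidegree where the $\ExZ$-module presentation forces it to be a specific $\Z_2$-linear combination of monomials, and we have identified the correct one in each case by lifting the computation to $\cA(0)$-Ext (where the map~\eqref{one_to_zero} is injective on filtration $\le 2$, and on higher filtrations one cancels $h_0$ using the injectivity statement \cref{injectivity}). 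So no new relations are needed beyond those listed.

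For part (2), the edge homomorphism~\eqref{edge_hom} together with \cref{detector} reduces the claim to identifying which $\Ext^{0,t}_{\cA(1)}$ classes go to which duals in $\Hom_{\Z_2}(H^t(B\Z_{2^n};\Z_2),\Z_2)$. By \cref{A1_module_A0}, under the change-of-rings map $\Ext_{\cA(1)}\to\Ext_{\cA(0)}$, the $s=0$ generators $1,\theta,\kappa_i,\kappa_i'=\theta\kappa_i$ map respectively to $1,\widetilde x,\widetilde y_{2i-1},\widetilde x\widetilde y_{2i-1}$, i.e.\ to the duals of $1,x,y^{2i-1},xy^{2i-1}$ under the isomorphism $\Ext^{0,*}_{\cA(0)}(H^*(B\Z_{2^n};\Z_2),\Z_2)\cong \Hom_{\Z_2}(H^*(B\Z_{2^n};\Z_2),\Z_2)$ of \cref{A0_E2}. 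Composing with $\varrho$ gives the claimed detection. All other generators in~\eqref{Z4_Ext_ring_eqn} live in filtration $s\ge 1$, so they cannot be detected at the $s=0$ edge.

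There is no real obstacle; the main care needed is bookkeeping of the binomial coefficients $A(i,j),B(i,j)$ and cancellation of $h_0$ using \cref{injectivity}, both of which are already addressed in the preparatory propositions. Accordingly, the proof essentially reads: \emph{combine \cref{Ext_as_module,some_products,more_products} for part (1), and combine \cref{A1_module_A0} with \cref{detector} for part (2)}.
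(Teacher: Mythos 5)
Your proposal matches the paper's own (implicit) proof: the paper introduces the theorem with the single sentence "Combining \cref{Ext_as_module,some_products,more_products}, we get the main theorem in this section," and your part (2) argument via \cref{A1_module_A0}, \cref{A0_E2}, and \cref{detector} is the natural way to make the detection claim precise. Correct, same approach.
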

For $n = 2$, we draw the $E_2$-page and label these generators in \cref{Z4_E2}, left.

\begin{figure}[h!]
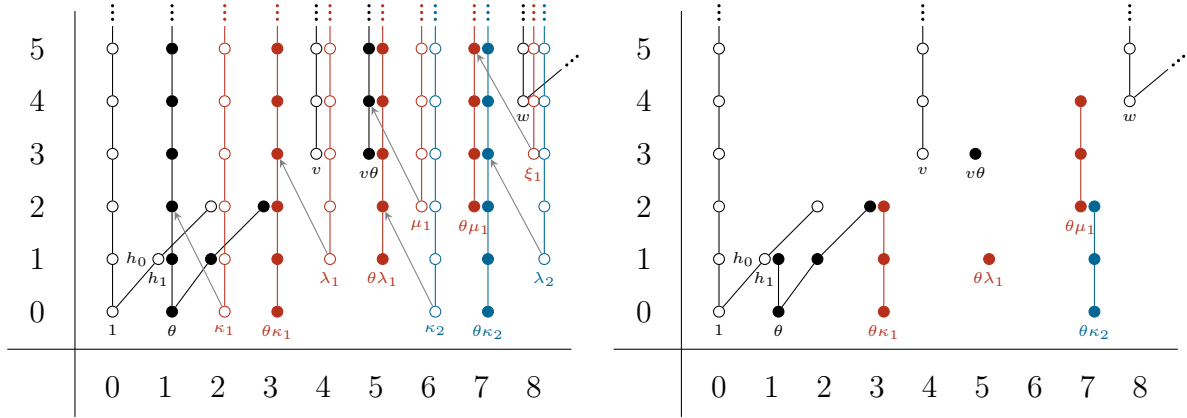

\centering
\begin{sseqdata}[name=Z4ring, xrange={0}{8}, yrange={0}{5}, scale=0.7, Adams grading, >=stealth,
class labels = {below = 0.05em, font=\tiny}]
\begin{scope}
    \class(0, 0)\AdamsTower{}
    \class(1, 1)\structline(0, 0)(1, 1)
    \class(2, 2)\structline
    \class(4, 3)\AdamsTower{}
    \class(8, 4)\AdamsTower{}
    \class(9, 5)\structline(8, 4)(9, 5)
    \class(10, 6)\structline
    \classoptions["1"](0, 0)
    \classoptions[class labels = {right=0.05em}, "h_0"](0, 1)
    \classoptions["h_1"](1, 1)
    \classoptions["v"](4, 3)
    \classoptions["w"](8, 4)
\end{scope}
\begin{scope}[draw=none, fill=none]
    \class(1, 0)
    \class(1, 2)
    \class(2, 0)
    \class(3, 0)
    \class(3, 1)
    \foreach \y in {3, ..., 7} {
        \class(1, \y)
        \class(2, \y)
        \class(3, \y)
    }
    \foreach \x in {4, 5, 8, 9} {
        \class(\x, 1)
        \class(\x, 2)
    }
    \foreach \x in {6, 7, 10} {
        \class(\x, 0)
        \class(\x, 1)
    }
    \foreach \x in {8,9,10} {
        \foreach \y in {1,2,3} {
            \class(\x, \y)
        }
    }
    \class(9, 6)
    \foreach \y in {1, 2, 3, 4, 7} {
        \class(9, \y)
        \class(10, \y)
    }
\end{scope}
\begin{scope}[classes=fill]
    \class(1, 0)\AdamsTower{}
    \class(2, 1)\structline(1, 0, -1)(2, 1)
    \class(3, 2)\structline
    \class(5, 3)\AdamsTower{}
    \class(9, 4)\AdamsTower{}
    \class(10, 5)\structline(9, 4, -1)(10, 5)
    \class(11, 6)\structline
    \classoptions["\theta"](1, 0, -1)
    \classoptions["v\theta"](5, 3, -1)
    \classoptions[class labels = {below right = 0.05em}, "w\theta"](9, 4, -1)
\end{scope}
\TwoCEtaExt{BrickRed}{2}{1}
\TwoCEtaExt{MidnightBlue}{6}{2}
%\TwoCEtaExt{MidnightBlue}{10}{3}
\begin{scope}[gray, >=stealth]
    \d2(2, 0, -1)(1, 2, -1)
    \d2(4, 1, -1)(3, 3, -1)
    \d2(6, 2, 1)(5, 4, 1)
    \d2(6, 0, -1)(5, 2, -1)
    \d2(8, 3, 2)(7, 5, 1)
    \d2(8, 1, -1)(7, 3, -1)
%    \d2(10, 4, 2)(9, 6, 2)
%    \d2(10, 2, 3)(9, 4, 3)
 %   \d2(10, 0, -1)(9, 2, -1)
\end{scope}
\begin{scope}[draw=none, fill=none]
    \class(1, 7)
    \class(3, 7)
    \class(5, 7)\class(5, 7)
    \class(7, 7)\class(7, 7)
    \foreach \y in {1, ..., 5} {
        \d2(2, \y, -1)(1, \y+2, -1)
        \d2(6, \y, -1)(5, \y+2, -1)
    }
    \foreach \y in {2, ..., 5} {
        \d2(4, \y, -1)(3, \y+2, -1)
        \d2(8, \y, -1)(7, \y+2, -1)
    }
    \foreach \y in {3, ..., 5} {
        \d2(6, \y, 1)(5, \y+2, 1)
    }
    \foreach \y in {4, 5} {
        \d2(8, \y, 2)(7, \y+2, 1)
    }
\end{scope}
\end{sseqdata}
\begin{subfigure}[c]{0.48\textwidth}
    \printpage[name=Z4ring, page=2]
\end{subfigure}
\begin{subfigure}[c]{0.48\textwidth}
    \printpage[name=Z4ring, page=3]
\end{subfigure}
\caption{Left: the $E_2$-page of the Adams spectral sequence computing $\ko_*(B\Z_4)$. Generators and some relations labeled. As we discussed in \cref{ko_Z4_diffs}, the May-Milgram theorem establishes $d_2$ differentials between the $h_0$-towers, which we display here. Right: the spectral sequence then collapses at $E_3 = E_\infty$.}
\label{Z4_E2}
\end{figure}

\begin{prop}
\label{ko_Z4_diffs}
The differentials in the Adams spectral sequence for $\ko_*(B\Z_{2^n})$ all vanish except on the $E_n$-page,
where they have the following values for any $i\ge 1$:
\begin{equation}
\begin{alignedat}{3}
        d_n(1) &= 0\qquad\qquad\qquad & 
                %d_n(\kappa_i') &= 0\qquad\qquad &
                d_n(\kappa_i) &= h_0^{n-1}\theta\lambda_{i-1}\ (i\ne 1)\\
         d_n(\theta) &= 0 &
                %d_n(\lambda_i') &= 0 &
                d_n(\lambda_i) &= h_0^{n+1} \theta\kappa_i\\
         d_n(\kappa_1) &= h_0^n \theta &
                %d_n(\mu_1') &= 0 &
                d_n(\mu_i) &= h_0^{n-1}\theta\xi_{i-1}\ (i\ne 1)\\
        d_n(\mu_1) &= h_0^{n-1} v\theta &
 %               d_n(\xi_1') &= 0 &
                d_n(\xi_i) &= h_0^{n+1} \theta\mu_i.
\end{alignedat}
\end{equation}
\end{prop}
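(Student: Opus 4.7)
The plan is to combine three structural inputs: (i) the $\ExZ$-algebra description of the Adams $E_2$-page established in Theorem \ref{Z4_Adams_ring}, (ii) the Leibniz rule satisfied by every differential (a consequence of Theorem \ref{E_converge}, noting that since the Adams spectral sequence for $\ko_*(\pt)$ collapses, every class of $\ExZ$ is a permanent cycle), and (iii) naturality with respect to the change-of-rings map
\begin{equation}
    \Ext_{\cA(1)}^{*,*}(H^*(B\Z_{2^n};\Z_2),\Z_2) \longrightarrow \Ext_{\cA(0)}^{*,*}(H^*(B\Z_{2^n};\Z_2),\Z_2)
\end{equation}
of Corollary \ref{A1_module_A0}, which converges to the homomorphism $\ko_*(B\Z_{2^n}) \to H_*(B\Z_{2^n};\Z)$.

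\textbf{Reduction to generators.} By the Leibniz rule, $d_r$ is determined by its values on the $\ExZ$-algebra generators $\theta,\kappa_i,\lambda_i,\mu_i,\xi_i$. For degree reasons $d_r(\theta)=0$ (the only potential targets live in the $h_0$-tower on $\theta$, which is a permanent cycle coming from the retraction $B\Z_{2^n}\to\pt$). Next, the relations $v\kappa_i = h_0\mu_i$ and $v\lambda_i = h_0\xi_i$ together with $d_r(v)=0$ give
\begin{equation}
    h_0\, d_r(\mu_i) = v\, d_r(\kappa_i), \qquad h_0\, d_r(\xi_i) = v\, d_r(\lambda_i),
\end{equation}
so that $d_r(\mu_i)$ and $d_r(\xi_i)$ are recoverable from $d_r(\kappa_i)$ and $d_r(\lambda_i)$ by dividing by $h_0$; this division is legitimate in the relevant bidegrees by the injectivity statement of Corollary \ref{injectivity}. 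It therefore suffices to pin down $d_r(\kappa_i)$ and $d_r(\lambda_i)$ for all $r$ and $i$.

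\textbf{Comparison with the $\cA(0)$-based spectral sequence.} The $\cA(0)$-based Adams spectral sequence of Proposition \ref{A0_Adams} converges to $H_*(B\Z_{2^n};\Z)\otimes\hat\Z_2$, which is $\Z$ in degree $0$ and $\Z_{2^n}$ in each positive odd degree. Reading the $E_2$-page of Proposition \ref{A0_ring_str}, one sees that $\widetilde y_k$ must be killed (no even-degree integral homology for $k\ge 1$), and that the $h_0$-tower on $\widetilde x\widetilde y_{k-1}$ must truncate at height $n$ to realise the $\Z_{2^n}$ in $H_{2k-1}(B\Z_{2^n};\Z)$. The unique bidegree-compatible differential accomplishing both is
\begin{equation}
    d_n(\widetilde y_k) = h_0^n\,\widetilde x\widetilde y_{k-1} \qquad (\widetilde y_0\coloneqq 1),
\end{equation}
with all other differentials vanishing. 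Under the change-of-rings map, Corollary \ref{A1_module_A0} gives $\kappa_i\mapsto\widetilde y_{2i-1}$ and $\lambda_i\mapsto h_0\widetilde y_{2i}$, while $\theta\mapsto\widetilde x$. Since $h_0^n\theta$, $h_0^{n-1}\theta\lambda_{i-1}$ and $h_0^{n+1}\theta\kappa_i$ all map injectively to nonzero classes in their respective bidegrees on the $\cA(0)$-page, naturality forces the claimed values of $d_n(\kappa_i)$ and $d_n(\lambda_i)$. The $\mu$- and $\xi$-differentials then follow from the Leibniz manipulations of the previous paragraph, using $v\theta\kappa_i = h_0\theta\mu_i$ etc.

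\textbf{Vanishing of other differentials; main obstacle.} That $d_r=0$ for $2\le r<n$ on $\theta,\kappa_i,\lambda_i$ is also forced by the $\cA(0)$-comparison: any nonzero $d_r$ with $r<n$ on these classes would produce an incorrect truncation of an $h_0$-tower on the $\cA(0)$-side. Vanishing on $\mu_i,\xi_i$ then follows by dividing Leibniz relations by $h_0$. Finally, for $r>n$ one checks that the $E_{n+1}$-page computed above already has the right total ranks to match the known orders of $\ko_*(B\Z_{2^n})$ (which itself can be read off by splitting $B\Z_{2^n}$-homology into $H_*(B\Z_{2^n};\Z)$-information and the known $\ko_*(\pt)$-action, or from Bruner-Greenlees and Bahri-Bendersky-Davis-Gilkey type computations), so no further differential can be nonzero. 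The main subtlety is controlling the classes $\mu_i,\xi_i$ which map to zero under the change-of-rings map, so that naturality alone says nothing about their differentials; the resolution, and the critical technical ingredient, is Corollary \ref{injectivity} together with the $v$-divisibility relations of Theorem \ref{Z4_Adams_ring}, which together reduce everything back to the already-determined differentials on $\kappa_i,\lambda_i$.
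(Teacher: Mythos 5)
Your approach is genuinely different from the paper's. The paper determines $d_n(\kappa_i)$ and $d_n(\lambda_i)$ by citing the May--Milgram theorem (through~\cite[Prop.\ D.13]{Heckman}), then recovers $d_n(\mu_i)$ and $d_n(\xi_i)$ via $\mathbb E$-linearity and the $h_0$-injectivity of Corollary~\ref{injectivity}. You attempt to replace the May--Milgram input with naturality along the change-of-rings map $f\colon\Ext_{\cA(1)}\to\Ext_{\cA(0)}$, which is attractive because the $\cA(0)$-side abuts to integral homology, whose groups are known. Your $\cA(0)$-side analysis is correct, and your treatment of $\mu_i,\xi_i$ coincides with the paper's.

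However, the key step has a gap: naturality determines $d_n(\kappa_i)$ and $d_n(\lambda_i)$ only \emph{modulo $\ker f$}, and $f$ is not injective on the target bidegree. Concretely, already at $n=2$ and $i\ge 2$, the target bidegree $\Ext^{3,\,4i+2}$ of $d_2(\lambda_i)$ contains both $h_0^3\theta\kappa_i$ and $v\theta\kappa_{i-1}=h_0\theta\mu_{i-1}$; the latter is nonzero (by Corollary~\ref{injectivity}) but lies in $\ker f$, since $v\mapsto 0$ and $\mu_j\mapsto 0$ under the change of rings (Corollary~\ref{A1_module_A0}). Naturality therefore cannot exclude $d_2(\lambda_i)=h_0^3\theta\kappa_i+v\theta\kappa_{i-1}$. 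For $n\ge 3$ the ambiguity also hits $\kappa_i$: e.g.\ $h_0^{n-3}\theta\xi_{i-2}$ (for $i\ge 3$) and $h_0^{n-3}v\theta$ (for $i=2$) sit in the target bidegree $\Ext^{n,\,4i+n-3}$ of $d_n(\kappa_i)$ and vanish under $f$. Moreover, a nonzero $d_r$ with $2\le r<n$ whose image lies in $\ker f$ would be invisible to the $\cA(0)$-comparison, so your argument for the vanishing of earlier differentials is also incomplete. Some extra ingredient is needed to discharge these alternatives --- for instance the multiplicative Leibniz rule combined with an induction on $i$ from the unambiguous base case $d_n(\kappa_1)=h_0^n\theta$, or the direct appeal to May--Milgram that the paper makes.
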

For $n = 2$, we draw these differentials in \cref{Z4_E2}, left. The Leibniz rule thus tells us the values of differentials on all classes. For $n = 2$, \cref{ko_Z4_diffs} is proven in a different way by Bárcenas, García-Hernández, and Reinauer~\cite[Theorem 3.15]{BGHR24}.
\begin{proof}
First, $d_s(1) = 0$ for all $s$ by degree reasons. If $d_s(\theta)$ were nonzero for any $s$, then there would be
finitely many $\Z_2$ summands in topological degree $0$ on the $E_\infty$-page, which is incompatible with
$\ko_0(B\Z_{2^n})$ containing a $\Z$ summand (namely $\ko_0(\pt)$). Thus $d_s(\theta) = 0$.

The values of $d_s(\kappa_i)$ and $d_s(\lambda_i)$ follow from the
May-Milgram theorem~\cite{MM81}, specifically from its incarnation in~\cite[Proposition D.13]{Heckman}. For
$d_s(\mu_i)$, use $\ExZ$-linearity of differentials to compute $v d_s(\kappa_i) = d_s(v\kappa_i) = d_s(h_0\mu_i) =
h_0d_s(\mu_i)$; then \cref{injectivity} shows there is a unique class $\sigma\in E_r^{2+n, 4i+3+n}$ such that
$h_0\sigma = d_s(v\kappa_i)$, fixing $d_s(\mu_i) = \sigma$. The same argument works \textit{mutatis mutandis} with $(\xi_i, \lambda_i)$ in place of $(\mu_i, \kappa_i)$.
\end{proof}

\subsection{\texorpdfstring{Lifting to the ring structure on $\ko_*(B\Z_4)$}{Lifting to the ring structure on ko(BZ4)}}

Now we describe the $\ko_*$-algebra structure on $\ko_*(B\Z_4)$ in low degrees.
\begin{thm}\label{ringstr_Z4}\hfill
\begin{enumerate}
\item\label{ring_Z4_part}
There is an isomorphism of $\ko_*$-algebras
\begin{subequations}
\begin{equation}\label{ko_Ring_Z4}
    \ko_*(B\Z_4)\cong\ko_*[\ell_1,\ell_3, q_5, \ell_7, \widetilde \ell_7, \dotsc]/\mathcal R_2,
\end{equation}
with $\abs{\ell_i} = i$, $\abs{q_5} = 5$, and $\lvert\widetilde \ell_7\rvert = 7$, and the ideal $\mathcal R_2$ of relations is
\begin{equation}
\begin{aligned}
    \mathcal R_2 = &(
        4\ell_1, 8\ell_3, 4q_5, 8\widetilde \ell_7, 2\widetilde\ell_7 - 8\ell_7,
        \eta\ell_3, \eta q_5, \eta\ell_7, \eta\widetilde\ell_7,\\
        &\phantom{(}\ell_1^2, \ell_3^2, \ell_1\ell_3, \ell_1 q_5, \ell_1\ell_7, \ell_1\widetilde\ell_7, \ell_3 q_1, v\ell_1 - 2q_5, v\ell_3 - 2\widetilde\ell_7, \dotsc
    ).
\end{aligned}
\end{equation}
\end{subequations}
All generators and relations not listed are in degrees greater than $8$. %\textcolor{red}{MD: Is it obvious of how to find the structure of the ideals, which seem rather complicated.}
\item\label{generator_names}
The isomorphism~\eqref{ko_Ring_Z4} may be chosen so that the images of the generators on the $E_\infty$-page of the Adams spectral sequence are: $\ell_1\mapsto\theta$, $\ell_3\mapsto\theta\kappa_1$, $q_5\mapsto\theta\lambda_1$, $\ell_7\mapsto\theta\kappa_2$, and $\widetilde\ell_7\mapsto\theta\mu_1$.
\end{enumerate}
\end{thm}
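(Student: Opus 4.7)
The plan is to read the $\ko_*$-algebra structure off the Adams spectral sequence computed in \cref{Z4_Adams_ring,ko_Z4_diffs}. First I would assemble the $E_3 = E_\infty$ page: applying the Leibniz rule to the $d_2$-differentials of \cref{ko_Z4_diffs} propagates them to all decomposable classes, and the survivors in total degree $\leq 8$ can be read off directly. Combined with the $h_0$-injectivity of \cref{injectivity}, the resulting $h_0$-tower lengths determine the associated graded of $\ko_k(B\Z_4)$ for $k \leq 7$; because $h_0$-multiplication converges to multiplication by $2$, this yields the additive structure up to the possible hidden extensions that show up in the ring statement.

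Next I would choose lifts $\ell_1, \ell_3, q_5, \ell_7, \widetilde{\ell}_7 \in \ko_*(B\Z_4)$ of the $E_\infty$-classes $\theta$, $\theta\kappa_1$, $\theta\lambda_1$, $\theta\kappa_2$, $\theta\mu_1$ respectively, which establishes the second part of the theorem by construction. Together with $\ko_*$-linearity, these generate $\ko_k(B\Z_4)$ for $k \leq 7$ because their images generate the associated graded. Most multiplicative relations are then visible on $E_\infty$: the vanishing of $\ell_1^2$, $\ell_1\ell_3$, $\ell_1 q_5$, $\ell_3^2$, etc., follows from $\theta^2 = 0$ (\cref{some_products}) together with a bidegree check that no higher-filtration correction is available in the relevant total degree. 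The $\eta$-annihilation relations $\eta\ell_3 = \eta q_5 = \eta\ell_7 = \eta\widetilde{\ell}_7 = 0$ are similarly deduced from $h_1\kappa_i = h_1\lambda_i = h_1\mu_i = 0$ in \cref{Z4_Adams_ring}.

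The main obstacle will be the hidden multiplicative extensions $v\ell_1 = 2q_5$, $v\ell_3 = 2\widetilde{\ell}_7$, and $2\widetilde{\ell}_7 = 8\ell_7$. For $v\ell_3 = 2\widetilde{\ell}_7$ the key computation is that $\theta$-multiplying the $E_2$-relation $v\kappa_1 = h_0\mu_1$ from \cref{Ext_as_module} yields $v(\theta\kappa_1) = h_0(\theta\mu_1)$, and both sides survive to $E_\infty$ at $(t-s,s) = (7,3)$; a bidegree count in $(7,\geq 4)$ then rules out any higher-filtration correction. For $v\ell_1 = 2q_5$, one first observes that the class $h_0\theta\lambda_1$, though nonzero on $E_2$, is hit by the differential $d_2(\kappa_2) = h_0\theta\lambda_1$ from \cref{ko_Z4_diffs}, and therefore vanishes on $E_\infty$; consequently $2q_5$ has Adams filtration $\geq 3$. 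Together with the observation that the only survivor in $E_\infty^{5,\geq 3}$ is $v\theta$ (since $h_0 v\theta$ is killed by $d_2(\mu_1)$), this forces $2q_5 = v\ell_1$ modulo classes of filtration $\geq 5$, which vanish in this total degree. An entirely analogous analysis at $(t-s,s) = (7,*)$ produces the relation $2\widetilde{\ell}_7 = 8\ell_7$ from the interaction of the $h_0$-tower on $\theta\kappa_2$ with the class $\theta\mu_1$.

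I expect the trickiest bookkeeping to lie in this last step: one has to carefully track, bidegree by bidegree, which classes on $E_2$ survive after the $d_2$-differentials, which hidden extensions are forced by the module relations of \cref{Ext_as_module} once $\theta$-multiplied, and which further hidden extensions to even higher filtration are allowed (or ruled out) by the multiplicative structure. In particular, checking that the relations listed in~$\mathcal R_2$ form a complete list in degrees $\leq 7$ requires verifying that $\ko_k(B\Z_4)$ for $k \leq 7$ has the stated order, which follows from the $E_\infty$-page and the hidden extensions above, and that no further multiplicative relation is needed; this final completeness check is routine once the extensions are in place.
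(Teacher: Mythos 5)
Your overall strategy is essentially the one the paper uses — lift the $E_\infty$-generators $\theta$, $\theta\kappa_1$, $\theta\lambda_1$, $\theta\kappa_2$, $\theta\mu_1$ and then chase relations and extensions — so most of what you write lines up. Your handling of $v\ell_3 = 2\widetilde\ell_7$ is fine: the relation $v\theta\kappa_1 = h_0\theta\mu_1$ holds already on $E_2$ and survives, so it only costs a choice of lift for $\widetilde\ell_7$. The point where your argument genuinely breaks is the pair of hidden $h_0$-extensions $2q_5 = v\ell_1$ and $8\ell_7 = 2\widetilde\ell_7$, which you claim to read off the spectral sequence but cannot.

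Consider $2q_5$. Your reasoning — $d_2(\kappa_2) = h_0\theta\lambda_1$ kills $h_0\theta\lambda_1$, so $2q_5$ jumps to Adams filtration $\ge 3$, and the only survivor in $E_\infty^{5,\ge 3}$ is $v\theta$ — is correct, but it only shows $2q_5 \in \{0, v\ell_1\}$, not that $2q_5 = v\ell_1$. The $E_\infty$-page cannot distinguish $\ko_5(B\Z_4) \cong \Z_4$ from $\ko_5(B\Z_4) \cong \Z_2 \oplus \Z_2$; both are filtered abelian groups with graded pieces $\Z_2$ at filtrations $1$ and $3$. The same issue recurs for $8\ell_7$: once $h_0^3\theta\kappa_2$ dies via $d_2(\lambda_2)$, the survivors at filtration $\ge 3$ in degree $7$ are $h_0\theta\mu_1$ and $h_0^2\theta\mu_1$, and both $\Z_{32}\oplus\Z_2$ and $\Z_8\oplus\Z_8$ admit filtrations with the graded pieces $\Z_2, \Z_2, \Z_2^{\,2}, \Z_2, \Z_2$, so the $E_\infty$-page is silent about which occurs. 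The paper resolves both extensions by citing Bruner--Greenlees for the additive structure of $\ko_*(B\Z_4)$ (namely $\ko_5 \cong \Z_4$ and $\ko_7 \cong \Z_{32}\oplus\Z_2$), which is the external input your argument is missing; you would need either to invoke the same reference or to supply an $\eta$-invariant computation.

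A smaller point: for the vanishing of $\eta\ell_3$, $\eta q_5$, $\eta\ell_7$, $\eta\widetilde\ell_7$ and of the products $\ell_1\cdot(\text{generator})$, $\ell_3\cdot(\text{generator})$, you appeal to a ``bidegree check that no higher-filtration correction is available,'' but in several of the relevant total degrees there \emph{are} nonzero $E_\infty$-classes in higher filtration (e.g.\ $v$ in degree $4$), so a raw bidegree count does not immediately close the question. The paper instead observes that each of these products is torsion (since one factor is torsion and $\eta$ or $\ell_1$ is torsion) and the target group in the relevant degree is torsion-free, which splits the extension; you should use that cleaner argument rather than relying on a filtration count.
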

\begin{proof}
First compare the theorem statement with the $E_\infty$-page of this spectral sequence, drawn in \cref{Z4_E2}, right: the $E_\infty$-page is generated as an $\ExZ$-algebra by $\theta$, $\theta\kappa_1$, $\theta\lambda_1$, $\theta\kappa_2$, and $\theta\mu_1$. The relation $\theta^2 = 0$ from \cref{Z4_Adams_ring} implies that, unless there are hidden extensions, all products of pairs of these five generators vanish, and the relations $h_1\kappa_i = 0$, $h_1\lambda_1 = 0$, and $h_1\mu_1 = 0$ (also from \cref{Z4_Adams_ring}) imply $\eta$ times each of these classes vanishes unless there is a hidden $\eta$-extension.

Thus if we choose lifts of $\theta$, $\theta\kappa_1$, $\theta\lambda_1$, $\theta\mu_1$, and $\theta\kappa_2$ and label them as in part~\ref{generator_names} of the theorem statement, part~\eqref{ring_Z4_part} is the assertion that, even multiplicatively, there are no extensions on the $E_\infty$-page in degrees $8$ and below apart from extensions by $2$ and the one implied by the relation $v\ell_1 = 2q_5$. This is the form of the theorem we will prove.

Extensions by $2$ are a consequence of the additive structure of $\ko_*(B\Z_4)$, which is calculated by Bruner-Greenlees~\cite[Example 7.3.3]{Bruner}. First, $\ko_1(B\Z_4)\cong\Z_4\oplus\Z_2$ shows $\ell_1$ has order $4$, and $\ko_3(B\Z_4)\cong\Z_8$ implies $\ell_3$ has order $8$. Since $\ko_5(B\Z_4)\cong\Z_4$, but on the $E_\infty$-page we see $\theta\lambda_2$ and $v\theta$ in topological degree $5$, which are not linked by an $h_0$-action, we obtain a hidden extension: $2q_5 = v\ell_1$. Likewise, in degree $7$, Bruner-Greenlees (\textit{ibid.}) shows us that $\ko_*(B\Z_4)\cong\Z_{32}\oplus\Z_2$, which is only possible if $8\ell_7 = 2\widetilde\ell_7$: another hidden extension.

Next, extensions by $\eta$. On $\ell_1$ this follows from the $h_1$-action on the $E_\infty$-page; for the remaining classes we need to argue there are no hidden $\eta$-extensions. Since $\abs{\eta} = 1$ and $2\eta=0$, then if $\ko_{n+1}(B\Z_4)$ is torsion-free then $\eta\cdot\ko_n(B\Z_4) = 0$: $\eta$ times anything is torsion. Thus we deduce $\eta\ell_3 = 0$, $\eta q_5 = 0$, $\eta\ell_7 = 0$, and $\eta\widetilde\ell_7 = 0$.\footnote{By contrast, in the closely related spin-$\Z_8$ bordism, there are many hidden $\eta$-extensions: see~\cite[Footnote 50 and \S 13.4]{Heckman}.} The same argument works to show that $\ell_1$ times $\ell_1$, $\ell_3$, $q_5$, $\ell_7$, or $\widetilde\ell_7$ is $0$, as $\abs{\ell_1} = 1$ and $\ell_1$ is torsion. Replacing $\ko_{n+1}$ with $\ko_{n+3}$, the same argument also applies to products of $\ell_3$ with $\ell_3$, $q_5$, $\ell_7$, and $\widetilde\ell_7$.

For extensions by $v$, we already have $v\ell_1 = 2q_5$, and $v\ell_3 = 2\widetilde\ell_7$ comes from the relation $v\theta\kappa_1 = h_0\theta\mu_1$ on the $E_2$-page described in \cref{Z4_Adams_ring}. All other products of $v$ with generators, and all products of $w$, $q_5$, $\ell_7$, or $\widetilde\ell_7$ with generators that were not already addressed, are in degrees too high to be relevant.
\end{proof}

\subsection{Generators}\label{sec:BZ4gens}
The generators of $\ko_*(B\Z_4)$ have been previously considered in \cite[\S 12.7]{Heckman}. Before we state the result, we recall the notation of the lens space $L^n_k$ introduced in \cref{sec:Z3gens}. Furthermore, we introduce $Q^{2k-1}_n$:

\begin{defn}[{Botvinnik-Gilkey-Stolz~\cite[\S 4]{BGS97}}]
\label{Q5_defn}
Let $V$ denote the complex vector bundle $\mathcal O(2)\oplus \underline\C^k\to\CP^1$ and $Q_n^{2k-1}$ denote the quotient of $S(V)$ by the $\Z_n$-action which acts diagonally by the $n^{\mathrm{th}}$ roots of unity on each fiber.
\end{defn}
Thus $Q_n^{2k-1}$ is a closed manifold equipped with a canonical orientation, namely that arising as the quotient of $S(V)$ by an orientation-preserving $\Z_n$-action. The canonical orientation on $S(V)$ is induced from that on $V$ coming from its complex structure, since $TV|_{S(V)}$ and $TS(V)$ are stably equivalent (see, e.g., \cite[(14.84)]{Heckman}). For all $n$ and $k$, $Q_n^{2k-1}$ is spin~\cite[\S 5]{BGS97}.
\begin{prop}[{\!\cite[\S 12.7]{Heckman}}]
\label{Z4_mult_gens}
The following manifolds represent the generators of $\ko_*(B\Z_4)$ that we identified in \cref{ringstr_Z4}.
\begin{enumerate}
    \item For $n = 1,3$, $L_4^n$ with $\Z_4$-bundle $S^n\to L_4^n$ and either of its two spin structures represents $\ell_n$.
    \item $Q_4^5$ with $\Z_4$-bundle $S(V)\to Q_4^5$ and either of its two spin structures represents $q_5$.
    \item Let $\mathfrak s_\kappa$, $\kappa = 0,2$ denote the two spin structures on $L_4^7$ as classified by $\kappa$ in~\cite[\S C.1]{Heckman}. Then the isomorphism in \cref{ringstr_Z4} can be chosen so that $(L_4^7, \mathfrak s_0)$ represents $\ell_7$ and $7[(L_4^7, \mathfrak s_0)] - 5[(L_4^7, \mathfrak s_2)]$ represents $\widetilde\ell_7$. In all cases we use the $\Z_4$-bundle $S^7\to L_4^7$.
\end{enumerate}
\end{prop}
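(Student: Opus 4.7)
The plan is to verify each representative by computing the image of its bordism class on the $E_\infty$-page of the Adams spectral sequence described in \cref{Z4_Adams_ring,ringstr_Z4}. The key input is the detection statement in \cref{detector}: for classes on the line $s=0$, one can check that a given manifold represents the desired generator by pairing its classifying map with an explicit mod $2$ cohomology class of $B\Z_4$. Combined with the identification in \cref{Z4_Adams_ring} that $\theta$ is detected by $x\in H^1(B\Z_4;\Z_2)$, $\kappa_i$ by $y^{2i-1}$, and $\theta\kappa_i$ by $xy^{2i-1}$, this handles the ``easy'' generators $\ell_1$ and $\ell_3$: one simply pulls back $x$ and $xy$ along the classifying map $L_4^n\to B\Z_4$ and evaluates on the fundamental class. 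Because $H^*(L_4^n;\Z_2)\cong H^*(B\Z_4;\Z_2)/(y^{(n+1)/2})$ is generated in this range exactly by $x,xy,\dots$, the pairing is nonzero, which is enough by \cref{detector} to conclude that the image lies in filtration zero and equals the correct generator modulo higher filtration (of which there is none in these degrees).

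For $q_5$, which lies in degree $5$ and is identified in \cref{ringstr_Z4} with the filtration-zero lift of $\theta\lambda_1$, the plan is to exploit multiplicativity together with the hidden $2$-extension $v\ell_1 = 2q_5$. Specifically, $Q_4^5$ is constructed from an $S^3$-bundle over $\CP^1$ and carries a natural relation $[Q_4^5]\cdot[\text{K3}]$-type computation: since $v$ is represented by K3 and $\ell_1$ by $L_4^1$, the bordism identity forcing $2[Q_4^5] = [\text{K3}]\cdot[L_4^1]$ in $\ko_5(B\Z_4)$ together with the fact that $Q_4^5$ generates $\ko_5(B\Z_4)/\text{torsion of order $\le 2$}$ pins down the image on $E_\infty$ up to the indicated extension. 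This step is essentially the verification performed in~\cite[\S 12.7]{Heckman}, and we will quote it.

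The main obstacle is $\widetilde\ell_7$, which is identified with $\theta\mu_1$ on $E_\infty$, a class in filtration $s=2$ that is \emph{not} detected by mod $2$ cohomology (the last sentence of \cref{Z4_Adams_ring}). Mod $2$ pairing can therefore at most distinguish the multiples of $\ell_7$ (filtration zero) from a genuine lift of $\theta\mu_1$, and we must detect the remaining $\Z_2$ information by another invariant. The natural tool is the Dirac $\eta$-invariant with $\Z_4$-coefficients: combining the two spin structures on $L_4^7$ as in the statement, the integral linear combination $7[(L_4^7,\mathfrak s_0)]-5[(L_4^7,\mathfrak s_2)]$ is arranged so that its reductions under the two spin-structure Atiyah-Bott-Shapiro maps match the $v$-extension $v\ell_3 = 2\widetilde\ell_7$ coming from the $\ExZ$-relation $v\kappa_1 = h_0\mu_1$. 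I would verify this by computing the $\eta$-invariants directly (as is done in~\cite[\S 12.7]{Heckman}), checking that the listed combination lies in the kernel of the order-$8$ quotient map onto the $\ell_7$-summand and generates the complementary $\Z_2$-factor. With these four verifications in hand, the additive computation of $\ko_*(B\Z_4)$ from Bruner-Greenlees~\cite[Example 7.3.3]{Bruner} forces the listed manifolds to be a full set of representatives.
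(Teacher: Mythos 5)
The paper itself does not prove this proposition: the statement carries the citation $\text{\cite[\S 12.7]{Heckman}}$, and the remark following it defers the abelian-group generation by lens spaces and lens-space bundles to Botvinnik--Gilkey--Stolz~\cite{BGS97} and the identification of the specific generators to the same reference. There is therefore no in-paper argument to compare against, and your sketch is an outline of what the cited reference actually does --- mod $2$ cohomology detection via \cref{detector} for the filtration-zero classes, and $\eta$-invariants for the rest. That outline is essentially correct.

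Two details are misstated and worth fixing if you expand the sketch. First, you describe $q_5$ as ``the filtration-zero lift of $\theta\lambda_1$,'' but $\theta\lambda_1\in\Ext^{1,6}$ has Adams filtration $s=1$, not $0$ (\cref{Z4_Adams_ring}); it is precisely for this reason that the edge homomorphism of \cref{detector} cannot see $q_5$, forcing you to fall back on the hidden $2$-extension $v\ell_1=2q_5$ or on $\eta$-invariants. Stated with ``filtration zero,'' the step reads as if the mod $2$ pairing alone could handle $q_5$, which it cannot. Second, you attribute the extension $v\ell_3 = 2\widetilde\ell_7$ to the $\ExZ$-relation $v\kappa_1 = h_0\mu_1$. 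While that relation does hold on the $E_2$-page, both $\kappa_1$ and $\mu_1$ support nonzero $d_2$ differentials (\cref{ko_Z4_diffs}) and do not survive to $E_\infty$; the relation that actually governs the hidden extension is its $\theta$-multiple $v\theta\kappa_1 = h_0\theta\mu_1$, as noted in the proof of \cref{ringstr_Z4}. Neither slip affects the overall correctness of your plan, but the filtration confusion in particular obscures exactly which generators genuinely require information beyond the mod $2$ edge map.
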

The fact that lens spaces and lens space bundles of the sort in \cref{Q5_defn} suffice to generate $\ko_{2k-1}(B\Z_4)$ as an abelian group is due to Botvinnik-Gilkey-Stolz~\cite[\S 5]{BGS97}; the specific generators we use were worked out in~\cite[\S 12.3]{Heckman}.

\section{\texorpdfstring{Calculations and generators for $BS_4$ and $B\SL(3, \F_2)$}{Calculations and generators for BS4 and BSL(3, F2)}}
In this section, we compute the ($2$-local) spin bordism groups of $BS_4$ and $B\SL(3, \F_2)$ in degrees $7$ and below, and determine a set of generating manifolds for these bordism groups.\footnote{We do not consider the stable summand $B\Z_2$ of $BS_4$, because our procedure in \cref{some_assembly_required} to product $B\SL(2, \Z)\times B\SL(3, \Z)$ bordism classes discards the spin bordism of that summand.}

These bordism groups have been studied in~\cite{Bayen, Debray, Gri21}, and in fact Bayen~\cite{Bayen} computes all of the bordism groups we need. However, like in previous sections, we repeat the computation because we will make use of the $\Omega_*^\Spin(B\Z_4)$-module structure on $\Omega_*^\Spin(BS_4)$ in \S\ref{2loc_smash}.

The generating manifolds, however, are new. We employ a few different methods to find generators: in \S\ref{Z4_to_S4_gens} we study the map $B\Z_4\to BS_4\to B\SL(3, \F_2)$ on spin bordism, so that we can use the generators of $\widetilde\Omega_*^\Spin(B\Z_4)$ from \cref{Z4_mult_gens}. We then do the same with $\Z_4$ replaced with $\Z_2\times\Z_2$. This leaves two classes unresolved; we wrap them up in \S\ref{sec:X7gen} and \S\ref{sec:W4} by pulling back to $D_8\subset S_4$, then studying relatively simple bundles over previously found generators, similar to the strategies used in~\cite[\S\S14.3.3--14.3.7]{Heckman}, and which go back at least as far as Dold~\cite{Dol56}.

\subsection{\texorpdfstring{Adams spectral sequences for $\ko_*(B\SL(3,\F_2))$ and $\ko_*(L(2))$}{Adams spectral sequences for ko(BSL(3,F2)) and ko(L(2))}}
\label{sec:BPSL}

In this subsection, we describe all classes and differentials in the Adams spectral sequences computing $\ko_*(B\SL(3,\F_2))$ and $\ko_*(L(2))$ in degrees 8 and below. Consider first the case of $\ko_*(B\SL(3,\F_2))$ To begin, we recall previous results from Handel and Mitchell-Priddy.
\begin{thm}[{Handel, \cite[Theorem 5.5]{Han93}}]
    \begin{equation}
    \begin{gathered}
        H^*(BD_8;\Z_2) = \Z_2[x_1,x_2,w]/(x_2^2+x_1x_2),\\
        |x_1|=|x_2| = 1 ,\quad |w| = 2
        \end{gathered}
    \end{equation}
Moreover, the Steenrod algebra action on $H^*(BD_8;\Z_2)$ is:
\begin{equation}
    \begin{gathered}
        \Sq(x_1) = x_1 + x_1^2, \quad \Sq(x_2) = x_2 + x_2^2,\quad \Sq(w) = w + x_1 w + w^2 
    \end{gathered}
\end{equation}
\end{thm}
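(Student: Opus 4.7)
The plan is to compute $H^*(BD_8;\Z_2)$ via the Lyndon–Hochschild–Serre spectral sequence associated to the central extension
\begin{equation}
1 \longrightarrow Z(D_8) \longrightarrow D_8 \longrightarrow D_8/Z(D_8) \longrightarrow 1,
\end{equation}
with $Z(D_8) \cong \Z_2$ and $D_8/Z(D_8) \cong \Z_2\times\Z_2$. Since the action on the center is trivial, the $E_2$-page is $E_2 = \Z_2[a,b,t]$ with $|a|=|b|=|t|=1$, where $a,b$ are inflated from the quotient and $t$ generates fiber cohomology.

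First I would determine the only possible nontrivial differential on generators, namely $d_2(t) = \chi \in H^2(\Z_2\times\Z_2;\Z_2)$, which equals the extension class. A direct cocycle computation on the standard presentation $D_8 = \langle r,s \mid r^4 = s^2 = (rs)^2 = 1\rangle$ — or equivalently restriction to each of the two Klein-four subgroups and tracking which center element is hit — identifies $\chi$ with the quadratic $x_2^2 + x_1 x_2$ for an appropriate choice of basis $a = x_1$, $b = x_2$ of $H^1$. The Leibniz rule then gives $d_2(t^{2k}) = 0$ and $d_2(t^{2k+1}) = t^{2k}\chi$, so that on the $E_3$-page only even powers of $t$ survive, paired freely with $\Z_2[a,b]/(\chi)$. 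Setting $w = t^2$, this yields the claimed ring
\begin{equation}
\Z_2[x_1,x_2,w]/(x_2^2 + x_1 x_2)
\end{equation}
on $E_3$. No further differentials are possible for degree reasons, and multiplicative extensions can be ruled out by comparing the resulting Poincaré series with an independent count of the groups $H^n(BD_8;\Z_2)$, for instance via restriction to the Klein-four detecting family.

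For the Steenrod action, naturality applied to the inflation $B(\Z_2\times\Z_2) \to BD_8$ immediately yields $\Sq(x_i) = x_i + x_i^2$, since $x_1, x_2$ are pulled back from $H^1$ of the Abelianization. For $w$, the identities $\Sq^0(w) = w$ and $\Sq^2(w) = w^2$ hold by definition of $\Sq$ in degree two, so the content of the formula is $\Sq^1(w) = x_1 w$. I would compute this by identifying $\Sq^1$ on degree-two classes with the mod-$2$ Bockstein $\beta$ associated to $0 \to \Z_2 \to \Z_4 \to \Z_2 \to 0$, and then evaluating $\beta(w)$ either by using the known integral cohomology of $D_8$ (where the $\Z_2$ summand in $H^3$ detects this Bockstein) or, more concretely, by restricting to the maximal cyclic subgroup $\Z_4 \subset D_8$ and to one of the Klein-four subgroups simultaneously, since the joint restriction is injective on $H^3$ and the images of $x_1 w$ and $\beta(w)$ can be matched there explicitly.

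The main obstacle is fixing the extension class $\chi$ as the specific quadratic $x_2^2 + x_1 x_2$ rather than some other element of $H^2(\Z_2\times\Z_2;\Z_2)$: this is a basis-dependent statement, so one must carefully identify which character of $D_8^{\mathrm{ab}}$ is labeled $x_1$ versus $x_2$ in Handel's convention and then perform the cocycle-level calculation. Once this identification is pinned down, everything else is a routine application of the Leibniz rule, naturality of Steenrod operations, and a small Bockstein computation.
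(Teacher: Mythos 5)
The paper does not prove this result; it is quoted verbatim from Handel~\cite{Han93}, so there is no in-paper argument to compare yours against. Your Lyndon--Hochschild--Serre approach through the central extension $1\to\Z_2\to D_8\to\Z_2\times\Z_2\to 1$ is a standard and correct way to derive the statement, and you identify the one genuinely delicate step: pinning down the extension class $\chi$ in the chosen basis. Restricting to the three order-two subgroups of the quotient $\Z_2\times\Z_2$ and recording which elements lift to order $4$ in $D_8$ does single out $\chi = a(a+b)$, which becomes $x_2^2 + x_1 x_2$ under the identification $a = x_2$, $b = x_1$.

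Two points need tightening. First, it is not a ``degree reason'' that the spectral sequence collapses at $E_3$: the target $E_3^{3,0}$ of $d_3$ on $w = t^2$ is a two-dimensional group, so there is room. The differential vanishes either by Kudo's transgression theorem (since $d_3(t^2) = \Sq^1\chi = x_1 x_2(x_1+x_2)$, which lies in the ideal $(\chi)$ and hence dies in $E_3^{3,0}$) or, as you propose, by the Poincaré series count, which works out to $1/(1-t)^2$ on both sides; but one of these must actually be invoked. Second, for $\Sq^1 w = x_1 w$, the restriction to $\Z_4\subset D_8$ is uninformative: $x_1$ and $\Sq^1 w = \beta w$ both restrict to zero there (since $x_1$ dies and $\beta v = 0$ in $H^*(B\Z_4;\Z_2)$), so the real content must be extracted from the two nonconjugate Klein-four subgroups, on which the joint restriction is injective. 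You should also be careful that $w$ is only determined by the spectral sequence up to lower-filtration classes, so you must fix a specific representative (Handel does this via Stiefel--Whitney classes of a representation) before the restriction computation is well-posed. With these repairs the sketch is sound.
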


\begin{thm}[{Mitchell-Priddy, \cite[Theorem 2.6]{Priddy}}] \label{Thm:SteenrodBPSL27} 
    \begin{equation}
    \begin{gathered}
        H^*(B\SL(3,\F_2);\Z_2)= \Z_2[\nu_2,\nu_3,\overline{\nu}_3]/(\overline{\nu}_3^2 +\overline\nu_3\nu_3), \\
        |\nu_2| = 2,\quad |\nu_3| = |\overline{\nu}_3| = 3.
        \end{gathered}
    \end{equation}
The inclusion $j: D_8\rightarrow \SL(3,\F_2)$ induces a monomorphism $j^*:H^*(B\SL(3,\F_2);\Z_2)\rightarrow H^*(BD_8;\Z_2)$ given as follows:
\begin{equation}
    \begin{gathered}
        j^*(\nu_2) = x_1^2+w, \quad j^*(\nu_3) = x_1 w, \quad j^*(\overline{\nu}_3) = x_2 w.
    \end{gathered}
\end{equation}
Moreover, the Steenrod algebra action on $H^*(B\SL(3,\F_2);\Z_2)$ is 
\begin{equation}
    \begin{gathered}
        \Sq(\nu_2) = \nu_2 + \nu_3 + \nu_2^2,\quad \Sq(\nu_3) = \nu_3 + \nu_2\nu_3 + \nu_3^2,\quad \Sq(\overline \nu_3) = \overline\nu_3 + \nu_2\overline \nu_3 + \overline \nu_3^2
    \end{gathered}
\end{equation}
\end{thm}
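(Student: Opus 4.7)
The strategy is to exploit that $D_8$ is a Sylow $2$-subgroup of $\SL(3,\F_2)$ (indeed $|\SL(3,\F_2)|=168=8\cdot 21$, with $D_8$ realized as upper-triangular matrices or any conjugate thereof). By the Cartan--Eilenberg stable element theorem, mod $2$ cohomology satisfies
\begin{equation}
H^*(B\SL(3,\F_2);\Z_2)\;\cong\;\bigl\{\,\alpha\in H^*(BD_8;\Z_2)\ :\ \text{$\alpha$ is stable}\,\bigr\},
\end{equation}
where $\alpha$ is stable iff, for every $g\in\SL(3,\F_2)$, one has $\mathrm{res}^{D_8}_{D_8\cap gD_8g^{-1}}(\alpha)=\mathrm{res}^{gD_8g^{-1}}_{D_8\cap gD_8g^{-1}}(c_g^*\alpha)$. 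So the first step is to make $j\colon D_8\hookrightarrow\SL(3,\F_2)$ explicit (e.g.\ via the stabilizer of a flag in $\F_2^3$) and to work out the corresponding restriction of $H^*(BD_8;\Z_2)=\Z_2[x_1,x_2,w]/(x_2^2+x_1x_2)$ under the intended identifications.

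The plan next is to verify directly that the three candidate classes
\begin{equation}
\nu_2=x_1^2+w,\qquad \nu_3=x_1 w,\qquad \overline{\nu}_3=x_2 w
\end{equation}
are stable elements: this reduces to checking invariance under the normalizer $N_{\SL(3,\F_2)}(D_8)/D_8$ (acting as a Weyl group on $x_1,x_2,w$) and then checking the double coset conditions for a set of representatives of $D_8\backslash\SL(3,\F_2)/D_8$. There are only finitely many double cosets, and the relevant intersections $D_8\cap gD_8g^{-1}$ have index $\leq 2$ in $D_8$, so each condition amounts to a short explicit calculation. Once stability is established, the relation $\overline{\nu}_3^{\,2}+\overline{\nu}_3\nu_3=0$ is immediate from $x_2^2+x_1x_2=0$ in $H^*(BD_8;\Z_2)$, and surjectivity onto all stable elements follows by a degreewise dimension count: the Poincar\'e series of the stable subring is determined by $|\SL(3,\F_2)|$ and the orders of centralizers via the standard transfer/character formula, and one checks this matches the Poincar\'e series of $\Z_2[\nu_2,\nu_3,\overline{\nu}_3]/(\overline{\nu}_3^{\,2}+\overline{\nu}_3\nu_3)$.

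Finally, since $j^*$ is injective, the Steenrod operations on $\nu_2,\nu_3,\overline{\nu}_3$ can be read off from those on $H^*(BD_8;\Z_2)$ by applying the Cartan formula and naturality: for instance,
\begin{equation}
\Sq(\nu_2)=\Sq(x_1^2)+\Sq(w)=(x_1+x_1^2)^2+w+x_1 w+w^2,
\end{equation}
and one verifies this equals $\nu_2+\nu_3+\nu_2^2=(x_1^2+w)+x_1 w+(x_1^2+w)^2$ in $H^*(BD_8;\Z_2)$ using the relation $x_2^2=x_1x_2$; the computations for $\Sq(\nu_3)$ and $\Sq(\overline{\nu}_3)$ are analogous. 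The principal technical obstacle is the stability verification for $\nu_3$ and $\overline{\nu}_3$, since these involve the class $w$ and hence are sensitive to the full non-trivial double coset structure; once those conditions are checked against a concrete list of representatives, the rest of the theorem assembles from the injectivity of $j^*$ and the Cartan formula.
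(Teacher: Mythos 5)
Your proposal for the Steenrod squares is the same as the paper's: the paper explicitly remarks (just after the theorem) that the operations ``can be worked out from the Steenrod squares in $H^*(BD_8;\Z_2)$ and the fact that $j^*$ is injective,'' and your Cartan-formula computation of $\Sq(\nu_2)$, $\Sq(\nu_3)$, $\Sq(\overline\nu_3)$ via $j^*$ is exactly this and is correct (the only subtlety is using $x_2^2 = x_1 x_2$ when simplifying $\Sq(x_2 w)$, which you do).

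Where you diverge is that the paper does not prove the ring structure or the injectivity of $j^*$ at all --- it cites both to Mitchell--Priddy~\cite[Theorem 2.6]{Priddy} --- whereas you sketch a from-scratch proof via the Cartan--Eilenberg stable-element theorem. This is a perfectly standard and valid route; in effect you are outlining Mitchell--Priddy's own argument rather than a step of the present paper. Two remarks on the sketch: first, since $|\SL(3,\F_2)| = 168$ is simple and has $21$ Sylow $2$-subgroups, $D_8$ is self-normalizing, so the ``Weyl group'' $N_{\SL(3,\F_2)}(D_8)/D_8$ is trivial and that portion of the stability check is vacuous --- all the work is in the non-normalizer double cosets, which you invoke but do not enumerate. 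Second, the Poincar\'e-series comparison you propose (matching the series of the stable subring against $\frac{1+t^3}{(1-t^2)(1-t^3)}$) is where the genuine content lies, and it is nontrivial; it is fine to leave it as a plan, but as written this is the real gap separating a sketch from a proof. The relation $\overline\nu_3^{\,2}=\nu_3\overline\nu_3$ and the injectivity of $j^*$ (via the transfer, since $[\SL(3,\F_2):D_8]=21$ is odd) are both handled correctly.
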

The Steenrod squares are not given in \textit{loc.\ cit.}\ but can be worked out from the Steenrod squares in $H^*(BD_8;\Z_2)$ and the fact that $j^*$ is injective.

The first step in running the Adams spectral sequence for $\ko_*(B\SL(3, \F_2))$ is to determine the $\cA(1)$-module structure on $H^*(B\SL(3, \F_2);\Z_2)$. For this we must recall a few commonly occurring $\cA(1)$-modules.
\begin{defn}
\label{naming_A1}
For an $\cA(1)$-module $M$, $\Sigma^k M$ refers to the same underlying module with $\Z$-grading increased by $k$.
\begin{enumerate}
    \item $\Z_2$ is an $\cA(1)$-module in which $\Sq^1$ and $\Sq^2$ act trivially.
    \item Let $J\coloneqq\cA(1)/(\Sq^3)$. This module was named the \term{Joker} by Adams.
    \item The \term{upside-down question mark} is the $\cA(1)$-module $Q\coloneqq \cA(1)/(\Sq^1, \Sq^2\Sq^3)$.
    \item There is a unique nonzero $\cA(1)$-module map $\Sigma^{-1}\cA(1)\to\Sigma^{-1}\Z_2$; its kernel is denoted $R_2$, and called the \term{elephant} by Buchanan-McKean~\cite[Figure 1]{BM23}.
\end{enumerate}
\end{defn}

A straightforward calculation using \cref{Thm:SteenrodBPSL27} leads to the following.
\begin{cor}
\label{cor:A(1)decompPSL}
There is an $\cA(1)$-module isomorphism
    \begin{equation}\label{eq:CohoPSL(2,7)}
        \widetilde H^*(B\SL(3,\F_2);\Z_2)\cong \textcolor{BrickRed}{\Sigma^2 J} \oplus \textcolor{Green}{\Sigma^3 Q} \oplus {\Sigma^6 \cA(1)} \oplus \textcolor{MidnightBlue}{\Sigma^7 R_2} \oplus \textcolor{Fuchsia}{\Sigma^8 \Z_2} %\oplus %\textcolor{MidnightBlue}{\Sigma^{10}J}
        \oplus P,
    \end{equation}
    where $P$ is concentrated in degrees 10 and above.
\end{cor}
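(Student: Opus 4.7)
The proof is essentially a direct computation in the $\cA(1)$-module structure on $\tH^*(B\SL(3,\F_2);\Z_2)$ using the presentation and Steenrod squares supplied by \cref{Thm:SteenrodBPSL27}. The plan is to enumerate the additive $\Z_2$-basis through degree $9$ and then carve off $\cA(1)$-submodules matching the standard shapes named in \cref{naming_A1}. Concretely, in each degree $k \le 9$ the reduced cohomology has a basis consisting of monomials $\nu_2^a\nu_3^b\overline{\nu}_3^c$ with $c\in\{0,1\}$ (using the defining relation $\overline{\nu}_3^2 = \nu_3\overline{\nu}_3$) and $2a+3b+3c=k$. Iterated application of the Cartan formula to the Steenrod squares on $\nu_2,\nu_3,\overline{\nu}_3$ supplied by Mitchell-Priddy then produces $\Sq^1$ and $\Sq^2$ on every such monomial; after this bookkeeping the problem becomes purely combinatorial.

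Next, I would identify the generators of each summand and verify that they cut out the claimed $\cA(1)$-submodules. The Joker summand $\Sigma^2 J$ should be generated by $\nu_2$: one checks that $\Sq^1\nu_2 = \nu_3$, $\Sq^2\nu_2=\nu_2^2$, and that the class $\Sq^2\Sq^1\nu_2 = \Sq^2\nu_3 = \nu_2\nu_3+\nu_3^2$ gives the top cell, while the Joker relation $\Sq^3=\Sq^1\Sq^2$ annihilates the module since $\Sq^1(\nu_2^2)=0\bmod 2$. The class $\overline{\nu}_3$ generates $\Sigma^3 Q$: using $\Sq(\overline{\nu}_3)=\overline{\nu}_3+\nu_2\overline{\nu}_3+\overline{\nu}_3^2$ one finds $\Sq^1\overline{\nu}_3=0$ and the $\Sq^2,\Sq^2\Sq^3$ pattern of $Q$. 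The free summand $\Sigma^6\cA(1)$ is generated by a suitable degree-$6$ class such as $\nu_2^3$, after correcting by lower terms so that all seven operations in $\cA(1)$ act independently. The $\Sigma^7 R_2$ summand is generated by a degree-$7$ class (for instance a modification of $\nu_2^2\overline{\nu}_3$), with the two-cell $R_2$-relations verified from the computed Steenrod actions. Finally, a single degree-$8$ class on which both $\Sq^1$ and $\Sq^2$ vanish produces the $\Sigma^8\Z_2$ summand.

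The closing step is a dimension count in each degree $k \le 9$: the sum of the $k$-th dimensions of the named summands must equal the number of monomial basis elements computed at the start. Any discrepancy in degree $\ge 10$ is collected into the residual module $P$ and plays no role in the subsequent Adams-spectral-sequence calculation below. I expect the main obstacle to be not the computation itself but rather choosing the generators of the $\Sigma^7 R_2$ and $\Sigma^8\Z_2$ summands so that they are \emph{genuinely} $\cA(1)$-linearly disjoint from the previous summands: naively picking monomials may produce classes that differ from a true direct-summand generator by $\cA(1)$-multiples of $\nu_2$, $\overline{\nu}_3$, or $\nu_2^3$, and one must add correction terms to produce honest summand generators. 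Once those choices are made, verification of the splitting reduces to a finite and mechanical check that could in principle be confirmed against Bruner's Ext software.
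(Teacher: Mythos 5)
Your approach is precisely the one the paper takes — the paper's "proof" is the one-line statement ``a straightforward calculation using \cref{Thm:SteenrodBPSL27}'', and your plan of enumerating the monomial basis, computing Steenrod squares by Cartan, and matching against the named $\cA(1)$-module shapes is that calculation spelled out. Your instinct about needing correction terms for the $\Sigma^7 R_2$ and $\Sigma^8\Z_2$ generators is also exactly right.

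However, there are two small errors in your worked example for the Joker summand $\Sigma^2 J$. First, from $\Sq(\nu_3)=\nu_3+\nu_2\nu_3+\nu_3^2$ in \cref{Thm:SteenrodBPSL27} one reads off $\Sq^2\nu_3=\nu_2\nu_3$ (not $\nu_2\nu_3+\nu_3^2$); the term $\nu_3^2$ is $\Sq^3\nu_3$. Second, $\Sq^2\Sq^1\nu_2 = \Sq^2\nu_3 = \nu_2\nu_3$ lives in degree 5, but the top cell of $\Sigma^2 J$ sits in degree 6 (since $J$ has cells in degrees 0 through 4). That top cell is $\Sq^2\Sq^2\nu_2 = \Sq^2(\nu_2^2) = (\Sq^1\nu_2)^2 = \nu_3^2$, equivalently $\Sq^1\Sq^2\Sq^1\nu_2$. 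Neither slip affects the viability of the overall argument, but they would throw off the dimension count in degrees 5 and 6 if carried through, so fix them before running the final bookkeeping.
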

We summarize \eqref{eq:CohoPSL(2,7)} in \cref{CohoBPSL27}, left.

\begin{figure}[h!]
\centering
\begin{subfigure}[c]{0.42\textwidth}
\begin{tikzpicture}[scale=0.6, every node/.style = {font=\tiny}]
        \node[white] at (-2, 1) {$1$};
	\foreach \y in {2, ..., 12} {
		\node at (-2, \y) {$\y$};
	}
	\begin{scope}[BrickRed]
		\Joker{0}{2}{$\nu_2$}{};
	\end{scope}
	\begin{scope}[Green]
		\SpanishQnMark{2}{3}{$\overline{\nu}_3$}{isosceles triangle};
	\end{scope}
	\begin{scope}
		\Aone{3.5}{6}{$\nu_2^3$}{regular polygon,regular polygon sides=5};
	\end{scope}
	\tikzptB{7.5}{8}{$\nu_2^4$}{Fuchsia, regular polygon,regular polygon sides=3};
	\begin{scope}[MidnightBlue]
		\Rtwo{6}{7}{$\nu_2^2\overline{\nu}_3$}{$\nu_2\overline{\nu}_3^2$}{star};
	\end{scope}
	%\begin{scope}[MidnightBlue]
	%	\Joker{13}{10}{$\nu_2^5$}{rectangle, minimum size=3.5pt};
	%\end{scope}
\end{tikzpicture}
\end{subfigure}
\begin{subfigure}[c]{0.48\textwidth}
\begin{tikzpicture}[scale=0.6, every node/.style = {font=\tiny}]
	\foreach \y in {1, ..., 12} {
		\node at (-2, \y) {$\y$};
	}
        \begin{scope}[gray]
            \foreach \y in {1, 3, ..., 11} {
                \sqone(0, \y);
            }
            \foreach \y in {2, 6, 10} {
                \sqtwoL(0, \y);
            }
            \begin{scope}
                \clip (-1, 2) rectangle (3, 12.5);
                \foreach \y in {3, 7, 11} {
                 \sqtwoR(0, \y);
             }
            \end{scope}
            \emptytikzpt{0}{1}{$a$}{};
            \emptytikzpt{0}{3}{$a^3$}{};
            \emptytikzpt{0}{7}{$a^7$}{};
            \emptytikzpt{0}{11}{$a^{11}$}{};
            \foreach \y in {2, ..., 12} {
                \emptytikzpt{0}{\y}{}{};
            }
        \end{scope}
	\begin{scope}[BrickRed]
		\Joker{2}{2}{$b$}{};
	\end{scope}
	\begin{scope}[Green]
		\SpanishQnMark{4}{3}{$c$}{isosceles triangle};
	\end{scope}
        \begin{scope}[RedOrange]
            \Aonerect{5.5}{4}{$a^2b$}{};
        \end{scope}
	\begin{scope}
		\Aone{8.25}{6}{$b^3$}{regular polygon,regular polygon sides=5};
	\end{scope}
	\tikzptB{12}{8}{$b^4$}{Fuchsia, regular polygon,regular polygon sides=3};
	\begin{scope}[MidnightBlue]
		\Rtwo{10.5}{7}{$b^2c$}{$bc^2$}{star};
	\end{scope}
\end{tikzpicture}
\end{subfigure}
\caption{\label{CohoBPSL27}Left: the $\cA(1)$-module structure on $\widetilde H^*(B\SL(3, \F_2);\Z_2)$ in low degrees, as we calculated in \cref{cor:A(1)decompPSL}. This picture includes all classes in degrees $9$ and below.
Right: the same but for $S_4$ in place of $\SL(3, \F_2)$, realizing the effect on cohomology of the Mitchell-Priddy mod $2$ stable splitting~\cite[Theorem B]{Priddy} of $BS_4$ into $B\Z_2$ (gray/unshaded dots), $L(2)$ (orange/square dots), and $B\SL(3, \F_2)$ (everything else). See \cref{S4_splitting}.
}
\end{figure}

Now we need to compute $\Ext$ of the summands appearing in~\eqref{eq:CohoPSL(2,7)}, except for $P$, which lies in too high of a degree to be relevant for our computations in degrees $7$ and below. Ext commutes with direct sums so we may focus on the modules we named in \cref{naming_A1}. By definition, $\Ext(\Z_2)\cong\ExZ$ as $\ExZ$-modules, and Ext of a rank-one free $\cA(1)$-module is a trivial $\ExZ$-module consisting of a single $\Z_2$ summand concentrated in bidegree $(0, 0)$ with trivial $\ExZ$-action. The other three modules in \cref{naming_A1} have more interesting Ext groups.
\begin{prop}
There are isomorphisms of $\ExZ$-modules as follows.
\begin{subequations}
\begin{align}
    \Ext(J) &\cong\ExZ\set{a, b, c}/(h_0a, h_1a, va, wa + h_1^2c, h_1b, vb + h_0^2c, vc + wb),\\
    \intertext{where $a\in\Ext^{0,0}$, $b\in\Ext^{1, 3}$, and $c\in\Ext^{2,8}$,}
    \Ext(Q) &\cong\ExZ\set{d, e}/(h_1d, vd + h_0^2e, ve + wd),\\
    \intertext{with $d\in\Ext^{0,0}$ and $e\in\Ext^{1,5}$, and}
    \Ext(R_2) &\cong \ExZ\set{g, h, j, k}/(h_1g, h_0h, h_1^2h, vg + h_0k, vh, wg+h_0k, wh+h_1k),
\end{align}
with $g\in\Ext^{0,0}$, $h\in\Ext^{0,1}$, $j\in\Ext^{2,6}$, and $k\in\Ext^{3,11}$.
\end{subequations}
\end{prop}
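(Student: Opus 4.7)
The plan is to compute each of $\Ext_{\cA(1)}(J)$, $\Ext_{\cA(1)}(Q)$, and $\Ext_{\cA(1)}(R_2)$ by constructing explicit small free $\cA(1)$-resolutions of the respective modules and reading off both the additive and the $\ExZ$-module structure by comparing with a minimal resolution of $\Z_2$. The two basic inputs are Liulevicius' computation $\Ext_{\cA(1)}(\Z_2, \Z_2) \cong \ExZ$ and the vanishing of $\Ext^{s,*}_{\cA(1)}(\cA(1), \Z_2)$ for $s > 0$, both of which are standard.

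For $\Ext(R_2)$, the shortest route is to apply $\Ext_{\cA(1)}(-, \Z_2)$ to the defining short exact sequence $0 \to R_2 \to \Sigma^{-1}\cA(1) \to \Sigma^{-1}\Z_2 \to 0$. Since the middle term has vanishing $\Ext$ in positive filtration, the long exact sequence degenerates into an isomorphism $\Ext^{s,t}(R_2) \cong \Ext^{s-1, t-1}(\Z_2)$ for $s \geq 2$, while $\Ext^{0,*}(R_2) = \Hom_{\cA(1)}(R_2, \Z_2)$ is computed directly from the $\cA(1)$-cell diagram of $R_2$ given by Buchanan--McKean. This yields the four generators $g, h, j, k$ in the stated bidegrees and most of the $h_0$- and $h_1$-relations at once.

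For $\Ext(J)$ and $\Ext(Q)$ the underlying modules have only five and six $\Z_2$-dimensions respectively, so the first three or four stages of a minimal $\cA(1)$-resolution --- enough to reach the degrees used later in the paper --- can be written down by hand. At each stage I would identify the kernel of the previous differential, pick a minimal set of $\cA(1)$-generators each annihilated by a specific Steenrod operation, and record the duals of these generators as the classes $a, b, c$ (for $J$) and $d, e$ (for $Q$). The additive answer then follows by bookkeeping the resulting bigraded vector space.

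The main obstacle will be pinning down the hidden multiplicative relations, namely $wa + h_1^2 c$, $vb + h_0^2 c$, and $vc + wb$ in $\Ext(J)$; $vd + h_0^2 e$ and $ve + wd$ in $\Ext(Q)$; and $vg + h_0 k$, $wg + h_0 k$, $wh + h_1 k$ in $\Ext(R_2)$. These are invisible to the additive computation. To establish them I would either interpret each generator as a Massey product in the cobar complex and compute the products on the nose, or more efficiently compare under $\cA(1)$-module maps with larger modules where the $v$- and $w$-actions are already catalogued, e.g.\ in Bruner's charts \cite{Bruner}. In practice the target bidegree of each hidden product typically contains at most one candidate class, so once the $h_0$-action is fixed by filtration the remaining extensions can be identified by elimination.
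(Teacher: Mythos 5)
The paper does not actually prove this proposition: it is established by citation (Adams--Priddy for the $\langle h_0, h_1\rangle$-actions, Bruner--Rognes~\cite{BR21} for $\Ext(Q)$, Bruner--Greenlees~\cite{Bruner} for $\Ext(R_2)$) together with a remark that $\Ext(J)$ follows from a long exact sequence in $\Ext$ applied to a well-chosen short exact sequence of $\cA(1)$-modules. Your proposal to actually compute these is therefore a genuinely different route, and it is workable, but there are concrete slips and one structural misunderstanding that would need repair.

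For $\Ext(R_2)$: your index shift is wrong. From $0 \to R_2 \to \Sigma^{-1}\cA(1) \to \Sigma^{-1}\Z_2 \to 0$, the connecting map gives $\Ext^{s,t}(R_2) \cong \Ext^{s+1,\,t+1}(\Z_2)$, not $\Ext^{s-1,\,t-1}(\Z_2)$, and this isomorphism holds for all $s \geq 0$ (the $s = 0$ case works because $\Hom_{\cA(1)}(\Sigma^{-1}\Z_2,\Z_2)\to\Hom_{\cA(1)}(\Sigma^{-1}\cA(1),\Z_2)$ is already an isomorphism). Check a generator: $g\in\Ext^{0,0}(R_2)$ must correspond to $h_0\in\Ext^{1,1}$, not to something in $\Ext^{-1,-1}$. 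More importantly, because the connecting homomorphism is $\ExZ$-linear and here is an isomorphism, the full $\ExZ$-module structure on $\Ext(R_2)$ -- including the $v$- and $w$-relations -- falls out of the long exact sequence for free. You should not treat $vg+h_0j$, $wg+h_0k$, $wh+h_1k$ as ``hidden extensions'' to be resolved by a separate Massey-product argument; they are literal images of the relations $vh_0=h_0 v$, $wh_0 = h_0 w$, $wh_1=h_1 w$ in $\ExZ$ under the connecting isomorphism. (While you are at it, note that the relation as printed, $vg+h_0k$, is degree-inhomogeneous: $vg$ sits in $(3,7)$ and $h_0 k$ in $(4,12)$. It should be $vg+h_0 j$, which does lie in $(3,7)$; the LES makes this transparent because $j$ corresponds to $v$.)

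For $\Ext(J)$ and $\Ext(Q)$ the hidden $v$- and $w$-extensions are the actual content of the proposition, and your plan there is where the gap lies. First, a factual error: the upside-down question mark $Q=\cA(1)/(\Sq^1,\Sq^2\Sq^3)$ has three $\Z_2$-cells (degrees $0$, $2$, $3$), not six; this matters because it tells you immediately that $\Hom_{\cA(1)}(Q,\Z_2)$ is one-dimensional, so $e\in\Ext^{1,5}$ is not visible at filtration zero and must come from a resolution or a connecting map. Second, ``interpret each generator as a Massey product'' or ``compare with larger modules where the actions are catalogued'' is not a proof, and for $J$ there is no published catalogue (the paper explicitly says so). The concrete way to close the gap, and the one the paper gestures at, is to run the same long-exact-sequence trick you used for $R_2$: fit $J$ (respectively $Q$) into a short exact sequence of $\cA(1)$-modules whose other two terms have known $\ExZ$-module $\Ext$ (free modules, $\Z_2$, $C\eta$, or $Q$ itself), and read off the $v$- and $w$-actions from $\ExZ$-linearity of the connecting map. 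Without doing that, your argument only produces the additive groups and the easy $h_0$- and $h_1$-actions, which is strictly less than what the proposition asserts.
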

See~\cite[Figure 29]{Beaudry} for pictures of these $\ExZ$-modules.
\begin{rem}
These $\ExZ$-modules are usually only presented as modules over the subalgebra $\ang{h_0,h_1}\subset\ExZ$; in this form, $\Ext(J)$ and $\Ext(R_2)$ were first described by Adams-Priddy~\cite[\S 3]{AP76}.\footnote{As just a bigraded vector space, $\Ext(J)$ was computed earlier, by Wilson~\cite[Figure 2]{Wil73}.} The complete $\ExZ$-module structure on $\Ext(Q)$ is given by Bruner-Rognes~\cite[Example 2.32]{BR21}, and that of $\Ext(R_2)$ is given by Bruner-Greenlees~\cite[Figure A.4.4]{Bruner}. We do not know of a reference for the complete $\ExZ$-module structure on $\Ext(J)$, but it is straightforward to work out using the long exact sequence of Ext groups associated to a short exact sequence of $\cA(1)$-modules; see Beaudry-Campbell~\cite[\S 4.6]{Beaudry} for examples of this technique.
\end{rem}

Putting these together, we obtain the second page of the Adams spectral sequence computing $\widetilde{\ko}_*(B\SL(3,\F_2))$.
\begin{cor}\label{PSL_E2_gens}
In the range $t-s\le 8$, the $E_2$-page is generated as an $\Ext_{\cA(1)}(\Z_2)$-module by nine classes subject to some relations:
\begin{itemize}
    \item The submodule $\Ext(\textcolor{BrickRed}{\Sigma^2 J})$ has generators $a\in\Ext^{0,2}$, $b\in\Ext^{1,5}$, and $c\in\Ext^{2,10}$ and relations $h_0a = 0$, $h_1a = 0$, $va = 0$, $wa = h_1^2c$, $h_1b = 0$, $vb = h_0^2c$, and $vc = wb$.
    \item The submodule $\Ext(\textcolor{Green}{\Sigma^3Q})$ 
    has generators $d\in\Ext^{0,3}$ and $e\in\Ext^{1,8}$ subject to the relations $h_1d = 0$, $vd = h_0^2e$, and $ve = wd$.
    \item The submodule $\Ext(\Sigma^6\cA(1))$ has generator $f\in\Ext^{0,6}$ with $h_0f = 0$, $h_1f = 0$, $vf = 0$, and $wf = 0$.
    \item The submodule $\Ext(\textcolor{MidnightBlue}{\Sigma^7 R_2})$ has generators $g\in\Ext^{0,7}$ and $h\in\Ext^{0,8}$ with relations $h_1g = 0$, $h_0h = 0$, and $h_1^2h = 0$ (as well as additional generators and relations in topological degree greater than $10$).
    \item The submodule $\Ext(\textcolor{Fuchsia}{\Sigma^8\Z_2})$ has generator $i\in\Ext^{0,8}$ with no relations.
\end{itemize}
The following classes are detected in mod $2$ cohomology: $a$ by $\nu_2$, $d$ by $\overline\nu_3$, $f$ by $\nu_2^3$, $g$ by $\nu_2^2\overline\nu_3$, $h$ by $\nu_2\overline\nu{}_3^2$, and $i$ by $\nu_2^4$. The classes $b$, $c$, and $e$ are not detected by mod $2$ cohomology.
\end{cor}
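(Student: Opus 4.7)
The plan is to simply assemble the pieces that the preceding lemmas already give us. Starting from the $\cA(1)$-module decomposition
\begin{equation*}
\widetilde H^*(B\SL(3,\F_2);\Z_2)\cong \Sigma^2 J \oplus \Sigma^3 Q \oplus \Sigma^6\cA(1) \oplus \Sigma^7 R_2 \oplus \Sigma^8\Z_2 \oplus P
\end{equation*}
from \cref{cor:A(1)decompPSL}, I would invoke the fact that $\Ext_{\cA(1)}(\bl,\Z_2)$ is additive, so that the $E_2$-page decomposes as a direct sum of $\ExZ$-modules, one for each summand. Since $P$ is concentrated in internal degree $\ge 10$, all of its Ext classes lie in topological degree $t-s\ge 10$, and so can be discarded in the range of interest.

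Next, I would apply the already-computed $\ExZ$-module descriptions of $\Ext(J)$, $\Ext(Q)$, $\Ext(\cA(1))$, $\Ext(R_2)$, and $\Ext(\Z_2)$ from the propositions stated just before the corollary. Suspension in the source $\cA(1)$-module shifts the internal grading of Ext accordingly, so one only needs to add the shifts $2,3,6,7,8$ to the internal degrees of the listed generators to obtain the stated bidegrees $a\in\Ext^{0,2}$, $b\in\Ext^{1,5}$, $c\in\Ext^{2,10}$, $d\in\Ext^{0,3}$, $e\in\Ext^{1,8}$, $f\in\Ext^{0,6}$, $g\in\Ext^{0,7}$, $h\in\Ext^{0,8}$, and $i\in\Ext^{0,8}$. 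The relations transport verbatim from the individual Ext modules. The only truncation to worry about is within $\Ext(\Sigma^7 R_2)$: that summand carries additional generators besides $g$ and $h$, but the next one ($j\in\Ext^{2,6}$ of $R_2$ lifted to $\Ext^{2,13}$) appears in topological degree $11$, above the range $t-s\le 8$.

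For the detection statements, I would apply the edge-homomorphism description in \cref{detector}: a class in $E_2^{0,t}$ corresponds to an $\cA(1)$-linear functional $H^t(B\SL(3,\F_2);\Z_2)\to\Z_2$, and a class is ``detected by $c$'' precisely when this functional sends $c$ to $1$. Inspecting the summand-by-summand picture drawn in \cref{CohoBPSL27} left, the bottom classes of $\Sigma^2 J$, $\Sigma^3 Q$, $\Sigma^6\cA(1)$, $\Sigma^7 R_2$, and $\Sigma^8\Z_2$ correspond, respectively, to the cohomology classes $\nu_2$, $\overline\nu_3$, $\nu_2^3$, $\nu_2^2\overline\nu_3$, and $\nu_2^4$, and the second generator $h$ of $\Ext(\Sigma^7 R_2)$ sits on top of $\nu_2\overline\nu{}_3^2$. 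The remaining generators $b$, $c$, and $e$ lie in filtration $s\ge 1$, so they are \emph{a fortiori} not in the image of the edge map and therefore not detected by mod $2$ cohomology.

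The main obstacle is really only the bookkeeping of shifted degrees and the verification that no additional generators of the summands of $R_2$ or $\cA(1)$ sneak into the range $t-s\le 8$; once that is confirmed, the $\ExZ$-module structure and the detection table follow directly by reading off the inputs from \cref{Thm:SteenrodBPSL27,cor:A(1)decompPSL}.
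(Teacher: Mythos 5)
Your argument reproduces the paper's own (largely implicit) proof: decompose $\widetilde H^*(B\SL(3,\F_2);\Z_2)$ as in \cref{cor:A(1)decompPSL}, use additivity of $\Ext_{\cA(1)}(\bl,\Z_2)$ over direct sums, read off the $\ExZ$-module structure of each summand from the preceding computations of $\Ext(J)$, $\Ext(Q)$, $\Ext(\cA(1))$, $\Ext(R_2)$, $\Ext(\Z_2)$, shift internal degrees by the suspensions $2,3,6,7,8$, and truncate to $t-s\le 8$. Your degree bookkeeping for the out-of-range generators (the $P$ summand and the higher generators $j,k$ of $\Sigma^7 R_2$) is correct, and your use of the edge homomorphism to justify the detection table, plus the filtration-positivity argument for non-detection of $b$, $c$, $e$, matches the paper's intended reasoning.
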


The convergence, though easily determined using the fact that differentials in the Adams spectral sequence are equivariant for the $\ExZ$-action, was determined by Bayen in \cite{Bayen}. Differentials are determined by their values on the generators, which are $d_2(b) = h_0^3d$, $d_2(c) = h_0^3e$,  and $d_2(i) = h_0^2g$; all other differentials on the nine generators vanish. We summarize in Figure \ref{AdamsPSL(2,7)}. 

\begin{figure}[H]
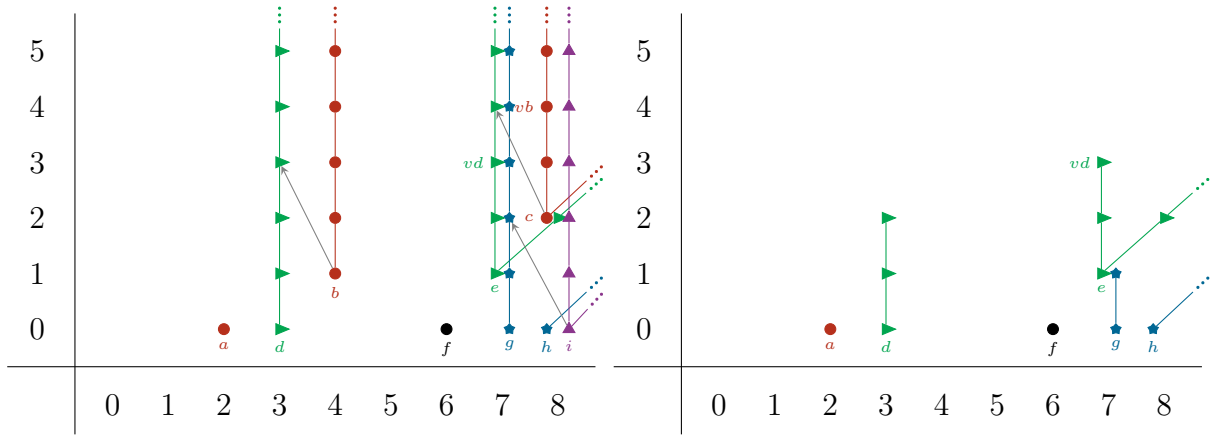

\centering
\begin{subfigure}[c]{0.48\textwidth}
\begin{sseqdata}[name=BPSL2E2, classes=fill, xrange={0}{8}, yrange={0}{5}, 
%class labels = {below = 0.05em, font=\tiny},
scale=0.74, Adams grading, >=stealth,
class labels = {below = 0.05em, font=\tiny}]
% fake classes to make the columns align better
\begin{scope}[draw=none, fill=none]
    \class(7, 0)
    \class(8, 1)\class(8, 1)
\end{scope}
\begin{scope}[BrickRed]
	\class(2, 0)
	\class(4, 1)\AdamsTower{}
	\class(8, 2)\AdamsTower{}
	\class(9, 3)
		\structline(8, 2)(9, 3, -1)
	\class(10, 4)\structline
    \classoptions["a"](2, 0)
    \classoptions["b"](4, 1)
    \classoptions[class labels = {left = 0.05em}, "c"](8, 2)
    \classoptions[class labels = {left = 0.05em}, "vb"](8, 4)
    \classoptions["wa"](10, 4)
\end{scope}
\begin{scope}[Green, isosceles triangle]
	\class(3, 0)\AdamsTower{}
	\class(7, 1)\AdamsTower{}
	\class(8, 2)
		\structline(7, 1)(8, 2,-1)
	\class(9, 3)\structline
 \classoptions["d"](3, 0)
 \classoptions["e"](7, 1, -1)
 \classoptions[class labels = {left = 0.05em}, "vd"](7, 3, -1)
\end{scope}
\begin{scope}
	\class(6, 0)
  \classoptions["f"](6, 0)
\end{scope}
\begin{scope}[MidnightBlue, star]
	\class(7, 0)\AdamsTower{}
	\class(8, 0)
	\class(9, 1)\structline
 \classoptions["g"](7, 0, -1)
 \classoptions["h"](8, 0, -1)
\end{scope}
\begin{scope}[draw=none, fill=none]
    \foreach \y in {0, 3, 4, 5, 6} {
        \class(8, \y)
    }
\end{scope}
\begin{scope}[Fuchsia, regular polygon, regular polygon sides=3, minimum width=1ex]
	\class(8, 0)\AdamsTower{}
	\class(9, 1)
		\structline(8, 0, -1)(9, 1, -1)
	\class(10, 2)\structline
 \classoptions["i"](8, 0, -1)
\end{scope}
\begin{scope}[MidnightBlue, rectangle]
	\class(10, 0)
  \classoptions["j"](10, 0)
\end{scope}
\begin{scope}[draw=none, fill=none]
	\class(3, 7)
	\class(7, 7)\class(7, 7)
\end{scope}
\begin{scope}[gray]
    \d2(4, 1)
    \d2(8, 2, 1)(7, 4, 1)
    \d2(8, 0, 3)(7, 2, 2)
\end{scope}
\begin{scope}[draw=none]
\foreach \y in {1, ..., 5} {
    \d2(8, \y, 3)(7, \y+2, 2)
}
\foreach \y in {2, ...,5} {
    \d2(4, \y)

}
\foreach \y in {3, ..., 5} {
    \d2(8, \y, 1)(7, \y+2, 1)
}
\end{scope}
\d3(10,0)(9,3,2)
\end{sseqdata}
\printpage[name=BPSL2E2, page=2]
\end{subfigure}
%\\\vspace{0.3cm}
\begin{subfigure}[c]{0.48\textwidth}
	\printpage[name=BPSL2E2, page=3]
\end{subfigure}
\caption{\label{AdamsPSL(2,7)}The Adams spectral sequence computing $\widetilde{\ko}_*(B\SL(3,\F_2))$. Left: the $E_2$ page. Right: the $E_4 = E_\infty$ page, up to degree 8.}
\label{the_E2_page}
\label{ThomZ4Einfty}
\end{figure}
We are interested in (the $2$-completion of) $\widetilde{\ko}_*(B\SL(3,\F_2))$ up to degree 7, which amounts to resolving extension questions on the $E_\infty$-page. This has been done by Bayen. 
\begin{prop}[{Bayen~\cite[Theorem 3.6.1]{Bayen}}]\label{algebra_koPSL}
\hfill
\begin{enumerate}
\item
There is an isomorphism of $\Z$-graded $\ko_*$-modules
\begin{subequations}
\begin{equation}\label{PSL_gens}
    \widetilde{\ko}_*(B\SL(3,\F_2)) \otimes \hat\Z_2 \cong (\ko_* \otimes \hat\Z_2)_*\set{\alpha, \delta, \epsilon, \zeta, \dotsc}/\mathcal R_3 \oplus \Sigma^6\Z_2\cdot \phi,
\end{equation}
where $\abs\alpha = 2$, $\abs\delta = 3$, $\abs\phi = 6$, and $\abs\epsilon = \abs\zeta = 7$. The ideal $\mathcal R_3$ of relations is
\begin{equation}\label{PSL_relns}
    \mathcal R_3 = (2\alpha, \eta\alpha, v\alpha, 8\delta, \eta\delta, 8\epsilon, 4\epsilon - v\delta, 4\zeta - 2\epsilon, \dotsc)
\end{equation}
\end{subequations}
and in both~\eqref{PSL_gens} and~\eqref{PSL_relns}, all unlisted classes are in degrees $8$ and above.
\item The above isomorphism can be chosen so that the images of the generators in the $E_\infty$-page of the Adams spectral sequence are: $\alpha\mapsto a$, $\delta\mapsto d$, $\phi\mapsto f$, $\epsilon\mapsto e$, and $\zeta\mapsto g$.
\end{enumerate}
\end{prop}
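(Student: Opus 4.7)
The plan is to run the Adams spectral sequence whose $E_2$-page was identified in \cref{PSL_E2_gens}, establish all differentials in the range $t-s \leq 7$, read off the $E_\infty$-page, and then resolve hidden extensions by $2$, $\eta$, and $v$. The $\hat\Z_2$-completion on the right-hand side of~\eqref{PSL_gens} is built into the convergence statement of \cref{its_Adams}; for finitely generated abelian groups, $2$-completion faithfully records all torsion at the prime $2$, which is all we need for the relations listed in $\mathcal R_3$.

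First I would pin down the differentials. By the $\ExZ$-linearity of Adams differentials (\cref{E_converge}) and a bigrading check, the only candidate nonzero $d_r$'s on the nine generators $a,b,c,d,e,f,g,h,i$ of \cref{PSL_E2_gens} in the relevant range are $d_2(b) = h_0^3 d$, $d_2(c) = h_0^3 e$, and $d_2(i) = h_0^2 g$; all higher differentials $d_r$ with $r\geq 3$ vanish in the range of interest by further bidegree inspection. To show these $d_2$'s are nonzero I would argue by comparison with the Mitchell-Priddy stable splitting of $BS_4$ from \cref{S4_splitting}: the analogous differentials on the Adams spectral sequence for $\ko_*(BS_4)$ have been computed by Bayen~\cite{Bayen}, and the projection onto the $B\SL(3,\F_2)$-summand transports them to exactly those listed. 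This yields the $E_\infty$-page displayed on the right of \cref{ThomZ4Einfty}.

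Next I would lift the surviving tower generators $a, d, f, e, g$ to bordism classes $\alpha, \delta, \phi, \epsilon, \zeta$ and read off the $\ko_*$-module structure. The relations $h_0 a = h_1 a = v a = 0$ on $E_2$ survive with no room for a hidden extension target, giving $2\alpha = \eta\alpha = v\alpha = 0$; in particular $\alpha$ generates a $\Z_2$ summand. Similarly $8\delta = 0$ and $\eta\delta = 0$ follow from the truncation of the $d$-tower by $d_2(b) = h_0^3 d$ together with $h_1 d = 0$, and $\eta\phi = 0$ follows from $h_1 f = 0$. The $E_2$-relation $v d = h_0^2 e$ survives all differentials and gives the \emph{visible} $v$-extension $v\delta = 4\epsilon$, since $h_0^2 e$ sits at filtration $3$ in total degree $7$ with no higher-filtration summand to perturb the identification. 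Hidden $\eta$-extensions from $\epsilon$, $\zeta$ are ruled out by bigrading inspection and the fact that $\eta$ times a torsion class is $\eta$-torsion, while the relevant targets in topological degree $8$ are either zero or $2$-divisible and cannot receive an $\eta$-multiple.

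The main obstacle is the hidden $h_0$-extension in total degree $7$ giving $4\zeta = 2\epsilon$. The $E_\infty$-page assembles five $\Z_2$ composition factors in degree $7$: $e$, $h_0 e$, $h_0^2 e$ from the truncated $e$-tower (filtrations $1,2,3$), and $g$, $h_0 g$ from the truncated $g$-tower (filtrations $0,1$). The differential $d_2(i) = h_0^2 g$ removes $h_0^2 g$, so $4\zeta$ has no representative in filtration $2$ coming from the $g$-tower and must be detected in the $e$-tower. The only possibilities consistent with the additive count of five $\Z_2$'s and with $8\epsilon = 0$ are $\widetilde\ko_7(B\SL(3,\F_2))\otimes\hat\Z_2 \cong \Z_{16}\oplus\Z_2$ with $4\zeta = 2\epsilon$, as opposed to a $\Z_{32}$ (ruled out since there is no filtration-$4$ class) or a split $\Z_8\oplus\Z_4$ (ruled out by the composition factor count). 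I would secure the direction of this hidden extension by comparison with an independent computation, e.g. the Atiyah-Hirzebruch spectral sequence or transfer along $BD_8 \to B\SL(3,\F_2)$, which fixes the group up to isomorphism and hence pins down $4\zeta = 2\epsilon$ precisely. Finally, the identification of $\alpha, \delta, \phi, \zeta$ as lifts of $a, d, f, g$ is immediate from the edge homomorphism of \cref{detector}, since these classes are detected in mod $2$ cohomology by $\nu_2$, $\overline\nu_3$, $\nu_2^3$, and $\nu_2^2\overline\nu_3$ respectively; the class $\epsilon$ sits in Adams filtration $1$ and so is not detected by mod $2$ cohomology, but is characterized (up to the two-torsion generated by $\zeta$) by the relation $v\delta = 4\epsilon$.
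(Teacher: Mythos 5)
Your overall plan---running the Adams spectral sequence over $\cA(1)$, establishing the differentials by $\ExZ$-linearity and the stable splitting of $BS_4$, and then resolving extensions---matches the route the paper follows, which itself delegates the final step to Bayen~\cite[Theorem 3.6.1]{Bayen}. The reading of the $E_\infty$-page in degrees $\le 7$, the visible $v$-extension $v\delta = 4\epsilon$, the truncation of the $d$- and $e$-towers, and the identification of $\alpha,\delta,\phi,\zeta$ via \cref{detector} are all handled correctly.

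There is, however, a genuine gap in the resolution of the extension $4\zeta - 2\epsilon$. You assert that "$4\zeta$ \ldots must be detected in the $e$-tower," but nothing forces this: a priori $4\zeta$ could simply vanish, or land in higher filtration without meeting $2\epsilon$. More seriously, the claim that "a split $\Z_8\oplus\Z_4$ [is] ruled out by the composition factor count" is false: $\Z_8\oplus\Z_4$ and $\Z_{16}\oplus\Z_2$ both have order $32$ and hence the \emph{same} five $\Z_2$ composition factors, so the count cannot distinguish them. Both groups are compatible with the associated graded you have written down, and the extension that selects $\Z_{16}\oplus\Z_2$ is exactly the datum that requires input beyond the $E_\infty$-page shape. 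Your closing sentence---deferring to "an independent computation"---is the right instinct but is not a proof; the paper handles this honestly by citing Bayen. If you want to avoid a bare citation, the cleaner route is to pin down whether $d_2(i)$ carries a lower-filtration correction term (i.e.\ whether it actually hits $h_0^2 g$ or $h_0^2 g + h_0 e$): if it does, the extension becomes \emph{visible} already on $E_3$; this can be checked by pushing the corresponding $B\Z_4$ differential $d_2(\lambda_2)$ forward under $\Phi_*$ using \cref{image_Adams_lem}, provided one has the precise (not just leading-term) statement of the May--Milgram differential.

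A smaller point: $\eta\phi = 0$ does not "follow from $h_1 f = 0$," since $h_1 f = 0$ only excludes a \emph{visible} $\eta$-extension, and there are nonzero classes at positive filtration in topological degree $7$ that could in principle absorb a hidden one. The correct argument is Margolis' theorem: $f$ is detected by a generator of the free $\Sigma^6\cA(1)$-summand in \cref{cor:A(1)decompPSL}, so $\phi$ cannot support any hidden extension by an element of $\pi_*(\Sph)$, in particular by $\eta$; this is the same mechanism the paper invokes in the proof of \cref{the_Z4_module_str}.
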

In particular,
\begin{equation}\label{BayenDeg7}
\begin{alignedat}{2}
    \widetilde{\ko}_2(B\SL(3,\F_2))_2^\wedge &\cong \Z_2 \cdot\textcolor{BrickRed}{\alpha}, \qquad\qquad
    &\widetilde{\ko}_3(B\SL(3,\F_2))_2^\wedge &\cong \Z_8\cdot\textcolor{Green}{\delta},\\
    \widetilde{\ko}_6(B\SL(3,\F_2))_2^\wedge &\cong \Z_2\cdot \phi, &\widetilde{\ko}_7(B\SL(3,\F_2))_2^\wedge &\cong \Z_2\cdot(\textcolor{Green}{\epsilon} - 2\textcolor{MidnightBlue}{\zeta})
        \oplus \Z_{16}\cdot\textcolor{MidnightBlue}{\zeta}.
    \end{alignedat}
\end{equation}

$L(2)$, in fact, is very easy in the degrees we are interested in; for an explicit description we need to use the cohomology of $BS_4$.
\begin{thm}[{Nakaoka~\cite[Theorem 4.1]{Nak62}, Madsen-Milgram~\cite[Example 3.31]{MM79}}] \label{Thm:CohoS4}
    \begin{equation}
    \begin{gathered}
        H^*(BS_4;\Z_2) \cong \Z_2[a,b,c]/(ac),\\
        |a| =1,\quad |b|=2,\quad |c| = 3
    \end{gathered}
    \end{equation}
    The Steenrod squares of the generators are 
    \begin{equation}
        \begin{gathered}
            \Sq(a) = a+a^2,\quad \Sq(b) = b + ab + c+ b^2,\quad \Sq(c) = c + bc + c^2.
        \end{gathered}
    \end{equation}
\end{thm}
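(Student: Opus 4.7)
The plan is to prove this classical computation by combining representation theory of $S_4$ with restriction to the Sylow 2-subgroup $D_8 \subset S_4$ (the stabilizer of the partition $\{\{1,2\},\{3,4\}\}$).

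First I would identify the generators. The sign character $\sigma\colon S_4 \to \Z_2$ pulls back the generator of $H^1(B\Z_2;\Z_2)$ to give $a \in H^1(BS_4;\Z_2)$. Let $V$ be the 3-dimensional standard irreducible representation of $S_4$, realized as the quotient of the permutation representation on $\{1,2,3,4\}$ by the diagonal. Since $\det V = \sigma$, one has $w_1(V) = a$, and I would set $b \coloneqq w_2(V)$ and $c \coloneqq w_3(V)$. The Steenrod squares then follow directly from the Wu formula applied to $V$, using $w_k(V) = 0$ for $k > 3$: explicit evaluation of $\Sq^i(w_j) = \sum_{t=0}^{i} \binom{j+t-i-1}{t}\, w_{i-t}\, w_{j+t}$ modulo $2$ yields $\Sq^1(b) = ab + c$, $\Sq^2(b) = b^2$, $\Sq^1(c) = ac$, $\Sq^2(c) = bc$, and $\Sq^3(c) = c^2$, matching the formulas in the statement once the relation $ac = 0$ is established (and incidentally making $\Sq^1 \Sq^1 = 0$ on $b$ consistent).

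For the ring presentation, I would exploit that $[S_4 : D_8] = 3$ is odd, so restriction $\mathrm{res}\colon H^*(BS_4;\Z_2) \to H^*(BD_8;\Z_2)$ is injective by the Cartan--Eilenberg stable-element theorem. Using Handel's presentation $H^*(BD_8;\Z_2) = \Z_2[x_1,x_2,w]/(x_2^2 + x_1 x_2)$ recalled in the excerpt, I would compute $\mathrm{res}(a), \mathrm{res}(b), \mathrm{res}(c)$ by decomposing $V|_{D_8}$ via character theory (a short calculation gives $V|_{D_8} \cong W \oplus \chi$, where $W$ is the unique 2-dimensional irreducible representation of $D_8$ and $\chi$ is an explicit 1-dimensional character) and using multiplicativity of the total Stiefel--Whitney class. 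This computation shows $\mathrm{res}(a)\cdot\mathrm{res}(c) = 0$ in $H^*(BD_8;\Z_2)$, whence $ac = 0$ in $H^*(BS_4;\Z_2)$ by injectivity, and the algebraic independence of $\mathrm{res}(a), \mathrm{res}(b), \mathrm{res}(c)$ modulo this relation is then checked by a degree-by-degree polynomial argument inside Handel's presentation.

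The main obstacle is surjectivity of the resulting map $\Z_2[a,b,c]/(ac) \to H^*(BS_4;\Z_2)$. The cleanest route is to compare Poincar\'{e} series via the Mitchell--Priddy 2-local stable splitting $BS_4 \simeq_{(2)} B\Z_2 \vee L(2) \vee B\SL(3,\F_2)$ recalled in \cref{S4_splitting}: the Poincar\'{e} series of $H^*(BS_4;\Z_2)$ equals the sum of those of the three summands, computable from \cref{Thm:CohoZ4,Thm:SteenrodBPSL27,cor:A(1)decompPSL} together with the known $\cA(1)$-module structure of $\widetilde H^*(L(2);\Z_2)$ visible in the right-hand diagram of \cref{CohoBPSL27}. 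This series matches that of $\Z_2[a,b,c]/(ac)$, so the injective ring homomorphism must in fact be an isomorphism. A more group-theoretic alternative would be to identify the image of $\mathrm{res}$ with the stable elements under the fusion system of $S_4$ on $D_8$, which reduces to analyzing the double cosets $D_8 \backslash S_4 / D_8$ and the resulting conjugation action on $H^*(BD_8;\Z_2)$.
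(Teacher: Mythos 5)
The paper does not prove this theorem; it is recalled as a classical result with pointers to Nakaoka (for the ring) and Madsen--Milgram (for the identification of $a$, $b$, $c$ as Stiefel--Whitney classes), followed by the one-line remark that the Steenrod squares then follow from the Wu formula. Your proposal is therefore a genuine proof attempt rather than a comparison with an in-paper argument, and its overall strategy --- Stiefel--Whitney classes of the $3$-dimensional standard representation $V$, Wu formula for the $\Sq$-action, and the Cartan--Eilenberg restriction to the Sylow $2$-subgroup $D_8$ --- is essentially the textbook route, matching what Madsen--Milgram do and consistent with the restriction formulas the paper records later in \cref{D8inS4}.

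The Wu-formula computation is correct (note you get $\Sq^1(c) = w_1 w_3 = ac$, which becomes $0$ only once the relation is established, and you flag this properly). The $V|_{D_8}$ decomposition $W\oplus\chi_1$ is right, and with Handel's presentation $\Z_2[x_1,x_2,w]/(x_2^2+x_1x_2)$ the restrictions $a\mapsto x_1+x_2$, $c\mapsto x_2w$ do multiply to $x_2(x_1+x_2)w = 0$; you should spell out the identification of which degree-one class is $x_2$ (it must be $w_1(\chi_1)$, forced by which class satisfies $x_2(x_1+x_2)=0$), since this is where the conventions could silently go wrong.

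The real gap is surjectivity, and neither of your two suggested routes is actually usable as stated. The Poincar\'e-series argument requires the Poincar\'e series of $\widetilde H^*(L(2);\Z_2)$ in \emph{all} degrees; the paper (\cref{A1_triv_L2}, \cref{CohoBPSL27}) only records the $\cA(1)$-module structure in low degrees, and the full answer (a Steinberg-module computation) is external input you would have to import --- at which point it is arguably cleaner to cite Nakaoka directly. There is also a mild circularity worry: the usual derivation of $H^*(L(2))$ already presumes knowing $H^*(BS_4)$, so one must be sure to use an independent computation (such as Mitchell--Priddy's via the Steinberg idempotent). The stable-elements alternative is sound in principle but is mentioned rather than carried out; you would need to enumerate the double cosets $D_8\backslash S_4/D_8$, identify the resulting fusion on subgroups of $D_8$, and verify that $\Z_2[x_1+x_2,\,x_1^2+w,\,x_2w]$ is exactly the ring of stable elements in $H^*(BD_8;\Z_2)$. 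Until one of those is executed, the argument shows $\Z_2[a,b,c]/(ac)$ injects into $H^*(BS_4;\Z_2)$ but not that it is all of it.
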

The cohomology ring is due to Nakaoka; Madsen-Milgram realized the generators as Stiefel-Whitney classes, so that the Steenrod squares given here follow from the Wu formula.
\begin{rem}
\label{Rem:PSL(2,7)andS4}
    Comparing \cref{Thm:SteenrodBPSL27,Thm:CohoS4} we see immediately that $\nu_2,\nu_3,\overline \nu_3 \in H^*(B\SL(3,\F_2);\Z_2)$ correspond to $b,ab+c,c\in H^*(BS_4;\Z_2)$, respectively. This is also visible in the classes named on the two sides of \cref{CohoBPSL27}.
\end{rem}

\begin{lem}[{Bayen~\cite[\S 3.5.3]{Bayen}}]
\label{A1_triv_L2}
\label{Prop:cohoL2}
$H^*(L(2);\Z_2)$ is a free $\cA(1)$-module on a countably infinite basis consisting of a class in degree $4$ and other classes in degrees greater than $7$.
\end{lem}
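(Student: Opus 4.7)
The plan is to leverage the 2-local stable splitting $BS_4 \simeq B\Z_2 \vee B\SL(3,\F_2) \vee L(2)$ from Proposition~\ref{S4_splitting} to extract $H^*(L(2);\Z_2)$ as an $\cA(1)$-module summand of $\widetilde H^*(BS_4;\Z_2)$. Stable equivalences induce $\cA$-module isomorphisms on mod~$2$ cohomology, so this splitting gives
\begin{equation}
\widetilde H^*(BS_4;\Z_2) \cong \widetilde H^*(B\Z_2;\Z_2) \oplus H^*(L(2);\Z_2) \oplus \widetilde H^*(B\SL(3,\F_2);\Z_2)
\end{equation}
as $\cA(1)$-modules.

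First I would pin down $H^*(L(2);\Z_2)$ in degrees $\le 7$ by a dimension count. Using Theorem~\ref{Thm:CohoS4}, the dimensions of $\widetilde H^n(BS_4;\Z_2)$ for $n = 1,\ldots,7$ are $1, 2, 3, 3, 4, 5, 5$. The $B\Z_2$ summand contributes one class in each positive degree, and the $B\SL(3,\F_2)$ summand, described in Corollary~\ref{cor:A(1)decompPSL}, has degree-wise dimensions $0, 1, 2, 1, 2, 3, 2$ in these degrees. Subtracting, $H^n(L(2);\Z_2)$ has dimensions $0, 0, 0, 1, 1, 1, 2$ for $n = 1, \ldots, 7$. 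Under the identification in Remark~\ref{Rem:PSL(2,7)andS4}, the unique nontrivial class in degree~$4$ is represented by $a^2 b$ (see also the square/orange dots in the right half of Figure~\ref{CohoBPSL27}).

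Second, I would check that $a^2 b$ generates a free $\cA(1)$-submodule of $H^*(L(2);\Z_2)$. Using the Steenrod squares from Theorem~\ref{Thm:CohoS4} and Cartan's formula, one computes $\Sq^I(a^2 b)$ for a basis of admissible monomials $\Sq^I \in \cA(1)$ and verifies linear independence modulo the other two summands. Since a free $\cA(1)$-module on a degree-$4$ generator has dimensions $1, 1, 1, 2, 1, 1, 1$ in degrees $4, 5, 6, 7, 8, 9, 10$, and the first four of these match exactly the dimensions of $H^*(L(2);\Z_2)$ computed above, this simultaneously establishes freeness of $\cA(1) \cdot (a^2 b)$ in degrees $\le 7$ and rules out additional generators in degrees $5, 6, 7$.

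Finally, to promote this to the global statement that \emph{all} of $H^*(L(2);\Z_2)$ is a free $\cA(1)$-module (with all remaining generators in degree $>7$), I would invoke Mitchell-Priddy's identification of $L(2)$ as the Steinberg summand of $\Sigma^\infty B(\Z_2)^2_+$ and their explicit description of $H^*(L(2);\Z_2)$, following Bayen~\cite[\S 3.5.3]{Bayen}. The main obstacle is precisely this last step: the low-degree portion is a concrete Steenrod-square calculation, but global freeness relies on structural input about Steinberg summands. For the applications in this paper, however, only the behavior in degrees $\le 7$ actually matters, so even in the absence of global freeness we would obtain enough to feed into the Adams spectral sequence computation.
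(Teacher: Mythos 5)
Your approach---realize $H^*(L(2);\Z_2)$ as a direct $\cA(1)$-module summand of $\widetilde H^*(BS_4;\Z_2)$ via the $2$-local stable splitting, count dimensions, and verify freeness by Steenrod squares---is sound in outline, and the dimension count $(0,0,0,1,1,1,2)$ in degrees $1,\dots,7$ is correct; note also that the paper itself states this lemma as a bare citation to Bayen, so the low-degree cross-check you propose is more than the paper provides. Two things to tighten. First, a dimension match does not by itself ``establish freeness of $\cA(1)\cdot(a^2b)$''; what does the work is the Steenrod-square verification, which you describe but do not carry out. A shortcut, used in Remark~\ref{Sq23} and in the proof of Proposition~\ref{complete_L2_decomp} following Freed--Hopkins~\cite[\S D.4]{FH21}, checks only the top operation: since $\cA(1)$ is a Poincar\'e-duality (Frobenius) algebra with socle generated by $\Sq^2\Sq^2\Sq^2$, nonvanishing of $\Sq^2\Sq^2\Sq^2(a^2b)$ modulo the $B\Z_2$ and $B\SL(3,\F_2)$ summands already forces $\cA(1)\cdot(a^2b)$ to be free, and self-injectivity of $\cA(1)$ makes that free submodule a direct summand. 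Concretely, using the Wu/Cartan data of Theorem~\ref{Thm:CohoS4} and the relation $ac=0$ one finds $\Sq^2(a^2b)=a^2b^2+a^4b$, $\Sq^2\Sq^2(a^2b)=a^4b^2$, and $\Sq^2\Sq^2\Sq^2(a^2b)=a^6b^2$, which is not in the degree-$10$ span of $a^{10}$, $b^5$, $a^2b^4+b^2c^2$, and $b^2c^2$ coming from the other two summands, so the test passes and your dimension count then rules out further generators in degrees $\le 7$. Second, as you acknowledge, the global statement that all remaining generators lie in degrees $>7$ genuinely requires the Mitchell--Priddy/Bayen structural results on $L(2)$; your proposal defers to the literature exactly where the paper does, so this is not a defect relative to the paper's treatment.
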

\begin{prop}
\label{complete_L2_decomp}
\hfill
\begin{enumerate}
    \item\label{from_Marg} $\Omega_k^\Spin(L(2))$ vanishes for $k\le 7$ except for $k = 4$, where it is isomorphic to $\Z_2$.
    \item\label{coh_S4_A1_free} Recall the map $\psi_4$ from \cref{S4_splitting}. The map
    \begin{equation}\label{detecting_L2_S4}
        \Omega_4^\Spin(BS_4)\overset{\psi_4}{\longrightarrow} \Omega_4^\Spin(L(2)) \overset\cong\longrightarrow \Z_2
    \end{equation}
    is the bordism invariant $\int a^2b$, where $a$ and $b$ are the classes introduced in \cref{Thm:CohoS4}. %\textcolor{red}{MD: Maybe refer to cohomology of $S_4$ below.}
    \item\label{smash_with_Z4L2} $\Omega_k^\Spin(B\Z_4\wedge L(2))$ is $0$ for $0\le k\le 4$ and $\Z_2$ for $5\le k\le 7$. Analogously to~\eqref{detecting_L2_S4}, these summands of $\Omega_*^\Spin(B\Z_4\wedge L(2))$ are detected by $\int xa^2b$, $\int ya^2b$, and $\int xya^2b$ in degrees $5$, $6$, and $7$ respectively.
\end{enumerate}
\end{prop}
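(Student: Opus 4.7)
All three parts can be run through the Adams spectral sequence of \cref{its_Adams}, with the ring/module structure playing a secondary role: the main input is the freeness of $H^*(L(2);\Z_2)$ as an $\cA(1)$-module, provided by \cref{A1_triv_L2}.

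For part~\ref{from_Marg}, I would feed the single $\cA(1)$-free generator of $H^*(L(2);\Z_2)$ in degree $4$ (the only basis element with $t \le 7$) into $\Ext_{\cA(1)}(H^*(L(2);\Z_2),\Z_2)$. Since $\Ext^{s,t}_{\cA(1)}(\cA(1),\Z_2)$ is concentrated at $(s,t)=(0,0)$, this yields a single $\Z_2$ on the line $s=0$ at $t-s=4$ and nothing else in the range $t-s \le 7$. There is no room for differentials or extension problems, so $\ko_k(L(2))\otimes\hat\Z_2$ is as claimed; the Anderson-Brown-Peterson isomorphism $\Omega_k^\Spin \cong \ko_k$ for $k \le 7$ then transfers the conclusion to spin bordism.

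For part~\ref{coh_S4_A1_free}, I would appeal to the edge homomorphism $\Upsilon$ of \cref{detector}: the surviving generator lives on the line $s=0$, so its bordism invariant is evaluation against the $\Z_2$-dual of the degree-$4$ $\cA(1)$-free generator of $H^*(L(2);\Z_2)$. The Mitchell-Priddy stable splitting (\cref{S4_splitting}), as depicted on the right of \cref{CohoBPSL27}, identifies the $\cA(1)$-free summand $\Sigma^4 \cA(1)$ of $H^*(BS_4;\Z_2)$ coming from $L(2)$ as the one generated by $a^2 b$. Hence $\psi_4^*$ sends the degree-$4$ generator of $H^*(L(2);\Z_2)$ to $a^2 b$, and naturality of $\Upsilon$ identifies the composition~\eqref{detecting_L2_S4} with the bordism invariant $[M,f]\mapsto \int_M f^*(a^2 b)$.

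For part~\ref{smash_with_Z4L2}, I would combine the Kunneth isomorphism $\widetilde H^*(B\Z_4\wedge L(2);\Z_2)\cong \widetilde H^*(B\Z_4;\Z_2)\otimes H^*(L(2);\Z_2)$ with the standard Hopf-algebra fact that tensoring any $\cA(1)$-module with a free one yields a free $\cA(1)$-module, whose $\cA(1)$-free basis is the product of a $\Z_2$-basis of the first factor and an $\cA(1)$-free basis of the second. Since $H^*(L(2);\Z_2)$ has exactly one $\cA(1)$-generator in degrees $\le 7$ (sitting in degree $4$) and $\widetilde H^{\le 3}(B\Z_4;\Z_2)$ has $\Z_2$-basis $\{x,y,xy\}$, this produces a free $\cA(1)$-module with one generator in each of degrees $5,6,7$ and none below; running the Adams spectral sequence as in part~\ref{from_Marg} then gives $\Omega^\Spin_k(B\Z_4\wedge L(2))=0$ for $k\le 4$ and $\Z_2$ for $k=5,6,7$. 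Invoking the edge homomorphism once more and pulling back under the inclusion of $B\Z_4\wedge L(2)$ as a stable summand of $B\Z_4\wedge BS_4$ identifies the detecting cohomology classes as $xa^2b$, $ya^2b$, $xya^2b$, respectively. The main subtlety will be in part~\ref{smash_with_Z4L2}, namely carefully citing the Hopf-algebra freeness lemma and verifying that the resulting $\cA(1)$-free generators match the explicit products $xa^2b$, $ya^2b$, $xya^2b$; the rest is bookkeeping.
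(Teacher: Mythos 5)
Your proof is correct and follows essentially the same route as the paper: all three parts hinge on the $\cA(1)$-freeness of $H^*(L(2);\Z_2)$ from \cref{A1_triv_L2}, with the tensor-product freeness lemma and the edge homomorphism $\Upsilon$ from \cref{detector} supplying the detection statements. The one small stylistic divergence is in part~\ref{from_Marg}: you run the $\cA(1)$-based Adams spectral sequence for $\ko_*(L(2))$ (noting the $E_2$-page is concentrated on the $s=0$ line and therefore collapses) and then invoke the Anderson--Brown--Peterson isomorphism to transfer to $\Omega_*^\Spin$, whereas the paper invokes Margolis' theorem and uses the Pengelley structure theorem on $H^*(\MTSpin;\Z_2)$ to apply it directly over $\MTSpin$, bypassing the ABP comparison. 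These are interchangeable given that we only need degrees $\le 7$, and both reduce to the observation that a free $\cA(1)$-summand in cohomology contributes only an $s=0$ line.
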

\begin{proof}
Part~\eqref{from_Marg} is a direct consequence of Margolis' theorem~\cite{Margolis}: free $\cA(1)$-module summands in cohomology lift to $H\Z_2$ summands after smashing with $\ko$. If $\MTSpin$ denotes the spectrum whose generalized homology theory is spin bordism, then $H^*(\MTSpin;\Z_2)$ is isomorphic to $\cA\otimes_{\cA(1)}M$ for some $\cA(1)$-module $M$~\cite[Theorem B]{Pen83}, so the change-of-rings trick (see for example~\cite[\S 4.5]{Beaudry}) used to apply Margolis' theorem to $\ko$-homology also works for spin bordism, and therefore Margolis' theorem applies in the same way over $\MTSpin$.

For part~\eqref{coh_S4_A1_free} , part~\eqref{from_Marg} and \cref{S4_splitting} imply that $H^*(BS_4;\Z_2)$ contains a $\Sigma^4\cA(1)$ summand corresponding to $L(2)$ as a summand of $\ko\wedge BS_4$, and that this summand can be chosen to be any $\Sigma^4\cA(1)$ summand which is not in the image on cohomology of the maps $\sigma\colon BS_4\to B\Z_2$ or $m\circ q\colon BS_4\to B\SL(3, \F_2)$. Then a quick computation of the $\cA(1)$-module structure on $\widetilde H^*(BS_4;\Z_2)$ in low degrees shows that $a^2b$ generates a free summand satisfying these criteria.

For part~\eqref{smash_with_Z4L2}, begin with the fact that for any $\cA(1)$-module $M$, $\cA(1)\otimes_{\Z_2} M$ is a free $\cA(1)$-module on a basis of the form $\set{\alpha\otimes \beta_i}$, where $\alpha$ is the generator of $\cA(1)$ and $\beta_i$ is a basis of $M$ as a $\Z_2$-vector space. Since $\widetilde H^*(B\Z_4;\Z_2)$ has a vector space basis $\set{x, y, xy, y^2, xy^2, y^3,\dots}$ (\cref{Thm:CohoZ4}), then $\widetilde H^*(B\Z_4\wedge L(2);\Z_2)$ in degrees $7$ and below is a free $\cA(1)$-module on the classes $xa^2b$, $ya^2b$, and $xya^2b$ in degrees $5$, $6$, and $7$ respectively. Margolis' theorem then finishes the proof for us.
\end{proof}

\subsection{Generators}\label{sec:BPSLgens}

\begin{table}[h!]
\centering
\begin{tabular}{c c c c c c}
\toprule
Manifold & Bordism class & Image in $E_\infty$ & Dimension & Char class & Reference\\
\midrule
$T^2$ & $\alpha$ & $a$ & $2$ & $\nu_2$& Prop.\ \ref{f_manifold}
    \\
$L_4^3$ & $\delta$ & $d$ & $3$ & $\overline\nu_3$ & Prop.\ \ref{lens_S4}
    \\
$\RP^3\times\RP^3$ & $\phi$ & $f$ & $6$ & $\nu_2^3$ & Prop.\ \ref{f_manifold}
    \\
$L_4^3\times_{\Z_2} K3$ & $\epsilon$ & $e$ & $7$ & n/a & \S \ref{sec:X7gen}
    \\
$L_4^7$ & $\zeta$ & $g$ & $7$ & $\nu_2^2\overline\nu_3$ & Prop.\ \ref{lens_S4}
    \\
$W_4$ &  &  & $4$ & $a^2b$ & \S\ref{sec:W4}\\
\bottomrule
\end{tabular}
\caption{Generators of $\widetilde\Omega_*^\Spin(B\SL(3,\F_2))$. We also include the generator $W_4$ of $\Omega_4^\Spin(L(2))$, which we need for the spin bordism of $BS_4$.}
\label{tab:S4gens}
\end{table}

We now move to the generators of $\ko_*(B\SL(3,\F_2))$ and $\ko_*(L(2))$ for degree less than 8. A key result by which we will determine these generators is the stable splitting of $BS_4$ given in \cref{S4_splitting}. With this, we can study generators for $\ko_*(B\SL(3,\F_2))$ and $\ko_*(L(2))$ by instead considering the generators of $\ko_*(BS_4)$. 

\subsubsection{\texorpdfstring{Generators coming from $\ko_*(B\Z_4)$ and $\ko_*(B(\Z_2\times \Z_2))$}{Generators coming from ko(BZ4) and ko(B(Z2 x Z2))}}\label{Z4_to_S4_gens}

Many of the generators of $\ko_*(BS_4)$ can be determined by considering subgroups of $S_4$. Indeed, consider the inclusion $ \Z_4 \hookrightarrow S_4$ given by $1\mapsto (1\;2\;3\;4)$. Thinking of $S_4$ as the tetrahedral symmetry group, this map can also be characterized as sending the generator of $\Z_4$ to the tetrahedral symmetry given by the composition of a $90\degree$ turn in the $\R^2_{xy}$-plane with a reflection in $\R_z$, see Figure \ref{fig:Z4inS4}. The inclusion induces a map on $\Z_2$ cohomology. For this, first recall the $\Z_2$ cohomology of $B\Z_4$ from \cref{Thm:CohoZ4}.
\begin{figure}
\centering
\includegraphics[width = 0.8 \textwidth]{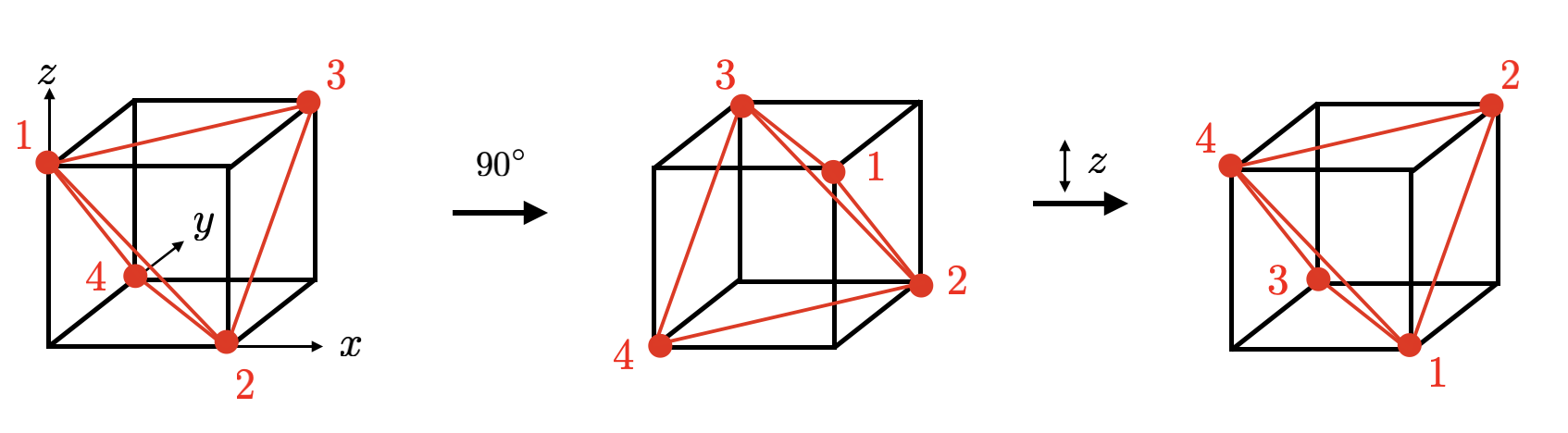}
\caption{The permutation $(1 \; 2 \; 3 \; 4)$ generating a $\mathbb{Z}_4$ as an element of $S_4$ acting on a tetrahedron (inside a cube).} 
\label{fig:Z4inS4}
\end{figure}

With Theorems \ref{Thm:CohoS4} and \ref{Thm:CohoZ4} in place, we can study the map on $\Z_2$ cohomology induced from the inclusion $\Z_4 \hookrightarrow S_4$. 
\begin{lem}
\label{lem:Z4inS4}    The inclusion $ \Z_4 \hookrightarrow S_4$ given by $1 \mapsto (1\;2\;3\;4)$ induces a map on $\Z_2$ cohomology 
    \begin{equation}\label{defnPhi}
    \begin{gathered}
        \Phi:H^*(BS_4;\Z_2) \rightarrow H^*(B\Z_4;\Z_2);\\
        a\mapsto x,\quad b\mapsto y,\quad c\mapsto xy
    \end{gathered}
    \end{equation}
\end{lem}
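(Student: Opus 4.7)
The plan is to apply naturality of Stiefel-Whitney classes along the inclusion $\iota\colon \mathbb{Z}_4 \hookrightarrow S_4$, $1\mapsto (1\,2\,3\,4)$. By the Madsen-Milgram identification recalled in \cref{Thm:CohoS4}, the generators $a,b,c$ are $w_1, w_2, w_3$ of the reduced $3$-dimensional real permutation representation $\rho\colon S_4\to \mathrm{O}(3)$ (i.e., the orthogonal complement of the diagonal line in the standard $4$-dimensional permutation representation). Naturality then yields $\Phi(w_i(\rho)) = w_i(\rho\circ \iota)$, reducing the problem to decomposing the restricted $\mathbb{Z}_4$-representation and computing its characteristic classes.

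Since the cycle $(1\,2\,3\,4)$ acts transitively on $\{1,2,3,4\}$, the full $4$-dimensional permutation representation restricts to the regular representation of $\mathbb{Z}_4$. Its complexification is $\bigoplus_{k=0}^{3}\mathbb{C}_{i^k}$, so pairing complex conjugates shows that the underlying real representation splits as $\underline{\mathbb{R}}\oplus\sigma\oplus V$, where $\sigma$ is the sign character factoring through $\mathbb{Z}_4\twoheadrightarrow\mathbb{Z}_2$ (the eigenvalue-$(-1)$ summand) and $V$ is the underlying real $2$-plane bundle of the complex character $\mathbb{C}_i$, on which the generator rotates by $\pi/2$. Dropping the trivial summand gives $\rho\circ\iota \cong \sigma\oplus V$.

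To finish, I would compute $w(\sigma\oplus V)$ via the Whitney sum formula. We have $w_1(\sigma)=x$, since $\sigma$ is classified by the nontrivial surjection $\mathbb{Z}_4\twoheadrightarrow\mathbb{Z}_2$ and $x$ is the pullback of the generator of $H^1(B\mathbb{Z}_2;\mathbb{Z}_2)$; and $V$ is complex, hence oriented, so $w_1(V)=0$, while $w_2(V)$ equals the mod-$2$ reduction of $c_1(\mathbb{C}_i)\in H^2(B\mathbb{Z}_4;\mathbb{Z})$, which is the degree-$2$ generator $y$ of \cref{Thm:CohoZ4}. Hence
\[
w(\rho\circ\iota) \;=\; (1+x)(1+y) \;=\; 1+x+y+xy,
\]
so $\Phi(a)=x$, $\Phi(b)=y$, $\Phi(c)=xy$, as claimed. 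The only mildly subtle step is the identification $w_2(V)=y$: this is consistent with the relation $\mathrm{Sq}^1 x = x^2 = 0$ from \cref{Thm:CohoZ4}, which shows $y$ is not the Bockstein of $x$ but rather the reduction of the integral Chern class of $\mathbb{C}_i$. As a sanity check, the relation $ac=0$ in $H^*(BS_4;\mathbb{Z}_2)$ maps under $\Phi$ to $x\cdot xy = x^2y = 0$, consistent with the relation $x^2=0$ in $H^*(B\mathbb{Z}_4;\mathbb{Z}_2)$.
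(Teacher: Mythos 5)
Your proof is correct, and the core strategy—interpret $a,b,c$ as Stiefel-Whitney classes of the defining $\mathrm O(3)$-representation (via Madsen--Milgram), restrict along $\Z_4\hookrightarrow S_4$, and decompose the restriction as $\sigma\oplus V$ with $\sigma$ the sign character and $V$ the real $2$-plane of the faithful complex character—is the same as the paper's. The differences are in the details and are instructive. First, you derive the decomposition $\sigma\oplus V$ by recognizing the restriction of $\boldsymbol 4$ as the regular representation of $\Z_4$ and pairing conjugate characters, whereas the paper identifies the rotoreflection directly from the tetrahedral picture; these are equivalent, and either identification is fine. Second, and more substantively, you obtain all three pullbacks at once from the Whitney sum formula $w(\sigma\oplus V)=(1+x)(1+y)=1+x+y+xy$, so $c\mapsto xy$ drops out immediately. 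The paper instead computes only $w_1$ and $w_2$ directly (arguing each must be the unique nonzero class in its degree) and then extracts $c\mapsto xy$ from the Wu-formula relation $\Sq^1 b = ab + c$ together with $\Sq^1 y = 0$. Your Whitney-sum route is cleaner for this last step, since it avoids invoking the Steenrod action on $H^*(BS_4;\Z_2)$; the paper's route has the small virtue of not requiring the explicit computation of $w_2(V)$ as a mod-$2$ reduction of an integral Chern class (it only needs $w_2(V)\ne 0$). Your sanity check of the relation $ac=0$ mapping to $x^2y=0$ is a nice touch.
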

\begin{proof}
    The classes $a,b,c\in H^*(BS_4;\Z_2)$ are the first, second, and third Stiefel-Whitney classes of the representation of $S_4$ on $\R^3$ as the symmetries of the tetrahedron. Restricting to $\Z_4$, we obtain a $\Z_4$ \term{rotoreflection} representation $\rho$, the direct sum of a $\Z_4$ rotation symmetry on the $xy$-plane and a reflection symmetry $\Z_4\to\Z_2\to\GL_1(\R)$ on the $z$-axis. This representation is not contained in $\mathrm{SO}(3)$, so $w_1(\rho)\ne 0$, and one can also show $w_2(\rho)\ne 0$, e.g.\ because $w_2$ of the rotation representation is nonzero and then using the Whitney sum formula. Thus $w_1(\rho) = x$ and $w_2(\rho) = y$, since those are the only nonzero options, and therefore
    $a,b$ pullback to $x,y$, respectively. The fact that $c \mapsto xy$ then follows from the fact that $\Sq^1(b) = ab + c$ and $\Sq^1(y) = 0$. 
\end{proof}
\begin{lem}\label{image_Adams_lem}
Let
\begin{equation}
    \Phi_*\colon\Ext_{\cA(1)}(H^*(B\Z_4;\Z_2), \Z_2) \longrightarrow \Ext_{\cA(1)}(H^*(BS_4;\Z_2), \Z_2)
\end{equation}
be the map of Adams $E_2$-pages induced by the inclusion $\Z_4\hookrightarrow S_4$. Then
\begin{equation}\label{Ext_Phi}
\begin{alignedat}{3}
    \Phi_*(\theta) &= 0\qquad\qquad &\Phi_*(\theta\lambda_1) &= 0\qquad\qquad & \Phi_*(\theta\mu_1) &= h_0e\\
    \Phi_*(\kappa_1) &= a & \Phi_*(\kappa_2) &= f & \Phi_*(\xi_1) &= h_0c\\
    \Phi_*(\theta\kappa_1) &= d & \Phi_*(\mu_1) &= 0 & \Phi_*(\lambda_2)&= h_0i.\\
    \Phi_*(\lambda_1) &= b & \Phi_*(\theta\kappa_2) &= g
\end{alignedat}
\end{equation}
\end{lem}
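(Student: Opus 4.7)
The plan is to separate the eleven classes by Adams filtration and handle each block with a different technique.

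\emph{Filtration-zero classes} ($\theta,\kappa_1,\kappa_2,\theta\kappa_1,\theta\kappa_2$): These are detected by mod-$2$ cohomology via the edge homomorphism~\eqref{edge_hom}, which identifies $\Ext^{0,t}_{\cA(1)}(H^*(X;\Z_2),\Z_2)$ with a subquotient of $\Hom_{\Z_2}(H^t(X;\Z_2),\Z_2)$. By naturality, $\Phi_*$ on filtration zero is the $\Z_2$-linear dual of $\Phi$, whose values on cohomology generators are recorded in \cref{lem:Z4inS4} and extended multiplicatively using $H^*(BS_4;\Z_2)=\Z_2[a,b,c]/(ac)$. For example, $\theta$ is dual to $x\in H^1(B\Z_4;\Z_2)$ with $\Phi^{-1}(x)=\{a\}$, so $\Phi_*(\theta)=\hat a$; but $a$ generates the $B\Z_2$-summand of the stable splitting $BS_4\simeq_{(2)} B\Z_2\vee B\SL(3,\F_2)\vee L(2)$ of \cref{S4_splitting}, and since the named right-hand-side generators $\{a,b,c,d,e,f,g,h,i\}$ all come from the $B\SL(3,\F_2)\vee L(2)$ part, the projection records $\Phi_*(\theta)=0$. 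The same mechanism, combined with the identifications $\nu_2\leftrightarrow b$, $\overline\nu_3\leftrightarrow c$ from \cref{Rem:PSL(2,7)andS4}, yields $\Phi_*(\kappa_1)=a$, $\Phi_*(\kappa_2)=f$, $\Phi_*(\theta\kappa_2)=g$, and $\Phi_*(\theta\kappa_1)=d$ (using that $\Phi^{-1}(xy)=\{ab,c\}$, which pairs trivially with $\nu_3=ab+c$ and nontrivially with $\overline\nu_3=c$).

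\emph{Higher-filtration classes} ($\lambda_1,\theta\lambda_1,\mu_1,\theta\mu_1,\xi_1,\lambda_2$): These are invisible to cohomology, so we exploit the $\ExZ$-linearity of $\Phi_*$ together with the ring relations of \cref{Z4_Adams_ring} (in particular $v\kappa_i=h_0\mu_i$ and $v\lambda_i=h_0\xi_i$) and their target-side analogues in \cref{PSL_E2_gens}. Thus $h_0\Phi_*(\mu_1)=v\Phi_*(\kappa_1)=va=0$ by the $\Ext(\Sigma^2 J)$ relation $va=0$; an inspection of $\Ext^{2,8}$ of the $B\SL(3,\F_2)\vee L(2)$ summands shows its $h_0$-kernel is trivial, forcing $\Phi_*(\mu_1)=0$. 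Similarly, $h_0\Phi_*(\theta\mu_1)=v\Phi_*(\theta\kappa_1)=vd=h_0^2e$ by the $\Sigma^3 Q$ relation, so $\Phi_*(\theta\mu_1)=h_0e$; and $h_0\Phi_*(\xi_1)=v\Phi_*(\lambda_1)=vb=h_0^2c$ by the $\Sigma^2 J$ relation, giving $\Phi_*(\xi_1)=h_0c$ once $\Phi_*(\lambda_1)=b$ is known. For $\lambda_i$ themselves, use the $\cA(1)$-module decomposition of \cref{A1_mod_BZ4}: each summand $\Sigma^{4i-2}C\eta\subset H^*(B\Z_4;\Z_2)$ containing $y^{2i-1},y^{2i}$ maps under $\Phi$ into the unique target $\cA(1)$-summands of $H^*(BS_4;\Z_2)$ containing $b^{2i-1},b^{2i}$. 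Dually, the class $\lambda$ of \cref{ext_Ceta} is sent to the unique nontrivial element in the matching bidegree: $b\in\Ext^{1,5}(\Sigma^2 J)$ for $i=1$ and $h_0 i\in\Ext^{1,9}(\Sigma^8\Z_2)$ for $i=2$, the difference arising because $b^4$ generates a $\Sigma^8\Z_2$ summand while $b$ sits in the larger $\Sigma^2 J$ summand. Finally, $\Phi_*(\theta\lambda_1)=0$ because $\theta\lambda_1$ is dual (after filtration shift) to $xy^2$, and $\Phi^{-1}(xy^2)\subset H^5(BS_4;\Z_2)$ sits entirely in the $B\Z_2$-summand.

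\emph{Main obstacle}: The principal difficulty lies in the higher-filtration classes $\lambda_1,\lambda_2,\xi_1$, which are not visible in mod-$2$ cohomology, so the dual-of-$\Phi$ technique does not directly apply. The $\ExZ$-module relations constrain $\Phi_*$ only up to the $h_0$-kernel, and closing the gap requires an honest matching of $\cA(1)$-summand decompositions of $H^*(B\Z_4;\Z_2)$ and $H^*(BS_4;\Z_2)$ compatible with $\Phi$, together with the verification that the image does not lurk spuriously in the $B\Z_2$ or $L(2)$ summands. The bookkeeping around which filtration-one generator corresponds to $\Sigma^2 J$ versus $\Sigma^8\Z_2$ on the target side is the subtlest point.
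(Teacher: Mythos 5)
Your approach is genuinely different in structure from the paper's, and it handles most of the eleven classes correctly, but the place you flag as the ``subtlest point'' is in fact where a real gap remains, and one of your auxiliary arguments is wrong.

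The paper runs a single uniform machine: it decomposes $\Phi^*\colon H^*(B\SL(3,\F_2);\Z_2)\to H^*(B\Z_4;\Z_2)$ as a direct sum of surjective $\cA(1)$-module maps $\phi_J,\phi_Q,\phi_{\Z_2},\phi_{R_2}$ with identified kernels, and reads off $\Phi_*$ on Ext from the resulting long exact sequences (this is the content of \cref{Z4inclS4}). Your mix of techniques --- the edge homomorphism for filtration-zero classes, $\ExZ$-linearity plus $h_0$-injectivity for $\mu_1,\theta\mu_1,\xi_1$ --- is a legitimate alternative and in places more transparent. But the long exact sequence is doing one indispensable thing that your argument omits: it establishes \emph{nonvanishing}. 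Your statement that ``the class $\lambda$ of \cref{ext_Ceta} is sent to the unique nontrivial element in the matching bidegree'' is not a proof that $\Phi_*(\lambda_1)\ne 0$. The map $\Ext^{1,5}(\phi_J)\colon\Ext^{1,5}(\Sigma^2 C\eta)\to\Ext^{1,5}(\Sigma^2 J)$ could a priori be zero. The LES rules this out: since $\ker\phi_J=\Sigma^3 Q$ and $\Ext^{0,5}(\Sigma^3 Q)=\Ext^{0,2}(Q)=0$, the map on $\Ext^{1,5}$ is injective, hence $\lambda_1\mapsto b$. The same issue arises for $\Phi_*(\lambda_2)$, and because you derive $\Phi_*(\xi_1)$ from $\Phi_*(\lambda_1)$ via $h_0$-injectivity, that output is also conditional. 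So three of the claimed values ($\lambda_1\mapsto b$, $\lambda_2\mapsto h_0 i$, $\xi_1\mapsto h_0c$) rest on an unproven nonvanishing assertion.

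There is also an error in your argument for $\Phi_*(\theta\lambda_1)=0$. You claim ``$\Phi^{-1}(xy^2)\subset H^5(BS_4;\Z_2)$ sits entirely in the $B\Z_2$-summand,'' but this is false: $\Phi(bc)=y\cdot xy=xy^2$, and $bc=\nu_2\overline\nu_3$ lies in the $B\SL(3,\F_2)$-summand of the Mitchell-Priddy splitting, not the $B\Z_2$-summand. The correct and simpler reason $\Phi_*(\theta\lambda_1)$ vanishes is that $\Ext_{\cA(1)}^{1,6}(\widetilde H^*(B\SL(3,\F_2);\Z_2))=0$, as one reads off from the $\cA(1)$-module decomposition of \cref{cor:A(1)decompPSL}; there is simply nothing in the target bidegree. (Separately, you write that the $\Sigma^{4i-2}C\eta$ summand of $H^*(B\Z_4;\Z_2)$ ``maps under $\Phi$ into'' a summand of $H^*(BS_4;\Z_2)$; since $\Phi$ goes the other way, you presumably mean that $\Phi$ carries a summand of $H^*(BS_4;\Z_2)$ onto it. That is the correct statement and is what the paper's $\phi_\bullet$ maps record.)

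In short: your filtration-zero arguments and your $h_0$-divisibility trick for $\mu_1,\theta\mu_1$ are sound and are a nice alternative to the paper's uniform LES computation; but $\lambda_1,\lambda_2$ (and hence $\xi_1$) require the long exact sequences in Ext or an equivalent nonvanishing input, and the $\theta\lambda_1$ argument needs to be replaced by the observation that the target Ext group is zero.
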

The reader can check that, restricted to classes in degrees $8$ and below, the description of $\Phi_*$ in~\eqref{Ext_Phi} extends uniquely to an $\mathbb E$-module homomorphism.
\begin{proof}
In \cref{Thm:CohoZ4} and \cref{cor:A(1)decompPSL}, we gave direct-sum decompositions of $H^*(B\Z_4;\Z_2)$ and $H^*(B\SL(3,\F_2);\Z_2)$ as $\cA(1)$-modules; the map $\Phi^*\colon H^*(B\SL(3,\F_2);\Z_2)\to H^*(B\Z_4;\Z_2)$ mostly sends direct summands to direct summands. Specifically, using \cref{lem:Z4inS4} to evaluate $\Phi^*$ on an additive basis of $H^*(B\SL(3,\F_2);\Z_2)$ in degrees $*\le 8$, we obtain the following description of $\Phi^*$.
\begin{enumerate}
    \item There is a unique nonzero $\cA(1)$-module homomorphism $\phi_{\textcolor{BrickRed}{J}}\colon \textcolor{BrickRed}{\Sigma^2 J}\to \Sigma^2 C\eta$; it is surjective, with kernel $\Sigma^3 Q$.
    \item There is a unique nonzero $\cA(1)$-module homomorphism $\phi_{\textcolor{Green}{Q}}\colon \textcolor{Green}{\Sigma^3 Q}\to\Sigma^3 C\eta$; it is surjective, with kernel $\Sigma^6\Z_2$.
    \item There is a unique $\cA(1)$-module homomorphism $\phi_{\textcolor{Fuchsia}{\Z_2}}\colon \Sigma^6\cA(1)\oplus\textcolor{Fuchsia}{\Sigma^8 \Z_2}\to \Sigma^6 C\eta$ which is nonzero when restricted both to $\Sigma^6 \cA(1)$ and to $\textcolor{Fuchsia}{\Sigma^8 \Z_2}$; it is surjective, with kernel $\Sigma^7 R_2$.
    \item There is a unique nonzero $\cA(1)$-module homomorphism $\phi_{\textcolor{MidnightBlue}{R_2}}\colon \textcolor{MidnightBlue}{\Sigma^7 R_2}\to \Sigma^7 C\eta$; it is surjective, with kernel $\Sigma^8 J$.
\end{enumerate}
Then, there is some map $\phi_P\colon P\to \widetilde H^{\ge 10}(B\Z_4;\Z_2)$ such that, under the isomorphisms given in \cref{A1_mod_BZ4,cor:A(1)decompPSL},
\begin{equation}
    \Phi = \phi_{\textcolor{BrickRed}{J}}\oplus \phi_{\textcolor{Green}{Q}}\oplus \phi_{\textcolor{Fuchsia}{\Z_2}}\oplus \phi_{\textcolor{MidnightBlue}{R_2}}\oplus \phi_P\colon \widetilde H^*(B\SL(3,\F_2);\Z_2) \longrightarrow \widetilde H^*(B\Z_4;\Z_2).
\end{equation}
Thus it suffices to compute the effects of $\phi_{\textcolor{BrickRed}{J}}$, $\phi_{\textcolor{Green}{Q}}$, $\phi_{\textcolor{Fuchsia}{\Z_2}}$, and $\phi_{\textcolor{MidnightBlue}{R_2}}$ on Ext. For this we use the fact that each of these four maps is surjective, hence it and its kernel are a short exact sequence of $\cA(1)$-modules, which induces a long exact sequence in Ext. In practice, knowing the Ext of each of the three pieces of the short exact sequence usually dictates the maps on Ext by exactness, and that is true in all four cases under study here. See~\cite[\S 4.6]{Beaudry} for information on how to display and compute these long exact sequences; we run them in \cref{Z4inclS4}, and from those charts we can read off the lemma statement.
\end{proof}
\begin{rem}
The map $\Phi_*$ of Adams $E_2$-pages sends $\theta\mapsto 0$ but does not kill $\theta\kappa_i$. This is an algebraic manifestation of the fact that even though $\Omega_*^\Spin(B\Z_4)$ is a ring, and it has a map to $\Omega_*^\Spin(BS_4)$, the latter is not a module over the former; only over $\Omega_*^\Spin$. This is similar to the situation we observed with Spin-$\Mp(2, \Z)$ and Spin-$\GL^+(2, \Z)$ bordism in~\cite[\S 14.4]{Heckman}.
\end{rem}
\begin{prop}\label{lens_S4}
The map $\ko_*(B\Z_4)\to ko_*(BS_4)$ induced by the inclusion $\Z_4\hookrightarrow S_4$ from \cref{lem:Z4inS4} sends $\ell_3\mapsto \delta$ and $\ell_7\mapsto\zeta$, and therefore $L_4^3$ with either of its two spin structures and $S_4$-bundle induced from the $\Z_4$-bundle $S^4\to L_4^3$ represents $\delta$, and likewise for $L_4^7$ and $\zeta$.
\end{prop}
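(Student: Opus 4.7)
The plan is to apply naturality of the Adams spectral sequence to the inclusion $\iota\colon \Z_4\hookrightarrow S_4$ of \cref{lem:Z4inS4}, then lift the resulting statement on the $E_\infty$-page to $\ko$-homology using the already established generators.

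First I would recall that by \cref{Z4_mult_gens} combined with the naming conventions of \cref{ringstr_Z4}, the classes $\ell_3\in\ko_3(B\Z_4)$ and $\ell_7\in\ko_7(B\Z_4)$ are lifts of $\theta\kappa_1$ and $\theta\kappa_2$, respectively, in the $E_\infty$-page of the Adams spectral sequence for $\ko_*(B\Z_4)$, and that $L_4^3$ and $L_4^7$ (with either spin structure and with $\Z_4$-bundle induced by the covering $S^{2n-1}\to L_4^{2n-1}$) represent these classes. On the other hand, \cref{image_Adams_lem} tells us that the induced map $\iota_*$ of Adams $E_2$-pages satisfies $\iota_*(\theta\kappa_1)=d$ and $\iota_*(\theta\kappa_2)=g$. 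Inspecting \cref{ThomZ4Einfty}, both $d$ and $g$ lie on the $s=0$ line and are not the source or target of any differential, so they survive to $E_\infty$; by naturality, the images $\iota_*(\ell_3)$ and $\iota_*(\ell_7)$ in $\ko_*(BS_4)$ are lifts of $d$ and $g$, respectively.

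Next, by \cref{algebra_koPSL}, the elements $\delta\in\ko_3(BS_4)$ and $\zeta\in\ko_7(BS_4)$ (identified with their images in $\ko_*(BS_4)$ under the splitting of \cref{S4_splitting}) are also characterized as lifts of $d$ and $g$ to $\ko_*(BS_4)$. Two lifts of the same $E_\infty$-class differ only by classes of strictly higher Adams filtration, so $\iota_*(\ell_3)=\delta + \epsilon_3$ and $\iota_*(\ell_7)=\zeta+\epsilon_7$ with $\epsilon_3,\epsilon_7$ of Adams filtration $\ge 1$. But the choice of $\delta$ and $\zeta$ in \cref{algebra_koPSL} is precisely the freedom to pick any such lift, so we are free to redefine $\delta\coloneqq \iota_*(\ell_3)$ and $\zeta\coloneqq \iota_*(\ell_7)$; this is consistent with all stated relations, since those relations are between classes already pinned down by the $E_\infty$-page (e.g.\ $4\epsilon-v\delta$, $4\zeta-2\epsilon$) and such relations are invariant under replacing $\delta,\zeta$ by lifts of $d,g$.

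The geometric conclusion is then immediate: $L_4^3$ with either spin structure and with $S_4$-bundle induced along $\Z_4\hookrightarrow S_4$ from $S^3\to L_4^3$ represents $\iota_*(\ell_3)=\delta$, and similarly $L_4^7$ represents $\zeta$. The only genuine subtlety --- and the place where care is needed --- is verifying that $d$ and $g$ really do survive the Adams differentials and that the freedom in choosing $\delta,\zeta$ as lifts allows us to align them with $\iota_*(\ell_3),\iota_*(\ell_7)$; both points follow directly from the charts in \cref{ThomZ4Einfty} together with the structure of $\widetilde\ko_*(B\SL(3,\F_2))$ recorded in~\eqref{BayenDeg7}.
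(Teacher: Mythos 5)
Your proof is correct and follows the same strategy as the paper's: push the $E_2$-page identification of \cref{image_Adams_lem} through the Adams spectral sequence and lift to $\ko$-homology, then invoke \cref{Z4_mult_gens} for the geometric representatives. The one imprecision is your claim that the relations in $\mathcal R_3$ are ``invariant under replacing $\delta,\zeta$ by lifts of $d,g$'' — replacing $\delta$ by an odd multiple $\lambda\delta$ does preserve $4\epsilon = v\delta$ outright (since $4(\lambda-1)\epsilon \in 8\Z\epsilon = 0$), but to preserve $4\zeta = 2\epsilon$ when $\zeta\mapsto\mu\zeta$ you must simultaneously replace $\epsilon$ by $\mu\epsilon$, which is harmless and leaves the argument sound.
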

\begin{proof}
Recall from~\eqref{Ext_Phi} that
$\theta\kappa_1\mapsto d$ and $\theta\kappa_2\mapsto g$. All four of these classes survive to the $E_\infty$-page, so we can lift the map on $E_\infty$-pages to a map on $\ko$-homology using \cref{ringstr_Z4,algebra_koPSL} and conclude $\ell_3\mapsto \delta$ and $\ell_7\mapsto \zeta$. Finally, using the generators for $\ko_*(B\Z_4)$ we found in \cref{Z4_mult_gens}, we obtain the second part of the theorem statement. 
\end{proof}
\begin{rem}
Looking at~\eqref{Ext_Phi}, the classes $a\in\Ext^{0,2}$ and $f\in\Ext^{0,6}$ are also in the image of $\Phi_*$; specifically, $a = \Phi_*(\kappa_1)$ and $f = \Phi_*(\kappa_2)$. However, since $d_2(\kappa_i)\ne 0$ for $i = 1,2$, $a$ and $f$ are not in the image of $\Phi_*$ on the $E_\infty$-page, so the proof technique of \cref{lens_S4} does not furnish representing manifolds for $\alpha\in\ko_2(BS_4)$ or $f\in\ko_6(BS_4)$. We will find generators using a similar technique on a different subgroup of $S_4$ in \cref{f_manifold}.
\end{rem}

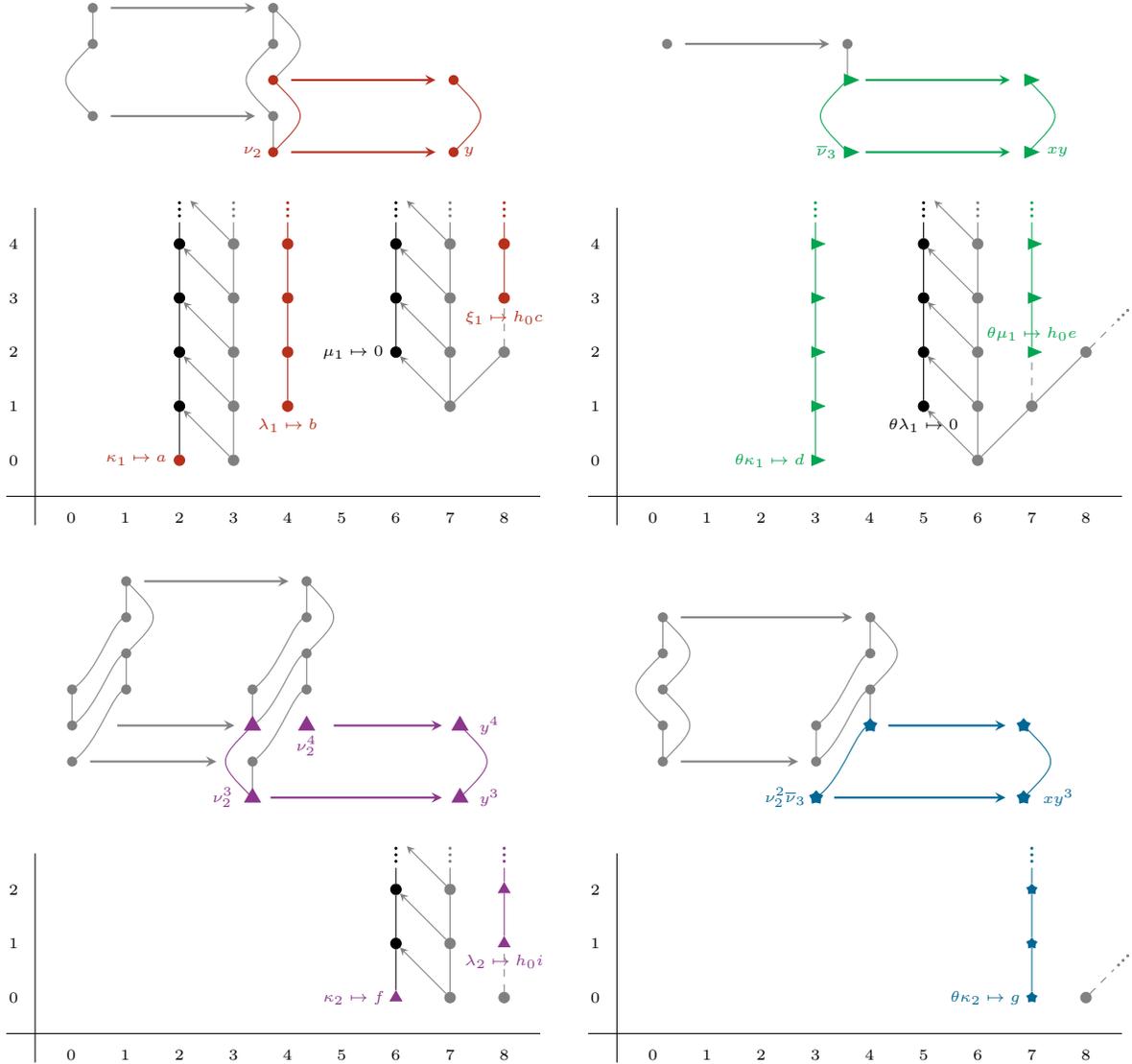
\begin{figure}[h!]
\centering
% Joker
\begin{subfigure}[b]{0.48\textwidth}
\centering
\begin{subfigure}[c]{\textwidth}
\centering
\begin{tikzpicture}[scale=0.5, every node/.style = {font=\tiny}, >=stealth]
\begin{scope}[gray]
	\SpanishQnMark{-5}{1}{}{}
	\SpanishQnMark{0}{1}{}{}
	\draw[->, thick] (-4.5, 1) -- (-0.5, 1);
	\draw[->, thick] (-4.5, 4) -- (-0.5, 4);
	\sqone(0, 0);
	\sqtwoR(0, 2);
\end{scope}
\begin{scope}[BrickRed]
	\foreach \x in {0, 5} {
		\tikzpt{\x}{0}{}{};
		\tikzpt{\x}{2}{}{};
		\sqtwoR(\x, 0);
	}
	\draw[thick, ->] (0.5, 0) -- (4.5, 0);
	\draw[thick, ->] (0.5, 2) -- (4.5, 2);
 \node[left] at (0, 0) {$\nu_2$};
 \node[right] at (5, 0) {$y$};
\end{scope}
\tikzpt{0}{-1}{}{white};
\end{tikzpicture}
\end{subfigure}
\begin{sseqdata}[name=jokerImage, classes=fill, xrange={0}{8}, yrange={0}{4}, scale=0.75, Adams grading, >=stealth,
tick style={font=\tiny}, x axis tail=0.4cm, y axis tail=0.4cm, x tick gap=0.3cm, y tick gap=0.3cm, class labels = {below = 0.05em, font=\tiny}]
\class(2, 1)\AdamsTower{}
\class(6, 2)\AdamsTower{} \classoptions[class labels = {left = 0.05em},"\mu_1\mapsto 0"](6, 2)
\class(10, 5)\AdamsTower{}
\begin{scope}[gray] % TODO maybe another color?
	\class(3, 0)\AdamsTower{}
	\class(7, 1)\AdamsTower{}
	\class(8, 2)
		\structline(7, 1)(8, 2)
	\class(9, 3)
\end{scope}
\begin{scope}[BrickRed]
	\class(2, 0)
 \classoptions[class labels = {left = 0.05em}, "\kappa_1\mapsto a"](2, 0)
	\class(4, 1)\AdamsTower{}
  \classoptions["\lambda_1\mapsto b"](4, 1)
	\class(8, 3)\AdamsTower{}
 \classoptions["\xi_1\mapsto h_0c"](8, 3)
	\class(10, 4)
\end{scope}
\structline(2, 0)(2, 1)
\structline[dashed, gray](8, 2)(8, 3)
\structline[dashed, gray](9, 3)(10, 4)
\structline(10, 4)(10, 5)
% now differentials
\foreach \y in {0, ..., 4} {
	\d[gray]1(3, \y)
}
\foreach \y in {1, ..., 4} {
	\d[gray]1(7, \y)
}
\end{sseqdata}
\printpage[name=jokerImage, page=1]
\end{subfigure}
% upside-down qn mark
\begin{subfigure}[b]{0.48\textwidth}
\centering
\begin{subfigure}[c]{\textwidth}
\centering
\begin{tikzpicture}[scale=0.5, every node/.style = {font=\tiny}, >=stealth]
\begin{scope}[gray]
	\tikzpt{-5}{3}{}{};
	\tikzpt{0}{3}{}{};
	\draw[->, thick] (-4.5, 3) -- (-0.5, 3);
	\sqone(0, 2);
\end{scope}
\begin{scope}[Green]
	\foreach \x in {0, 5} {
		\tikzpt{\x}{0}{}{isosceles triangle};
		\tikzpt{\x}{2}{}{isosceles triangle};
	}
	\sqtwoL(0, 0);
	\sqtwoR(5, 0);
	\draw[thick, ->] (0.5, 0) -- (4.5, 0);
	\draw[thick, ->] (0.5, 2) -- (4.5, 2);
  \node[left] at (0, 0) {$\overline\nu_3$};
 \node[right=0.125cm] at (5, 0) {$xy$};
\end{scope}
\tikzpt{0}{-1}{}{white};
\end{tikzpicture}
\end{subfigure}
\begin{sseqdata}[name=qnImage, classes=fill, xrange={0}{8}, yrange={0}{4}, scale=0.75, Adams grading, >=stealth,
tick style={font=\tiny}, x axis tail=0.4cm, y axis tail=0.4cm, x tick gap=0.3cm, y tick gap=0.3cm,
 class labels = {below = 0.05em, font=\tiny}]
\class(5, 1)\AdamsTower{}
 \classoptions["\theta\lambda_1\mapsto 0"](5, 1)
\class(9, 4)\AdamsTower{}
\begin{scope}[gray]
	\class(6, 0)\AdamsTower{}
	\class(7, 1)
		\structline(6, 0)(7, 1)
	\class(8, 2)\structline
	\class(10, 3)\AdamsTower{}
\end{scope}
\begin{scope}[Green, isosceles triangle]
	\class(3, 0)\AdamsTower{}
 \classoptions[class labels = {left = 0.05em},"\theta\kappa_1\mapsto d"](3, 0)
	\class(7, 2)\AdamsTower{}
  \classoptions[class labels = {above = 0.05em}, "\theta\mu_1\mapsto h_0e"](7, 2)
	\class(9, 3)
\end{scope}
\structline(9, 3)(9, 4)
\structline[dashed, gray](7, 1)(7, 2)
\structline[dashed, gray](8, 2)(9, 3)
% now differentials
\foreach \y in {0, ..., 4} {
	\d[gray]1(6, \y)
}
%\foreach \y in {3, ..., 4} {
%	\d[gray]1(10, \y)
%}
\end{sseqdata}
\printpage[name=qnImage, page=1]
\end{subfigure}
% F2
\begin{subfigure}[b]{0.48\textwidth}
\centering
\begin{subfigure}[c]{\textwidth}
\centering
\begin{tikzpicture}[scale=0.5, every node/.style = {font=\tiny}, >=stealth]
\begin{scope}[gray]
	\Rtwo{-5.75}{1}{}{}{};
	\tikzpt{-0.75}{1}{}{};
	\tikzpt{-0.75}{3}{}{};
	\tikzpt{0.75}{3}{}{};
	\tikzpt{0.75}{4}{}{};
	\tikzpt{0.75}{5}{}{};
	\tikzpt{0.75}{6}{}{};
	\sqone(-0.75, 0);
	\sqone(-0.75, 2);
	\sqone(0.75, 3);
	\sqone(0.75, 5);
	\sqtwoCR(-0.75, 1);
	\sqtwoCR(-0.75, 2);
	\sqtwoCR(-0.75, 3);
	\sqtwoR(0.75, 4);
	\draw[->, thick] (-5.25, 1) -- (-1.75, 1);
	\draw[->, thick] (-4.5, 2) -- (-1.25, 2);
	\draw[->, thick] (-3.75, 6) -- (0.25, 6);
\end{scope}
\begin{scope}[Fuchsia] % TODO: F2
	\foreach \x in {-0.75, 5} {
		\tikzpt{\x}{0}{}{regular polygon,regular polygon sides=3};
		\tikzpt{\x}{2}{}{regular polygon,regular polygon sides=3};
	}
	\tikzpt{0.75}{2}{}{regular polygon,regular polygon sides=3};
	\sqtwoL(-0.75, 0);
	\sqtwoR(5, 0);
	\draw[thick, ->] (-0.25, 0) -- (4.5, 0);
	\draw[thick, ->] (1.5, 2) -- (4.5, 2);
 \node[left] at (-1, 0) {$\nu_2^3$};
 \node[below] at (0.75, 2) {$\nu_2^4$};
 \node[right=0.125cm] at (5, 0) {$y^3$};
 \node[right=0.125cm] at (5, 2) {$y^4$};
\end{scope}
\tikzpt{0}{-1}{}{white};
\tikzpt{0}{7}{}{white};
\end{tikzpicture}
\end{subfigure}
\begin{sseqdata}[name=FtwoImage, classes=fill, xrange={0}{8}, yrange={0}{2}, scale=0.75, Adams grading, >=stealth,
tick style={font=\tiny}, x axis tail=0.4cm, y axis tail=0.4cm, x tick gap=0.3cm, y tick gap=0.3cm, class labels = {below = 0.05em, font=\tiny}]
\class(6, 1)\AdamsTower{}
\class(10, 3)\AdamsTower{}
\begin{scope}[gray] 
	\class(7, 0)\AdamsTower{}
	\class(8, 0)
	\class(9, 1)
\end{scope}
\begin{scope}[Fuchsia, regular polygon, regular polygon sides=3, minimum width=1ex]
	\class(6, 0)
  \classoptions[class labels = {left = 0.05em},"\kappa_2\mapsto f"](6, 0)
	\class(8, 1)\AdamsTower{}
  \classoptions["\lambda_2\mapsto h_0i"](8, 1)
	\class(10, 2)
\end{scope}
\structline(6, 0)(6, 1)
\structline(10, 2)(10, 3)
\structline[dashed, gray](8, 0)(8, 1)
\structline[dashed, gray](9, 1)(10, 2)
% now differentials
\foreach \y in {0, ..., 2} {
	\d[gray]1(7, \y)
}
\end{sseqdata}
\printpage[name=FtwoImage, page=1]
\end{subfigure}
\centering
% R2
\begin{subfigure}[b]{0.48\textwidth}
\centering
\begin{subfigure}[c]{\textwidth}
\centering
\begin{tikzpicture}[scale=0.5, every node/.style = {font=\tiny}, >=stealth]
\begin{scope}[gray]
	\Joker{-5}{1}{}{}
	\tikzpt{-0.75}{1}{}{};
	\tikzpt{-0.75}{2}{}{};
	\tikzpt{0.75}{3}{}{};
	\tikzpt{0.75}{4}{}{};
	\tikzpt{0.75}{5}{}{};
	\sqone(-0.75, 1);
	\sqone(0.75, 2);
	\sqone(0.75, 4);
	\sqtwoCR(-0.75, 1);
	\sqtwoCR(-0.75, 2);
	\sqtwoR(0.75, 3);
	\draw[->, thick] (-4.5, 1) -- (-1.25, 1);
	\draw[->, thick] (-4.5, 5) -- (0.25, 5);
%       \sqone(0, 0);
%       \sqtwoR(0, 2);
\end{scope}
\begin{scope}[MidnightBlue]
	\tikzpt{-0.75}{0}{}{star};
	\tikzpt{0.75}{2}{}{star};
	\sqtwoCR(-0.75, 0);
	\tikzpt{5}{0}{}{star};
	\tikzpt{5}{2}{}{star};
	\sqtwoR(5, 0);
	\draw[thick, ->] (-0.25, 0) -- (4.5, 0);
	\draw[thick, ->] (1.25, 2) -- (4.5, 2);
   \node[left] at (-0.75, 0) {$\nu_2^2\overline\nu_3$};
 \node[right=0.125cm] at (5, 0) {$xy^3$};
\end{scope}
\tikzpt{0}{-1}{}{white};
\end{tikzpicture}
\end{subfigure}
\begin{sseqdata}[name=R2Image, classes=fill, xrange={0}{8}, yrange={0}{2}, scale=0.75, Adams grading, >=stealth,
tick style={font=\tiny}, x axis tail=0.4cm, y axis tail=0.4cm, x tick gap=0.3cm, y tick gap=0.3cm, class labels = {left = 0.05em, font=\tiny}]
\class(9, 2)\AdamsTower{}
\begin{scope}[gray] 
	\class(8, 0)
	\class(10, 1)\AdamsTower{}
\end{scope}
\begin{scope}[MidnightBlue,star]
	\class(7, 0)\AdamsTower{}
        \classoptions["\theta\kappa_2\mapsto g"](7, 0)
	\class(9, 1)
\end{scope}
\structline(9, 1)(9, 2)
\structline[dashed, gray](8, 0)(9, 1)
% now differentials
%\foreach \y in {1, ..., 2} {%
%	\d[gray]1(10, \y)
%}
\end{sseqdata}
\printpage[name=R2Image, page=1]
\end{subfigure}
\caption{\label{Z4inclS4}The map $\Phi_*$ from the $E_2$-page of the Adams spectral sequence for $\ko_*(B\Z_4)$ to the $E_2$-page for $\ko_*(BS_4)$ can be calculated summand by summand. This is a picture of the proof of \cref{image_Adams_lem}; see also~\cite[Figure 23]{Heckman}. Each Adams chart is a depiction of a long exact sequence in Ext, as in~\cite[\S 4.6]{Beaudry}; color is the image of the quotient map in the short exact sequence, and black is the kernel. Upper left: the map $\phi_{\textcolor{BrickRed}{J}}$ and its kernel, and the resulting long exact sequence in Ext. Upper right: the same, for $\phi_{\textcolor{Green}{Q}}$. Lower left: the same, for $\phi_{\textcolor{Fuchsia}{\Z_2}}$. Lower right: the same, for $\phi_{\textcolor{MidnightBlue}{R_2}}$. Names of cohomology classes are as in \cref{Thm:CohoZ4} and \cref{Thm:SteenrodBPSL27}; names for Ext classes are as in \cref{Z4_Adams_ring,PSL_E2_gens} (drawn in \cref{Z4_E2,the_E2_page}, left), and the maps $\phi_{\bullet}$ are as defined in the proof of \cref{image_Adams_lem}.}
\label{image_from_spin_Z8}
\end{figure}

Similarly, there exists an inclusion $\Z_2\times \Z_2 \hookrightarrow S_4$ which helps determine the generator of $\widetilde{\ko}_6(B\SL(3,\F_2))$. 
\begin{lem}
\label{lem:Z2Z2S4inclu}
    The inclusion $j\colon \Z_2\times \Z_2 \hookrightarrow S_4$ defined by the transpositions $(1\;2)(3\;4)$ and $(1\;3)(2\;4)$ induces a map on $\Z_2$ cohomology 
    \begin{equation}
        \begin{gathered}
            H^*(BS_4;\Z_2)\rightarrow H^*(B(\Z_2\times \Z_2);\Z_2) \\
            a\mapsto 0,\quad b\mapsto x^2 + xy + y^2,\quad c\mapsto x^2y + xy^2
        \end{gathered}
    \end{equation}
    where $x,y\in H^1(B(\Z_2\times \Z_2);\Z_2)$ are the generators corresponding to the first, resp.\ second $\Z_2$ summands. 
\end{lem}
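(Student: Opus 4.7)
My plan is to follow the strategy of the proof of \cref{lem:Z4inS4}: the classes $a$, $b$, $c\in H^*(BS_4;\Z_2)$ are $w_1$, $w_2$, $w_3$ of the 3-dimensional tetrahedral representation $\rho\colon S_4\to \O(3)$, so $j^*a$, $j^*b$, $j^*c$ are the corresponding Stiefel-Whitney classes of the restriction $\rho\circ j$. It therefore suffices to decompose $\rho\circ j$ as a sum of real line bundles and then apply the Whitney sum formula.

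First, I would identify $V \coloneqq \langle (1\,2)(3\,4), (1\,3)(2\,4)\rangle$ as the normal Klein 4-subgroup of $S_4$, whose three nontrivial elements act on the tetrahedron as the three $180^\circ$ rotations about the three mutually perpendicular axes joining midpoints of opposite edges. Taking these axes as the coordinate axes for $\R^3$ diagonalizes both generators of $V$ simultaneously with eigenvalues in $\{\pm 1\}$, and one reads off that $\rho\circ j$ splits as $L_1\oplus L_2\oplus L_3$, where the $L_i$ realize the three nontrivial characters of $V$. Letting $x, y\in H^1(BV;\Z_2)$ denote the classes dual to the two generators of $V$, the first Stiefel-Whitney classes of $L_1, L_2, L_3$ are then $x$, $y$, and $x+y$ (in some order).

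The Whitney sum formula then gives
\begin{equation}
w(\rho\circ j) \;=\; (1+x)(1+y)(1+x+y) \;=\; 1 + \bigl(x^2 + xy + y^2\bigr) + \bigl(x^2 y + xy^2\bigr)
\end{equation}
in $H^*(BV;\Z_2)$, so matching degrees yields $j^*a = 0$, $j^*b = x^2+xy+y^2$, and $j^*c = x^2 y + xy^2$, as claimed. The only substantive step is the geometric identification of the three coordinate axes with the rotation axes of the nontrivial elements of $V$, but this is essentially forced: three commuting nontrivial involutions in $\SO(3)$ whose product is also an involution must be $180^\circ$ rotations about mutually perpendicular axes, so no real obstacle arises.
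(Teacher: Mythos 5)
Your proof is correct, and it takes a genuinely different route from the paper's. The paper factors $j$ through $A_4$: it observes $H^1(BA_4;\Z_2) = 0$ to kill $a$, argues that the restriction of the tetrahedral representation to $A_4$ is not spin (since its lift is the binary tetrahedral group, not $A_4\times\Z_2$) to identify $b\mapsto u$, uses the Wu relation $\Sq^1 b = ab + c$ to pull $c$ back as $\Sq^1 u$, and finally invokes a formula from~\cite[Proposition~5.1]{Debray} for the restriction of $u$ to $B(\Z_2\times\Z_2)$. You instead bypass $A_4$ entirely: you observe that the Klein subgroup $V$ acts on the tetrahedron by the three $180^\circ$ rotations about the three mutually perpendicular edge--midpoint axes, so $\rho|_V$ diagonalizes as $L_1\oplus L_2\oplus L_3$ with $w_1$'s equal to $x$, $y$, $x+y$ in some order, and the Whitney sum formula gives $w(\rho|_V) = (1+x)(1+y)(1+x+y) = 1 + (x^2+xy+y^2) + (x^2y+xy^2)$. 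The computation is correct, and your observation that the three coordinate axes are forced (two commuting nontrivial involutions in $\SO(3)$ with distinct axes must have perpendicular axes, since conjugation by one must preserve the other's axis) justifies the decomposition without having to track which $g_i$ fixes which axis. Your argument is more elementary and self-contained, avoiding the cohomology of $BA_4$, the binary tetrahedral group, and the external reference; the paper's approach has the virtue of reusing the $A_4$ intermediary, which it also needs for the class $u$ elsewhere.
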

\begin{proof}
    The map $\Z_2 \times  \Z_2 \rightarrow  S_4$ factors through $A_4$, so we can first ask how $a$, $b$, and $c$ pull back to $A_4$, then restrict from $A_4$ to $\Z_2 \times  \Z_2$.
    
    For $S_4 \rightarrow  A_4$, $H^1(BA_4; \Z_2) = 0$, so $a$ pulls back to $0$. For the class $b$, note that it is the second Stiefel-Whitney class  of the representation of $S_4$ on $\R^3$ as the symmetries of a tetrahedron. Restricting to $A_4$, we obtain the orientation-preserving symmetries of the tetrahedron. This representation is not spin (its spin lift is the binary tetrahedral group, which is not $A_4 \times \Z_2$), so $w_2$ is nonzero. The only option in $H^2(BA_4; \Z_2) = \Z_2$ is the nonzero element, which we will denote as $u$ to match with the notation of \cite{Debray}. Finally, $ab + c = \Sq^1(b)$, so in $BA_4$, $c$ pulls back to $\Sq^1(u)$. 
    
    To get from $A_4$ to $\Z_2 \times  \Z_2$, we recall that $u$ pulls back to $x^2 + xy + y^2$; then $c$ pulls back to $\Sq^1(x^2 + xy + y^2)$; see \cite[Proposition 5.1]{Debray} for reference. 
\end{proof}
\begin{prop}\label{f_manifold}
Let $n = 1$ or $n = 3$.
Give $\RP^n\times\RP^n$ the ``tautological $(\Z_2\times\Z_2)$-bundle,'' i.e.\ the $(\Z_2\times\Z_2)$-bundle $\pi\colon P\to\RP^n\times\RP^n$ classified by the map
\begin{equation}
    \id\colon \pi_1(\RP^n\times\RP^n)\cong\Z_2\times\Z_2\longrightarrow \Z_2\times\Z_2.
\end{equation}
\begin{enumerate}
    \item For $n = 1$: $\int_{\RP^1\times\RP^1} j^*(\nu_2) = 1$, so $\RP^1\times\RP^1 \cong T^2$, with any of its four spin structures and with the principal $S_4$-bundle induced from $P$ by $j$, represents the class $\alpha\in\ko_2(BS_4)$.
    \item For $n = 3$: $\int_{\RP^3\times\RP^3} j^*(\nu_2^3) = 1$, so $\RP^3\times\RP^3$, with any of its spin structures and the principal $S_4$-bundle induced from $P$ by $j$, represents the class $\phi\in\ko_6(BS_4)$.
\end{enumerate}
\end{prop}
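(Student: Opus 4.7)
The plan is to leverage Proposition~\ref{detector}, the mod $2$ cohomology detection principle. Combining Corollary~\ref{PSL_E2_gens} and Proposition~\ref{algebra_koPSL}, the generators $\alpha \in \ko_2(B\SL(3,\F_2))$ and $\phi \in \ko_6(B\SL(3,\F_2))$ are detected on the $s = 0$ line of the Adams $E_\infty$-page by the cohomology classes $\nu_2$ and $\nu_2^3$, respectively. Thus the task reduces to two steps: first, compute the integrals $\int j^*(\nu_2)$ and $\int j^*(\nu_2^3)$ over $\RP^n\times\RP^n$ and show both are nonzero; second, check that the bordism class has trivial image in the other two stable summands $B\Z_2$ and $L(2)$ of $BS_4$, so that its entire contribution lies in the $B\SL(3,\F_2)$ piece.

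The cohomology computation is direct. By Lemma~\ref{lem:Z2Z2S4inclu}, together with the identification $\nu_2 \leftrightarrow b$ from Remark~\ref{Rem:PSL(2,7)andS4}, we have $j^*(\nu_2) = x^2 + xy + y^2$ in $H^2(B(\Z_2 \times \Z_2); \Z_2)$. The classifying map of the tautological bundle identifies $x,y \in H^1(B(\Z_2 \times \Z_2);\Z_2)$ with the degree-one generators of $H^*(\RP^n \times \RP^n; \Z_2) \cong \Z_2[x,y]/(x^{n+1}, y^{n+1})$. For $n = 1$ one has $x^2 = y^2 = 0$, so $j^*(\nu_2)$ restricts to $xy$, which integrates to $1$ over $T^2$. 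For $n = 3$, a trinomial expansion of $(x^2+xy+y^2)^3$ shows that the coefficient of the fundamental class $x^3 y^3$ gets contributions from $(a,b,c) = (0,3,0)$ and $(1,1,1)$ subject to $a+b+c = 3$, $2a+b = 3$, $b+2c = 3$, with total coefficient $\binom{3}{0,3,0} + \binom{3}{1,1,1} = 1 + 6 \equiv 1 \pmod 2$; the integral is again $1$.

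To rule out contributions from the other summands of the Mitchell--Priddy stable splitting of $BS_4$, note that both transpositions $(1\;2)(3\;4)$ and $(1\;3)(2\;4)$ lie in $A_4$, so $\sigma \circ j$ is trivial and the image in $\widetilde\ko_*(B\Z_2)$ vanishes. Similarly, the $L(2)$-detection class $a^2 b$ of Proposition~\ref{complete_L2_decomp} satisfies $j^*(a^2 b) = 0$ because $j^*(a) = 0$. Hence the bordism class lies entirely in the $B\SL(3,\F_2)$ summand. Since $\widetilde\ko_2(B\SL(3,\F_2)) \cong \Z_2 \cdot \alpha$ and $\widetilde\ko_6(B\SL(3,\F_2)) \cong \Z_2 \cdot \phi$ are each of order two by Proposition~\ref{algebra_koPSL}, with no room for higher-filtration corrections that could modify a class detected on the $s = 0$ line, the bordism classes must equal $\alpha$ and $\phi$, respectively. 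Independence from the choice of spin structure is automatic, since the detection pairing with $s=0$ mod $2$ cohomology classes depends only on the underlying manifold-with-bundle, not on the spin refinement. The main conceptual care-point is bookkeeping through the stable splitting to ensure the detected summand really is the $B\SL(3,\F_2)$ factor; fortunately, in these two bidegrees the groups are small enough that this is a direct check rather than a genuine obstacle.
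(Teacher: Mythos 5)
Your proof is correct and takes essentially the same route as the paper: detect $\alpha$ and $\phi$ on the Adams $E_\infty$-page via the mod $2$ cohomology classes $\nu_2$ and $\nu_2^3$, compute the pullback along $j$, and evaluate on the fundamental class of $\RP^n\times\RP^n$. Your multinomial count of the $x^3y^3$ coefficient is a slightly different bookkeeping from the paper's direct polynomial expansion $(x^2+xy+y^2)^3 \equiv x^6+x^5y+x^3y^3+xy^5+y^6 \pmod 2$ followed by killing everything with a fourth power, but of course they agree ($1 + 6 \equiv 1$). The additional paragraph ruling out contributions from the $B\Z_2$ and $L(2)$ wedge summands is not in the paper's proof and isn't strictly required: the detecting class $\nu_2$ (equivalently $b\in H^*(BS_4;\Z_2)$, pulled back along the projection $BS_4\to B\SL(3,\F_2)$) is automatically insensitive to the other summands of the Mitchell--Priddy splitting, and the paper anyway discards the $B\Z_2$ piece. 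Your check is valid and makes the statement that the class is \emph{exactly} $\alpha$ (resp.\ $\phi$) more explicit, though note one small slip: invoking $j^*(a^2b) = 0$ is a degree-$4$ observation and irrelevant in degrees $2$ and $6$, where $\Omega^\Spin_k(L(2))$ simply vanishes by Proposition~\ref{complete_L2_decomp}, so no argument is needed for that summand at all.
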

\begin{proof}
That $\int j^*(\nu_2)\ne 0$, resp.\ $\int j^*(\nu_2^3)\ne 0$ characterizes $\alpha$, resp.\ $\phi$, follows from the fact that the images of $\alpha$ and $\phi$ on the $E_\infty$-page, which are $a$ and $f$ respectively (\cref{algebra_koPSL}), are detected by the mod $2$ cohomology classes $\nu_2$, resp.\ $\nu_2^3$ (\cref{PSL_E2_gens}). 
By \cref{lem:Z2Z2S4inclu}, $j^*(\nu_2) = x^2 + xy + y^2$, so
\begin{equation}\label{jstar_monomials}
    j^*(\nu_2^3) = x^6+ x^5y + x^3y^3 + xy^5 + y^6.
\end{equation}
If we let $z$, resp.\ $w$ denote the generators of the left, resp.\ right-hand copies of $H^*(\RP^n;\Z_2)$ inside $H^*(\RP^n\times\RP^n;\Z_2)$, the K\"{u}nneth formula tells us $H^*(\RP^n\times\RP^n;\Z_2)\cong\Z_2[z,w]/(z^{n+1}, w^{n+1})$, and because the classifying map for $P$ is the identity, $x\mapsto z$ and $y\mapsto w$.
First consider $n = 1$: this means that, pulled back to $\RP^1\times\RP^1$, $x^2\mapsto 0$ and $y^2\mapsto 0$, but $xy\mapsto zw$, the non-zero top-degree cohomology class. Thus $\int_{\RP^1\times\RP^1} j^*(\nu_2) = 1$.

Now $n = 3$. The ring structure we just described for $H^*(\RP^3\times\RP^3;\Z_2)$ forces all of the monomials in~\eqref{jstar_monomials} to vanish except for $x^3y^3$, which is the nonzero class in $H^6(\RP^3\times\RP^3;\Z_2)$, so as needed, $\int_{\RP^3\times\RP^3} j^*(\nu_2^3) = 1$.
\end{proof}
\begin{rem}\label{Sq23}
How did we know to try finding a representative of $\phi$ for $\Z_2\times\Z_2$, rather than some other subgroup of $S_4$? One helpful fact is that, after a straightforward computation, one learns $\Sq^2(\Sq^2(\Sq^2(j^*(\nu_2^3)))\ne 0$. Thus $j^*(\nu_2^3)$ generates a free $\cA(1)$-submodule of $H^*(B(\Z_2\times\Z_2);\Z_2)$~\cite[\S D.4]{FH21}, which by Margolis' theorem~\cite{Margolis} means there is some closed spin $6$-manifold $M$ with $(\Z_2\times\Z_2)$-bundle $P\to M$ such that $\int_M j^*(\nu_2^3) = 1$. Thus this straightforward algebraic calculation guides us on the harder and less formulaic problem of finding generators.
\end{rem}

\subsubsection{\texorpdfstring{The generator $X_7$ in dimension $7$}{The generator X7 in dimension 7}}
\label{sec:X7gen}
The manifolds we have found so far in dimension $7$ do not generate $\widetilde\Omega_7^\Spin(B\SL(3, \F_2))$; we are missing one generator, whose image in the Adams spectral sequence is the green, triangle class $e\in E_\infty^{1,8}$, which generates a $\textcolor{Green}{\Z_2}$ summand. Under the inclusions $\Z_4\hookrightarrow S_4$ and $\Z_2\times \Z_2\hookrightarrow S_4$ given in \cref{lem:Z4inS4} and \cref{lem:Z2Z2S4inclu}, this class is not in the image of the induced maps on Adams $E_2$-pages. The following theorem of Mitchell and Priddy will therefore be of use to us.

\begin{thm}[{Mitchell-Priddy~\cite[Theorem A]{Priddy}}]
\label{D8_splitting}
The inclusion $D_8\hookrightarrow S_4$ is one of the maps in a 2-local stable equivalence 
\begin{equation}
    BD_8\simeq BS_4 \vee L(2) \vee B\Z_2
\end{equation}
\end{thm}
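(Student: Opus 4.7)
The plan is to proceed in two steps: extract the $BS_4$ summand via a transfer argument, then identify the complement as $L(2) \vee B\Z_2$ by a cohomological verification.

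Since $D_8$ is a Sylow $2$-subgroup of $S_4$ with $[S_4:D_8] = 3$ a unit in $\Z_{(2)}$, the composite of the Becker-Gottlieb transfer $\mathrm{tr}\colon \Sigma^\infty_+ BS_4 \to \Sigma^\infty_+ BD_8$ with the map $B\iota$ induced by the inclusion $\iota\colon D_8 \hookrightarrow S_4$ is multiplication by $\chi(S_4/D_8) = 3$ on $\Sigma^\infty_+ BS_4$, hence a $2$-local equivalence. This exhibits $\Sigma^\infty_+ BS_4$ as a retract of $\Sigma^\infty_+ BD_8$ at the prime $2$, accounting for the first wedge factor of the claimed splitting.

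To produce the other two summands, I would construct explicit stable maps $f\colon B\Z_2 \to BD_8$ and $g\colon L(2) \to BD_8$. For $f$: if $s \in D_8$ is a reflection, the inclusion $\langle s\rangle \cong \Z_2 \hookrightarrow D_8$ composed with the Abelianization $D_8 \twoheadrightarrow D_8^{\mathrm{ab}} \cong \Z_2\times\Z_2$ and projection onto the factor generated by the image of $s$ yields a composite $\Z_2 \to D_8 \to \Z_2$ equal to the identity, so $B\Z_2$ is a stable retract of $BD_8$. For $g$: by \cref{S4_splitting}, $L(2)$ is already a $2$-local stable summand of $BS_4$ via a section of $\psi_4$, and composing this section with $\mathrm{tr}$ realizes $L(2)$ as a $2$-local summand of $BD_8$ as well. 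Wedging these together with $\mathrm{tr}$ produces a candidate stable map $BS_4 \vee L(2) \vee B\Z_2 \to BD_8$.

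The main obstacle is verifying that this candidate map is an equivalence, not just a split injection. I would check this by comparing mod $2$ cohomology: the target $H^*(BD_8;\Z_2) \cong \Z_2[x_1,x_2,w]/(x_2^2+x_1x_2)$ must decompose as an $\cA(1)$-module into the direct sum of $H^*(BS_4;\Z_2) \cong \Z_2[a,b,c]/(ac)$, $\widetilde H^*(B\Z_2;\Z_2)$, and $H^*(L(2);\Z_2)$, with the $BS_4$ contribution embedded via the explicit homomorphism $j^*$ of \cref{Thm:SteenrodBPSL27} and the $B\Z_2$ contribution generated by a degree-one class. The remaining $\cA(1)$-summand must be recognized as a free $\cA(1)$-module with a generator in degree $4$ (plus higher-degree contributions), matching the characterization of $H^*(L(2);\Z_2)$ in \cref{A1_triv_L2}. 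Once this $\cA(1)$-module decomposition is established, Whitehead's theorem in the $2$-complete finite-type stable category upgrades the cohomological match to the claimed $2$-local stable equivalence.
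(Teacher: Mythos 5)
The paper cites this result to Mitchell--Priddy (Theorem~A) without proof, so there is no internal argument to compare against. Your proposal has a concrete gap at the $L(2)$ step: the map you construct is not independent of the $BS_4$ summand, so the candidate wedge map cannot be a cohomology isomorphism.

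You realize the $L(2)$ summand by composing a section of $\psi_4$ with the transfer $\mathrm{tr}\colon BS_4\to BD_8$; that is, your map $s_3\colon L(2)\to BD_8$ factors as $s_3=s_1\circ\mathrm{section}$, where $s_1=\mathrm{tr}$ is also your $BS_4$ map. On mod~$2$ cohomology this gives $s_3^*=\mathrm{section}^*\circ s_1^*$, and $\mathrm{section}^*\colon H^*(BS_4;\Z_2)\to\widetilde H^*(L(2);\Z_2)$ is surjective (it admits the section $\psi_4^*$). Consequently the image of
\begin{equation}
(s_1^*, s_2^*, s_3^*)\colon H^*(BD_8;\Z_2)\longrightarrow H^*(BS_4;\Z_2)\oplus\widetilde H^*(B\Z_2;\Z_2)\oplus\widetilde H^*(L(2);\Z_2)
\end{equation}
is contained in the proper graded subspace of triples $(a,c,b)$ with $b=\mathrm{section}^*(a)$, whose codimension equals $\dim_{\Z_2}\widetilde H^*(L(2);\Z_2)>0$. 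So your candidate map is not surjective on cohomology, hence not a stable equivalence, and the Whitehead step can never be invoked. In plainer language: $\mathrm{tr}$ picks out the $BS_4$ stable summand of $BD_8$, and by \cref{S4_splitting} that summand \emph{already contains} a copy of $L(2)$; combining the two Mitchell--Priddy theorems gives $BD_8\simeq L(2)^{\vee 2}\vee B\Z_2^{\vee 2}\vee B\SL(3,\F_2)$, with two genuinely independent $L(2)$ summands. Your construction hits the one inside $BS_4$, not the new one. Producing a map $L(2)\to BD_8$ that is independent of the $BS_4$ image is the actual content of Mitchell--Priddy's Theorem~A and is accomplished through their idempotent analysis of the stable endomorphism ring of $BD_8$, not a transfer argument. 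Secondarily, invoking \cref{S4_splitting} here is circular in spirit, since Mitchell--Priddy derive the $S_4$ splitting (Theorem~B) from the $D_8$ splitting (Theorem~A); and the emphasis on $\cA(1)$-module decompositions is misplaced, since the mod-$2$ Whitehead theorem requires a \emph{map} inducing an isomorphism on mod-$2$ cohomology, which is a statement about graded $\Z_2$-dimensions once the map is fixed (an abstract $\cA(1)$-module match is neither necessary nor sufficient).
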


Hence, part of our strategy will be to construct a seven dimensional spin manifold $X_7$ with $D_8$ bundle and then use \cref{D8_splitting} to get from $D_8$ to $S_4$. Before we begin with the construction of $X_7$, we recall the $\Z_2$ action on a lens space %\textcolor{red}{MD: Same notation issue for lens spaces as above, maybe write $L^{2n-1}_k$?} 
$L^{2n-1}_k$ by `complex conjugation'.

\begin{defn}[{\!\!\cite[Definition 14.72]{Heckman}}]
\label{complx_conj}
Let $\zeta$ be a primitive $k$th root of unity and $L^{2n-1}_k$ denote the lens space which is the quotient of $S^{2n-1}\subset \C^n$ by the $\Z_k$-action on $\C^n$ which is multiplication by $\zeta$, which preserves the unit sphere. Complex conjugation exchanges $\zeta$ with another primitive $k$th root of unity, and therefore the image of a $\Z_k$-orbit of $S^{2n-1}$ under complex conjugation is another $\Z_k$-orbit. Hence, this involution descends to an involution on $L^{2n-1}_k$, which is also referred to as complex conjugation.
\end{defn}

Recall that the K3 surface is a closed, simply connected, spin $4$-manifold. Thus it has a unique spin structure; let $\mathcal B\to \text{K3}$ denote its principal $\Spin_4$-bundle of frames. K3 has an orientation-preserving free involution $\psi\colon \text{K3}\to \text{K3}$ which does \emph{not} lift to an involution on $\mathcal B$; this means that the quotient by the involution, the \term{Enriques surface} $E$, has a Spin-$\Z_4$ structure but is not spin~\cite[\S C.4]{GEM19}.
\begin{defn}\label{X7_defn}
    Let $\Z_2$ act diagonally on $L_4^3\times \text{K3}$ such that the action on the lens space is by complex conjugation and the action on the K3 surface is free. We define $X_7$ to be the quotient. 
\end{defn}
As usual for manifolds defined in this manner, quotienting by the lens space defines a fiber bundle $\pi\colon X_7\to E$ with fiber $L_4^3$.

Notice that the complex conjugation involution passes to inversion on $\pi_1(L_4^3) = \Z_4$. Hence, since $\text{K3}$ is simply connected, $\pi_1(X_7) = D_8$. 
\begin{thm}
$X_7$ has a Spin structure.
\end{thm}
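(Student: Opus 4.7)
The plan is to verify directly that $w_1(X_7)=0$ and $w_2(X_7)=0$ using the smooth fiber bundle $\pi\colon X_7\to E$ with fiber $L^3_4$. The Whitney splitting $TX_7\cong T_\pi\oplus\pi^*TE$ reduces the problem to computing $w_*(T_\pi)$ and combining with the Stiefel--Whitney classes of the Enriques surface.

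To identify $T_\pi$, I would use that $L^3_4=S^3/\Z_4$ inherits a canonical left-invariant framing of its tangent bundle from $S^3=\Spin(3)$, since $\Z_4\subset S^3$ acts by left translation (which preserves left-invariant frames). Complex conjugation on $\C^2\cong\mathbb H$ coincides with conjugation by the unit quaternion $j$, whose differential acts on the basis $(i,j,k)$ of $\mathrm{Im}(\mathbb H)=T_1S^3$ as $\mathrm{diag}(-1,1,-1)\in\SO(3)$. Consequently $T_\pi$ is the vector bundle associated to the pulled-back double cover $L^3_4\times K3\to X_7$ via this representation, and splits as $T_\pi\cong L\oplus\underline{\R}\oplus L$, where $L\to X_7$ is the real line bundle whose first Stiefel--Whitney class is $\pi^*x$ for $x\in H^1(E;\Z_2)$ the class of the Enriques double cover $K3\to E$. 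Hence $w_1(T_\pi)=0$ and $w_2(T_\pi)=(\pi^*x)^2$.

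Since $w_1(E)=0$, the Whitney sum formula yields $w_1(X_7)=0$ at once. For $w_2$, I would invoke the identity $w_2(E)=x^2$ on the Enriques surface: because $K3$ is spin, naturality forces $w_2(E)\in\ker\bigl(p^*\colon H^2(E;\Z_2)\to H^2(K3;\Z_2)\bigr)$, this kernel equals the image of $\cup\,x$ via the standard transfer exact sequence for the double cover, and $w_2(E)\ne 0$ then pins down $w_2(E)=x^2$. Therefore
\begin{equation}
w_2(X_7)=w_2(T_\pi)+\pi^*w_2(E)=(\pi^*x)^2+(\pi^*x)^2=0,
\end{equation}
and so $X_7$ admits a spin structure. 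The crux is the identification of $T_\pi$; once the left-invariant framing is in hand and $\mathrm dc$ is read off on $\mathrm{Im}(\mathbb H)$, the conclusion reduces to a single cancellation matching the twist of the vertical tangent bundle against the Enriques class $w_2(E)$.
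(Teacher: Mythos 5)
Your argument is correct, but it reaches the conclusion by a genuinely different route than the paper. The paper works \emph{stably}, producing $TX_7 \oplus \underline{\R} \cong V \oplus V \oplus \pi^*TE$ with $V$ the rank-two real bundle associated to the $D_8$-cover $S^3 \times \mathrm{K3} \to X_7$ and its defining two-dimensional representation, and then identifies $w_1(V) = \pi^*(w_1(P))$ by comparing homomorphisms out of $\pi_1$. You instead exploit the \emph{unstable} parallelization of $L^3_4$: the $\Z_4$-action is left translation by $i \in S^3$, so the left-invariant framing descends to the lens space, and since complex conjugation is the Lie group automorphism $q \mapsto jqj^{-1}$ of $S^3$, its differential is the \emph{constant} $\mathrm{Ad}_j = \mathrm{diag}(-1,1,-1)$ in that frame; this exhibits $T_\pi \cong \pi^*L \oplus \underline{\R} \oplus \pi^*L$ directly. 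You also differ on $w_2(E) = x^2$, obtaining it from the Gysin sequence for $\mathrm{K3}\to E$ together with K3 being spin, where the paper cites the Spin-$\Z_4$ structure on $E$; the two are equivalent. Your route is more elementary and concrete, at the cost of leaning on the coincidence $L^3_4 \subset S^3 = \Spin(3)$, whereas the paper's stable presentation of $TL^3_4 \oplus\underline{\R}$ as a sum of associated line bundles adapts without change to higher-dimensional lens space fibers. One step worth making explicit in your write-up: the constancy of $\mathrm{Ad}_j$ relative to the left-invariant frame is not just the value of $dc$ at $1\in S^3$ but uses that $q\mapsto jqj^{-1}$ is a group \emph{automorphism}, so that it sends left-invariant vector fields to left-invariant vector fields with a fixed transition matrix; as written this is only implicit in the word ``consequently.''
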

\begin{proof}
First we need to show $X_7$ is orientable, i.e.\ that the involution in \cref{X7_defn} is orientation-preserving. This follows because complex conjugation on $L_4^3$ is orientation-preserving~\cite[\S 14.3.5]{Heckman} and the involution on K3 is orientation-preserving, as noted above.

In order to calculate $w_2$ and assess whether $X_7$ has a spin structure, we need more information on $TX_7$. We can stably split this vector bundle in the same manner as in~\cite[\S\S14.3.3, 14.3.5, 14.3.6]{Heckman}: specifically, following the same line of proof as in~\cite[Proposition 14.74]{Heckman}, we obtain an isomorphism
\begin{equation}\label{stable_splitting_X7}
    TX_7\oplus\underline\R \overset\cong\longrightarrow V\oplus V\oplus \pi^*(TE),
\end{equation}
where $V$ is the rank-two vector bundle associated to the principal $D_8$-bundle which is the universal cover $S^3\times \text{K3}\to X_7$ and the defining two-dimensional real representation of $D_8$.

Apply the Whitney sum formula to~\eqref{stable_splitting_X7} to deduce
\begin{equation}
    w_2(X_7) = w_1(V)^2 + \pi^*(w_2(E)).
\end{equation}
We want to show this vanishes. Since $E$ is Spin-$\Z_4$ but not Spin, there is a nontrivial principal $\Z_2$-bundle $P\to E$ such that $w_2(E) = w_1(P)^2$,\footnote{From~\cite[Footnote 13]{TY19}, we learn that a Spin-$\Z_4$ structure is equivalent to a $P$ and a Spin structure on $E\oplus \sigma_P^{\oplus 2}$, where $\sigma_P$ is the real line bundle associated to $P$. To get the condition we claim, use the Whitney sum formula.} and since $\pi_1(E)\cong\Z_2$, there is only one such $P$: the double cover $K3\to E$. Thus $w_2(X_7) = w_1(V)^2 + \pi^*(w_1(P))^2$, and so to prove the theorem it suffices to show $w_1(V) = \pi^*(w_1(P))$.

The classes $w_1(V)$ and $\pi^*(w_1(P))$ are equivalent to group homomorphisms $\pi_1(X_7)\cong D_8\to \Z_2$; $w_1(V)$ is identified with the map $D_8\to\Z_2$ given by quotienting by rotations, and $\pi^*(w_1(P))$ is the map $\pi_1(X_7)\overset{\pi_*}{\to}\pi_1(E)\overset\cong\to\Z_2$. Thus it suffices to know that the map $\pi_1(X_7)\to\pi_1(E)$ is exactly the map $D_8\to\Z_2$ that quotients by the rotation subgroup, which follows from the long exact sequence of homotopy groups of the fiber bundle $L_4^3\to X_7\to E$ and the fact that the map $\pi_1(L_4^3)\to\pi_1(X_7)$ is exactly the inclusion of the rotation subgroup $\Z_4\hookrightarrow D_8$.
\end{proof}

\begin{thm}
    $X_7$ with its $D_8$ bundle is nonzero in $\Omega^{\Spin}_7(BD_8)$ and linearly independent from the generator $L_4^7$ in $\Omega^{\Spin}_7(BD_8)$. 
\end{thm}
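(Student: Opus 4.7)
The strategy is to exhibit a mod-$2$ characteristic number of $BD_8$-bundles that vanishes on every integer multiple of $[L_4^7]$ but is nonzero on $[X_7]$. The crucial observation is that the $D_8$-bundle on $L_4^7$ factors through the rotation subgroup $\Z_4\hookrightarrow D_8$, so the reflection character $x_1\in H^1(BD_8;\Z_2)$---which is nonzero on the reflection quotient but restricts to $0$ on the rotation subgroup---pulls back to $0$ on $L_4^7$. Hence $\int_{L_4^7} x_1\cdot\alpha = 0$ for every $\alpha\in H^6(BD_8;\Z_2)$, and therefore also for any integer multiple $n[L_4^7]$.

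For $X_7$, by contrast, the reflection quotient of the $D_8$-bundle is genuinely nontrivial: it is precisely the double cover $L_4^3\times K3\to X_7$, so $x_1|_{X_7} = \pi^* u$, where $u$ generates $H^1(E;\Z_2)$ and $\pi\colon X_7\to E$ is the bundle projection from the proof of the previous theorem. My plan is to use the Serre spectral sequence for the lens-space bundle $L_4^3\to X_7\to E$, together with the known Steenrod action on $H^*(E;\Z_2)$ and on $H^*(L_4^3;\Z_2)$, to exhibit a specific $\alpha\in H^6(BD_8;\Z_2)$ with $\int_{X_7} x_1\alpha\ne 0$. Natural candidates are monomials in $w$ and $x_2$, since $w$ restricts to the degree-$2$ fiber generator $y\in H^2(L_4^3;\Z_2)$, so fiber integration of $w$-monomials along $\pi$ is tractable, and pairing the result with $u$ produces a top-degree class on $E$.

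In parallel I would pursue a cleaner route via the Mitchell-Priddy stable splitting $BD_8\simeq BS_4\vee L(2)\vee B\Z_2$, pushing $[X_7]$ forward to $\Omega_7^{\Spin}(BS_4)$ along the inclusion $D_8\hookrightarrow S_4$. By Bayen's computation (\cref{algebra_koPSL}) one has $\widetilde\Omega_7^{\Spin}(BS_4)\cong\Z_2\cdot(\epsilon-2\zeta)\oplus\Z_{16}\cdot\zeta$ with $\zeta = [L_4^7]_{S_4}$, so it suffices to show that the $(\epsilon-2\zeta)$-component of the push-forward of $[X_7]$ is nonzero. The ring relation $v\delta = 4\epsilon$ in $\ko_*(BS_4)$, combined with $[L_4^3\times K3]_{S_4} = v\cdot\delta$ and a Becker-Gottlieb-type transfer identity comparing $[X_7]$ with $[L_4^3\times K3]$ via the double cover $L_4^3\times K3\to X_7$, should pin down this component.

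The main obstacle in either approach is the final detection step. In the cohomological route, this means carrying out the Serre spectral sequence of $L_4^3\to X_7\to E$ carefully enough to compute the map $H^*(BD_8;\Z_2)\to H^*(X_7;\Z_2)$ and evaluate the chosen top-degree monomial; one must also verify compatibility with the spin structure on $X_7$ constructed above. In the Mitchell-Priddy route, the subtlety is that $\epsilon$ lies in Adams filtration $1$ and is therefore not detected by any primary mod-$2$ cohomology class; one must instead compare bordism classes via $\ko$-theoretic characteristic numbers (or via a secondary Bockstein-type construction built from the $\ExZ$-module structure on the Adams $E_\infty$-page recorded in \cref{algebra_koPSL}).
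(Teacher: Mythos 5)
The paper's proof takes a fundamentally different route from both of your proposals. It uses the \emph{transfer} $T\colon \Omega^{\Spin}_7(BD_8)\to\Omega^{\Spin}_7(B\Z_4)$ that sends a manifold with $D_8$-bundle to the $\Z_2$-cover determined by the rotation subgroup $\Z_4\subset D_8$, together with the induced $\Z_4$-bundle on that cover. Applied to $L_4^7$, whose $D_8$-bundle is induced from $\Z_4$, the $\Z_2$-cover is the trivial double cover, which bounds $[0,1]\times L_4^7$; so $T[L_4^7]=0$. Applied to $X_7$, by construction $T[X_7]=[L_4^3\times K3]$, and this class is nonzero in $\Omega^{\Spin}_7(B\Z_4)$ because the Dirac $\eta$-invariant $\eta^D_1-\eta^D_0$ of $L_4^3$ is $-3/8$ and K3 has Dirac index $-2$, giving the value $3/4\ne 0\pmod 1$. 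Since $T$ annihilates $L_4^7$ but not $X_7$, the two classes are linearly independent.

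Your observation that $x_1$ restricts to zero on $L_4^7$ (because the $D_8$-bundle factors through the rotation subgroup) is essentially the same observation that makes the transfer argument work; the difference is how one detects the \emph{nonvanishing} on $X_7$. Your Approach~1, via a mod-$2$ characteristic number $\int x_1\alpha$, runs into a real obstruction that the paper itself flags: the class $\epsilon\in\widetilde\ko_7(B\SL(3,\F_2))$ that $X_7$ represents has Adams filtration $1$ (its image on the $E_\infty$-page is $e\in\Ext^{1,8}$, and the table of generators records its characteristic class as ``n/a''), so it is invisible to every mod-$2$ cohomology characteristic number pulled back from $B\SL(3,\F_2)$ or $BS_4$. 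The only hope for a mod-$2$ detector would be the extra $B\Z_2$ or $L(2)$ summands of the Mitchell--Priddy splitting of $BD_8$, but $\Omega^{\Spin}_7(L(2))=0$, and the natural $B\Z_2$-candidate $\int_{X_7}x_1^7$ vanishes because $x_1|_{X_7}=\pi^*u$ with $u\in H^1(E;\Z_2)$ and $E$ a $4$-manifold, so $x_1^4|_{X_7}=\pi^*(u^4)=0$. So your proposed Serre-spectral-sequence computation would, after a fair amount of work, return zero for every candidate $\alpha$.

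Your Approach~2 is conceptually on the right track and is close in spirit to the paper's. The Becker--Gottlieb-type transfer you describe is, after unwinding, exactly the transfer $T$ the paper uses (restriction to the index-$2$ subgroup $\Z_4$, i.e.\ passage to the $\Z_2$-cover); and you correctly identify that the $\epsilon$-component cannot be detected by a primary mod-$2$ class and requires a $\ko$-theoretic or secondary invariant. The paper resolves that obstacle by simply working in $\Omega^{\Spin}_7(B\Z_4)$, where the target class $L_4^3\times K3$ \emph{can} be detected, namely by the $\eta$-invariant $\eta^D_1-\eta^D_0$, and where the image of $L_4^7$ under $T$ is zero for a trivial reason. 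If you replace your proposed ``secondary Bockstein-type construction'' with the concrete $\eta$-invariant computation on the $\Z_4$-cover, your Approach~2 becomes the paper's argument.
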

\begin{proof}
Suppose that $M$ is a spin manifold with $D_8$ bundle $P \rightarrow M$. The quotient of $P$ by $\Z_4 \subset D_8$ is a $\Z_2$-bundle $\widetilde{M}$ which is a double cover of $M$. The assignment $M \mapsto \widetilde{M}$ defines a homomorphism 
\begin{equation}
    T:\Omega^{\Spin}_k(BD_8)\rightarrow \Omega^{\Spin}_k(B\Z_4)
\end{equation}
The map $T$ is commonly referred to as the `transfer.'

Applying the map $T$ to $X_7$, we get $L_4^3 \times \text{K3}$. The manifold $L_4^3 \times \text{K3}$ is non-trivial in $\Omega^{\Spin}_7(B\Z_4)$. Indeed, the bordism invariant $\eta^D_1-\eta^D_0$ of the lens space $L_4^3$ is $-3/8$~\cite[Table 17]{Heckman}. With this, and the fact that 
\begin{equation}
    \eta^D(L_4^3 \times \text{K3}) = \text{Index}^{D}(\text{K3}) \, \eta^D(L_4^3),
\end{equation}
it follows that the bordism invariant $\eta^D_1-\eta^D_0$ evaluated on $L^3_4 \times \text{K3}$ is $3/4$. In particular, the manifold is non-trivial in bordism. Note here that we have used the fact that the $K3$ has Dirac index $(-2)$. 

The other generator we have for $\Omega^{\Spin}_7(BD_8)$ is $L_4^7$, see \cref{lens_S4} and \cref{D8_splitting} for reference. This generator vanishes when we apply the map $T$: its $D_8$-bundle is induced from a $\Z_4$-bundle (\cref{lens_S4}), and therefore $T$ produces the trivial double cover, which bounds $[0,1]\times L_4^7$. Hence, we conclude that $X_7$ is linearly independent from $L_4^7$ in $\Omega^{\Spin}_7(BD_8)$. 
\end{proof}

\subsubsection{\texorpdfstring{The generator $W_4$ of $\ko_4(L(2))$}{The generator W4 of ko4(L(2))}} 
\label{sec:W4}
The generator $W_4$ of $\ko_4(L(2))$ is detected by the $\Z_2$ cohomology class $a^2 b $. Consider the inclusion of $D_8$ into $S_4$. We have already considered the induced map on $\Z_2$ cohomology in the context of Theorem \ref{Thm:SteenrodBPSL27}. Indeed, the inclusion of $D_8$ into SL$(3,\F_2)$ induces a map on $\Z_2$ cohomology sending $\nu_2, \nu_3, \overline \nu_3$ to $x_1^2 + w$, $x_1 w$, and $x_2 w$. The following lemma then follows from Remark \ref{Rem:PSL(2,7)andS4}. 

\begin{lem}
\label{D8inS4}
    The inclusion $D_8\hookrightarrow S_4$ defined by $(1\;2\;3\;4)$ and $(1\;3)$ induces a map on $\Z_2$ cohomology defined by 
    \begin{equation}
    \begin{gathered}
        k:H^*(BS_4;\Z_2)\rightarrow H^*(BD_8;\Z_2), \\
        a\mapsto x_1 + x_2,\quad b\mapsto x_1^2 + w,\quad  c\mapsto x_2 w
    \end{gathered}
    \end{equation}
\end{lem}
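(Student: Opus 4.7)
The plan is to exploit that $k$ is a ring homomorphism, so by Theorem~\ref{Thm:CohoS4} it suffices to determine $k(a)$, $k(b)$, and $k(c)$. The strategy differs for $a$ (degree $1$) versus $b,c$ (degrees $2,3$): the former is accessed directly via the sign character, while the latter are pulled back through the three-dimensional representation $S_4 \to \SL(3,\F_2)$.

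For $k(b)$ and $k(c)$, I would first observe that the composition
\begin{equation}
D_8 \hookrightarrow S_4 \xrightarrow{\ i\ } \SL(3,\F_2),
\end{equation}
where $i$ is the standard three-dimensional mod $2$ representation of $S_4$, agrees (up to conjugation in $\SL(3,\F_2)$) with the inclusion $j\colon D_8 \hookrightarrow \SL(3,\F_2)$ appearing in Theorem~\ref{Thm:SteenrodBPSL27}. Since inner automorphisms act trivially on classifying spaces up to homotopy, the induced maps on $\Z_2$-cohomology coincide. Combining Remark~\ref{Rem:PSL(2,7)andS4}, which identifies $b \leftrightarrow \nu_2$ and $c \leftrightarrow \overline{\nu}_3$, with Theorem~\ref{Thm:SteenrodBPSL27}, which gives $j^*(\nu_2) = x_1^2 + w$ and $j^*(\overline{\nu}_3) = x_2 w$, yields $k(b) = x_1^2 + w$ and $k(c) = x_2 w$ immediately.

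For $k(a)$, I would use that $a \in H^1(BS_4;\Z_2) \cong \Hom(S_4,\Z_2)$ is the sign character, since it is the first Stiefel-Whitney class of the tetrahedral permutation representation (cf.~\cite[Example 3.31]{MM79}). Under the inclusion, the generator $(1\ 2\ 3\ 4)$ is a $4$-cycle and $(1\ 3)$ is a transposition, both of which are odd permutations. Hence the pulled-back character sends $r \mapsto 1$ and $s \mapsto 1$ on the standard generators of $D_8$. Under the isomorphism $H^1(BD_8;\Z_2) \cong \Hom(D_8^{\mathrm{ab}}, \Z_2)$ and the convention for $x_1,x_2$ fixed by Handel's presentation, this character is the sum $x_1 + x_2$.

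The only potentially delicate step is pinning down the conjugacy claim relating the two embeddings $D_8 \to \SL(3,\F_2)$, as well as matching conventions between Handel's $x_1, x_2$ and the two independent characters of $D_8^{\mathrm{ab}}$. Both reduce to an explicit matrix verification: one can either diagonalize the two $\SL(3,\F_2)$-inclusions and compare, or, more conceptually, use that both realize the same three-dimensional permutation action of $D_8 \subset S_4$ on $\F_2^{\{1,2,3,4\}}/\mathrm{diag}$. Modulo this bookkeeping, the proof is complete.
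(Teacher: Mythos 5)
Your proposal follows essentially the same route as the paper's one-sentence proof: pull $\nu_2, \nu_3, \overline{\nu}_3$ back through $D_8 \hookrightarrow S_4 \hookrightarrow \SL(3,\F_2)$ using \cref{Thm:SteenrodBPSL27} and \cref{Rem:PSL(2,7)andS4}, and determine $k(a)$ separately from the sign character. Your argument for $k(a)$ is clean and makes explicit what the paper leaves implicit. However, the ``bookkeeping'' you flag and defer at the end is precisely where the real content lives, and taking the Remark at face value leaves a gap that is not merely cosmetic.

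The quickest diagnostic is $\Sq^1$-naturality on $b$. From \cref{Thm:CohoS4}, $\Sq^1(b) = ab + c$, so any pullback $k$ must satisfy $k(a)k(b) + k(c) = \Sq^1(k(b))$. With the claimed formulas this reads
\begin{equation}
(x_1 + x_2)(x_1^2 + w) + x_2 w \;\overset{?}{=}\; \Sq^1(x_1^2 + w) = x_1 w,
\end{equation}
which forces $x_1^2(x_1 + x_2) = 0$ in $\Z_2[x_1, x_2, w]/(x_2^2 + x_1 x_2)$; but this class is nonzero (setting $x_2 = 0$ gives $x_1^3$). The source of the tension is \cref{Rem:PSL(2,7)andS4}: the $\cA(1)$-module structure does not by itself distinguish $\nu_2 \mapsto b$ from $\nu_2 \mapsto a^2 + b$ — both are compatible with the stated $\Sq$-actions — so the Remark cannot be invoked as ``immediate.'' Resolving this requires the more elementary route you mention only in passing: compute $w_\bullet(\boldsymbol{4}|_{D_8})$ directly, or restrict to the Klein four-subgroup $\langle (1\ 3)(2\ 4),\, (1\ 3)\rangle \subset D_8$ and compare with the restriction of $b = w_2(\boldsymbol{4})$. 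Doing so changes $k(b)$ to $x_1 x_2 + w$, and this is exactly the kind of convention-matching you should not leave to ``modulo this bookkeeping.'' (The end use in \S\ref{sec:W4} is insensitive to this, since $x_1^2$ dies on $W_4$, so the integral $\int_{W_4} k(a^2 b) = 1$ comes out the same either way — but the lemma, and a proof of it, should be correct on its own terms.)
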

The generator of $\widetilde{\ko}_4(L(2))$ is detected by $a^2b$ (\cref{complete_L2_decomp}). Hence, it suffices to determine a four dimensional manifold $W_4$ with $D_8$ bundle such that $\int_{W_4}k^*(ab^2)\neq 0$.  

\begin{defn}
    Let $\Z_2$ act on $L_4^3\times S^1$ by complex conjugation on the lens space, see \cref{complx_conj}, and the antipodal map on $S^1$. We define $W_4$ to be the quotient.
\end{defn}
We remark that $W_4$ is can be equivalently regarded as the quotient of $S^3 \times S^1$ by the $D_8$ action generated by two diffeomorphisms $r$ and $s$: $r$ is multiplication by $i$ on $S^3$ and the identity on $S^1$ and $s$ is reflection on $S^3$ and the antipodal map on $S^1$. 

\begin{lem}
    $W_4$ is orientable and spin.
\end{lem}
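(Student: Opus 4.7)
The plan is to mirror the proof that $X_7$ is spin, adapting it to the simpler base $S^1/\Z_2$ in place of the Enriques surface $E$.

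For orientability, I would first note that complex conjugation on $S^3\subset\C^2$ has real Jacobian $(-1)^2=+1$, so it is orientation-preserving on $S^3$ and descends to an orientation-preserving involution on $L_4^3$. The antipodal map on $S^1$ is rotation by $\pi$, hence also orientation-preserving. Therefore the diagonal $\Z_2$-action on $L_4^3\times S^1$ is orientation-preserving and the quotient $W_4$ is orientable. (One can cite~\cite[\S 14.3.5]{Heckman} for the complex-conjugation calculation, as the proof of the preceding theorem does.)

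For the spin structure, I would compute $w_2(W_4)$ using the stable splitting of $TW_4$ obtained from the $D_8$-cover $S^3\times S^1\to W_4$, exactly as in~\eqref{stable_splitting_X7}. Embedding $S^3\times S^1\subset \C^2\times\C$ $D_8$-equivariantly (with $D_8$ acting on $\C^2$ as in \S\ref{sec:W4} and on $\C$ by $r\mapsto 1$, $s\mapsto -1$), the ambient trivial bundle decomposes as $V\oplus V$ on the $\C^2$ factor (the defining 2-dim real representation appearing twice) and as $L\oplus L$ on the $\C$ factor (where $L$ is the 1-dim sign representation with kernel the rotation subgroup $\Z_4\subset D_8$). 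Subtracting the normal bundles yields
\begin{equation}
    TW_4\oplus\underline{\R} \;\cong\; V\oplus V\oplus L,
\end{equation}
and the Whitney sum formula then gives $w_2(W_4) = w_1(V)^2 + w_1(V\oplus V)w_1(L) + w_2(L) = w_1(V)^2$ modulo $2$.

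It then remains to show $w_1(V)^2=0$. The class $w_1(V)$ is the determinant character of the defining representation of $D_8$, which is the unique homomorphism $D_8\to\Z_2$ killing the rotation subgroup. Pulled back to $\pi_1(W_4)=\Z_4\rtimes\Z$, this character factors through the mod-$2$ reduction of the projection to $\Z$, that is, through the mapping-torus projection $\pi\colon W_4\to S^1/\Z_2\cong S^1$. Hence $w_1(V)=\pi^*\xi$ for some $\xi\in H^1(S^1;\Z_2)$, and since $H^2(S^1;\Z_2)=0$ we conclude $w_1(V)^2=\pi^*(\xi^2)=0$. Thus $w_2(W_4)=0$, so $W_4$ admits a spin structure.

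The main obstacle is the equivariant tangent bundle bookkeeping; in particular, recognizing that the normal direction along $S^1\subset\C$ carries a nontrivial $D_8$-action (producing the extra $L$ factor that is absent in the $X_7$ calculation), and verifying the $\cA$-module structure giving $\mathbb{C}^2\cong V\oplus V$ and $\mathbb{C}\cong L\oplus L$ as real $D_8$-representations. Once these identifications are correctly set up, the remaining cohomological argument is formally identical to the corresponding step for $X_7$, only simpler because the base $S^1$ is one-dimensional and hence contributes no $w_2$-type obstruction.
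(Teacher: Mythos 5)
Your argument for orientability is fine and matches the paper's. The issue is with the stable tangent bundle decomposition.

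Your claim $TW_4\oplus\underline{\R}\cong V\oplus V\oplus L$ cannot be right: taking $w_1$ of both sides gives $w_1(W_4)=2w_1(V)+w_1(L)=w_1(L)$. Since $L=\det V$ is associated to the surjective character $\pi_1(W_4)\surj D_8\to\Z_2$ (killing only the rotation subgroup), $w_1(L)\neq 0$, contradicting the orientability you proved one paragraph earlier. The source of the error is in the equivariant bookkeeping for the $\C$-factor of the ambient bundle. You are right that $\underline{\C}|_{S^1}$ is $L\oplus L$ as a $D_8$-representation (with $s$ acting by $-\mathrm{id}$). But the normal bundle of $S^1\subset\C$ is the \emph{trivial} equivariant line bundle, not $L$: at $p=e^{i\theta}$ the derivative $ds_p=-1$ sends the outward unit normal $p$ to $-p$, which is $+1$ times the outward unit normal at the image point $-p$. (The same computation shows $TS^1$ is also equivariantly trivial. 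This is consistent with $L\oplus L$ being trivial as a bundle on $W_4$: the fiberwise isomorphism $(p,v)\mapsto(p,v/p)$ intertwines the $L\oplus L$-action with the trivial action.) After subtracting the genuinely trivial normal directions you get $TW_4\oplus\underline{\R}^2\cong V^2\oplus L^2 = V^2\oplus\sigma^2$, which is the paper's decomposition, and whose Whitney product gives $w_2(W_4)=w_1(V)^2+w_1(\sigma)^2$.

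With the corrected decomposition, the conclusion needs one more step, and here your pullback observation is a valid alternative to the paper's. The paper simply notes $\sigma=\det V$, so $w_1(V)=w_1(\sigma)$ and the two terms cancel mod $2$. Your route shows $w_1(V)^2=0$ directly: since $\pi_1(L_4^3)=\Z_4$ maps into the rotation subgroup $\Z_4\subset D_8$, which is the kernel of $\det$, the class $w_1(V)$ factors through $\pi_1(W_4)\to\pi_1(\RP^1)$, hence is pulled back from $H^1(\RP^1;\Z_2)$, whose square lands in $H^2(\RP^1;\Z_2)=0$. (Indeed $w_1(V)=x$ in the notation of \cref{CohoW4}, and $x^2=0$ there.) The same argument applies verbatim to $w_1(\sigma)$. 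Either closing step is fine, but you must first fix the bundle decomposition; as written, the degree-one check already falsifies it.
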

\begin{proof}
We first check that $W_4$ is orientable, or equivalently, whether the involution we used to define it is orientation preserving. The antipodal map is orientation preserving on $S^1$, and complex conjugation is orientation preserving on $L_4^3$. Therefore, the combination of these two involutions is orientation preserving.    

Next we check that $W_4$ is spin. We follow a similar approach as that used in the proof of \cite[Proposition 14.74]{Heckman}. First, we remark that there is an isomorphism of vector bundles 
\begin{equation}
    T(L_4^3 \times S^1) \oplus \underline{\R}^2 \xrightarrow{\cong} \mathcal{L}^2 \oplus \underline{\R}^2,
\end{equation}
where $\mathcal{L} \rightarrow L^{3}_4$ is the quotient of $\underline{\C} \rightarrow S^{3}$ by the $\Z_4$ action. 

Letting $\Z_2$ act on $L_4^3 \times S^1$ by complex conjugation on $L_4^3$ and the antipodal map on $S^1$, so that the quotient is $W_4$, there is an isomorphism of vector bundles
\begin{equation}
    TW_4 \oplus \underline{\R}^2\xrightarrow{\cong} V^2 \oplus \sigma^2 
\end{equation}
where $V$ and $\sigma$ are as follows: if $P \rightarrow W_4$ denotes the quotient $S^3 \times S^1 \rightarrow W_4 $, which is a principal $D_8$-bundle, then $\sigma$ is associated to $P$ and the sign representation $D_8 \rightarrow O(1)$ sending rotations to $1$ and reflections to $-1$, and $V$ is associated to the standard representation $D_8 \rightarrow O(2)$ as rotations and reflections on $\R^2$. Thus, $\sigma = \mathrm{Det}(V )$, so $w_1(V ) = w_1(\sigma)$. Using that $w_2( \sigma) = 0$, an application of the Whitney sum formula then reveals that $w_2(W_4) = 0$. 
\end{proof}

All that's left to do is verify that $\int_{W_4}k^*(ab^2)\neq 0$. To proceed, we need the $\Z_2$ cohomology of $W_4$. Notice that $W_4$ is obtained from $L_4^3\times S^1$ by a further $\Z_2$ quotient. Since $\Z_2$ acts as sign reversal in the base the resulting spaces are given by the fiber bundle $L_4^3 \hookrightarrow W_4 \rightarrow \mathbb{R}\mathbb{P}^1$. 

\begin{lem}
\label{CohoW4}
    The Serre spectral sequence for the fiber bundle $L_4^3 \rightarrow W_4\rightarrow \R\mathbb{P}^1$ collapses, providing an isomorphism 
    \begin{equation}
    \begin{gathered}
        H^*(W_4;\Z_2) \cong \Z_2[x,y,w]/(x^2, xy + y^2, w^2),\\
        |x| = |y| = 1,\quad |w| =2 
    \end{gathered}
    \end{equation}
\end{lem}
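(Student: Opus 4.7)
The plan is to run the Serre spectral sequence of $L_4^3 \to W_4 \to \RP^1$ and to resolve the resulting hidden multiplicative extension using the classifying map of the principal $D_8$-cover $S^3 \times S^1 \to W_4$.

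First, I would determine the mod-two cohomology of the fiber. Since $L_4^3$ is the $3$-skeleton of $B\Z_4$, pulling back \Cref{Thm:CohoZ4} gives $H^*(L_4^3;\Z_2) \cong \Z_2\{1, y_L, w_L, y_L w_L\}$ with $|y_L|=1$, $|w_L|=2$, and $y_L^2 = 0$. The monodromy of the fibration is complex conjugation, which acts on $\pi_1(L_4^3) = \Z_4$ as inversion; since inversion on $\Z_4$ reduces to the identity on $\Z_4/2\Z_4 \cong \Z_2$, the induced action on $H^1(L_4^3;\Z_2) = \Hom(\Z_4,\Z_2)$ is trivial, and naturality of the Bockstein $\beta_{\Z_4}\colon H^1 \to H^2$ extends this to all of $H^*(L_4^3;\Z_2)$. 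Thus the local system is trivial and $E_2^{p,q} \cong H^p(\RP^1;\Z_2) \otimes H^q(L_4^3;\Z_2)$.

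Since $\RP^1 \cong S^1$ is one-dimensional, $E_2^{p,q}$ vanishes for $p \geq 2$, so all differentials $d_r$ with $r \geq 2$ are trivial on dimensional grounds, giving $E_\infty = E_2$ and additive dimensions $(1,2,2,2,1)$ that match the proposed presentation. The relations $x^2 = 0$ (inherited from $H^*(\RP^1;\Z_2)$) and $w^2 = 0$ (from $w_L^2 = 0$ in the fiber, with the only possible hidden correction sitting in $E_\infty^{1,3}$ and ruled out by restricting to the fiber) are routine.

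The substantive step is the hidden multiplicative extension $y^2 = xy$: on $E_\infty$ the class $y^2$ vanishes, but any lift to $H^2(W_4;\Z_2)$ is ambiguous modulo $F^1 H^2(W_4;\Z_2) \cong \Z_2\{xy\}$. To pin it down I would compute $\Sq^1 y$, which equals $y^2$ because $|y|=1$, by pulling back from $BD_8$ via the classifying map $\phi\colon W_4 \to BD_8$ of the principal $D_8$-cover. Handel's relation $\Sq^1 x_2 = x_2^2 = x_1 x_2$ in $H^*(BD_8;\Z_2)$ then yields $\Sq^1 y = xy$, provided we identify $\phi^*(x_1) = x$ and $\phi^*(x_2) = y$; the main obstacle is precisely this identification. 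I would fix it by restricting along the rotation subgroup $\Z_4 \hookrightarrow D_8$: the relation $\Sq^1 x_2 = x_1 x_2$ together with $\Sq^1 y_{\Z_4} = 0$ on $B\Z_4$ forces exactly one of $x_1, x_2$ to vanish under restriction to $B\Z_4$, and the requirement that $\phi^*(x_2)$ restrict to the generator of the fiber $H^1(L_4^3;\Z_2)$ distinguishes them and yields the required identification up to a choice of lift of $y$ modulo $x$.
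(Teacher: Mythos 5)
Your approach is the same as the paper's: collapse the Serre spectral sequence on dimensional grounds (there are only two columns, since $\RP^1\cong S^1$), then resolve the single hidden multiplicative extension $y^2 = xy$ by pulling back Handel's relation $x_2^2 = x_1 x_2$ in $H^*(BD_8;\Z_2)$ along the classifying map $\phi\colon W_4\to BD_8$ of the $D_8$-cover $S^3\times S^1\to W_4$. The paper dispatches the identification $\phi^*(x_1)=x$, $\phi^*(x_2)=y$ in a single sentence, so your instinct to justify it carefully is good, and your preliminaries (triviality of the local system, $x^2=0$, $w^2=0$) are all fine.

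However, the specific justification you give for the identification does not work as written. You claim that the relation $\Sq^1 x_2 = x_1 x_2$ together with $\Sq^1 u = 0$ in $H^*(B\Z_4;\Z_2)$ forces exactly one of $x_1,x_2$ to restrict to zero on $B\Z_4$. That is false: if both $x_1$ and $x_2$ restricted to the degree-one generator $u$, both sides of $\Sq^1 x_2 = x_1 x_2$ would restrict to $u^2 = 0$, so no contradiction arises and the relation imposes no constraint. The statement that it is $x_1$ (and not $x_2$) which restricts to zero on $B\Z_4$ is true, but needs a different input. One clean route is Handel's $\Sq w = w + x_1 w + w^2$, i.e.\ $\Sq^1 w = x_1 w$: restricting to $B\Z_4$, where $w\mapsto v$ (the degree-two generator) and $\Sq^1 v = 0$, gives $(\mathrm{res}\,x_1)\cdot v = 0$, and since $v$ is not a zero-divisor in low degree in $\Z_2[u,v]/(u^2)$ this forces $\mathrm{res}(x_1)=0$. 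Equivalently, $x_1 = w_1$ of the defining two-dimensional representation of $D_8$, which is the determinant homomorphism and so kills the rotation subgroup $\Z_4$. With $\mathrm{res}(x_1)=0$ in hand, so that $\phi^*(x_1)$ vanishes when restricted to the fiber $L_4^3$, and the observation that $\phi^*$ is injective on $H^1$ (which follows from $\pi_1(W_4)\twoheadrightarrow D_8$ inducing an isomorphism on $\Hom(-,\Z_2)$), you conclude $\phi^*(x_1)=x$ and $\phi^*(x_2)\in\{y,y+x\}$, and the rest of your argument closes up.
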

\begin{proof} The spectral sequence computing $H^*(W_4;\Z_2)$ is given in Figure \ref{W4Serre}. Clearly, the spectral sequence collapses; there is no room for differentials. 

\begin{figure}[h!]
\centering
\begin{sseqdata}[name=W4Serre, cohomological Serre grading, xrange={0}{1}, yrange={0}{3},
classes={draw=none}, >=stealth, xscale=2,
x label = {$\displaystyle{q\uparrow \atop p\rightarrow}$},
x label style = {font = \small, xshift = -11ex, yshift=5.4ex}
]
	\class["1"](0, 0)
	\class["x"](1, 0)

	\class["y"](0, 1)
	\class["xy"](1, 1)

	\class["w"](0, 2)
	\class["wx"](1, 2)

	\class["wy"](0, 3)
	\class["wxy"](1, 3)

\end{sseqdata}
\printpage[name=W4Serre, page=2]
\caption{The Serre spectral sequence computing $H^*(W_4; \Z_2)$.}
\label{W4Serre}
\end{figure}
The multiplicative structure is clear except for the relation $xy + y^2 = 0$. This follows from the fact that $x$ and $y$ pull back from $BD_8$, and $x_1x_2 + x_2^2 = 0$ in $H^*(BD_8; \Z_2)$.

\end{proof}
A straightforward application of \cref{D8inS4} reveals that 
\begin{equation}
    k^*(ab^2) = w(x_1 + x_2)^2 + (x_1+x_2)^4
\end{equation}
Lemma \ref{CohoW4} tells us that all terms vanish except $wx_2^2$. Hence, $\int_{W_4} a^2 b = 1$, and we conclude that $W_4$ represents the unique nonzero class in $\ko_4(L(2))$.

\section{\texorpdfstring{Calculations and generators for $B\Z_4\times BS_4$}{Calculations and generators for BZ4 x BS4}}
\label{2loc_smash}
The last piece of our computation is the most difficult: the low-degree reduced spin bordism groups of $B\Z_4\times BS_4$. There are a few immediate simplifications we can make:
\begin{enumerate}
    \item By \cref{distributivity}, $\widetilde \Omega_*^\Spin(B\Z_4\times BS_4)$ is a direct sum of $\widetilde\Omega_*^\Spin(B\Z_4)$, $\widetilde\Omega_*^\Spin(BS_4)$, and $\widetilde\Omega_*^\Spin(B\Z_4\wedge BS_4)$. We calculated the first two summands in low degrees in \cref{ringstr_Z4,algebra_koPSL,complete_L2_decomp}, so in this section we will only focus on $B\Z_4\wedge BS_4$.
    \item By \cref{S4_splitting}, $B\Z_4\wedge BS_4$ splits stably as a wedge sum of $B\Z_4\wedge B\Z_2$ (which we can ignore, by \cref{why_S4}), $B\Z_4\wedge B\SL_3(\F_2)$, and $B\Z_4\wedge L(2)$. We will tackle the latter two summands separately.
\end{enumerate}
The most powerful simplifying technique is to work $B\Z_4$-equivariantly. Generally for a group $G$, $BG$ is not a topological group, but for $A$ an abelian group there is a model for $BA$ which is a topological abelian group. Thus we can ask how $\Omega_*^\Spin(B\Z_4)$ acts on $\widetilde\Omega_*^\Spin(B\Z_4\wedge BS_4)$, similar to our $3$-primary analysis in \S\ref{ss:Z3D3}. By \cref{Thm:Kunneth}, this action is also present in the Adams spectral sequence, and differentials satisfy a Leibniz rule for it.

We take advantage of this extra symmetry in two ways: in \S\ref{Z4S4Adams}, we use it to cleanly describe and compute the differentials in the Adams spectral sequence for $\widetilde\Omega_*^\Spin(B\Z_4\wedge BS_4)$ in the range we need; then, in \S\ref{ss:Z4S4gens}, we describe all manifold representatives of the corresponding bordism classes in terms of just a few new generators.
\label{sec:Z4S4}

\subsection{\texorpdfstring{Adams spectral sequences for $\ko_*(B\Z_4 \wedge B\SL(3,\F_2))$ and $\ko_*(B\Z_4 \wedge L(2))$}{Adams spectral sequences for ko(BZ4 wedge BSL(3,F2)) and ko(BZ4 wedge L(2))}}
\label{Z4S4Adams}

In this subsection, we run the Adams spectral sequence computing $\ko_*(B\Z_4 \wedge B\SL(3,\F_2))$ in degrees $7$ and below. At all stages we describe the structure for $B\Z_4\wedge B\SL(3, \F_2)$ as a module over the corresponding structure for $B\Z_4$.

Our computation is front-loaded with algebraic calculations that will simplify the actual spectral sequence later. First, in \cref{joker_Ceta,question_Ceta,A1_Ceta}, we study the Ext structure on some tensor products of $\cA(1)$-modules. Using this, we describe the $E_2$-page as a module over $\Ext(H^*(B\Z_4;\Z_2))$ in \cref{the_smash_Ext_module}. Using this and the differentials we calculated in \cref{ko_Z4_diffs} and \S\ref{sec:BPSL}, we compute differentials in our spectral sequence in \cref{S4Z4_diffs}, then address extensions in \cref{the_Z4_module_str}.
\begin{rem}
It would be nice to apply the stable splitting techniques we have used for $B\Z_4\wedge B\SL(3,\F_2)$. However, Martino-Priddy-Douma show that $\Sigma^\infty(B\Z_4\wedge B\SL(3,\F_2))$ is indecomposable~\cite[Example 5.3]{MPD00}. Specifically, they show that the decomposition of $\Sigma^\infty(B\Z_4\wedge BD_8)$ into indecomposable summands can be found by taking the indecomposable summands of each factor and smashing them together --- the spectrum does not simplify further. Since $\Sigma^\infty B\Z_4$ is indecomposable and $B\SL(3,\F_2)$ is stably an indecomposable summand of $BD_8$~\cite[Theorem A]{Priddy}, $\Sigma^\infty(B\Z_4\wedge B\SL(3,\F_2))$ is indecomposable.
\end{rem}
Let $M$ and $N$ be $\cA(1)$-modules. Then there is a map
\begin{equation}
\label{Adams_Knn}
    \Ext_{\cA(1)}(M, \Z_2)\otimes\Ext_{\cA(1)}(N, \Z_2) \longrightarrow\Ext_{\cA(1)}(M\otimes N, \Z_2),
\end{equation}
which we call the \term{K\"{u}nneth map} -- often it arises by applying Ext to the K\"{u}nneth map in cohomology of a smash product of spaces. In this case, the K\"{u}nneth map on Ext converges in the Adams spectral sequence to the product map on $\ko$-homology groups. We will evaluate this K\"{u}nneth map in a few examples.

Recall from \cref{A1_mod_BZ4} that, as an $\cA(1)$-module, $\widetilde H^*(B\Z_4;\Z_2)$ is a sum of shifts of $\Z_2$ and $C\eta$. Tensoring with $\Z_2$ does not change the isomorphism type of an $\cA(1)$-module, so we focus on $C\eta$.
\begin{rem}
\label{E1_trick}
Determining the $\ExZ$-module structure on $\Ext(M\otimes C\eta, \Z_2)$ is easy, thanks to a trick: there is an isomorphism $C\eta\cong\cA(1)\otimes_{\cE(1)}\Z_2$, where $\cE(1)\coloneqq \ang{\Sq^1, \Sq^2\Sq^1 + \Sq^1\Sq^2}$, so by the change-of-rings theorem, $\Ext_{\cA(1)}(C\eta\otimes M)\cong\Ext_{\cE(1)}(M)$. The reference~\cite{Debray} works out Ext over $\cE(1)$ for several commonly occurring $\cE(1)$-modules. However, we will not use this shortcut much: it does not provide any insight on the K\"{u}nneth map, hence does not help much with the $\Omega_*^\Spin(B\Z_4)$-module structure on $\Omega_*^\Spin((B\Z_4)_+\wedge B\SL(3, \F_2))$. The reader may enjoy trying the $\cE(1)$ trick to compute the Ext groups we work out below more directly as a check of our calculations.
\end{rem}
\begin{lem}
\label{joker_Ceta}
\hfill
\begin{enumerate}
    \item With notation for $\Ext_{\cA(1)}(J)$ and $\Ext_{\cA(1)}(C\eta)$ as in \cref{ext_Ceta,PSL_E2_gens}, there is an isomorphism of $\ExZ$-modules
    \begin{equation}
     \Ext_{\cA(1)}(J\otimes C\eta)\cong (\ExZ/h_1)\set{j,k,\ell, \dotsc}/(h_0j, vj, \dotsc),
    \end{equation}
    with $j\in\Ext^{0,0}$, $k\in\Ext^{0,2}$, and $\ell\in\Ext^{1,5}$, and all generators and relations not listed are in topological degree $5$ and above.
    \item The K\"{u}nneth map sends $a\otimes\kappa\mapsto j$, $b\otimes\kappa\mapsto h_0k$, $b\otimes\lambda\mapsto h_0\ell$, and $a\otimes\lambda\mapsto 0$.
\end{enumerate}
\end{lem}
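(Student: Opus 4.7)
The plan is to run the long exact sequence in $\Ext_{\cA(1)}$ coming from the short exact sequence
\begin{equation}
0 \longrightarrow \Sigma J \longrightarrow J \otimes C\eta \longrightarrow J \longrightarrow 0,
\end{equation}
obtained by tensoring $J$ with the defining cofiber sequence $0 \to \Sigma\Z_2 \to C\eta \to \Z_2 \to 0$. The extension class of this latter sequence is $h_1 \in \Ext^{1,2}_{\cA(1)}(\Z_2,\Z_2)$, so the connecting homomorphism $\Ext^{s,t}(J) \to \Ext^{s+1,t+2}(J)$ of the LES is given by $h_1$-multiplication on the already known module $\Ext_{\cA(1)}(J)$. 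The presentation of $\Ext(J)$ recalled just before the lemma records the vanishing relations $h_1 a = 0$ and $h_1 b = 0$, so in the range of interest the connecting map is zero and every class of $\Ext(J)$ contributes a pair of classes to $\Ext(J\otimes C\eta)$: one from the quotient $J$ and one from the sub $\Sigma J$.

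Reading off this LES gives the four low-degree module generators: $j\in \Ext^{0,0}$ coming from $a$ in the quotient $J$, $k\in \Ext^{0,2}$ coming from $a$ in $\Sigma J$, $\ell\in \Ext^{1,5}$ coming from $b$ in $\Sigma J$, plus one further class in $\Ext^{1,3}$ coming from $b$ in the quotient $J$. The relations $h_0 j = v j = 0$ are inherited from $h_0 a = va = 0$ in $\Ext(J)$, and the $h_1$-nullity of every generator likewise descends from $\Ext(J)$ together with the $\ExZ$-linearity of the LES maps. The only hidden extension that must be established is the $h_0$-identification of the remaining class in $\Ext^{1,3}$ with $h_0 k$; I would verify this either by comparing with the change-of-rings isomorphism $\Ext_{\cA(1)}(J\otimes C\eta)\cong \Ext_{\cE(1)}(J)$ alluded to in the preceding remark, or by computing two terms of a minimal $\cA(1)$-free resolution of $J\otimes C\eta$.

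For the Kunneth map values I would use naturality. The quotient map $\pi\colon C\eta \to \Z_2$ sends the generator of $\Ext^{0,0}(\Z_2)$ to $\kappa$, so $x \otimes \kappa$ is precisely the image of $x\in \Ext(J)$ under the LES map $\Ext(J) \to \Ext(J\otimes C\eta)$; this yields $a\otimes\kappa = j$ and $b\otimes\kappa = h_0 k$ once the hidden extension of the previous paragraph is in hand. Dually, $\lambda\in \Ext^{1,3}(C\eta)$ lies in the image of $\Ext(\Sigma\Z_2) \to \Ext(C\eta)$, so $x\otimes\lambda$ lies in the image of $\Ext(\Sigma J) \to \Ext(J \otimes C\eta)$ and is computed by tracking $x$ through that inclusion with an $h_0$-shift forced by the same extension phenomenon; this gives $b\otimes\lambda = h_0\ell$. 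The vanishing $a\otimes\lambda = 0$ is then forced by $\ExZ$-equivariance of the Kunneth map: one has $h_0(a\otimes\lambda) = (h_0 a)\otimes\lambda = 0$, while the only nonzero candidate in $\Ext^{1,3}(J\otimes C\eta)$ is $h_0 k$, and the latter satisfies $h_0 \cdot (h_0 k) = h_0^2 k \ne 0$, so $a\otimes\lambda$ must vanish.

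The main obstacle I foresee is pinning down the hidden $h_0$-extension $b\otimes\kappa = h_0 k$; once this is fixed, the remaining Kunneth-map values follow essentially automatically from naturality and $\ExZ$-equivariance, and the module presentation in the lemma statement is the only possibility compatible with the LES, the relations already present in $\Ext(J)$, and the multiplicative structure.
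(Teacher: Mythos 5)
Your argument takes a genuinely different route from the paper's.  You tensor the cofiber sequence $0\to \Sigma^2\Z_2\to C\eta\to\Z_2\to 0$ (note: $\Sigma^2$, not $\Sigma$; the two cells of $C\eta$ are $2$ apart, as is visible from the degrees you correctly compute afterward) with $J$ and run the long exact sequence in $\Ext$, using the vanishing of the $h_1$-connecting map on $a$ and $b$.  The paper instead establishes a direct $\cA(1)$-module decomposition $J\otimes C\eta \cong \cA(1)\oplus\Sigma^2 C\eta$, so the module structure of $\Ext(J\otimes C\eta)$ is handed over wholesale and no hidden extensions arise.  The trade-off is clear: your route makes the Kunneth-map computations on the quotient side transparent ($a\otimes\kappa$ and $b\otimes\kappa$ are literally the images of $a$ and $b$ under the restriction map of the LES, and your $h_0$-injectivity argument for $a\otimes\lambda=0$ is sound), whereas the paper has to do a separate comparison over $\cA(0)$ using the whiskered Joker to pin down $b\otimes\kappa$ and $b\otimes\lambda$.

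However, there is a genuine gap in your proposal, which you partially acknowledge but do not close.  The LES determines $\Ext(J\otimes C\eta)$ only up to hidden extensions, and the one that matters is exactly whether $h_0 k$ vanishes or equals the unique nonzero class in $\Ext^{1,3}$.  Since $k$ maps to $a$ under $\Ext(J\otimes C\eta)\to\Ext(\Sigma^2 J)$ and $h_0 a=0$, the element $h_0 k$ is invisible in the sub term of the LES, so the LES genuinely cannot decide this.  You propose resolving it via the change-of-rings isomorphism $\Ext_{\cA(1)}(J\otimes C\eta)\cong\Ext_{\cE(1)}(J)$ or a minimal resolution, and either would work, but neither is carried out — so the crux of $b\otimes\kappa=h_0 k$ remains unproven.  (By contrast, the identity $b\otimes\lambda=h_0\ell$ \emph{is} forced by the LES: $h_0\ell$ maps to $h_0 b\ne 0$ in the $\Sigma^2 J$ factor and so is visibly nonzero, and then the naturality argument with $i\colon\Sigma^2\Z_2\to C\eta$ identifies it with $b\otimes\lambda$.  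Your phrase ``the same extension phenomenon'' undersells how much easier this case is.)  One further small slip: $\lambda$ does \emph{not} lie in the image of $\Ext(\Sigma^2\Z_2)\to\Ext(C\eta)$ — the LES for $C\eta$ runs $\Ext(\Z_2)\to\Ext(C\eta)\to\Ext(\Sigma^2\Z_2)$, so $\lambda$ maps \emph{to} $h_0\in\Ext(\Sigma^2\Z_2)$ rather than coming from it.  The naturality square you implicitly want is the one relating the Kunneth map for $J\otimes C\eta$ to that for $\Sigma^2 J$ via $\mathrm{id}_J\otimes i$, which is what I used above.
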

\begin{proof}
One can check by explicit computation that $J\otimes C\eta\cong \cA(1)\oplus \Sigma^2 C\eta$. $\Ext(\cA(1))\cong\Z_2$ in bidegree $(0, 0)$ with trivial $\ExZ$-action, and $\Ext(C\eta)$ appears in \cref{ext_Ceta} (drawn in \cref{Ext_eta_fig}); putting these together, we conclude the first part of the lemma.

For the K\"{u}nneth map in the second part of the lemma, two of the classes are easy: $\kappa$ and $a$ are both represented by the classes in $\Ext^0 = \Hom(\bl,\Z_2)$ which preserve the lowest-degree element in an $\cA(1)$-module and kill all others. One can explicitly compute the tensor product of those homomorphisms and see that it is non-trivial; since there is a unique nonzero class in $\Ext^{0,0}(J\otimes C\eta)$ by the first part of the lemma, we deduce $a\otimes\kappa$ must be this nonzero class, which is $j$.

Since $h_0a = 0$, $h_0(a\otimes\lambda) \mapsto 0$ under the K\"{u}nneth map. Multiplying by $h_0$, as a map $\Ext^{1,3}(J\otimes C\eta)\to\Ext^{2,4}(J\otimes C\eta)$, is injective by the first part of the lemma, so $h_0(a\otimes\lambda) \mapsto 0$ under the K\"{u}nneth map forces $a\otimes\lambda$ to also map to $0$.

The remaining two classes can be sorted by comparing with the K\"{u}nneth map over $\cA(0)\coloneqq\ang{\Sq^1}\subset\cA(1)$. The forgetful map $\Ext_{\cA(1)}(C\eta)\to\Ext_{\cA(0)}(C\eta)$ is injective in topological degree $2$ and below by \cref{A1_Ceta_A0}, so if we want to prove $b\otimes\kappa = h_0 k$ and $b\otimes\lambda = h_0\ell$, it suffices to do so for extensions of $\cA(0)$-modules.

As $\cA(0)$-modules, $C\eta\cong\Z_2\oplus \Sigma^2\Z_2$, and under the isomorphism $\Ext_{\cA(0)}(\Z_2, \Z_2)\cong\Z_2[h_0]$, $\kappa$ pulls back to $h_0$ for the degree-$0$ $\Z_2$ and $\lambda$ pulls back to $h_0^2$ for the degree-$2$ copy of $\Z_2$ by \cref{A1_Ceta_A0}. Thus, if we can show that $b\in\Ext_{\cA(1)}(J)$ is non-trivial when pulled back to $\Ext_{\cA(0)}(J)$, then $b\otimes\kappa$ and $b\otimes\lambda$ are nonzero; since they each live in a one-dimensional Ext group, they must be equal to $h_0 k$, resp.\ $h_0\ell$. So we finish the proof by showing an explicit $\cA(1)$-module extension representing $b$ is non-split as an $\cA(0)$-module extension.

$\Ext^{1,3}(J)\cong\Z_2$, so any non-split extension $0\to\Sigma^3\Z_2\to \widetilde J\to J\to 0$ represents $b$. For example, we could let $\widetilde J\coloneqq \cA(1)/(\Sq^2\Sq^1\Sq^2)$,\footnote{Baker~\cite[\S 5]{Bak18} calls this $\cA(1)$-module the \term{whiskered Joker}.} with the map to $J \cong \widetilde J/(\Sq^3)$ given by the quotient. We draw this extension in \cref{joker_ext_A1_A0}, left. The Adem relation $\Sq^1\Sq^2 = \Sq^3$ implies that as $\cA(0)$-modules, the extension is also not split, as we draw in \cref{joker_ext_A1_A0}, right: the image of $b\in\Ext_{\cA(0)}(J)$ is nonzero.
\end{proof}

\begin{figure}[h!]
\centering
\begin{subfigure}[c]{0.49\textwidth}
\centering
\begin{tikzpicture}[scale=0.6]
\sqone(0, 2);
\begin{scope}[BrickRed]
    \foreach \x in {-4, 0} {
        \tikzpt{\x}{3}{}{};
    }
    \draw[thick, ->] (-3.5, 3) -- (-0.5, 3);
\end{scope}
\begin{scope}[MidnightBlue]
    \foreach \x in {0, 4} {
        \foreach \y in {0, 1, 2} {
            \tikzpt{\x}{\y}{}{};
        }
        \tikzpt{\x+1.5}{3}{}{};
        \tikzpt{\x+1.5}{4}{}{};
        \sqone(\x, 0);
        \sqone(\x+1.5, 3);
        \sqtwoL(\x, 0);
        \sqtwoCR(\x, 1);
        \sqtwoCR(\x, 2);
    }
    \draw[thick, ->] (0.5, 0) -- (3.5, 0);
    \draw[thick, ->] (2, 4) -- (5, 4);
\end{scope}
\node at (-4, -1) {$\Sigma^3\Z_2$};
\node at (0, -1) {$\widetilde J$};
\node at (4, -1) {$J$};
\draw[->] (-3, -1) -- (-0.5, -1);
\draw[->] (0.5, -1) -- (3.5, -1);
\end{tikzpicture}
\end{subfigure}
\begin{subfigure}[c]{0.49\textwidth}
\centering
\begin{tikzpicture}[scale=0.6]
\sqone(0, 2);
\begin{scope}[BrickRed]
    \foreach \x in {-4, 0} {
        \tikzpt{\x}{3}{}{};
    }
    \draw[thick, ->] (-3.5, 3) -- (-0.5, 3);
\end{scope}
\begin{scope}[MidnightBlue]
    \foreach \x in {0, 4} {
        \foreach \y in {0, 1, 2} {
            \tikzpt{\x}{\y}{}{};
        }
        \tikzpt{\x+1.5}{3}{}{};
        \tikzpt{\x+1.5}{4}{}{};
        \sqone(\x, 0);
        \sqone(\x+1.5, 3);
 %       \sqtwoL(\x, 0);
 %       \sqtwoCR(\x, 1);
 %       \sqtwoCR(\x, 2);
    }
    \draw[thick, ->] (0.5, 2) -- (3.5, 2);
    %\draw[thick, ->] (2, 4) -- (5, 4);
\end{scope}
\node at (-4, -1) {$\Sigma^3\Z_2$};
\node at (0, -1) {$\widetilde J$};
\node at (4, -1) {$J$};
\draw[->] (-3, -1) -- (-0.5, -1);
\draw[->] (0.5, -1) -- (3.5, -1);
\end{tikzpicture}
\end{subfigure}
\caption{Left: an $\cA(1)$-module extension representing the class $b\in\Ext^{1,3}(J, \Z_2)$ defined in \cref{PSL_E2_gens}. Right: the same extension as $\cA(0)$-modules is not split. This is an ingredient in the proof of \cref{joker_Ceta}.}
\label{joker_ext_A1_A0}
\end{figure}
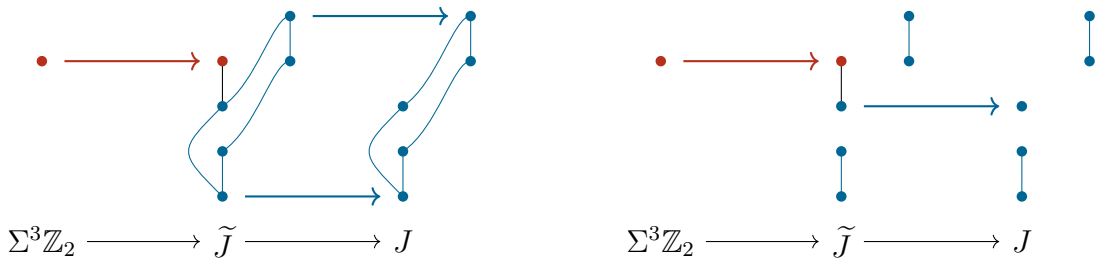
\begin{lem}
\label{question_Ceta}
\hfill
\begin{enumerate}
    \item With notation for $\Ext_{\cA(1)}(Q)$ and $\Ext_{\cA(1)}(C\eta)$ as in \cref{ext_Ceta,PSL_E2_gens}, there is an isomorphism of $\ExZ$-modules
    \begin{equation}
     \Ext_{\cA(1)}(Q\otimes C\eta)\cong (\ExZ/h_1)\set{n,p,\dotsc}/(\dotsc),
    \end{equation}
    with $n\in\Ext^{0,0}$ and $p\in\Ext^{0,2}$, and all generators and relations not listed are in topological degree $4$ and above.
    \item The K\"{u}nneth map sends $d\otimes\kappa\mapsto n$ and $d\otimes\lambda\mapsto h_0p$.
\end{enumerate}
\end{lem}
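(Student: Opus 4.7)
My plan is to follow the strategy of the proof of \cref{joker_Ceta} but applied to the upside-down question mark $Q$. The computation will proceed in three stages: decomposition, Ext calculation, and identification of the Kunneth map on the relevant generators.

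First I would decompose $Q\otimes C\eta$ as a direct sum of simpler $\cA(1)$-modules. Just as $J\otimes C\eta\cong \cA(1)\oplus \Sigma^2 C\eta$, I expect $Q\otimes C\eta$ to decompose into a sum of one free summand and a low-degree piece whose Ext is easy to describe. A quick way to organize the calculation is via the change-of-rings trick mentioned in \cref{E1_trick}: because $C\eta\cong\cA(1)\otimes_{\cE(1)}\Z_2$, one has $\Ext_{\cA(1)}(Q\otimes C\eta)\cong\Ext_{\cE(1)}(Q)$, and over $\cE(1) = \langle\Sq^1,\Sq^2\Sq^1+\Sq^1\Sq^2\rangle$ the module $Q$ splits into manageable pieces. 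I would use this to compute the bigraded $\ExZ$-module structure in low total degree, which should produce exactly the two generators $n$ and $p$ in topological degrees $0$ and $2$, together with higher-degree contributions irrelevant to our range.

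Next I would verify the relations and module structure in the theorem statement. The relation $h_1\cdot(\text{everything})=0$ follows formally from the fact that $C\eta$ is a module where $h_1$ acts trivially in Ext (see \cref{ext_Ceta}) and the fact that the Kunneth map is $\ExZ$-bilinear: since $h_1\kappa=h_1\lambda=0$ and each generator of $\Ext(Q\otimes C\eta)$ lies in the image of multiplication by $\kappa$ or $\lambda$, we obtain $(\ExZ/h_1)$-linearity. Together with the low-degree computation this pins down the module description up to topological degree four.

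Finally I would identify the Kunneth map on the two distinguished classes $d\otimes\kappa$ and $d\otimes\lambda$. Mirroring the argument in \cref{joker_Ceta}, the class $d\otimes\kappa$ in $\Ext^{0,0}$ is represented by the tensor product of the homomorphisms dual to the bottom cell in $Q$ and in $C\eta$, respectively, which is clearly nonzero, so it must equal $n$ by the first part of the lemma. For $d\otimes\lambda$, my strategy will be to compare with the Kunneth map over $\cA(0)$: the reduction $\Ext_{\cA(1)}(C\eta)\to\Ext_{\cA(0)}(C\eta)$ sends $\lambda\mapsto h_0$ on the shifted $\Sigma^2\Z_2$ summand by \cref{A1_Ceta_A0}, and $d\in\Ext^{0,0}_{\cA(1)}(Q)$ is nontrivial after pulling back to $\cA(0)$. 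Exhibiting an explicit non-split $\cA(0)$-module extension representing $d\otimes\lambda$ (analogous to the whiskered Joker picture used for $b\otimes\lambda$) will force $d\otimes\lambda = h_0 p$, since this is the unique nonzero element of $\Ext^{1,3}(Q\otimes C\eta)$.

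The main technical obstacle will be the initial decomposition of $Q\otimes C\eta$: the module $Q$ is not as symmetric as $J$, and the tensor product may not split into a free summand plus a shift of $C\eta$ as cleanly. I expect instead to see a $\Sigma^k\cA(1)$ summand together with a low-dimensional non-free piece whose Ext must be computed by hand (or via $\cE(1)$). Once that decomposition is in hand, the $\ExZ$-module structure and the identification of the Kunneth map follow by the same arguments used in \cref{joker_Ceta}.
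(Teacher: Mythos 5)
Your overall plan matches the paper's for both parts, but the initial guess about the structure of $Q\otimes C\eta$ is not quite right, and it is worth flagging because it would cost you time. The paper does not decompose $Q\otimes C\eta$ as a free $\cA(1)$ summand plus something simpler; instead it identifies the whole module directly as
\begin{equation*}
Q\otimes C\eta\cong \bigl(\cA(1)\cdot x\oplus \Sigma^2\cA(1)\cdot y\bigr)\big/\bigl(\Sq^1 x,\ \Sq^1 y,\ \Sq^2\Sq^3 x = \Sq^3 y\bigr),
\end{equation*}
which does not split off a free summand since the relation $\Sq^2\Sq^3 x = \Sq^3 y$ ties the two generators together. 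This is precisely the $\cA(1)$-module whose Ext was worked out by Davighi--Lohitsiri in their Spin-$\mathrm U(2)$ bordism paper, and the paper simply cites that computation for part (1). So your first-choice route (look for a $\cA(1)\oplus\Sigma^k(\text{something})$ decomposition like $J\otimes C\eta$) will stall; your fallback route (the $\cE(1)$ change-of-rings trick from \cref{E1_trick}) is exactly the other method the paper endorses, so lean on that. For part (2), your argument is the same as the paper's: $d\otimes\kappa$ is a nonzero element of $\Ext^{0,0}$ hence equals $n$, and $d\otimes\lambda$ is determined by reduction to $\cA(0)$ where $\lambda\mapsto h_0$ on the top cell of $C\eta$ and $d$ remains nonzero. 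You do not actually need to construct an explicit non-split $\cA(0)$-extension (the ``whiskered'' analogue) here, since $d$ sits in filtration $0$ and the $\cA(0)$ comparison is immediate once one knows the $E_2$-page of $\Ext_{\cA(0)}(Q\otimes C\eta)$ in bidegree $(1,3)$ is one-dimensional; but it would work if you preferred to see it concretely.
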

\begin{proof}
The first part of the lemma can be computed using the $\cE(1)$ trick (see \cref{E1_trick}); alternatively, by direct computation one finds an $\cA(1)$-module isomorphism
\begin{equation}
Q\otimes C\eta\cong \cA(1)\cdot x\oplus \Sigma^2\cA(1)\cdot y/(\Sq^1 x, \Sq^1 y, \Sq^2\Sq^3 x = \Sq^3 y),
\end{equation}
and the latter module is exactly the one studied by Davighi-Lohitsiri~\cite[Appendix A]{DL20} for applications to Spin-$\mathrm U(2)$ bordism. They compute enough about its Ext to imply the first part of the lemma.

The second part can be proven in exactly the same way as the second part of \cref{joker_Ceta}: $d\otimes k$ has to correspond to a nontrivial homomorphism $Q\otimes C\eta\to\Z_2$ and live in bidegree $(0, 0)$, so is $n$, and for $d\otimes\lambda$, one can check over $\cA(0)$, where the relation is easier.
\end{proof}
\begin{lem}
\label{A1_Ceta}
There is an $\cA(1)$-module isomorphism $\cA(1)\otimes C\eta\cong\cA(1)\oplus \Sigma^2\cA(1)$; thus there is an $\ExZ$-module isomorphism $\Ext_{\cA(1)}(\cA(1)\otimes C\eta)\cong\Z_2\cdot\{\sigma_1, \sigma_2\}$ with $\sigma_1\in\Ext^{0,0}$ and $\sigma_2\in\Ext^{0,2}$. If the $\cA(1)$ is generated by a mod $2$ cohomology class $g_1$ and $C\eta$ is generated by $g_2$, then $\sigma_1$ is detected by the mod $2$ cohomology class $g_1g_2$ and $\sigma_2$ is detected by $g_1\Sq^2(g_2)$. The image of the K\"{u}nneth map is the vector space generated by $\sigma_1$.
\end{lem}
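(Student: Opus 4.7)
The main structural claim — the $\cA(1)$-module isomorphism $\cA(1)\otimes C\eta\cong \cA(1)\oplus \Sigma^2\cA(1)$ — I would establish via a general Hopf-algebra argument. For any (graded) Hopf algebra $H$ over a field $k$ with antipode $\chi$, and any $H$-module $M$, the map $H\otimes_k M\to H\otimes M$ sending $\theta\otimes m\mapsto \sum \theta'\otimes \chi(\theta'')m$ (in Sweedler notation) is an $H$-module isomorphism, the source carrying the action of $H$ on the first factor only and the target the diagonal action. Applied to $H=\cA(1)$, $M=C\eta$, and the $\Z_2$-basis $\{g_2,\Sq^2 g_2\}$ of $C\eta$, this yields the stated decomposition. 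A dimension count ($\dim_{\Z_2}\cA(1)\otimes C\eta = 8\cdot 2 = 8+8$) together with a direct verification that $g_1\otimes g_2$ and $g_1\otimes \Sq^2 g_2$ generate free submodules would provide an independent sanity check.

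Once the decomposition is in hand, the Ext computation is immediate: $\Ext_{\cA(1)}(\cA(1),\Z_2)\cong \Z_2$ in bidegree $(0,0)$, so $\Ext_{\cA(1)}(\cA(1)\otimes C\eta,\Z_2)\cong \Z_2\{\sigma_1,\sigma_2\}$ with $\sigma_1\in\Ext^{0,0}$ and $\sigma_2\in\Ext^{0,2}$. The detection statements I would then read off via the edge homomorphism of~\eqref{edge_hom}: classes in $\Ext^{0,t}$ correspond to $\cA(1)$-linear maps $\Sigma^t(\cA(1)\otimes C\eta)\to \Z_2$, and under the Kunneth identification in cohomology, $\sigma_1$ pairs nontrivially with the bottom generator $g_1\otimes g_2\leftrightarrow g_1 g_2$. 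For $\sigma_2$ I would apply the Cartan formula together with $\Sq^1 g_2 = 0$ to get $\Sq^2(g_1\otimes g_2) = \Sq^2(g_1)\otimes g_2 + g_1\otimes \Sq^2(g_2)$; thus, modulo the $\cA(1)$-orbit of the bottom class, $g_1\otimes \Sq^2(g_2)$ represents the second free generator, and $\sigma_2$ is detected by $g_1\Sq^2(g_2)$ as claimed.

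For the Kunneth-map assertion, the key observation is that $\Ext_{\cA(1)}(\cA(1)\otimes C\eta,\Z_2)$ is concentrated on the line $s=0$ by freeness, while the source $\Ext_{\cA(1)}(\cA(1),\Z_2)\otimes \Ext_{\cA(1)}(C\eta,\Z_2)$ has its first tensor factor concentrated in $(0,0)$. Hence only the bottom class $\kappa\in\Ext^{0,0}(C\eta)$ can contribute nontrivially under the Kunneth map; the higher generators $\lambda,\mu,\xi$ of $\Ext(C\eta)$ from \cref{ext_Ceta} all have $s\ge 1$ and must land in zero groups on the target. The element $1\otimes \kappa$ must therefore hit the bottom generator $\sigma_1$, so the image of Kunneth is exactly $\Z_2\cdot \sigma_1$.

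\textbf{Main obstacle.} The only genuine subtlety is that the decomposition $\cA(1)\otimes C\eta\cong\cA(1)\oplus \Sigma^2\cA(1)$ is non-canonical: one could equally well pick $\Sq^2(g_1)\otimes g_2$ as the second generator. The statement only needs detection modulo the $\cA(1)$-orbit of $g_1 g_2$, and the Cartan-formula identity above is exactly what reconciles the two natural choices. The rest of the argument is formal and follows from standard properties of Ext over Hopf algebras.
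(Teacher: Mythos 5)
Your argument is correct, and since the paper states \cref{A1_Ceta} without proof, I'll compare your approach to the standard one the authors presumably had in mind.

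The decomposition $\cA(1)\otimes C\eta\cong\cA(1)\oplus\Sigma^2\cA(1)$ is most often justified by the generic Hopf-algebra untwisting fact you cite: for any Hopf algebra $H$ and $H$-module $M$, the diagonal module $H\otimes M$ is isomorphic to the free module on the underlying vector space of $M$ via $h\otimes m\mapsto\sum h'\otimes\chi(h'')m$. This is exactly right and explains the shape $\cA(1)\oplus\Sigma^2\cA(1)$ immediately, since $C\eta$ has a $\Z_2$-basis in degrees $0$ and $2$. The alternative — a bare-hands check that the cyclic submodules generated by $g_1\otimes g_2$ and $g_1\otimes\Sq^2 g_2$ are free and span by dimension count — is what you offer as a sanity check and is probably what the authors considered "obvious." Both are fine; yours is the conceptually cleaner version.

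The Ext computation and the $s=0$ degree-reasoning for the Kunneth map are correct: $\Ext_{\cA(1)}(\cA(1))$ is concentrated in bidegree $(0,0)$, so only $\kappa\in\Ext^{0,0}(C\eta)$ can survive the Kunneth map, and it must hit $\sigma_1$. Your resolution of the non-canonicity in the second free generator is also exactly right and worth spelling out: by the Cartan formula and $\Sq^1 g_2=0$, $\Sq^2(g_1 g_2)=\Sq^2(g_1)g_2+g_1\Sq^2(g_2)$, so any $\cA(1)$-linear functional in $\Ext^{0,2}$ (which necessarily kills the image of $\Sq^2$) must take the same value on $g_1\Sq^2(g_2)$ and on $\Sq^2(g_1)g_2$; the stated detection by $g_1\Sq^2(g_2)$ is therefore unambiguous even though the splitting of $\cA(1)\otimes C\eta$ is not canonical. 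No gaps.
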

Next, we want to understand $\widetilde H^*(B\Z_4\wedge B\SL(3, \F_2);\Z_2)$ as an $\cA(1)$-module, as a first step to understanding it as a module over $\Ext(H^*(B\Z_4;\Z_2))$. This amounts to plugging \cref{A1_mod_BZ4,cor:A(1)decompPSL} into the K\"{u}nneth formula:
\begin{prop}\label{Thm:CohoPSLwedgeZ4}
There is an isomorphism of $\cA(1)$-modules
\begin{equation}
\begin{gathered}
\widetilde H^*(B\SL(3,\F_2)\wedge B\Z_4;\Z_2) \\
\cong \Sigma \Z_2\otimes \widetilde H^*(B\SL(3,\F_2);\Z_2) \oplus \Sigma^2 \mathcal{C}\eta \otimes \widetilde H^*(B\SL(3,\F_2);\Z_2)\\
\oplus \Sigma^3\mathcal{C}\eta \otimes \widetilde H^*(B\SL(3,\F_2);\Z_2)\oplus \Sigma^6 \mathcal{C}\eta \otimes \widetilde H^*(B\SL(3,\F_2);\Z_2) \\
\oplus \Sigma^7 \mathcal{C}\eta \otimes \widetilde H^*(B\SL(3,\F_2);\Z_2) \oplus M,
\end{gathered}
\end{equation}
where $M$ is concentrated in degree $*\geq 10$.
\end{prop}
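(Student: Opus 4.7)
The plan is to compute $\widetilde H^*(B\SL(3,\F_2)\wedge B\Z_4;\Z_2)$ by applying the Kunneth theorem for the reduced cohomology of a smash product, then substituting the already-established $\cA(1)$-module decompositions of each factor.

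First I would invoke the Kunneth isomorphism in mod $2$ cohomology. Since we are working over the field $\F_2$, the Tor term vanishes and we obtain a natural $\cA(1)$-module isomorphism
\begin{equation}
\widetilde H^*(B\SL(3,\F_2)\wedge B\Z_4;\Z_2)\;\cong\; \widetilde H^*(B\SL(3,\F_2);\Z_2)\otimes_{\Z_2} \widetilde H^*(B\Z_4;\Z_2),
\end{equation}
where $\cA(1)$ acts on the tensor product via the Cartan formula.

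Next I would substitute the decomposition of $\widetilde H^*(B\Z_4;\Z_2)$ as an $\cA(1)$-module recorded in \cref{A1_mod_BZ4}, namely
\begin{equation}
\widetilde H^*(B\Z_4;\Z_2)\cong \Sigma\Z_2\oplus\bigoplus_{n\ge 0}\bigl(\Sigma^{4n+2}C\eta\oplus\Sigma^{4n+3}C\eta\bigr),
\end{equation}
and then distribute the tensor product with $\widetilde H^*(B\SL(3,\F_2);\Z_2)$ across this direct sum. This produces the summand $\Sigma\Z_2\otimes \widetilde H^*(B\SL(3,\F_2);\Z_2)$, the four summands $\Sigma^{j}C\eta\otimes \widetilde H^*(B\SL(3,\F_2);\Z_2)$ for $j\in\{2,3,6,7\}$ coming from $n=0,1$, and an infinite collection of further summands of the same shape for $n\ge 2$.

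Finally I would carry out the degree bookkeeping that packages the $n\ge 2$ summands into $M$. By \cref{cor:A(1)decompPSL}, the module $\widetilde H^*(B\SL(3,\F_2);\Z_2)$ is concentrated in degrees $\ge 2$, so $\Sigma^{4n+2}C\eta\otimes \widetilde H^*(B\SL(3,\F_2);\Z_2)$ and $\Sigma^{4n+3}C\eta\otimes \widetilde H^*(B\SL(3,\F_2);\Z_2)$ are concentrated in degrees $\ge 4n+4$ and $\ge 4n+5$ respectively. For $n\ge 2$ both of these lower bounds are $\ge 12>10$, so defining
\begin{equation}
M\;\coloneqq\;\bigoplus_{n\ge 2}\Bigl(\Sigma^{4n+2}C\eta\oplus\Sigma^{4n+3}C\eta\Bigr)\otimes\widetilde H^*(B\SL(3,\F_2);\Z_2)
\end{equation}
gives an $\cA(1)$-submodule concentrated in degrees $\ge 10$, and the remaining five summands match the stated formula. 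There is no real obstacle here; the only things to be careful about are that the Kunneth isomorphism genuinely respects the $\cA(1)$-module structure (which is classical, via the Cartan formula) and the elementary degree estimate that isolates the low-degree part.
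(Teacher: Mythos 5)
Your proof is correct and matches the paper's approach exactly: both apply the reduced Künneth isomorphism over $\F_2$ (which is $\cA(1)$-equivariant via the Cartan formula), substitute the decomposition of $\widetilde H^*(B\Z_4;\Z_2)$ from \cref{A1_mod_BZ4}, and collect the $n\ge 2$ summands into $M$. Your degree estimate even gives the slightly sharper bound $\ge 12$ on where $M$ is concentrated, which certainly implies the stated $\ge 10$.
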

Recalling the $\cA(1)$-module structure on $\widetilde H^*(B\SL(3, \F_2);\Z_2)$ from \cref{cor:A(1)decompPSL}, we see that as an $\ExZ$-module, $\Ext(\widetilde H^*(B\Z_4\wedge B\SL(3, \F_2);\Z_2))$ is a direct sum of the following $\ExZ$-modules:
\begin{subequations}
\begin{itemize}
    \item Coming from $\Sigma\Z_2\otimes \widetilde H^*(B\SL(3, \F_2);\Z_2)$ we have
    \begin{equation}
        \Ext(\textcolor{BrickRed}{\Sigma^3 J}) \oplus \Ext(\textcolor{RedOrange}{\Sigma^4 Q}) \oplus \Ext({\Sigma^7 \cA(1)}) \oplus \Ext(\textcolor{Fuchsia}{\Sigma^8 R_2}) \oplus \Ext(M_1).
    \end{equation}
    \item Coming from $\Sigma^2 C\eta\otimes \widetilde H^*(B\SL(3, \F_2);\Z_2)$ we have
    \begin{equation}
    \label{where_gray_from}
    \Ext(\Sigma^4(J\otimes C\eta)) \oplus  \Ext(\textcolor{Goldenrod!67!black}{\Sigma^5(Q\otimes C\eta)})\oplus  \Ext(\Sigma^8 \cA(1)) \oplus \Ext(M_2).
    \end{equation}
    \item Coming from $\Sigma^3 C\eta\otimes\widetilde H^*(B\SL(3, \F_2);\Z_2)$ we have
    \begin{equation}
    \Ext(\Sigma^5(J\otimes C\eta)) \oplus  \Ext(\textcolor{PineGreen}{\Sigma^5(Q\otimes C\eta)})\oplus \Ext(M_3).
    \end{equation}
    \item Coming from $\Sigma^6 C\eta\otimes\widetilde H^*(B\SL(3, \F_2);\Z_2)$ we have
    \begin{equation}
    \Ext(\Sigma^8(J\otimes C\eta))\oplus \Ext(M_4).
    \end{equation}
\end{itemize}
\end{subequations}
Each $M_i$ is concentrated in degrees $9$ and above, so we can and do ignore it. Therefore we need to know $\Ext(\textcolor{BrickRed}{J})$, $\Ext(\textcolor{RedOrange}{Q})$ and $\Ext(\textcolor{Fuchsia}{R_2})$, which are given in \cref{PSL_E2_gens}, as well as $\Ext(J\otimes C\eta)$ and $\Ext(Q\otimes C\eta)$, which we computed in \cref{joker_Ceta,question_Ceta} respectively. Gathering these together, we obtain a complete description of the $E_2$-page of this Adams spectral sequence as an $\ExZ$-module. Next, we want to upgrade this to an $\Ext(H^*(B\Z_4;\Z_2))$-module:

\begin{thm}
\label{the_smash_Ext_module}
There is an isomorphism of $\Ext_{\cA(1)}(H^*(B\Z_4;\Z_2))$-modules
\begin{subequations}
\begin{equation}
    \Ext_{\cA(1)}(\widetilde H^*(B\Z_4\wedge (BS_4)_+; \Z_2)) \cong \ExZ\set{
        a, b, c, d, e, f, g, h, i, k, \ell, p, \dotsc
    }/\mathcal R_4,
\end{equation}
where $a\in\Ext^{0,2}$, $b\in\Ext^{1,5}$, $c\in\Ext^{2,10}$, $d\in\Ext^{0,3}$, $e\in\Ext^{1,8}$, $f,k\in\Ext^{0,6}$, $g,p\in\Ext^{0,7}$, $h,i\in\Ext^{0,8}$, and $\ell\in\Ext^{1,9}$, and
where the ideal $\mathcal R_4$ of relations is given by
\begin{equation}\label{R_four}
    \begin{aligned}
        \mathcal R_4 = &(h_0a, h_1a, va, \lambda_1a, \mu_1a,
        h_1d, vd + h_0^2e, \lambda_1d + h_0p,h_1b, vb + h_0^2c,\\
        &\phantom{(} \kappa_1 b + h_0k, \lambda_1 b + h_0\ell, h_0f, h_1f, h_1g, h_0h, h_1k, \kappa_1k, h_1p, \dotsc)
    \end{aligned}
\end{equation}
\end{subequations}
and all unlisted generators and relations are in topological degrees $9$ and above.
\end{thm}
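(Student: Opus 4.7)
The plan is to reduce the computation to an $\cA(1)$-module decomposition of $\widetilde H^*(B\Z_4\wedge (BS_4)_+;\Z_2)$ with summands of known $\Ext$, and then to upgrade the resulting $\ExZ$-module structure to the required $\Ext_{\cA(1)}(H^*(B\Z_4;\Z_2))$-module structure using the Kunneth map on $\Ext$ (\cref{Thm:Kunneth}).

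First I would decompose the cohomology. Using \cref{S4_splitting} together with \cref{cor:A(1)decompPSL} (and, for the $L(2)$ summand, \cref{A1_triv_L2}), $H^*(BS_4;\Z_2)$ splits as an $\cA(1)$-module into $\Z_2$, $\Sigma^2 J$, $\Sigma^3 Q$, $\Sigma^6\cA(1)$, $\Sigma^7 R_2$, $\Sigma^8\Z_2$, free $\cA(1)$-summands coming from $L(2)$, and a tail in degrees $\ge 10$. Smashing with $B\Z_4$ and invoking \cref{A1_mod_BZ4}, the Kunneth formula writes $\widetilde H^*(B\Z_4\wedge (BS_4)_+;\Z_2)$ as a sum of tensor products $\Sigma^a\Z_2\otimes N$ and $\Sigma^a C\eta\otimes N$ with $N\in\{\Z_2,J,Q,R_2,\cA(1)\}$, plus a tail irrelevant below degree $10$. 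This is the same kind of decomposition as in \cref{Thm:CohoPSLwedgeZ4}, adapted to $(BS_4)_+$.

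Second, I would compute $\Ext$ of each summand. The $\Sigma^a\Z_2\otimes N$ summands contribute the generators $a,b,c,d,e,f,g,h,i$ with the $\ExZ$-module relations already recorded in \cref{PSL_E2_gens}; the $\Sigma^a C\eta\otimes N$ summands contribute new generators $k,\ell,p$ with $\ExZ$-structure supplied by \cref{joker_Ceta,question_Ceta} (and \cref{A1_Ceta} for the free $\cA(1)\otimes C\eta$ piece, which sits in degree $\ge 8$ and carries no new module generators within our range). Adding these summand-by-summand calculations yields a complete $\ExZ$-module presentation of the $E_2$-page in topological degrees $\le 8$.

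Third, I would lift this to an $\Ext_{\cA(1)}(H^*(B\Z_4;\Z_2))$-module description by applying the Kunneth map on $\Ext$ induced by the multiplication $B\Z_4\times B\Z_4\to B\Z_4$ and then smashing with $(BS_4)_+$. The generators $\theta,\kappa_i,\lambda_i,\mu_i,\xi_i$ of the ring act through the Kunneth pairings computed in the above lemmas: for instance, the relation $\kappa_1 b=h_0 k$ is precisely the identity $\kappa\otimes b\mapsto h_0 k$ from \cref{joker_Ceta} applied to $\Sigma^2 C\eta\otimes \Sigma^2 J$; the relations $\lambda_1 b=h_0\ell$ and $\lambda_1 d=h_0 p$ are the remaining pairings in \cref{joker_Ceta,question_Ceta}. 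The vanishing relations such as $h_0 a=h_1 a=va=0$ are internal to $\Ext(J)$; the vanishing relations $\lambda_1 a=\mu_1 a=0$ follow from the Kunneth map sending $a\otimes\lambda\mapsto 0$ (\cref{joker_Ceta}) and degree reasons for $\mu_1$; analogous reasoning kills the listed $\kappa_1 k$, $h_1 p$, $h_1 k$, etc.

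The main obstacle is bookkeeping: one must verify that the presented module, as an $\ExZ$-module, has exactly the $\Z_2$-dimensions computed in Step~2 in every bidegree with $t-s\le 8$, so that no additional generators or relations are needed in our range. Because the intertwining relations (like $\kappa_1 b+h_0 k$) identify classes across distinct summands of the $\cA(1)$-module decomposition, this dimension check must be done one bidegree at a time. Once this matches, the Kunneth map is automatically surjective onto the module structure in this range (as every class is either in the image of $1\otimes(\bl)$ or pulled into the image of a $\kappa_i,\lambda_i$ action via one of the identities above), and the $\ExZ$-module description of Step~2 certifies that no further relations are hiding.
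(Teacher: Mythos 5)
Your proposal follows essentially the same route as the paper's: decompose $\widetilde H^*(B\Z_4\wedge(\bl);\Z_2)$ into $\cA(1)$-module summands via \cref{A1_mod_BZ4,cor:A(1)decompPSL}, compute $\Ext$ piece-by-piece using the known $\Ext$ of $J$, $Q$, $R_2$, $\cA(1)$, $\Z_2$ and their tensor products with $C\eta$, and then lift to an $\Ext_{\cA(1)}(H^*(B\Z_4;\Z_2))$-module using the Kunneth pairings of \cref{joker_Ceta,question_Ceta,A1_Ceta}. The paper's proof is terser but is doing exactly this: the $\Sigma^0\Z_2$ summand of $H^*(B\Z_4;\Z_2)$ gives $\ExZ$-multiplication, the $\Sigma\Z_2$ summand gives $\theta$-multiplication on the $B\SL(3,\F_2)$ classes, and each $\Sigma^k C\eta$ summand gives the $\kappa_i,\lambda_i,\mu_i,\xi_i$-multiplications identified with the Kunneth pairings of \cref{joker_Ceta,question_Ceta,A1_Ceta} applied to the $\kappa$ and $\lambda$ of that copy of $C\eta$ ($\kappa_1,\lambda_1$ for $\Sigma^2 C\eta$; $\theta\kappa_1,\theta\lambda_1$ for $\Sigma^3 C\eta$; $\kappa_2,\lambda_2$ for $\Sigma^6 C\eta$). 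So you have the right plan, just written out more.

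Two points deserve attention. First, you take the theorem literally with $(BS_4)_+$ and therefore fold in the $L(2)$ stable summand of $BS_4$. But combined with $C\eta\otimes\cA(1)\cong\cA(1)\oplus\Sigma^2\cA(1)$ (\cref{A1_Ceta}), the $\Sigma^4\cA(1)\subset H^*(L(2);\Z_2)$ contributes module generators in topological degrees $4$ through $8$ after smashing with $B\Z_4$, which would contradict the theorem's claim that all unlisted generators live in topological degree $\ge 9$. The statement is internally consistent only if $(BS_4)_+$ is read as $(B\SL(3,\F_2))_+$, as both the proof text and the downstream \cref{the_Z4_module_str} confirm; the $L(2)$ contribution is deliberately peeled off and handled in \eqref{L2_Z4}. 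You should drop it here. Second, the assertion that $\mu_1 a = 0$ "follows from degree reasons for $\mu_1$" is too quick. The class $\mu_1 a$ lies in $\Ext^{2,10}$, topological degree $8$, which is well within range and is a bidegree where the $\Sigma^4(J\otimes C\eta)$ summand is nonzero (it contains $h_0\ell$), so degree alone does not kill it. The correct argument is the $h_0$-injectivity pattern used in \cref{more_products}: $va = 0$ gives $v(\kappa_1 a) = \kappa_1(va) = 0$, and the relation $v\kappa_1 = h_0\mu_1$ gives $v(\kappa_1 a) = h_0(\mu_1 a)$; hence $h_0(\mu_1 a) = 0$, and since $h_0$ acts injectively on $\Ext^{2,10}$ of this summand, $\mu_1 a = 0$. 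Make that explicit.
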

\begin{proof}
We want to determine products of classes in $\Ext(H^*(B\Z_4;\Z_2))$ and $\Ext(\widetilde H^*((B\Z_4)_+\wedge B\SL(3, \F_2);\Z_2)$. If the class in $\Ext(H^*(B\Z_4;\Z_2))$ comes from the $\Sigma$ summand in degree $0$, that is the identity in $H^*(B\Z_4;\Z_2)$, so these Ext classes act the same way $\ExZ$ does. The Ext classes coming from $\Sigma \Z_2\subset H^*(B\Z_4;\Z_2)$ are a cyclic $\ExZ$-module generated by $\theta$; tensoring with $\Sigma \Z_2$ is the same as shifting upwards by $1$, so for $\Ext(\Sigma\Z_2\otimes \widetilde H^*(B\SL(3, \F_2));\Z_2)$, we get $\theta$ times the classes in $\Ext(H^*(B\SL(3, \F_2);\Z_2))$ that we computed in \cref{PSL_E2_gens}.

The remaining summands we need to understand are all of the form $\Sigma^kC\eta\otimes \widetilde H^*(B\SL(3, \F_2);\Z_2))$. The module structure follows from the description of the K\"{u}nneth map in \cref{joker_Ceta,question_Ceta,A1_Ceta} applied to whatever classes in $\Ext(H^*(B\Z_4;\Z_2))$ correspond to $\kappa$ and $\lambda$ of that particular $C\eta$ summand, namely\ $\kappa_1$ and $\lambda_1$ for $\Sigma^2 C\eta$; $\theta\kappa_1$ and $\theta\lambda_1$ for $\Sigma^3 C\eta$, and $\kappa_2$ and $\lambda_2$ for $\Sigma^6 C\eta$.
\end{proof}
\begin{figure}[h!]
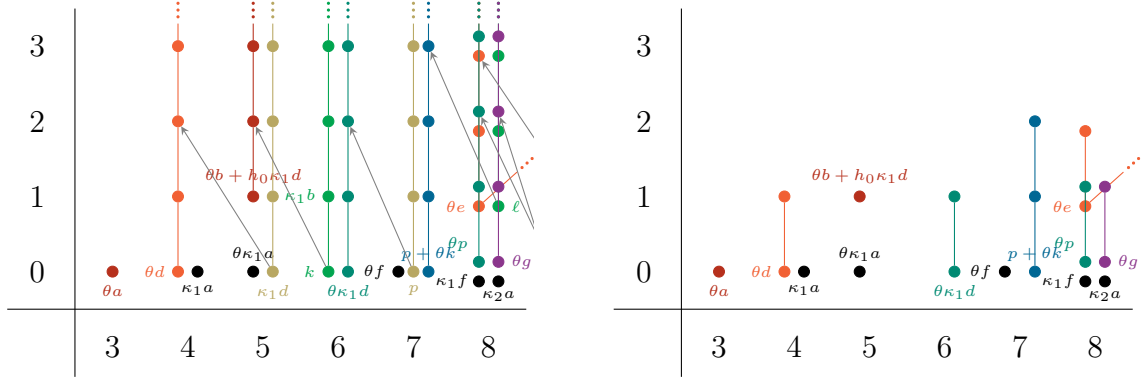

\centering
%\begin{subfigure}[c]{0.6\textwidth}
\begin{sseqdata}[name=wedge, classes=fill, xrange={3}{8}, yrange={0}{3}, scale=1, Adams grading, >=stealth,
class labels = {below = 0.05em, font=\tiny}]

\begin{scope}[BrickRed]
	\class(3, 0)
     \classoptions["\theta a"](3, 0)
	\class(5, 1)\AdamsTower{}
     \classoptions[class labels = {above = 0.05em}, "\theta b + h_0\kappa_1 d"](5, 1)
	\class(9, 2)\AdamsTower{}
      \classoptions[class labels = {left = 0.05em}, "\theta c"](9, 2)
	\class(10, 3)
		\structline(9, 2)(10, 3, -1)
        \class(11, 4)\structline
\end{scope}

\begin{scope}[RedOrange]
	\class(4, 0)\AdamsTower{}
     \classoptions[class labels = {left = 0.05em}, "\theta d"](4, 0)
	\class(8, 1)\AdamsTower{}
      \classoptions[class labels = {left = 0.05em}, "\theta e"](8, 1)
	\class(9, 2)
            \structline(8,1)(9,2,-1)
        \class(10,3)
            \structline(9,2,-1)(10,3,-1)
\end{scope}
\begin{scope}[draw=none, fill=none]
    \class(4, 1)\AdamsTower{}
    \class(7, 1)\AdamsTower{}
\end{scope}

\class(7,0)
\classoptions[class labels = {left = 0.05em}, "\theta f"](7, 0)

\begin{scope}[draw=none, fill=none]
	\class(9, 0)\AdamsTower{}
 \classoptions[class labels = {below left = 0.05em}, "\theta i"](9, 0, -1)
	\class(10, 1)
            \structline(10,1,-1)(9,0,-1)
        \class(11, 2)\structline
\end{scope}

%The second piece from Theorem 2.28
\class(4,0)
\classoptions["\kappa_1 a"](4, 0, -1)

%The third piece from Theorem 2.28
\class(5,0)
\classoptions[class labels = {above = -0.05em}, "\theta \kappa_1 a"](5, 0, -1)

\begin{scope}[Goldenrod!67!black]
	\class(5, 0)\AdamsTower{}
	\class(7, 0)\AdamsTower{}
	\class(9, 1)\AdamsTower{}
     \classoptions["\kappa_1 d"](5, 0, -1)
    \classoptions["p"](7, 0, -1)
    \classoptions["q"](9, 1, -1)

\end{scope}

\class(8,0)
\classoptions[class labels = {left = -0.05em}, "\kappa_1 f"](8, 0, -1)
%The fourth piece from Theorem 2.28
\class(8,0)
\classoptions[class labels = {below = -0.05em}, "\kappa_2 a"](8, 0, -1)

\begin{scope}[Green]
	\class(6, 0)\AdamsTower{}
	\class(8, 1)\AdamsTower{}
	\class(10, 2)\AdamsTower{}
      \classoptions[class labels = {left = 0.05em}, "k"](6, 0, -1)
    \classoptions[class labels = {left = 0.05em}, "\kappa_1 b"](6, 1, -1)
    \classoptions[class labels = {right = 0.05em}, "\ell"](8, 1, -1)
    \classoptions["m"](10, 2, -1)
\end{scope}

\begin{scope}[draw=none, fill=none]
%\begin{scope}[magenta]
	\class(9, 0)\AdamsTower{}
    \classoptions["\kappa_1 g"](9, 0, -1)
\end{scope}

\class(10,0)
\classoptions["\kappa_1 h"](10, 0, -1)

\begin{scope}[draw=none, fill=none]
%\begin{scope}[olive]
	\class(10, 0)\AdamsTower{}
\classoptions[class labels = {above left = 0.05em}, "\kappa_1 i"](10, 0, -1)
\end{scope}

\begin{scope}[PineGreen]
	\class(6, 0)\AdamsTower{}
	\class(8, 0)\AdamsTower{}
	\class(10, 1)\AdamsTower{}
 \classoptions["\theta \kappa_1 d"](6, 0, -1)
\classoptions[class labels = {above left = 0.05em}, "\theta p"](8, 0, -1)
\classoptions["\theta q"](10, 1, -1)

\end{scope}

\begin{scope}[MidnightBlue]
	\class(7, 0)\AdamsTower{}
	\class(9, 1)\AdamsTower{}
	\class(11, 2)\AdamsTower{}
 \classoptions[class labels = {above = -0.05em}, "p+\theta k"](7, 0, -1)
\classoptions["\theta \ell"](9, 1, -1)

\end{scope}
\begin{scope}[Fuchsia]
	\class(8, 0)\AdamsTower{}
    \classoptions[class labels = {right = 0.05em}, "\theta g"](8, 0, -1)
	\class(9, 0)
    \classoptions[class labels = {left = 0.05em}, "\theta h"](9, 0)
        \class(10,1)
           \structline(10,1)(9,0,-1)
        \class(10,0)
        \classoptions[class labels = {left = 0.05em}, "n"](10, 0, -1)
\end{scope}

\class(9,0)
\classoptions[class labels = {above left = -0.05em}, "\theta \kappa_1 f"](9, 0, -1)

\begin{scope}[draw=none, fill=none]
%\begin{scope}[purple]
	\class(10, 0)\AdamsTower{}
\classoptions[class labels = {above = -0.05em}, "\theta \kappa_1 g"](10, 0, -1)
\end{scope}

\begin{scope}[draw=none, fill=none]
%\begin{scope}[teal]
	\class(9, 0)\AdamsTower{}
 \classoptions[class labels = {below right = -0.05em}, "\kappa_2 d"](9, 0, -1)

\end{scope}

\begin{scope}[draw=none, fill=none]
%\begin{scope}[red]
	\class(10, 0)\AdamsTower{}
    \classoptions[class labels = {right = 0.2em}, "r"](10, 0, -1)
\end{scope}

%The fifth piece from Theorem 2.28
\class(9,0)
\classoptions[class labels = {above = -0.05em}, "\theta\kappa_2 a"](9, 0, -1)

\begin{scope}[draw=none, fill=none]
%\begin{scope}[violet]
	\class(10, 0)\AdamsTower{}
 \classoptions[class labels = {below = 0.05em}, "\theta\kappa_2 d"](10, 0, -1)
 \end{scope}

 % fake classes for differentials to hit
 \begin{scope}[draw=none, fill=none]
    \class(4, 5)
    \class(5, 5)
    \class(6, 5)
    \class(7, 5)\class(7, 5)\class(7, 5)
    \class(8, 5)\class(8, 5)
    \class(8, 5)\class(8, 5)
    \class(9, 3)
 \end{scope}

 % and now for the differentials
 \begin{scope}[gray]
    \d2(5, 0, -1)(4, 2, 1)
    \d2(6, 0, 1)(5, 2, 1)
    \d2(7, 0, 2)(6, 2, -1)
    \d2(8, 1, 2)(7, 3, 3)
    \d2(9, 1, 2)(8, 3, 1)
    \d2(9, 0, 1)(8, 2, 4)
    \d2(9, 0, 2)(8, 2, 3)
 \end{scope}
 \begin{scope}[draw=none]
    \foreach \y in {1, ..., 3} {
        \d2(5, \y, -1)(4, \y+2, 1)
        \d2(6, \y, 1)(5, \y+2, 1)
        \d2(7, \y, 2)(6, \y+2, -1)
        \d2(9, \y, 1)(8, \y+2, 4)
    }
    \foreach \y in {2, 3} {
        \d2(8, \y, 2)(7, \y+2, 3)
    }
    \d2(9, 1, 2)(8, 3, 3)
 \end{scope}

\end{sseqdata}
\begin{subfigure}[c]{0.48\textwidth}
\printpage[name=wedge, page=2]
\end{subfigure}
\begin{subfigure}[c]{0.48\textwidth}
\printpage[name=wedge, page=3]
\end{subfigure}
%\end{subfigure}
\caption{\label{AdmZ4BPSL} Left: the $E_2$ page of the Adams spectral sequence computing $\widetilde{\ko}_*(B\Z_4 \wedge B\SL(3,\F_2))$. The differentials shown here are calculated in \cref{S4Z4_diffs}. Right: the $E_3 = E_\infty$-page.}
\end{figure}

We draw the submodule coming from $B\Z_4\wedge BS_4$ in \cref{AdmZ4BPSL}, left. In particular, the relations in~\eqref{R_four} imply that in topological degree $8$ and below, every class is either part of an infinite $h_0$-tower, or is annihilated by $h_0$, and all of these classes except $\textcolor{RedOrange}{\theta e}$ are annihilated by $h_1$. We can therefore account for all classes in those degrees:
\begin{cor}\label{module_generators_list}
$\Ext(\widetilde H^*(B\Z_4\wedge BS_4; \Z_2)$ is generated as a $\Z_2[h_0]$-module in topological degrees $8$ and below by the following classes.
\begin{enumerate}
    \setcounter{enumi}{2}
    \item %3
    $\textcolor{BrickRed}{\theta a}\in\Ext^{0,3}$ (annihilated by $h_0$).
    \item %4
    $\textcolor{RedOrange}{\theta d}\in\Ext^{0,4}$ ($h_0$-tower) and $\kappa_1 a\in\Ext^{0,4}$ (annihilated).
    \item %5
    $\theta\kappa_1 a\in\Ext^{0,5}$ (annihilated), $\textcolor{Goldenrod!67!black}{\kappa_1 d}\in\Ext^{0,5}$ ($h_0$-tower), $\textcolor{BrickRed}{\theta b}\in\Ext^{1,6}$ ($h_0$-tower).
    \item %6
    $\textcolor{Green}{k}\in\Ext^{0,6}$ ($h_0$-tower), $\textcolor{PineGreen}{\theta\kappa_1 d}\in\Ext^{0,6}$ ($h_0$-tower).
    \item %7
    $\theta f\in\Ext^{0,7}$ (annihilated), $\textcolor{Goldenrod!67!black}{p}\in\Ext^{0,7}$ ($h_0$-tower), $\textcolor{MidnightBlue}{\theta k}$ ($h_0$-tower).
    \item %8
    $\textcolor{PineGreen}{\theta p}\in\Ext^{0,8}$ ($h_0$-tower), $\textcolor{Fuchsia}{\theta g}\in\Ext^{0,8}$ ($h_0$-tower), $\kappa_1f\in\Ext^{0,8}$ (annihilated), $\kappa_2 a\in\Ext^{0,8}$ (annihilated), $\textcolor{RedOrange}{\theta e}\in\Ext^{1,9}$ ($h_0$-tower), $\textcolor{Green}{\ell}\in\Ext^{1,9}$ ($h_0$-tower).
\end{enumerate}
\end{cor}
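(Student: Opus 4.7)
The plan is to read the corollary directly off the structure theorem \cref{the_smash_Ext_module}, using the relations in the ideal $\mathcal R_4$ to classify each generator as either $h_0$-annihilated or the base of an infinite $h_0$-tower. Since the stable splitting of $BS_4$ in \cref{S4_splitting} implies that $\widetilde H^*(B\Z_4\wedge BS_4;\Z_2)$ splits as an $\cA(1)$-module into the piece coming from $B\SL(3,\F_2)$, the piece coming from $B\Z_2$, and the piece coming from $L(2)$, and since \cref{why_S4} tells us we only need to keep the first of these in the ambient computation, the relevant $\ExZ$-module structure in topological degrees $\le 8$ is exactly the one depicted in \cref{AdmZ4BPSL}, left, and computed in \cref{the_smash_Ext_module}.

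First I would extract from \cref{the_smash_Ext_module} the complete list of $\Ext(H^*(B\Z_4;\Z_2))$-module generators whose topological degree $t$ satisfies $t\le 8$: these are $a,b,c,d,e,f,g,h,i,k,\ell,p$ together with the $\Ext_{\cA(1)}(H^*(B\Z_4;\Z_2))$-module action. Multiplying by the generators $\theta,\kappa_1,\kappa_2,\lambda_1,\lambda_2,\mu_1,\xi_1$ of the acting ring (with the correct bidegree shifts) produces all module-generators of the $\Z_2[h_0]$-submodule in each degree $t\le 8$. Next I would walk through topological degrees $t=3,4,5,6,7,8$ one at a time, and in each degree enumerate the products $\alpha\cdot\gamma$ (with $\alpha\in\{1,\theta,\kappa_i,\lambda_i,\mu_1,\xi_1\}$, $\gamma\in\{a,\dotsc,p\}$) landing in that total degree. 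This is the step where the listed generators appear.

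The classification of each such generator as $h_0$-annihilated or as the bottom of an $h_0$-tower is then forced by $\mathcal R_4$:  the relations $h_0a=0$, $h_0f=0$, $h_0h=0$ (directly from \cref{PSL_E2_gens} and the first part of \cref{R_four}) annihilate $\theta a,\kappa_1 a, \theta\kappa_1 a, \theta f, \kappa_1 f, \kappa_2 a$; multiplicativity and $h_0\theta=0$ is not a relation, but the kinship to the classes inherited from $B\SL(3,\F_2)$ via \cref{PSL_E2_gens} determines the remaining behaviour; the classes $\theta d,\kappa_1d,\theta\kappa_1 d,\theta b,\theta e, k,\ell,p,\theta k,\theta p,\theta g$ support infinite $h_0$-towers because the corresponding relations $vd=h_0^2e$, $vb=h_0^2 c$, $\kappa_1 b=h_0 k$ (with $k$ then supporting a tower via $\Ext(J\otimes C\eta)$), $\lambda_1 b=h_0\ell$, and $\lambda_1 d=h_0 p$ (together with \cref{ext_Ceta} for $\Ext(C\eta)$) force them to continue indefinitely under $h_0$-multiplication.

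The main obstacle is purely bookkeeping: one must verify that no generator of $\Ext(\widetilde H^*(B\Z_4\wedge B\SL(3,\F_2);\Z_2))$ in topological degree $\le 8$ has been overlooked and that each $h_0$-action is correctly inferred from the relations of $\mathcal R_4$ or from one of \cref{joker_Ceta,question_Ceta,A1_Ceta}. The careful point is that the only class in this range \emph{not} annihilated by $h_1$ is $\theta e$: this is because $\mathcal R_4$ lists $h_1$-killing relations for all the other generators, but $e\in\Ext^{1,8}$ comes from $\Ext(Q)$, where $h_1e\ne 0$ in principle, so one has to check using \cref{question_Ceta} and the action of $\theta$ that $\theta e$ survives the $h_1$-test while supporting a tower under $h_0$. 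Once this is verified degree by degree, the enumeration written in the corollary matches the diagram in \cref{AdmZ4BPSL}, left, completing the proof.
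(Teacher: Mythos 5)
Your proposal is correct and follows essentially the same route as the paper: the corollary is stated there as an immediate consequence of \cref{the_smash_Ext_module}, obtained by enumerating the module generators in topological degree $\le 8$ and reading off the $h_0$-behavior of each from the relations in $\mathcal R_4$ (together with \cref{joker_Ceta,question_Ceta,A1_Ceta} for the classes $k$, $p$, $\ell$ not in the image of the Kunneth map). Your degree-by-degree bookkeeping, including the observation that $\theta e$ is the one class not killed by $h_1$, matches the paper's reasoning.
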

\begin{thm}\label{S4Z4_diffs}
\hfill
\begin{enumerate}
    \item In the Adams spectral sequence computing $\ko_*(B\Z_4\wedge B\SL(3, \F_2))$, whose $E_2$-page is as given in \cref{the_smash_Ext_module}, the value of $d_2$ on the classes listed in \cref{module_generators_list} is
    \begin{equation}
    \begin{alignedat}{2}
        d_2(\textcolor{BrickRed}{\theta b}) &= \textcolor{RedOrange}{h_0^3 \theta d} &\qquad\qquad d_2(\textcolor{Goldenrod!67!black}{p}) &= \textcolor{PineGreen}{h_0^2 \theta \kappa_1 d}\\
        d_2(\textcolor{Goldenrod!67!black}{\kappa_1 d}) &= \textcolor{RedOrange}{h_0^2 \theta d} &
        d_2(\textcolor{MidnightBlue}{\theta k}) &= \textcolor{PineGreen}{h_0^2\theta\kappa_1 d}\\
        d_2(\textcolor{Green}{k}) &= \textcolor{BrickRed}{h_0\theta b} + \textcolor{Goldenrod!67!black}{h_0^3\kappa_1 d} & d_2(\textcolor{Green}{\ell}) &= \textcolor{MidnightBlue}{h_0^3\theta k} + \textcolor{Goldenrod!67!black}{h_0^3 p}
    \end{alignedat}
    \end{equation}
    and is $0$ on the remaining classes in \cref{module_generators_list}.
    \item For $r > 2$, $d_r$ vanishes in topological degrees $8$ and below.
    \end{enumerate}
     Therefore this Adams spectral sequence collapses at $E_3$ in topological degree $7$ and below.
\end{thm}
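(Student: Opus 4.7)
My approach rests entirely on the multiplicative structure established in \cref{Thm:Kunneth}: the Adams spectral sequence for $\ko_*(B\Z_4\wedge B\SL(3,\F_2))$ is a module over the Adams spectral sequence for $\ko_*(B\Z_4)$ (whose differentials are computed in \cref{ko_Z4_diffs}) and also receives a Kunneth map from the Adams spectral sequence for $\ko_*(B\SL(3,\F_2))$ (whose differentials are recorded in \S\ref{sec:BPSL}, namely $d_2(b)=h_0^3d$, $d_2(c)=h_0^3e$, $d_2(i)=h_0^2g$). With differentials satisfying the Leibniz rule for both structures, the plan is to compute $d_2$ on each of the generators listed in \cref{module_generators_list} by reducing to these known inputs, and then to dispatch higher differentials by degree reasons.

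First I would handle the classes that lie in the image of the Kunneth map: $\theta a$, $\theta d$, $\kappa_1 a$, $\theta\kappa_1 a$, $\theta b$, $\theta f$, $\kappa_2 a$, $\kappa_1 f$, and $\theta e$. For each such class, $d_2$ is determined by the Leibniz rule applied to a product decomposition; for instance, $d_2(\theta b)=\theta\cdot d_2(b)=h_0^3\theta d$ because $d_2(\theta)=0$, and the vanishings of $d_2$ on classes like $\theta a$, $\kappa_1 a$, or $\theta f$ follow because the relations in $\mathcal R_4$ (such as $\lambda_1 a=0$, $va=0$) force the Leibniz output to lie in an Ext group that is zero on the relevant line, or because both factors have vanishing $d_2$. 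The harder classes are those \emph{not} in the image of the Kunneth map, namely $k$, $p$, $\theta\kappa_1 d$, $\theta k$, and $\ell$; these live in the Ext groups of $J\otimes C\eta$ and $Q\otimes C\eta$ summands (cf.\ \cref{joker_Ceta,question_Ceta}). For these, my strategy is to exploit the relations listed in \cref{the_smash_Ext_module}, which express $h_0$-multiples of these ``exotic'' classes as Kunneth products: $h_0k=\kappa_1 b$, $h_0\ell=\lambda_1 b$, $h_0p=\lambda_1 d$, and analogous relations for $\theta\kappa_1 d$ and $\theta k$. Differentiating each relation gives $h_0 d_2(k)=d_2(\kappa_1)b+\kappa_1 d_2(b)$, and similarly for the others, after which the $h_0$-injectivity statements of the style of \cref{injectivity} (which carry over to this $E_2$-page by the same Margolis-type reasoning, since the modules $J\otimes C\eta$ and $Q\otimes C\eta$ are $h_0$-injective in the relevant bidegrees) allow me to cancel a factor of $h_0$ and extract $d_2(k)$, $d_2(p)$, $d_2(\ell)$, etc. The main obstacle is verifying this $h_0$-injectivity on the specific bidegrees where the cancellation occurs; I will handle it by direct inspection of the module charts (\cref{Ext_eta_fig} and its analogues), where $h_0$-towers and the vanishing locus of $h_0$-action are fully displayed.

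For the second assertion, that $d_r=0$ for $r>2$ in topological degrees $\le 8$, I would proceed purely by bigrading: after running the $d_2$ computed above, the $E_3$-page (drawn in \cref{AdmZ4BPSL}, right) has no class $x\in E_3^{s,t}$ with $t-s\le 8$ that can support a nonzero $d_r$, $r\ge 3$, because the target $E_3^{s+r,t+r-1}$ is either zero or, by $h_0$-equivariance of differentials combined with the absence of allowable classes, forced to vanish. A supplementary check using \cref{complete_L2_decomp}\eqref{smash_with_Z4L2} (which independently identifies the $L(2)$-summand contributions) confirms that the $E_\infty$-ranks agree with the expected $\ko$-homology. Combining these steps yields collapse at $E_3$ in the stated range, completing the proof.
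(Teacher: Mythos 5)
Your proposal follows the same path as the paper: deduce all differentials from the Leibniz rule for the $\Ext(H^*(B\Z_4;\Z_2))$-module action combined with $h_0$-injectivity to cancel factors of $h_0$, then kill higher differentials by the absence of high-filtration classes on $E_3$. A few small corrections to your bookkeeping, though. The class $\textcolor{PineGreen}{\theta\kappa_1 d}$ is in fact in the image of the Kunneth map ($\theta\kappa_1\otimes d$, with $d_2$ vanishing since $d_2(\theta\kappa_1)=h_0^2\theta^2=0$ and $d_2(d)=0$), so there is no ``analogous $h_0$-relation'' for it and it belongs in your easy list, not the hard one; conversely $\kappa_1 d$, $\theta p$, and $\theta g$ are missing from your lists entirely (the first carries one of the six nonzero differentials). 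Finally, the $h_0$-injectivity needed to cancel the $h_0$-factor is read off directly from the explicit $\Z_2[h_0]$-module presentation in \cref{module_generators_list} (as in the analogous \cref{injectivity}), not established by a Margolis-type argument — Margolis' theorem concerns free $\cA(1)$-summands and is not what gives you $h_0$-injectivity here; your fallback of ``direct inspection of the module charts'' is the right move.
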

\begin{proof}
We use the fact that Adams differentials satisfy a Leibniz rule for the module action: if $\alpha\in\Ext(H^*(B\Z_4;\Z_2))$ and $\beta\in\Ext(\widetilde H^*((B\Z_4)_+\wedge B\SL(3, \F_2);\Z_2))$, then
\begin{equation}
    d_2(\alpha\beta) = d_2(\alpha)\beta + \alpha d_2(\beta).
\end{equation}
Using the differentials we computed in \cref{ko_Z4_diffs} and \S\ref{sec:BPSL}, we can directly evaluate most of the differentials we need.

In many cases, $d_2(\alpha) = 0$ and $d_2(\beta) = 0$, so $d_2(\alpha\beta) = 0$ as well. This is the case for $\alpha\beta\in\set{\textcolor{BrickRed}{\theta a}, \textcolor{RedOrange}{\theta d}, \theta\kappa_1 a, \textcolor{PineGreen}{\theta\kappa_1 d}, \theta f, \textcolor{Fuchsia}{\theta g}, \textcolor{RedOrange}{\theta e}}$.

In some cases, $d_2(\alpha)\ne 0$, but the relations in~\eqref{R_four} kill the differential anyway. Namely, because $h_0a = 0$ and $h_0 f = 0$,
\begin{equation}
    \begin{aligned}
        d_2(\kappa_1 a) &= h_0^2\theta a + \kappa_1\cdot 0 = 0\\
        d_2(\kappa_1 f) &= h_0^2 \theta f + \kappa_1\cdot 0 = 0\\
        d_2(\kappa_2 a) &= h_0\theta\lambda_1 a + \kappa_2\cdot 0 = 0.
    \end{aligned}
\end{equation}
Directly calculating with the Leibniz rule also nets us a few of the nonvanishing differentials:
\begin{equation}
    \begin{aligned}
        d_2(\textcolor{BrickRed}{\theta b}) &= 0\cdot b + \textcolor{RedOrange}{h_0^3\theta d} = \textcolor{RedOrange}{h_0^3\theta d}\\
        d_2(\textcolor{Goldenrod!67!black}{\kappa_1 d}) &= \textcolor{RedOrange}{h_0^2\theta d} + \kappa_1\cdot 0 = \textcolor{RedOrange}{h_0^2\theta d}.
    \end{aligned}
\end{equation}
Now we tackle $\textcolor{Green}{k}$, $\textcolor{Green}{\ell}$, and $\textcolor{Goldenrod!67!black}{p}$ using $h_0$-injectivity in a manner reminiscent of the proof of \cref{more_products}. For example, the Leibniz rule tells us
\begin{subequations}
\begin{equation}
    d_2(\textcolor{Green}{\kappa_1 b}) = \textcolor{RedOrange}{h_0^2 \theta b} + \textcolor{Goldenrod!67!black}{h_0^3\kappa_1 d}.
\end{equation}
Since $\textcolor{Green}{h_0k} = \textcolor{Green}{\kappa_1 b}$ by~\eqref{R_four} and $d_2(h_0) = 0$,
\begin{equation}
    h_0 d_2(\textcolor{Green}{k}) = \textcolor{RedOrange}{h_0^2 \theta b} + \textcolor{Goldenrod!67!black}{h_0^3\kappa_1 d}.
\end{equation}
From \cref{module_generators_list}, we know that $h_0\colon\Ext^{2,7}\to\Ext^{3,8}$ is injective, so we can cancel a factor of $h_0$ and conclude
\begin{equation}
    d_2(\textcolor{Green}{k}) = \textcolor{RedOrange}{h_0 \theta b} + \textcolor{Goldenrod!67!black}{h_0^2\kappa_1 d}.
\end{equation}
\end{subequations}
In the same way, we have $d_2(\textcolor{Green}{\lambda_1 b}) = h_0^3 \theta\kappa_1b + \lambda_1(h_0^3 d) = \textcolor{MidnightBlue}{h_0^4\theta k} + \textcolor{Goldenrod!67!black}{h_0^4 p}$ from the Leibniz rule, together with $\textcolor{Green}{h_0\ell} = \textcolor{Green}{\lambda_1 b}$ from~\eqref{R_four}, and the action of $h_0$ is once again injective here, so we deduce $d_2(\textcolor{Green}{\ell}) = \textcolor{MidnightBlue}{h_0^3\theta k} + \textcolor{Goldenrod!67!black}{h_0^3 p}$. And lastly, $d_2(\textcolor{Goldenrod!67!black}{\lambda_1 d}) = \textcolor{PineGreen}{h_0^3\theta \kappa_1 d}$ and $\textcolor{Goldenrod!67!black}{h_0p} = \textcolor{Goldenrod!67!black}{\lambda_1 d}$, and $h_0$ acts injectively here, so $d_2(\textcolor{Goldenrod!67!black}{p}) = \textcolor{PineGreen}{h_0^2\theta\kappa_1d}$.

Only $\textcolor{MidnightBlue}{\theta k}$ and $\textcolor{PineGreen}{\theta p}$ remain, and they can be dispatched directly by the Leibniz rule, now that we know $d_2(\textcolor{Green}{k})$ and $d_2(\textcolor{Goldenrod!67!black}{p})$.

Thanks to these differentials $d_2$, on the $E_3$-page there are no classes in filtration $3$ or higher in degrees $7$ and below (see \cref{AdmZ4BPSL}, right, for a picture). Thus in this range, $d_r$ vanishes for $r\ge 3$.
\end{proof}
Now we compute extensions to determine the complete structure as a $\ko_*(B\Z_4)$-module.
\begin{thm}\hfill
\label{the_Z4_module_str}
\begin{enumerate}
    \item There is an isomorphism of $\ko_*(B\Z_4)$-modules
    \begin{equation}\label{big_exten}
        \widetilde \ko_*(B\Z_4\wedge (B\SL(3, \F_2))_+)\cong \ko_*\set{ \alpha, \delta, \epsilon, \phi, \zeta, A, B, C,\dotsc}/\mathcal R_5,
    \end{equation}
    where $\abs{\alpha} = 2$. $\abs\delta = 3$, $\abs\epsilon = 7$, $\abs\phi = 6$, $\abs\zeta = 7$, $\abs A = 4$, $\abs B = 5$, and $\abs C = 7$. The ideal $\mathcal R_5$ of relations is
    \begin{equation}\label{R_five}
    \begin{aligned}
        \mathcal R_5 = (&
            2\alpha, \eta a, v\alpha, q_5\alpha,
            8\delta, \eta\delta, 4\epsilon - v\delta,
            8\epsilon,
            2\phi, \eta\phi, 
            8\zeta - 4\epsilon,\\
            &
            2A, \eta A, \ell_1A - \ell_3\alpha, \ell_3 A, 2B, \eta B,
            \ell_1 B - 2\ell_3\delta,
            8C,\\
            &             \ell_3 A - c_1 \ell_1\phi + 4c_2 C,
            q_5\delta - 4c_3C,
            \dotsc
        )
    \end{aligned}
    \end{equation}
    and all generators and relations not listed are in degrees $8$ and above, where $c_1,c_2,c_3\in\Z_2$ are undetermined.
    \item The generators listed in~\eqref{big_exten} may be chosen so that their images in the $E_\infty$-page of the Adams spectral sequence are:
    \begin{equation}
        \begin{alignedat}{3}
            \alpha &\mapsto a, &\qquad\qquad \delta &\mapsto d &\qquad\qquad \phi &\mapsto f\\
            \epsilon &\mapsto e & \zeta &\mapsto g &&\\
            A &\mapsto \kappa_1 a & B &\mapsto \theta b + h_0\kappa_1 d & C &\mapsto p + \theta k.
        \end{alignedat}
    \end{equation}
\end{enumerate}
\end{thm}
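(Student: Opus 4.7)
The plan is to lift the $E_\infty$-page structure of Theorem \ref{S4Z4_diffs} (displayed in the right panel of Figure \ref{AdmZ4BPSL}) to the level of $\ko$-homology, upgrading the $\Z_2$-vector space information to the full $\ko_*(B\Z_4)$-module structure, and then to resolve as many hidden extensions as possible. First I would choose arbitrary lifts $\alpha, \delta, \phi, \epsilon, \zeta, A, B, C$ of the listed surviving $E_\infty$-classes; these exist because each named class lies in filtration $0$ or $1$ and survives to $E_\infty$ by Theorem \ref{S4Z4_diffs}. This defines the generating set in the statement.

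Next, the majority of the relations in $\mathcal R_5$ follow mechanically from the $\ExZ$-module and $\Ext(H^*(B\Z_4;\Z_2))$-module relations recorded in $\mathcal R_4$ of Theorem \ref{the_smash_Ext_module}, combined with the $h_0$-tower lengths visible on the $E_\infty$-page. Explicitly: the orders $8\delta=0$, $8\epsilon=0$, $8C=0$, $2\alpha=0$, $2\phi=0$, $2A=0$, $2B=0$ follow by counting classes in the corresponding $h_0$-towers and observing that no class in positive filtration survives in the relevant degree to cause a hidden order-doubling. The $v$-relations $4\epsilon-v\delta$ and $8\zeta-4\epsilon$ lift directly from $vd = h_0^2e$ and the corresponding $h_0$-extension between $e$ and $g$; the module relations $\ell_1A-\ell_3\alpha$, $\ell_3A$, $q_5\alpha$, $\ell_1B-2\ell_3\delta$ come from products like $\theta\cdot\kappa_1a = \theta\kappa_1 a = \theta\kappa_1\cdot a$ and $\theta\cdot(\theta b + h_0\kappa_1d) = h_0\theta\kappa_1d$ on $E_\infty$.

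The hidden $\eta$-extensions are all trivial on $\delta,\epsilon,\phi,\zeta,A,B,C$: since $2\eta=0$, multiplication by $\eta$ into a torsion-free group vanishes, and direct inspection of the $E_\infty$-page shows that in each relevant degree the potential target lies in a torsion-free summand or is otherwise excluded by filtration. Hidden $2$-extensions above the $h_0$-towers are excluded by the same direct filtration check, except where allowed by the degree-$7$ structure, which is reflected in the explicit appearance of $C$. The real subtlety lies in hidden extensions into classes of strictly higher Adams filtration that the $E_\infty$-associated graded cannot detect.

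The main obstacle, and the source of the undetermined constants $c_1, c_2, c_3 \in \Z_2$ in $\mathcal R_5$, is precisely these high-filtration hidden extensions: a product such as $\ell_3 A$ or $q_5\delta$ lives in a degree where there are multiple Adams filtrations contributing, and the Adams spectral sequence alone does not distinguish the possible lifts $\ell_3A = c_1\ell_1\phi + 4c_2 C$ and $q_5\delta = 4c_3 C$ modulo filtration. Pinning down the $c_i$ requires genuine geometric input: evaluating characteristic numbers on explicit manifold representatives, which is precisely the work carried out in the following subsection (\S\ref{ss:Z4S4gens}) by finding generators and computing their $\ell_3$- and $q_5$-multiplied cobordism invariants via transfer maps and $\eta$-invariant calculations analogous to those in Section \ref{sec:X7gen}. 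Because these constants do not affect the isomorphism type of the module in degrees $\le 7$, the statement above suffices for the assembly procedure of Proposition \ref{some_assembly_required}.
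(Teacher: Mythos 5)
Your outline has the right skeleton — lift the $E_\infty$ generators, translate the $\mathcal R_4$ relations, hunt down hidden extensions, and concede the constants $c_1,c_2,c_3$ — but two of your extension-resolving arguments are wrong, and they are exactly where the real work lies.

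For the hidden $\eta$-extensions you invoke ``multiplication by $\eta$ into a torsion-free group vanishes.'' That technique works for $\ko_*(B\Z_4)$ itself (used in the proof of \cref{ringstr_Z4}, where degrees $4,6,8$ contain free $\Z$ summands), but it does not apply here: by \cref{ABC_and_more}, every reduced group $\widetilde\ko_k(B\Z_4\wedge B\SL(3,\F_2))$ with $3\le k\le 7$ is pure torsion, so none of the targets are torsion-free. Your fallback phrase ``otherwise excluded by filtration'' also does not close the gap — the paper explicitly enumerates four candidate hidden $\eta$-extensions (including $(\ell_1\alpha, 2\ell_1\delta)$, $(c_1\ell_1\delta + c_2 A, B)$, $(\ell_3\alpha, 2\ell_3\delta)$, $(\ell_3\delta, c_3 C)$), and these are \emph{not} ruled out by filtration alone. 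They are killed instead by two specific inputs: the already-established relations $\eta\alpha=0$, $\eta\delta=0$ from \cref{algebra_koPSL}, and Margolis' theorem for the remaining pair $(A,B)$ (because $\kappa_1 a$ is detected by $y\nu_2$, which generates a free $\cA(1)$-summand in cohomology, and classes arising from free summands cannot participate in extensions by elements of $\pi_*(\Sph)$).

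The hidden $2$-extensions have the same problem. You assert they are ``excluded by the same direct filtration check,'' but the four candidate pairs $(A, 2\ell_1\delta)$, $(\ell_3\alpha, B)$, $(\ell_1\phi, 2C)$, $(\ell_1\phi, 4C)$ are precisely the cases where filtration leaves the question open. The paper disposes of all four by Margolis' theorem, verified by computing $\Sq^2\Sq^2\Sq^2$ on the detecting cohomology classes $y\nu_2$, $xy\nu_2$, and $x\nu_2^3$. Without Margolis' theorem (or an $\eta$-invariant computation in its place), these extensions remain genuinely unresolved, and the theorem as stated would not follow.

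The rest of your proposal — the choice of lifts, the translation of module relations from $\mathcal R_4$, the degree-reasons argument for hidden $v$-extensions, and the acknowledgment that $c_1,c_2,c_3$ and $q_5\alpha$ are left undetermined because they don't affect the isomorphism type in degrees $\le 7$ — matches the paper. Your one incorrect expectation is that the $c_i$ get pinned down in \S\ref{ss:Z4S4gens} by transfer and $\eta$-invariant arguments; the paper never resolves them.
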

\begin{proof}
The relations involving $\alpha$, $\delta$, $\phi$, $\epsilon$, and $\zeta$ and elements of $\ko_*(\pt)$ were addressed in \cref{algebra_koPSL}. Next, one observes that on the $E_\infty$-page for $\widetilde\ko(B\Z_4\wedge B\SL(3, \F_2))$ (\cref{AdmZ4BPSL}, right), the only classes that cannot be written as a product of a class in the $E_\infty$-page for $B\Z_4$ (\cref{Z4_E2}, right) and a class in the $E_\infty$-page for $B\SL(3, \F_2)$ (\cref{ThomZ4Einfty}, right) are $\kappa_1 a$, $\theta b + h_0\kappa_1 d$, and $p + \theta k$, together with multiples of those classes by powers of $h_0$. Thus there are classes $A$, $B$, and $C$ in $\widetilde\ko_*(B\Z_4\wedge B\SL(3, \F_2))$ lifting those classes on the $E_\infty$-page, and as claimed $\alpha$, $\delta$, $\phi$, $\epsilon$, $\zeta$, $A$, $B$, and $C$ generate $\widetilde\ko_*((B\Z_4)_+\wedge B\SL(3, \F_2))$ as a $\ko_*(B\Z_4)$-module. Now we must figure out the remaining relations.

For $\ko$-homology classes $x$, $y$, and $z$, we say $(x, y)$ is an \term{extension problem (for multiplication by $z$)} if it is not clear from the $E_\infty$-page of the Adams spectral sequence whether $xz$ is equal to $y$.

Our first stop is multiplication by $2$, i.e.\ $z = 2$. The $E_\infty$-page tells us some extensions via the action of $h_0$ (e.g.\ $h_0^2(p+\theta k)\ne 0$ lifts to $4C\ne 0$), and rules out some others because an extension by $2$ must preserve topological degree and raise the Adams filtration by at least $1$. Looking at \cref{AdmZ4BPSL}, right, this leaves four extension problems for multiplication by $2$: $(A, 2\ell_1\delta)$, $(\ell_3\alpha, B)$, $(\ell_1\phi, 2C)$, and $(\ell_1\phi, 4C)$. Only the second of these four extension problems can affect the $\ko_*$-module structure of the final answer, but all four affect the $\ko_*(B\Z_4)$-module structure. Fortunately, these four extensions all split, and for the same reason: Margolis' theorem~\cite{Margolis} shows that classes on the $E_2$-page arising from Ext of a free $\cA(1)$-module summand in the cohomology of a space or spectrum cannot support nonzero differentials or participate in nonzero extensions by elements in $\pi_*(\mathbb S)$, such as $2$ and $\eta$. In this specific case, to use Margolis' theorem we need that the classes $\kappa_1 a$, $\theta\kappa_1 a$, and $\theta f$ all come from free $\cA(1)$-module summands in $\widetilde H^*(B\Z_4\wedge B\SL(3, \F_2);\Z_2)$. Since $\Ext^0 = \Hom$, classes in Adams filtration $0$ can be represented by $\cA(1)$-module homomorphisms to $t\colon \Sigma^t\Z_2$ (see \cref{detector}); to check that this homomorphism generates a free summand, compute $\Sq^2\Sq^2\Sq^2$ on a class $x$ with $t(x)\ne 0$ (see Freed-Hopkins~\cite[\S D.4]{FH21}).
\begin{itemize}
    \item Since $\kappa_1$ came from $y\in H^2(B\Z_4;\Z_2)$ and $a$ came from $\nu_2\in H^2(B\SL(3, \F_2);\Z_2)$, the class $y\nu_2$ detects $\kappa_1 a$, so to invoke Margolis' theorem one can check that, indeed, $\Sq^2(\Sq^2(\Sq^2(y\nu_2)))\ne 0$.
    \item Likewise, $\theta$ is detected by $x\in H^1(B\Z_4;\Z_2)$, and $\Sq^2(\Sq^2(\Sq^2(xy\nu_2)))\ne 0$, preventing $\theta\kappa_1 a$ from participating in an extension.
    \item Finally, $f$ is detected by $\nu_2^3$ and $\theta$ by $x$, so $\theta f$ is detected by $x\nu_2^3$, and $\Sq^2(\Sq^2(\Sq^2(x\nu_2^3))) = 0$.
\end{itemize}
Next, extensions by $\eta$. A priori, the possible pairs for hidden extensions by $\eta$ are $(\ell_1\alpha, 2\ell_1\delta)$; $(c_1 \ell_1\delta + c_2 A, B)$ for some $c_1,c_2$ not both $0$; $(\ell_3\alpha, 2\ell_3\delta)$, and $(\ell_3\delta, c_3 C)$ for $C_3 = 2$, $4$, or $6$. However, from \cref{algebra_koPSL} we know $\eta\alpha = 0$ and $\eta\delta = 0$, which splits almost all of these hidden extensions, leaving only $(A, B)$, which splits by Margolis' theorem.

Hidden extensions by $v$ are ruled out by degree reasons: any such extension must increase Adams filtration by at least $3$, but all classes in topological degree at most $7$ on the $E_\infty$-page have Adams filtration at most $2$. And the Bott class is in too high of a degree to be relevant for us.

The relation $\ell_1^2 = 0$ leaves very little room for there to be hidden extensions by $\ell_1$, since on the $E_\infty$-page so many classes are products with $\theta$, or admit a nonzero product with $\theta$, lifting to provide most of the information on the $\ell_1$-action on $\widetilde\ko_*((B\Z_4)_+\wedge B\SL(3, \F_2))$. The only ambiguity is what $\ell_1 A$ is in terms of the generators in degree $5$, but this can be solved by redefining $B$ if necessary, and such a choice does not affect the rest of the generators and relations. So there are no hidden $\ell_1$-extensions. (The relation $\ell_1 B = 2\ell_3\delta$ is not hidden, as it is visible from the $\theta$-action on the $E_\infty$-page.)

For $\ell_3$, degree reasons suggest the following hidden extensions: $(\theta\alpha, c_1\ell_3\delta)$ for $c_1$ equal to $1$, $2$, or $3$; and $(c_2\theta\delta + c_3 A, c_4\ell_1\phi + c_5C)$ for $c_2,c_3\in\Z_2$ not both zero and $c_4\in\Z_2$ and $c_5\in\Z_8$ not both zero. The first class of hidden extensions are split by the fact $\theta \ell_3 = 0$, as are the extensions in the second class with $c_3 = 0$. We do not solve the second class of extensions in this theorem, though we note that since $2A = 0$, $c_5$ must be a multiple of $4$.

The only remaining relation in topological degree less than $8$ is $q_5\alpha$, which we have left unresolved.
\end{proof}

In a moment, when we search for manifold generators for the corresponding bordism groups, it will be helpful to unspool the module structure in \cref{the_Z4_module_str} into something more explicit.
\begin{cor}\label{ABC_and_more}
There are isomorphisms
\begin{equation}
\begin{aligned}
    \widetilde\ko_k(B\Z_4\wedge B\SL(3, \F_2)) &\cong 0, \qquad k\le 2,\\
    \widetilde\ko_3(B\Z_4\wedge B\SL(3, \F_2)) &\cong \Z_2\cdot \ell_1\alpha\\
    \widetilde\ko_4(B\Z_4\wedge B\SL(3, \F_2)) &\cong \Z_4\cdot \ell_1\delta \oplus \Z_2\cdot A\\
    \widetilde\ko_5(B\Z_4\wedge B\SL(3, \F_2)) &\cong \Z_2\cdot \ell_1 A \oplus \Z_2\cdot B\\
    \widetilde\ko_6(B\Z_4\wedge B\SL(3, \F_2)) &\cong \Z_4\cdot \ell_3\delta\\
    \widetilde\ko_7(B\Z_4\wedge B\SL(3, \F_2)) &\cong \Z_8\cdot C \oplus \Z_2\cdot \theta\phi.
\end{aligned}
\end{equation}
\end{cor}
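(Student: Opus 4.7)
The approach is to extract the Abelian group structure in each low degree from the $\ko_*(B\Z_4)$-module presentation of \cref{the_Z4_module_str}. I would first use the stable splitting $(B\Z_4)_+\wedge B\SL(3,\F_2)\simeq (B\Z_4\wedge B\SL(3,\F_2))\vee B\SL(3,\F_2)$ to isolate the smash-product summand inside $\widetilde\ko_*((B\Z_4)_+\wedge B\SL(3,\F_2))$: since $\widetilde\ko_*(B\SL(3,\F_2))$ is already computed in \cref{algebra_koPSL}, the corollary asks for the cokernel of the canonical injection $\widetilde\ko_*(B\SL(3,\F_2))\hookrightarrow\widetilde\ko_*((B\Z_4)_+\wedge B\SL(3,\F_2))$, which is precisely the $\ko_*$-submodule of the right-hand side of \cref{the_Z4_module_str} generated by $A,B,C$ together with all products $g\cdot x$ for $g$ a positive-degree element of $\ko_*(B\Z_4)$ and $x\in\{\alpha,\delta,\phi,\epsilon,\zeta\}$.

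In each degree $k\le 7$ I would enumerate the monomials $g\cdot x$ landing in degree $k$ and apply the relations from $\mathcal R_5$ together with $\mathcal R_2$ of \cref{ringstr_Z4} to cut down to a minimal generating set. The reductions are short: $v\alpha=q_5\alpha=\eta\alpha=0$ and $\eta\delta=\ell_1^2=\ell_1\ell_3=0$ kill most candidates, while $\ell_1 A=\ell_3\alpha$ and $\ell_1 B=2\ell_3\delta$ identify others. This yields the generating lists stated in the corollary, and in particular shows that nothing survives in degrees $0,1,2$.

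The orders are then read from the heights of the $h_0$-towers on the $E_\infty$-page drawn in \cref{AdmZ4BPSL}, right, using that $h_0$ detects multiplication by $2$ and that hidden $2$-extensions were already ruled out in the proof of \cref{the_Z4_module_str} via Margolis' theorem. Concretely: $\ell_1\alpha$, $A$, $\ell_1 A$, $B$, and $\ell_1\phi$ are $h_0$-annihilated at the bottom of their towers, hence generate $\Z_2$; $\ell_1\delta$ has order $4$ because $d_2(\kappa_1 d)=h_0^2\theta d$ truncates its $h_0$-tower at filtration $1$; $\ell_3\delta$ has order $4$ by the analogous truncation via $d_2(p)=d_2(\theta k)=h_0^2\theta\kappa_1 d$; and $C$ has order $8$ because $d_2(\ell)=h_0^3(p+\theta k)$ truncates the tower on $p+\theta k$ at filtration $2$.

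The main obstacle I expect is the degree-$7$ summand: it is the only degree carrying several distinct torsion orders simultaneously, and the relations $\ell_3 A - c_1\ell_1\phi + 4c_2 C=0$ and $q_5\delta-4c_3 C=0$ in $\mathcal R_5$ involve undetermined constants $c_1,c_2,c_3\in\Z_2$. The task reduces to showing, independently of the $c_i$, that the other candidates in this degree ($\ell_7\alpha$, $\widetilde\ell_7\alpha$, $v\delta$, $q_5\delta$, $\ell_3 A$) are either killed by relations already listed or absorbed into $\Z_8\cdot C\oplus\Z_2\cdot\ell_1\phi$, and that $\ell_1\phi$ is not itself a multiple of $C$; the latter is clear from the $\cA(1)$-module decomposition of \cref{Thm:CohoPSLwedgeZ4}, in which $\ell_1\phi$ and $C$ arise from distinct summands and sit at different positions on the $E_\infty$-page.
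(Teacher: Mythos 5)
Your strategy matches the paper's implicit one: the corollary is simply an unwinding of the $\ko_*(B\Z_4)$-module presentation in \cref{the_Z4_module_str}, with torsion orders read from the heights of the surviving $h_0$-towers on the $E_\infty$-page of \cref{AdmZ4BPSL}, and hidden $2$-extensions ruled out by the Margolis-theorem argument already carried out in the proof of that theorem. Your treatment of degrees $0$ through $6$ is correct.

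However, your degree-$7$ bookkeeping contains slips. The candidates you list are mostly in the wrong degree or the wrong wedge summand: $\ell_7\alpha$ and $\widetilde\ell_7\alpha$ live in degree $9$; $q_5\delta$ lives in degree $8$ (so the relation $q_5\delta - 4c_3 C$ is irrelevant to degree $7$); and $v\delta$, while in degree $7$, lies in the $\widetilde\ko_*(B\SL(3,\F_2))$ wedge summand since $v\in\ko_*(\pt)$, hence contributes nothing to $\widetilde\ko_*(B\Z_4\wedge B\SL(3,\F_2))$. Of your list only $\ell_3 A$ is a genuine degree-$7$ candidate, and you omit the other one, $q_5\alpha$, which vanishes by a listed relation. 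You also misstate the reason $\ell_1\phi$ and $C$ are linearly independent: their detecting classes $\theta f$ and $p+\theta k$ do \emph{not} sit at different positions on the $E_\infty$-page — both lie in $E_\infty^{0,7}$. The correct argument is that $\theta f$ and $p+\theta k$ are linearly independent \emph{elements} of that single $\F_2$-vector space (coming from distinct $\cA(1)$-module summands, as you observe), and that any relation $\ell_1\phi = 4C$ would place the filtration-$0$ class $\ell_1\phi$ in filtration $\ge 2$, which is impossible, while hidden $2$-extensions emanating from $\theta f$ are excluded by Margolis' theorem as in the proof of \cref{the_Z4_module_str}. With these corrections the argument is sound and the undetermined constants $c_1,c_2,c_3$ indeed have no bearing on the abelian group structure.
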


Just as the Adams spectral sequence for $\ko(L(2))$ was simple in comparison to $\ko_*(B\SL(3,\F_2))$, the Adams spectral sequence for $\ko_*(B\Z_4 \wedge L(2))$ is much simpler than $\ko_*(B\Z_4 \wedge B\SL(3,\F_2))$. Indeed, an immediate consequence of \cref{Thm:CohoZ4} and \cref{Prop:cohoL2} is that all nonzero classes in the second page of the Adams spectral sequence computing $\ko_*(B\Z_4 \wedge L(2))$ lie in filtration zero, and for $n \leq 7$, all $E_2^{0,n}(B\Z_4 \wedge L(2))$ vanish except 
\begin{equation}
    E^{0,5}_2(B\Z_4 \wedge L(2)), \quad E^{0,6}_2(B\Z_4 \wedge L(2)),\quad E^{0,7}_2(B\Z_4 \wedge L(2))
\end{equation}
Furthermore, the spectral sequence collapses at $E_2$, giving us 
\begin{equation}
\label{L2_Z4}
\begin{gathered}
\ko_5(B\Z_4 \wedge L(2))\cong \Z_2,\quad ko_6(B\Z_4\wedge L(2)) \cong \Z_2,\quad \ko_7(B\Z_4\wedge L(2)) \cong \Z_2,
\end{gathered}
\end{equation}
detected by the mod $2$ cohomology classes $xa^2b$, $ya^2b$, and $xya^2b$ respectively.

\subsection{Generators}
\label{ss:Z4S4gens}
\subsubsection{Generators coming from the K\"{u}nneth map}
The Adams K\"{u}nneth map \eqref{Adams_Knn}, $X = B\Z_4$ and $Y = B\SL(3,\F_2)$, can be used to determine many of the generators for $\ko(B\Z_4\wedge B\SL(3,\F_2))$. Indeed, for those classes in $\Ext(\widetilde{H}^*(B\Z_4;\Z_2)\otimes \widetilde{H}^*(B\SL(3,\F_2);\Z_2))$ that are in the image of the Adams K\"{u}nneth map, the generators of the corresponding elements in $\widetilde{\ko}(B\Z_4\wedge B\SL(3,\F_2))$ are products of generators from $\widetilde{\ko}_*(B\Z_4)$ and $\widetilde{\ko}_*(B\SL(3,\F_2))$, which we determined in Sections \ref{sec:BZ4gens} and \ref{sec:BPSLgens}.
\begin{itemize}
    \item In dimension 3, the generator, detected by $\theta a$, is generated by $L_4^1 \times T^2 $.
    \item In dimension 4, the $\Z_4$ summand, detected by $\theta d$, is generated by $L^1_4 \times L^3_4$.
    \item In dimension 5, the $\Z_2$ summand detected by $\theta \kappa_1 a$ is generated by $L^3_4 \times T^2$.
    \item In dimension 6, the generator, detected by $\theta \kappa_1 d$, is generated by $L_4^3 \times L_4^3$. 
    \item In dimension 7, the $\Z_2$ summand detected by $\theta f$ is generated by $L_4^1 \times \mathbb{R}\mathbb{P}^3 \times \mathbb{R}\mathbb{P}^3$. 
\end{itemize}

\subsubsection{\texorpdfstring{Generators coming from $\widetilde{\ko}_*(B\Z_4\wedge B\Z_4)$ and $\widetilde{\ko}_*(B\Z_2\wedge B(\Z_2\times \Z_2))$}{Generators coming from reduced ko(BZ4 wedge BZ4) and reduced ko(BZ2wedge B(Z2 x Z2))}}
As in the case  of $\widetilde{ko}_*(B\SL(3,\F_2))$ in Section \ref{sec:BPSLgens}, we study the generators of $\widetilde{ko}(B\Z_4\wedge B\SL(3,\F_2))$ using \cref{S4_splitting}. That is, we instead consider the question of determining generators for $\widetilde{\ko}_*(B\Z_4\wedge BS_4)$. For this, subgroups of $\Z_4 \times S_4$ will be of particular use. 

\begin{lem}
\label{lem:Z2Z2Z2inclu}
    The inclusion $\Z_2 \times (\Z_2\times \Z_2)\hookrightarrow \Z_4 \times S_4$ given by $(1,0,0)\mapsto (2,\mathrm{id})$, $(0,1,0)\mapsto (0,(1\;2)(3\;4))$, and $(0,0,1) \mapsto(0,(1\;3)(2\;4))$ induces a map on $\Z_2$ cohomology 
    \begin{equation}
        \begin{gathered}
            H^*(B\Z_4 \times BS_4;\Z_2)\rightarrow H^*(B\Z_2 \times B( \Z_2 \times \Z_2);\Z_2)\\
            x\mapsto 0,\quad y\mapsto \alpha^2, \quad a\mapsto 0,\quad b \mapsto \beta^2 + \beta\gamma + \gamma^2, \quad c\mapsto \beta^2\gamma + \beta\gamma^2
        \end{gathered} 
    \end{equation}
where 
\begin{equation}
    \begin{gathered}
        H^*(B\Z_4 \times BS_4;\Z_2) \cong \Z_2[x,y,a,b,c]/(x^2, ac) \\
        H^*(B\Z_2 \times B(\Z_2 \times \Z_2);\Z_2)\cong \Z_2[\alpha,\beta,\gamma]
    \end{gathered}
\end{equation}
\end{lem}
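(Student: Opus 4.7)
My plan is to decompose the map on cohomology into its two tensor factors, one coming from $B\Z_2 \to B\Z_4$ and the other from $B(\Z_2 \times \Z_2) \to BS_4$, and then compute each separately by identifying the generators $x,y,a,b,c$ with natural invariants of representations.

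For the $\Z_4$ factor, I will use the characterization of $x \in H^1(B\Z_4;\Z_2)$ as the pullback along the mod-$2$ reduction $\Z_4 \twoheadrightarrow \Z_2$ of the generator of $H^1(B\Z_2;\Z_2)$. Since the inclusion $\Z_2 \hookrightarrow \Z_4$ sending $1 \mapsto 2$ composed with mod-$2$ reduction is the zero map, $x \mapsto 0$. For $y$, I will interpret it as the mod-$2$ reduction of $c_1$ of the tautological one-dimensional representation $\Z_4 \hookrightarrow U(1)$ sending $1$ to $i$; restricting along $1 \mapsto 2$ gives the nontrivial character $\Z_2 \to U(1)$, whose first Chern class mod-$2$ is the square of the generator $\alpha \in H^1(B\Z_2;\Z_2)$, hence $y \mapsto \alpha^2$.

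For the $S_4$ factor, I will use the identification of $a,b,c$ as $w_1, w_2, w_3$ of the three-dimensional tetrahedral representation of $S_4$, as in the proof of \cref{lem:Z2Z2S4inclu}. The Klein four-subgroup generated by $(1\,2)(3\,4)$ and $(1\,3)(2\,4)$ consists of the three non-trivial $180^\circ$ rotations of the tetrahedron, which (placing the vertices at $(\pm 1, \pm 1, \pm 1)$ with an even number of minus signs) act as the coordinate reflections $\mathrm{diag}(1,-1,-1)$, $\mathrm{diag}(-1,1,-1)$, $\mathrm{diag}(-1,-1,1)$. The restricted representation therefore decomposes as a direct sum of the three non-trivial characters of $\Z_2\times\Z_2$, whose first Stiefel-Whitney classes are $\gamma$, $\beta$, and $\beta+\gamma$ (using the conventions in the lemma statement).

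Applying the Whitney sum formula in $H^*(B(\Z_2\times\Z_2);\Z_2)$ then gives
\begin{equation}
(1+\gamma)(1+\beta)(1+\beta+\gamma) = 1 + (\beta^2 + \beta\gamma + \gamma^2) + (\beta^2\gamma + \beta\gamma^2),
\end{equation}
from which we read off $a \mapsto 0$, $b \mapsto \beta^2+\beta\gamma+\gamma^2$, and $c \mapsto \beta^2\gamma+\beta\gamma^2$, as claimed. The only real obstacle is making sure the orientation convention on the tetrahedron is chosen consistently so that the three coordinate reflections line up correctly with $(1\,2)(3\,4)$ and $(1\,3)(2\,4)$; once that is pinned down the remaining computation is purely formal and can be cross-checked against the previously computed pullback $b \mapsto x_1^2 + xy + y^2$ for $\Z_2\times\Z_2 \hookrightarrow S_4$ in \cref{lem:Z2Z2S4inclu}.
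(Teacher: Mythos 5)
Your proof is correct. The computation checks out: with tetrahedron vertices at $(\pm1,\pm1,\pm1)$ (even number of minus signs), the generator $(1\,2)(3\,4)$ acts as $\mathrm{diag}(1,-1,-1)$ and $(1\,3)(2\,4)$ as $\mathrm{diag}(-1,1,-1)$, so the restricted $3$-dimensional representation splits as line bundles with $w_1$-classes $\gamma$, $\beta$, $\beta+\gamma$, and expanding $(1+\gamma)(1+\beta)(1+\beta+\gamma)$ gives exactly the stated values of $b$ and $c$; the $\Z_4$ piece (via $x$ as the mod-$2$ reduction class and $y$ as the reduction of $c_1$ of the tautological $U(1)$-representation) is also right.

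The route differs from the paper's in one substantive way. The paper's proof is a one-liner: it invokes the Kunneth theorem and cites \cref{lem:Z2Z2S4inclu} for the $S_4$ factor. That earlier lemma, in turn, is proved by factoring the inclusion $\Z_2\times\Z_2\hookrightarrow S_4$ through $A_4$: one observes $H^1(BA_4;\Z_2)=0$ (killing $a$), identifies $b$ as the unique nonzero class in $H^2(BA_4;\Z_2)=\Z_2$ via the failure of the restricted representation to be spin, and then gets $c$ from $c = \Sq^1(b)+ab$. You instead restrict the tetrahedral representation directly to the Klein four-group, observe it decomposes as a sum of three nontrivial characters, and read off all of $a,b,c$ simultaneously from the total Stiefel-Whitney class via the Whitney sum formula. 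Your route is more self-contained and more elementary for this particular subgroup (no appeal to $A_4$, no spin-obstruction argument, no $\Sq^1$ computation), while the paper's route has the advantage that it reuses a lemma already in hand, and the $A_4$ factorization is needed elsewhere anyway. Both are valid; yours produces all three pullbacks in one stroke and also makes the $\Z_4$-factor computation explicit, which the paper leaves implicit.
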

\begin{proof}
    The claim follows from the K\"{u}nneth theorem and Lemma \ref{lem:Z2Z2S4inclu}.
\end{proof}
\begin{prop}\label{detected_A}
Consider $\RP^3\times S^1$ with any of its four spin structures and the principal $(\Z_4\times S_4)$-bundle $P\to\RP^3\times S^1$ specified by the homomorphism
\begin{equation}
 \pi_1(\RP^3\times S^1) \cong \Z_2\times \Z \overset{f_1}{\to} \Z_2\times\Z_2\times\Z_2\overset{f_2}{\to} \Z_4\times S_4,
 \end{equation}
 where $f_1(1, 0) = (1, 1, 0)$, $f_1(0, 1) = (0, 0, 1)$, $f_2(1, 0, 0) = (2, 0)$, $f_2(0, 1, 0) = (0, (1\ 2)(3\ 4))$, and $f_2(0, 0, 1) = (0, (1\ 3)(2\ 4))$. Then for some $\lambda\in\set{1,3}$, $[\RP^3\times S^1, P]\in\widetilde \Omega_4^\Spin(B\Z_4\wedge BS_4)$ is equal to $A + \lambda \ell_1\delta$.
\end{prop}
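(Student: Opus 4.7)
The strategy is to compute the class of $[\RP^3\times S^1, P]$ in $\widetilde\Omega_4^\Spin(B\Z_4 \wedge BS_4)$ by evaluating its image under the edge homomorphism of the Adams spectral sequence, which reduces to computing a handful of mod $2$ characteristic numbers on $M = \RP^3\times S^1$.

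First, I would analyze the target group. By \cref{S4_splitting}, $\widetilde\Omega_4^\Spin(B\Z_4 \wedge BS_4)$ decomposes as the sum of $\widetilde\Omega_4^\Spin(B\Z_4 \wedge B\Z_2)$, $\widetilde\Omega_4^\Spin(B\Z_4 \wedge B\SL(3,\F_2))$, and $\widetilde\Omega_4^\Spin(B\Z_4 \wedge L(2))$. The last summand vanishes by \cref{complete_L2_decomp}. A quick Adams spectral sequence calculation, using \cref{A1_mod_BZ4} to decompose $\widetilde H^*(B\Z_4\wedge B\Z_2;\Z_2)$ into shifted copies of $\Z_2$ and $C\eta$, shows the first summand also vanishes in degree $4$ for filtration reasons. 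So $[M, P]$ lies in $\widetilde\Omega_4^\Spin(B\Z_4 \wedge B\SL(3,\F_2)) \cong \Z_4\cdot\ell_1\delta \oplus \Z_2 \cdot A$ by \cref{ABC_and_more}, and I would write $[M, P] = c_2 A + c_1 \ell_1\delta$ with $c_2 \in \Z_2$ and $c_1 \in \Z_4$.

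Next, I would set up the detection of $c_1$ and $c_2$ via cohomology pairings. From \cref{the_Z4_module_str}, the classes $A$ and $\ell_1\delta$ have respective images $\kappa_1 a$ and $\theta d$ on the $E_\infty$-page, both in Adams filtration zero. Under the edge homomorphism (\cref{detector}), $\kappa_1 a$ is represented by pairing with $y\nu_2$ and $\theta d$ by pairing with $x\overline\nu_3$; pulling back to $BS_4$ via \cref{Rem:PSL(2,7)andS4}, these correspond to $yb$ and $xc$ respectively. Thus $c_2 \equiv \int_M yb \pmod{2}$ and $c_1 \equiv \int_M xc \pmod{2}$. The core computation is then to evaluate these two characteristic numbers: since $P$ factors as $M \to B(\Z_2)^3 \to B\Z_4\times BS_4$, I would first use \cref{lem:Z2Z2Z2inclu} to pull $x,y,a,b,c$ back to $H^*(B(\Z_2)^3;\Z_2) = \Z_2[\alpha,\beta,\gamma]$, then apply the classifying map $Bf_1\colon M \to B(\Z_2)^3$ to land in $H^*(M;\Z_2) = \Z_2[u,v]/(u^4,v^2)$, reading off $\alpha,\beta \mapsto u$ and $\gamma \mapsto v$ from the definition of $f_1$. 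The top class is $u^3 v$, so evaluating the resulting polynomials on the fundamental class determines $c_2$ and $c_1\bmod 2$.

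The main obstacle is the residual ambiguity in $\lambda$: mod $2$ cohomology pairings only pin $c_1$ down modulo $2$, not modulo $4$, which is why the proposition asserts only that $\lambda \in \set{1,3}$ rather than a single value. Resolving this further would require a secondary invariant such as a Dirac $\eta$-invariant or a direct $\KO$-theoretic refinement; since the ambiguity is intrinsic and can be absorbed into the sign freedom in the choice of generator for the $\Z_4\cdot\ell_1\delta$ summand in \cref{the_Z4_module_str}, the mod $2$ calculation suffices for the statement. A minor bookkeeping task is verifying that no hidden-filtration contribution to $[M,P]$ from $B\Z_4\wedge B\Z_2$ can disrupt the conclusion, but this is automatic given the vanishing of that summand established in the first step.
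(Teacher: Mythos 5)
Your approach matches the paper's exactly: compute the image of $[\RP^3\times S^1, P]$ on the $E_\infty^{0,4}$-page by pairing against the detecting mod-$2$ cohomology classes $yb$ and $xc$, and evaluate those pairings via the factorization through $B(\Z_2)^3$. However, you stopped short of actually evaluating the integrals, and carrying them out reveals a problem. With $\alpha, \beta \mapsto u$ and $\gamma\mapsto v$ in $H^*(\RP^3\times S^1;\Z_2)\cong\Z_2[u,v]/(u^4,v^2)$, \cref{lem:Z2Z2Z2inclu} gives $yb(P) = u^2(u^2 + uv + v^2) = u^3v$, so $\int yb = 1$ and the $A$-coefficient is indeed $1$. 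But \cref{lem:Z2Z2Z2inclu} also says $x\mapsto 0$ (the composite $\Z_2\to\Z_4\to\Z_2$, $1\mapsto 2\mapsto 0$, is trivial), so $xc(P) = 0$ and $\int xc = 0$. This forces the $\ell_1\delta$-coefficient to be \emph{even}, i.e.\ $\lambda\in\{0,2\}$, not $\{1,3\}$ as the proposition states. The paper's own proof asserts that $xc$ pulls back to $\alpha(\beta^2\gamma + \beta\gamma^2)$, which contradicts the $x\mapsto 0$ in \cref{lem:Z2Z2Z2inclu}; this appears to be a slip. It is harmless downstream, since (as the paragraph after the proposition notes) only linear independence of $[\RP^3\times S^1,P]$ from $\ell_1\delta$ is used, and that already follows from $\int yb = 1$.

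A smaller point: your assertion that $\widetilde\Omega_4^\Spin(B\Z_4\wedge B\Z_2) = 0$ is left unverified (the Atiyah-Hirzebruch $E^2$-page has nonzero entries in total degree $4$, so this is not automatic), but it is also unnecessary. The monodromies $(1\ 2)(3\ 4)$ and $(1\ 3)(2\ 4)$ are even permutations, so $P$ pulls back to the trivial bundle under the sign homomorphism $S_4\to\Z_2$; hence the component of $[\RP^3\times S^1, P]$ in the $B\Z_4\wedge B\Z_2$ stable summand vanishes for that reason alone, regardless of whether the summand itself is zero.
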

The exact value of $\lambda$ could depend on the choice of spin structure, and in any case does not matter to us; all we need from \cref{detected_A} is that this spin bordism class is linearly independent from $\ell_1 \delta$, so that $\RP^3\times S^1$ and $L_4^1\times L_4^3$ with their $(\Z_4\times S_4)$-bundles generate $\widetilde\Omega_4^\Spin(B\Z_4\wedge BS_4)$.
\begin{proof}
From \cref{ABC_and_more}, $\widetilde\Omega_4^\Spin(B\Z_4\wedge BS_4)\cong \Z_2\cdot A\oplus \Z_4\cdot \ell_1\delta$, and from the images of $A$ and $\ell_1\delta$ on the $E_\infty$-page (\cref{the_Z4_module_str}), we learn $A$ is detected by the mod $2$ cohomology class $yb$ and $\ell_1\delta$ is detected by $xc$. Thus to prove the lemma it suffices to show
\begin{equation}
    \int_{\RP^3\times S^1} y(P)b(P) = \int_{\RP^3\times S^1} x(P)c(P) = 1.
\end{equation}
Under the inclusion $\Z_2 \times (\Z_2 \times \Z_2) \hookrightarrow \Z_4 \times S_4$ described in \cref{lem:Z2Z2Z2inclu}, the class $yb$ pulls back to 
\begin{equation}
\label{expand_yb}
    \alpha^2\beta^2 +\alpha^2\beta\gamma +  \alpha^2\gamma^2
\end{equation}
A straightforward characteristic class computation reveals that for the bundle specified in the theorem statement, all terms in~\eqref{expand_yb} vanish except $\alpha^2 \beta\gamma$, which is nonzero, so $\int_{\mathbb{R}\mathbb{P}^3 \times S^1}yb =1$. In a similar way, \cref{lem:Z2Z2Z2inclu} implies $xc$ pulls back to $\alpha(\beta^2\gamma + \beta\gamma^2)$, and plugging in the bundle $P$ in the theorem statement, we have $\alpha\beta^2\gamma \ne 0$ but $\alpha\beta\gamma^2 = 0$, implying $\int_{\RP^3\times S^1} xc = 1$.
\end{proof}
\begin{prop}
\label{torus_gen}
Consider $T^4\times\RP^3$ with any of its $32$ spin structures and the principal $(\Z_4\times S^4)$-bundle $P\to T^4\times\RP^3$ specified by the homomorphism
\begin{equation}\label{C_torus_spec}
    \begin{aligned}
        \pi_1(T^4\times\RP^3)\cong \Z^4\times\Z_2 &\longrightarrow \Z_4\times S_4\\
        (\vec e_1, 0) &\longmapsto (1, 0)\\
        (\vec e_2, 0) &\longmapsto (0, (1\ 2)(3\ 4))\\
        (\vec e_3, 0) &\longmapsto (0, (1\ 3)(2\ 4))\\
        (\vec e_4, 0) &\longmapsto (0, (1\ 3)(2\ 4))\\
        (0, 1) &\longmapsto (2, (1\ 2)(3\ 4)).
    \end{aligned}
\end{equation}
Then there is an odd $\lambda$ such that $[T^4\times\RP^3, P] = \lambda C$ in $\widetilde\Omega_7^\Spin(B\Z_4\wedge BS_4)$, so that $\set{L_4^1\times\RP^3\times\RP^3, T^4\times\RP^3}$ generates $\widetilde\Omega_7^\Spin(B\Z_4\wedge BS_4)$.
\end{prop}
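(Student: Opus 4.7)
The plan is to mimic the proof of \cref{detected_A}: use the edge homomorphism of \eqref{edge_hom} together with \cref{detector} to reduce the bordism statement to evaluating a handful of explicit mod $2$ cohomology integrals on $T^4 \times \RP^3$. Concretely, from \cref{ABC_and_more} we have $\widetilde\Omega_7^\Spin(B\Z_4 \wedge BS_4)\cong \Z_8\cdot C \oplus \Z_2\cdot \theta\phi$, so it suffices to show that $[T^4\times\RP^3,P]$ (i) has nontrivial image in $\Z_8\cdot C/2\cdot\Z_8\cdot C$, and (ii) vanishes in the $\Z_2\cdot\theta\phi$ summand. Since both $C$ and $\theta\phi$ have Adams filtration $0$, each is detected modulo $2$ by a specific mod $2$ cohomology class in $H^7(B\Z_4\wedge BS_4;\Z_2)$, and \cref{detector} converts these detection statements into integrals of pullback characteristic classes.

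The first step is to identify the relevant detecting cohomology classes. The class $\theta\phi=\ell_1\phi$ is detected by $xb^3\in H^7(B\Z_4\wedge BS_4;\Z_2)$, as $\theta$ is detected by $x\in H^1(B\Z_4;\Z_2)$ (\cref{Z4_Adams_ring}) and $\phi$ by $\nu_2^3 = b^3$ (\cref{algebra_koPSL} with the identification of \cref{Rem:PSL(2,7)andS4}). For the class $C$, which lives on $E_\infty^{0,7}$ as $p+\theta k$, we trace the summands in the $\cA(1)$-module decomposition of $\widetilde H^*(B\Z_4\wedge BS_4;\Z_2)$ from \cref{Thm:CohoPSLwedgeZ4}: the generator $p\in\Ext^{0,7}$ sits in the $\Sigma^5(Q\otimes C\eta)$ summand coming from $\Sigma^2C\eta\otimes \Sigma^3Q$ (i.e., the piece built from $y\in H^2(B\Z_4;\Z_2)$ tensored with $c=\overline\nu_3\in H^3(BS_4;\Z_2)$), while $\theta k$ sits in the summand coming from $\Sigma^3C\eta\otimes \Sigma^3Q$ (built from $xy$ tensored with $c$). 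From this one reads off an explicit cohomology cocycle $c_C\in H^7(B\Z_4\wedge BS_4;\Z_2)$ detecting $C$ modulo $2C$ and modulo $\theta\phi$; it is a linear combination of monomials in $x,y,a,b,c$ of total degree $7$ involving the factor $c$.

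The second step is a direct characteristic class computation on $M:=T^4\times\RP^3$. The bundle $P$ specified in \eqref{C_torus_spec} factors through $\Z_2\times \Z_2^{\oplus 3}\hookrightarrow \Z_4\times S_4$ (sending the first $S^1$ of $T^4$ to the order-two element of $\Z_4$ and giving the remaining $S^1$'s and the torsion $1$-cycle of $\RP^3$ appropriate classes into $\Z_2\times\Z_2\hookrightarrow S_4$). Using \cref{lem:Z2Z2S4inclu} together with a diagonal analogue for the $\Z_4$-component (as in \cref{lem:Z2Z2Z2inclu}), one expresses $x(P),y(P),a(P),b(P),c(P)$ as explicit polynomials in the canonical degree-one classes on $S^1$'s and $\RP^3$, and then integrates. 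One expects $\int_M P^*(xb^3)=0$ because the $b^3$-expansion after pullback has no factor producing the top class $\mathrm{vol}(T^4)\cdot\mathrm{vol}(\RP^3)$ (precisely the mechanism that eliminated all but one monomial in \cref{f_manifold}), while $\int_M P^*(c_C)=1$ by a similar monomial count, the key being that the specific bundle has been chosen so that the factor of $c$ (requiring a product of three distinct $\Z_2$-characters in the $S_4$-direction) pairs nontrivially with the remaining $T^4 \times \RP^3$ volume classes.

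The genuine obstacle is step one: unwinding the stable splitting of $B\Z_4\wedge BS_4$ and the $\cA(1)$-module decomposition of its cohomology to produce an honest cohomological detector for $C$ (as opposed to a class on the Adams $E_\infty$-page). Once $c_C$ is pinned down, the integrations in step two are mechanical Whitney-sum-and-Kunneth computations. The bundle in \eqref{C_torus_spec} has been rigged precisely so that every relevant monomial in $P^*(c_C)$ vanishes except the single top-degree one; verifying this requires care but no new ideas beyond those in \cref{f_manifold,detected_A}. Finally, linear independence of $[T^4\times \RP^3,P]$ and $[L_4^1\times\RP^3\times\RP^3]$ in $\widetilde\Omega_7^\Spin$ is immediate, since the latter represents $\theta\phi$ while the former has nontrivial $C$-component modulo $2$; together they therefore generate the stated $\Z_8\oplus\Z_2$.
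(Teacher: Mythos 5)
Your overall strategy matches the paper's: reduce to mod $2$ characteristic class integrals via the edge homomorphism and \cref{detector}, and you correctly identify $xb^3$ as the detector for $\theta\phi$. But there are two genuine gaps, one of which would send you down a blind alley if you actually carried out the computation.

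First, you never pin down the detecting class $c_C$ for $C$; you only say it "involves the factor $c$." The paper shows the detector is the \emph{sum} $y^2c + xyb^2$, and — this is the serious issue — your expectation about the mechanism is wrong. You predict that "the factor of $c$ $\dots$ pairs nontrivially with the remaining $T^4\times\RP^3$ volume classes." In fact $\int_M y(P)^2 c(P) = 0$: under the pullback $y(P)$ restricts to $u^2$ on $\RP^3$ (where $u$ is the degree-one generator of $H^*(\RP^3;\Z_2)$), so $y(P)^2 = u^4 = 0$ and the entire $c$-containing monomial dies for degree reasons. The nonzero integral is instead $\int_M x(P)y(P)b(P)^2 = 1$, which the paper establishes via a Fubini factorization over the $S^1$ and $\RP^3$ factors, and which involves no factor of $c$ at all. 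If you had carried out your "mechanical" computation expecting the $c$-term to contribute, you would have gotten $0$ and concluded (wrongly) that the bordism class is not detected — the saving monomial $xyb^2$ is what you would need to realize you missed.

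Second, a small but real error in the setup: you say the bundle factors through $\Z_2\times\Z_2^{\oplus 3}\hookrightarrow\Z_4\times S_4$, "sending the first $S^1$ of $T^4$ to the order-two element of $\Z_4$." But \eqref{C_torus_spec} sends $\vec e_1\mapsto (1,0)$, which has \emph{order four} in $\Z_4$, so the map cannot factor through a group in which that generator lands in a $\Z_2$. The correct factorization, and the one the paper uses, is through $\Z_4\times\Z_2\times\Z_2\hookrightarrow\Z_4\times S_4$. With the wrong factoring group, the characteristic class pullbacks you would compute in "step two" would be incorrect.
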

Again, we do not need to know the precise value of $\lambda$, and said value could depend on the choice of spin structure.
\begin{proof}
From \cref{ABC_and_more}, $\widetilde\Omega_7^\Spin(B\Z_4\wedge BS_4)\cong \Z_2\cdot \ell_1 \phi\oplus \Z_8\cdot C$, and from the images of $\ell_1\phi$ and $C$ on the $E_\infty$-page (\cref{the_Z4_module_str}), we learn $\ell_1\phi$ is detected by the mod $2$ cohomology class $xb^3$ and $C$ is detected by $y^2c + xyb^2$. Thus to prove the lemma it suffices to show
\begin{equation}
    \int_{T^4\times\RP^3} x(P)y(P)b(P)^2 = 1,\qquad \int_{T^4\times\RP^3} c(P)y(P)^2 = \int_{T^4\times\RP^3} x(P)b(P)^3 = 0.
\end{equation}
\begin{enumerate}
    \item There is a Fubini theorem for mod $2$ cohomology, so that
    \begin{equation}
        \int_{T^4\times\RP^3} x(P)y(P)b(P)^2 = \int_{S^1} x(P|_{S^1}) \int_{S^1\times S^1} b(P|_{S^1\times S^1}) \int_{S^1\times\RP^3} y(P|_{S^1\times\RP^3})b(P|_{S^1\times\RP^3}).
    \end{equation}
    Here the four $S^1$ factors in $T^4$ appear in the same order as they did in~\eqref{C_torus_spec}. The first factor of $S^1$, with $P|_{S^1}$, is exactly $L_4^1$, and $\int_{L_4^1}x\ne 0$ (\cref{Z4_mult_gens}); $P$ restricted to the next two factors of $S^1$ coincides with the bundle on $S^1\times S^1 = \RP^1\times\RP^1$ appearing in \cref{f_manifold}, so $\int_{S^1\times S^1} b\ne 0$; and $P$ on $S^1\times\RP^3$ coincides with the bundle appearing in \cref{detected_A}, so $\int_{S^1\times\RP^3} yb = 1$. Therefore the product of these three integrals is also nonzero,
    \item The homomorphism~\eqref{C_torus_spec} factors through the inclusion $j\colon \Z_4\times\Z_2\times\Z_2\hookrightarrow \Z_4\times S_4$. If we let $z$ and $w$ denote the generators of $H^1(B\Z_2\times B\Z_2;
    \Z_2)$ and $x$ and $y$ denote the usual generators of $H^*(B\Z_4;\Z_2)$, then $j^*(c) = xy + z^2w+ zw^2$ (\cref{lem:Z2Z2S4inclu}) and $j^*(y) = y$, so
    \begin{equation}\label{jcy2}
        j^*(cy^2) = y^2(xy + z^2w + zw^2).
    \end{equation}
    In the rest of this part of the proof, $y$ refers to the class in $H^*(B\Z_4\times B\Z_2\times B\Z_2;\Z_2)$.
    If $u$ denotes the generator of $H^*(\RP^3;\Z_2)$, then~\eqref{C_torus_spec} implies $y(P) = u^2$, so $y(P)^2 = 0$, and therefore the characteristic class~\eqref{jcy2} vanishes on $T^4\times \RP^3$.
    \item For $xb^3$, we once again pull back to $B(\Z_4\times\Z_2\times\Z_2)$; if $x$, $y$, $z$, and $w$ refer to the same classes as in the previous part of this proof, then $b$ pulls back to $z^2 + zw + w^2$. Thus $b^3$ pulls back to a product of terms $z^iw^j$ where $i > 3$ or $w\ge 3$. The K\"{u}nneth formula implies that for any class $q\in H^1(T^4\times\RP^3;\Z_2)$, $q^4 = 0$, and~\eqref{C_torus_spec} implies $w(P)$ pulls back across the projection $T^3\times\RP^3\to T^4$, and the cube of any class in $H^1(T^4;\Z_2)$ vanishes. Therefore $\int_{T^4\times\RP^3} xb^3 = 0$. \qedhere
\end{enumerate}
\end{proof}
\begin{prop}\label{Q5_gen}
Recall the spin $5$-manifold $Q_4^5$ and its $\Z_4$-bundle $S(V)\to Q_4^5$ from \cref{Q5_defn,Z4_mult_gens}, classified by the canonical identification $s\colon \pi_1(Q_4^5)\overset\cong\to\Z_4$. Let $P\to Q_4^5$ denote the principal $(\Z_4\times S_4)$-bundle classified by the homomorphism
\begin{equation}
    \pi_1(Q_4^5) \overset{s}{\longrightarrow} \Z_4\overset{g}{\longrightarrow} \Z_4\times S_4,
\end{equation}
where $g(1) \coloneqq (1, (1\ 2\ 3\ 4))$. Then $[Q_4^5, P] = B$ in $\widetilde\Omega_5^\Spin(B\Z_4\wedge BS_4)$.
\end{prop}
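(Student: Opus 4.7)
My plan is a three-step argument: locate $[Q_4^5, P]$ in the correct piece of the Mitchell-Priddy stable splitting (\cref{S4_splitting}) of $B\Z_4\wedge BS_4$, rule out $\ell_1 A$ via mod-$2$ cohomology, and finally establish non-triviality via a module-theoretic argument. The crucial observation for the first step is that the $S_4$-bundle on $Q_4^5$ factors through $\Z_4$ via the 4-cycle $g$, which is an odd permutation, so by \cref{lem:Z4inS4} we have $g^*(a) = x \in H^1(Q_4^5;\Z_2)$; since $x^2 = 0$ in $H^*(B\Z_4;\Z_2)$, pullbacks of any class divisible by $a^2$ vanish. The $L(2)$-summand of $\widetilde\Omega_5^\Spin(B\Z_4\wedge BS_4)$ is detected by $x\cdot a^2 b$ by \cref{complete_L2_decomp}\eqref{smash_with_Z4L2}, which therefore pulls back to zero, and Margolis' theorem (applied to the free $\cA(1)$-summand in cohomology giving rise to $L(2)$) extends this to the full $L(2)$-projection vanishing. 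A short parallel Adams-spectral-sequence computation shows that the $B\Z_2$-piece in degree~$5$ is detected at filtration $0$ by $y\otimes\alpha^3$, which pulls back via $\alpha\mapsto x$ to $y\cdot x^3 = 0$, so that projection also vanishes.

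For the second step, within $\widetilde\Omega_5^\Spin(B\Z_4\wedge B\SL(3,\F_2))\cong \Z_2\cdot\ell_1 A\oplus \Z_2\cdot B$ (\cref{ABC_and_more}), I will distinguish $B$ from $\ell_1 A$ using a mod-$2$ cohomology pairing. The class $\ell_1 A$ is detected on $E_\infty$ by $\theta\kappa_1 a$, corresponding via \cref{Rem:PSL(2,7)andS4} to the mod-$2$ cohomology class $xy\otimes b$. By \cref{lem:Z4inS4}, $g^*(b) = y$, so this class pulls back along $P$ to $xy\cdot y = xy^2 \in H^5(Q_4^5;\Z_2)$. Using $\Sq^2(y) = y^2$, we have $xy^2 = \Sq^2(xy)$; combining Wu's formula with $v_2(Q_4^5) = 0$ (since $Q_4^5$ is spin) gives $\int_{Q_4^5}xy^2 = \int_{Q_4^5} v_2 \cdot xy = 0$. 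Thus $[Q_4^5, P] \in \{0, B\}$.

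For the third and hardest step, I must establish $[Q_4^5, P]\neq 0$. Since $B$ sits at Adams filtration $1$, mod-$2$ cohomology alone cannot detect it. My plan is to exploit the $\ko_*(B\Z_4)$-module relation $\ell_1 B = 2\ell_3\delta$ from \cref{the_Z4_module_str}, where $2\ell_3\delta\ne 0$ because $\ell_3\delta$ generates a $\Z_4$ in $\widetilde\Omega_6^\Spin(B\Z_4\wedge B\SL(3,\F_2))$. I will compute $\ell_1 \cdot [Q_4^5, P] = [L_4^1\times Q_4^5,\,(s_1\cdot s,\,g\circ s)] \in \widetilde\Omega_6^\Spin(B\Z_4\wedge B\SL(3,\F_2))$ and identify it with $2\ell_3\delta$, e.g., by evaluating the degree-$6$ mod-$2$ cohomology class $xy\otimes\overline\nu_3$ (which detects $\ell_3\delta$ by \cref{PSL_E2_gens}) and comparing with the $h_0$-extension visible on the $E_\infty$-page. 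Alternatively, I could trace $q_5 = [Q_4^5, s]$—detected by $\theta\lambda_1$ on the $E_\infty$-page for $\ko_*(B\Z_4)$—through the induced map $(\mathrm{id}\wedge Bg)\circ\widetilde\Delta\colon B\Z_4\to B\Z_4\wedge BS_4$: the reduced diagonal acts on cohomology via cup product, which combined with $g^*$ identifies $\theta\lambda_1$ with $\theta b + h_0\kappa_1 d$, the class defining $B$. The main obstacle is precisely this non-triviality step, since neither mod-$2$ cohomology pairings nor obvious geometric invariants detect filtration-$1$ classes; one must either execute a careful module-theoretic bordism computation or navigate the multiplicative Adams spectral sequence through the reduced diagonal.
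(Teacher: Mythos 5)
Your steps 1 and 2 are correct and essentially parallel the paper's argument, though with a variant: instead of computing the cohomology ring of $Q_4^5$ directly (as the paper does, via a Serre spectral sequence argument giving $H^*(Q_4^5;\Z_2)\cong\Z_2[x,y,t]/(x^2,y^2,t^2)$, so $xy^2 = 0$ on the nose), you use Wu's formula: $\Sq^2(xy) = xy^2$ and $v_2 = 0$ for a spin 5-manifold, hence $\int_{Q_4^5}xy^2 = 0$. This is a nice alternative that avoids needing the full ring structure. Your explicit handling of the $L(2)$ and $B\Z_2$ stable summands is also sound and, if anything, more careful than the paper's proof, which implicitly discards them.

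The genuine gap is in step 3, and you yourself flag it. Your first option — multiply by $\ell_1$ and use the relation $\ell_1 B = 2\ell_3\delta$ — runs into exactly the obstruction you mention: $2\ell_3\delta$ sits at Adams filtration $\ge 1$, so the degree-$6$ mod-$2$ pairing $\int xy\,\overline\nu_3$ vanishes on it and cannot distinguish $\ell_1\cdot[Q_4^5,P] = 2\ell_3\delta$ from $\ell_1\cdot[Q_4^5,P] = 0$. Your second option — tracing $q_5$ through $(\mathrm{id}\wedge Bg)\circ\widetilde\Delta$ on the Adams spectral sequence — is plausible in outline but not executed: it requires computing the induced map on $\Ext_{\cA(1)}$ in filtration $1$, which is more than a cohomological pairing and is nowhere worked out. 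The paper closes this gap using Dirac $\eta$-invariants instead: since $\eta_1^D(Q_4^5) = -1/4$, one has $\eta_2^D(Q_4^5, S(V)) = 1/2 \ne 0$, and because the $4$-cycle $g(1) = (1\ 2\ 3\ 4)$ is odd, the charge-$2$ representation of $\Z_4$ extends to the sign representation of $S_4$, so $\eta_2^D$ makes sense for the $S_4$-bundle as well; the product $(\eta_2^D)^{\mathrm{left}}(\eta_2^D)^{\mathrm{right}}$ is a $\Z_2$-valued bordism invariant of $\widetilde\Omega_5^\Spin(B\Z_4\wedge BS_4)$ nonzero on $(Q_4^5, P)$. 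This is the kind of analytic invariant that sees past filtration $0$, and you would need something of this flavor — or a full computation of the diagonal map on $\Ext$ — to complete your argument.
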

\begin{proof}
From \cref{ABC_and_more}, $\widetilde\Omega_5^\Spin(B\Z_4\wedge BS_4)\cong\Z_2\cdot \ell_1 A\oplus \Z_2\cdot B$, and from the image of $\ell_1 A$ on the $E_\infty$-page we know that $\ell_1 A$ is detected by the mod $2$ cohomology class $xyb$. It thus suffices to show $\int_{Q_4^5} x(P)y(P)b(P)= 0$ but that $Q_4^5$ is not null-bordant.

One can show that $H^*(Q_4^5;\Z_2)\cong\Z_2[x, y, t]/(x^2, y^2, t^2)$ with $\abs x = 1$ and $\abs y = \abs t = 2$ using a Serre spectral sequence argument similar to the one for $Q_4^{11}$ in~\cite[Proposition D.29]{Debray:2021vob}, and that $x(P) = x$, $y(P) = y$, and $b(P) = y(P) = y$, since the image of $b\in H^2(BS_4;\Z_2)$ under the pullback map to $H^*(B\Z_4;\Z_2)$ is $y$. Thus $xyb = xy^2 = 0$.

Let $\eta_k^D$ denote the $\eta$ invariant of the twisted Dirac operator associated to the charge-$k$ irreducible complex representation of $\Z_4$ (so that $k\in\Z_4$). Then $\eta_k^D$ is a bordism invariant $\Omega_5^\Spin(B\Z_4)\to\R/\Z$ and $\eta_1^D(Q_4^5) = -1/4$~\cite[Table 19]{Heckman}, so $\eta_k^D(Q_4^5, S(V)) = -k/4\bmod 1$. Because the permutation $(1\ 2\ 3\ 4)$ is an odd element of $S_4$, the charge $2$ irreducible representation of $\Z_4$ extends to a complex representation of $S_4$, namely the complexified sign representation. Therefore $\eta_2^D$ extends to a $\Z_2$-valued bordism invariant of $S_4$-bundles, and the product $(\eta_2^D)^{\mathrm{left}}(\eta_2^D)^{\mathrm{right}}$ (``left'' and ``right'' for the $\Z_4$ and $S_4$ factors) is a bordism invariant $\widetilde\Omega_5^\Spin(B\Z_4\wedge BS_4)\to\Z_2$ whose value on $(Q_4^5, P)$ is nonzero.
\end{proof}
This argument was effectively about the spin bordism of $B\Z_4\wedge B\Z_4$, rather than $B\Z_4\wedge BS_4$. The spin bordism of $B\Z_4\wedge B\Z_4$ is studied in detail by Bárcenas, García-Hernández, and Reinauer~\cite[\S 5, \S 6]{BGHR24}, who show that the class of $Q_4^5$ is nonzero in $\widetilde\Omega_5^\Spin(B\Z_4\wedge B\Z_4)$ using $\eta$-invariants, though they do not discuss extensions to $\Z_4\times S_4$.

\subsubsection{\texorpdfstring{Generators of $\ko_*(B\Z_4 \wedge L(2))$}{Generators of ko(BZ4 wedge L(2))}}
It is straightforward to show that the generators of $\widetilde{\ko}_5(B\Z_4 \wedge L(2))$, $\widetilde{\ko}_6(B\Z_4 \wedge L(2))$, and $\widetilde{\ko}_7(B\Z_4 \wedge L(2))$ are detected by the classes $xa^2b, y a^2b, xy a^2 \in H^*(B\Z_4 \times BS^4;\Z_2)$. It follows that $\widetilde{\ko}_5(B\Z_4 \wedge L(2))$ is generated by $L_4^1 \times W_4$, where $W_4$ was described in Section \ref{sec:W4}. Similarly, if we let $W_6$ denote the generator of $\widetilde{\ko}_6(B\Z_4 \wedge L(2))$, then $L_4^1 \times W_6$ generates $\widetilde{ko}_7(B\Z_4 \wedge L(2))$.

To find this last generator, we first shrink the search space by replacing $\Z_4\times S_4$ with a smaller group.
\begin{lem}
Let $\varphi\colon\Z_2^2\to \Z_4\times S_4$ be the homomorphism sending $(1, 0)\mapsto (2, (1\ 3)(2\ 4))$ and $(0, 1)\mapsto (0, (1\ 3))$.
%and $(0, 0, 1)\mapsto (0, (1\ 3)(2\ 4))$.
Then under the identifications $H^*(B\Z_4;\Z_2)\cong\Z_2[x,y]/(x^2)$, $H^*(BS_4;\Z_2)\cong\Z_2[a,b,c]/(ac)$ and $H^*(B\Z_2^2;\Z_2)\cong\Z_2[\alpha, \beta]$ given by Theorems~\ref{Thm:CohoZ4}, \ref{Thm:CohoS4}, and the K\"{u}nneth formula respectively,
\begin{equation}
    \begin{alignedat}{2}
        \varphi^*(x) &=0 \qquad\qquad& \varphi^*(a) &= \beta\\
        \varphi^*(y) &= \alpha^2 & \varphi^*(b) &= \alpha^2 + \alpha\beta\\
        && \varphi^*(c) &= 0.
    \end{alignedat}
\end{equation}
\end{lem}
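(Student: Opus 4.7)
The plan is to factor $\varphi$ as the product of its two components and compute each factor's action on cohomology separately. Specifically, write $\varphi = (\varphi_1, \varphi_2)$ where $\varphi_1\colon \Z_2^2\to\Z_4$ sends $(1,0)\mapsto 2$ and $(0,1)\mapsto 0$, while $\varphi_2\colon \Z_2^2\to S_4$ sends $(1,0)\mapsto (1\ 3)(2\ 4)$ and $(0,1)\mapsto (1\ 3)$. By the Kunneth formula, the pullback $\varphi^*$ on $H^*(B\Z_4\times BS_4;\Z_2) \cong H^*(B\Z_4;\Z_2)\otimes H^*(BS_4;\Z_2)$ respects the tensor decomposition, so it suffices to compute $\varphi_1^*(x), \varphi_1^*(y), \varphi_2^*(a), \varphi_2^*(b)$, and $\varphi_2^*(c)$ independently.

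For $\varphi_1^*$, I would note that $\varphi_1$ factors through the subgroup $\iota\colon \Z_2\hookrightarrow \Z_4$ sending $1\mapsto 2$, composed with the projection $\Z_2^2\to\Z_2$ onto the first factor (which on cohomology sends the degree-one generator of $H^*(B\Z_2;\Z_2)$ to $\alpha$). The generator $x\in H^1(B\Z_4;\Z_2)$ is the mod-$2$ reduction of the tautological class in $\Hom(\Z_4,\Z_2)$, whose restriction to $\iota(\Z_2)$ vanishes since $2\bmod 2=0$; hence $\varphi_1^*(x)=0$. The class $y\in H^2(B\Z_4;\Z_2)$ is the mod-$2$ reduction of a generator of $H^2(B\Z_4;\Z)\cong\Z_4$, which restricts to the nontrivial element of $H^2(B\Z_2;\Z)\cong\Z_2$, reducing to $\alpha^2$ in $H^2(B\Z_2;\Z_2)$; pulling back along the projection yields $\varphi_1^*(y)=\alpha^2$.

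For $\varphi_2^*$, I would first identify the image: $(1,1)\in\Z_2^2$ maps to $(1\ 3)(2\ 4)(1\ 3)=(2\ 4)$, so the image is the non-normal Klein four subgroup $\{e,(1\ 3),(2\ 4),(1\ 3)(2\ 4)\}\subset S_4$. By Madsen-Milgram's identification used in Theorem~\ref{Thm:CohoS4}, the classes $a,b,c$ are $w_1,w_2,w_3$ of the standard three-dimensional tetrahedral representation $\rho\colon S_4\to \mathrm O(3)$, so $\varphi_2^*(a),\varphi_2^*(b),\varphi_2^*(c)$ are the Stiefel-Whitney classes of $\rho\circ\varphi_2$. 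Since $\Z_2^2$ is abelian, this representation decomposes as a direct sum of real characters. A direct computation using the permutation representation on $\R^4$ modulo the diagonal---diagonalizing the two commuting involutions simultaneously---shows that $\rho\circ\varphi_2$ splits as $\mathbf 1 \oplus L_\alpha\oplus L_{\alpha+\beta}$, where $L_\gamma$ denotes the real line bundle with $w_1=\gamma$.

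Applying the Whitney sum formula to $(1)(1+\alpha)(1+\alpha+\beta)$, the total Stiefel-Whitney class expands as $1 + \beta + (\alpha^2+\alpha\beta) + 0$, giving $\varphi_2^*(a)=\beta$, $\varphi_2^*(b)=\alpha^2+\alpha\beta$, and $\varphi_2^*(c)=0$, as claimed. The only step requiring genuine work is the character decomposition in Step~3, which amounts to finding a simultaneous eigenbasis for two commuting order-two matrices on $\R^3$; the sign character $\varphi_2^*(a)=\beta$ can be read off independently as a consistency check, since the sign of $(1\ 3)(2\ 4)$ is $+1$ and the sign of $(1\ 3)$ is $-1$.
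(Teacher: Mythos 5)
Your proof is correct and takes essentially the same route as the paper's: reduce via naturality of the Künneth isomorphism to the two component maps $\Z_2\hookrightarrow\Z_4$ and $\Z_2^2\hookrightarrow S_4$ separately, then compute the $S_4$ part by restricting a representation of $S_4$ to the Klein four image and applying the Whitney sum formula. The only cosmetic difference is that you use the $3$-dimensional tetrahedral representation where the paper uses the $4$-dimensional permutation representation $\boldsymbol 4$; since $\boldsymbol 4$ is the direct sum of the tetrahedral representation with a trivial line, the Stiefel--Whitney computation is identical, and your diagonalization of the two commuting involutions is carried out correctly (including the useful sign-character sanity check on $\varphi^*(a)$).
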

\begin{proof}
Naturality of the K\"{u}nneth isomorphism implies that it suffices to understand the pullbacks by the maps $\phi_1\colon\Z_2\inj \Z_4$ and $\phi_2\colon \Z_2^2\inj S_4$ given by $\phi_1(1) = 2$, $\phi_2(1, 0) = (1\ 3)$, and $\phi_2(0, 1) = (1\ 3)(2\ 4)$. For $\phi_1$ the calculation on cohomology follows from~\cite[Lemma 14.38]{Heckman} (specifically $i_4$, not $\tilde\imath_4$). For $\phi_2$, we use the fact that $b$ and $c$ are $w_2$, resp.\ $w_3$ of the standard four-dimensional permutation representation $\boldsymbol 4$ of $S_4$~\cite[Proposition 5.1]{Priddy}: we can therefore restrict the representation to $\Z_2^2$ and calculate its Stiefel-Whitney classes there. If $\sigma_i$ denotes the real, one-dimensional representation which is nontrivial on the $i^{\mathrm{th}}$ copy of $\Z_2$ inside $\Z_2^2$, and which is trivial on the other copy, then the reader can directly check
\begin{equation}
    \phi_2^*(\boldsymbol 4)\cong\underline \R^2\oplus \sigma_2\oplus (\sigma_1\otimes\sigma_2).
\end{equation}
The Whitney sum formula quickly calculates $w_2$ and $w_3$ of this representation for us.
\end{proof}
The reader can then check that
\begin{equation}\label{ya2bpullback}
    \varphi^*(ya^2b) = \alpha^4\beta^2 + \alpha^3\beta^3.
\end{equation}
\begin{prop}
\label{found_W6}
Let $P_i\to\RP^3\times\RP^3$ be the principal $\Z_2$-bundle which has nontrivial monodromy around the $i^{\mathrm{th}}$ $\RP^3$ and is trivial around the other $\RP^3$. Then the bordism class of $(\RP^3\times\RP^3, \varphi(P_1, P_2))$ inside $\widetilde\Omega_6^\Spin(B\Z_4\wedge BS_4)$ is not in the span of $\set{C, \theta\phi}$, and therefore may be taken as our representative for the final remaining generator $W_6$.
\end{prop}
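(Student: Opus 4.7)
The plan is to show that $[\RP^3\times\RP^3, \varphi(P_1,P_2)]$ has nonzero image in the $B\Z_4\wedge L(2)$ stable summand of $\widetilde\Omega_6^\Spin(B\Z_4\wedge BS_4)$. Since by~\eqref{L2_Z4} that summand is exactly $\Z_2\cdot W_6$, this is enough to distinguish the class from every combination of generators coming from the other two stable summands (in particular from any class built out of $C$ or $\theta\phi$, which live in the $B\SL(3,\F_2)$-summand). The detector will be the mod-$2$ characteristic number $\int ya^2b$, using the pullback formula $\varphi^*(ya^2b) = \alpha^4\beta^2 + \alpha^3\beta^3$ already recorded in~\eqref{ya2bpullback}.

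First I would carry out the integral directly. The bundle $(P_1,P_2)$ is classified by the identity $\pi_1(\RP^3\times\RP^3)\cong\Z_2\times\Z_2 \to \Z_2\times\Z_2$, so under pullback $\alpha$ restricts to the generator $u_1\in H^1(\RP^3_{(1)};\Z_2)$ and $\beta$ to $u_2\in H^1(\RP^3_{(2)};\Z_2)$. Because $u_i^4 = 0$ in $H^*(\RP^3;\Z_2) = \Z_2[u]/(u^4)$, the $\alpha^4\beta^2$ term vanishes, while $\alpha^3\beta^3$ pulls back to $u_1^3\otimes u_2^3$, the top cohomology class of $\RP^3\times\RP^3$. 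Hence $\int_{\RP^3\times\RP^3}\varphi(P_1,P_2)^*(ya^2b) = 1$.

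Next I would promote this integral to a bordism invariant via Margolis' theorem, following the same strategy used in \S\ref{sec:W4} for $W_4$. By \cref{complete_L2_decomp}(\ref{coh_S4_A1_free}), $a^2b\in\widetilde H^*(BS_4;\Z_2)$ generates a free $\cA(1)$-summand corresponding to the $L(2)$-factor in the Mitchell-Priddy splitting. Tensoring with $\widetilde H^*(B\Z_4;\Z_2)$ preserves freeness (any $\cA(1)$-module tensored with a free $\cA(1)$-module over $\Z_2$ is again free), so $ya^2b = y\otimes a^2b$ generates a free $\cA(1)$-summand of $\widetilde H^*(B\Z_4\wedge BS_4;\Z_2)$; as an explicit check one can verify $\Sq^2\Sq^2\Sq^2(ya^2b)\ne 0$ using Theorems~\ref{Thm:CohoZ4} and~\ref{Thm:CohoS4}. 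Margolis' theorem then lifts this summand to an $H\Z_2$-summand of $\ko\wedge B\Z_4\wedge BS_4$, so pairing against $ya^2b$ defines a homomorphism $\widetilde\Omega_6^\Spin(B\Z_4\wedge BS_4)\to\Z_2$ that kills the $B\Z_4\wedge B\Z_2$ and $B\Z_4\wedge B\SL(3,\F_2)$ summands and restricts to an isomorphism on $\widetilde\Omega_6^\Spin(B\Z_4\wedge L(2))$.

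Combining the two steps, $[\RP^3\times\RP^3, \varphi(P_1,P_2)]$ must project to the nonzero class of $\widetilde\Omega_6^\Spin(B\Z_4\wedge L(2))\cong\Z_2$, so it is linearly independent from everything in the other two summands and may be adopted as our representative $W_6$. The only non-routine step is the Margolis verification that $ya^2b$ really does generate a free $\cA(1)$-summand rather than, say, being in the image of some $\Sq^i$; this reduces to a finite Steenrod-square calculation mirroring the one already carried out for $W_4$, so I expect no conceptual obstacle, only the bookkeeping of tracking which monomials in $\widetilde H^*(B\Z_4\wedge BS_4;\Z_2)$ map to $ya^2b$ under $\Sq^1$ and $\Sq^2$.
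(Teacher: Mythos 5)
Your proposal is correct and takes essentially the same approach as the paper: both detect the class via the characteristic number $\int ya^2b$, use the pullback formula~\eqref{ya2bpullback}, and evaluate it on $\RP^3\times\RP^3$ with the tautological $\Z_2^2$-bundle, concluding $\alpha^4\beta^2\mapsto 0$ and $\alpha^3\beta^3\mapsto 1$. The re-derivation via Margolis' theorem in your Step 4 is correct (the Hopf-algebra tensor trick does preserve $\cA(1)$-freeness), but it duplicates what was already established in \cref{complete_L2_decomp}(\ref{smash_with_Z4L2}), which the paper simply cites.
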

\begin{proof}
First, $\RP^3$ is spin, so $\RP^3\times\RP^3$ is also spin. As we discussed above, it therefore suffices to show that
\begin{equation}
\label{eval_integral}
\int_{\RP^3\times\RP^3} (ya^2b)(\varphi(P_1, P_2))
\end{equation}
is nonzero. Let $a\in H^2(B\Z_2;\Z_2)$ denote the generator; by~\eqref{ya2bpullback}, $\varphi^*(ya^2b) = \alpha^4\beta^2 + \alpha^3\beta^3$, and by definition we have $\alpha = a(P_1)$ and $\beta = a(P_2)$, so~\eqref{eval_integral} is equivalent to
\begin{equation}
\label{PE32int}
    \int_{\RP^3\times\RP^3} (a(P_1)^4 a(P_2)^2 + a(P_1)^3 a(P_2)^3),
\end{equation}
so we need to show this is nonzero. Now $H^*(\RP^3\times\RP^3;\Z_2)\cong\Z_2[a(P_1), a(P_2)]/(a(P_1)^4, a(P_2)^4)$, which immediately implies~\eqref{PE32int} equals $1$.
\end{proof}

\newpage

\appendix

\section{The shape of singular fibers}
\label{app:cryssingfiber}

The first step to deriving the singular fiber geometries for the cases
\begin{align}
T^3 / (\mathbb{Z}_3)_{\Gamma^{(2)}_3} \,, \quad T^3 / (\mathbb{Z}_4)_{\Gamma^{(2)}_4} \,,
\end{align}
is to find three-dimensional lattices, in whose basis the monodromies act like a rotation. For the 2-torus and $\gamma_3$ this meant choosing $\tau = e^{2 \pi i/3}$ which we will re-derive exemplifying the technique applicable to the higher-dimensional cases.

\subsubsection*{Example: The 2-torus}

Recall that the SL$(2,\mathbb{Z})$ transformation $\gamma_3$ is given by
\begin{equation}
\gamma_3 = \begin{pmatrix} -1 & -1 \\ 1 & 0 \end{pmatrix} \,.
\end{equation}
We want to find two vectors $\lambda_1$ and $\lambda_2$ that form a basis of the lattice $\Lambda_2$ defining the torus $T^2 = \mathbb{R}^2 / \Lambda_2$. And we want them to be such that $\gamma_3$ looks like a rotation in this necessarily non-orthogonal basis. For that we choose the ansatz
\begin{equation}
\lambda_1 = \begin{pmatrix} 1 \\ 0 \end{pmatrix} \,, \quad \lambda_2 = \begin{pmatrix} a \\ b \end{pmatrix} \,.
\label{eq:basisansatz}
\end{equation}
The action of $\gamma_3$ is given by
\begin{align}
\begin{split}
\lambda_1 & \rightarrow \widetilde{\lambda}_1 = \gamma_3 \begin{pmatrix} \lambda_1 \\ \lambda_2 \end{pmatrix} = - \lambda_1 - \lambda_2 = \begin{pmatrix} -1 - a \\ - b \end{pmatrix} \,, \\
\lambda_2 & \rightarrow \widetilde{\lambda}_2 = \gamma_3 \begin{pmatrix} \lambda_1 \\ \lambda_2 \end{pmatrix} = \lambda_1 = \begin{pmatrix} 1 \\ 0 \end{pmatrix} \,.
\end{split}
\end{align}
We are looking for a $(2 \times 2)$ matrix $A$ that has the property $A \widetilde{\lambda}_i = \lambda_i$, which is given by
\begin{equation}
A = \begin{pmatrix} a & - \frac{a(a+1)+1}{b} \\ b & - (a+1) \end{pmatrix} \,,
\end{equation}
where we have used that $b \neq 0$ for $\lambda_1$ and $\lambda_2$ to be linearly independent. Next, we demand that $A$ is a rotation matrix, i.e., an element in SO$(2)$, which can be phrased as $A^T A = \mathbf{1}$. One solution to this requirement, that also leads to a right-handed coordinate system is given by
\begin{equation}
a = - \tfrac{1}{2} \,, \quad b = \tfrac{\sqrt{3}}{2} \,.
\end{equation}
Indeed, expressed as a complex number by identifying $\mathbb{R}^2 \simeq \mathbb{C}$, we find
\begin{equation}
\begin{pmatrix} a \\ b \end{pmatrix} \simeq \tau = e^{2 \pi i/3} \,,
\end{equation}
precisely what we would expect from the action of modular transformations.

\subsubsection*{Fixing the shape of the 3-tori}

Now we apply the same technique to the three-dimensional tori. 

It is easy to see that $\Gamma_3^{(2)}$ is actually an element of SO$(3)$ which means that for the orbifold $T^3 / (\mathbb{Z}_3)_{\Gamma^{(2)}_3}$ we can choose the lattice $\Lambda_3 = \mathbb{Z}^3$. To determine the actual geometric realization of the singular fiber we will actually increase the fundamental domain of the torus, i.e., discuss a multi-cover. The basis of the enlarged fundamental domain is given by 
\begin{equation}
\widetilde{\lambda}_1 = \begin{pmatrix} 1 \\ 1 \\ 1 \end{pmatrix} \,, \quad \widetilde{\lambda}_2 = \begin{pmatrix*}[r] 1 \\ -1 \\ 0 \end{pmatrix*} \,, \quad \widetilde{\lambda}_3 = \begin{pmatrix*}[r] 0 \\ 1 \\ -1 \end{pmatrix*} \,,
\end{equation}
The order of the covering can be easily inferred from the determinant of the base change
\begin{equation}
\text{det} \begin{pmatrix*}[r] 1 & 1 & 0 \\ 1 & -1 & 1 \\ 1 & 0 & -1 \end{pmatrix*} = 3 \,.
\end{equation}
This new basis is chosen since $\widetilde{\lambda}_1$ coincides with the rotation axis, and $\widetilde{\lambda}_2$ and $\widetilde{\lambda}_3$ are perpendicular to it. To be more precise the action of $\Gamma_3^{(2)}$ is given by
\begin{equation}
\widetilde{\lambda}_1 \rightarrow \widetilde{\lambda}_1 \,, \quad \widetilde{\lambda}_2 \rightarrow - \widetilde{\lambda}_2 -  \widetilde{\lambda}_3 \,, \quad  \widetilde{\lambda}_3 \rightarrow  \widetilde{\lambda}_2 \,.
\end{equation}
This is exactly the transformation property imposed by $\gamma_3$ in the sub-torus spanned by $ \widetilde{\lambda}_2$ and $ \widetilde{\lambda}_3$. Therefore we find that the triple cover $\widetilde{T}^3$ allows for a straightforward quotient structure given by
\begin{equation}
\widetilde{T}^3 / (\mathbb{Z}_3)_{\Gamma_3^{(2)}} = (T^2/\mathbb{Z}_3) \times S^1 \,.
\end{equation}
To obtain the actual geometry we further have to reduce to the original fundamental domain of the torus. This can be done by additionally implementing shifts by internal lattice points, of which there are three since $\widetilde{T}^3$ is a triple cover. These internal lattice points are given by
\begin{equation}
p_1 = \begin{pmatrix} 0 \\ 0 \\ 0 \end{pmatrix} \,, \quad p_1 = \begin{pmatrix} 1 \\ 0 \\ 0 \end{pmatrix} = \tfrac{1}{3} \widetilde{\lambda}_1 + \tfrac{2}{3}  \widetilde{\lambda}_2 + \tfrac{1}{3}  \widetilde{\lambda}_3 \,, \quad p_3 = \begin{pmatrix} 1 \\ 1 \\ 0 \end{pmatrix} = \tfrac{2}{3}  \widetilde{\lambda}_1 + \tfrac{1}{3} \widetilde{\lambda}_2 + \tfrac{2}{3}  \widetilde{\lambda}_3 \,.
\end{equation}
Modding out by this $\mathbb{Z}_3^s$ translational action, we obtain the singular fiber
\begin{equation}
T^3 / (\mathbb{Z}_3)_{\Gamma_3^{(2)}} = \big( (T^2/\mathbb{Z}_3) \times S^1 \big) / \mathbb{Z}_3^s \,.
\end{equation}
This can be understood as a fibration of $T^2 / \mathbb{Z}_3$ over the circle with periodicity $\tfrac{1}{3} \widetilde{\lambda}_1$, see Figure~\ref{fig:T3Z3}.
\begin{figure}
\centering
\includegraphics[width = 0.6 \textwidth]{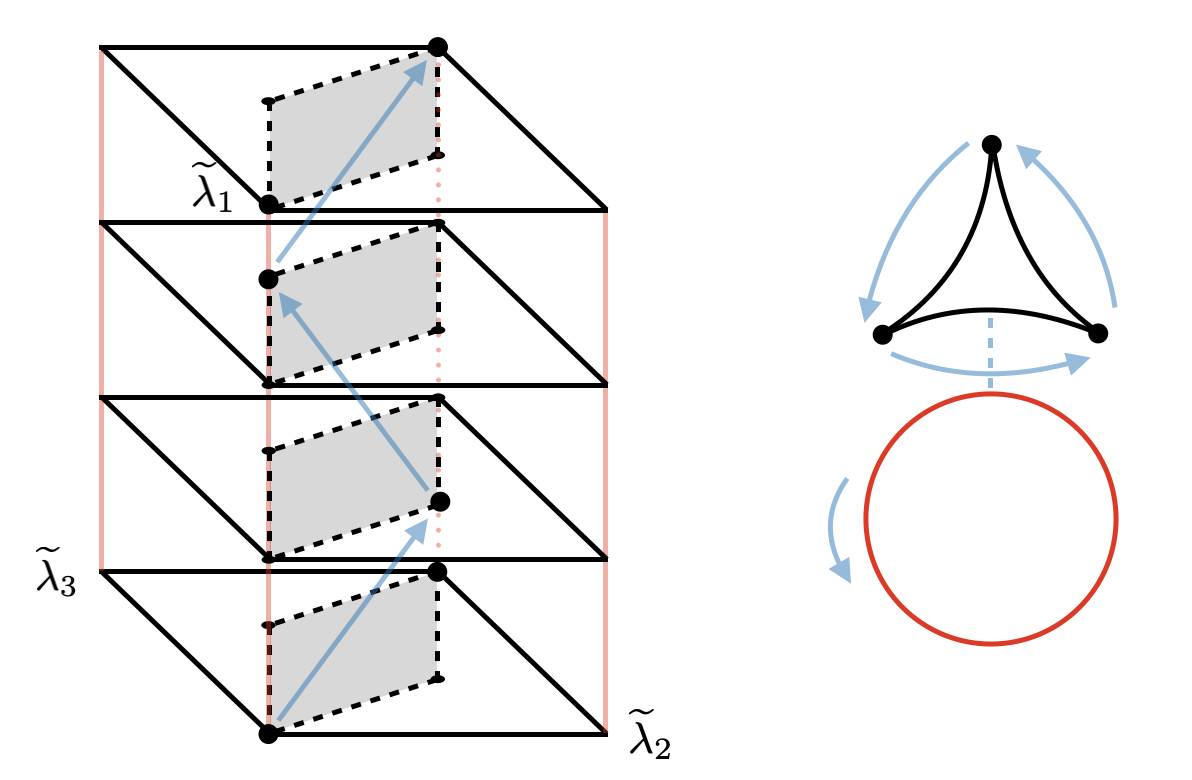}
\caption{The singular fiber $T^3 / (\mathbb{Z}_3)_{\Gamma_3^{(2)}}$ in terms of its triple cover on the left, where the translational symmetry on the fixed points of $T^2/\mathbb{Z}_3$ is indicated by shaded blue arrows, and its fibration structure on the right.}
\label{fig:T3Z3}
\end{figure}
In particular, we see that $\mathbb{Z}_3^s$ exchanges the fixed points of $T^2 / \mathbb{Z}_3$ when going around the circle, leaving a single line of local $\mathbb{C}/\mathbb{Z}_3$ singularities as expected.

We have to work a little harder for the singular fiber in the case of $\Gamma_4^{(2)}$, since it is not given by an element of SO$(3)$. As in the case of $T^2 /\mathbb{Z}_3$ discussed above, this means that we want to find a basis of the lattice $\Lambda_3$, which is necessary non orthogonal, in which $\Gamma_4^{(2)}$ acts as a rotation. Starting with the ansatz, in analogy to \eqref{eq:basisansatz}, 
\begin{equation}
\lambda_1 = \begin{pmatrix} 1 \\ 0 \\ 0 \end{pmatrix} \,, \quad \lambda_2 = \begin{pmatrix} a \\ b \\ 0 \end{pmatrix} \,, \quad \lambda_3 = \begin{pmatrix} c \\ d \\ e \end{pmatrix} \,,
\end{equation}
we can go through analogous steps as in the two-dimensional case and find an appropriate, right-handed basis given by
\begin{equation}
\lambda_1 = \begin{pmatrix} 1 \\ 0 \\ 0 \end{pmatrix} \,, \quad \lambda_2 = \begin{pmatrix} - \tfrac{1}{2} \\ \tfrac{\sqrt{3}}{2} \\ 0 \end{pmatrix} \,, \quad \lambda_3 = \begin{pmatrix} 0 \\ - \tfrac{1}{\sqrt{3}} \\ \sqrt{\tfrac{2}{3}} \end{pmatrix} \,.
\end{equation}
The explicit action of $\Gamma_4^{(2)}$ reads
\begin{equation}
\lambda_1 \mapsto \lambda_1 + \lambda_2 + \lambda_3 \,, \quad \lambda_2 \mapsto - \lambda_1 \,, \quad \lambda_3 \mapsto - \lambda_2 \,.
\end{equation}
As in the case of $\Gamma^{(2)}_3$ we want to obtain an easier understanding of this action by enlarging the fundamental domain. Guided by the fact that $\lambda_1 + \lambda_3$ maps to itself we choose
\begin{equation}
\widetilde{\lambda}_1 = \lambda_1 + \lambda_3 \,, \quad \widetilde{\lambda}_2 = \lambda_1 + \lambda_2 \,, \quad \widetilde{\lambda}_3 = \lambda_2 + \lambda_3 \,,
\end{equation}
with 
\begin{equation}
\text{det} \begin{pmatrix} 1 & 1 & 0 \\ 0 & 1 & 1 \\ 1 & 0 & 1 \end{pmatrix} = 2 \,,
\end{equation}
showing that it is a double cover of the original fundamental domain. The action of $\Gamma^{(2)}_4$ on this new basis is given by
\begin{equation}
\widetilde{\lambda}_1 \rightarrow \widetilde{\lambda}_1 \,, \quad \widetilde{\lambda}_2 \rightarrow \widetilde{\lambda}_3 \,, \quad \widetilde{\lambda}_3 \rightarrow - \widetilde{\lambda}_2 \,,
\end{equation}
which shows that it acts as $\gamma_4$, i.e., rotation by $\tfrac{\pi}{2}$, on $\widetilde{\lambda}_2$ and $\widetilde{\lambda}_3$. Thus, we find that taking the quotient by $(\mathbb{Z}_4)_{\Gamma^{(2)}_4}$ of the double cover $\widetilde{T}^3$ one has
\begin{equation}
\widetilde{T}^3 / (\mathbb{Z}_4)_{\Gamma^{(2)}_4} = (T^2 / \mathbb{Z}_4) \times S^1 \,.
\end{equation}
The internal lattice points are given by
\begin{equation}
p_1 = \begin{pmatrix} 0 \\ 0 \\ 0 \end{pmatrix} \,, \quad p_2 = \begin{pmatrix} \tfrac{1}{2} \\ \tfrac{1}{2 \sqrt{3}} \\ \sqrt{\tfrac{2}{3}}\end{pmatrix} = \lambda_1 + \lambda_2 + \lambda_3 = \tfrac{1}{2} \big(\widetilde{\lambda}_1 + \widetilde{\lambda}_2 + \widetilde{\lambda}_3 \big) \,.
\end{equation}
The singular fiber is therefore given by an additional quotient by a $\mathbb{Z}_2^s$ translational symmetry
\begin{equation}
T^3/ (\mathbb{Z}_4)_{\Gamma_4^{(2)}} = \big( (T^2/\mathbb{Z}_4) \times S^1 \big) / \mathbb{Z}^s_2
\end{equation}
as depicted in Figure~\ref{fig:T3Z4}.
\begin{figure}
\centering
\includegraphics[width = 0.6 \textwidth]{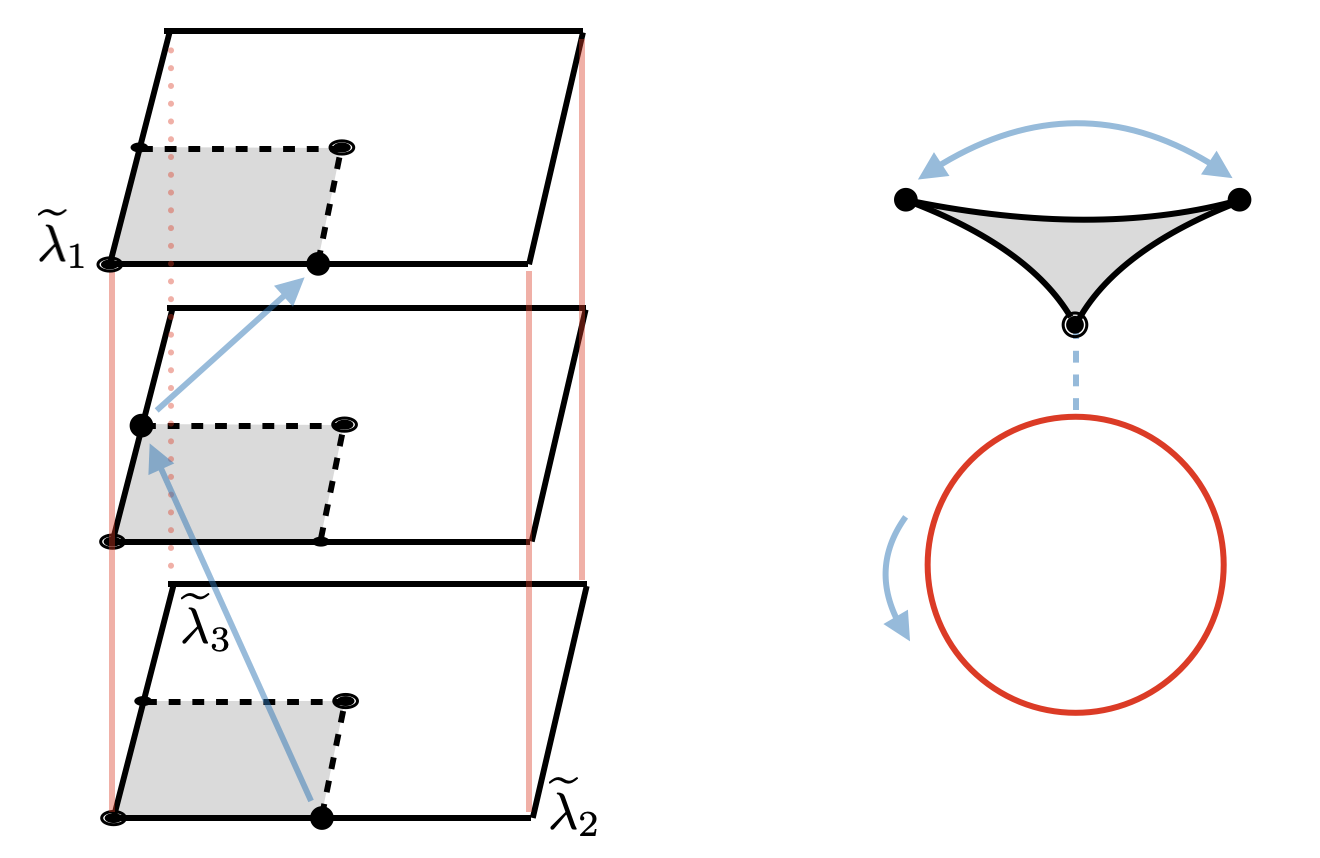}
\caption{The singular fiber $T^3 / (\mathbb{Z}_3)_{\Gamma_4^{(2)}}$ in terms of its double cover on the left, where the translational symmetry on the fixed points of $T^2/\mathbb{Z}_4$ is indicated by shaded blue arrows, and its fibration structure on the right.}
\label{fig:T3Z4}
\end{figure}
The action of $\mathbb{Z}_2^s$ can be understood as inducing a fibration of $T^2/\mathbb{Z}_4$ over a circle with periodicity $\widetilde{\lambda}_1$. Its action exchanges the two fixed points with local description of the form $\mathbb{C}/\mathbb{Z}_4$ and leaves the third fixed point of the form $\mathbb{C}/\mathbb{Z}_2$ invariant, giving two line of orbifold singularities.

Finally, we want to briefly discuss the singular fiber of the 3-torus modded and monodromy $M_2^{(2)}$ (an equivalent discussion applies for $M_1^{(2)}$), which defines the action
\begin{equation}
\lambda_1 \mapsto \lambda_3 \,, \quad \lambda_2 \mapsto - \lambda_1 - \lambda_2 - \lambda_3 \,, \quad \lambda_3 \mapsto \lambda_1 \,,
\end{equation}
from which we can define the invariant
\begin{equation}
    \widetilde{\lambda}_1 = \lambda_1 + \lambda_3 \mapsto \widetilde{\lambda}_1 \,,
\end{equation}
as well as the 
\begin{equation}
    \widetilde{\lambda}_2 = \lambda_1 + \lambda_2 \mapsto - \widetilde{\lambda}_2 \,, \quad \widetilde{\lambda}_3 = \lambda_2 + \lambda_3 \mapsto - \widetilde{\lambda}_3 \,.
\end{equation}
This shows that there is a double cover $\widetilde{T}^3$ of the original torus, such that
\begin{equation}
    \widetilde{T}^3 / (\mathbb{Z}_2)_{M_1^{(2)}} = (T^2/\mathbb{Z}_2) \times S^1 \,,
\end{equation}
with the $\mathbb{C}/\mathbb{Z}_2$ orbifold points at $\{ 0 \,, \tfrac{1}{2} \widetilde{\lambda}_2 \,, \tfrac{1}{2} \widetilde{\lambda}_3 \,, \tfrac{1}{2} \widetilde{\lambda}_2 + \tfrac{1}{2} \widetilde{\lambda}_3\}$
The only interior points are given by 
\begin{equation}
    p_1 = 0 \,, \quad p_2= 
    \tfrac{1}{2} \widetilde{\lambda}_1 + \tfrac{1}{2} \widetilde{\lambda}_2 + \tfrac{1}{2} \widetilde{\lambda}_3 = \lambda_1 + \lambda_2 + \lambda_3 \,,
\end{equation}
and one sees that shifts by it lead to shifts half-way around the $S^1$ and exchange the pair of orbifold points at $\{0 \,, \tfrac{1}{2} \widetilde{\lambda}_1 + \tfrac{1}{2} \widetilde{\lambda_2} \}$ and $\{ \tfrac{1}{2} \widetilde{\lambda}_1 \,, \tfrac{1}{2} \widetilde{\lambda}_2\}$. Thus, one has the singular geometry
\begin{equation}
    T^3 / (\mathbb{Z}_2)_{M_1^{(2)}} = \big( (T^2/\mathbb{Z}_2) \times S^1\big) / \mathbb{Z}_2^s
\end{equation}
sketched in Figure \ref{fig:T3Z2}. 
\begin{figure}
\centering
\includegraphics[width = 0.6 \textwidth]{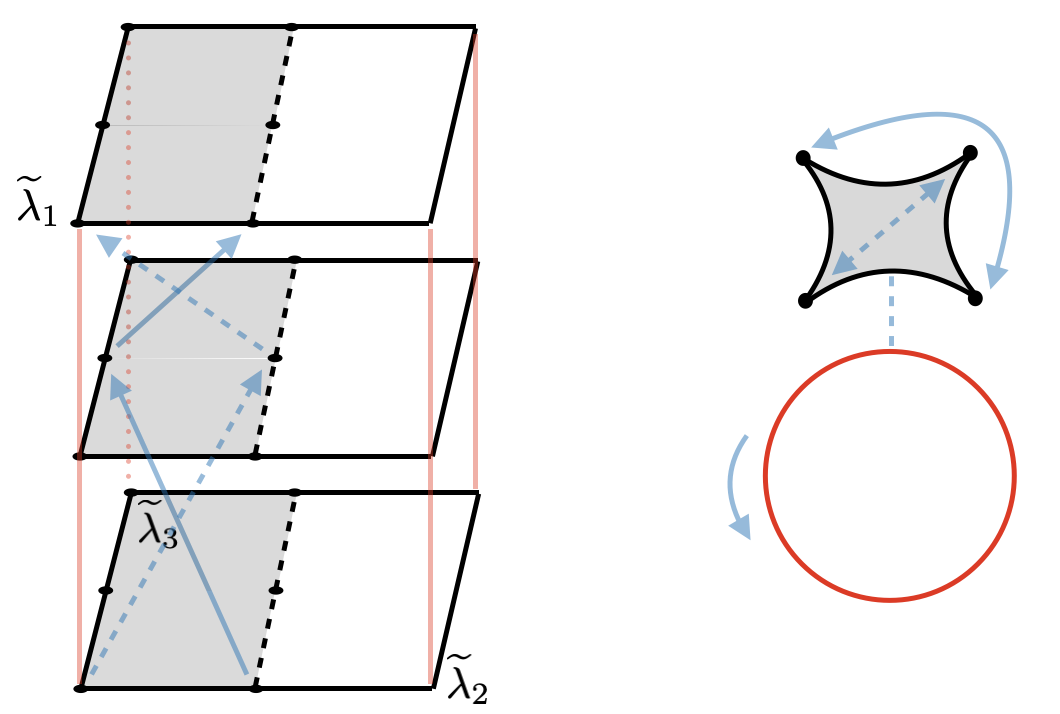}
\caption{The singular fiber $T^3 / (\mathbb{Z}_2)_{M_1^{(2)}}$ in terms of its double cover on the left, where the translational symmetry on the fixed points of $T^2/\mathbb{Z}_2$ is indicated by shaded blue arrows, and its fibration structure on the right.}
\label{fig:T3Z2}
\end{figure}

\subsection{Non-geometric singular fibers}
\label{subapp:singnongeo}

Finally, we want to apply the same method in the context of exceptional field theory, where we understand the simultaneous SL$(2,\mathbb{Z})$ and SL$(3,\mathbb{Z})$ action as the tensor product acting on a 6-torus (the auxiliary internal space).\footnote{Here, we will not explicitly work out a good basis for the lattice $\Lambda_6$ but determine the fibration structure form a carefully chosen multi-cover.} We will only consider the singular fiber for
\begin{equation}
    \gamma_3 \otimes \Gamma_3^{(2)} = \begin{pmatrix} 0 & -1 & 0 & 0 & -1 & 0 \\ 0 & 0 & -1 & 0 & 0 & -1 \\ -1 & 0 & 0 & -1 & 0 & 0 \\ 0 & 1 & 0 & 0 & 0 & 0 \\ 0 & 0 & 1 & 0 & 0 & 0 \\ 1 & 0 & 0 & 0 & 0 & 0 \end{pmatrix} \,,
\end{equation}
which acts as\footnote{In terms of the two index notation $\lambda_{\alpha a}$ one has: $\lambda_1 = \lambda_{1,1} \,, \lambda_2 = \lambda_{2,1} \,, \lambda_3 = \lambda_{3,1} \,, \lambda_4 = \lambda_{1,2} \,, \lambda_5 = \lambda_{2,2} \,, \lambda_6 = \lambda_{3,2}$ with SL$(3,\mathbb{Z})$ index $\alpha$ and SL$(2,\mathbb{Z})$ index $a$.}
\begin{equation}
    \begin{split}
        \lambda_1 &\mapsto - \lambda_2 - \lambda_5 \,, \quad \lambda_2 \mapsto - \lambda_3 - \lambda_6 \,, \quad \lambda_3 \mapsto - \lambda_1 - \lambda_4 \,, \\
        \lambda_4 &\mapsto \lambda_2 \,, \quad \lambda_5 \mapsto\lambda_3 \,, \quad \lambda_6 \mapsto \lambda_1 \,.
    \end{split}
\end{equation}
of course the same approach can be used for $\gamma_4 \otimes \Gamma_4^{(2)}$.

This action has a two-dimensional invariant subspace spanned by the lattice vectors
\begin{equation}
    \widetilde{\lambda}_1 = - \lambda_1 + \lambda_3 - \lambda_4 + \lambda_5 \,, \quad \widetilde{\lambda}_2 = - \lambda_2 + \lambda_3 -\lambda_4 + \lambda_6 \,,
\end{equation}
with $\widetilde{\lambda}_1 \mapsto \widetilde{\lambda}_1$ and $\widetilde{\lambda}_2 \mapsto \widetilde{\lambda}_2$. Next, we define the directions
\begin{equation}
\begin{split}
    \widetilde{\lambda}_3 &= \lambda_1 + \lambda_2 + \lambda_3 \mapsto - \lambda_1 - \lambda_2 - \lambda_3 - \lambda_4 - \lambda_5 - \lambda_6 = - \widetilde{\lambda}_3 - \widetilde{\lambda}_4  \,, \\
    \widetilde{\lambda}_4 &= \lambda_4 + \lambda_5 + \lambda_6 \mapsto \lambda_1 + \lambda_2 + \lambda_3 = \widetilde{\lambda}_3 \,,
\end{split}
\end{equation}
and
\begin{equation}
    \begin{split}
        \widetilde{\lambda}_5 &= \lambda_3 - \lambda_5 \mapsto - \lambda_3 - \lambda_1 - \lambda_4 = - \widetilde{\lambda}_5 - \widetilde{\lambda}_6 \,, \\
        \widetilde{\lambda}_6 &= \lambda_1 + \lambda_4 + \lambda_5 \mapsto \lambda_3-\lambda_5 = \widetilde{\lambda}_5 \,.
    \end{split}
\end{equation}
The $\widetilde{\lambda}_i$ define a $\widetilde{T}^6$ which is a 9-fold cover of the original $T^6$ over which the monodromy action simplifies and one has
\begin{equation}
    \widetilde{T}^6 / (\mathbb{Z}_3)_{\gamma_3 \otimes \Gamma_3^{(2)}} = T_{(12)}^2 \times (T_{(34)}^2/\mathbb{Z}_3) \times (T_{(56)}^2/\mathbb{Z}_3) \,,
\end{equation}
where we label the various sub-tori by their basis in terms of the $\widetilde{\lambda}_i$. This orbifold has nine orbifold singularities of the form $\mathbb{C}^2/\mathbb{Z}_3$ at the combinations of orbifold points of the two-dimensional sub-tori (at $\{0 \,, \tfrac{1}{3} \widetilde{\lambda}_3 + \tfrac{2}{3} \widetilde{\lambda}_4 \,, \tfrac{2}{3} \widetilde{\lambda}_3 + \tfrac{1}{3} \widetilde{\lambda}_4\}$ and $\{0 \,, \tfrac{1}{3} \widetilde{\lambda}_5 + \tfrac{2}{3} \widetilde{\lambda}_6 \,, \tfrac{2}{3} \widetilde{\lambda}_5 + \tfrac{1}{3} \widetilde{\lambda}_6\}$, respectively). This means that in the original basis the orbifold points are at
\begin{equation}
\begin{split}
\enspace \tfrac{1}{3} \widetilde{\lambda}_3 + \tfrac{2}{3} \widetilde{\lambda}_4 &= \tfrac{1}{3} \big( \lambda_1 + \lambda_2 + \lambda_3 \big) + \tfrac{2}{3} \big( \lambda_4 + \lambda_5 + \lambda_6 \big) \,, \\
\tfrac{2}{3} \widetilde{\lambda}_3 + \tfrac{1}{3} \widetilde{\lambda}_4 &= \tfrac{2}{3} \big( \lambda_1 + \lambda_2 + \lambda_3 \big) + \tfrac{1}{3} \big( \lambda_4 + \lambda_5 + \lambda_6 \big) \,, \enspace \tfrac{1}{3} \widetilde{\lambda}_5 + \tfrac{2}{3} \widetilde{\lambda}_6 = \tfrac{2}{3} \big( \lambda_1 + \lambda_4 \big) + \tfrac{1}{3} \big( \lambda_3 + \lambda_5 \big) \,, \\ 
\tfrac{2}{3} \widetilde{\lambda}_5 + \tfrac{1}{3} \widetilde{\lambda}_6 &= \tfrac{1}{3} \big( \lambda_1 + \lambda_4 - \lambda_5 \big) + \tfrac{2}{3} \lambda_3 \,.
\end{split}
\end{equation}
As above, to obtain the original fundamental domain one needs to mod our by shifts by interior points. These are given by:
\begin{equation}
\begin{split}
    p_1 &= 0  \,, \\ 
    p_2 &= \tfrac{1}{3} \widetilde{\lambda}_1 + \tfrac{2}{3} \widetilde{\lambda}_5 + \tfrac{1}{3} \widetilde{\lambda}_6 = \lambda_3 \,, \\ 
    p_3 &= \tfrac{2}{3} \widetilde{\lambda}_1 + \tfrac{1}{3} \widetilde{\lambda}_5 + \tfrac{2}{3} \widetilde{\lambda}_6 = \lambda_3 + \lambda_5 \,, \\
    p_4 &= \tfrac{1}{3} \big( \widetilde{\lambda}_1 + \widetilde{\lambda}_2 \big) + \tfrac{1}{3} \widetilde{\lambda}_3 + \tfrac{2}{3} \widetilde{\lambda}_4 = \lambda_3 + \lambda_5 + \lambda_6 \,, \\
    p_5 &= \tfrac{2}{3} \big( \widetilde{\lambda}_1 + \widetilde{\lambda}_2 \big) + \tfrac{2}{3} \widetilde{\lambda}_3 + \tfrac{1}{3} \widetilde{\lambda}_4 = 2 \lambda_3 - \lambda_4 + \lambda_5 + \lambda_6 \,, \\
    p_6 &= \tfrac{1}{3} \widetilde{\lambda}_2 + \tfrac{1}{3} \widetilde{\lambda}_3  + \tfrac{2}{3} \widetilde{\lambda}_4 + \tfrac{1}{3} \widetilde{\lambda}_5 + \tfrac{2}{3} \widetilde{\lambda}_6 = \lambda_1 + \lambda_3 + \lambda_4 + \lambda_5 + \lambda_6 \,, \\
    p_7 &= \tfrac{2}{3} \widetilde{\lambda}_2 + \tfrac{2}{3} \widetilde{\lambda}_3 + \tfrac{1}{3} \widetilde{\lambda}_4 + \tfrac{2}{3} \widetilde{\lambda}_5 + \tfrac{1}{3} \widetilde{\lambda}_6  = \lambda_1 + 2 \lambda_3 + \lambda_6 \,, \\ 
    p_8 &= \tfrac{1}{3} \widetilde{\lambda}_1 + \tfrac{2}{3} \widetilde{\lambda}_2 + \tfrac{2}{3} \widetilde{\lambda}_3 + \tfrac{1}{3} \widetilde{\lambda}_4 + \tfrac{1}{3} \widetilde{\lambda}_5 + \tfrac{2}{3} \widetilde{\lambda}_6 = \lambda_1 + 2 \lambda_3 + \lambda_5 + \lambda_6 \,, \\
    p_9 &= \tfrac{2}{3} \widetilde{\lambda}_1 + \tfrac{1}{3} \widetilde{\lambda}_2 + \tfrac{1}{3} \widetilde{\lambda}_3 + \tfrac{2}{3} \widetilde{\lambda}_4 + \tfrac{2}{3} \widetilde{\lambda}_5 + \tfrac{1}{3} \widetilde{\lambda}_6 = 2 \lambda_3 + \lambda_4 + \lambda_5 \,.
\end{split}
\end{equation}
As for the geometric cases above, the action exchanges the orbifold fixed point and can be written in terms of the two $\mathbb{Z}_3$ actions
\begin{equation}
\begin{split}
    \mathbb{Z}_3^s \text{ on } T^2_{(12)} \times (T^2_{(56)}/\mathbb{Z}_3):& \quad (z_{(12)} \,, z_{(34)} \,, z_{(56)}) \mapsto (z_{(12)} + \tfrac{1}{3} \widetilde{\lambda}_1 \,, z_{(34)} \,, z_{(56)} + \tfrac{2}{3} \widetilde{\lambda}_5 + \tfrac{1}{3} \widetilde{\lambda}_6) \,, \\
    \widetilde{\mathbb{Z}}_3^s  \text{ on } T^2_{(12)} \times (T^2_{(34)}/\mathbb{Z}_3):& \quad (z_{(12)} \,, z_{(34)} \,, z_{(56)}) \mapsto (z_{(12)} + \tfrac{1}{3} \widetilde{\lambda}_1 + \tfrac{1}{3} \widetilde{\lambda}_2 \,, z_{(34)} + \tfrac{1}{3} \widetilde{\lambda}_3 + \tfrac{2}{3} \widetilde{\lambda}_4 \,, z_{(56)}) \,,
\end{split}
\end{equation}
where we used the complex coordinate $z_{(ij)}$ for the 2-subtorus $T^2_{(ij)}$. From this action we see that we can understand the full singular $T^6$ fiber as a fibration of $(T^2_{(34)}/\mathbb{Z}_3) \times (T^2_{(56)}/ \mathbb{Z}_3)$ over a base $T^2$ spanned by $\tfrac{1}{3} \widetilde{\lambda}_1$ and $\tfrac{1}{3} \widetilde{\lambda}_1 + \tfrac{1}{3} \widetilde{\lambda}_2$, which leads to the volume reduction to that of the original required for the 9-fold cover, and the action on the fibers specified above and depicted in Figure \ref{fig:ngorb}.
\begin{figure}
\centering
\includegraphics[width = 0.45 \textwidth]{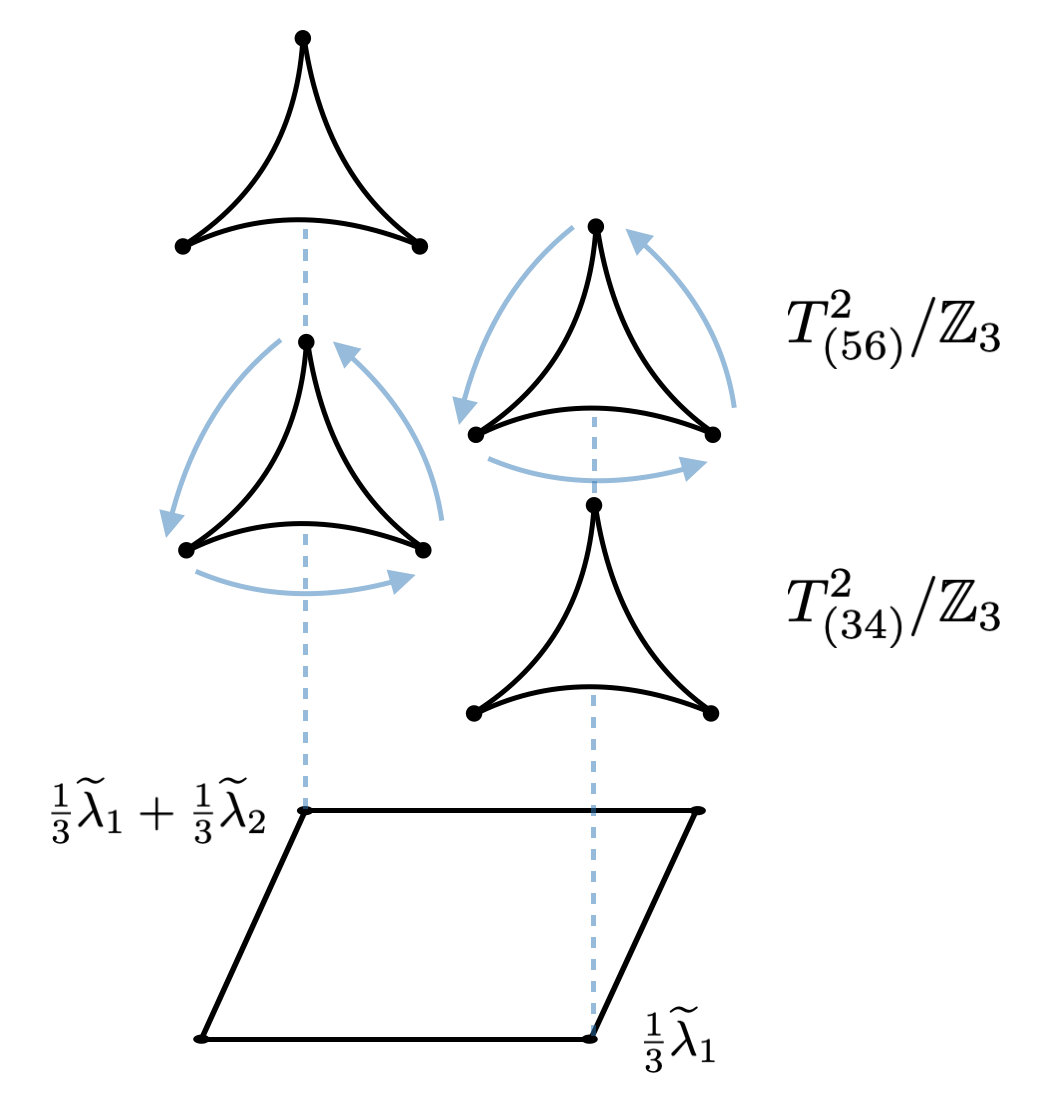}
\caption{Sketch of the singular fiber of the non-geometric quotient $T^6/(\mathbb{Z}_3)_{(\gamma_3, \Gamma_3^{(2)})}$, which can be understood as the fibration of two $T^2/\mathbb{Z}_3$ orbifolds over a base torus spanned by $\tfrac{1}{3} \widetilde{\lambda}_1$ and $\tfrac{1}{3} \widetilde{\lambda}_1 + \tfrac{1}{3} \widetilde{\lambda}_2$.}
\label{fig:ngorb}
\end{figure}

From the covering and the action one can already see that the singular fiber involves all of the 6-torus coordinates. It is therefore clear that the action cannot be made geometric in any duality frame and will involve the variation of other moduli fields. To illustrate this, let us solve the section constraint, by demanding that everything only depends on the internal coordinates $Y^{1,a}$, which translates to the periodicity conditions imposed by $\lambda_1$ and $\lambda_4$. Varying the coordinates over their full range
\begin{equation}
    Y^{1,1} \in [0,\lambda_1] \,, \quad Y^{1,2} \in [0,\lambda_4] \,,
\end{equation}
we find the following two-dimensional subspace in $\widetilde{T}^6$ (parameterized by coordinates $\widetilde{Y}^i \sim \widetilde{Y}^i + \widetilde{\lambda}_i$, with $i \in \{1\,,\dots, 6\}$):
\begin{equation}
    \widetilde{Y}^{i} = \begin{pmatrix} - Y^{1,1} - Y^{1,2} \\ -Y^{1,2} \\ Y^{1,1} \\ Y^{1,2} \\ 0 \\ Y^{1,1} + Y^{1,2}\end{pmatrix} \,,
\end{equation}
which is further subjects to the orbifold quotients and shifts, which in general has a very complicated singularity structure, whose discussion goes beyond the current investigation, but will be interesting to study in the future.

\section{Knit Product}
\label{App:Knit}
This appendix gives a brief overview of the knit product appearing in the description of the U-duality group given in \cref{lem:UdualityKnit}. We start with the definition of the knit product. As with the direct and semi-direct products, there is an internal knit product and an external knit product. We start with the internal definition.
\begin{defn}[Internal Knit Product]
    Let $G$ be a group and $H,K<G$. If $G = HK$ and $H\cap K = \{e\}$, then $G$ is said to be the internal knit product of $H$ and $K$, denoted $H\bowtie K$.
\end{defn}
The slightly more complicated definition is that of the external knit product. 
\begin{defn}[External Knit Product]
    Suppose that $H$ and $K$ are groups and suppose that there exist mappings $\alpha:K\times H \rightarrow H$ and $\beta:K\times H \rightarrow K$ satisfying the following properties:
    \begin{itemize}
        \item $\alpha(e,h) = h$ and $\beta(k,e) = k$ for all $h \in H$ and $k \in K$. 
        \item $\alpha(k_1k_2,h) = \alpha(k_1,\alpha(k_2,h))$
        \item $\beta(k,h_1h_2) = \beta(\beta(k,h_1),h_2)$
        \item $\alpha(k,h_1h_2) = \alpha(k,h_1)\alpha(\beta(k,h_1),h_2)$
        \item $\beta(k_1k_2,h) = \beta(k_1,\alpha(k_2,h))\beta(k_2,h)$
    \end{itemize}
    for all $h_1,h_2\in H$ and $k_1,k_2\in K$. The first three properties assert that the mapping $\alpha:K\times H \rightarrow H$ is a left action and $\beta:K\times H \rightarrow K$ is a right action. On the cartesian product $H \times K$, we then define a multiplication and an inversion mapping by 
    \begin{itemize}
        \item $(h_1,k_1)(h_2,k_2) = (h_1\alpha(k_1,h_2),\beta(k_1,h_2)k_2)$
        \item $(h,k)^{-1}=(\alpha(k^{-1},h^{-1}),\beta(k^{-1},h^{-1}))$
    \end{itemize}
    Then $H\times K$ is a group call the external knit product of $H$ and $K$, denoted $H\bowtie K$. Note that $H\times \{e\}$ and $\{e\}\times K$ are subgroups isomorphic to $H$ and $K$ and $H\times K$ is an internal knit product of $H\times \{e\}$ and $\{e\}\times K$. 
\end{defn}
The knit product is a natural generalization of the semi-direct product. For example, $G = X\rtimes Y$ requires that $X$ is a normal subgroup while $G = X\bowtie Y$ does not. Furthermore, the internal semi-direct product is a generalization of the internal direct product which requires both $X$ and $Y$ to be normal subgroups of $G$. We thus have
\begin{equation}
    \text{Knit Product}\supset \text{Semi-Direct Product}\supset \text{Direct Product}.
\end{equation}

\newpage
\bibliography{Utopia}
\bibliographystyle{utphys}

\end{document}